\documentclass[aps,twocolumn,prx,floatfix,preprintnumbers, nofootinbib,10pt]{revtex4-2}
\usepackage{xpatch}
\makeatletter
\patchcmd{\@ssect@ltx}
    {\addcontentsline{toc}{#1}{\protect\numberline{}#8}}
    {}
    {}
    {}
\makeatother
\pdfoutput=1
\usepackage{color}
\usepackage{enumitem}

\makeatletter

\renewcommand{\thesection}{\arabic{section}}
\renewcommand{\thesubsection}{\thesection.\arabic{subsection}}
\renewcommand{\thesubsubsection}{\thesubsection.\arabic{subsubsection}}

\renewcommand{\p@section}{}
\renewcommand{\p@subsection}{}
\renewcommand{\p@subsubsection}{}

\let\oldappendix\appendix
\renewcommand{\appendix}{%
  \oldappendix
  \renewcommand{\thesection}{\Alph{section}}%
  \renewcommand{\thesubsection}{\thesection.\arabic{subsection}}%
  \renewcommand{\thesubsubsection}{\thesubsection.\arabic{subsubsection}}%
  \renewcommand{\p@section}{}%
  \renewcommand{\p@subsection}{}%
  \renewcommand{\p@subsubsection}{}%
  \renewcommand{\theHsection}{appendix.\Alph{section}}%
  \renewcommand{\theHsubsection}{\theHsection.\arabic{subsection}}%
  \renewcommand{\theHsubsubsection}{\theHsubsection.\arabic{subsubsection}}%
}

\makeatother

\usepackage{booktabs}
\usepackage{makecell}
\usepackage{cancel} 

\usepackage{mathtools}
\usepackage[dvipsnames]{xcolor}
\usepackage{float}

\usepackage{hyperref}
\definecolor{amaranth}{rgb}{0.9, 0.17, 0.31}
\definecolor{forestForestGreen(web)}{rgb}{0.13, 0.55, 0.13}
\definecolor{blue(munsell)}{HTML}{005567}
\definecolor{oxfordblue}{rgb}{0.0, 0.2, 0.4}
\definecolor{bblue}{rgb}{0.0, 0.58, 0.71}
\hypersetup{
pdfstartview={FitH}, % fits the width of the page to the window
pdftitle={}, % title
colorlinks=true, % false: boxed links; true: colored links
linkcolor=oxfordblue, % color of internal links (change box color with linkbordercolor)
citecolor=oxfordblue, % color of links to bibliography
filecolor=oxfordblue, % color of file links
urlcolor=oxfordblue% color of external links
}

\usepackage{pgfplots}
\pgfplotsset{compat=1.18}
\usepackage{amssymb}
\usepackage{amsfonts}
\usepackage{graphicx}
\usepackage{epstopdf}
\usepackage{dcolumn}
\usepackage{amsmath}
\usepackage{latexsym,bm}
\usepackage{amsthm}
\usepackage{slashed}
\usepackage{color}
\usepackage{url}
\usepackage{rotating}
\usepackage[normalem]{ulem}
\usepackage{faktor}
\usepackage{tikz-cd}
\usepackage{tensor}
\usepackage{array}
\usepackage{multirow}
\usepackage{comment}

\usepackage{tabularx}
\usepackage{makecell}
\usepackage[normalem]{ulem}
\usepackage{changepage}
\numberwithin{equation}{section}

\definecolor{ssnblue}{HTML}{006B8F}

\usepackage{makecell}
\newtheorem{theorem}{Theorem}[section]
\newtheorem{proposition}[theorem]{Proposition}
\newtheorem{lemma}[theorem]{Lemma}
\newtheorem{corollary}[theorem]{Corollary}
\theoremstyle{definition}
\newtheorem{definition}[theorem]{Definition}
\theoremstyle{remark}

\theoremstyle{claim}
\newtheorem{claim}[theorem]{Claim}

\newcommand{\Perm}{S}

\usepackage{tikz}
\usepackage{diagbox}
\usetikzlibrary{positioning}
\usetikzlibrary{calc}
\usetikzlibrary{decorations.pathreplacing,calligraphy}

\usepackage{xstring}
\usetikzlibrary{decorations.pathmorphing} 
\usetikzlibrary{decorations.markings} 
\usetikzlibrary{arrows} 
\usetikzlibrary{shapes} 
\usetikzlibrary{matrix} 
\usetikzlibrary{positioning} 
\usepackage[english]{babel} 
\usepackage[autostyle]{csquotes}
\usepackage{pifont}
\usetikzlibrary{shapes.multipart}

\tikzset{->-/.style={decoration={
  markings,
  mark=at position .6 with {\arrow{Latex[length=1.5mm,width=1.5mm]}}},postaction={decorate}}}

\usepackage{quantikz}

\newcommand{\bea}{\begin{eqnarray}}
\newcommand{\eea}{\end{eqnarray}}
\newcommand{\be}{\begin{equation}}
\newcommand{\ee}{\end{equation}}
\newcommand{\ba}{\begin{aligned}}
\newcommand{\ea}{\end{aligned}}
\newcommand{\bit}{\begin{itemize}}
\newcommand{\eit}{\end{itemize}}
\newcommand{\ben}{\begin{enumerate}}
\newcommand{\een}{\end{enumerate}}

\newcommand{\id}{\text{id}}

\newcommand{\Cl}{\text{Cl}}

\newcommand{\gr}{\operatorname{gr}}
\renewcommand{\ker}{\operatorname{Ker}}
\newcommand{\KWW}{\text{KWW}}
\newcommand{\TW}{\text{TW}}
\newcommand{\aKWW}{\alpha_\text{KWW}}
\newcommand{\aTW}{\alpha_\text{TW}}
\newcommand{\athreeF}{\alpha_\text{3F-KWW}}
\newcommand{\aKWWp}{\alpha_{\text{KWW},p}}
\newcommand{\aTWp}{\alpha_{\text{TW},p}}

\newcommand{\im}{\operatorname{Im}}
\newcommand{\Aut}{\operatorname{Aut}}
\newcommand{\Bil}{\operatorname{Bil}}
\newcommand{\ord}{\operatorname{ord}}
\newcommand{\ga}{\mathfrak{a}}
\newcommand{\gb}{\mathfrak{b}}
\newcommand{\gc}{\mathfrak{c}}
\newcommand{\gu}{\mathfrak{u}}

\DeclareMathOperator{\SL}{SL}

\newcommand{\lb}{\left(}
\newcommand{\rb}{\right)}
\newcommand{\lbb}{\left[}
\newcommand{\rbb}{\right]}
\newcommand{\lbbb}{\left\{}
\newcommand{\rbbb}{\right\}}

\newcommand{\Z}{{\mathbb Z}}

\newcommand{\cA}{\mathcal{A}}

\newcommand{\cc}{\mathcal{C}}
\newcommand{\cC}{\mathcal{C}}

\newcommand{\cP}{\mathcal{P}}

\newcommand{\cS}{\mathcal{S}}
\newcommand{\cT}{\mathcal{T}}

\newcommand{\cZ}{\mathcal{Z}}

\newcommand{\lrb}[1]{\lbbb #1 \rbbb}

\newcommand{\End}{\text{End}}

\makeatother

\makeatletter
\renewcommand\subsubsection{\@startsection{subsubsection}{3}{\z@}%
                                     {-3.25ex\@plus -1ex \@minus -.2ex}%
                                     {2.5ex \@plus .2ex}%
                                     {\centering\itshape\bfseries\small}}
\makeatother

\usepackage{physics}

\def\bra#1{{\langle{#1}|}}

\def\ket#1{{|{#1}\rangle}}

\def\unit{{1\kern-.65ex {\rm l}}}
\def\1{{1\kern-.65ex {\rm l}}}

\usepackage{makecell}

\makeatletter
\newcommand\xlabel[2][]{\phantomsection\def\@currentlabelname{#1}\label{#2}}
\makeatother

\newcommand{\cVect}{\mathbf{3Vect}}
\newcommand{\bVect}{\mathbf{2Vect}}
\newcommand{\Pic}{\text{Pic}}
\newcommand{\BrPic}{\text{BrPic}}
\newcommand{\Witt}{\text{Witt}}
\newcommand{\pt}{\text{pt}}

\usepackage[status=draft,inline,nomargin]{fixme}
\fxusetheme{color}
\FXRegisterAuthor{ki}{aki}{\color{orange}KI}
\FXRegisterAuthor{ofw}{aofw}{\color{green!40!black}OFW}
\usepackage{adjustbox}

\usepackage{faktor}
\newcommand{\mc}[1]{\mathcal{#1}}
\renewcommand{\sf}[1]{\mathsf{#1}}
\newcommand{\paren}[1]{\left( #1 \right)}
\newcommand{\sqpar}[1]{\left[ #1 \right]}
\newcommand{\curpar}[1]{\left\{ #1 \right\}}
\newcommand\restr[2]{{% we make the whole thing an ordinary symbol
  \left.\kern-\nulldelimiterspace % automatically resize the bar with \right
  #1 % the function
  \right|_{#2} % this is the delimiter
}}
\newcommand{\opaction}[1]{\restr{#1}{\text{loc.}}}
\newcommand{\abeliangrp}{A \oplus \widehat{A}}

\makeatletter
\def\l@subsubsection#1#2{}
\makeatother
\begin{document}

\title{
Non-Invertible Symmetries Mixing with Witt Non-Trivial \\
Quantum Cellular Automata
}

\author{Kansei Inamura$^{1,2}$}
\author{Oskar Wojdel$^1$}
\author{Lukasz Fidkowski$^{3}$}
\author{Sakura Sch\"afer-Nameki$^{1}$}

\affiliation{$^1$ Mathematical Institute, University
of Oxford, Woodstock Road, Oxford, OX2 6GG, United Kingdom}

\affiliation{$^2$ Rudolf Peierls Centre for Theoretical Physics, University of Oxford, Parks Road, Oxford, OX1 3PU, UK}

\affiliation{$^3$ Department of Physics, University of Washington, Seattle, WA 98195, USA}

\begin{abstract}
\noindent 
Self-dualities  and the stacking of symmetry-protected topological (SPT) phases are basic operations on quantum many-body systems. For a $\mathbb{Z}_p$ one-form symmetry in 3+1d these correspond to the Kramers-Wannier-Wegner duality $S$, which is  the gauging operation underlying non-invertible duality symmetries, and the stacking of a 1-form symmetry SPT $T$. In the continuum, they form a central extension of $PSL(2,\mathbb{Z}_4)$ for $p=2$, and of $SL(2,\mathbb{Z}_p)$ for odd primes $p$, whose central elements are invertible theories with purely gravitational response. These central extensions are governed by a twisted, graded generalization of the Witt group of abelian anyon theories, which we determine. For $p=2$ the resulting group is the single-qubit Clifford group, with duality and entangler acting as the Hadamard and phase gates. 
We realize this entire structure microscopically as quantum cellular automata (QCA) acting on a certain local operator algebra associated with a spin lattice Hilbert space on a cubic lattice.  Specifically, our local operator algebra is built by starting with all local operators commuting with a $\mathbb{Z}_p$ 1-form symmetry, and taking the quotient by all the (local) 1-form symmetry generators.
The central elements can always be extended to the full tensor product algebra with a uniquely defined QCA class. 
For $p=2$ they are generated by the non-trivial semion QCA, and for odd prime $p$ they are generated by the non-trivial $\mathbb{Z}_p$ Clifford QCA. 
Consequently the lattice fusion rules reproduce the continuum ones only up to these QCAs and lattice translations, giving rise to fusion rules refined by QCAs.

\end{abstract}

%%%%%%%%%%%%%%%%%%%%%%%%%%%%%%%%%

\maketitle
\tableofcontents

\section{Introduction and Summary}
\label{sec:Intro}

Non-invertible  symmetries are often discussed in a continuum spacetime picture, see \cite{Schafer-Nameki:2023jdn, Bhardwaj:2023kri, Shao:2023gho, Luo:2023ive} for reviews of the continuum approach. Constructing a lattice/many qubit/operator algebra framework for them is an important problem.  One approach, especially relevant for non-invertible symmetries that have a duality interpretation, is to implement them as automorphisms of constrained local spin operator algebras.  The canonical example is $1+1$d Kramers-Wannier duality, i.e. $\Z_2$ $0$-form gauging, which can be understood as an automorphism of the local algebra of operators invariant under an Ising $\Z_2$ symmetry in an Ising spin chain.  It has been rigorously shown \cite{Jones2024DHR, Ma2026quantumcellular, Jones2026QCA} that the locality-preserving transformations of this algebra are generated, up to finite depth circuits, by the Kramers-Wannier duality and translations.  This perspective also naturally ties in with the study of QCA, originally defined for tensor product local operator algebras.  Indeed, the group of QCA acting in an unconstrained, tensor product Ising spin chain is just the translation group $\Z$ \cite{GrossNesmeVogtsWerner2012}, whereas the group of QCA on the global $\Z_2$-constrained local operator algebra is a non-trivial central extension of the Kramers-Wannier duality $\Z_2$ by this translation group $\Z$: the square of the Kramers-Wannier transformation is a single translation \cite{Seiberg:2023cdc, Seiberg:2024gek}. More generally it is known that QCAs, in particular translations, can refine the fusion rules of non-invertible symmetries on the lattice \cite{Zhang:2020pco, Seiberg:2023cdc, Seiberg:2024gek, Seifnashri:2024dsd, Evans:2025msy, Lu:2026rhb, Inamura:2026hif, Jones:2026dcb, Wen:2026ncw}. 

\noindent{\bf Duality Defects in 3+1d.}
How does this picture generalize to higher dimensions?  One natural generalization is to replace the 1+1d $0$-form $\Z_2^{(0)}$ global symmetry with $1$-form $\Z_2^{(1)}$ symmetry in 3+1d.  In the continuum this gives rise to a rich non-invertible symmetry structure \cite{Kaidi:2021xfk, Choi:2021kmx, Bhardwaj:2022yxj, Choi:2022zal}, with an underlying modular $SL(2,\Z_4) / \Z_2$ \cite{BhardwajLeeTachikawa2020} group generated by $S$ - the Kramers-Wannier-Wegner (KWW) duality, corresponding to gauging the $1$-form symmetry - and $T$, corresponding to stacking a $1$-form SPT. In fact, the true symmetry group is a $\Z_8$ central extension of this group, with the central elements corresponding to stacking with an invertible theory with a purely gravitational response \cite{Witten2005SL2Z, BhardwajLeeTachikawa2020}.  For example, the modular relation $(ST)^3=1$ becomes $(ST)^3 = Y$, where $Y$ is an invertible purely gravitational theory of order $8$.  
On the lattice these duality symmetries were discussed in \cite{Gorantla:2024ocs, Koide:2021zxj}.

\noindent{\bf SymTFT Approach.}
This structure can equally be derived by more formal, categorical means, from
the perspective of the Symmetry Topological Field Theory (SymTFT)
\cite{Ji:2019jhk, Apruzzi:2021nmk, Kaidi:2022cpf, Antinucci:2022vyk}. For a
$\Z_p^{(1)}$ symmetry in $3+1$d the relevant SymTFT is a $4+1$d $\Z_p$ BF-theory of two-form gauge fields, i.e.\ a $(2,2)$ $\Z_p$ toric code, whose topological
surface operators are labelled by $\Z_p\oplus\widehat{\Z_p}$ and generate the 1-form symmetry and its dual, with the electric and magnetic surfaces braiding
through the symplectic pairing preserved by $SL(2,\Z_p)$ 
\cite{Bhardwaj:2024xcx}. 
The duality $S$ and the SPT stacking $T$ are realized as
invertible topological interfaces that permute and decorate these surfaces, and any such operation is characterized by
the abelian 2+1d anyon theory living on its codimension-one defect. The invariant classifying these interfaces is the graded, syllepsis-twisted pointed Witt group $\Witt^\pt(\Z_p\oplus\widehat{\Z_p},s)$ of 2+1d non-degenerate
braided fusion categories \cite{Bhardwaj:2024xcx}, whose grading records how each anyon terminates the bulk surfaces and whose syllepsis $s$ is their electric-magnetic braiding. This group contains the ordinary pointed Witt group $\Witt^\pt$ -- the invariant classifying ordinary abelian 3+1d QCA
\cite{Shirley_2022, HaahFidkowskiHastings2023} -- as a central subgroup, and in
Section~\ref{sec:Cats} we compute it as a central extension of its quotient by
$\Witt^\pt$, which we conjecture to be the full classification of symmetric QCA.

For $\Z_2^{(1)}$ the quotient is $S_4\cong SL(2,\Z_4)/\Z_2$ and the extension is
non-split: it is exactly the single-qubit Clifford group, with $S$ and $T$ acting as the Hadamard and phase gates and the ordinary QCA appearing as the eighth roots of unity generated by the semion, so that the relation $(ST)^3=Y$ becomes
an identity of $2\times2$ unitaries. For odd $p$ the extension splits into $SL(2,\Z_p)\times\Z_{n_p}$, yet keeping $S$ and $T$ fixed to the premetric groups that implement the KWW duality and the $\Z_p$ $1$-form SPT entangler still reproduces $(S_pT_p)^3=Y_p$, in agreement with the continuum defect fusion of
Section~\ref{sec: Fusion Rules in the Continuum} and the lattice computations
that follow.

\noindent{\bf QCAs in 3+1d with $\Z_2^{(1)}$ 1-form Symmetry.}
Likewise, the structure of QCAs on tensor product lattice Hilbert spaces becomes richer in higher dimensions.  In particular, in 3+1d there exist non-trivial QCAs, beyond simple translations \cite{HaahFidkowskiHastings2023}.  These QCAs are constructed as disentanglers of certain Walker-Wang models, corresponding to certain invertible phases with gravitational responses.  Interestingly, these QCAs  have a conjectural $\Z_8 \oplus \Z_2$ classification in terms of the $W_2$ part of the Witt group of non-degenerate braided fusion categories \cite{Shirley_2022}.  The presence of an order $8$ subgroup makes it natural to wonder if such QCAs can arise as a central extension of an appropriate operator automorphism version of the $SL(2,\Z_4)/\Z_2$ structure, in the same way that translations arise as an extension of the ordinary Kramers-Wannier duality, when it is put on the lattice.

In this work we show that this is indeed the case, by explicitly implementing the entire non-invertible symmetry structure as automorphisms of a local algebra on a spatial lattice. 
In 1+1d these local algebras were discussed in \cite{Ma2026quantumcellular}. 
In our 3+1d geometry for $\Z_2^{(1)}$ 1-form symmetry we have Ising spins on the faces of a cubic lattice, with the $1$-form symmetry {implemented by the product of Pauli $X$ operators around each cube}. The operator algebra is generated by all local operators commuting with these symmetry generators, modulo an equivalence generated by the symmetry operators themselves. The Kramers-Wannier-Wegner (KWW) duality transformation corresponding to $S$ can be implemented by a translationally-invariant automorphism $\aKWW$ of this algebra, and likewise $T$ can be implemented by an entangler $\sf{T}$ of the root $\Z_2$ $1$-form SPT phase in the $\Z_4$ classification, as we show explicitly below. This entangler also restricts to a translationally-invariant QCA $\aTW$ on the same operator algebra. By explicit computation, we then show that $(\aKWW \circ \aTW^2)^4$ - which in the continuum corresponds to $(ST^2)^4 = Y^4$ - is equivalent, up to finite depth circuits, to the $3$-fermion QCA. One has to be careful in comparing these two, because $(\aKWW \circ \aTW^2)^4$ is only defined on the $\Z_2^{(1)}$-symmetric algebra, but we show that it is extendable to the full tensor product algebra. 

We also argue that $(\aKWW \circ \aTW)^3$, which corresponds to $(ST)^3 = Y$, is the order $8$ semion QCA \cite{Shirley_2022}. Though our argument falls short of being a rigorous proof - owing partly to the fact that a QCA in the order $8$ semion class has not been fully explicitly written down - we state precise assumptions under which this can be made precise, and we provide evidence to support these. Altogether, we identify our group of locality-preserving automorphisms of the $\Z_2^{(1)}$-symmetric algebra with a $\Z_8$ central extension of $SL(2,\Z_4)/\Z_2$ isomorphic to the Clifford group of a single qubit. Thus we equate a lattice result with a categorical calculation of the pointed Witt group. It is natural to conjecture that this is the full group of locality-preserving automorphisms, modulo finite depth circuits, of the $\Z_2^{(1)}$-symmetric algebra.

\noindent{\bf QCAs in 3+1d with $\Z_p^{(1)}$ 1-form Symmetry.}
We generalize this to $\Z_p^{(1)}$ 1-form symmetries, for odd prime $p$.  In that setting, one similarly has a group of locality-preserving automorphisms modulo finite depth circuits, and a map from this group to $SL(2,\Z_p)$, which again encodes the action on the topological excitations in the SymTFT dual.  We again construct lattice automorphisms realizing this structure: $\aKWWp$ corresponding to $S$ and $\aTWp$ corresponding to $T$.  We explicitly show that $(\aKWWp \circ \aTWp)^3 = \alpha_t \circ \tilde{\beta}^{(\frac{p-1}{2})}_p \circ \aTWp$ 
, where $\tilde{\beta}^{(\frac{p-1}{2})}_p$ is a $\Z_p$-Clifford QCA \cite{Haah_2021, MengSun2026} which has order $2$ or $4$ depending on whether $p$ is $1$ or $3$ mod $4$.  This lattice level calculation again shows a mixing between non-invertible symmetries and QCA.  A more refined analysis shows that the full group of automorphisms in fact splits, as a group, as $SL(2,\Z_p) \times \Z_4$ when $p$ is $3$ mod $4$, and as $SL(2,\Z_p) \times \Z_2 \times \Z_2$ when $p$ is $1$ mod $4$.  In this purely group theoretic sense, there is no non-trivial central extension for odd $p$.  
However, we {show} that this splitting is incompatible with the canonical, unique assignment of Witt classes to those QCAs that are extendable to the full tensor product algebra, which is the more precise notion of mixing in this context.

We do not work out the case of a general finite abelian $1$-form symmetry, nor even the case of a $\Z_n^{(1)}$-symmetry for general, non-prime $n$.  However, we expect that for $n=4$, we will obtain a central extension that involves a Walker-Wang QCA for the $U(1)_4$ anyon theory.  Together with the semion QCA that appears in the $n=2$ case, these two generate the full $\Z_8 \times \Z_2$ in the pointed Witt group for $p=2$.  For odd prime $p = 3$ mod $4$ the extension is by the generator of the $\Z_4$ portion of the pointed Witt group, while for $p = 1$ mod $4$, we show how to obtain QCA corresponding to both $\Z_2$ factors in the $\Z_2 \times \Z_2$ portion of the Witt group.  Thus, we demonstrate that all QCA in the conjectured Witt classification of $3+1$d QCA can be constructed from composing elementary lattice operations, consisting of Kramers-Wannier-Wagner duality, the Tsui-Wen $1$-form SPT entangler, and (in the odd prime $p$ case) onsite generalized charge conjugation symmetries.  

\noindent
{\bf Plan of the Paper.} The rest of the paper is structured as follows:
In Section \ref{sec: Fusion Rules in the Continuum} we review the 
fusion of the KWW duality and SPT-stacking operations in 3+1d continuum field theory, for both $\Z_2^{(1)}$ and $\Z_p^{(1)}$ 1-form symmetries, respectively.  
In Section \ref{sec:Cats} we take the categorical approach and describe the braided automorphisms of the SymTFT for $A=\Z_p^{(1)}$ 1-form symmetries in terms of the pointed graded Witt group $\Witt^\pt(A \oplus \widehat{A}, s)$. In particular we provide the complete group relations and realization of the graded Witt classes in terms of metric groups. 

In Sections \ref{sec: QCAs on Z2 1-form symmetric algebra} and 
\ref{sec:qca_for_root_entangler} we develop the QCAs for the $\Z_2^{(1)}$-form symmetry. We start in Section \ref{sec: QCAs on Z2 1-form symmetric algebra}
 by developing the mathematical framework in which we describe the QCA which implement the analogues of the continuum symmetries. 
We provide the QCA description of the KWW duality and the order 2 SPT entangler, and various relations of these on the lattice. 
The order 4 SPT-entangler is then discussed in Section \ref{sec:qca_for_root_entangler}. 
In Section \ref{sec: QCAs on Zp 1-form symmetric algebra} we extend these QCA results to the $\Z_p^{(1)}$ 1-form symmetry. 

Various more in depth discussions can be found in the appendices: 
Appendix \ref{app:Witty} contains an in depth discussion of the graded pointed Witt groups, and proofs of various theorems that appear in Section \ref{sec:Cats}. Some conventions and details for higher cup products are contained in Appendix \ref{sec: higher cup products}. In Appendix~\ref{sec: Equivalent definition of 3-fermion KWW operator}, we give an equivalent definition of the 3-fermion KWW operator defined in Section~\ref{sec: 3-fermion KWW}, and discuss its relation to the 3-fermion QCA and its adaptive circuit realization. 
Finally, details on the quotient algebras and proofs for statements in Sections \ref{sec:setup_qca_on_Z2_subalg} are provided in Appendix \ref{app:quotient_algebra}.

%%%%%%%%%%%%%%%%%%%%%%%%%%%%%%%%%%%%%%%%%%%%%%%%%%%%%%%%%%%%%%%%%%%
\begin{table*}[t]
\centering
\renewcommand{\arraystretch}{1.4}
\begin{tabular}{@{}l |l| l| l@{}}
\toprule
 & Generators & Relations & Eq. \\
\midrule
Continuum defects
  & $\mc{S}$, $\mc{T}$
  & {$\mc{S}^2 = \mc{C}_0$}, \ $(\mc{ST})^3 = \mc{Z}_\text{semion} \mc{C}_0$, \ $(\mc{ST}^2)^4 = \mc{Z}_\text{3F} \mc{C}_0$
  & \eqref{eq: ST3 continuum defect}, \eqref{eq: DU4 continuum defect} \\
Witt group $\Witt^\pt (\Z_2^2, s)$ 
  & $S$, $T$
  & $S^2=1$, \ $T^4=1$, \ $(ST)^3=Y$, \ $(ST^2)^4 = Y^4$
  & \eqref{eq:STcat}, \eqref{eq: Y4} \\
Lattice
  & $\aKWW$, $\aTW$
  & $\aKWW^2 = \alpha_t$, \ $\aTW^4 = 1$ 
  & \eqref{eq:S2action}, \eqref{eq:T4action}  \\
  &
  & $(\aKWW \circ \aTW^2)^4=\alpha_t^2 \circ \alpha_{\mathrm{framing}} \circ \aTW^2$
  & \eqref{eq: DU4 lattice} \\
\bottomrule
\end{tabular}
\caption{Generators and relations for the $\Z_2^{(1)}$ symmetry case.
As topological defects in a continuum theory, we identify $\mc{S}$ with the Kramers-Wannier-Wegner duality
(gauging the $\Z_2^{(1)}$ symmetry) and $\mc{T}$ with stacking the order-$4$ root $\Z_2^{(1)}$ SPT. $\mc{C}_0$ is the condensation defect for the $\Z_2^{(1)}$ symmetry.
$\mc{Z}_\text{semion}$ is the partition function of the invertible $3+1$d
Crane--Yetter--Walker--Wang TQFT built from the chiral semion, of order $8$, and
$\mc{Z}_\text{3F}$ is the partition function of its fourth power (TQFT based on the three-fermion MTC). In the second line we provide a description in terms of the pointed graded Witt group $\Witt^{\pt} (\Z_2^2, s)$.
On the lattice these become locality-preserving automorphisms of the
$\Z_2^{(1)}$-symmetric algebra: $\aKWW$ realizes $S$ (the KWW operator), $\aTW$
realizes $T$ (the Tsui-Wen SPT entangler), $\alpha_t$ is the lattice translation
QCA, and $\alpha_{\mathrm{framing}}$ is the framing QCA that realizes $Y^4 T^2$, which is equivalent to the three-fermion QCA. The lattice relation matches the continuum $(ST^2)^4=Y^4$
up to the translation $\alpha_t^2$.  We find that the action of the defect $(\mc{ST}^2)^4$ on local operators is trivial, $\opaction{(\mc{ST}^2)^4} = 1$, while on the lattice it mixes with translations, as well as a non-trivial QCA $\alpha_\text{framing}$.
}
\label{tab:p2-relations}
\end{table*}

%%%%%%%%%%%%%%%%%%%%%%%%%%%%%%%%%%%%%%%%%%%%%%%%%%%%%%%%%%%%%%%%%%%
%  p odd prime
%%%%%%%%%%%%%%%%%%%%%%%%%%%%%%%%%%%%%%%%%%%%%%%%%%%%%%%%%%%%%%%%%%%
\begin{table*}[t]
\centering
\renewcommand{\arraystretch}{1.4}
\begin{tabular}{@{}l |l |l| l@{}}
\toprule
 & Generators & Relations & Eq. \\
\midrule
Continuum defects
  & $\mc{S}_p$, $\mc{T}_p$
  & {$\mc{S}_p^2 = \mc{C} \mc{C}_0$}, \ $(\mc{S}_p\mc{T}_p)^3 = \mc{Z}_{\text{SU}(p)_1} \mc{C}_0$
  & \eqref{eq: DTDTDT continuum2} \\
  Witt group $\Witt^\pt (\Z_p^2, s)$ 
  & $S_p$, $T_p$
  & $S_p^2=C$, \ $T_p^{\,p}=1$, \ $(S_pT_p)^3=Y_p$
  & \eqref{eq:SToddrels} \\
Lattice
  & $\aKWWp$, $\aTWp$
  & $(\aKWWp \circ \aTWp)^3=\alpha_t\circ \tilde\beta_p^{(\frac{p-1}{2})} \circ \aTWp$
  & \eqref{eq: DTDTDT minus} \\
\bottomrule
\end{tabular}
\caption{Generators and relations for the $\Z_p^{(1)}$ symmetry case with $p$ an odd prime. In the continuum, $\cS_p$ is the Kramers-Wannier-Wegner duality (gauging the $\Z_p$ $1$-form symmetry) and $\cT_p$ stacks a $\Z_p$ $1$-form SPT. $
\cC_0$ is the condensation defect and 
$\cC$ is charge conjugation.  $Y_p=\mathrm{CYWW}(SU(p)_1)$ is the invertible
$3+1$d TQFT built from $SU(p)_1$, of order $2$ for $p\equiv1$ and $4$ for
$p\equiv3\bmod4$.
{In the second line we provide a description in terms of the pointed graded Witt group $\Witt^{\pt} (\Z_p^2, s)$.}
On the lattice, $\aKWWp$ realizes $S_p$ and $\aTWp$ realizes
$T_p$; $\alpha_t$ is the lattice translation, and $\tilde\beta_p^{(\frac{p-1}{2})}$ the $\Z_p$ Clifford QCA
(realizing $Y_p\mathsf{T}_p^{-1}$). The lattice relation matches the continuum
$(S_pT_p)^3=Y_p$ up to the translation QCA $\alpha_t$. We find that the action of the defect $(\mc{S}_p \mc{T}_p)^3$ on local operators is trivial, $\opaction{(\mc{S}_p \mc{T}_p)^3} = 1$, while on the lattice there is mixing with translations, as well as a non-trivial QCA $\tilde{\beta}_p^{(\frac{p-1}{2})}$.}
\label{tab:podd-relations}
\end{table*}

% \twocolumngrid

% \clearpage
% \newpage

\subsection*{Notations and Conventions}
\label{sec: Notations and conventions}
%In Sections~\ref{sec: QCAs on Z2 1-form symmetric algebra}, \ref{sec: Equivalent definition of 3-fermion KWW operator}, and \ref{sec: QCAs on Zp 1-form symmetric algebra}, we consider lattice models on a 3d cubic lattice with periodic boundary conditions.

Throughout the paper we will use the notation for various operations for the continuum symmetry operators, defects and the lattice operators as outlined in Tables \ref{tab:Z2Not} and \ref{tab:ZpNot}. 

The integer $p$ is assumed to be an odd prime throughout. We will denote by $\Z_p = \Z/p\Z$ both the finite group of order $p$ as well as the field $\mathbb{F}_p$.  
Although the {\bf modular group} is $(P)SL(2,\Z)$, we will refer to the group $(P)SL (2,\Z_p)$ as the modular group over the finite field $\Z_p$.

Regarding {\bf finite depth circuits (FDC)} we will adopt the following terminology: a {\bf symmetric FDC} is a FDC consisting of only symmetric gates. This is to be contrasted with a FDC which we throughout this paper will assume to be symmetric under the 1-form symmetry at play -- however the gates it is comprised of may not necessarily be symmetric (unless we explicitly specify that it is a symmetric FDC). We will never consider FDCs that do not respect the symmetry. 

Furthermore we will require various conventions about the cup products which we summarize in the following.

\vspace*{\baselineskip}
\noindent{\bf Chains and cochains.}
We consider a three-dimensional cubic lattice with periodic boundary conditions.
The sets of cubes, faces, edges, and vertices are denoted by $C$, $F$, $E$, and $V$, respectively.
Elements of these sets are denoted by $c$, $f$, $e$, and $v$.
We will think of $c$, $f$, $e$, and $v$ as a 3-chain, a 2-chain, a 1-chain, and a 0-chain on a cubic lattice.
For these chains, we define the corresponding 3-cochain $\bm{c}$, 2-cochain $\bm{f}$, 1-cochain $\bm{e}$, and 0-cochain $\bm{v}$ by the following equation:
\begin{equation}
\begin{aligned}
\bm{c}(c^{\prime}) = \delta_{c, c^{\prime}}, \quad
\bm{f}(f^{\prime}) = \delta_{f, f^{\prime}}, \\
\bm{e}(e^{\prime}) = \delta_{e, e^{\prime}}, \quad
\bm{v}(v^{\prime}) = \delta_{v, v^{\prime}}.
\end{aligned}
\end{equation}
In general, chains on a cubic lattice are denoted by small letters in the standard font, such as $a$, $b$, $c$, etc.
On the other hand, cochains are written in bold font, such as $\bm{a}$, $\bm{b}$, $\bm{c}$, etc.
Basic operations such as the (co)boundary operation and (higher) cup products on a cubic lattice are reviewed in Appendix~\ref{sec: higher cup products}.

\begin{table}[ht]
\centering
\renewcommand{\arraystretch}{1.4}
\begin{tabular}{l cc cc}
\toprule
 & \multicolumn{2}{c}{\textbf{lattice}} & \multicolumn{2}{c}{\textbf{continuum}} \\
\cmidrule(lr){2-3}\cmidrule(lr){4-5}
 & operator & automorphisms & \makecell{operations\\on theories} & defects \\
% \midrule
%  command & \verb|\sf{}| & & & \verb|\mc{}| \\
\midrule
KWW          & $\sf{S}$ & $\alpha_\KWW$      & $ S $   & $\mathcal{S}$          \\
TW           & $\mathsf{T} $                         & $\alpha_{\TW}$      & $ T $   & $\mathcal{T}$          \\
% TW$^2$       & $\mathsf{T}^2$                       & $\alpha_{\TW}^{2}$  & $ T^2$ & $ \mc{T}^2$ \\
%   &        &  & $ Y $ & 
% \\
Framing QCA  & $\mathsf{U}_{\text{framing}}$       & $\alpha_{\text{framing}}$  & $ Y^4T^2 $ & $\mc{Z}_\text{3F} \cT^2$ 
\\
\bottomrule
\end{tabular}
\caption{Notations for the $\Z_2^{(1)}$ symmetry case. KWW and TW stand for the Kramers-Wannier-Wegner duality and the Tsui-Wen SPT entangler, respectively. The framing QCA is a non-trivial QCA defined in~\cite{Fidkowski:2023dpe} \label{tab:Z2Not}}
\end{table}

\begin{table}[ht]
\centering
\renewcommand{\arraystretch}{1.6}
\begin{tabular}{l cc cc}
\toprule
 & \multicolumn{2}{c}{\textbf{lattice}} & \multicolumn{2}{c}{\textbf{continuum}} \\
\cmidrule(lr){2-3}\cmidrule(lr){4-5}
 & operator & autos & \makecell{operations\\on theories} & defects \\
% \midrule
%  command & \verb|\sf{}| & & & \verb|\mc{}| \\
\midrule
KWW
  & $\mathsf{S}_{p}$
  & $\alpha_{\KWW,p}$
  & ${S}_{p}$
  & $\mathcal{S}_{p}$ \\
TW
  & $\mathsf{T}_{p}$
  & $\alpha_{\TW,p}$
  & ${T}_{p}$
  & $\mathcal{T}_{p}$ \\
Clifford QCA
  & \makecell{}
  & {$\tilde{\beta}_{p}^{(\frac{p-1}{2})}$}
  & ${Y}_{p} T_p^{-1}$
  & $\mc{Z}_{\mathrm{SU}(p)_1}\mathcal{T}_p^{-1}$ \\
\bottomrule
\end{tabular}
\caption{Notations for the $\mathbb{Z}_p^{(1)}$ symmetry case with $p$ an odd prime. The Clifford QCA $\tilde{\beta}_p^{(\frac{p-1}{2})}$ is a non-trivial QCA defined in~\cite{MengSun2026}. \label{tab:ZpNot}}
\end{table}

\vspace*{\baselineskip}
\noindent{\bf Half translation.}
In the main text, we will often consider the translation on a cubic lattice in the $(\frac{1}{2}, \frac{1}{2}, \frac{1}{2})$ direction.
This operation will be called the half translation and denoted by $t^{\frac{1}{2}}$.
We note that the half translation $t^{\frac{1}{2}}$ maps cubes and faces into vertices and edges, and vice versa.
Namely, for each $c \in C$, $f \in F$, $e \in E$, and $v \in V$, we have
\begin{equation}
t^{\frac{1}{2}}(c) \in V, \quad
t^{\frac{1}{2}}(f) \in E, \quad
t^{\frac{1}{2}}(e) \in F, \quad
t^{\frac{1}{2}}(v) \in C.
\end{equation}
One can see that a vertex gets half-translated into the center of a cube. Similarly, the center of an edge gets half-translated into the center of a face.
The half-translations of the corresponding cochains are defined in an obvious way.
Concretely, the half-translations of $\bm{c}$, $\bm{f}$, $\bm{e}$, and $\bm{v}$ are defined by
\begin{equation}
\begin{aligned}
t^{\frac{1}{2}}(\bm{c})(v^{\prime}) = \delta_{t^{\frac{1}{2}}(c), v^{\prime}}, \quad
t^{\frac{1}{2}}(\bm{f})(e^{\prime}) = \delta_{t^{\frac{1}{2}}(f), e^{\prime}}, \\
t^{\frac{1}{2}}(\bm{e})(f^{\prime}) = \delta_{t^{\frac{1}{2}}(e), f^{\prime}}, \quad
t^{\frac{1}{2}}(\bm{v})(c^{\prime}) = \delta_{t^{\frac{1}{2}}(v), c^{\prime}}.
\end{aligned}
\end{equation}
The inverse of $t^{\frac{1}{2}}$ will be denoted by $t^{-\frac{1}{2}}$, which is the half translation in the $(-\frac{1}{2}, -\frac{1}{2}, -\frac{1}{2})$ direction.
Similarly, the square of $t^{\frac{1}{2}}$ will be denoted by $t$, which is the full translation in the $(1, 1, 1)$ direction.

\vspace*{\baselineskip}
\noindent{\bf Poincar\'{e} duality.}
We will also sometimes consider chains and cochains on the dual lattice by using the Poincar\'{e} duality.
The Poincar\'{e} dual of the cochains $\bm{c}$, $\bm{f}$, $\bm{e}$, and $\bm{v}$ on the direct lattice are denoted by $\hat{c}$, $\hat{f}$, $\hat{e}$, and $\hat{v}$, respectively.
We note that $\hat{c}$ is a 0-chain, $\hat{f}$ is a 1-chain, $\hat{e}$ is a 2-chain, and $\hat{v}$ is a 3-chain on the dual lattice.
The corresponding cochains on the dual lattice are denoted by $\hat{\bm{c}}$, $\hat{\bm{f}}$, $\hat{\bm{e}}$, and $\hat{\bm{v}}$, respectively.
Concretely, these cochains are defined by
\begin{equation}
\begin{aligned}
\hat{\bm{c}}(\hat{c}^{\prime}) = \delta_{c, c^{\prime}}, \quad
\hat{\bm{f}}(\hat{f}^{\prime}) = \delta_{f, f^{\prime}}, \\
\hat{\bm{e}}(\hat{e}^{\prime}) = \delta_{e, e^{\prime}}, \quad
\hat{\bm{v}}(\hat{v}^{\prime}) = \delta_{v, v^{\prime}}.
\end{aligned}
\end{equation}
In general, chains and cochains on the dual lattice are denoted by letters with hats.
Specifially, chains are denoted by $\hat{a}$, $\hat{b}$, $\hat{c}$, etc., whereas cochains are denoted by $\hat{\bm{a}}$, $\hat{\bm{b}}$, $\hat{\bm{c}}$, etc.

\vspace*{\baselineskip}
\noindent{\bf Convention for orientations.}
Whenever we draw a 3d cubic lattice in later sections, we will use the following convention for the orientation:
\begin{equation}
\adjincludegraphics[valign=c, trim={10, 10, 10, 10}]{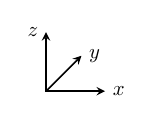}
\end{equation}
This convention will be used when we illustrate various operators written in terms of cochains.

\section{Fusion Rules in the Continuum}
\label{sec: Fusion Rules in the Continuum}

In this section, we will review the continuum fusion rules involving the Kramers-Wannier-Wegner duality defect for both $\Z_2$ and $\Z_p$ 1-form symmetry.
Throughout this section, dynamical gauge fields are denoted by small letters such as $b$, whereas background gauge fields are denoted by capital letters such as $B$.

\subsection{$\Z_2^{(1)}$ Symmetry Case}
\label{sec: continuum fusion rules Z2}
We begin with the case of $\mathbb{Z}_2$ 1-form symmetry.
Following~\cite{Gaiotto:2014kfa, BhardwajLeeTachikawa2020}, we first introduce two operations, called $S$ and $T$ operations, on general 3+1d QFTs with $\mathbb{Z}_2$ 1-form symmetry.
We will construct the  topological defects corresponding to these transformations  and compute their fusion rules, following~\cite{Choi:2021kmx, Choi:2022zal}.
These fusion rules will be compared with their lattice counterparts in later sections.

\subsubsection{The $S$ and $T$ operations}
Let $\mathcal{Q}$ be a 3+1d QFT with a $\mathbb{Z}_2$ 1-form symmetry.
The partition function of $\mathcal{Q}$ on an oriented manifold $M_4$ equipped with a background gauge field $B \in H^2(M_4; \mathbb{Z}_2)$ is denoted by $Z_{\mathcal{Q}}[B]$.
%We consider the $S$ and $T$ operations that map $\mathcal{Q}$ to other QFTs $S\mathcal{Q}$ and $T \mathcal{Q}$, whose partition functions are given by
The $S$ and $T$ operations are defined as maps that map $\mathcal{Q}$ to other $\mathbb{Z}_2$ 1-form symmetric QFTs denoted by $S\mathcal{Q}$ and $T\mathcal{Q}$.
The partition functions of these new QFTs are defined by
\begin{align}
Z_{S\mathcal{Q}}[B] &= \frac{1}{|H^2(M_4; \mathbb{Z}_2)|^{\frac{1}{2}}} \sum_{b \in H^2(M_4; \mathbb{Z}_2)} Z_{\mathcal{Q}}[b] (-1)^{\int_{M_4} b \smile B}, \\
Z_{T \mathcal{Q}}[B] &= Z_{\mathcal{Q}}[B] e^{i \pi \int_{M_4} \frac{1}{2} \mathcal{P}(B)}.
\end{align}
Here, $\mathcal{P}(B)$ is the Pontryagin square of $B$.
We note that the $S$ operation corresponds to gauging the $\mathbb{Z}_2$ 1-form symmetry, while the $T$ operation corresponds to stacking a $\mathbb{Z}_2$ 1-form SPT phase of order 4.
For later use, we also write down the $T^2$ operation explicitly as
\begin{equation}
Z_{T^2\mathcal{Q}}[B] = Z_{\mathcal{Q}}[B] (-1)^{\int_{M_4} B \smile B}.
\end{equation}
This operation corresponds to stacking a $\mathbb{Z}_2$ 1-form SPT phase of order 2. %discussed in Section~\ref{sec: Z2 TW of order 2}:

\vspace*{\baselineskip}
\noindent{\bf Group generated by $S$ and $T$.}
%The $S$ and $T$ operations defined above are invertible and hence generate a group.
It was shown in \cite{Gaiotto:2014kfa, BhardwajLeeTachikawa2020, Choi:2022zal} that $S$ and $T$ obey the following relations:
\begin{equation}
S^2=1, \quad T^4=1, \quad (ST)^3=Y.
\label{eq: ST3}
\end{equation}
Here, $1$ denotes the identity operation, and $Y$ is the stacking of an invertible 3+1d TQFT whose partition function is given by
\begin{equation}
\begin{aligned}
Z_Y(M_4) &= \frac{1}{|H^2(M_4; \mathbb{Z}_2)|^{\frac{1}{2}}} \sum_{b \in H^2(M_4; \mathbb{Z}_2)} e^{i \pi \int_{M_4} \frac{1}{2} \mathcal{P}(b)} \\
&= e^{\frac{2\pi i}{8} \sigma(M_4)},
\end{aligned}
\label{eq: Y partition function}
\end{equation}
where $\sigma(M_4)$ is the signature of $M_4$.
The above TQFT is the Crane-Yetter-Walker-Wang TQFT based on the chiral semion MTC $\mathcal{C}_{\mathrm{semion}}$ (also known as $\mathrm{U}(1)_2$) \cite{Crane:1993if, Crane:1994ji, walker20123+}:\footnote{The chiral semion MTC $\mathcal{C}_{\mathrm{semion}}$ has two simple objects $\{1, s\}$, where the topological spin of the non-trivial object $s$ is $i$.}
\begin{equation}
Y = \mathrm{CYWW}(\mathcal{C}_{\mathrm{semion}}).
\end{equation}
We note that $Y$ has order 8 because $\mathcal{C}_{\mathrm{semion}}$ has order 8 in the Witt group of non-degenerate braided fusion categories~\cite{davydov2013witt} and the Crane-Yetter-Walker-Wang TQFT constructed from a Witt-trivial MTC should be trivial.
Furthermore, $Y$ commutes with $S$ and $T$ because its partition function~\eqref{eq: Y partition function} is independent of the background gauge field $B$.
{In other words, $Y$ is a central element of order 8 in the group generated by $\mathsf{S}$ and $\mathsf{T}$.}

{If $Y$ were 1, the relations in~\eqref{eq: ST3} would define the modular group $PSL(2, \mathbb{Z}_4) = SL(2, \mathbb{Z}_4)/{\{\pm 1\}}$.
Indeed, $PSL(2, \mathbb{Z}_4)$ is isomorphic to $S_4$, which has a presentation $\langle x, y \mid x^2=y^4=(xy)^3=1 \rangle$.
Thus, \eqref{eq: ST3} shows that the group generated by $\mathsf{S}$ and $\mathsf{T}$ is a central extension of $PSL(2, \mathbb{Z}_4)$ by $\mathbb{Z}_8$ generated by $Y$.
The Witt group calculation in Section~\ref{sec:BrAutZ2} (Theorem \ref{thm:modular}) implies that this group is isomorphic to the central product $2O\circ_{\Z_2}\Z_8$, where $2O$ denotes the binary octahedral group. This group is also isomorphic to the single-qubit Clifford group (Lemma~\ref{cor:Cl2}).
}

\vspace*{\baselineskip}
\noindent{\bf Group generated by $S$ and $T^2$.}
The $S$ and $T^2$ operations generate a non-trivial subgroup of the group generated by $S$ and $T$.
To determine the group structure of this subgroup, let us compute the $n$th power of $ST^2$ for every integer $n$.
Using the relations $STS = YT^{-1}ST^{-1}$ and $S^2=1$, one can show that
\begin{equation}
\begin{aligned}
(ST^2)^n &= (STS^2T)^n = (YT^{-1}ST^{-1}ST)^n \\
&= Y^nT^{-1}ST^{-n}ST.
\end{aligned}
\end{equation}
In particular, when $n=4$, we have
\begin{equation}
(ST^2)^4=Y^4.
\label{eq: Y4}
\end{equation}
Here, we used $T^4=1$.
We note that $Y^4$ is a central element of order 2 because $Y$ is a central element of order 8.
Equations~\eqref{eq: ST3} and \eqref{eq: Y4} show that $S$ and $T^2$ obey the following relations:
\begin{equation}
S^2=1, \quad
(T^2)^2=1, \quad
(ST^2)^4 = Y^4.
\label{eq: ST2 4 continuum}
\end{equation}

If $Y^4$ were 1, the above relations would define the dihedral group $D_8$ of order 8.\footnote{Recall that $D_8$ has presentation $\langle r, s \mid s^2=r^4=1, srs=r^{-1} \rangle$. By defining $x \coloneq s$ and $y \coloneq sr$, we find a different presentation with relations $x^2=y^2=(xy)^4=1$.}
Thus,  the subgroup generated by $S$ and $T^2$ is a central extension of $D_8$ by $\mathbb{Z}_2$ generated by $Y^4$.

For later convenience, let us briefly comment on the central element $Y^4$ of order 2.
By definition, $Y^4$ is the stacking of four copies of the Crane-Yetter-Walker-Wang TQFT based on $\mathcal{C}_{\mathrm{semion}}$.
Equivalently, this TQFT can also be regarded as the Crane-Yetter-Walker-Wang TQFT based on the 3-fermion MTC $\mathcal{C}_{3F}$ (also known as $\mathrm{SO}(8)_1$):\footnote{The 3-fermion MTC $\mathcal{C}_{3F}$ has four simple objects $\{1, f_1, f_2, f_3\}$, where the topological spins of $f_1$, $f_2$, and $f_3$ are $-1$.}
\begin{equation}
Y^4 = \mathrm{CYWW}(\mathcal{C}_{\mathrm{semion}}^{\boxtimes 4}) = \mathrm{CYWW}(\mathcal{C}_{3F}).
\end{equation}
The second equality follows from the Witt equivalence between the 3-fermion MTC and the tensor product of four copies of the chiral semion MTC.\footnote{The 3-fermion MTC is obtained by condensing the bound state of all semions in $\mathcal{C}_{\mathrm{semion}}^{\boxtimes 4}$.}
The partition function of $\mathrm{CYWW}(\mathcal{C}_{3F})$  can be written explicitly as
\begin{equation}
Z_{Y^4}(M_4) = \frac{1}{|H^2(M_4; \mathbb{Z}_2)|} \sum_{b_1, b_2} (-1)^{\int b_1 \smile b_1 + b_1 \smile b_2 + b_2 \smile b_2},
\label{eq: 3F WW}
\end{equation}
where the summation is taken over $b_1, b_2 \in H^2(M_4; \mathbb{Z}_4)$, and the integral is as always over the manifold $M_4$.
By a direct computation, one can show that $Z_{Y^4}(M_4)$ in \eqref{eq: 3F WW} is indeed equal to $Z_Y(M_4)^4$.

\subsubsection{Topological defects}
Based on the above definitions of the $S$ and $T$ operations, one can define the corresponding codimension-1 topological defects between $\mathbb{Z}_2$ 1-form symmetric QFTs \cite{Choi:2021kmx, Choi:2022zal}.
In this subsection, we will focus on the topological defects corresponding to $S$, $ST$, and $ST^2$.

A topological defect corresponding to the $S$ operation is defined by gauging the $\mathbb{Z}_2$ 1-form symmetry only in half of the spacetime \cite{Choi:2021kmx, Choi:2022zal}.
At the defect locus, we impose the Dirichlet boundary condition on the dynamical gauge field, which guarantees that the defect is topological.
The topological defect obtained in this way is denoted by $\mathcal{S}$ and is called a duality defect.
See Figure~\ref{fig: D defect} for an illustration of this construction.
\begin{figure}
\centering
\adjincludegraphics[valign=c]{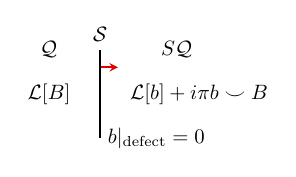}
\caption{The half-space gauging construction of the duality defect $\mathcal{S}$ in the continuum \cite{Choi:2021kmx, Choi:2022zal}. Here, $\mathcal{L}$ denotes the Lagrangian of the QFT $\mathcal{Q}$ before gauging. The dynamical gauge field $b$ satisfies the Dirichlet boundary condition $b=0$ at the defect locus. The red arrow specifies the orientation of the defect.}
\label{fig: D defect}
\end{figure}

Similarly, one can also construct codimension-1 topological defects $\mathcal{ST}$ and $\mathcal{S}\mathcal{T}^2$ corresponding to the $ST$ and $ST^2$ operations by applying these operations only to half of the spacetime.
Again, we impose the Dirichlet boundary condition on the dynamical gauge field at the defect locus.
The constructions of $\mathcal{ST}$ and $\mathcal{S}\mathcal{T}^2$ are illustrated in Figure~\ref{fig: DT defect} and Figure~\ref{fig: DU defect}, respectively.
These defects will be called a triality defect and a quaternity defect, respectively.
\begin{figure}
\centering
\adjincludegraphics[valign=c]{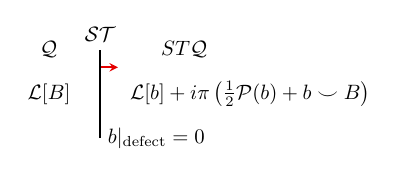}
\caption{The half-space gauging construction of the triality defect $\mathcal{ST}$.}
\label{fig: DT defect}
\end{figure}
\begin{figure}
\centering
\adjincludegraphics[valign=c]{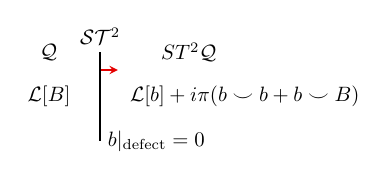}
\caption{The half-space gauging construction of the quaternity defect $\mathcal{S}\mathcal{T}^2$.}
\label{fig: DU defect}
\end{figure}

In later discussions, following \cite{Choi:2022zal}, we will use the following shorthand notations for the defects defined above:
\begin{align}
\mathcal{S}&: \mathcal{L}[b] + i \pi b \smile B, \\
\mathcal{S}\mathcal{T}&: \mathcal{L}[b] + i\pi \left(\frac{1}{2}\mathcal{P}(b) + b \smile B\right), \label{eq: DT lag} \\
\mathcal{S}\mathcal{T}^2&: \mathcal{L}[b] + i \pi (b \smile b + b \smile B) \label{eq: DU lag}.
\end{align}
Namely, we represent a defect by the Lagrangian on the right side of the defect.
Here, $\mathcal{L}$ is the Lagrangian on the left side of the defect.\footnote{The Lagrangian is used only for notational simplicity. The construction of the defects does not rely on the existence of a Lagrangian.}
This notation will be useful when we compute the fusion rules of the defects.

\subsubsection{Fusion rules}
\label{sec: Z2 fusion computation}
Now, we compute the fusion rules of the topological defects following~\cite{Choi:2022zal}.
More specifically, we will compute $(\mathcal{ST})^3$ and $(\mathcal{S}\mathcal{T}^2)^4$ and show that they act trivially on any local operators.
%More specifically, we will compute $(\mathcal{DU})^4$ and compare it with the lattice fusion rule~\eqref{eq: DU4 lattice}.
%Our computation will be parallel to the computation of the fusion rules of the triality defect in \cite{Choi:2022zal}.\footnote{Here, the triality defect, denoted by $\mathcal{D}_3$ in \cite{Choi:2022zal}, is the codimension-1 topological defect corresponding to the $ST$ operation.}

\vspace*{\baselineskip}
\noindent{\bf Fusion rule of $\mathcal{ST}$.}
We first compute $(\mathcal{ST})^3$.
Using the notation in~\eqref{eq: DT lag}, we can represent the fusion of two copies of $\mathcal{ST}$ as
\begin{multline}
 (\mathcal{ST})^2: \mathcal{L}[b_1] + i\pi \bigg( \frac{1}{2} \mathcal{P}(b_1) + b_1 \smile b_2 \\
 + \frac{1}{2} \mathcal{P}(b_2) + b_2 \smile B \bigg) \,.
\end{multline}
By fusion another copy of $\mathcal{ST}$, we obtain
\begin{multline}
 (\mathcal{ST})^3 : \mathcal{L}[b_1] + i\pi \left( \frac{1}{2} \mathcal{P}(b_1) + b_1 \smile b_2 + \frac{1}{2} \mathcal{P}(b_2) \right. \\ 
 \left. + b_2 \smile b_3 + \frac{1}{2} \mathcal{P}(b_3) + b_3 \smile B \right).
\end{multline}
If we define $b \coloneq b_1 + b_2 + b_3$ and rewrite $(\mathcal{ST})^3$ in terms of $b$, $b_1$, and $b_3$, we find
\begin{equation}
(\mathcal{ST})^3: \mathcal{L}[b_1] + i\pi \left( \frac{1}{2} \mathcal{P}(b) + b_1 \smile b_3 + b_3 \smile B \right).
\label{eq: ST3 defect}
\end{equation}
Here, we used the following identity of the Pontryagin square:
\begin{equation}
\frac{1}{2}\mathcal{P}(b_1+b_2) = \frac{1}{2} \mathcal{P}(b_1) + \frac{1}{2} \mathcal{P}(b_2) + b_1 \smile b_2.
\end{equation}
We note that $b$ in \eqref{eq: ST3 defect} is decoupled from the other gauge fields.
Thus, the $\frac{1}{2} \mathcal{P}(b)$ term in \eqref{eq: ST3 defect} represents an independent 3+1d TQFT, whose partition function is given by~\eqref{eq: Y partition function}.
This TQFT hosts a chiral semion topological order on the defect where the gauge field $b$ obeys the Dirichlet boundary condition.
On the other hand, as shown in \cite{Choi:2022zal}, the remaining terms in \eqref{eq: ST3 defect} correspond to the condensation defect for the $\mathbb{Z}_2$ 1-form symmetry, 
which is defined by \cite{Roumpedakis:2022aik, Choi:2022zal}
\begin{equation}
\mathcal{C}_0 \coloneq \frac{1}{|H^0(M_3; \mathbb{Z}_2)|} \sum_{\Sigma \in H_2(X_3; \mathbb{Z}_2)} \eta(\Sigma),
\label{eq: condensation Z2}
\end{equation}
where $M_3$ is the support of the defect and $\eta(\Sigma)$ is the $\mathbb{Z}_2$ 1-form symmetry defect on $\Sigma$. 
Thus, equation~\eqref{eq: ST3 defect} implies that $\mathcal{ST}$ obeys the following fusion rule \cite{Choi:2022zal}:
\begin{equation}
(\mathcal{ST})^3 = \mathcal{Z}_{\mathrm{semion}} \mathcal{C}_0.
\label{eq: ST3 continuum defect}
\end{equation}
Here, the coefficient $\mathcal{Z}_{\mathrm{semion}}$ is the partition function of the 2+1d chiral semion TQFT on the defect.

\vspace*{\baselineskip}
\noindent{\bf Fusion rule of $\mathcal{S}\mathcal{T}^2$.}
Next, we compute $(\mathcal{S}\mathcal{T}^2)^4$.
Using the notation in \eqref{eq: DU lag}, we can represent the fusion of two copies of $\mathcal{S}\mathcal{T}^2$ as
\begin{equation}
(\mathcal{S}\mathcal{T}^2)^2: \mathcal{L}[b_1] + i \pi (b_1 {\smile} b_1 + b_1 {\smile} b_2 + b_2 {\smile} b_2 + b_2 {\smile} B) \,.
\end{equation}
By fusing another copy of $(\mathcal{S}\mathcal{T}^2)^2$, we obtain
\begin{multline}
 (\mathcal{S}\mathcal{T}^2)^4 : \mathcal{L}[b_1] + i \pi (b_1 \smile b_1 + b_1 \smile b_2 + b_2 \smile b_2 \\
 + b_2 \smile b_3 + b_3 \smile b_3 + b_3 \smile b_4 + b_4 \smile b_4 + b_4 \smile B).
\end{multline}
If we define $b \coloneq b_1+b_3$ and $b^{\prime} \coloneq b_2+b_4$, the above expression reduces to
\begin{equation}
\begin{aligned}
(\mathcal{S}\mathcal{T}^2)^4&: i \pi (b \smile b + b \smile b^{\prime} + b^{\prime} \smile b^{\prime}) \\
& \quad + \mathcal{L}[b_1] + i \pi (b_1 \smile b_4 + b_4 \smile B).
\end{aligned}
\label{eq: DU4 continuum lag}
\end{equation}
We note that the gauge fields $b$ and $b^{\prime}$ are decoupled from $b_1$, $b_4$, and $B$.
The first line of \eqref{eq: DU4 continuum lag} is the Lagrangian of the Crane-Yetter-Walker-Wang TQFT based on the 3-fermion MTC, cf.~\eqref{eq: 3F WW}.
In particular, on the defect where the Dirichlet boundary condition is imposed, this TQFT realizes the chiral topological order described by the 3-fermion MTC.
On the other hand, the second line of \eqref{eq: DU4 continuum lag} corresponds to the condensation defect $\mathcal{C}_0$~\cite{Choi:2022zal}.
Therefore, equation~\eqref{eq: DU4 continuum lag} implies the following fusion rule: 
\begin{equation}
(\mathcal{S}\mathcal{T}^2)^4 = \mathcal{Z}_{3F} \mathcal{C}_0.
\label{eq: DU4 continuum defect}
\end{equation}
Here, the coefficient $\mathcal{Z}_{3F}$ is the partition function of the 2+1d 3-fermion TQFT on the defect.

\vspace*{\baselineskip}
\noindent{\bf Action on local operators.}
%To compare the above fusion rules with their lattice counterparts in later sections, we now consider the actions of $(\mathcal{ST})^3$ and $(\mathcal{S}\mathcal{T}^2)^4$ on local operators.
Based on the fusion rules~\eqref{eq: ST3 continuum defect} and \eqref{eq: DU4 continuum defect}, we can now compute the actions of $(\mathcal{ST})^3$ and $(\mathcal{S}\mathcal{T}^2)^4$ on local operators.
These actions will be compared with their lattice counterparts in later sections.

In general, the action of any codimension-1 topological defect $\mathcal{D}$ on a local operator $\mathcal{O}$ is defined by putting $\mathcal{D}$ on a 3-sphere surrounding $\mathcal{O}$.
More specifically, we define the action of $\mathcal{D}$ on $\mathcal{O}$ by
% \begin{equation}
% \alpha_{\mathcal{D}}(\mathcal{O}) \coloneq \mathcal{D}(\mathcal{O}) / \langle \mathcal{D} \rangle_{S^3},
% \label{eq: defect action}
% \end{equation}
\be\label{eq: defect action}
\opaction{\mc{D}}(\mc{O}) \coloneqq \frac{\mc{D}(\mc{O})}{\langle \mc{D} \rangle_{\mathbb{S}^3}}
\ee
where $\mathcal{D}(\mathcal{O})$ is the local operator $\mathcal{O}$ surrounded by $\mathcal{D}$.
Here, we normalized the action of $\mathcal{D}$ by its quantum dimension $\langle \mathcal{D} \rangle_{\mathbb{S}^3}$ on a 3-sphere $\mathbb{S}^3$.
This normalization allows us to directly compare the results in the continuum and those on the lattice in later sections.

On a 3-sphere, the condensation defect~\eqref{eq: condensation Z2} is proportional to the identity operator because there is no non-trivial 2-cycle on a 3-sphere.
Therefore, equations~\eqref{eq: ST3 continuum defect} and \eqref{eq: DU4 continuum defect} imply that $(\mathcal{ST})^3$ and $(\mathcal{S}\mathcal{T}^2)^4$ are both proportional to the identity operator on a 3-sphere.
In particular, these defects can pass through any local operator, meaning that they act trivially on local operators.
That is, we have
\begin{align}
 \opaction{(\mc{ST})^3} &= 1 \,, & \opaction{(\mc{ST}^2)^4} &= 1 \,.
\end{align}
% \begin{equation}
%  \alpha_{\mathcal{ST}}^3 = 1, \quad
%  \alpha_{\mathcal{S}\mathcal{T}^2}^4 = 1.
% (\mathcal{ST})^3(\mathcal{O}) = \mathcal{O}, \quad
% (\mathcal{S}\mathcal{T}^2)^4(\mathcal{O}) = \mathcal{O}
% \end{equation}
%for any local operator $\mathcal{O}$.
This shows that the actions of $\mathcal{ST}$ and $\mathcal{S}\mathcal{T}^2$ on local operators obey the $\mathbb{Z}_3$ and $\mathbb{Z}_4$ fusion rules, respectively.

\subsection{$\Z_p^{(1)}$ Symmetry Case}
\label{sec: continuum fusion rules Zp}
We now generalize the discussion in the previous subsection to the case of $\mathbb{Z}_p$ 1-form symmetry following~\cite{Choi:2022zal}.

\subsubsection{The $S_p$ and $T_p$ Operations}
\label{sec: S and T Zp}
As in the $\mathbb{Z}_2$ case, we first introduce the $S$ and $T$ operations on 3+1d QFTs with $\mathbb{Z}_p$ 1-form symmetry.
For the $\mathbb{Z}_p$ case, we avoid ambiguity and denote the $S$ and $T$ operations by $S_p$ and $T_p$, respectively.
At the level of partition functions, these operations are defined by \cite{Choi:2022zal}
\begin{equation}
\begin{aligned}
Z_{S_p\mathcal{Q}}[B] &= \frac{1}{|H^2(M_4; \mathbb{Z}_p)|^{\frac{1}{2}}} \sum_{b \in H^2(M_4; \mathbb{Z}_p)} Z_{\mathcal{Q}}[b] e^{-\frac{2\pi i}{p} \int b \smile B}, \\
Z_{T_p\mathcal{Q}}[B] &= Z_{\mathcal{Q}}[B] e^{\frac{2\pi i}{p} \int \frac{p-1}{2} B \smile B}.
\end{aligned}
\label{eq: ST Zp}
\end{equation}
Here, $Z_{\mathcal{Q}}[B]$ is the parition function of a $\mathbb{Z}_p$ 1-form symmetric QFT $\mathcal{Q}$ on an oriented 4-manifold $M_4$ equipped with a background $\mathbb{Z}_p$ 2-form gauge field $B$.
We note that the $S_p$ operation corresponds to gauging the $\mathbb{Z}_p$ 1-form symmetry, while the $T_p$ operation corresponds to stacking a $\mathbb{Z}_p$ 1-form SPT phase.\footnote{Our convention for the $S_p$ and $T_p$ operations is slightly different from \cite{Choi:2022zal}. Specifically, the $S_p$ operation in \eqref{eq: ST Zp} is the $S$ operation in \cite{Choi:2022zal} followed by the charge conjugation, and the $T_p$ operation in \eqref{eq: ST Zp} is the inverse of the $T$ operation in~\cite{Choi:2022zal}.}
We also note that since the exponent in the $T_p$ operation is effectively calculated mod $p$, $T_p^{-2} = T_p^{(p-1) \cdot 2}$ is the stacking of a phase $\frac{2 \pi i}{p} \int 1 \cdot B \smile B$, which corresponds a little more clearly to the generating SPT phase we might label as $1 \in \Z_p$. Nevertheless, both $T_p$ and $T_p^{-2}$ are generators of the same $\Z_p$ group of SPTs (actually, for prime $p$, any non-zero element generates the whole group).
%We note that the $S$ operation corresponds to gauging the $\mathbb{Z}_p$ 1-form symmetry, while the $T$ operation corresponds to stacking a $\mathbb{Z}_p$ 1-form SPT phase labeled by $k=\frac{p-1}{2}$.\footnote{Our convention for the $S$ and $T$ operations is slightly different from \cite{Choi:2022zal}. Specifically, the $S$ operation in \eqref{eq: ST Zp} is the $S$ operation in \cite{Choi:2022zal} followed by the charge conjugation, and the $T$ operation in \eqref{eq: ST Zp} is the inverse of the $T$ operation in~\cite{Choi:2022zal}.}

The above $S_p$ and $T_p$ operations obey the following relations \cite{Choi:2022zal}:
\begin{equation}
S_p^2 = C, \qquad
T_p^p=1, \qquad
(S_pT_p)^3 = Y_p.
\label{eq: SL 2 Zp}
\end{equation}
Here, $C$ is the charge conjugation defined by $Z_{C\mathcal{Q}}[B] = Z_{\mathcal{Q}}[-B]$, and $Y_p$ is the stacking of a 3+1d invertible TQFT whose partition function is given by
\begin{equation}
Z_{Y_p} = \frac{1}{|H^2(M_4; \mathbb{Z}_p)|^{\frac{1}{2}}} \sum_{b \in H^2(M_4; \mathbb{Z}_p)} e^{\frac{2\pi i}{p} \int \frac{p-1}{2} b \smile b}.
\label{eq: Y Zp}
\end{equation}
We note that $Z_{Y_p}$ is the partition function of the Crane-Yetter-Walker-Wang TQFT based on the chiral $\mathbb{Z}_p$ MTC whose simple object $a \in \mathbb{Z}_p$ has topological spin
\begin{equation}
\theta(a) = e^{\frac{2\pi i}{p} \frac{p-1}{2} a^2}.
\end{equation}
The chiral $\mathbb{Z}_p$ MTC with the above topological spins is known as $\mathrm{SU}(p)_1$, which describes the topological order realized in $\mathrm{SU}(p)_1$ Chern-Simons theory.

We note that $Y_p$ has order 2 when $p = 1 \bmod 4$ and order $4$ when $p=3 \bmod 4$.
This is because the order of $\mathrm{SU}(p)_1$ in the Witt group of non-degenerate braided fusion categories is 2 or 4 depending on whether $p=1$ mod 4 or $p=3$ mod 4 \cite{davydov2013witt}.
{Furthermore, $Y_p$ is in the center of the group generated by $\mathsf{S}_p$ and $\mathsf{T}_p$ because the partition function~\eqref{eq: Y Zp} is independent of the background gauge field $B$.}

{In general, the relations in \eqref{eq: SL 2 Zp} alone do not determine the group structure generated by $\mathsf{S}_p$ and $\mathsf{T}_p$. Nevertheless, the result in Section~\ref{sec:ZpQCAgroup} strongly suggests that this group is isomorphic to $SL(2, \mathbb{Z}_p) \times \mathbb{Z}_{n_p}$, where $n_p=2$ or $4$ depending on whether $p=1$ or $3$ mod $4$. In particular, the central extension splits as an abstract group, in contrast to the $\mathbb{Z}_2$ case discussed in the previous subsection.}

\subsubsection{Topological Defects}
Based on the above definition of the $S_p$ and $T_p$ operations, we can now define the corresponding topological defects following \cite{Choi:2021kmx, Choi:2022zal}.
In what follows, we will only consider the topological defects corresponding to $S_p$ and $S_pT_p$.
These defects will be denoted by $\mathcal{S}_p$ and $\mathcal{S}_p\mathcal{T}_p$, and are called a duality defect and a triality defect, respectively~\cite{Choi:2022zal}.

The duality defect $\mathcal{S}_p$ is defined by gauging the $\mathbb{Z}_p$ 1-form symmetry only in half of the spacetime \cite{Choi:2021kmx, Choi:2022zal}.
At the defect locus, we impose the Dirichlet boundary condition on the dynamical gauge field.
\begin{figure}
\centering
\adjincludegraphics[valign=c, trim={10, 10, 10, 10}]{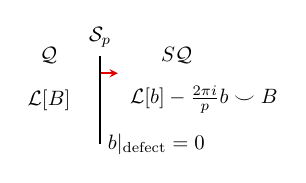}
\caption{The half-space gauging construction of the duality defect $\mathcal{S}_p$. The small red arrow specifies the orientation of the defect. The dynamical gauge field $b$ obeys the Dirichlet boundary condition on the defect.}
\label{fig: duality defect Zp}
\end{figure}
See Figure~\ref{fig: duality defect Zp} for an illustration of this construction.
As in the $\mathbb{Z}_2$ case, the defect defined in this way is denoted by \cite{Choi:2022zal}
\begin{equation}
\mathcal{S}_p: \mathcal{L}[b] - \frac{2\pi i}{p} b \smile B,
\end{equation}
where $\mathcal{L}$ is the Lagrangian of the QFT before gauging.
%This defect is the continuum analogue of the lattice KWW operator $\mathsf{D}_p$.

Similarly, the triality defect $\mathcal{S}_p \mathcal{T}_p$ is defined by applying the $S_pT_p$ operation only in half of the spacetime \cite{Choi:2022zal}.
Namely, we stack a $\mathbb{Z}_p$ 1-form SPT phase and gauge the $\mathbb{Z}_p$ 1-form symmetry only on one side of the defect.
At the defect locus, we again impose the Dirichlet boundary condition on the dynamical gauge field.
See Figure~\ref{fig: triality defect Zp} for an illustration of this construction.
\begin{figure}
\centering
\adjincludegraphics[valign=c, trim={10, 10, 10, 10}]{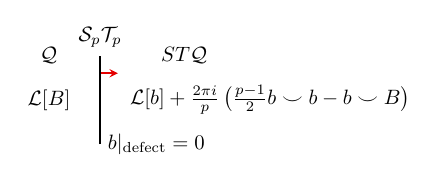}
\caption{The half-space gauging construction of the triality defect $\mathcal{S}_p\mathcal{T}_p$.}
\label{fig: triality defect Zp}
\end{figure}
The defect defined in this way is denoted by
\begin{equation}
\mathcal{S}_p \mathcal{T}_p: \mathcal{L}[b] + \frac{2\pi i}{p} \left( \frac{p-1}{2} b \smile b - b \smile B \right).
\label{eq: DT continuum}
\end{equation}
We will use the above notations when we compute the fusion rules of the defects.
%This defect is the continuum analogue of the composite operator $\mathsf{D}_p T_p^{\left(\frac{p-1}{2}\right)}$ on the lattice.

\subsubsection{Fusion Rules}
We now compute the fusion rule of the triality defect $\mathcal{S}_p\mathcal{T}_p$ following \cite{Choi:2022zal}.
More specifically, we will compute $(\mathcal{S}_p\mathcal{T}_p)^3$ and show that it acts trivially on any local operator.

To compute $(\mathcal{S}_p \mathcal{T}_p)^3$, we first compute the fusion of two copies of $\mathcal{S}_p\mathcal{T}_p$.
Using the notation in~\eqref{eq: DT continuum}, one can represent $(\mathcal{S}_p\mathcal{T}_p)^2$ as
\begin{equation}
\begin{aligned}
(\mathcal{S}_p \mathcal{T}_p)^2 &: \mathcal{L}[b_1] + \frac{2\pi i}{p} \left( \frac{p-1}{2} b_1 \smile b_1 - b_1 \smile b_2 \right. \\
& \quad \left. + \frac{p-1}{2} b_2 \smile b_2 - b_2 \smile B \right).
\end{aligned}
\end{equation}
One can rewrite $(\mathcal{S}_p\mathcal{T}_p)^2$ in terms of $b_1$ and $b \coloneq b_1+b_2$ as follows:
\begin{multline}
 (\mathcal{S}_p\mathcal{T}_p)^2: \mathcal{L}[b_1] + \frac{2\pi i}{p} \left( \frac{p-1}{2} b \smile b \right. \\
 \left. + b_1 \smile B + \frac{p+1}{2} B \smile B \right) \,.
\end{multline}
By fusing another copy of $\mathcal{S}_p\mathcal{T}_p$, we obtain
\begin{equation}
(\mathcal{S}_p\mathcal{T}_p)^3: \mathcal{L}[b_1] + \frac{2\pi i}{p} \left( \frac{p-1}{2} b {\smile} b + b_1 {\smile} b_2 - b_2 {\smile} B \right).
\label{eq: DTDTDT continuum1}
\end{equation}
We note that the gauge field $b$ in \eqref{eq: DTDTDT continuum1} is decoupled from the other gauge fields.
Hence, the $b \smile b$ term in the above expression corresponds to a standalone 3+1d TQFT, whose partition function is given by \eqref{eq: Y Zp}.
In particular, this TQFT is invertible and supports $\mathrm{SU}(p)_1$ Chern-Simons theory on the Dirichlet boundary.
On the other hand, the remaining terms in \eqref{eq: DTDTDT continuum1} correspond to the condensation defect for the $\mathbb{Z}_p$ 1-form symmetry~\cite{Choi:2022zal}, which is defined by \cite{Roumpedakis:2022aik, Choi:2022zal}
\begin{equation}
\mathcal{C}_0 \coloneq \frac{1}{|H^0(M_3; \mathbb{Z}_p)|} \sum_{\Sigma \in H_2(M_3; \mathbb{Z}_p)} \eta(\Sigma),
\end{equation}
where $M_3$ is the worldvolume of the defect and $\eta(\Sigma)$ is the $\mathbb{Z}_p$ 1-form symmetry operator supported on $\Sigma$.
Thus, equation~\eqref{eq: DTDTDT continuum1} implies that $\mathcal{S}_p\mathcal{T}_p$ obeys the following fusion rule:
\begin{equation}
(\mathcal{S}_p\mathcal{T}_p)^3 = \mathcal{Z}_{\mathrm{SU}(p)_1} \mathcal{C}_0.
\label{eq: DTDTDT continuum2}
\end{equation}
Here, $\mathcal{Z}_{\mathrm{SU}(p)_1}$ is the partition function of $\mathrm{SU}(p)_1$ Chern-Simons theory on the defect.

As in the $\mathbb{Z}_2$ case discussed in Section~\ref{sec: Z2 fusion computation}, it immediately follows from the fusion rule~\eqref{eq: DTDTDT continuum2} that the action of $\mathcal{S}_p\mathcal{T}_p$ on local operators satisfies
\be
\opaction{(\mc{S}_p \mc{T}_p)^3} = 1
\ee
% \begin{equation}
% \alpha_{\mathcal{S}_p\mathcal{T}_p}^3=1.
% (\mathcal{S}_p\mathcal{T}_p)^3(\mathcal{O}) = \mathcal{O}.
% \end{equation}
Here, the action of a topological defect on local operators is defined by~\eqref{eq: defect action}.
The above equation shows that the action of $\mathcal{S}_p\mathcal{T}_p$ on local operators obeys the $\mathbb{Z}_3$ fusion rule.

\section{Graded Witt Groups and Automorphisms of the 1-Form SymTFT }
\label{sec:Cats}

A quantum cellular automaton (QCA) is a locality-preserving automorphism of the
operator algebra of a lattice system, considered trivial if it is blend-equivalent to the identity - in particular, if it is a finite depth quantum circuit.  
In $3+1$d the conjectured obstruction to triviality is a $2+1$d abelian anyon theory \cite{Shirley_2022}: 
the basic nontrivial
examples arise as boundary disentanglers of Walker--Wang models \cite{walker20123+},
and, in these known examples, such a QCA is a circuit precisely when the input anyon theory admits a gapped
boundary.  
Since anyon theories related by condensation define equivalent QCAs, what
classifies QCAs is not the anyon theory itself but its {\bf Witt class} --- for Clifford
QCAs modulo Clifford circuits and shifts this is a theorem
\cite{Haah_2021,Haah:2022yyo},\footnote{See also, e.g., \cite{Yang:2025jvn, Czajka:2025mme, Ji:2026fka, Yang:2026lnf} for recent developments on the classification of QCAs.} with the three-fermion theory $SO(8)_1$ (the central
class $\cC$ below) providing the seminal nontrivial example
\cite{HaahFidkowskiHastings2023}.  

Applying the QCA to a trivial short range entangled Hamiltonian only in half of space leads to a Hamiltonian which cannot be trivially gapped with commuting projectors near the boundary of this half-space.  Instead, a typical gapping commuting projector Hamiltonian acting at the boundary will lead to a $2+1$d anyon theory living on the boundary.  Different commuting projector Hamiltonians may possibly be obtained by condensing anyons, but this cannot change the Witt class, reflecting the invariance of the QCA class.  The spacetime picture suggested by this construction is that, like any $0$-form symmetry, each QCA has a corresponding co-dimension $1$ defect, with a $2+1$d anyon theory living on it, and the QCA is classified by the Witt class of the anyon theory.

In this paper the QCAs are in addition required to commute with a $\Z_p^{(1)}$ 1-form symmetry, which endows their co-dimension $1$ defects with additional structure.  More generally, we consider not just QCAs on the full tensor product operator algebra, but also automorphisms of the algebra of $\Z_2$ $1$-form symmetric operators, which we refer to as {\bf symmetric QCAs}.  We note that these more general symmetric QCA can be non-invertible symmetries from the point of view of the tensor product operator algebra. 
As is familiar from ordinary $1+1$d Kramers-Wannier duality, such symmetric QCA can be effectively analyzed from the SymTFT point of view, in this case with a $4+1$d bulk in a $(2,2)$ $\Z_p$ toric code phase, or the $\Z_p$ BF-theory based on 2-form gauge fields.  Namely, the co-dimension $1$ defect can be extended to a co-dimension $1$ interface in the $4+1$d bulk, and this interface can act on the $1+1$d  excitations of the $(2,2)$ toric code: dragging such an electric or magnetic $1+1$d excitation through the interface returns a possibly different $1+1$d excitation; this yields an automorphism of $\Z_p\oplus\widehat{\Z_p}$ preserving the electric-magnetic braiding.  In the case of $\Z_{2n}$ $1$-form symmetries, there are also additional more subtle phases corresponding to dressing the $1+1$d excitations with fermion world-lines \cite{Chen:2023Loops}.

The algebraic structure capturing this invariant is computed in \cite{Bhardwaj:2024xcx}.  One particular presentation of it, which aligns with the above Witt classification of ordinary QCA, is as the graded, syllepsis-twisted Witt group $\Witt^\pt(\Z_p\oplus\widehat{\Z_p},s)$.  An element of this group is a so-called `$s$-invertible $\Z_p\oplus\widehat{\Z_p}$-graded premetric group'.  
Physically, this is just the quasiparticle theory living at the co-dimension $1$ defect associated to the symmetric QCA. 
The additional $\Z_p\oplus\widehat{\Z_p}$ grading structure records which bulk $1+1$d excitation terminates on a particular anyon, and $s$-invertibility encodes a version of braiding non-degeneracy appropriate for this setting.  
Furthermore, only interface anyons invisible to the bulk (grade zero) may be condensed.

The graded, syllepsis-twisted Witt group $\Witt^\pt(\Z_p\oplus\widehat{\Z_p},s)$ contains the ungraded pointed Witt group $\Witt^\pt$ classifying ordinary QCA as a central subgroup.  It is thus natural to conjecture that the Witt classification of ordinary QCA is promoted to a $\Witt^\pt(\Z_p\oplus\widehat{\Z_p},s)$ classification of symmetric QCA.

In this section we will provide evidence for this conjecture by computing the structure of $\Witt^\pt(\Z_p\oplus\widehat{\Z_p},s)$ as a central extension of its quotient by the (standard pointed) Witt group and showing that it matches the field theory defect fusion calculations of Section~\ref{sec: Fusion Rules in the Continuum}  and the spatial lattice QCA calculations of the subsequent sections.  

Specifically we show that for $\Z_2$ $1$-form symmetry, this quotient is ${S}_4 \cong SL(2,\Z_4) / \Z_2$ and has non-trivial mixing with a $\Z_8$ subgroup of the Witt group, inside the full 
$\Witt^\pt(\Z_2\oplus\widehat{\Z_2},s)$. 
This extension is the Clifford group of one qubit, with $S$ and $T$ corresponding to the Hadamard and phase gate respectively, and the ordinary QCA corresponding to $8$'th roots of unity, with the primitive $8$'th root of unity being the semion QCA.  Then, for example, the relation $(ST)^3=Y$ of Theorem~\ref{thm:modular} is reflected in the corresponding algebraic property of $2$ by $2$ unitary matrices.  For odd $p$ no such obstruction survives, and the extension splits, see Theorem~\ref{thm:WittZpMain}.  However, if we insist on identifying $S$ and $T$ with particular pre-metric groups that we view as corresponding to lattice-level symmetric QCA implementing the KWW duality and the $\Z_p$ $1$-form SPT entangler respectively, and do not allow arbitrary re-labelings of these generators, we can still recover the relation $(S_p T_p)^3 = Y_p$, in accordance with the continuum and lattice-level calculations.

\subsection{SymTFT for 1-form Symmetries in 3+1d}

We will study the braided 
automorphism of the SymTFT for the 1-form symmetry, which is a 4+1d $\Z_p$ gauge theory for surface defects. These braided automorphism are given in terms invertible topological interfaces that 
permute and decorate  the electric and magnetic surface defects in the 4+1d SymTFT. 

Consider a $3+1$d theory, which has a $\Z_p^{(1)}$ 1-form symmetry, with $p$ prime. 
Such a theory can be coupled to a background two-form gauge field
$B\in Z^2(M_4,\Z_p)$, or more concretely to a $\Z_p$-valued 2-cocycle modulo
1-form gauge transformations $B\sim B+\delta\lambda$.

The corresponding
SymTFT is a $4+1$d topological theory whose job is to keep track of all
possible boundary conditions, gaugings, and duality interfaces for this
background field.  A convenient cochain model for the universal
$\Z_p^{(1)}$ SymTFT uses two $\Z_p$-valued 2-cochains $B,C\in C^2(M_5,\Z_p)$
with BF action
\be\label{eq:ZpSymTFTBF}
  S_{\rm SymTFT}(B,C)
  ={2\pi i\over p}\int_{M_5} B\smile\delta C .
\ee
The gauge transformations are
\be
  B\mapsto B+\delta\lambda,\qquad C\mapsto C+\delta\mu,
  \qquad \lambda,\mu\in C^1(M_5,\Z_p),
\ee
and the equations of motion impose $\delta B=\delta C=0$ away from charged
insertions.  
On a physical boundary $M_4=\partial M_5$, choosing a
polarization means choosing which linear combination of $B$ and $C$ is held
fixed as the background field for the boundary theory and selects out a gapped boundary condition of the SymTFT. 
For $p$ prime there are two polarizations, given by $B$ or $C$ having Dirichlet boundary conditions, respectively. 
Gauging the 1-form symmetry exchanges the two polarizations, and stacking a 1-form SPT phase shifts
the boundary action by a quadratic term 
\be 
(2\pi i k/p)\int_{M_4}B\smile B
\ee
for odd $p$. We will discuss the $p=2$ case as well. 

\smallskip
\noindent{\bf Surface defects and the electric-magnetic pairing.}
The bulk SymTFT has topological surface defects labelled by pairs
\be
  x=(x_1,x_2)\in \Z_p\oplus\widehat{\Z_p} = \Z_p^2\,.
\ee
For a closed two-cycle $\Sigma\subset M_5$, the corresponding defect can be
represented by a surface 
\be
  W_x(\Sigma)
  =
  \exp\!\left({2\pi i\over p}\int_\Sigma (x_1B+x_2C)\right)\,.
\ee
where the $B$ and $C$ contributions corresponds to the electric and magnetic surface defects, respectively. 
 Because of the BF-coupling in
\eqref{eq:ZpSymTFTBF}, dragging an electric surface through a magnetic surface
produces a phase due to the braiding.
In the ordered convention used below,
the elementary braiding/syllepsis between two surface charges
$x=(x_1,x_2)$ and $y=(y_1,y_2)$ is
\be\label{eq:SymTFTsyllepsis}
  s(x,y)=\zeta_p^{\,x_1y_2}\,,
\ee
where $\zeta_p= e^{2 \pi i/p}$.
The reverse ordered braiding gives $s(y,x)=\zeta_p^{\,y_1x_2}$, so the invariant
commutator, or monodromy, is the alternating electric-magnetic pairing
\be\label{eq:SymTFTAlt}
  \mathsf{Alt}(s)(x,y)
  ={s(x,y)\over s(y,x)}
  =\zeta_p^{\,x_1y_2-x_2y_1}.
\ee
This is the finite-field symplectic form preserved by the duality modular group
$SL(2,\Z_p)$.  Categorically, the same data is encoded
by the sylleptic 2-category
$\bVect^s((\Z_p\oplus\widehat{\Z_p})[0])$: objects are the surface charges,
the tensor product is addition of charges, and the syllepsis is precisely the
phase \eqref{eq:SymTFTsyllepsis}.  Thus the categorical choice of $s$ below is
just the cochain-level electric-magnetic braiding of the SymTFT.  For $p=2$,
$\zeta_2=-1$ and this reduces to the sign convention
$s((x_1,x_2),(y_1,y_2))=(-1)^{x_1y_2}$.

\subsection{Braided Automorphisms and the Graded Witt group}

We focus in this paper on 4+1d SymTFT for
a $A=\Z_p$ 1-form symmetry, with $p$ prime -- though we will have to distinguish $p=2$ from odd primes $p$. 
The bulk theory has two basic types of surface
defects, electric and magnetic, and the automorphisms of interest are the
topological interfaces that act invertibly on this defect system.  The
categorical problem is therefore to compute the group of invertible braided
interfaces preserving the electric-magnetic pairing.  
Computing these interfaces is directly related to the Witt group. 
The Witt group, $\Witt$, is the group of nondegenerate braided fusion categories, modulo those that are Drinfeld centers. The pointed Witt group $\Witt^\pt$ is the pointed part, i.e. every simple object is invertible. These pointed Witt group elements are simply metric groups, which we discuss in Section~\ref{sec:MetricGroup}: $(G, q)$, where 
$G$ is a finite abelian group of invertible simple objects,
and $q: G \to \mathbb{C}^\times$ a quadratic form, where $q(a)$ determines  the spin of the object.  

For the purpose of computing the SymTFT interfaces, we will be interested in a generalization to $s$-twisted graded Witt group $\Witt^\pt (A \oplus \widehat{A}, s)$ \cite{Bhardwaj:2024xcx}, where the
syllepsis $s$ corresponds to the mutual braiding of electric and magnetic surfaces and
the ordinary pointed Witt group encodes the {abelian}  $2+1$d topological orders that can be stacked on the interface.

\smallskip

We will first recall some properties of the SymTFT and the braided automorphisms from \cite{Bhardwaj:2024xcx}.
For a general fusion higher-category $\cC$, the 
interfaces are measured by the Picard  (Pic) group of
the braided Drinfeld center $\cZ(\cC)$, or equivalently by the Brauer-Picard (BrPic) group of the boundary
fusion category.  Mathematically, the Picard group encodes the 
invertible module
categories over the braided category, while the Brauer-Picard group corresponds to the  
invertible bi-module categories, which are identified 
\be 
\Pic(\cZ(\cC))\cong\BrPic(\cC)\,.
\ee
The graded Witt group below is a convenient presentation of this same 
group: it packages the action of the automorphisms on the surface defects in terms of
a permutation, together with possible lower-dimensional invertible phases living on the interface.

For the 1-form symmetry $A^{(1)}$ in $3+1$d, i.e. the fusion 3-category $\cVect A[1]$, the Drinfeld center (which captures the topological defects of the SymTFT) is given by   
\be\label{SurfaceId}
\cZ (\cVect A[1]) = \Sigma \bVect^s (A \oplus \widehat{A} [0] )\,.
\ee
This describes the SymTFT data: 
\begin{itemize}
\item  Surface defects for $A$ and its Pontryagin dual $\widehat{A}$ in the $4+1$d bulk, {which correspond to the Wilson surfaces of $B$ and $C$ in (\ref{eq:ZpSymTFTBF}), respectively.}
\item  Condensation completion $\Sigma$ to a fusion 3-category.
\item  Braiding (syllepsis) $s$ encoding the electric-magnetic pairing.
\end{itemize}
As shown in \cite{Bhardwaj:2024xcx}, the RHS of (\ref{SurfaceId}) provides a direct connection to the Witt group:
\be
\label{WittPic}
\Witt (A \oplus \widehat{A}, s) = \Pic ({\Sigma}\bVect^s (A \oplus \widehat{A}[0]))\,,
\ee
where the LHS is defined as a generalization of the standard Witt group to braided $A \oplus \widehat{A}$-graded, $s$-invertible fusion categories.

\begin{widetext}
{It is also shown in \cite{Bhardwaj:2024xcx}} that this Witt group fits into the exact sequence
\be \label{ShortishSequence}
0\rightarrow \Witt\rightarrow \Witt(A\oplus \widehat{A},s)\rightarrow \text{Aut}^{syp}(A\oplus \widehat{A},\mathsf{Alt}(s))\ltimes H^5((A\oplus \widehat{A})[3];\mathbb{C}^{\times})\rightarrow \Z_2\,,
\ee 
where Aut$^{syp}(A\oplus \widehat{A},\mathsf{Alt}(s))$ is the group of automorphisms of $A\oplus \widehat{A}$ preserving the alternating 2-form $\mathsf{Alt}(s)$.
\end{widetext}

The syllepsis is not an additional choice: it is fixed by the
electric-magnetic pairing of the SymTFT.  For a self-dual group
$A\simeq\widehat{A}$, choosing a basis of electric and magnetic surface defects
identifies $A\oplus\widehat{A}$ with $A\oplus A$ and gives the standard 
syllepsis. Changing this basis conjugates the formulas below by a symplectic
automorphism.  Thus the invariant datum is the non-degenerate alternating form
$\mathsf{Alt}(s)$, whose automorphism group is the corresponding symplectic
group.

\smallskip
\noindent
{\bf ${A \oplus \widehat{A}}$-graded metric groups.}
An important realization of the graded Witt group is in terms of metric groups, see  \cite{Bhardwaj:2024xcx}. First note that 
\be
\Witt ({A \oplus \widehat{A}}, s)/\Witt\cong  
\Witt^{\pt} ({A \oplus \widehat{A}},s)/\Witt^{\pt} \,,
\ee
where $\pt$ indicates the pointed part of the Witt group. 
 The corresponding equality between the full graded
Witt quotient and its pointed part was discussed in \cite{Bhardwaj:2024xcx}. For the purposes of the present paper we only use the pointed version.

$\Witt^{\pt}({A \oplus \widehat{A}}, s)$ is generated by the $s$-invertible pointed braided
${A \oplus \widehat{A}}$-graded fusion 1-categories $\mathcal{B}$: let $G=\text{Inv}(\mathcal{B})$ be the {simple} objects (which in a {pointed} braided category are invertible), the  grading is the
${A \oplus \widehat{A}}$-grading, and the self-braiding $\beta$ trivialised against the syllepsis $s$ (see (5.13) in \cite{Bhardwaj:2024xcx})
\be
 q(B)=\beta(B,B)\cdot s\big(\text{gr}(B),\text{gr}(B)\big) \,.
\ee
Such $s$-invertible pointed braided $\abeliangrp$-graded fusion 1-categories are in 1-1 with $\abeliangrp$-graded pre-metric groups, 
\begin{definition}[$\abeliangrp$-graded Pre-metric Group]\label{def:graded_premetric}
An $\abeliangrp$-graded pre-metric group is given by a triple 
\be 
(f: G\to \abeliangrp,\,q)\,,
\ee
where $G$ is a finite abelian group, $f$ is a homomorphism (the grading), and
$q: G \to \mathbb{C}^\times$ is a {quadratic form}: $q(g)=q(-g)$ and
\be
  \text{Bil}(q)(g,h):=\frac{q(g+h)}{q(g)\,q(h)}
\ee
is bilinear. It is a metric group if $\text{Bil}(q)$ is non-degenerate. Define $G_0=\ker f$ to be the grade-$0$ subgroup.
\end{definition}

The monoidal structure is the $s$-twisted ${A \oplus \widehat{A}}$-graded Deligne product: 
\be
\left(f: G \rightarrow {A \oplus \widehat{A}}, q_G\right) \boxtimes_{A \oplus \widehat{A}}^s\left(k: H \rightarrow {A \oplus \widehat{A}}, q_H\right) \,.
\ee
It takes $G\oplus H$ with grading
$f + k$ and quadratic form
\footnote{Note that eq.~(5.17) of \cite{Bhardwaj:2024xcx} mis-states this formula; the correct twist carries the additional factor $s(k(h),f(g))^2$ as in \eqref{QZ2}.  This is invisible for $p=2$, where $s^2=1$, but essential for odd $p$: it is what makes the anti-diagonal of $X\boxtimes^s_{A \oplus \widehat{A}} X^{s\text{-}\rm op}$ isotropic (cf.\ the coefficient-$3$ discussion in Section~\ref{app:WittZp}), and all computations in this paper use it. We provide a derivation in Appendix \ref{app:framework}.}
\be\label{QZ2}
Q(g , h):=q_G(g) \cdot q_H(h) \cdot s(f(g), k(h)) \cdot s( k(h), f(g))^2 \,.
\ee
Whether such a product is trivial or not can be detected using 
the Lemmas in Section 5.2 of \cite{Bhardwaj:2024xcx}. 
\begin{lemma}[Triviality criterion]\label{lem:criterion}
An $\abeliangrp$-graded metric group $(f: G\to \abeliangrp, q)$ represents $1$ (i.e. is trivial) in $\Witt^{pt}(\abeliangrp, s)/\Witt^{pt}$ if and only if
there is a subgroup $L \subseteq G_0$ that is {isotropic (i.e. bosonic anyons)} ($q|_L \equiv 1$) and whose orthogonal
complement
\be
  L^{\perp}:=\{ g\in G:\ \text{Bil}(q)(g,\ell)=1\ \ \text{for all}\ \ell\in L\,\}
\ee
satisfies $L^{\perp}\subseteq G_0$.
Equivalently (since $\Bil(q)$ is non-degenerate) one may take $L=G_0^{\perp}$ provided
$G_0^{\perp}\subseteq G_0$ and $q|_{G_0^{\perp}} =1$.
\end{lemma}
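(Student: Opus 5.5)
We prove the two directions separately and then justify the simplified form with $L=G_0^\perp$. The guiding picture is the usual Witt-group one, adapted to the grading. A class is trivial modulo $\Witt^\pt$ exactly when, after stacking with some ungraded metric group (which lands entirely in grade $0$), we can condense a grade-$0$ isotropic subgroup and obtain a trivially graded result. Throughout we use that $\Bil(q)$ is non-degenerate on $G$, so $(L^\perp)^\perp=L$ and $|L|\,|L^\perp|=|G|$.

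\emph{Sufficiency.} Let $L\subseteq G_0$ be isotropic with $L^\perp\subseteq G_0$. Since $L$ is isotropic, $L\subseteq L^\perp$. Condensing $L$ is allowed, because only grade-$0$ anyons may be condensed. It produces the graded metric group $L^\perp/L$, with $q$ and $f$ descending to the quotient. The form $q$ descends because $L$ is isotropic and lies in the radical of $\Bil$ on $L^\perp$. The grading descends because $f|_L=0$. This condensation step is the standard fact that a graded metric group is Witt equivalent, in $\Witt^\pt(\abeliangrp,s)$, to its reduction by a grade-$0$ isotropic subgroup; we would cite it from \cite{Bhardwaj:2024xcx} or Appendix~\ref{app:Witty}. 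Since $L^\perp\subseteq G_0$, the quotient $L^\perp/L$ is concentrated in grade $0$. It is therefore an ordinary metric group and lies in the image of $\Witt^\pt$, so the class is trivial in the quotient.

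\emph{Necessity.} Suppose the class of $G$ lies in $\Witt^\pt$, i.e.\ $G\sim M$ with $M$ an ungraded metric group. Then $G\boxtimes^s \overline{M}$ is graded-Witt trivial. Because $\overline{M}$ sits in grade $0$, the $s$-twist in \eqref{QZ2} is trivial, so this product is the ordinary orthogonal sum. Graded-Witt triviality of a graded metric group means it has a Lagrangian $\Lambda$ inside its grade-$0$ part.

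Set $H=G\oplus\overline{M}$ and $L=\pi_G\big(\Lambda\cap(G\oplus 0)\big)$, where $\pi_G$ is the projection onto $G$. Equivalently, $L$ is the image under $\pi_G$ of $\Lambda\cap G$, viewed as a subgroup of $H$.
\begin{itemize}
\item $L\subseteq G_0$, since $\Lambda\subseteq H_0$.
\item $L$ is isotropic in $G$, since $\Lambda$ is isotropic in $H$.
\end{itemize}
For the orthogonal complement, the standard argument is that $\Lambda^\perp=\Lambda$ in $H$. Intersecting with $G$ and projecting gives $L^\perp_G=\pi_G(\Lambda)$. Since $\Lambda\subseteq H_0$ and the grading of $H$ is $f\circ\pi_G$, we get $\pi_G(\Lambda)\subseteq G_0$. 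The identity $(\Lambda\cap G)^{\perp_G}=\pi_G(\Lambda)$ is the usual Lagrangian-correspondence computation, and it uses non-degeneracy of $H$.

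\emph{The simplified form $L=G_0^\perp$.} Suppose $L$ is as in the criterion. From $L\subseteq G_0$ we get $G_0^\perp\subseteq L^\perp\subseteq G_0$. From $L^\perp\subseteq G_0$ we get $G_0^\perp\subseteq (L^\perp)^\perp=L$, so $q$ is trivial on $G_0^\perp$. Conversely, $L=G_0^\perp$ satisfies $L^\perp=G_0$ by non-degeneracy, so the stated conditions are exactly the hypotheses of the criterion.

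\emph{Main obstacle.} The hardest step is necessity. It requires the precise identification of the trivial class with graded metric groups admitting a grade-$0$ Lagrangian. It also requires checking that stacking with an ungraded $\overline{M}$ introduces no $s$-twist; this is clear from \eqref{QZ2}, since $s(0,\cdot)=1$. We would take the Lagrangian characterization from the lemmas in Section 5.2 of \cite{Bhardwaj:2024xcx}.
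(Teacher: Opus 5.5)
Your proof is correct. The sufficiency half is exactly the paper's argument: Lemma~\ref{lem:trivp}(ii) condenses $L$ via Lemma~\ref{lem:cond}, and $L^\perp\subseteq G_0$ makes the reduced class $L^\perp/L$ grade zero.

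The difference is in necessity. The paper does not prove that direction for this statement at all; it defers to Section~5.2 of \cite{Bhardwaj:2024xcx}. The closest thing it writes out is Lemma~\ref{lem:trivp}(iii). That lemma starts from the same data you do, namely a Lagrangian $\Lambda\subseteq(G\oplus H)_0$ in $X\otimes w^{\rm op}$ with $w$ grade zero. However, it handles only the case $G_0=0$, by the cardinality count $|G|\,|H|\le|\Lambda|^2\le|H|$.

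Your construction $L=\Lambda\cap G$, together with $(\Lambda\cap G)^{\perp}=\pi_G(\Lambda)$, gives the full converse and recovers (iii) when $G_0=0$. The identity takes two lines: $\Lambda^\perp=\Lambda$ gives $\pi_G(\Lambda)^{\perp}=\Lambda\cap G$, and then you take perps inside $G$.

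What the paper's counting buys in exchange is independence from non-degeneracy of $\Bil(q)$ on $G$. Your proof uses non-degeneracy twice: in that double-perp, and in $(L^\perp)^\perp=L$. This is harmless for the lemma as stated. Still, graded classes with degenerate $\Bil(q)$ do occur, e.g.\ $\gc$ at $p=2$. The paper invokes (iii) precisely for such a class, $\Gamma$ at $p=3$.
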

We will use this description to determine the pointed graded Witt groups for specific 1-form symmetry gradings in the following. 

\noindent{\bf Metric groups as abelian TOs.} 
We will also need to consider the Witt group without any grading. 
A metric group $(G,q)$, is the pointed braided fusion category $\mathcal{B}(G,q)$ whose simple 
objects are the elements of {an abelian group} $G$, with fusion the group law, the spin $\theta_g=q(g)$, and the braiding is 
\be
 M(g,h)=\Bil(q)(g,h) :=\frac{q(g+h)}{q(g)\,q(h)} \,.
\ee
It is modular exactly when $q$ is 
non-degenerate. In the graded setting the homomorphism $f: G\to A\oplus \widehat{A}$ encodes the charge of each
anyon, and $G_0=\ker f$ is the neutral, grade 0, subcategory.

A connected \'etale (i.e.\ commutative separable, i.e. condensable) algebra in $\mathcal{B}(G,q)$ is the same as an isotropic subgroup $L\subseteq G$:
\be\ba 
 & L\ \text{isotropic}\ (q|_L\equiv 1)
 \cr 
 \ \longleftrightarrow\ &
 \text{condensable} \  \cA_L=\bigoplus_{\ell} {\ell}\ \subset \mathcal{B}(G,q)\,,\ea
\ee
i.e. condensable algebras are in 1-1 with isotropic subgroups of $G$. The 
anyons in $L$ have trivial spin, and need to braid trivially with one another. This translates into the condition $q|_L\equiv 1$ on the subgroup
$L$
\be
\Bil(q)(\ell,\ell')=q(\ell+\ell')/q(\ell)q(\ell')=1 \,.
\ee 
So isotropic subgroups are exactly
the condensable algebras.

Condensing the algebra $\cA_L$ produces the category of local $\cA_L$-modules  with \cite{WittGroup}
\be\label{Condis}
  \mathcal{B}(G,q)/{\cA_L}\ \cong\ \mathcal{B}\!\bigl(L^\perp/L,\ \bar q\,\bigr),
  \qquad \bar q\bigl([g]\bigr)=q(g) \,,
\ee
which is well defined because $q|_L\equiv 1$ and $L\subseteq L^\perp$. As usual for anyon condensation: the anyons in $L$ become trivial,  anyons in $G\setminus L^\perp$ braid non-trivially with the condensate and are
confined, whereas anyons in $L^\perp$ are deconfined.  The reduced topological order is $L^\perp/L$.

\subsection{Braided Automorphisms for $\Z_2^{(1)}$}
\label{sec:BrAutZ2}

The goal of this subsection is to extract the group of braided
automorphisms of the $A= \Z_2^{(1)}$ SymTFT from categorical data.  

\subsubsection{Twisted $\Z_2$-graded Witt group}

Let us compute the graded Witt group for $A=\Z_2 \cong \widehat{A}$. Here 
\be
 \text{Aut}^{syp}(A\oplus \widehat{A},\mathsf{Alt}(s)) = SL(2,\Z_2) \,,
\ee
and 
\be\label{H5ClassZ2}
\ba
H^5 (\Z_2\oplus \Z_2 [3], \mathbb{C}^\times)  :=& H^5 (B^3 (\Z_2 \oplus \Z_2), \mathbb{C}^\times)\cr 
=& \Z_2 \oplus \Z_2 \,.
\ea
\ee
The value $\Z_2\oplus\Z_2$ is the mod-$2$ computation of \cite{Bhardwaj:2024xcx}, which we do not rederive here, the analogous statement for odd $p$, $H^5((\Z_p\oplus\Z_p)[3],\mathbb C^\times)=0$, is derived in {Appendix}~\ref{app:H5B3}.

Combining this with the (\ref{ShortishSequence}), we get: 
\be 
\ba
&\Witt(\Z_2\oplus \Z_2,s)/\Witt\cr 
&\cong \text{Ker} \big({SL}(2,\Z_2)\ltimes (\Z_2\oplus \Z_2)\rightarrow \mathbb{Z}_2\big)
\ea
\ee
which, {using the fact that the kernel is all of $SL(2,\Z_2)\ltimes(\Z_2\oplus\Z_2)$ \cite{Bhardwaj:2024xcx},} identifies 
\be\label{S4Witt}
\ba
\boxed{\Witt (\Z_2\oplus \Z_2,s)/\Witt\cong  
SL(2,\Z_2) \ltimes (\Z_2 \oplus \Z_2) \cong 
\Perm_4 }\,.
\ea
\ee
One way to see that the kernel is all of
$SL(2,\Z_2)\ltimes(\Z_2\oplus\Z_2)$, rather than the index-two subgroup $A_4$, is
to use the graded semion representatives below.  They give order-four elements
in the quotient $\Witt(\Z_2\oplus\Z_2,s)/\Witt$, while $A_4$ has no element of
order four.  Hence the map to the final $\Z_2$ in
\eqref{ShortishSequence} is trivial \cite{Bhardwaj:2024xcx}.

This is in fact the {\bf single qubit Clifford group} modulo phases: the $\Z_2\oplus\Z_2$ is the \emph{projective} Pauli group $\cP/\langle \omega_4\bm 1\rangle$ ($\omega_4 = i$) on which $X\equiv(1,0)$ and $Z\equiv(0,1)$ commute, and the Clifford gates are $SL(2,\Z_2)$.  The anticommutation $XZ=-ZX$ is invisible in this quotient: it is remembered by the central extension. The lifts $\ga^2,\gb^2,\gc^2$ of the projective Paulis to the full graded Witt group generate a quaternion group $Q_8$ with $[\ga^2,\gb^2]=\cC$ (Proposition~\ref{prop:Q8}), the central class $\cC$ playing the role of $-\bm 1$.

 \noindent{\bf Relation to QCAs.}
The  physically relevant group is not the quotient \eqref{S4Witt} but the full Picard group  (\ref{WittPic})
itself, i.e. the central extension of $\Perm_4$ by the (pointed) Witt group: this is the group of invertible interfaces, and hence, conjecturally, the natural home of the QCA classification.  Its pointed part is computed in Theorem~\ref{thm:modular} below, and its order-$192$ factor $2O\circ_{\Z_2}\Z_8$ is \emph{exactly} the one-qubit Clifford group $\Cl_2$ (Corollary~\ref{cor:Cl2}), with central kernel the full pointed Witt group $\Witt^\pt=\Z_8\oplus\Z_2\oplus \, \bigoplus_{p\ \mathrm{odd}}W_p$.

\subsubsection{Pre-Metric Group Realization of $\Perm_4$} 
\label{sec:MetricGroup}

We now present the Witt group quotient (\ref{S4Witt}) in terms of (pre-)metric groups.  This presentation also has a natural physical interpretation as the anyon theory living at the spatial interface of the symmetry defect. 
Here will abuse notation a bit and denote  from here on by ${A \oplus \widehat{A}}$ the group of surface charges $\Z_2\oplus\widehat{\Z_2}$ of the SymTFT --- the grading group of the Witt classes --- whereas up to now $A$ denoted the 1-form symmetry group $\Z_2$ itself.

We fix the abelian group (we will remove the Pontrjagin dual from the second $\Z_2$ factor in the following)
\be
{A \oplus \widehat{A}}=\Z_2\oplus\Z_2 
\ee
and let 
\be 
a=(1,0),\quad b=(0,1),\quad a+b=(1,1)\,,
\ee
and the non-trivial syllepsis $s: {A \oplus \widehat{A}} \to\{\pm1\}$ defined on generators by
\be\label{sZ2}
  s(a,b)=-1,\qquad s(a,a)=s(b,b)=s(b,a)=+1 \,.
\ee
This is the standard representative of the electric-magnetic
syllepsis discussed above: $a$ and $b$ are chosen as a symplectic basis of
electric and magnetic surface defects.
This takes the general form on any elements in ${A \oplus \widehat{A}}$
\be\label{eq:s-closed}
s(x,y)=(-1)^{\,x_1 y_2}\,,\qquad x=(x_1,x_2),\ y=(y_1,y_2)\,.
\ee
Its associated alternating $2$-form is the standard symplectic form on $(\Z_2)^2$
\be
  \mathsf{Alt}(s)(x,y)=\frac{s(x,y)}{s(y,x)}=(-1)^{\,x_1y_2+x_2y_1}\,,
\ee
which is non-degenerate. This is the reason the relevant automorphism group is
\be 
\Aut^{\mathrm{syp}}({A \oplus \widehat{A}},\mathsf{Alt}(s)) \cong SL(2,\Z_2)\cong \Perm_3 \,.
\ee

For $A \oplus \widehat{A}= \Z_2 \oplus \Z_2$, the generators of $\Witt^{pt}(\Z_2 \oplus \widehat{\Z_2},s)/\Witt^{pt}= \Perm_4$ are the graded semions: pointed categories on a
single $\Z_2$ whose generator has spin $\pm i$. 
There are three independent gradings in $\Z_2 \oplus \widehat{\Z_2}$:
\begin{align}\label{eq:gens}
 \ga &= \bigl(\Z_2 \xrightarrow{\ a\ } \Z_2 \oplus \widehat{\Z_2};\ q(1)=i\bigr),  & \gr&=a=(1,0), \nonumber \\
 \gb &= \bigl(\,\Z_2 \xrightarrow{\ b\ } \Z_2 \oplus \widehat{\Z_2};\ q(1)=i\,\bigr), &\gr&=b=(0,1), \nonumber \\
 \gc &= \bigl(\,\Z_2 \xrightarrow{\ a+b\ } \Z_2 \oplus \widehat{\Z_2};\ q(1)=1\,\bigr),  &\gr&=a+b=(1,1).
\end{align}
The third representative has $q(1)=1$ and hence does not appear to be a semion theory at first sight.  However, the non-trivial grade $a+b$ and the
syllepsis factor supply the corresponding semionic graded self-statistics.
In fact  $\ga$ and $\gb$ alone generate the whole group $\Perm_4$:
more precisely, Appendix~\ref{app:WittZ2} shows that
$\gc=\ga\gb\ga^{-1}$, and in the quotient one has
\be 
\gc^2=\ga^2\gb^2\,.
\ee 
We can now use the triviality criteria and the quotient by $\Witt^{\pt}$ to determine the full set of relations on this {group of pre-metric groups}:
\begin{itemize}
\item \textbf{{$A \oplus \widehat{A}$}-triviality}: a metric group is {$A \oplus \widehat{A}$}-trivial if $G_0$         contains a Lagrangian
      $L$ (i.e.\ $q|_L\equiv1$ and $L^\perp=L$). These are {$A \oplus \widehat{A}$}-Witt-trivial categories.
\item \textbf{Modulo $\Witt^\pt$}: we further quotient the non-degenerate metric group that is
      concentrated in grade $0$ (a normal ungraded Witt class).
\end{itemize}

\noindent{\bf Powers of $\ga, \gb, \gc$.}
It was shown in \cite{Bhardwaj:2024xcx} that $\ga^2$ is non-trivial but $\ga^4$ is trivial. Similarly $\gb^2$ and $\gc^2$ are non-trivial with $\gb^4= \gc^4=1$, in the quotient
$\Witt^\pt({A \oplus \widehat{A}},s)/\Witt^\pt$. In the full graded
Witt group these fourth powers are instead the central three-fermion class
$\cC=\gu^4$, as used below and proved in Appendix~\ref{app:WittZ2}.
However these are not independent: 
$\gc^2$ with $s(a+b,a+b)=-1$ gives
\be\ba
  \gc^2=&\Bigl((\Z_2)^2\xrightarrow{(a+b,\,a+b)}{A \oplus \widehat{A}};\cr 
  &\quad         q(1,0)=q(0,1)=1,\ q(1,1)=-1\Bigr)\,.\ea
\ee
This is the same as 
\be\label{V4-mult}
 \ga^2\,\gb^2=\gc^2,\qquad \ga^2\,\gb^2=\gb^2\,\ga^2\,.
\ee
Thus the three squares form a Klein-four subgroup $V_4< \Perm_4$ 
\be\label{V4}
  V_4:=\langle \ga^2,\gb^2\rangle=\{1,\ \ga^2,\ \gb^2,\ \gc^2\} \,.
\ee
Next note that 
\be
\Perm_4 = V_4 \rtimes \Perm_3 \,.
\ee
So we only need to identify the $\Perm_3$ and the action of it on $V_4$. 
In \cite{Bhardwaj:2024xcx} the order 3 elements of the Witt group quotient were identified as 
\be
\ga^{\pm 1} \gb^{\pm 1} \,,\quad \gb^{\pm 1} \ga^{\pm 1} \,,
\ee
which all form the 8 order $3$-cycles. 

To show that these act on $V_4$ \footnote{See e.g.  \href{https://groupprops.subwiki.org/wiki/S3_in_S4}{group prop}.} the conjugation by $\ga$ permutes $\gb^2$ and $\gc^2$ and leaves $\ga^2$ fixed. The conjugation action by $\gb$ fixes $\gb^2$ and permutes $\ga^2$  and $\gc^2$. So 
\be
\Perm_4/V_4 = \Perm_3  \,,
\ee
with  $\ga^2 ,\gb^2 \in V_4$ the normal subgroup and $(\ga\gb)^3=1$.

\subsubsection{Full $\Witt^\pt({A \oplus \widehat{A}}, s)$}

We now compute the full graded pointed Witt group $\Witt^\pt ({A \oplus \widehat{A}},s)$.  
In the graded Witt group $\Witt^\pt({A \oplus \widehat{A}},s)$ we condense only anyons that are uncharged under ${A \oplus \widehat{A}}$, $L\subseteq G_0$, so
that the $A$-grading is preserved. The analog of the identification (\ref{Condis}) via condensation in $\Witt^{\pt}({A \oplus \widehat{A}},s)$ is 
\be\label{condenseA}
  \bigl(f: G\to {A \oplus \widehat{A}},\ q\bigr) \equiv \bigl(\bar f : L^\perp/L\to {A \oplus \widehat{A}},\ \bar q\,\bigr)\,.
\ee
More precisely two elements in $\Witt^{\pt}({A \oplus \widehat{A}},s)$ are identified if \cite{Bhardwaj:2024xcx}
\be
\ba
  &X=Y\ \text{in}\ \Witt^{ \text{pt}}({A \oplus \widehat{A}},s) \\
%  \quad\Longleftrightarrow\quad
%  \: \iff \:
  &\iff \:
  X\boxtimes^s_{A \oplus \widehat{A}} Y^{s\text{-}\mathrm{op}}\ \text{is {${A \oplus \widehat{A}}$-trivial}}\,,
\ea
\ee
where an ${A \oplus \widehat{A}}$-graded metric group is ${A \oplus \widehat{A}}$-trivial if and only if it contains a
 Lagrangian subgroup $L\subseteq G_0$, i.e. it is isotropic  $q|_L\equiv 1$ and maximal $L^\perp=L$. 
This is precisely the folded Lagrangian algebra extended to the {$A \oplus \widehat{A}$}-graded setup.
Thus  $\Witt^{ \text{pt}}({A \oplus \widehat{A}},s)$ is the group of abelian
$A \oplus \widehat{A}$-graded anyon theories up to ${A \oplus \widehat{A}}$-grading-preserving gapped interfaces.  
 
An example of a grade-0 metric group that is no longer trivial is 
\be
  \Bigl((\Z_2)^2\xrightarrow{\ 0\ } A \oplus \widehat{A};\ q(1,0)=q(0,1)=q(1,1)=-1\Bigr) \,.
\ee

\noindent{\bf The pointed Witt group.}
By \cite{DGNO} (appendix A.7) the pointed Witt group (ungraded) is explicitly known 
\be\label{Wittpt}
\Witt^\pt = \bigoplus_{p \text{ prime}} W_p \,,
\ee
where $W_p$ is the summand represented by metric groups whose underlying
abelian group is a $p$-group (Notice that $W_p$ itself need not be a
$p$-group as an abstract Witt summand).
Explicitly,
\be\label{Wittp}
   W_2\cong\Z_8\oplus\Z_2,\qquad
  W_p \cong
  \begin{cases}
    \Z_2\oplus\Z_2 \quad & p =1 \mod 4,\\
    \Z_4\quad & p =3 \mod 4 \,.
  \end{cases}
\ee
The two classes that matter for us are $p=2$ generated by the
elementary abelian 2-groups:
\be \ba
  \gu &=\bigl(\Z_2\xrightarrow{\,0\,}{A \oplus \widehat{A}};\ q(1)=i\bigr)
 \cr 
  \cC&=\bigl((\Z_2)^2\xrightarrow{\,0\,}{A \oplus \widehat{A}};\ q\equiv-1\ \text{on}\ G\setminus 0\bigr) \,,
  \ea
\ee
where $\gu$ is the semion and $\cC$ is the three-fermion category $SO(8)_1$. 

{In this section we will label the premetric groups by the same labels, but now treat them as elements in $\Witt^\pt ({A \oplus \widehat{A}},s)$, as opposed to the quotient $\Witt^\pt ({A \oplus \widehat{A}},s)/\Witt^\pt$.  Some relations that these lifted elements satisfy are (as derived in Appendix \ref{app:WittZ2}):}
\be
\gu^4 = \cC \,,\qquad \cC^2 = \gu^8= 1  \,,
\ee
and furthermore 
\be
\ga^4=\gb^4=\gc^4=\cC=\gu^4  \,.
\ee
We note that $\Witt^\pt$ is a central subgroup of $\Witt^\pt (A \oplus \widehat{A}, s)$.
That it is a subgroup is clear; to show that it is central, take a grading $f=0$, then $G_0=G$, and the syllepsis is trivial on these, meaning that the $\boxtimes^s_{A \oplus \widehat{A}}$ is commutative, given by just the product of the two quadratic forms. 
Therefore we have a short exact sequence that is a central extension by (\ref{Wittpt})
\be
1 \to \Witt^\pt \to \Witt^\pt ({A \oplus \widehat{A}}, s) \to \Perm_4 \to 1 \,.
\ee

The upshot of the Witt group calculation is the following theorem relating it to the modular generators, and is proven in Appendix \ref{app:WittZ2}:
\begin{theorem}[Witt group and modular group]
\label{thm:modular}
Define
\be\label{ST}
  T=\gu\,\ga,\qquad S=\gu^{2}\,\ga\gb\ga,\qquad Y=\gu \,.
\ee
Then 
\be 
T^4=S^2=1\,,\quad (ST)^3=Y\,,\quad Y^8=1
\label{eq:STcat}
\ee
with $Y$ central and  $\ord(Y)=8$, and $\langle S,T\rangle$ is all of the order-$192$ factor
$2O\circ_{\Z_2}\Z_8$ of 
\be\label{fullsplit}
\Witt^\pt(\Z_2\oplus \widehat{\Z_2},s) = \big(2O\circ_{\Z_2}\Z_8\big)\,\times\,\Z_2\,\times\!\!\bigoplus_{p\ \mathrm{odd}} W_p \,.
\ee
\end{theorem}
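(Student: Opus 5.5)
The plan is to work entirely with explicit graded premetric groups and the twisted product \eqref{QZ2}, using the triviality criterion (Lemma~\ref{lem:criterion}) together with the identification rule $X=Y$ iff $X\boxtimes^s Y^{s\text{-op}}$ is $A\oplus\widehat{A}$-trivial. The first step is to collect the basic relations among the lifted generators in the full group $\Witt^\pt(\Z_2\oplus\widehat{\Z_2},s)$. These are: $\gu$ is central of order $8$ with $\gu^4=\cC$; $\ga^4=\gb^4=\gc^4=\cC$; and $\gc=\ga\gb\ga^{-1}$. In addition we need the lifted $S_3$ relations, which refine $(\ga\gb)^3=1$ in the quotient to $(\ga\gb)^3=\gu^{k}$ for some explicit $k$. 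We also need $\ga\gb\ga=\gu^{m}\gb\ga\gb$ for an explicit $m$ (a braid relation up to a central semion power). Each relation is checked by forming the product premetric group, which has small order (at most $2^6$), computing $G_0$, and exhibiting a Lagrangian $L\subseteq G_0$ after tensoring with the appropriate power of $\gu^{s\text{-op}}$ to absorb the leftover grade-0 Witt class. Since the grade-0 part left over is a metric $2$-group, its Witt class is read off from the Gauss sum $\sum_g q(g)/\sqrt{|G|}=e^{2\pi i c/8}$, which pins the exponent of $\gu$ mod $8$. The relation is then confirmed by the Lagrangian check.

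With these relations the identities in \eqref{eq:STcat} become word manipulations. From $\ga^4=\gu^4$ we get $T^4=\gu^4\ga^4=\gu^8=1$. Using the braid-type relation, $S^2=\gu^4(\ga\gb\ga)^2$ reduces to a power of $\ga^4$ or $\gb^4$ times a power of $\gu$, which should equal $\gu^{-4}$. For $(ST)^3=(\gu^3\ga\gb\ga^2)^3$, we rewrite using the braid relation and $\ga^4=\gu^4$ until only a central power of $\gu$ remains, and check that it equals $\gu$. The precise prefactors $\gu$ and $\gu^2$ in the definition \eqref{ST} are exactly what makes these exponents come out right. I therefore expect this step to be bookkeeping, provided the exponents $k,m$ from the first step are correct.

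Next comes the group structure. The group $\langle S,T\rangle$ contains $Y=(ST)^3$, so it contains $\gu$, $\ga=\gu^{-1}T$, and $\gb$, which is recovered from $S$ and $\ga$. It therefore equals $\langle\gu,\ga,\gb\rangle$. Its image in the quotient is all of $\Perm_4$ by \eqref{S4Witt}, and its intersection with $\Witt^\pt$ is $\langle\gu\rangle\cong\Z_8$; we must rule out the $\Z_2$ generator of $W_2$ (the $\Z_2$ summand of $\Z_8\oplus\Z_2$) and all odd $W_p$. This follows because every product of $\gu,\ga,\gb$ is a $2$-group premetric whose grade-0 classes have Gauss sums in the span of $\gu$ (the generators here are built from semions only). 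Hence $|\langle S,T\rangle|=24\cdot 8=192$. To identify it with $2O\circ_{\Z_2}\Z_8$, map $S\mapsto H$ and $T\mapsto$ the phase gate, suitably rephased, in $U(2)$. These satisfy the same presentation with $Y\mapsto e^{2\pi i/8}$, and the single-qubit Clifford group has order $192$ and is known to be $2O\circ_{\Z_2}\Z_8$. Because $\langle S,T\rangle$ is a central extension of $\Perm_4$ by $\Z_8$ generated by elements obeying these relations, comparing orders gives the isomorphism. We still have to exclude the alternative that the presentation defines the split $\Perm_4\times\Z_8$; this is excluded because $\ga^2,\gb^2$ anticommute up to $\cC$ (Proposition~\ref{prop:Q8}), so the extension is non-split.

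Finally, for \eqref{fullsplit}, the full group is $\langle S,T\rangle\cdot\Witt^\pt$. Since $\Witt^\pt=\langle\gu\rangle\times\Z_2\times\bigoplus_{p\text{ odd}}W_p$ is central and meets $\langle S,T\rangle$ exactly in $\langle\gu\rangle$, the complementary factors split off as a direct product. The main obstacle is the first step. We must get the exact $\gu$-exponents in the lifted relations, especially $(\ga\gb)^3$ and the braid relation, where the $s(k,f)^2$ twist and the order of factors in $\boxtimes^s$ matter. For this I would first compute $\ga\gb$ as a $\Z_2^2$ premetric group, cube it, and evaluate the Gauss sum of the grade-0 part directly.
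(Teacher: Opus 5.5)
Your architecture is the paper's: get the central corrections by explicit condensation, do word algebra, recover $\ga,\gb,\gu$ from $S,T$, count orders, and split off the central complement. The decisive input, however, is never supplied.

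Run your own bookkeeping with $(\ga\gb)^3=\gu^k$ and $\ga\gb\ga=\gu^m\gb\ga\gb$, and set $\Delta=\ga\gb\ga$. Then $\Delta^2=\gu^{m+k}$, $\ga\Delta=\gu^m\Delta\gb$ and $\gb\Delta=\gu^{-m}\Delta\ga$. Hence $S^2=\gu^{4+k+m}$ and $(ST)^3=\gu^9(\Delta\ga)^3=\gu^{1+2k+3m}$. So the relations $S^2=1$, $(ST)^3=Y$ are \emph{equivalent} to $k\equiv4$, $m\equiv0\pmod 8$. Saying that the prefactors $\gu,\gu^2$ ``make the exponents come out right'' therefore presupposes the theorem.

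The missing computations are $(\ga\gb)^3=\cC$ and $(\ga\gb\ga)^2=\cC$; together these give $m=0$. The paper proves them by exhibiting explicit isotropic subgroups in the rank-$6$ products, e.g.\ $L=\langle(0,1,1,0,1,1),(1,0,1,1,0,1)\rangle$ for $(\ga\gb)^3$. With these in hand your braid route is slightly more economical than the paper's: $(\Delta\ga)^3=\Delta^4=\cC^2=1$ directly. The paper instead conjugates by $\ga^{-1}$ and needs the extra input $(\gb\ga^{-1})^3=\cC$.

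Note also that $\cC\neq1$ and $\ord\gu=8$ are non-triviality statements, so no Lagrangian check can provide them. They come from the fact that $\cC$ has no nonzero isotropic subgroup, or from $W_2\cong\Z_8\oplus\Z_2$.

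Second, your argument that $\langle S,T\rangle\cap\Witt^\pt=\langle\gu\rangle$ fails as stated. The Gauss sum is the central charge mod $8$, a homomorphism $W_2\cong\Z_8\oplus\Z_2\to\Z_8$ with nontrivial kernel. It is therefore blind precisely to the $\Z_2$ summand you want to exclude. For example, $(\Z_4,\,q(x)=e^{2\pi i x^2/8})\oplus(\Z_2,\,q(1)=-i)$ has Gauss sum $1$, yet has order $8$ and hence no Lagrangian. What is true is that everything here lives on elementary abelian $2$-groups, where the Gauss sum does determine the class, but that requires its own argument.

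The cleanest fix is a squeeze you already half-use. Once the relations hold, $\langle S,T\rangle$ is a quotient of $\langle s,t,y\mid s^2=t^4=1,\,(st)^3=y,\,y^8=1,\,y\ \text{central}\rangle$. That group has order at most $24\cdot 8$, since killing $y$ leaves the standard presentation of $\Perm_4$. Since $\langle S,T\rangle$ surjects onto $\Perm_4$ and contains $\langle\gu\rangle\cong\Z_8$ in the kernel, $|\langle S,T\rangle|=192$ and its intersection with $\Witt^\pt$ is exactly $\langle\gu\rangle$.

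This is the same argument the paper runs on $\langle\ga,\gb\rangle$, using the presentation $\langle r,s\mid r^3=s^4=(rs)^2\rangle$ of $2O$ and the lower bound from $\cC\neq1$. That version yields $2O\circ_{\Z_2}\Z_8$ intrinsically rather than via the known structure of $\Cl_2$. With the squeeze in place, your non-splitting remark via $Q_8$ becomes superfluous, and your final splitting-off of $\Z_2\times\bigoplus_{p\ \mathrm{odd}}W_p$ goes through.
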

Here $2O$ is the binary octahedral group.  The notation
$\circ_{\Z_2}$ denotes the central product: the central involution
$z\in 2O$ is identified with the order-two class $\cC=\gu^4$ inside the
$\Z_8$ summand of $\Witt^\pt$.

\subsubsection{Relation to the Qubit Clifford Group}

\noindent{\bf Single Qubit Clifford.}
Denote by $\Cl_2$ the Clifford group on a single qubit. It can be presented in terms of Hadamard and phase gates as {(here $[,]$ denotes the group commutator)}:
\be
\ba
\Cl_2 = \big\langle H, P \, \big| \, & H^2=1 \,,\ P^4 = 1 \,, \  (HP)^3= Y\,,\\
& Y^8= 1 \,,\ [Y,H]=[Y,P]=1\big\rangle \,.
\ea
\label{eq:Cl2pres}
\ee
In particular $Y = \omega_8 \bm{1}$, where $\omega_k := e^{2 \pi i/k}$.

Comparing \eqref{eq:Cl2pres} with Theorem~\ref{thm:modular} identifies the graded Witt group with the Clifford group:
\begin{lemma}\label{cor:Cl2}
The map $H\mapsto S$, $P\mapsto T$, $\omega_8\bm 1\mapsto Y$ extends to an isomorphism
\be
\Cl_2 \xrightarrow{\ \sim\ } 2O\circ_{\Z_2}\Z_8 \subset \Witt^\pt(\Z_2\oplus\Z_2,s)\,,
\label{eq:Cl2iso}
\ee
and hence
\be
\Witt^\pt(\Z_2\oplus\Z_2,s) \cong \Cl_2 \times \Z_2\times\!\!\bigoplus_{p\ \mathrm{odd}} W_p \,.
\label{eq:WittisCl2}
\ee
\end{lemma}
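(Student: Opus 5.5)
The plan is to build the isomorphism from the presentation \eqref{eq:Cl2pres} and Theorem~\ref{thm:modular}, then pin it down by counting orders.

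\textbf{Step 1 (a surjective homomorphism).} By Theorem~\ref{thm:modular}, the elements $S,T,Y\in\Witt^\pt(\Z_2\oplus\Z_2,s)$ satisfy
\be
S^2=1,\quad T^4=1,\quad (ST)^3=Y,\quad Y^8=1,
\ee
and $Y$ is central. These are exactly the defining relations of $\Cl_2$ in \eqref{eq:Cl2pres}. By von Dyck's theorem, the assignment $H\mapsto S$, $P\mapsto T$, $\omega_8\bm 1\mapsto Y$ therefore extends to a homomorphism $\phi:\Cl_2\to\Witt^\pt(\Z_2\oplus\Z_2,s)$. Theorem~\ref{thm:modular} also states that $\langle S,T\rangle$ is the whole order-$192$ factor $2O\circ_{\Z_2}\Z_8$, so $\phi$ is onto that factor.

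\textbf{Step 2 (injectivity by counting).} It remains to show $|\Cl_2|\le192$. The cleanest way is to use the standard matrix realization
\be
H=\tfrac{1}{\sqrt2}\begin{pmatrix}1&1\\1&-1\end{pmatrix},\qquad P=\mathrm{diag}(1,i),
\ee
and to check that this realization is faithful to the presentation. Modulo scalars it acts on the Pauli group, giving the exact sequence
\be
1\to\langle\omega_8\rangle\to\Cl_2\to SL(2,\Z_2)\ltimes(\Z_2\oplus\Z_2)\cong\Perm_4\to1 .
\ee
This yields $|\Cl_2|=8\cdot24=192$.

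The main obstacle is justifying that the abstract presented group really has order at most $192$. I would handle it directly from the presentation, without appealing to matrices:
\begin{itemize}
\item Setting $Y=1$ in \eqref{eq:Cl2pres} leaves the relations $\langle x,y\mid x^2=y^4=(xy)^3=1\rangle$, which present $\Perm_4$, a group of order $24$ (the paper notes this isomorphism).
\item The subgroup $\langle Y\rangle$ is central and has order at most $8$.
\end{itemize}
Together these give $|\Cl_2|\le 8\cdot24=192$. Comparing with Step 1, $\phi$ is a surjection from a group of order at most $192$ onto a group of order $192$, hence an isomorphism, which proves \eqref{eq:Cl2iso}.

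\textbf{Step 3 (the decomposition).} Equation \eqref{eq:WittisCl2} now follows by substituting \eqref{eq:Cl2iso} into the splitting \eqref{fullsplit} of Theorem~\ref{thm:modular}.
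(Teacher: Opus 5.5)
Your proposal is correct and is essentially the paper's own argument: bound the presented group by $|\langle Y\rangle|\cdot|\Perm_4|\le 8\cdot 24=192$ using the centrality of $Y$, then use Theorem~\ref{thm:modular} to get a surjection onto the order-$192$ factor $2O\circ_{\Z_2}\Z_8$, which is therefore an isomorphism, and finally substitute into \eqref{fullsplit}. As in the paper, the matrix realization is only needed to identify the presented group with the concrete gate group $\langle H,P\rangle$, whose scalar subgroup is exactly $\langle\omega_8\rangle$.
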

% \begin{proof}
% Let $G$ be the group presented by \eqref{eq:Cl2pres}. Modding out $\langle Y\rangle$ gives $\langle H,P\,|\,H^2=P^4=(HP)^3=1\rangle\cong \Perm_4$ of order $24$, and the central subgroup $\langle Y\rangle$ has order at most $8$; hence $|G|\le 192$.  Both $\Cl_2$ (with the concrete gates, where the scalar subgroup is exactly $\langle\omega_8\rangle$ since all determinants lie in $\langle i\rangle$) and $2O\circ_{\Z_2}\Z_8$ (by Theorem~\ref{thm:modular}) are generated by elements satisfying these relations and have order exactly $192$; so both are isomorphic to $G$, and \eqref{eq:Cl2iso} follows. Combining with \eqref{fullsplit} gives \eqref{eq:WittisCl2}.
% \end{proof}
The full pointed graded Witt group --- the group of invertible interfaces of the $\Z_2^{(1)}$ SymTFT --- is the one-qubit Clifford group, up to the decoupled grade-zero factors $\Z_2\oplus\bigoplus_{p\,\mathrm{odd}}W_p$.

Note that this contains the Pauli group
\be
P^2 =Z \,, \
H P^2 H^{-1} = X \,.
\ee
To map to the $SL(2,\Z_2)$ description, note 
\be
|\Cl_2| = 192\,,\qquad |SL(2,\Z_2) \ltimes (\Z_2\oplus \Z_2)| = 24  \,. 
\ee
Quotienting the Clifford group by the $\Z_8$ phase $Y$ results in  
\be\boxed{
\faktor{\Cl_2}{\langle Y\rangle}  \cong SL(2,\Z_2) 
\ltimes \Z_2^2  \cong \faktor{SL(2, \Z_4)}{\pm1}  \cong \Perm_4 \,.
}\ee 
Note that part of this $\Z_8$ quotient can be identified on the Pauli group 
\be
\cP = \langle  X, Z, \omega_4 \bm{1} \rangle \,,
\ee
by taking the quotient of the phases 
\be
\cP / \langle \omega_4 \bm{1} \rangle = \Z_2 \oplus \Z_2 \,.
\ee
So the automorphisms of the SymTFT do not see the full Pauli, but a ``projective version" of it. 

\smallskip

\noindent
{\bf Relation to $SL(2, \Z_4)$.}
First note 
\be
\Cl_2/\Z_8 \cong \Perm_4 \cong SL(2, \Z_4)/{\pm} 1  \cong SL(2, \Z_2) \ltimes (\Z_2 \oplus \Z_2)   \,.
\ee
Note that the gates are mapped as follows 
\be
\begin{array}{c|c|c}
\text{Gate in }\Cl_2 & SL(2, \Z_2) \ltimes \Z_2^2  & SL(2,\Z_4)/ {\pm} 1\cr \hline 
X= \begin{pmatrix}
0 & 1\\
1 & 0
\end{pmatrix} &  (\bm{1},  (1, 0))  &  \begin{pmatrix}
1 & 2\\
0 & 1
\end{pmatrix}\cr 
Z= \begin{pmatrix}
1 & 0 \\
0 & -1
\end{pmatrix} &  (\bm{1},  (0, 1))  &  
\begin{pmatrix}
1 & 0\\
2 & 1
\end{pmatrix}\cr 
H ={1\over \sqrt{2}} \begin{pmatrix}
1& 1 \\
1 & -1
\end{pmatrix}& 
\left(\begin{pmatrix}
0 & 1 \\
1 & 0 
\end{pmatrix} ,\,  (0,0)\right)
&
\begin{pmatrix}
0 & -1 \\
1 & 0
\end{pmatrix}\cr 
P= \begin{pmatrix}
1 & 0\\
0 & i
\end{pmatrix} & \left( \begin{pmatrix}
1 & 0 \\
1 & 1
\end{pmatrix} ,\,  (1,1)\right)
&
\begin{pmatrix}
 1 & 0 \\
 1 & 1
\end{pmatrix}
% \left(\begin{array}{ll}
% 1 & 0 \\
% 1 & 1
% \end{array}\right)\cr 
\end{array}
\ee
The operators in $SL(2,\Z_4)/\pm1$ satisfy the relations (mod 4 and mod $\pm 1$ i.e. modding out the center)
\be\label{relationsinSL4}
X Z =  Z X \,,\qquad P^2 = Z \,,  \ H P^2 H^{-1} = X \,. 
\ee
Similarly in $SL(2,\Z_2) \ltimes (\Z_2 \oplus \Z_2)$ we denote the elements by pairs $(M, v)$ then 
\be
(M, v) \cdot (M' , v') = (M M', v+ M v')
\ee
and then they satisfy the same relations as in (\ref{relationsinSL4}).
Crucially the phase gate has a non-trivial component in the $\Z_2^2$.

\subsection{Braided Automorphisms for $\Z_p^{(1)}$}
\label{sec:BrAutZp}

Let us now consider the $\Z_p$ case for $p$ an odd prime. The previous analysis from \cite{Bhardwaj:2024xcx} can be generalized to 
  $\Z_p^{(1)}$-form symmetry.
 The main difference arises from the fact that for $p>2$ prime the   duality group is $\Z_4$ due to the map 
\be
B \to -C \,, \quad C \to B\,,
\ee
in the SymTFT, which is order 4. This map follows from the anti-symmetry of the pairing in the 4+1d SymTFT. 

Here we will focus on determining the braided automorphisms of the Drinfeld center for the $A=\Z_p$ 1-form symmetry. 
Repeating the analysis for $A=\Z_p$ we find 
the sylleptic automorphisms are
\be
\Aut^{syp}(\Z_p \oplus \widehat{\Z_p}, \mathsf{Alt}(s))=SL(2,\Z_p)  \,,
\ee
and the map to
$\Z_2$ is zero (the modular group $\SL(2,\Z_p)$ has no index-$2$ subgroup), 
we obtain
\be
  \boxed{ \Witt^{\pt}(\Z_p\oplus\Z_p,\,s)/\Witt^{\pt} = SL(2,\Z_p)} \,.
\label{eq:ZpWittQuotient}
\ee
Note that in particular (see Appendix \ref{app:H5B3})\footnote{SSN thanks Matt Yu for discussions on this.}
\be
H^5 ((\Z_p\oplus \Z_p )[3], \mathbb{C}^\times ) = 0\,.
\label{eq:H5ZpZp}
\ee
This has some relation to the Clifford group on $
\Z_p$-qudits. However note that crucially the contribution from 
$H^5 (\Z_p \oplus \Z_p [3], \mathbb{C}^\times)$ vanishes. See Appendix \ref{app:H5B3}.  

\subsubsection{Pointed graded Witt group}

Keeping the central, grade-zero classes, i.e. $\Witt$, gives the full pointed graded Witt group.
The structure is simpler for odd primes than for $p=2$: the central extension
by the ordinary pointed Witt group splits:

\begin{theorem}[Odd-prime pointed graded Witt group]
\label{thm:WittZpMain}
Let $p$ be an odd prime and let
${A \oplus \widehat{A}}=\Z_p\oplus\Z_p$ with $s(x,y)=\zeta_p^{x_1y_2}$.  Then
\be
  \Witt^\pt({A \oplus \widehat{A}},s)/\Witt^\pt \cong SL(2,\Z_p),
\ee
and the full pointed graded Witt group is, as an abstract group,
\be\label{eq:fullWittZp}
\begin{aligned}
  \Witt^\pt({A \oplus \widehat{A}},s)
  &\cong SL(2,\Z_p)\times \Witt^\pt\\
  &=SL(2,\Z_p)\times
    \bigoplus_{\ell\ {\rm prime}} W_\ell\,.
\end{aligned}
\ee
Equivalently, the only central factors in the odd-prime calculation are the
decoupled grade-zero pointed Witt classes.
\end{theorem}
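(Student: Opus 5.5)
The proof has three steps: identify the quotient, show it is perfect, and kill the extension class. For the quotient, I will take the exact sequence \eqref{ShortishSequence} with $A=\Z_p$. By \eqref{eq:H5ZpZp} the cohomological factor $H^5((\Z_p\oplus\Z_p)[3];\mathbb{C}^\times)$ vanishes. The target therefore reduces to $\Aut^{syp}(\Z_p^2,\mathsf{Alt}(s))=SL(2,\Z_p)$, since $\mathsf{Alt}(s)(x,y)=\zeta_p^{x_1y_2-x_2y_1}$ is the standard symplectic form. The final map to $\Z_2$ is trivial. For $p\ge5$ this holds because $SL(2,\Z_p)$ is perfect. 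For $p=3$ there are no abelianization quotients of even order, since $SL(2,\Z_3)^{ab}=\Z_3$. In both cases $SL(2,\Z_p)$ has no index-two subgroup. This gives the first statement, $\Witt^\pt(A\oplus\widehat A,s)/\Witt^\pt\cong SL(2,\Z_p)$, and realizes $\Witt^\pt(A\oplus\widehat A,s)$ as an extension by the central subgroup $\Witt^\pt$. Centrality follows as in the $p=2$ case: grade-zero objects have trivial syllepsis in \eqref{QZ2}.

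To split the extension, I plan to use the $2$-versus-odd decomposition of $\Witt^\pt=\bigoplus_\ell W_\ell$. Every $W_\ell$ is a $2$-group for odd $\ell$, including $\ell=p$, where it is $\Z_4$ or $\Z_2^2$. $W_2=\Z_8\oplus\Z_2$ is also a $2$-group. Hence the extension class lies in $H^2(SL(2,\Z_p);\Witt^\pt)$, which receives contributions only from the $2$-primary part of the Schur multiplier and of $H^1$. I would argue as follows. The abelian group $\Witt^\pt$ is divisible by no odd primes, and the extension class pairs with $H_2(SL(2,\Z_p);\Z)$ and $\mathrm{Ext}(H_1,\Witt^\pt)$. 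The Schur multiplier of $SL(2,\Z_p)$ is trivial for odd $p$, since $SL(2,\Z_p)$ is the universal cover of $PSL(2,\Z_p)$ for $p\ge5$ and has trivial multiplier for $p=3$. $H_1$ is $0$ or $\Z_3$, so $\mathrm{Ext}(H_1,\Witt^\pt)$ is zero on the $2$-primary coefficients. By universal coefficients, $H^2(SL(2,\Z_p);\Witt^\pt)=0$. A central extension with vanishing class splits, and because the kernel is central the split extension is a direct product.

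The central step, and the point where the argument could fail, is using honest coefficients. $\Witt^\pt$ is an infinite direct sum, so I would rather exhibit the splitting concretely, which also matches Appendix~\ref{app:WittZp}. First I would construct graded premetric lifts $\tilde S,\tilde T$ of the generators $\begin{pmatrix}0&-1\\1&0\end{pmatrix}$ and $\begin{pmatrix}1&0\\1&1\end{pmatrix}$. I would then correct them by grade-zero classes so that they satisfy a presentation of $SL(2,\Z_p)$, for instance $\tilde T^p=1$, $\tilde S^4=1$, $(\tilde S\tilde T)^3=\tilde S^2$, together with the extra relation for $p\ge5$. Each relation holds up to some element of $W_p$, and these elements have to be computed with the triviality criterion, Lemma~\ref{lem:criterion}, and the corrected product \eqref{QZ2}. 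That computation is the main obstacle, because the $s(k,f)^2$ twist controls the Gauss sums that appear. Once the discrepancies are known, I would absorb them by rescaling the lifts with central elements. For $\tilde T$ this is possible because $\tilde T^p$ lands in a $2$-group and $p$ is odd, so a central $p$-th root exists. For the order-three relation the analogous argument uses $\gcd(3,|W_p|)=1$, and the relation $\tilde S^4$ can then be fixed. The cohomological vanishing above guarantees that these corrections are consistent, giving the asserted direct product.
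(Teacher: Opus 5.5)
Your proof is correct and follows essentially the paper's argument: the quotient comes from the exact sequence \eqref{ShortishSequence} with $H^5((\Z_p\oplus\Z_p)[3];\mathbb{C}^\times)=0$ and the absence of index-two subgroups in $SL(2,\Z_p)$, and the central extension splits because $H^2(SL(2,\Z_p);\Witt^\pt)\cong\mathrm{Hom}(H_2(SL(2,\Z_p);\Z),\Witt^\pt)\oplus\mathrm{Ext}^1(H_1(SL(2,\Z_p);\Z),\Witt^\pt)=0$, using the trivial Schur multiplier and $H_1\in\{0,\Z_3\}$ against the $2$-group $\Witt^\pt$. Your worry about infinite coefficients is unfounded, since the universal coefficient theorem holds for any abelian coefficient group, so the concrete lift-correction in your last paragraph is a consequence of the splitting rather than an independent check; note also that \emph{divisible by no odd primes} should read \emph{uniquely divisible by odd primes}, which is exactly why $\mathrm{Ext}^1(\Z_3,\Witt^\pt)=\Witt^\pt/3\Witt^\pt=0$.
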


The splitting in (\ref{eq:fullWittZp}) is not canonical:  identifications of specific
modular generators can still obey relations that are shifted by central elements.
% OLD: The point is that for odd $p$ this
% central factor is a coboundary: after a central redefinition one obtains an $SL(2,\Z_p)$ subgroup.
The splitting itself follows from
$H^2(SL(2,\Z_p);\Witt^\pt)=0$, there is no extension class  available for
odd $p$ ({Appendix}~\ref{app:WittZp}).  Moreover the splitting is realized as follows:
$\hat\ga=\kappa^{-1}\ga$, $\hat\gb=\kappa^{-1}\gb$  of exact order
$p$ generate a subgroup isomorphic to $SL(2,\Z_p)$ {(where $\kappa$ is defined in (\ref{eq:kappa_def}))}, so all modular
relations hold for them without central extensions. 
Setting
\be\label{eq:STodd}
  T:=\hat\ga\,,\qquad S:=\hat\ga^{-1}\,\hat\gb^{-4}\,\hat\ga^{-1}\,,
\ee
one has the following counterpart of Theorem~\ref{thm:WittZpMain} for the
modular generators.
\begin{theorem}[Modular relations]\label{thm:SToddMain}
The normalized lifts \eqref{eq:STodd} satisfy, with no central corrections,
\be\label{eq:SToddrels}
  T^p=1\,,\qquad S^4=1\,,\qquad (ST)^3=S^2\,,
\ee
where $S^2=C$ is central and acts on the surface charges by charge conjugation
$x\mapsto-x$, i.e.\ $\pi(C)=-\mathbf 1\in SL(2,\Z_p)$.  In particular
$\langle\hat\ga,\hat\gb\rangle\cong SL(2,\Z_p)$, every modular relation
holding on the nose.
\end{theorem}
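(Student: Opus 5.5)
The plan is to reduce everything to a computation in $SL(2,\Z_p)$ and then lift it using the splitting of Theorem~\ref{thm:WittZpMain}. By that theorem (whose proof we take as given), $\hat\ga=\kappa^{-1}\ga$ and $\hat\gb=\kappa^{-1}\gb$ generate a subgroup $H=\langle\hat\ga,\hat\gb\rangle$. This subgroup meets the central $\Witt^\pt$ trivially and maps isomorphically onto the quotient $SL(2,\Z_p)$ of \eqref{eq:ZpWittQuotient} under the projection $\pi$. Since $S$ and $T$ in \eqref{eq:STodd} are words in $\hat\ga^{\pm1}$ and $\hat\gb^{\pm1}$, any relation among them holds in $\Witt^\pt(\Z_p\oplus\Z_p,s)$ if and only if it holds for $\pi(S)$ and $\pi(T)$ in $SL(2,\Z_p)$. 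In particular, this gives the claim $\langle\hat\ga,\hat\gb\rangle\cong SL(2,\Z_p)$ with every modular relation holding on the nose.

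The first step is to identify $\pi(\ga)$ and $\pi(\gb)$ as symplectic matrices, by reading off how the graded premetric groups act on surface charges. I expect them to be the two elementary transvections: $\ga$ is the graded $\Z_p$ theory with grade $a$ and spin a primitive $p$-th root of unity, and it should act as $\begin{pmatrix}1&0\\ c&1\end{pmatrix}$. Likewise $\gb$ should act as $\begin{pmatrix}1&c'\\0&1\end{pmatrix}$ for some nonzero $c,c'\in\Z_p$, with $\kappa$ central in $\Witt^\pt$ so that it does not change $\pi$. These images are fixed by the analogous computation for $p=2$ in Appendix~\ref{app:WittZ2}, together with the explicit action on gradings of $\boxtimes^s$ with the corrected twist \eqref{QZ2}. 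I expect pinning down the constants $c,c'$ and the signs, given the ordered syllepsis $s(x,y)=\zeta_p^{x_1y_2}$, to be the main obstacle. The choice of exponent $-4$ in $S$ suggests $cc'$ is arranged so that $\hat\gb^{-4}$ is precisely the transvection needed.

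The second step is matrix arithmetic. Write $u=\pi(T)=\begin{pmatrix}1&0\\ c&1\end{pmatrix}$ and $\pi(S)=u^{-1}\pi(\hat\gb)^{-4}u^{-1}$. Then check that $\pi(S)=\pm\begin{pmatrix}0&\mp1\\ \pm1&0\end{pmatrix}$-type, i.e.\ the standard order-four element $w$ with $w^2=-\mathbf 1$. This is the usual identity $\begin{pmatrix}1&0\\ c&1\end{pmatrix}\begin{pmatrix}1&-c^{-1}\\0&1\end{pmatrix}\begin{pmatrix}1&0\\ c&1\end{pmatrix}=\begin{pmatrix}0&-c^{-1}\\ c&0\end{pmatrix}$, up to adjusting the sign or inverse. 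It forces $-4c'=-c^{-1}$ in $\Z_p$, which is the consistency condition from the first step. The remaining relations are then immediate in $SL(2,\Z_p)$. We have $u^p=1$, and $w^4=1$ with $w^2=-\mathbf 1$ central. The relation $(wu)^3=w^2=-\mathbf 1$ follows because $wu$ has trace $\pm1$ and determinant $1$, so by Cayley--Hamilton it satisfies $(wu)^2\mp wu+1=0$; choosing orientations appropriately gives $(wu)^3=-\mathbf 1$.

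Finally, lift the relations to $H$ via the isomorphism $\pi|_H$. This gives $T^p=1$, $S^4=1$ and $(ST)^3=S^2$ with no central corrections. Set $C:=S^2$. It is central in $H$ because $-\mathbf 1$ is central in $SL(2,\Z_p)$, and it is central in the whole group because $\Witt^\pt$ is central and the group is $H\times\Witt^\pt$. Moreover $\pi(C)=-\mathbf 1$ acts on charges by $x\mapsto -x$, which is charge conjugation.
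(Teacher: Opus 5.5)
Your skeleton (make $\pi$ injective on $\langle\hat\ga,\hat\gb\rangle$, then check $2\times2$ matrix identities) is the paper's. However, the step that carries all the content is missing: you never determine $\pi(\hat\ga)$ and $\pi(\hat\gb)$. You guess their shapes and then treat the relation between $c$ and $c'$ as a ``consistency condition'' to be arranged, but that relation is exactly what has to be proven. With your own matrices, $\pi(S)=u^{-1}\pi(\hat\gb)^{-4}u^{-1}$ has trace $2(1+4cc')$. Since $c'\neq0$, $\pi(S)\neq\pm\mathbf 1$, so $\pi(S)^4=\mathbf 1$ forces $\pi(S)^2=-\mathbf 1$, i.e.\ trace $0$. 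Hence $S^4=1$ and $(ST)^3=S^2$ hold \emph{only if} $4cc'=-1$, and for any other value of $cc'$ the relations fail. Inferring this condition from the exponent $-4$ in \eqref{eq:STodd} assumes the conclusion. The $p=2$ appendix does not supply it either: it checks relations by explicit condensation and never computes images under $\pi$, let alone the odd-$p$ parameters. Once $4cc'=-1$ holds, $\mathrm{tr}\,\pi(ST)=+1$ is forced, so nothing is ``chosen'' in your Cayley--Hamilton step; that step is only as good as the condition on $cc'$.

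The missing input is Proposition~\ref{prop:intrinsicmap}; below, $\equiv$ means equality in the quotient.
\begin{itemize}
\item Equivariance $\pi(M_*X)=M\,\pi(X)\,M^{-1}$ under relabelling gradings $f\mapsto M\circ f$, together with $\mathfrak s_{v,k}$ having order $p$ in the quotient, shows $\pi(\mathfrak s_{v,k})$ is a transvection centred on the grade line $\langle v\rangle$. So which generator is upper or lower triangular is derived, not guessed, though only $cc'$ enters the relations.
\item The same-axis law $\mathfrak s_{v,k}\otimes\mathfrak s_{v,l}\equiv\mathfrak s_{v,kl/(k+l)}$ forces the parameter to be $\gamma/k$, and the swap $a\mapsto b$ forces the same $\gamma$ on both axes.
\item The remaining scalar is fixed by a genuine graded-Witt computation. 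Splitting off non-degenerate grade-zero pieces reduces $S^2$ to $\Gamma=[\,a,b;\ 0,0;\ \tfrac32\,]$. This equals its own $s$-opposite, so $\Gamma^2\equiv1$, and it is nontrivial because its grading is injective (Lemma~\ref{lem:trivp}(iii), valid even at $p=3$).
\item Hence $\pi(S)$ has order exactly $4$, so $\pi(S)^2=-\mathbf 1$, $\mathrm{tr}\,\pi(S)=0$ and $4\gamma^2=1$; the sign of $\gamma$ is a convention.
\end{itemize}

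A smaller gap concerns injectivity of $\pi$ on $\langle\hat\ga,\hat\gb\rangle$. It is not contained in the abstract isomorphism of Theorem~\ref{thm:WittZpMain}. Citing the remark after that theorem for $\langle\hat\ga,\hat\gb\rangle\cong SL(2,\Z_p)$ is circular, since the paper proves it (Corollary~\ref{cor:exactlifts}) from exactly the ingredients of the present theorem. Injectivity comes from Lemma~\ref{lem:nocentralp}: $\hat\ga^p=\hat\gb^p=1$ (Proposition~\ref{prop:apowerp}), and $\Witt^\pt$ has exponent $8$, coprime to $p$, so the $\Witt^\pt$-component of both generators vanishes under any splitting. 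Surjectivity onto $SL(2,\Z_p)$ again needs the two images to be transvections with distinct centres, which is the missing step above.
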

\noindent This is proven in Appendix~\ref{app:WittZp}: Proposition~\ref{prop:intrinsicmap}
computes the images $\pi(\hat\ga),\pi(\hat\gb)$ --- the transvections
\eqref{eq:gagbImages} --- whereupon \eqref{eq:SToddrels} reduce to the
$2\times2$ matrix identities of Theorem~\ref{thm:oddprels}.

\subsubsection{Pre-Metric Group Realization}
\label{sec:ZpPremetric}

As in the $\Z_2$ case we can specify the relevant elements of the graded
pointed Witt group in terms of pre-metric groups.
Let $\zeta_p=e^{2\pi i/p}$, and set 
\be
{A \oplus \widehat{A}}=\Z_p\oplus\Z_p,\qquad a=(1,0),\quad b=(0,1)\,,
\ee
and define the odd-prime syllepsis generalizing \eqref{sZ2} by
\be
s(x,y)=\zeta_p^{x_1 y_2}\,,\qquad x=(x_1,x_2)\,,\ y=(y_1,y_2) \,,
\ee
with alternating form
\be
 \mathsf{Alt}(s)(x,y)=\frac{s(x,y)}{s(y,x)}=\zeta_p^{x_1y_2-x_2y_1}
\ee
The two generators will be again defined in the same way as in the case of  $\Z_2\oplus \widehat{\Z_2}$
\be
\ba
  \ga&=\big(\Z_p\xrightarrow{a}{A \oplus \widehat{A}};\,q(1)=\zeta_p\big) \cr 
  \gb&=\big(\Z_p\xrightarrow{b}{A \oplus \widehat{A}};\,q(1)=\zeta_p\big) \,.
\ea
\ee
 In the quotient
$\Witt^\pt({A \oplus \widehat{A}},s)/\Witt^\pt$ these elements have order $p$.  In the full group
their $p$th powers need not vanish, though they land in Witt.  Let
\be \label{eq:kappa_def}
\kappa=\big(\Z_p\xrightarrow{0}{A \oplus \widehat{A}};\ q(1)=\zeta_p\big)\in W_p\subset\Witt^\pt,
\ee
corresponding to a chiral anyon theory with $\Z_p$ fusion rules.  Then
\be\label{eq:gapowercentral}
  \ga^p=\gb^p=\kappa^p
  =
  \begin{cases}
    \kappa,& p=1\!\!\pmod 4,\\
    \kappa^{-1},& p=3\!\!\pmod 4 \,,
  \end{cases}
\ee
where $\kappa$ has order $2$ for $p=1\bmod4$ and order $4$ for
$p=3\bmod4$.  Thus the normalized lifts
\be
  \hat\ga=\kappa^{-1}\ga,\qquad \hat\gb=\kappa^{-1}\gb
\ee
have exact order $p$ and generate a modular group $SL(2,\Z_p)$ factor in
(\ref{eq:fullWittZp}).  The $p$th power is computed as follows:
\be
\ba
\ga^p=
% \Big((\Z_p)^p\xrightarrow{(x_i)\mapsto(\sum_i x_i)a}{A \oplus \widehat{A}};\
% Q(x_1,\cdots,x_p)=\sum_i x_i^2\Big).
\Big((\Z_p)^p\xrightarrow{(x_i)\mapsto(\sum_i x_i)a}{A \oplus \widehat{A}};\
Q({\vec{x}})=\sum_i x_i^2\Big).
\ea
\ee
Here
$G_0=\{(x_1,\cdots,x_p):\sum_i x_i=0\}$ and
$G_0^\perp=\langle(1, \cdots ,1)\rangle\subset G_0$.  This line is isotropic,
since $Q(1, \cdots,1)=p=0$ in $\Z_p$.  Condensing it leaves the same
ordinary Witt class as the $p$-fold grade-zero sum
$\kappa^p=((\Z_p)^p\xrightarrow{0}{A \oplus \widehat{A}};\sum_i x_i^2)$.  Hence
$\ga^p=\kappa^p$, and the same argument gives $\gb^p=\kappa^p$.

\subsubsection{Relation to the Qudit Clifford Group}
\label{sec:qudit}

Let $p$ be an odd prime and write $\omega_p=e^{2\pi i/p}$.  The single-qudit
Pauli group is a group of complex $p \times p$ matrices
\be
\cP_p=\langle X, Z \; \mid \; X^p = Z^p = \bm{1},\; ZX = \omega_p XZ \rangle ,
\ee
where $X$ and $Z$ can be given explicitly as
\be
  X|j\rangle=|j+1\rangle,\qquad
  Z|j\rangle=\omega_p^j|j\rangle .
\ee
The Clifford group is the normalizer of $\cP_p$.  A convenient set of
generators is given by the Fourier and phase gates \cite{Farinholt:2014wul}
\be
\ba
F|j\rangle&={1\over\sqrt p}\sum_{k=0}^{p-1}\omega_p^{jk}|k\rangle,\\
P|j\rangle&=\omega_p^{j(j-1)/2}|j\rangle .
\ea
\ee
They obey
\begin{align}\label{eq:quditCliffordActions}
 F^4 &= \bm{1}, & P^p &= \bm{1}, & (FP)^3 = e^{\frac{\pi i (p-1)}{4p}} \bm{1}, \nonumber \\
 F X F^{-1}&=Z, & F Z F^{-1} &= X^{-1}, \nonumber \\
P X P^{-1}&=X Z, & P Z P^{-1} &= Z .
\end{align}
The scalar in $(FP)^3$ is a property of these particular unitary lifts.  It has
order $2p$ for $p=1\bmod4$ and order $4p$ for $p=3\bmod4$, but it disappears
after projectivizing the Clifford group.

It is useful to separate the projective Clifford group from its Pauli kernel.
Let $\overline{\Cl}_p=\Cl_p/U(1)$, or equivalently quotient by all scalar
phases.  Conjugation on Pauli labels gives an exact sequence
\be\label{eq:projectiveClifford}
1\longrightarrow \cP_p/\langle\omega_p\bm1\rangle
\cong \Z_p^2
\longrightarrow \overline{\Cl}_p
\longrightarrow SL(2,\Z_p)
\longrightarrow 1,
\ee
and hence $|\overline{\Cl}_p|=p^3(p^2-1)$.  The two generators above map to
\be\label{eq:FPmatrices}
  F\longmapsto
  \begin{pmatrix}0&-1\\ 1&0\end{pmatrix},
  \qquad
  P\longmapsto
  \begin{pmatrix}1&0\\ 1&1\end{pmatrix},
\ee
where a Pauli monomial $X^aZ^b$ is labelled by the vector $(a,b)$.
For example, $P^m$ maps to 
$\left(\begin{smallmatrix}1&0\\ m&1\end{smallmatrix}\right)$, while
$F^{-1}P^{-m}F$ is 
$\left(\begin{smallmatrix}1&m\\0&1\end{smallmatrix}\right)$.

The Paulis themselves are already generated by $F$ and $P$:
\be\label{eq:xz_from_fp}
  Z=[F^2,P]=F^2PF^{-2}P^{-1},\qquad
  X=F^{-1}ZF .
\ee
Indeed $F^2|j\rangle=|{-j}\rangle$, so the first commutator acts as
$\omega_p^j$ on $|j\rangle$, and the second identity is just the Fourier
conjugate of $Z$.

\medskip\noindent{\bf Relation to braided automorphisms.}
The braided automorphism computation does not see the Pauli kernel
$\Z_p^2$ in \eqref{eq:projectiveClifford}.  It sees only the quotient
\be
\ba
  \Witt^\pt({A \oplus \widehat{A}},s)/\Witt^\pt
  &\cong
  \Aut^{\rm syp}({A \oplus \widehat{A}},\mathsf{Alt}(s)) \\
  &\cong SL(2,\Z_p),
\ea
\ee
where {$A \oplus \widehat{A} = \Z_p \oplus \Z_p$} is the group of electric/magnetic surface charges.
Thus the correct comparison with the qudit Clifford group is via the projection
\be
  \overline{\Cl}_p\longrightarrow SL(2,\Z_p),
\ee
rather than via the full semidirect product
$\Z_p^2\rtimes SL(2,\Z_p)$.

Under this projection, the duality generator is represented by the Quantum
Fourier Transform $F$,  and
$P^m$ and its Fourier conjugates.
The images of the graded chiral $\Z_p$ anyon theories under
$\Witt^\pt({A \oplus \widehat{A}},s)\to SL(2,\Z_p)$ are
\be\label{eq:gagbImages}
  \hat\ga\mapsto\big(\begin{smallmatrix}1&2^{-1}\\ 0&1\end{smallmatrix}\big)\,,
  \qquad
  \hat\gb\mapsto\big(\begin{smallmatrix}1&0\\ -2^{-1}&1\end{smallmatrix}\big)\,,
\ee
transvections along the two axes, derived from the graded Witt data in
Appendix~\ref{app:WittZp} (Proposition~\ref{prop:intrinsicmap}).
Comparing with \eqref{eq:FPmatrices} (recall $F^{-1}P^{-m}F\mapsto
\big(\begin{smallmatrix}1&m\\ 0&1\end{smallmatrix}\big)$ and
$P^{m}\mapsto\big(\begin{smallmatrix}1&0\\ m&1\end{smallmatrix}\big)$), we can therefore
identify the Witt group elements as
\be\label{eq:gagbClifford}
  \hat\ga\ \leftrightarrow\ F^{-1}P^{\frac{p-1}{2}}F,\qquad
  \hat\gb\ \leftrightarrow\ P^{\frac{p-1}{2}} \,,
\ee
inside the projective Clifford group --- the \emph{same} power $(p-1)/2\equiv-2^{-1}$
of the phase gate for both, one of them Fourier-conjugated.  The half-integer power is
natural: the phase gate carries the half-integer quadratic phase
$\omega_p^{j(j-1)/2}$, while the graded chiral $\Z_p$ anyon theory carries the integer one
$q(j)=\zeta_p^{j^2}$.  
More generally, the graded chiral $\Z_p$ anyon theory of spin
$q(j)=\zeta_p^{t j^2}$ corresponds to $P^{-(2t)^{-1}}$ (Fourier-conjugated for the
$a$-graded one), so other normalizations of the graded chiral $\Z_p$ anyon theories replace $P^{(p-1)/2}$ by
the appropriate power.

In summary, the odd-prime braided automorphisms are embedded in
the qudit Clifford group as the Clifford-mod-Pauli part.
The Pauli translations $X^aZ^b$ form the kernel of the Clifford action on Pauli
labels, and there is no corresponding extra $\Z_p^2$ factor in
$\Witt^\pt(A,s)/\Witt^\pt$.  The central Witt class $Y\in W_p$ discussed above
is also not a Pauli: it is a grade-zero pointed Witt factor, and for odd $p$ it
can be removed from the $SL(2,\Z_p)$ relations by the normalization
$\ga,\gb\mapsto\hat\ga,\hat\gb$.

\section{QCAs on $\mathbb{Z}_2$ 1-form Symmetric Algebra}
\label{sec: QCAs on Z2 1-form symmetric algebra}

In this section, we discuss QCAs on the algebra of $\mathbb{Z}_2$ 1-form symmetric local operators on a 3d cubic lattice. We first introduce the setup in Section~\ref{sec:setup_qca_on_Z2_subalg}, together with a brief reminder of the conventions we will use in the rest of this section. We also comment informally on what the 1-form symmetry means in the context of our lattice system.

In Section~\ref{sec:qca_on_quotient_algebra} we formally define the mathematical framework for talking about $\Z_2^{(1)}$-symmetric QCA.
We then review several examples of such QCAs, namely, the Kramers-Wannier-Wegner operator \cite{Gorantla:2024ocs, Koide:2021zxj} in Section~\ref{sec: Z2 KWW}, the Tsui-Wen SPT entangler {of order 2} \cite{Tsui:2019ykk} in Section~\ref{sec: Z2 TW of order 2},\footnote{{The root SPT entangler of order 4 will be discussed in Section~\ref{sec:qca_for_root_entangler}.}} and the {framing} QCA \cite{Fidkowski:2023dpe} in Section~\ref{sec: 3-fermion QCA 1-qubit version}.
Of those, we only construct KWW in full painstaking detail, since the compatibility with the framework we introduce is a little non-trivial. For the others, the omitted details are essentially trivial and only require one quick computation.

We also construct another example, which we call the 3-fermion Kramers-Wannier-Wegner QCA, in Section~\ref{sec: 3-fermion KWW}.
%Finally, we study the fusion rules of these QCAs in Section~\ref{sec: Symmetry generated by D and U} and discuss their relation to the fusion rules in the continuum in Section~\ref{sec: comparison Z2}.
We then study a particular fusion rule of QCAs in Section~\ref{sec: Symmetry generated by D and U} and discuss the full group generated by the above QCAs in Section~\ref{sec: The group of QCAs generated by D and U}.
Finally, we will compare the fusion rules on the lattice and those in the continuum in Section~\ref{sec: comparison Z2}.
As we will see, the fusion rules on the lattice differ from those in the continuum by a non-trivial QCA and lattice translations.\footnote{To be precise, the fusion rules on the lattice are still consistent with the continuum relation $(ST^2)^4 = Y^4$, as we emphasized in Section~\ref{sec:Intro}. The difference between the lattice fusion rules and the continuum ones comes from the following fact: on the lattice, $Y^4$ gives rise to a non-trivial QCA in the fusion rules, whereas in the continuum, it gives rise to a TQFT coefficient in the fusion rules. We refer the reader to Section~\ref{sec: comparison Z2} for more details on this point.} 
This is the first example of non-trivial QCA-refined fusion rules in 3+1 dimensions.

\subsection{Setup}\label{sec:setup_qca_on_Z2_subalg}

We consider a 3d cubic lattice with qubits on the faces.
The state space on the lattice is given by
\begin{equation}
\mathcal{H}_F \coloneq \bigotimes_{f \in F} \mathbb{C}^2.
\label{eq: qubit Hilb}
\end{equation}
Note that this is only well defined if the lattice is finite. It is also possible to talk about infinite lattices, which is usually done in the language of local operator algebras. For simplicity of the discussion, we will freely switch between an infinite lattice, which is to be interpreted as the thermodynamic limit of the system, and a finite periodic one, where many of the definitions are simpler. This is mostly relevant in the discussion of local symmetric operators: we will ignore line operators stretching all the way across a non-contractible cycle on the periodic lattice. These kinds of operators become non-local in the thermodynamic limit anyway, and they never appear on the infinite lattice in the first place.

The Pauli $X$ and $Z$ operators on face $f$ are denoted by $X_f$ and $Z_f$, respectively.
The actions of these operators in the computational basis are
\begin{equation}
X_f \ket{a}_f = \ket{a + 1 \bmod 2}_f, \quad
Z_f \ket{a}_f = (-1)^{a} \ket{a}_f,
\end{equation}
where $a \in \{0, 1\}$.
The $\mathbb{Z}_2$ 1-form symmetry operator on the lattice is given by
\begin{equation}
\eta(\Sigma) \coloneq \prod_{f \in \Sigma} X_f,
\label{eq: Z2 1-form sym}
\end{equation}
where $\Sigma$ is a closed surface on the direct lattice.
We note that $\eta(\Sigma)$ is not topological on the tensor product Hilbert space~\eqref{eq: qubit Hilb}, nor as an operator on the full algebra of local operators (to be introduced in more detail later).
Local operators symmetric under this non-topological $\mathbb{Z}_2$ 1-form symmetry are generated by
\begin{equation}
X_f, \qquad Z_{\delta \bm{e}} \coloneq \prod_{f^{\prime} \in F} Z_{f^{\prime}}^{\delta \bm{e}(f^{\prime})},
\label{eq: Z2 symmetric op}
\end{equation}
for all faces $f$ and all edges $e$.
Here, $\delta \bm{e}$ is the coboundary of the 1-cochain $\bm{e}$.
We note that $Z_{\delta \bm{e}}$ is the product of four Pauli $Z$ operators around $e$.
See Figure~\ref{fig: Z2 local ops} for an illustration of these generators.
\begin{figure}[t]
\centering
\includegraphics{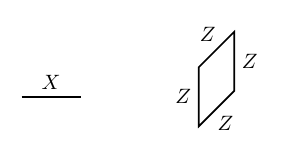}
\caption{The generators of the algebra of $\mathbb{Z}_2$ 1-form symmetric local operators. Here, the solid lines represent edges of the dual lattice.}
\label{fig: Z2 local ops}
\end{figure}

Let us briefly recall the conventions and notation used in this paper (for more details, see Section~\ref{sec: Notations and conventions}). We let $e, f$ denote an edge (a 1-cell), and a face (a 2-cell), respectively. By abuse of notation, they can trivially be identified with the 1 and 2-chains consisting of just that one edge, or that one face. We will denote by $\bm{e}, \bm{f}$ the corresponding cochains, which have value 1 on $e, f$, and value 0 elsewhere. These identifications also extend to chains: if $c_k$ is a k-chain, then it can be expressed as a finite sum of k-cells. Turn each of those k-cells into the corresponding k-cochain, and the resulting sum is $\bm{c_k}$, a k-cochain. Hence, in the same expression we may write $\partial a$ as well as $\delta \bm{a}$, for a k-chain $a$, with the understanding that certain operations only apply to cochains, but also that chains can be converted into (finitely supported) cochains at any time.

The algebra generated by $X_f$ and $Z_{\delta \bm{e}}$ will be denoted by
\begin{equation}
\mathcal{A}_{\mathbb{Z}_2^{(1)}} \coloneq \langle X_f, Z_{\delta \bm{e}} \mid f \in F, e \in E \rangle.
\end{equation}
On the other hand, the algebra of all local operators on $\mathcal{H}_F$ will be denoted by $\mathcal{A}$.

For later use, we also introduce the following operator for each cube $c$:
\begin{equation}
%X_{\partial c} \coloneq \prod_{f^{\prime}} X_{f^{\prime}}^{\bm{f}^{\prime}(\partial c)}.
X_{\partial c} \coloneq \prod_{f \in \partial c} X_f.
\end{equation}
We note that $X_{\partial c}$ is the symmetry operator on the boundary of a cube $c$: $X_{\partial c} = \eta(\partial c)$.
See Figure~\ref{fig: X partial c} for an illustration of this operator.
\begin{figure}
\centering
$X_{\partial c} =$
\adjincludegraphics[valign=c, trim={10, 0, 10, 0}, scale=1]{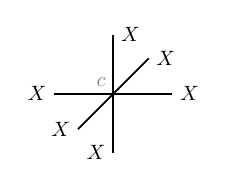}
\caption{The symmetry operator $X_{\partial c}$ around cube $c$. Here, the middle vertex represents the dual of $c$.}
\label{fig: X partial c}
\end{figure}
In the subsequent discussions, we will impose the condition
\begin{equation}
X_{\partial c} = 1
\label{eq: Gauss}
\end{equation}
for all cubes $c$, unless otherwise stated.
Using the language of states (on a finite lattice), this means restricting to the subspace of states for which that condition is true, which is no longer a tensor product of on-site Hilbert spaces. In the language of operator algebras, that is meant as an operator equation. We will see what that means, precisely, later on.
We note that the symmetry operator~\eqref{eq: Z2 1-form sym} becomes topological on the subspace satisfying the above condition. Indeed, in the thermodynamic limit any $\eta$ acts like the identity.

In what follows, we will consider QCAs on the algebra of $\mathbb{Z}_2$ 1-form symmetric operators.
More precisely, we will consider QCAs on the quotient algebra
\be 
\faktor{\mathcal{A}_{\mathbb{Z}_2^{(1)}}}{\mathfrak{I}} \,,
\ee 
where $\mathfrak{I}$ is the ideal generated by $1 - X_{\partial c}$ for all cubes $c$. 
% See Section~\ref{sec:quotient_algebra} for more details.
The quotient algebra $\mathcal{A}_{\mathbb{Z}_2^{(1)}} / \mathfrak{I}$ can be regarded as the algebra of all local operators on the constrained Hilbert space with the condition~\eqref{eq: Gauss} imposed.

\subsection{QCA on the quotient $*$-algebra}\label{sec:qca_on_quotient_algebra}

Let us be more precise about what it means to impose $X_{\partial c} = 1$. The full algebra of local operators is
\be
\mathcal{A} \coloneqq \bigotimes_{f \in F} \mathcal{M}_2(\mathbb{C}) = \bigotimes_{f \in F} \End(\mathbb{C}^2) \,.
\ee
Unlike for Hilbert spaces, this definition can be properly understood also directly in the thermodynamic limit. If the lattice is infinite, $\mc{A}$ is defined as the infinite union of increasingly large, but always finite, algebras of operators. 
This is a $*$-algebra in a natural way, though in contrast to part of the literature, we do not take its metric completion to make it into a $C^*$-algebra, and hence we genuinely talk about local operators, not quasi-local ones.

It can be seen that this algebra is generated by individual Pauli X and Pauli Z operators on each face. For brevity we may write
\be
\mathcal{A} = \langle X_f, Z_f \rangle \,.
\ee

In the presence of a $\Z_2^{(1)}$ symmetry, generated by products of $X$ operators along closed surfaces {as in \eqref{eq: Z2 1-form sym}}, we may talk about the subalgebra of operators which are symmetric (invariant) under this symmetry. This subalgebra is precisely the set of operators in $\mathcal{A}$ which commute with $X_{\partial c}$, for all cubes $c$. It turns out that this is equal to
\be
\mathcal{A}_{\Z_2^{(1)}} = \left\langle X_f, Z_{\delta \bm{e}} \coloneqq \prod_{f' \; \mid \; e \in \partial f'} Z_{f'} \right\rangle \,.
\ee

Lastly, the correct way to consistently impose $X_{\partial c} = 1$ is to first consider the (two-sided) ideal $\mathfrak{I}$ generated by $1 - X_{\partial c}$, for all cubes $c$, and then work in $\mc{A}_{\Z_2^{(1)}} / \mathfrak{I}$.

Actually, saying two-sided in this case is unnecessary, because every element in $\mathcal{A}_{\Z_2^{(1)}}$ commutes with $1 - X_{\partial c}$. Therefore, $\mathfrak{I}$ consists of the elements of the form
\be
I = \sum_{i = 1}^n (1 - X_{\partial c_i}) a_i \,,
\ee
for some cubes $c_i$, and some $a_i \in \mathcal{A}_{\Z_2^{(1)}}$.

The goal of the rest of this subsection is to formalize the concept of QCA on the $*$-algebra $\mathcal{A}_{\Z_2^{(1)}} / \mathfrak{I}$. This does not follow from the standard literature, because $\mathcal{A}_{\Z_2^{(1)}} / \mathfrak{I}$ is neither a tensor product algebra, nor a subalgebra of a tensor product algebra.

A more detailed and in-depth treatment of this quotient algebra will be given in an upcoming paper \cite{WIS}.

\subsubsection{Structure of $\mc{A}_{\Z_2^{(1)}} / \mathfrak{I}$}

In this section, we are going to present a convenient way of writing elements in $\mathcal{A}_{\Z_2^{(1)}} / \mathfrak{I}$, which will also make clear its structure. The problem is that in the quotient many operators get identified, but it is non-trivial, at least initially, to determine which ones end up in the same equivalence class, and which ones do not. For that reason, we will give a natural basis for $\mathcal{A}_{\Z_2^{(1)}} / \mathfrak{I}$, which will also make its relationship with $\mathcal{A}_{\Z_2^{(1)}}$ clear.

It will be useful to introduce a new $*$-algebra, which we denote by $\mathbb{C}^\text{lk}[C_2 \oplus C_1]$. Here, the notation is meant to evoke a group algebra, twisted by the linking number (``lk''). This will be explained in more detail down below. This algebra, in a way, sits even ``before'' $\mc{A}_{\Z_2^{(1)}}$. Schematically
\be
\mathbb{C}^\text{lk}[C_2 \oplus C_1] \twoheadrightarrow \mc{A}_{\Z_2^{(1)}} \twoheadrightarrow \faktor{\mathcal{A}_{\Z_2^{(1)}}}{\mathfrak{I}} \,.
\ee
This is \emph{not} an exact sequence, it is just a way of writing two maps. The one on the right is the quotient map, which is obviously surjective, and the one on the left will turn out to be a quotient map as well.

As a vector space, this new algebra is the complex vector space $\mathbb{C}[C_2 \oplus C_1]$ of finite $\mathbb{C}$-linear combinations of elements of the abelian group $C_2 \oplus C_1$. Here, $C_k$ is the group of $k$-chains of our cubical lattice, with coefficients in $\Z_2$:
\be
C_k \coloneqq C_k(\text{lattice}, \Z_2) \,.
\ee
The multiplication in this algebra is defined to be:
\be\label{eq:twisted_product}
(c_2, c_1) \cdot (c_2', c_1') \coloneqq (-1)^{\delta \bm{c_1} (c_2')} (c_2 + c_2', c_1 + c_1') \,,
\ee
and extended by bilinearity.

Let us remark that $c_1$ is a 1-chain, which gets promoted to a 1-cochain $\bm{c_1}$. Then we can take its coboundary, a 2-cochain, and evaluate it on the 2-chain $c_2'$. We add the boldface for clarity, but it is nevertheless clear that $\delta$ is something which only makes sense for cochains, and the evaluation $\cdot (\cdot)$ also only applies to a cochain-chain pair.

We also define an involution:
\be\label{eq:twisted_involution}
(c_2, c_1)^{*} \coloneqq (-1)^{\delta \bm{c_1}(c_2)} (c_2, c_1) \,,
\ee
and extended by conjugate-linearity to the whole algebra. One can check that those operations are compatible in the correct way to make $\mathbb{C}[C_2 \oplus C_1]$ into a $*$-algebra.

As mentioned earler, we denote this algebra by $\mathbb{C}^{\text{lk}}[C_2 \oplus C_1]$, by analogy with twisted group algebras. It is clear that
\be
(-1)^{\delta \bm{c_1}(c_2)} = (-1)^{\bm{c_1}(\partial c_2)} \,,
\ee
so the sign only depends on $\delta \bm{c_1}$ and $\partial c_2$. Furthermore, it can be shown that
\be
(-1)^{\delta \bm{c_1}(c_2)} = (-1)^{\operatorname{lk}(\widehat{\delta \bm{c_1}}, \partial c_2)} = (-1)^{\operatorname{lk}({\partial \hat{c}_1}, \partial c_2)} \,,
\ee
where $\operatorname{lk}$ denotes the linking number. Indeed $\partial \hat{c}_1$ is a link (in the knot theory sense of the word) on the dual lattice, and $\partial c_2$ is a link on the direct lattice, so they never intersect, and have a well-defined linking number. This is where the superscript in $\mathbb{C}^{\text{lk}}$ comes from.

Now, with those definitions, we have the following natural $*$-homomorphism $\mathbb{C}^{\text{lk}}[C_2 \oplus C_1] \rightarrow \mathcal{A}_{\Z_2^{(1)}}$:
\be
(c_2, c_1) \mapsto \prod_{f \in c_2} X_f \prod_{e \in c_1} Z_{\delta \bm{e}} = \prod_{f \in c_2} X_f \prod_{f' \in \delta \bm{c_1}} Z_{f'} \,.
\ee

This identification of $\mathbb{C}^{\text{lk}}[C_2 \oplus C_1]$ as a ``preimage'' of $\mathcal{A}_{\Z_2^{(1)}}$ also explains where Eq.~\eqref{eq:twisted_product} and Eq.~\eqref{eq:twisted_involution} come from: they are needed to ensure the commutation relations are compatible with the product and involution in $\mathcal{A}_{\Z_2^{(1)}}$ (where the product is just the product of operators, and the involution is the complex adjoint). The sign prefactor, dependent on the linking number, simply counts how many $X$ and $Z$ operators overlap, and adds the corresponding number of signs.

Using that homomorphism, one can show (Proposition~\ref{prop:isomorphisms1}):
\be
\mathcal{A}_{\Z_2^{(1)}} \cong \faktor{\mathbb{C}^{\text{lk}}[C_2 \oplus C_1]}{\sim_\delta} \cong \mathbb{C}^{\text{lk}}[C_2 \oplus B^2_\text{fin.}] \,.
\ee

Here, the identification $\sim_\delta$ is
\be
(c_2, c_1) \sim_\delta (c_2', c_1') \iff c_2 = c_2', \text{ and } \delta \bm{c_1} = \delta \bm{c_1'} \,,
\ee
and $B^2_\text{fin.} < B^2$ is the subgroup of finitely supported coboundaries. It is just a different way of writing $\im \lbbb C_1 \xrightarrow{\delta} C_2 \rbbb$, as always abusing the identification between chains and finitely supported cochains.

Finally, one can also show that (Proposition~\ref{prop:isomorphisms2}):
\be
\faktor{\mathcal{A}_{\Z_2^{(1)}}}{\mathfrak{I}} \cong \faktor{\mathbb{C}^{\text{lk}}[C_2 \oplus C_1]}{\sim_{\delta, \partial}} \cong \mathbb{C}^{\text{lk}}[B_1 \oplus B^2_\text{fin.}] \,,
\label{eq: Q}
\ee
where the new identification $\sim_{\delta, \partial}$ is
\be
(c_2, c_1) \sim_{\delta, \partial} (c_2', c_1') \iff \partial c_2 = \partial c_2', \text{ and } \delta \bm{c_1} = \delta \bm{c_1'} \,,
\ee
which lends it a certain nice symmetry.

Intuitively, $\mathcal{A}_{\Z_2^{(1)}} / \mathfrak{I}$ has as (Hamel) basis pairs of 1-boundary $b_1$ and 2-coboundary $\bm{b^2}$ (finitely supported). The 2-coboundary corresponds to a product of $Z$ operators on its faces, and a choice of preimage $c_1$, such that $\bm{b^2} = \delta \bm{c_1}$, is just a choice of how to decompose
\be
\prod_{f \in \bm{b^2}} Z_f \text{ , into } \prod_{e \in c_1} Z_{\delta \bm{e}} \,.
\ee

On the other hand, a choice of preimage $c_2$, $b_1 = \partial c_2$ corresponds to a choice of product of $X$ operators:
\be
\prod_{f \in c_2} X_f \,,
\ee
and different choices for $c_2$ differ precisely by some product of $X_{\partial c}$ at various cubes.

\subsubsection{Locality}

In this section we define a suitable notion of locality in the context of the quotient algebra $\mathcal{A}_{\Z_2^{(1)}} / \mathfrak{I}$. This is non-trivial, because the usual notion of locality breaks down. Usually, one would say that an operator is localized at the set of sites where it is not proportional to the identity. But now we must make sense of the ``location'' of an entire equivalence class of operators.

For illustrative purposes, let us first consider a tensor product algebra on some set of vertices $v \in V$ (not necessarily a cubic lattice). The support of an operator $a$ can be restated as
\be
\operatorname{Supp}(a) = \curpar{v \in V \; \mid \; \sqpar{X_v, a} \neq 0, \text{ or } \sqpar{Z_v, a} \neq 0} \,,
\ee
since if it is not (proportional to) the identity at some vertex $v$, then it does not commute with at least one of $X_v, Z_v$. The converse is also true.

This motivates the following definitions:
\begin{definition}
 For an operator $a \in \mc{A}_{\Z_2^{(1)}}$, we define its support as
 \be
 \operatorname{Supp}(a) \coloneqq \curpar{e \; \mid \; \sqpar{Z_{\delta \bm{e}}, a} \neq 0} \coprod \curpar{f \; \mid \; \sqpar{X_f, a} \neq 0} \,,
 \ee
 which is a subset of $E \coprod F$.
\end{definition}

\begin{definition}
 For an equivalence class $a + \mathfrak{I} \in \mathcal{A}_{\Z_2^{(1)}} / \mathfrak{I}$, we define its support as
 \be
 \operatorname{Supp}(a + \mathfrak{I}) \coloneqq \bigcap_{I \in \mathfrak{I}} \operatorname{Supp}(a + I) \,.
 \ee
\end{definition}

It can be shown that the isomorphism
\be
\faktor{\mathcal{A}_{\Z_2^{(1)}}}{\mathfrak{I}} \cong \mathbb{C}^{\text{lk}}[B_1 \oplus B^2_\text{fin.}] \,,
\ee
introduced earlier provides a very natural interpretation for the support.
\begin{lemma}
 (Lemma~\ref{lemma:quotient_support}) For any $(b_1, \bm{b^2}) \in B_1 \oplus B^2_\text{fin.}$, choose a $c_2$ such that $b_1 = \partial c_2$. Then:
 \be
 \operatorname{Supp}\paren{\prod_{f \in c_2} X_f \prod_{f' \in b^2} Z_{f'} + \mathfrak{I}} = b_1 \coprod b^2 \,.
 \ee
 In other words, this notion of support can be transported to $\mathcal{A}_{\Z_2^{(1)}} / \mathfrak{I}$ in a way that is very straightforward.
\end{lemma}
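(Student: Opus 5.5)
The plan is to compute the support of one convenient representative explicitly, and then show that no element of the ideal can shrink it. Fix $(b_1,\bm{b^2})$, a chain $c_2$ with $\partial c_2=b_1$, and set
\be
a \coloneqq \prod_{f\in c_2}X_f\prod_{f'\in b^2}Z_{f'}\in\mathcal{A}_{\Z_2^{(1)}}.
\ee
This is a product of Pauli operators, hence a unitary. A different choice $c_2'$ changes $a$ by a product of $X_{\partial c}$'s. Since $X_{\partial c}\equiv 1$ modulo $\mathfrak{I}$, the class $a+\mathfrak{I}$ does not depend on this choice, so the statement is well posed. Because $\operatorname{Supp}(a+\mathfrak{I})$ is an intersection that includes the term $I=0$, it suffices to prove two things: $\operatorname{Supp}(a)=b_1\coprod b^2$, and $b_1\coprod b^2\subseteq\operatorname{Supp}(a+I)$ for every $I\in\mathfrak{I}$.

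\textbf{Step 1 (the representative itself).} Every generator either commutes or anticommutes with $a$. For an edge $e$, $Z_{\delta\bm e}$ commutes with the $Z$'s in $a$. Moving it past $\prod_{f\in c_2}X_f$ produces the sign $(-1)^{\delta\bm e(c_2)}=(-1)^{\bm e(\partial c_2)}=(-1)^{\bm e(b_1)}$. Hence $Z_{\delta\bm e}$ anticommutes with $a$ exactly when $e\in b_1$, and commutes with it otherwise. Similarly, $X_f$ commutes with the $X$'s in $a$ and anticommutes with $\prod_{f'\in b^2}Z_{f'}$ exactly when $f\in b^2$. For anticommuting pairs the commutator equals $2Z_{\delta\bm e}a$ (respectively $2X_fa$), which is nonzero because it is invertible. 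This gives $\operatorname{Supp}(a)=b_1\coprod b^2$.

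\textbf{Step 2 (no ideal element can remove a point of the support).} Let $g$ be a generator ($Z_{\delta\bm e}$ with $e\in b_1$, or $X_f$ with $f\in b^2$) anticommuting with $a$, and suppose $[g,a+I]=0$ for some $I\in\mathfrak{I}$. Then
\be
2ga=-[g,I]=-(gI-Ig).
\ee
The right-hand side lies in $\mathfrak{I}$, since $\mathfrak{I}$ is a two-sided ideal of $\mathcal{A}_{\Z_2^{(1)}}$ and $g\in\mathcal{A}_{\Z_2^{(1)}}$. But $ga$ is a unitary element of $\mathcal{A}_{\Z_2^{(1)}}$, so this would give $1=(ga)^{-1}(ga)\in\mathfrak{I}$, i.e.\ $\mathfrak{I}=\mathcal{A}_{\Z_2^{(1)}}$. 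This contradicts the nontriviality of the quotient, which follows from the isomorphism \eqref{eq: Q} with the nonzero algebra $\mathbb{C}^{\text{lk}}[B_1\oplus B^2_\text{fin.}]$ (Proposition~\ref{prop:isomorphisms2}). Hence every point of $b_1\coprod b^2$ lies in $\operatorname{Supp}(a+I)$ for all $I$, and the intersection equals $b_1\coprod b^2$.

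The only nontrivial input is that $\mathfrak{I}$ is a proper ideal. I expect this to be the main point to justify carefully, and I would cite Proposition~\ref{prop:isomorphisms2} for it rather than reprove it. If a direct argument is preferred, one can use the state in which every $X_{\partial c}$ equals $1$: for example, the product state with $X_f=1$ on every face annihilates every element of $\mathfrak{I}$ but has value $1$ on the identity. Everything else is sign bookkeeping of Pauli commutation relations.
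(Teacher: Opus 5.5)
Your proposal is correct and follows the paper's proof essentially step for step. You compute the support of the Pauli-monomial representative directly, and then show that if some $I\in\mathfrak{I}$ made $a+I$ commute with an anticommuting generator $Y_\mu$, the invertible element $Y_\mu a$ would lie in $\mathfrak{I}$; this contradicts the properness of $\mathfrak{I}$ guaranteed by Proposition~\ref{prop:isomorphisms2}. Your alternative justification of properness is a valid, self-contained substitute for that citation: the $\ket{+}$ product state annihilates every element of $\mathfrak{I}$ but takes the value $1$ on the identity.
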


Note that $b_1$, being a $\Z_2$-valued 1-chain, can also be seen as a set of edges. Similarly, with $\bm{b^2}$ a $\Z_2$-valued 2-cochain, $b^2$ is a 2-chain, so it can be seen as a set of faces.

Also note that, as discussed earlier, for any other choice $b_1 = \partial c_2'$, the resulting operator differs by an element of $\mathfrak{I}$, so it belongs to the same equivalence class. Hence the previous statement is immediately independent of any particular choice of $c_2$.

A linear combination of basis elements (a formal $\mathbb{C}$-linear combination of elements in $B_1 \oplus B^2_\text{fin.}$) is supported on the union of the supports of the individual basis elements.

This notion of support also satisfies (Lemma~\ref{lemma:support_of_sum_and_prod})
\begin{align}
\operatorname{Supp}(a + b + \mathfrak{I}) &\subseteq \operatorname{Supp}(a + \mathfrak{I}) \cup \operatorname{Supp}(b + \mathfrak{I}) \,, \\
\operatorname{Supp}(ab + \mathfrak{I}) &\subseteq \operatorname{Supp}(a + \mathfrak{I}) \cup \operatorname{Supp}(b + \mathfrak{I}) \,.
\end{align}

Now, let $\mathcal{R} \subseteq E \coprod F$ be some region (the disjoint union of some set of edges and some set of faces). We define
\be\label{eq:localized_subalgebra}
\paren{\mc{A}_{\Z_2^{(1)}} / \mathfrak{I}}_\mathcal{R} \coloneqq \lbbb a + \mathfrak{I} \in \mc{A}_{\Z_2^{(1)}} / \mathfrak{I} \; \mid \; \operatorname{Supp}(a + \mathfrak{I}) \subseteq \mathcal{R} \rbbb \,.
\ee

By the properties stated earlier, it is clear that this is a subalgebra of $\mc{A}_{\Z_2^{(1)}} / \mathfrak{I}$: we call it the subalgebra of operators localized in $\mathcal{R}$. If $\mathcal{R}$ is finite, then $(\mc{A}_{\Z_2^{(1)}} / \mathfrak{I})_\mathcal{R}$ is finite-dimensional.

This is almost an example of a discrete net of algebras \cite{Jones2024DHR,Jones2026QCA}, though we do not complete the $*$-algebras to $C^*$-algebras. This would require a more careful and technical treatment. We also note that our algebra satisfies Haag duality as long as we only consider ``contractible'' regions.

\subsubsection{QCA on the Quotient Algebra}

In the previous sections, we have introduced all the concepts needed to properly formalize the notion of quantum cellular automata on this new $*$-algebra $\mc{A}_{\Z_2^{(1)}} / \mathfrak{I}$.

\begin{definition}
 An algebra homomorphism 
 \be
 \alpha : \mc{A}_{\Z_2^{(1)}} / \mathfrak{I} \rightarrow \mc{A}_{\Z_2^{(1)}} / \mathfrak{I} \,,
 \ee
 is called \emph{locality preserving} if there exists some $r \geq 0$ such that the following is true. For any region $\mc{R} \subseteq E \coprod F$ (finite or infinite),
 \be
 \alpha\paren{\paren{\mc{A}_{\Z_2^{(1)}} / \mathfrak{I}}_\mc{R}} \subseteq \paren{\mc{A}_{\Z_2^{(1)}} / \mathfrak{I}}_{\mc{R}^{+r}} \,.
 \ee
 The ``expanded region'' $\mc{R}^{+r}$ is
 \be
 \mc{R}^{+r} \coloneqq \curpar{\mu \in E \coprod F \ \mid \ \exists \nu \in \mc{R} \text{ s.t. } d(\mu, \nu) \leq r} \,,
 \ee
 with $d(\cdot, \cdot)$ the euclidean distance between centers of edges and faces.
\end{definition}

\begin{definition}
 A \emph{Quantum Cellular Automaton (QCA)} on $\mc{A}_{\Z_2^{(1)}} / \mathfrak{I}$ is a $*$-automorphism $\alpha : \mc{A}_{\Z_2^{(1)}} / \mathfrak{I} \rightarrow \mc{A}_{\Z_2^{(1)}} / \mathfrak{I}$ which is locality preserving.

 The set of all QCA will be denoted $\operatorname{QCA}(\mc{A}_{\Z_2^{(1)}} / \mathfrak{I})$. A constant $r$ associated to a particular $\alpha \in \operatorname{QCA}(\mc{A}_{\Z_2^{(1)}} / \mathfrak{I})$ is an upper bound on the ``spread'' of $\alpha$.
\end{definition}

We recover the following standard result, which should be expected from any reasonable definition of QCA.

\begin{theorem}
 (Theorem~\ref{thm:group_of_qca}) The set $\operatorname{QCA}(\mc{A}_{\Z_2^{(1)}} / \mathfrak{I})$ forms a group under composition.
\end{theorem}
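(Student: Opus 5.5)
To prove that $\operatorname{QCA}(\mc{A}_{\Z_2^{(1)}} / \mathfrak{I})$ is a group, I will check closure under composition, the identity, and associativity. These are immediate: if $\alpha,\beta$ have spreads $r,r'$, then
\be
\alpha\circ\beta\big((\mc{A}_{\Z_2^{(1)}}/\mathfrak{I})_\mc{R}\big)\subseteq \alpha\big((\mc{A}_{\Z_2^{(1)}}/\mathfrak{I})_{\mc{R}^{+r'}}\big)\subseteq (\mc{A}_{\Z_2^{(1)}}/\mathfrak{I})_{(\mc{R}^{+r'})^{+r}} \,.
\ee
By the triangle inequality for the euclidean distance, $(\mc{R}^{+r'})^{+r}\subseteq\mc{R}^{+(r+r')}$. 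A composition of $*$-automorphisms is a $*$-automorphism, and associativity is inherited from composition of maps. The whole content of the theorem is therefore that the inverse $\alpha^{-1}$ of a QCA, which exists as a $*$-automorphism by definition, is again locality preserving with some uniform spread $r''$.

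For the inverse I will mimic the standard argument for tensor-product QCAs, which rests on a commutant characterization of localized subalgebras. The key lemma I will establish first, using the basis $\mathbb{C}^{\text{lk}}[B_1\oplus B^2_\text{fin.}]$ of \eqref{eq: Q} and Lemma~\ref{lemma:quotient_support}, is the following. For a region $\mc{R}$, an element $x$ lies in $(\mc{A}_{\Z_2^{(1)}}/\mathfrak{I})_\mc{R}$ if and only if $x$ commutes with every basis element supported in the complement, up to a uniform margin. Concretely, I will show that commuting with all $(b_1,\bm{b^2})$ supported outside $\mc{R}^{+r_0}$ forces $x$ to be supported in $\mc{R}^{+r_0'}$. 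I will prove this by expanding $x=\sum\lambda_{(b_1,b^2)}(b_1,b^2)$. The commutator with a basis element is a sign given by the twisted product: linking numbers between $\partial\hat c_1$-type and $\partial c_2$-type loops. Two different basis elements have different commutation characters, so a commuting $x$ must have each of its basis components commuting individually. Each component then commutes with all small elementary loops far away, namely the $X_{\partial f}$-type boundaries $\partial f$ and the elementary coboundaries $\delta\bm{e}$. I will use these to show that the component's $b_1$ and $b^2$ avoid the far region; this is a duality or nondegeneracy of the linking pairing on local boundaries and coboundaries.

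With the lemma in hand, the inverse argument runs as follows. Let $y\in(\mc{A}_{\Z_2^{(1)}}/\mathfrak{I})_\mc{R}$, and let $z$ be any basis element supported far from $\mc{R}$, at distance greater than $r+r_0$. Write $z=\alpha(\alpha^{-1}(z))$. Since $\alpha^{-1}(z)$ is an unknown element, I will argue in the reverse direction: I will show $\alpha^{-1}(y)$ commutes with every element $w$ supported far away. Because $\alpha$ is surjective, any $w$ equals $\alpha^{-1}(\alpha(w))$, and $\alpha(w)$ is supported within distance $r$ of $\operatorname{Supp}(w)$, hence disjoint from $\mc{R}$. Disjointly supported elements of the quotient algebra commute, which I will prove from the twisted product: disjoint boundaries and coboundaries have linking sign $+1$ when their supports are separated, possibly after enlarging by one lattice spacing. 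Therefore $[y,\alpha(w)]=0$, and applying $\alpha^{-1}$ gives $[\alpha^{-1}(y),w]=0$. The key lemma then gives $\operatorname{Supp}(\alpha^{-1}(y))\subseteq\mc{R}^{+r''}$ with $r''=r+r_0'$ independent of $\mc{R}$.

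The main obstacle is the key lemma: showing that commutation with all far-away operators pins down the support. This is a Haag-duality-type statement, and the paper notes it may fail for non-contractible regions. I will handle this by applying the lemma to the complement of a neighborhood, working with single basis elements. For one basis element $(b_1,\bm{b^2})$, I will show directly that if an edge of $b_1$ lies far outside $\mc{R}$, there is a local coboundary $\delta\bm{e}$ near it that links $b_1$ oddly. Similarly, for a face of $b^2$ far outside $\mc{R}$, a local $X_f$ anticommutes with it. This avoids global topology, because only elementary local probes are used.
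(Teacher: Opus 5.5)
Your reduction to the locality of $\alpha^{-1}$, the composition bound via the triangle inequality, and the ``commutes with far probes $\Rightarrow$ localized'' direction of your key lemma (via the basis expansion and the elementary probes $X_f$, $Z_{\delta\bm{e}}$) are all fine. The gap is the step ``disjointly supported elements of the quotient algebra commute''. This is false in $\mc{A}_{\Z_2^{(1)}}/\mathfrak{I}$.

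For a counterexample, take two classes:
\begin{itemize}
\item the class of $\prod_{f\in c_2}X_f$, which is supported only on the loop $\partial c_2$;
\item the class of $\prod_{f\in b'}Z_f$, which is supported only on the faces $b'$ pierced by a dual loop.
\end{itemize}
If the two loops form a Hopf link, these classes anticommute no matter how far apart they are. This is exactly the configuration used in the proof of Proposition~\ref{prop:qca_nonextendability}.

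So no enlargement of the margin makes linking signs $+1$ for separated supports. Your intermediate claim that $\alpha^{-1}(y)$ commutes with \emph{every} far-away $w$ already fails for $\alpha=\mathrm{id}$, and the ``only if'' half of your key lemma fails for non-contractible $\mc{R}$. You correctly sensed a topological subtlety, but you placed it in the key lemma, where it is harmless. It actually bites in the commutation step.

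The repair is to take $w$ only among the generators $Y_\mu\in\{X_f,Z_{\delta\bm{e}}\}$, which is all your key lemma needs. Then $\alpha(Y_\mu)$ is supported in a ball of radius $r$ around $\mu$. Because a ball is contractible, each basis component of $\alpha(Y_\mu)$ can be written as $(\partial c_2',\delta\bm{c_1'})$ with $c_2',c_1'$ inside a slightly enlarged ball. Against any $(b_1,\bm{b^2})$ supported outside that ball, the twisted-product sign is $(-1)^{\bm{b^2}(c_2')+\bm{c_1'}(b_1)}=+1$. Hence $[y,\alpha(Y_\mu)]=0$ for $\mu$ far from $\mc{R}$, so $[\alpha^{-1}(y),Y_\mu]=0$ and $\mu\notin\operatorname{Supp}(\alpha^{-1}(y))$.

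With this correction your argument becomes essentially the paper's proof, run directly rather than by contradiction. The paper establishes ``support implies non-commutation with $Y_\mu$ in the quotient'' through the averaging trick $a+\tfrac12 JY_\mu$ rather than your basis expansion, but either works.
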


The proofs of this, and the other statements in this section may be found in appendix~\ref{app:quotient_algebra}.

\subsubsection{Extensions of QCAs to the Full Algebra}\label{sec:results_on_QCA_extendability}

In this section, we define the concept of extensions of QCAs on the quotient algebra to the full algebra, and state some results relating them.

\begin{definition}\label{def:QCA_extension}
 Let $\alpha : \mc{A}_{\Z_2^{(1)}} / \mathfrak{I} \rightarrow \mc{A}_{\Z_2^{(1)}} / \mathfrak{I}$ be a QCA. We say that $\tilde{\alpha} : \mc{A} \rightarrow \mc{A}$ is an extension of $\alpha$ to the full algebra if $\tilde{\alpha}$ is a QCA, $\tilde{\alpha}(\mc{A}_{\Z_2^{(1)}}) = \mc{A}_{\Z_2^{(1)}}$, and the following diagram
 \be
 \begin{tikzcd}[nodes={inner sep=5pt}]
  \mc{A}_{\Z_2^{(1)}} \arrow[r, "\tilde{\alpha}"] \arrow[d, "\pi", two heads] & \mc{A}_{\Z_2^{(1)}} \arrow[d, "\pi", two heads] \\
  \mc{A}_{\Z_2^{(1)}} / \mathfrak{I} \arrow[r, "\alpha"]      & \mc{A}_{\Z_2^{(1)}} / \mathfrak{I}
 \end{tikzcd}
 \ee
 commutes.
\end{definition}

Let us note that we call it an extension because it is not enough to lift it from $\mc{A}_{\Z_2^{(1)}} / \mathfrak{I}$ to $\mc{A}_{\Z_2^{(1)}}$, but also one needs to find a consistent definition for the individual $Z_f$ operators.

One can show that the condition $\tilde{\alpha}(\mc{A}_{\Z_2^{(1)}}) = \mc{A}_{\Z_2^{(1)}}$ is equivalent to several others.

\begin{lemma}
 Let $\tilde{\alpha} : \mc{A} \rightarrow \mc{A}$ be an extension of $\alpha : \mc{A}_{\Z_2^{(1)}} / \mathfrak{I} \rightarrow \mc{A}_{\Z_2^{(1)}} / \mathfrak{I}$. Then, the following are equivalent
 \begin{enumerate}
  \item $\tilde{\alpha}(\mc{A}_{\Z_2^{(1)}}) = \mc{A}_{\Z_2^{(1)}}$

  \item $\tilde{\alpha}(\mathfrak{I}) = \mathfrak{I}$

  \item $\mathfrak{I} \subset \tilde{\alpha}(\mc{A}_{\Z_2^{(1)}})$
 \end{enumerate}
\end{lemma}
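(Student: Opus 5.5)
The plan is to go around the cycle $1\Rightarrow 2\Rightarrow 3\Rightarrow 1$. Two facts do the work: the commuting square $\pi\circ\tilde\alpha=\alpha\circ\pi$ on $\mc{A}_{\Z_2^{(1)}}$, which is part of what it means to be an extension, and the injectivity of the automorphism $\alpha$. For the square to make sense, I read the standing hypothesis as $\tilde\alpha(\mc{A}_{\Z_2^{(1)}})\subseteq\mc{A}_{\Z_2^{(1)}}$; the lemma then concerns equality. A first observation holds under this hypothesis alone. If $I\in\mathfrak{I}$, then $\pi(\tilde\alpha(I))=\alpha(\pi(I))=0$, so $\tilde\alpha(\mathfrak{I})\subseteq\mathfrak{I}$ always.

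For $1\Rightarrow 2$, it remains to show $\mathfrak{I}\subseteq\tilde\alpha(\mathfrak{I})$. Take $J\in\mathfrak{I}$. By (1) we can write $J=\tilde\alpha(a)$ with $a\in\mc{A}_{\Z_2^{(1)}}$. Then $\alpha(\pi(a))=\pi(J)=0$, and injectivity of $\alpha$ gives $\pi(a)=0$, i.e.\ $a\in\mathfrak{I}$. The implication $2\Rightarrow 3$ is immediate from $\mathfrak{I}\subseteq\mc{A}_{\Z_2^{(1)}}$.

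For $3\Rightarrow 1$, I first upgrade (3) to (2) by the same injectivity trick. Given $J\in\mathfrak{I}$, (3) gives $J=\tilde\alpha(a)$ with $a\in\mc{A}_{\Z_2^{(1)}}$. As before $\pi(a)=0$, so $a\in\mathfrak{I}$, and hence $\tilde\alpha(\mathfrak{I})=\mathfrak{I}$. Next I characterize $\mc{A}_{\Z_2^{(1)}}$ intrinsically from $\mathfrak{I}$ inside $\mc{A}$, namely as the idealizer
\be
N(\mathfrak{I})\coloneqq\{x\in\mc{A}\ \mid\ x\mathfrak{I}\subseteq\mathfrak{I},\ \mathfrak{I}x\subseteq\mathfrak{I}\}\,.
\ee
Since $\tilde\alpha$ is an automorphism of $\mc{A}$ with $\tilde\alpha(\mathfrak{I})=\mathfrak{I}$, it maps $N(\mathfrak{I})$ onto itself. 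Given the claim $N(\mathfrak{I})=\mc{A}_{\Z_2^{(1)}}$, this yields (1).

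The main obstacle is proving $N(\mathfrak{I})=\mc{A}_{\Z_2^{(1)}}$. The inclusion $\supseteq$ holds because $\mathfrak{I}$ is an ideal of $\mc{A}_{\Z_2^{(1)}}$. For $\subseteq$, I decompose any $x\in\mc{A}$ into charge components under conjugation by the commuting involutions $X_{\partial c}$, writing $x=\sum_\chi x_\chi$. Here $\chi$ runs over finitely supported $\Z_2$-valued functions on cubes, with $X_{\partial c}x_\chi X_{\partial c}=(-1)^{\chi(c)}x_\chi$, and the neutral part satisfies $x_0\in\mc{A}_{\Z_2^{(1)}}$. If $x\in N(\mathfrak{I})$, then $x(1-X_{\partial c})\in\mathfrak{I}\subseteq\mc{A}_{\Z_2^{(1)}}$ for every cube $c$. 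Because $(1-X_{\partial c})$ is neutral, this forces $x_\chi(1-X_{\partial c})=0$ for every $\chi\neq0$ and every $c$. Now choose $c$ far from the finite support of $x_\chi$. Then $x_\chi X_{\partial c}=x_\chi$ factorizes as $x_\chi\otimes X_{\partial c}=x_\chi\otimes 1$ across disjoint tensor factors, which forces $x_\chi=0$. Hence $x=x_0\in\mc{A}_{\Z_2^{(1)}}$. I expect the only delicate point to be making this charge decomposition and the factorization argument rigorous in the local (inductive-limit) algebra; it should reduce to a finite region containing the support of $x$ together with one distant cube.
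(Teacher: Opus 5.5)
Your proof is correct. The paper states this lemma without proof, so there is no argument of the authors' to compare it with.

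Your reading of the hypothesis is the sensible one. As literally written, the paper's definition of an extension already contains condition~1, which would make the lemma formally empty. The sentence introducing the lemma in the main text indicates that the intended standing assumption is only $\tilde\alpha(\mc{A}_{\Z_2^{(1)}})\subseteq\mc{A}_{\Z_2^{(1)}}$ together with the commuting square. Under the literal reading, your chain $1\Rightarrow2\Rightarrow3$ covers everything anyway. The observation $\tilde\alpha(\mathfrak{I})\subseteq\mathfrak{I}$, and the steps $1\Rightarrow2$, $2\Rightarrow3$ and $3\Rightarrow2$ via injectivity of $\alpha$, are exactly right.

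For the return to condition~1, your idealizer argument is valid:
\begin{itemize}
\item Pauli strings form a basis of $\mc{A}$, and each has a definite, finitely supported charge under the $X_{\partial c}$. So the charge decomposition is a genuine direct sum, preserved by multiplication with neutral elements.
\item All of $\mc{A}_{\Z_2^{(1)}}$ is neutral.
\item The neutral part lies in $\mc{A}_{\Z_2^{(1)}}$ by the paper's identification of $\mc{A}_{\Z_2^{(1)}}$ with the commutant of the $X_{\partial c}$.
\item The distant-cube factorization is immediate.
\end{itemize}

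It is, however, more than the lemma needs, because surjectivity of $\alpha$ gives $3\Rightarrow1$ directly. For $b\in\mc{A}_{\Z_2^{(1)}}$, choose $a\in\mc{A}_{\Z_2^{(1)}}$ with $\alpha(\pi(a))=\pi(b)$. Then $\pi(b-\tilde\alpha(a))=0$, so $b-\tilde\alpha(a)\in\mathfrak{I}\subseteq\tilde\alpha(\mc{A}_{\Z_2^{(1)}})$, and hence $b\in\tilde\alpha(\mc{A}_{\Z_2^{(1)}})$.

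What your route buys in return is a statement that does not involve $\alpha$ at all. Since $\mc{A}_{\Z_2^{(1)}}=N(\mathfrak{I})$ is determined by $\mathfrak{I}$ inside $\mc{A}$, every automorphism of $\mc{A}$ with $\tilde\alpha(\mathfrak{I})=\mathfrak{I}$ automatically preserves $\mc{A}_{\Z_2^{(1)}}$. In particular, in the restriction lemma that follows in the paper, the hypothesis $\tilde\alpha(\mc{A}_{\Z_2^{(1)}})=\mc{A}_{\Z_2^{(1)}}$ is redundant given $\tilde\alpha(\mathfrak{I})=\mathfrak{I}$. The direct argument gives no such information. On the other hand, it avoids relying on the commutant characterization, which rests on finitely supported $2$-cocycles of the infinite lattice being coboundaries of finitely supported $1$-cochains.
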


We can also define the opposite of an extension: the restriction of a QCA from the full algebra to the quotient algebra.

\begin{lemma}
 Let $\tilde{\alpha} : \mc{A} \rightarrow \mc{A}$ be a QCA on the full algebra. If $\tilde{\alpha}(\mc{A}_{\Z_2^{(1)}}) = \mc{A}_{\Z_2^{(1)}}$ and $\tilde{\alpha}(\mathfrak{I}) = \mathfrak{I}$, then $\tilde{\alpha}$ defines a restricted QCA on $\mc{A}_{\Z_2^{(1)}} / \mathfrak{I}$ via
 \be
 \alpha(a + \mathfrak{I}) \coloneqq \tilde{\alpha}(a) + \mathfrak{I} \quad \text{for any } a \in \mc{A}_{\Z_2^{(1)}} \,.
 \ee
\end{lemma}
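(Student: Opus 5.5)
The plan is to check the three points of the definition in turn: well-definedness, the $*$-homomorphism property, and then invertibility and locality preservation.

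\textbf{Well-definedness.} The map $\alpha(a+\mathfrak{I}) \coloneqq \tilde{\alpha}(a)+\mathfrak{I}$ makes sense for $a \in \mc{A}_{\Z_2^{(1)}}$ because $\tilde{\alpha}(\mc{A}_{\Z_2^{(1)}}) = \mc{A}_{\Z_2^{(1)}}$. It is independent of the representative because $\tilde{\alpha}(\mathfrak{I}) = \mathfrak{I}$: if $a - a' \in \mathfrak{I}$, then $\tilde{\alpha}(a)-\tilde{\alpha}(a') \in \mathfrak{I}$.

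\textbf{$*$-homomorphism and bijectivity.} Linearity, multiplicativity and compatibility with $*$ are inherited from $\tilde{\alpha}$, since the quotient map $\pi$ is a $*$-homomorphism; this makes the diagram of Definition~\ref{def:QCA_extension} commute by construction. For invertibility, $\tilde{\alpha}^{-1}$ also preserves $\mc{A}_{\Z_2^{(1)}}$ and $\mathfrak{I}$, because both are mapped \emph{onto} themselves by $\tilde{\alpha}$. So $\tilde{\alpha}^{-1}$ descends in the same way to a map $\beta$, and $\beta\circ\alpha = \alpha\circ\beta = \mathrm{id}$ follows directly from the formula.

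\textbf{Locality preservation (the main obstacle).} The support on the quotient is defined as an intersection over representatives, on edges and faces, using commutators with $Z_{\delta\bm e}$ and $X_f$. Take $a+\mathfrak{I}$ with $\operatorname{Supp}(a+\mathfrak{I}) \subseteq \mc{R}$. I would use the basis description of Lemma~\ref{lemma:quotient_support}: write $a+\mathfrak{I}$ as a combination of basis elements $(b_1,\bm{b^2})$ with $b_1 \cup b^2 \subseteq \mc{R}$.

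The key step is to pick, for each such element, a representative in $\mc{A}_{\Z_2^{(1)}}$ whose ordinary support in $\mc{A}$ lies within a bounded distance $r_0$ of $\mc{R}$. For $\prod_{f\in b^2} Z_f$ this holds with $r_0 = 0$. The $X$-part is harder: one needs a preimage $c_2$ of $b_1$ near $b_1$, which fails for large loops. There I would instead use linearity and the following reduction. The quotient support is determined by commutators with the generators $Z_{\delta\bm e}$ and $X_f$, and these are preserved under $\alpha$ up to conjugation by $\alpha^{-1}$ of the generators. So it suffices to show that if $\mu \notin \mc{R}^{+r}$, then the image commutes modulo $\mathfrak{I}$ with the generator at $\mu$. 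That generator is $\alpha(g')$ with $g' = \tilde{\alpha}^{-1}(g)$, which is supported near $\mu$ with spread $r'$ of $\tilde{\alpha}^{-1}$. Its class in the quotient has support near $\mu$, disjoint from $\mc{R}$ once $r > r'+$const.

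By Haag-duality-type commutation for contractible regions (stated in the text), disjointly supported classes commute. Hence $[g',a]\in\mathfrak{I}$, and applying $\tilde{\alpha}$ gives $[g,\tilde{\alpha}(a)] \in \mathfrak{I}$. Finally, I would need that commuting modulo $\mathfrak{I}$ with every generator at $\mu$ implies $\mu \notin \operatorname{Supp}$. This is exactly the content of the basis description, because commutators of basis elements with generators are $\pm$ signs, computed via $\mathbb{C}^{\mathrm{lk}}$.
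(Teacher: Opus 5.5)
The paper states this lemma without proof. Your well-definedness, $*$-homomorphism and invertibility steps are correct. Your locality argument detects quotient support by commutation with $Y_\mu+\mathfrak{I}$, pulls $Y_\mu$ back by $\tilde{\alpha}^{-1}$ to an operator localized near $\mu$, and pushes the commutator forward using $\tilde{\alpha}(\mathfrak{I})=\mathfrak{I}$. This is the same mechanism the paper uses in the proof of Theorem~\ref{thm:group_of_qca} to show that inverses of quotient QCAs are locality preserving, and your final step is exactly the Claim proved there.

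The step you must repair is the appeal to ``disjointly supported classes commute.'' As a general statement this is false in $\mc{A}_{\Z_2^{(1)}}/\mathfrak{I}$. Take a large loop $\ell=\partial d_2$ and $\bm{b}'^2=\delta\bm{d}_1$ whose dual loop links $\ell$ once. By Lemma~\ref{lemma:quotient_support} the classes of $\prod_{f\in d_2}X_f$ and $\prod_{f\in b'^2}Z_f$ are supported on $\ell$ and $b'^2$, which can be arbitrarily far apart, yet they anticommute; this is precisely what Proposition~\ref{prop:qca_nonextendability} exploits. The Haag-duality sentence in the text is an unproved remark and does not assert this anyway.

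What holds, and what you need, uses that $g'=\tilde{\alpha}^{-1}(g)$ is an honest operator on faces in a ball $B$ around $\mu$. Expand $g'$ in the $\prod X\prod Z$ basis of $\mc{A}$. Since it commutes with every $X_{\partial c}$, each $Z$-configuration that occurs is a finitely supported $2$-cocycle in $B$. Hence it equals $\delta\bm{c}_1$ with $c_1$ in a slightly larger ball $B'$ (cone off its dual loop), so $g'$ lies in the algebra generated by the $X_f$ and $Z_{\delta\bm{e}}$ with $f,e\in B'$. By Lemma~\ref{lemma:exact_support}, choose $a'=a+I$ with $\operatorname{Supp}(a')=\operatorname{Supp}(a+\mathfrak{I})\subseteq\mc{R}$. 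Once $r$ exceeds $r'$ plus a lattice constant, $B'\cap\mc{R}=\emptyset$, so $a'$ commutes exactly with $g'$ and $[g',a]=-[g',I]\in\mathfrak{I}$. Equivalently, as in the proof of Theorem~\ref{thm:group_of_qca}, take $(\partial c_2,\delta\bm{c}_1)$ with $c_2,c_1\subseteq B'$ and any basis element $(b_1,\bm{b}^2)$ of $a$, which avoids $B'$. Their $\mathbb{C}^{\mathrm{lk}}$ commutation sign is $(-1)^{\bm{c}_1(b_1)+\bm{b}^2(c_2)}=+1$. With this substitution your argument is complete and yields a uniform spread $r$.
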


In Appendix~\ref{app:extensions_of_qca} we show that a certain class of QCA can always be extended from the quotient algebra to the full algebra.

\begin{theorem}\label{thm:QCA_extendability}
 Let $\alpha : \mc{A}_{\Z_2^{(1)}} / \mathfrak{I} \rightarrow \mc{A}_{\Z_2^{(1)}} / \mathfrak{I}$ be a QCA such that $\alpha(Z_{\delta \bm{e}} + \mathfrak{I}) = Z_{\delta \bm{e}} + \mathfrak{I}$. Then, there exists an extension $\tilde{\alpha}$ to the full algebra. Furthermore, we may choose it such that $\tilde{\alpha}(Z_f) = Z_f \; \forall f \in F$.
\end{theorem}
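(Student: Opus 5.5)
The strategy is to prescribe $\tilde{\alpha}$ on generators, with $\tilde{\alpha}(Z_f) = Z_f$, and to choose $\tilde{\alpha}(X_f)$ as a carefully normalized lift of $\alpha(X_f + \mathfrak{I})$. The defining relations of $\mc{A} = \langle X_f, Z_f\rangle$ are then checked using the basis $\mathbb{C}^{\text{lk}}[B_1 \oplus B^2_\text{fin.}]$ of the quotient from \eqref{eq: Q}, together with the support Lemma~\ref{lemma:quotient_support}.

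\textbf{Step 1 (normal form of $\alpha(X_f)$).} Expand $\alpha(X_f+\mathfrak{I}) = \sum_i \lambda_i\,(b_1^{(i)},\bm{b^{2}}^{(i)})$ in the basis. Commutation with $Z_{\delta\bm e}+\mathfrak{I}$ is read off term by term from the linking sign $(-1)^{\bm e(b_1)}$. Since $\alpha$ fixes every $Z_{\delta\bm e}+\mathfrak{I}$, the image $\alpha(X_f+\mathfrak{I})$ must have the same commutation signs with all $Z_{\delta\bm e}$ as $X_f$ does. Distinct basis elements are linearly independent, so comparing $Z_{\delta\bm e}\,\alpha(X_f)\,Z_{\delta\bm e}$ with $\pm\alpha(X_f)$ coefficientwise forces $b_1^{(i)} = \partial f$ for every $i$. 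Hence
\be
\alpha(X_f+\mathfrak{I}) = X_f\, g_f + \mathfrak{I}, \qquad g_f \in \mc{Z} := \langle Z_{\delta\bm e}\rangle .
\ee
By locality, the $\bm{b^2}^{(i)}$ lie in $\{f\}^{+r}$. Each is a finitely supported coboundary, so a local Poincar\'e-lemma argument on the cubic lattice lets me choose $c_1$ with $\delta\bm{c_1} = \bm{b^2}$ supported within a slightly enlarged ball. This step, obtaining $g_f$ with uniformly bounded support, is the point I expect to require the most care: one needs a cochain-level homotopy with uniform range, but on a cubic lattice this is standard. Define
\be
\tilde\alpha(Z_f)=Z_f, \qquad \tilde\alpha(X_f) = X_f g_f .
\ee

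\textbf{Step 2 (relations and the $*$-structure).} The key observation is that for each fixed 2-chain $c$, the linear map $X_c\cdot\mc{Z} \to \mc{A}_{\Z_2^{(1)}}/\mathfrak{I}$ is injective. Indeed, the monomials $X_c Z_{\bm{b^2}}$ map to distinct basis elements $(\partial c, \bm{b^2})$.

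Now every relation of $\mc{A}$ involving the new generators produces an element of some $X_c\cdot\mc{Z}$ that vanishes modulo $\mathfrak{I}$, because $\alpha$ is a $*$-automorphism. This applies to $(X_fg_f)^2 - 1$ (with $c = 0$), to the commutator-type expression $(X_fg_f)(X_{f'}g_{f'}) - (X_{f'}g_{f'})(X_fg_f)$ (with $c = f+f'$), and to $(X_fg_f)^* - X_fg_f$. By injectivity each of these vanishes on the nose.

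The relations involving $Z_{f'}$ are automatic, since $g_f$ commutes with every $Z_{f'}$. Hence $\tilde\alpha$ extends to a $*$-endomorphism of $\mc{A}$, and it is locality preserving by Step 1.

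\textbf{Step 3 (automorphism, symmetry, compatibility).} The inverse $\alpha^{-1}$ also fixes all $Z_{\delta\bm e}+\mathfrak{I}$, so Steps 1--2 produce a corresponding $\tilde\beta$. The composite $\tilde\beta\tilde\alpha$ fixes every $Z_f$ and sends $X_f \mapsto X_f h_f$ with $h_f \in \mc{Z}$ and $X_fh_f \equiv X_f \bmod \mathfrak{I}$. Injectivity then gives $h_f = 1$, and similarly for $\tilde\alpha\tilde\beta$, so $\tilde\alpha$ is a QCA.

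Since $X_fg_f$ and $Z_{\delta\bm e}$ are symmetric, $\tilde\alpha(\mc{A}_{\Z_2^{(1)}}) \subseteq \mc{A}_{\Z_2^{(1)}}$, and the same holds for $\tilde\beta$, which gives equality. Moreover
\be
\tilde\alpha(X_{\partial c}) = X_{\partial c}\prod_{f\in\partial c} g_f' \in X_{\partial c}\mc{Z}
\ee
(after reordering). This element is congruent to $\alpha(1) = 1 \equiv X_{\partial c}$ modulo $\mathfrak{I}$, so injectivity gives $\tilde\alpha(X_{\partial c}) = X_{\partial c}$. Therefore $\tilde\alpha(\mathfrak{I}) = \mathfrak{I}$.

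Finally, the diagram of Definition~\ref{def:QCA_extension} commutes on the generators $X_f$ and $Z_{\delta\bm e}$, hence on all of $\mc{A}_{\Z_2^{(1)}}$.
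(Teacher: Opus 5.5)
Your proof is correct. It builds the same extension as the paper: $\tilde\alpha(Z_f)=Z_f$, $\tilde\alpha(X_f)=X_f$ times an element of $\langle Z_{\delta\bm e}\rangle$, with relations upgraded from ``mod $\mathfrak I$'' to exact via $\langle Z_{\delta\bm e}\rangle\cap\mathfrak I=0$. You obtain the two key inputs differently.

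The paper gets the normal form from the support machinery. $X_f x_f$ commutes with every $Z_{\delta\bm e}$ up to $\mathfrak I$, so its quotient support contains no edges. Lemma~\ref{lemma:exact_support} then supplies a representative $a_f$ of exact support, and the leftover $X$'s, which form closed surfaces, are traded for symmetry operators. The paper proves the injectivity claim by a separate argument inside $\mc A$ and writes the inverse explicitly as $X_f\mapsto X_f a_f^\dagger$.

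You instead work in the basis $\mathbb C^{\text{lk}}[B_1\oplus B^2_{\text{fin.}}]$ of Proposition~\ref{prop:isomorphisms2}. Conjugation by $Z_{\delta\bm e}$ (resp.\ $X_f$) is diagonal with signs $(-1)^{\bm e(b_1)}$ (resp.\ $(-1)^{\bm{b^2}(f)}$), so linear independence forces $b_1^{(i)}=\partial f$ at once. The same basis gives injectivity of $X_c\langle Z_{\delta\bm e}\rangle\to\mc A_{\Z_2^{(1)}}/\mathfrak I$ for every $c$.

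This buys rigor. The paper's proof of its Claim replaces an element of $\mathfrak I$ by one of its $X$-coefficients, which is not evidently still in $\mathfrak I$; your argument needs nothing beyond Proposition~\ref{prop:isomorphisms2}. You also get $\tilde\alpha(X_{\partial c})=X_{\partial c}$ exactly. The paper's inverse is more economical, though: once $X_fg_f$ is a self-adjoint unitary, $g_f$ is unitary and $X_f\mapsto X_fg_f^{*}$ inverts $\tilde\alpha$ directly, with no need to build $\tilde\beta$ from $\alpha^{-1}$.

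Two small points in Step~1.

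\emph{The local Poincar\'e lemma you flag as delicate is not needed.} As an element of $\mc A$, $\prod_{e\in c_1}Z_{\delta\bm e}=\prod_{f'\in b^2}Z_{f'}$ does not depend on the chosen preimage $c_1$. So the operator support of $g_f$ is just the union of the face sets of the $\bm{b^2}^{(i)}$.

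\emph{You should state why locality bounds each term.} The quotient support of a combination $\sum_i\lambda_i\beta_i$ of distinct basis elements must contain the support of each $\beta_i$. This follows from your Step~1 mechanism. If $\mu$ is outside the support, some representative commutes with $Y_\mu$, so conjugation by $Y_\mu$ fixes the class. Every diagonal sign must then be $+1$, so $\mu$ lies outside $b_1^{(i)}\coprod b^{2}$ of every term (Lemma~\ref{lemma:quotient_support}). With this, the $\bm{b^2}^{(i)}$ lie in $(\partial f)^{+r}$, which gives the uniform locality bound. Note the support of $X_f+\mathfrak I$ is $\partial f$, not $\{f\}$.
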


We also prove that a different class of QCA can never be extended.

\begin{proposition}
 Let $\alpha : \mc{A}_{\Z_2^{(1)}} / \mathfrak{I} \rightarrow \mc{A}_{\Z_2^{(1)}} / \mathfrak{I}$ be a QCA such that $\alpha(Z_{\delta \bm{e}} + \mathfrak{I}) = X_{t^\frac{1}{2} e} + \mathfrak{I}$.
 Then, there does not exist an extension of $\alpha$ to the full algebra.
\end{proposition}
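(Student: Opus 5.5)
Argue by contradiction: assume an extension $\tilde{\alpha}$ exists and exhibit an element of $\mathfrak{I}$ that $\tilde{\alpha}$ must send outside $\mathfrak{I}$. The equivalent conditions listed in the preceding lemma say exactly that $\tilde{\alpha}(\mathfrak{I})=\mathfrak{I}$, so this is a contradiction.

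The natural candidate is a product of the $Z_{\delta\bm e}$ that is trivially the identity. For a fixed vertex $v$, let $\star v$ be the set of six edges incident to $v$. Then $\delta \bm{(\star v)} = \delta\delta \bm v = 0$, so
\[
\prod_{e\ni v} Z_{\delta\bm e} = 1
\]
already in $\mc{A}_{\Z_2^{(1)}}$. Apply $\alpha$ to this relation in the quotient. The right-hand side becomes $\prod_{e\ni v} X_{t^{\frac12}e}+\mathfrak{I}$. The six faces $t^{\frac12}e$ for $e\ni v$ are exactly the six faces of the cube $t^{\frac12}(v)$, so this product is $X_{\partial c}$ with $c=t^{\frac12}v$. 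That already equals $1$ modulo $\mathfrak{I}$, so the relation survives. The obstruction must therefore be subtler than a naive relation.

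The real obstruction is in the full algebra. An extension $\tilde{\alpha}$ maps $Z_{\delta\bm e}$ to $X_{t^{\frac12}e}\, j_e$ for some $j_e\in\mc{A}_{\Z_2^{(1)}}$ with $j_e \equiv 1 \bmod \mathfrak{I}$. Since $\tilde{\alpha}$ preserves $\mc{A}_{\Z_2^{(1)}}$ and $\mathfrak{I}$, it induces a QCA on the commutant structure. In particular it must map the centre-like elements $X_{\partial c}$, which generate $\mathfrak{I}$ and are central in $\mc{A}_{\Z_2^{(1)}}$, to elements $\tilde{\alpha}(X_{\partial c})$ that are again central in $\mc{A}_{\Z_2^{(1)}}$ and are $\equiv 1 \bmod \mathfrak{I}$.

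The main step is to show that the centre of $\mc{A}_{\Z_2^{(1)}}$, restricted to local elements, is generated by the $X_{\partial c}$, and that $\tilde{\alpha}$ acts on it as an automorphism. Its commutant in $\mc{A}$ is the algebra generated by all $X_f$, which is abelian. Dually, the $Z_f$ together with $\mc{A}_{\Z_2^{(1)}}$ generate $\mc{A}$. Now $\tilde{\alpha}$ is an automorphism of $\mc{A}$ preserving $\mc{A}_{\Z_2^{(1)}}$, so it preserves the relative commutant $\mc{A}_{\Z_2^{(1)}}'\cap\mc{A}=\langle X_{\partial c}\rangle$. Hence $\tilde{\alpha}$ preserves the maximal abelian subalgebra $\langle X_f\rangle$, since that algebra equals the commutant of $\langle X_{\partial c}\rangle'$, computed via double commutants in finite regions.

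This yields the contradiction. $\tilde{\alpha}(Z_{\delta\bm e})$ is a product $Z$-type/$X$-type element lying in the same class as $X_{t^{\frac12}e}$. The $X_f$ commute with each other, while $Z_{\delta\bm e}$ anticommutes with $X_f$ for $f\ni e$. Choose $e$ and a face $f\ni e$. Then $\tilde{\alpha}(X_f)\in\langle X_f\rangle$ must anticommute with $\tilde{\alpha}(Z_{\delta\bm e})$, which is congruent to $X_{t^{\frac12}e}$ modulo $\mathfrak{I}$.

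Writing $\tilde{\alpha}(Z_{\delta\bm e}) = X_{t^{\frac12}e} + i$ with $i\in\mathfrak{I}$, one checks that every element of $\langle X_f\rangle$ commutes with $X_{t^{\frac12}e}$. It also maps $\mathfrak{I}$ into itself under commutators, since $\mathfrak{I}$ is an ideal. So the commutator $[\tilde{\alpha}(X_f),\tilde{\alpha}(Z_{\delta\bm e})]$ lies in $\mathfrak{I}$. But it equals $\tilde{\alpha}([X_f,Z_{\delta\bm e}]) = -2\tilde{\alpha}(Z_{\delta\bm e}X_f)$, which is invertible and so not in $\mathfrak{I}$ (using $\tilde{\alpha}(\mathfrak{I})=\mathfrak{I}$ and $1\notin\mathfrak{I}$). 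This is a contradiction.

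I expect the main obstacle to be the double-commutant step: showing rigorously, for the local non-complete algebras used here, that $\tilde{\alpha}$ preserves $\langle X_f\rangle$. I would try to do this in finite regions, using the explicit basis from Proposition~\ref{prop:isomorphisms1}.
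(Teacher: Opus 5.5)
There is a genuine gap, and it is the step you flagged as the main obstacle. The claim that every extension $\tilde\alpha$ preserves $\langle X_f\rangle$ is false, and the double-commutant computation does not give it.

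Inside $\mc{A}$ one has $\langle X_{\partial c}\rangle'=\mc{A}_{\Z_2^{(1)}}$, and, as you yourself note, $(\mc{A}_{\Z_2^{(1)}})'\cap\mc{A}=\langle X_{\partial c}\rangle$. So $\langle X_{\partial c}\rangle''=\langle X_{\partial c}\rangle$, not $\langle X_f\rangle$. Your earlier sentence identifying the commutant of the centre with $\langle X_f\rangle$ is wrong for the same reason: that commutant is all of $\mc{A}_{\Z_2^{(1)}}$.

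In fact no argument using only the inclusion $\mc{A}_{\Z_2^{(1)}}\subset\mc{A}$ and the ideal $\mathfrak{I}$ can single out $\langle X_f\rangle$. Conjugation by the Tsui--Wen entangler $\mathsf{T}^2$ commutes with every $X_{\partial c}$, so it preserves $\mc{A}_{\Z_2^{(1)}}$ and $\mathfrak{I}$ and is an extension of $\aTW^2$. Yet it sends $X_f\mapsto X_f Z_{\delta t^{\frac12}(\bm f)}Z_{\delta t^{-\frac12}(\bm f)}\notin\langle X_f\rangle$. Hence the input your last paragraph needs, namely that $\tilde\alpha(X_f)$ commutes with $X_{t^{\frac12}e}$, is unsupported, and the contradiction does not follow. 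Granting that input, the rest of that paragraph would be fine: $[\tilde\alpha(X_f),\tilde\alpha(Z_{\delta\bm e})]$ would lie in $\mathfrak{I}$ while being invertible.

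The missing idea is that the obstruction comes from locality combined with linking. This is consistent with your own first observation that the local relation $\prod_{e\ni v}Z_{\delta\bm e}=1$ survives. Your argument never uses that $\tilde\alpha$ has bounded spread, and that is exactly where the paper gets its contradiction.

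The paper takes two large finitely supported coboundaries $\bm b=\delta\bm c$ and $\bm b'$ in a Hopf-link configuration, with every face of $b$ at distance at least $r$ from every face of $b'$. Then $\tilde\alpha(\prod_{f\in b}Z_f)\in\prod_{e\in c}X_{t^{\frac12}e}+\mathfrak{I}$, and this class anticommutes with $\prod_{f\in b'}Z_f+\mathfrak{I}$ because the loops link once. Since commuting representatives would give commuting classes, $\prod_{f\in b}\tilde\alpha(Z_f)$ does not commute with $\prod_{f\in b'}Z_f$. So some $\tilde\alpha(Z_f)$ with $f\in b$ is supported on a face of $b'$. As $r$ is arbitrary, $\tilde\alpha$ has unbounded spread, which is the contradiction.

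In words: the class of $\tilde\alpha(\prod_{f\in b}Z_f)$ is a membrane bounded near $b$, and it must meet every loop linked with $b$, however far away. A product of uniformly local operators along $b$ cannot do that.
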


\subsection{Kramers-Wannier-Wegner operator}
\label{sec: Z2 KWW}

In this section we will discuss the first example of a QCA on the $\mathbb{Z}_2$ 1-form symmetric algebra: the Kramers-Wannier-Wegner (KWW) duality operator \cite{Gorantla:2024ocs}.\footnote{See \cite{Koide:2021zxj} for the counterpart on a 4d Euclidean lattice and \cite{Choi:2021kmx, Kaidi:2021xfk} for the counterpart in the continuum.} We will begin with an informal introduction of the KWW duality, together with its action on operators. Then, we will provide a formal construction of the KWW duality as an element of $\operatorname{QCA}(\mc{A}_{\Z_2^{(1)}} / \mathfrak{I})$, justifying why the (seemingly) overcomplicated quotient $*$-algebra $\mc{A}_{\Z_2^{(1)}} / \mathfrak{I}$ was necessary.

The KWW operator $\mathsf{S}$\footnote{We use the notation $\sf{S}$ for consistency with the continuum operation $S$, the continuum defect $\mc{S}$, and the fact that $S, T$ are often used as the generators of the family of modular groups $SL(2, \Z_p)$.} is a non-invertible operator that implements the gauging of the $\mathbb{Z}_2$ 1-form symmetry generated by \eqref{eq: Z2 1-form sym}.
Its action on $\mathbb{Z}_2$ 1-form symmetric operators in \eqref{eq: Z2 symmetric op} is given by \cite{Gorantla:2024ocs}
\begin{equation}
X_f \xmapsto{\mathsf{S}} Z_{\delta t^{\frac{1}{2}}(\bm{f})}, \qquad
Z_{\delta \bm{e}} \xmapsto{\mathsf{S}} X_{t^{\frac{1}{2}}(e)}.
\label{eq: Z2 KWW}
\end{equation}
Here, $t^{\frac{1}{2}}$ is the half translation in the $(\frac{1}{2}, \frac{1}{2}, \frac{1}{2})$ direction.\footnote{We note that $t^{\frac{1}{2}}$ maps faces to edges and vice versa.
Accordingly, it swaps 2-(co)chains and 1-(co)chains. For example, $t^{\frac{1}{2}}(\bm{f})$ is a 1-cochain, while $t^{\frac{1}{2}}(e)$ is a 2-chain.}
See Figure~\ref{fig: Z2 KWW} for an illustration of this action.
\begin{figure}[t]
\centering
\adjincludegraphics[valign=c, scale=0.9]{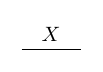}
$\xmapsto{\mathsf{S}}$
\adjincludegraphics[valign=c, scale=0.9]{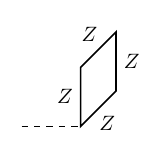}
$\xmapsto{\mathsf{S}}$
\adjincludegraphics[valign=c, scale=0.9]{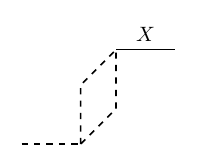}
\caption{The action of the KWW operator $\mathsf{S}$ on $\mathbb{Z}_2$ 1-form symmetric operators illustrated on the dual lattice. The dashed lines represent the edges on which the operators on the left-hand side are supported.}
\label{fig: Z2 KWW}
\end{figure}
A detailed definition of $\mathsf{S}$ as an operator on the Hilbert space will be discussed in Appendix~\ref{sec: KWW revisited}.
We note that $\sf{S}$ maps the symmetry operator $X_{\partial c}$ to the identity operator:
\begin{equation}
X_{\partial c} \xmapsto{\sf{S}} 1.
\end{equation}
The action of $\sf{S}$ can also be written in terms of the cup product as follows:
\begin{equation}
X_f \xmapsto{\sf{S}} \prod_{e^{\prime} \in E} Z_{\delta \bm{e}^{\prime}}^{\int \bm{f} \smile \bm{e}^{\prime}}, \qquad
Z_{\delta \bm{e}} \xmapsto{\sf{S}} \prod_{f^{\prime} \in F} X_{f^{\prime}}^{\int \bm{e} \smile \bm{f}^{\prime}}.
\label{eq: Z2 KWW cup}
\end{equation}
We refer the reader to Appendix~\ref{sec: higher cup products} for our convention on the cup product on a cubic lattice.

From \eqref{eq: Z2 KWW}, it is clear that the KWW operator $\sf{S}$ squares to the lattice translation $t$ when acting on $\mathbb{Z}_2$ 1-form symmetric operators, i.e.,
\begin{align}\label{eq:S2action}
 X_f &\xmapsto{\sf{S}^2} X_{t(f)} & Z_{\delta \bm{e}} &\xmapsto{\sf{S}^2} Z_{\delta t (\bm{e})}
\end{align}
The 1+1d analogue of this phenomenon was studied in detail in~\cite{Seiberg:2023cdc, Seiberg:2024gek}.

More precisely, the KWW duality should instead be regarded as a QCA on the algebra $\mc{A}_{\Z_2^{(1)}} / \mathfrak{I}$ of $\mathbb{Z}_2$ 1-form symmetric local operators.

Let us construct $\alpha_\text{KWW}$ properly. First, we define a ``lifted'' version:
\be
\check{\alpha}_\text{KWW} : \mathbb{C}^{\text{lk}}[C_2 \oplus C_1] \rightarrow \mathbb{C}^{\text{lk}}[C_2 \oplus C_1] \,,
\ee
by
\be
\check{\alpha}_\text{KWW}((c_2, c_1)) \coloneqq (-1)^{\delta \bm{c_1} (c_2)} \left(t^\frac{1}{2} c_1, t^\frac{1}{2} c_2 \right) \,,
\ee
and extended linearly.
Here, we take $t^\frac{1}{2}$ to be the half-translation, applied to both the 1-chain and 2-chain, which makes sense because $t^\frac{1}{2}$ turns edges into faces, and viceversa. The extra sign is needed to ensure compatibility with the $*$-algebra operations.
It is easy to see that this is an automorphism.

Finally, $\alpha_\text{KWW}$ is defined by restricting $\check{\alpha}_\text{KWW}$ to the quotient. This requires verifying that $\check{\alpha}_\text{KWW}$ is compatible with the equivalence $\sim_{\delta, \partial}$. This follows straightforwardly from the properties
\begin{align}
 \partial t^\frac{1}{2} &= t^\frac{1}{2} \delta & \delta t^\frac{1}{2} &= t^\frac{1}{2} \partial \,.
\end{align}

The fact that it is locality preserving is also straightforward, since $t^\frac{1}{2}$ simply moves the support of all operators by $(\frac{1}{2}, \frac{1}{2}, \frac{1}{2})$, which means the spread is bounded by $r = \frac{\sqrt{3}}{2}$.
Thus, we conclude that $\alpha_{\text{KWW}} \in \operatorname{QCA}(\mc{A}_{\Z_2^{(1)}} / \mathfrak{I})$.

Working in the quotient $\mc{A}_{\Z_2^{(1)}} / \mathfrak{I}$ instead of just the symmetric subalgebra $\mc{A}_{\Z_2^{(1)}}$ is necessary for the following reasons:
\begin{align}
 X_{\partial c} = \prod_{f \in \partial c} X_f&\xmapsto{\sf{S}} \prod_{e \; \mid \; t^\frac{1}{2} (c) \in \partial e} Z_{\delta \bm{e}} = 1 \,, \\
 1 = \prod_{e \; \mid \; s \in \partial e} Z_{\delta \bm{e}} &\xmapsto{\sf{S}} X_{\partial t^\frac{1}{2}(s)} \,.
\end{align}
Note that $\prod_{e \; \mid \; t^\frac{1}{2} (c) \in \partial e} Z_{\delta \bm{e}} = 1$ because each face ends up appearing twice, and therefore the two contributions $Z_f$ from different edges $Z_{\delta \bm{e}}, Z_{\delta \bm{e'}}$ end up canceling out.

The first equation shows that a hypothetical QCA on $\mc{A}_{\Z_2^{(1)}}$ would not be injective. But furthermore, the second equation shows that it would not even be well defined, because we also expect $1 \mapsto 1$. This is resolved, naively, by simply saying
\be
X_{\partial c} = 1 \,,
\ee
and the mathematically correct way of doing so is precisely passing to the quotient.

Note that, for brevity, in the next few sections, we will not go into as much detail as in the case of the KWW duality. The mappings of operators presented below can be defined as a QCA on $\mc{A}_{\Z_2^{(1)}} / \mathfrak{I}$ in the intuitive way, the only slight difficulty being verifying that they all properly respect the quotient. To be fully precise, for the QCA on the quotient which come from a QCA $\tilde{\alpha}$ on the full algebra $\mc{A}$, one needs to check that $\tilde{\alpha}(\mc{A}_{\Z_2^{(1)}}) = \mc{A}_{\Z_2^{(1)}}$ and $\tilde{\alpha}(\mathfrak{I}) = \mathfrak{I}$.

\subsection{Tsui-Wen entangler of order 2}
\label{sec: Z2 TW of order 2}

In this section we discuss another example of a QCA on the algebra of $\mathbb{Z}_2$ 1-form symmetric local operators: an entangler for an SPT phase with a $\mathbb{Z}_2$ 1-form symmetry.

In 3+1d, bosonic SPT phases with a $\mathbb{Z}_2$ 1-form symmetry are classified by~\cite{Kapustin:2013uxa}
\begin{equation}
H^4(B^2\mathbb{Z}_2, \mathrm{U}(1)) \cong \mathbb{Z}_4.
\end{equation}
In what follows, we will consider an SPT entangler for the SPT phase generating the $\mathbb{Z}_2$ subgroup of the $\mathbb{Z}_4$ classification. As a unitary, it will be denoted by $\sf{T}^2$ to make it clear that it is not the generator of the full $\Z_4$, and we will call it the Tsui-Wen entangler of order 2 \cite{Tsui:2019ykk}.
{The corresponding QCA on the $\mathbb{Z}_2$ 1-form symmetric algebra is denoted by $\aTW^2$.}
Note that while the construction of $\sf{T}$ itself is much more involved (see Section~\ref{sec:qca_for_root_entangler}), we can define and work with $\sf{T}^2$ directly.

As a unitary acting on the state space $\mathcal{H}_F$, the Tsui-Wen entangler of order 2 is defined by \cite{Tsui:2019ykk, Chen:2020msl}
\begin{equation}
\sf{T}^2 \ket{\hat{\bm{a}}}= (-1)^{\int \hat{\bm{a}} \smile \delta \hat{\bm{a}}} \ket{\hat{\bm{a}}}.
\label{eq: UTW def}
\end{equation}
Here, $\hat{\bm{a}}$ is a $\mathbb{Z}_2$-valued 1-cochain on the dual lattice and $\ket{\hat{\bm{a}}}$ is an arbitrary state in a computational basis, i.e.,
\begin{equation}
\ket{\hat{\bm{a}}} \coloneq \bigotimes_{f \in F} \ket{a_f}_f,
\end{equation}
where $a_f \in \{0, 1\}$ is the value of $\hat{\bm{a}}$ on the 1-chain dual to $\bm{f}$.
We may write $a_f = \hat{\bm{a}}(\hat{f})$ where $\hat{f}$ denotes the dual 1-chain of $\bm{f}$.
By definition, $\sf{T}^2$ has order 2, i.e.,
\begin{equation}
\paren{\sf{T}^2}^2 = 1.
\end{equation}

We note that $\sf{T}$ is diagonal in the computational basis.
Hence, it acts trivially on the Pauli $Z$ operators:
\begin{equation}
Z_f \xmapsto{\sf{T}^2} Z_f.
\label{eq: UZ}
\end{equation}
On the other hand, the action of $\sf{T}^2$ on the Pauli $X$ operators can be computed as
\begin{equation}
\sf{T}^2 X_f \sf{T}^{-2} \ket{\hat{\bm{a}}} = (-1)^{\int \hat{\bm{a}} \smile \delta \hat{\bm{f}} + \hat{\bm{f}} \smile \delta \hat{\bm{a}} + \hat{\bm{f}} \smile \delta \hat{\bm{f}}} \ket{\hat{\bm{a}}+\bm{\hat{\bm{f}}}},
\label{eq: UXU1}
\end{equation}
where $\hat{\bm{f}}$ is the 1-cochain dual to $f$, i.e., a 1-cochain defined by $\hat{\bm{f}}(\hat{f}^{\prime}) = \delta_{f, f^{\prime}}$.
A direct computation shows that the integrals on the right-hand side of~\eqref{eq: UXU1} can be written as
\begin{equation}
\begin{aligned}
\int \hat{\bm{a}} \smile \delta \hat{\bm{f}} &= \hat{\bm{a}}(\partial t^{-\frac{1}{2}}(\hat{f})) \quad \bmod 2, \\
\int \hat{\bm{f}} \smile \delta \hat{\bm{a}} &= \hat{\bm{a}}(\partial t^{\frac{1}{2}}(\hat{f})) \quad \bmod 2, \\
\int \hat{\bm{f}} \smile \delta \hat{\bm{f}} &= 0 \quad \bmod 2.
\end{aligned}
\label{eq: UXU2}
\end{equation}
Here, $t^{-\frac{1}{2}}$ denotes the half translation in the $(-\frac{1}{2}, -\frac{1}{2}, -\frac{1}{2})$ direction.
Equations~\eqref{eq: UXU1} and \eqref{eq: UXU2} imply that $\sf{T}^2$ acts on $X_f$ as
\begin{equation}
X_f \xmapsto{\sf{T}^2} \sf{T}^2 X_f \sf{T}^{-2} = X_f Z_{\delta t^{\frac{1}{2}}(\bm{f})} Z_{\delta t^{-\frac{1}{2}}(\bm{f})}.
\end{equation}
See Figure~\ref{fig: TW of order 2} for an illustration of this action.
\begin{figure}[t]
\centering
\adjincludegraphics[valign=c]{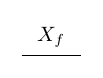}
$\xmapsto{\sf{T}^2}$
\adjincludegraphics[valign=c]{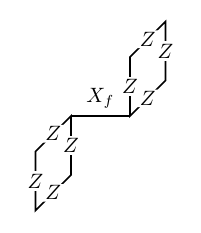}
\caption{The action of the Tsui-Wen entangler $U$ on the Pauli $X$ operator. The figure is illustrated on the dual lattice. The action of $\sf{T}^2$ on the Pauli $Z$ operator is trivial.}
\label{fig: TW of order 2}
\end{figure}
The action of $\sf{T}^2$ on $X_f$ can also be expressed in terms of the cup product as follows:
\begin{equation}
X_f \xmapsto{\sf{T}^2} X_f \prod_{e^{\prime} \in E} Z_{\delta \bm{e}^{\prime}}^{\int \bm{f} \smile \bm{e}^{\prime} + \bm{e}^{\prime} \smile \bm{f}}.
\label{eq: UX}
\end{equation}

\subsection{Framing QCA}
\label{sec: 3-fermion QCA 1-qubit version}
In this subsection, we review a non-trivial QCA constructed in~\cite{Fidkowski:2023dpe}, which is in the same equivalence class as the 3-fermion QCAs in~\cite{HaahFidkowskiHastings2023, Shirley_2022}.
A 3-fermion QCA was originally constructed in \cite{HaahFidkowskiHastings2023} as a disentangler of the Walker-Wang model based on the 3-fermion MTC.
The original 3-fermion QCA in~\cite{HaahFidkowskiHastings2023} and its simplified version constructed in~\cite[Appendix E]{Shirley_2022} are both defined on a cubic lattice with two qubits per edge.
On the other hand, in \cite{Fidkowski:2023dpe}, another QCA in the same equivalence class as the 3-fermion QCAs was constructed on a cubic lattice with a single qubit per edge.
This QCA is an example of a QCA on the algebra of $\mathbb{Z}_2$ 1-form symmetric operators.
We call this QCA the framing QCA and denote it by $\alpha_{\text{framing}}$.
The corresponding unitary operator on the Hilbert space is denoted by $\sf{U}_{\text{framing}}$ \cite{Fidkowski:2023dpe}.

We now review the definition of $\sf{U}_{\text{framing}}$ and compute its action on $\mathbb{Z}_2$ 1-form symmetric operators.
More precisely, we consider $\sf{U}_{\text{framing}}$ of \cite{Fidkowski:2023dpe} on the dual lattice because our qubits are living on faces.

To define the unitary $\sf{U}_{\text{framing}}$, we first define the following fermion hopping operator for each face $f$ \cite{Chen:2018nog, Shirley_2022}:
\begin{equation}
\tilde{u}_f \coloneq Z_f \prod_{f^{\prime}} X_{f^{\prime}}^{\int \bm{f}^{\prime} \smile_1 \bm{f}}.
\label{eq: fermion hopping}
\end{equation}
Here, $\smile_1$ denotes the cup-1 product on a cubic lattice; see Appendix~\ref{sec: higher cup products} for the definition.
The above fermion hopping operator can be written more explicitly as shown in Figure~\ref{fig: fermion hopping}.
\begin{figure}[t]
\centering
\adjincludegraphics[valign=c, trim={10, 0, 0, 0}]{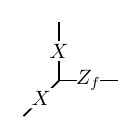}
\adjincludegraphics[valign=c]{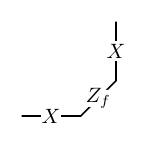}
\adjincludegraphics[valign=c]{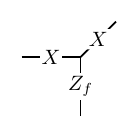}
\caption{The fermion hopping operator $\tilde{u}_f$ illustrated on the dual lattice.}
\label{fig: fermion hopping}
\end{figure}
We then define the fermionic flux operator $\tilde{u}_{\delta \bm{e}}$ for each edge $e$ by taking the product of the fermion hopping operators around $e$:
\begin{equation}
\tilde{u}_{\delta \bm{e}} \coloneq \prod_{f^{\prime}} \tilde{u}_{f^{\prime}}^{\delta \bm{e}(f^{\prime})}.
\label{eq: fermion flux1}
\end{equation}
Here, the order of the product is chosen so that all Pauli $X$ operators act before any Pauli $Z$ operator.
By plugging \eqref{eq: fermion hopping} into \eqref{eq: fermion flux1}, we obtain
\begin{equation}
\tilde{u}_{\delta \bm{e}} = Z_{\delta \bm{e}} \prod_{f^{\prime}} X_{f^{\prime}}^{\int \bm{f}^{\prime} \smile_1 \delta \bm{e}}.
\end{equation}
Furthermore, by using the identity
\begin{equation}
\begin{aligned}
\delta(\bm{f}^{\prime} \smile_1 \bm{e}) &= \delta \bm{f}^{\prime} \smile_1 \bm{e} + \bm{f} \smile_1 \delta \bm{e} \\
& \quad + \bm{f}^{\prime} \smile \bm{e} + \bm{e} \smile \bm{f}^{\prime} \quad \bmod 2,
\end{aligned}
\label{eq: cup-1 Leibniz}
\end{equation}
one can rewrite $\tilde{u}_{\delta \bm{e}}$ as
\begin{align}\label{eq: tilde u}
 \tilde{u}_{\delta \bm{e}}
 &= Z_{\delta \bm{e}} \left(\prod_{f^{\prime}} X_{f^{\prime}}^{\int \bm{f}^{\prime} \smile \bm{e} + \bm{e} \smile \bm{f}^{\prime}}\right) \left(\prod_{c^{\prime}} X_{\partial c^{\prime}}^{\int \bm{c}^{\prime} \smile_1 \bm{e}}\right) \nonumber \\
 &= Z_{\delta \bm{e}} X_{t^{\frac{1}{2}}(e)} X_{t^{-\frac{1}{2}}(e)} \left(\prod_{c^{\prime}} X_{\partial c^{\prime}}^{\int \bm{c}^{\prime} \smile_1 \bm{e}}\right) \,.
\end{align}
This operator can be illustrated as shown in Figure~\ref{fig: fermion flux}.
\begin{figure*}[t]
\centering
\adjincludegraphics[valign=c]{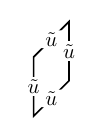} = \adjincludegraphics[valign=c]{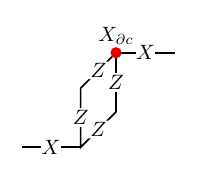},
\adjincludegraphics[valign=c]{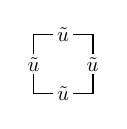} = \adjincludegraphics[valign=c]{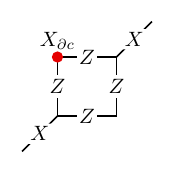},
\adjincludegraphics[valign=c]{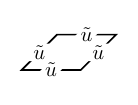} = \adjincludegraphics[valign=c]{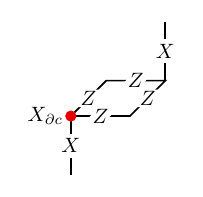}
\caption{The fermion flux operator $\tilde{u}_{\delta \bm{e}}$ illustrated on the dual lattice. The red dot represents the dual of cube $c$ where the operator $X_{\partial c}$ acts. We recall our convention that Pauli $X$'s act before Pauli $Z$'s.}
\label{fig: fermion flux}
\end{figure*}
From the above expression, we can see that the fermionic flux operators commute with each other and square to the identity, i.e.,
\begin{equation}
\tilde{u}_{\delta \bm{e}} \tilde{u}_{\delta \bm{e}^{\prime}} = \tilde{u}_{\delta \bm{e}^{\prime}} \tilde{u}_{\delta \bm{e}}, \quad
\tilde{u}_{\delta \bm{e}}^2 = 1,
\end{equation}
for all edges $e$ and $e^{\prime}$.
Therefore, these operators can be diagonalized simultaneously and have eigenvalues $\pm 1$.
We note that edges with $\tilde{u}_{\delta \bm{e}} = -1$ form a loop, which we call a fermionic flux loop.

Now, we define the unitary $\sf{U}_{\text{framing}}$ by the following action on states \cite{Fidkowski:2023dpe}:
\begin{equation}
\sf{U}_{\text{framing}} \ket{\psi} = (-1)^{\text{framing}(L)} \ket{\psi}.
\end{equation}
Here, $\ket{\psi}$ is a simultaneous eigenstate of all fermionic flux operators $\tilde{u}_{\delta \bm{e}}$, and $L$ is the fermionic flux loop in $\ket{\psi}$.
The exponent on the right-hand side is the framing of $L$, that is, the mod 2 linking number of $L$ and its push-off in the $(\frac{1}{2}, \frac{1}{2}, \frac{1}{2})$ direction.
In particular, when $L$ is homologically trivial, the framing of $L$ can be written as
\begin{equation}
\text{framing}(L) = \int \hat{\bm{b}} \smile \delta \hat{\bm{b}} \quad \bmod 2,
\label{eq: framing}
\end{equation}
where $\hat{\bm{b}}$ is the dual 1-cochain of a 2-chain $\Sigma$ satisfying $\partial \Sigma = L$.
By definition, $\sf{U}_{\text{framing}}$ has order 2, i.e.,
\begin{equation}
\sf{U}_{\text{framing}}^2 = 1.
\end{equation}

Based on the above definition, we can compute the action of $\sf{U}_{\text{framing}}$ on local operators.
In what follows, we will only compute the action on $\mathbb{Z}_2$ 1-form symmetric local operators.
We note that such operators are generated by $X_f$ and $\tilde{u}_{\delta \bm{e}}$ because $Z_{\delta \bm{e}}$ is obtained from $X_f$ and $\tilde{u}_{\delta \bm{e}}$ due to~\eqref{eq: tilde u}.
Thus, it suffices to compute the action of $\sf{U}_{\text{framing}}$ on $X_f$ and $\tilde{u}_{\delta \bm{e}}$.

Since $\sf{U}_{\text{framing}}$ is diagonal in the eigenbasis of $\tilde{u}_{\delta \bm{e}}$, it acts trivially on $\tilde{u}_{\delta \bm{e}}$:
\begin{equation}
\tilde{u}_{\delta \bm{e}} \xmapsto{\sf{U}_{\text{framing}}} \tilde{u}_{\delta \bm{e}}.
\label{eq: U framing u}
\end{equation}
On the other hand, $\sf{U}_{\text{framing}}$ acts non-trivially on $X_f$ because $X_f$ changes the configuration of the fermionic flux loop.
Specifically, we can compute the action of $\sf{U}_{\text{framing}}$ on $X_f$ as
\begin{multline}\label{eq: UXU framing1}
 \sf{U}_{\text{framing}} X_f \sf{U}_{\text{framing}}^{-1} \ket{\psi} \\
= (-1)^{\mathrm{framing}(L+\partial f) - \mathrm{framing}(L)} X_f \ket{\psi} \,.
\end{multline}
We note that the sign on the right-hand side depends only on $L$ near face $f$.
In particular, it does not depend on global data of $L$ such as its homology class.
Therefore, to compute $\sf{U}_{\text{framing}} X_f \sf{U}_{\text{framing}}^{-1}$, we can assume without loss of generality that $L$ is homologically trivial.
In this case, due to \eqref{eq: framing}, the sign on the right-hand side of~\eqref{eq: UXU framing1} can be computed as
\begin{align}\label{eq: UXU framing2}
 &(-1)^{\mathrm{framing}(L+\partial f) - \mathrm{framing}(L)} \nonumber \\
 &\quad =(-1)^{\int \hat{\bm{b}} \smile \delta \hat{\bm{f}} + \hat{\bm{f}} \smile \delta \hat{\bm{b}} + \hat{\bm{f}} \smile \delta \hat{\bm{f}}} \nonumber \\
 &\quad = (-1)^{\hat{\bm{b}}(\partial t^{-\frac{1}{2}}(\hat{f}) + \partial t^{\frac{1}{2}}(\hat{f}))} \,,
\end{align}
where the second equality follows from \eqref{eq: UXU2}.
Equations~\eqref{eq: UXU framing1} and \eqref{eq: UXU framing2} imply that $\sf{U}_{\text{framing}}$ acts on $X_f$ as follows:
\begin{equation}
X_f \xmapsto{\sf{U}_{\text{framing}}} X_f \tilde{u}_{\delta t^{\frac{1}{2}}(\bm{f})} \tilde{u}_{\delta t^{-\frac{1}{2}}(\bm{f})}.
\label{eq: U framing X}
\end{equation}
See Figure~\ref{fig: U framing} for an illustration of this action.
\begin{figure}[t]
\centering
\adjincludegraphics[valign=c]{fig/Xf.pdf}
$\xmapsto{\sf{U}_{\text{framing}}}$
\adjincludegraphics[valign=c]{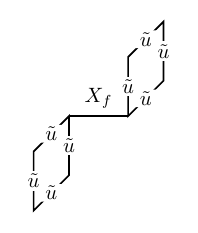}
\caption{The action of $\sf{U}_{\text{framing}}$ on the Pauli $X$ operator. The figure is illustrated on the dual lattice. The action of $\sf{U}_{\text{framing}}$ on the fermionic flux operator $\tilde{u}_{\delta \bm{e}}$ is trivial.}
\label{fig: U framing}
\end{figure}
The action of $\sf{U}_{\text{framing}}$ on $X_f$ can also be expressed in terms of the cup product as
\begin{equation}
X_f \xmapsto{\sf{U}_{\text{framing}}} X_f \prod_{e^{\prime}} \tilde{u}_{\delta \bm{e}^{\prime}}^{\int \bm{f} \smile \bm{e}^{\prime} + \bm{e}^{\prime} \smile \bm{f}}.
\label{eq: U framing X cup}
\end{equation}
We note that the action of $\sf{U}_{\text{framing}}$ on $Z_{\delta \bm{e}}$ is automatically determined by \eqref{eq: U framing u} and \eqref{eq: U framing X} because $Z_{\delta \bm{e}}$ can be written as a product of $\tilde{u}_{\delta \bm{e}}$ and $X_f$ due to \eqref{eq: tilde u}.

\subsection{3-fermion Kramers-Wannier-Wegner QCA}
\label{sec: 3-fermion KWW}
Using the KWW operator $\mathsf{S}$ and the Tsui-Wen entangler $\sf{T}^2$, we can define the following operator:
\begin{equation}
\mathsf{D}_\text{3F} \coloneq \mathsf{S} \sf{T}^2 \mathsf{S} \sf{T}^2 \sf{S}^{\dagger}
= \mathsf{S} \sf{T}^2 \mathsf{S} \sf{T}^2 \sf{S} t^{-1} \,.
\end{equation}
We refer to this operator as the 3-fermion Kramers-Wannier-Wegner operator. We note that this same definition works for the corresponding QCA on $\mc{A}_{\Z_2^{(1)}} / \mathfrak{I}$: $\aKWW$ and $\aTW^2$. Composing them in the same order yields the 3-fermion KWW QCA $\athreeF$, i.e.,
\begin{equation}
\begin{aligned}
\athreeF
&\coloneq (\aKWW \circ \aTW^2)^2 \circ \aKWW^{-1} \\
&= (\aKWW \circ \aTW^2)^2 \circ \aKWW \circ \alpha_t^{-1}.
\end{aligned}
\end{equation}
The relation to the 3-fermion QCA in~\cite[Appendix E]{Shirley_2022} will be discussed in detail in Appendix~\ref{sec: Equivalent definition of 3-fermion KWW operator}.

Based on the above definition, we can compute the ``action'' of $\mathsf{D}_\text{3F}$ on $\mathbb{Z}_2$ 1-form symmetric local operators. This can be understood in two ways. For example, for the KWW duality, we can write it in terms of the duality operator (not unitary), or in terms of the QCA
\be
\sf{S} X_f = Z_{\delta t^{\frac{1}{2}} \bm{f}} \sf{S} \,, \quad \aKWW(X_f + \mathfrak{I}) = Z_{\delta t^{\frac{1}{2}} \bm{f}} + \mathfrak{I} \,.
\ee
Both are a little cumbersome and excessively verbose, so either version may be abbreviated as
\be
X_f \xmapsto{\sf{S}} Z_{\delta t^{\frac{1}{2}} \bm{f}} \,.
\ee

To compute the action of $\sf{D}_\text{3F}$, we first compute the action of $\sf{S}\sf{T}^2$ on $X_f$ and $Z_{\delta \bm{e}}$ as follows:
\begin{equation}
\begin{aligned}
X_f &\xmapsto{\sf{S}\sf{T}^2} Z_{\delta t^{\frac{1}{2}}(\bm{f})} X_{t(f)} X_f \, , \\
Z_{\delta \bm{e}} &\xmapsto{\sf{S}\sf{T}^2} X_{t^{\frac{1}{2}}(e)} \,.
\end{aligned}
\end{equation}
By applying the above equation repeatedly, we find
\begin{equation}
\begin{aligned}
X_f &\xmapsto{\sf{S}\sf{T}^2\sf{S}\sf{T}^2} Z_{\delta t^{\frac{3}{2}}(\bm{f})} Z_{\delta t^{\frac{1}{2}}(\bm{f})} X_{t^2(f)} X_{t(f)} X_{f}, \\
Z_{\delta \bm{e}} &\xmapsto{\sf{S}\sf{T}^2\sf{S}\sf{T}^2} Z_{\delta t(\bm{e})} X_{t^{\frac{3}{2}}(e)} X_{t^{\frac{1}{2}}(e)}.
\end{aligned}
\end{equation}
Then, by acting with $\sf{S}\sf{T}^2\sf{S}\sf{T}^2$ on $X_f \xmapsto{\sf{S}^\dag} Z_{\delta t^{-\frac{1}{2}} (\bm{f})}$ and $Z_{\delta \bm{e}} \xmapsto{\sf{S}^\dag} X_{t^{-\frac{1}{2}}(\bm{e})}$, we obtain
\begin{equation}
\begin{aligned}
X_f &\xmapsto{\sf{D}_\text{3F}} Z_{\delta t^{\frac{1}{2}}(\bm{f})} X_{t(f)} X_f, \\
Z_{\delta \bm{e}} &\xmapsto{\sf{D}_\text{3F}} Z_{\delta t(\bm{e})} Z_{\delta \bm{e}} X_{t^{\frac{3}{2}}(e)} X_{t^{\frac{1}{2}}(e)} X_{t^{-\frac{1}{2}}(e)}.
\end{aligned}
\end{equation}
The action of $\sf{D}_\text{3F}$ can also be expressed more concisely by using the fermionic flux operators as follows:
\begin{equation}
X_f \xmapsto{\sf{D}_\text{3F}} \tilde{u}_{\delta t^{\frac{1}{2}}(\bm{f})}, \quad
\tilde{u}_{\delta \bm{e}} \xmapsto{\sf{D}_\text{3F}} X_{t^{\frac{1}{2}}(e)}.
\label{eq: 3F KWW action}
\end{equation}
Here, we used \eqref{eq: tilde u} and $X_{\partial c} = 1$.
The above action is illustrated in Figure~\ref{fig: 3F KWW}.
\begin{figure}
\centering
\adjincludegraphics[valign=c, scale=0.9]{fig/Z_2_KWW1.pdf}
$\xmapsto{\sf{D}_\text{3F}}$
\adjincludegraphics[valign=c, scale=0.9]{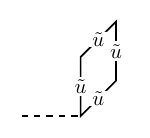}
$\xmapsto{\sf{D}_\text{3F}}$
\adjincludegraphics[valign=c, scale=0.9]{fig/Z_2_KWW3.pdf}
\caption{The action of the 3-fermion Kramers-Wannier-Wegner QCA on $\mathbb{Z}_2$ 1-form symmetric operators. The figure is illustrated on the dual lattice.}
\label{fig: 3F KWW}
\end{figure}
We note the similarity between the action of $\mathsf{D}_\text{3F}$ in \eqref{eq: 3F KWW action} and that of $\mathsf{S}$ in \eqref{eq: Z2 KWW}.
The only difference is that $\mathsf{D}_\text{3F}$ exchanges $X_f$ and $\tilde{u}_{\delta \bm{e}}$, whereas $\mathsf{S}$ exchanges $X_f$ and $Z_{\delta \bm{e}}$.
As in the case of the KWW operator, $\athreeF^2 = \alpha_t$ where $\alpha_t$ is the QCA corresponding to lattice translation.

Before proceeding, we remark that there is no $\mathbb{Z}_2$ 1-form symmetric QCA $\sf{V}$ (understood as a locality preserving unitary) on the full algebra $\mathcal{A} = \langle X_f, Z_f \mid f \in F \rangle$ such that $\mathsf{D}_\text{3F} = \sf{V} \mathsf{S}$ or $\mathsf{D}_\text{3F} = \mathsf{S} \sf{V}$.
We will show this by contradiction. 
To this end, we suppose that the 3-fermion KWW operator can be written as $\mathsf{D}_\text{3F} = \sf{V} \mathsf{S}$, where $\sf{V}$ is a $\mathbb{Z}_2$ 1-form symmetric unitary.
Since $\mathsf{S}$ and $\sf{D}_\text{3F}$ map the Pauli $X$ operator into $Z_{\delta \bm{e}}$ and $\tilde{u}_{\delta \bm{e}}$ respectively, the unitary $\sf{V}$ should map $Z_{\delta \bm{e}}$ to $\tilde{u}_{\delta \bm{e}}$ modulo the symmetry operators $X_{\partial c}$.
On the other hand, since $\sf{V}$ is supposed to be $\mathbb{Z}_2$ 1-form symmetric, it preserves $X_{\partial c}$ by assumption.
Therefore, $\sf{V}$ should map the states stabilized by $\{Z_{\delta \bm{e}}, X_{\partial c}\}$ to those stabilized by $\{\tilde{u}_{\delta \bm{e}}, X_{\partial c}\}$.
In other words, $\sf{V}$ maps the ground states of the following commuting projector Hamiltonians into each other:
\begin{align}
H &= -\sum_{c} X_{\partial c} - \sum_{e} Z_{\delta \bm{e}},
\label{eq: bosonic TC}
\\
H^{\prime} &= -\sum_{c} X_{\partial c} - \sum_{e} \tilde{u}_{\delta \bm{e}}.
\label{eq: fermionic TC}
\end{align}
However, since these Hamiltonians realize different topological orders,\footnote{Equation~\eqref{eq: bosonic TC} is the toric code Hamiltonian, whereas equation~\eqref{eq: fermionic TC} is the Hamiltonian of the ``fermionic" toric code, which is obtained by the bosonization of a trivial fermionic Hamiltonian \cite{Chen:2018nog, Chen_2023}.} their ground states cannot be connected by locality-preserving unitaries.
Therefore, there does not exist a $\mathbb{Z}_2$ 1-form symmetric unitary $\sf{V}$ such that $\sf{D}_\text{3F} = \sf{V} \mathsf{S}$.
Similarly, one can also show that there is no $\mathbb{Z}_2$ 1-form symmetric unitary $\sf{V}$ such that $\sf{D}_\text{3F} = \mathsf{S} \sf{V}$.

The above result implies that $\mathsf{D}_\text{3F} \mathsf{S}^{\dagger}$ and $\mathsf{S}^{\dagger} \sf{D}_\text{3F}$ cannot be extended to QCAs on the full algebra $\mathcal{A}$.
Nevertheless, as we will see in the next subsection, $\sf{D}_\text{3F} (\mathsf{S}\sf{T}^2)^{\dagger}$ can be extended to a QCA on the full algebra.

\subsection{Framing QCA is generated by $\mathsf{S}$ and $\sf{T}^2$}
\label{sec: Symmetry generated by D and U}
In this subsection, we will show that the KWW operator $\mathsf{S}$ and the Tsui-Wen entangler $\sf{T}^2$ generate the framing QCA $\sf{U}_{\text{framing}}$.
More specifically, we will show the following equality of QCAs on the $\mathbb{Z}_2$ 1-form symmetric algebra $\mathcal{A}_{\mathbb{Z}_2^{(1)}} / \mathfrak{I}$:
\be\label{eq: DUDUDUD}
\paren{\aKWW \circ \aTW^2}^2 \circ \aKWW^{-1} \circ \aTW^2 \circ \aKWW^{-1} = \alpha_\text{framing}
\ee
where $\alpha_\text{framing}$ is the framing QCA as defined earlier.
Equivalently, we have
\begin{equation}
\athreeF = \alpha_{\text{framing}} \circ \aKWW \circ \aTW^2 \,.
\end{equation}
This implies that $\athreeF \circ (\aKWW \circ \aTW^2)^{-1}$, or equivalently the QCA defined by the operators $\mathsf{D}_\text{3F}(\mathsf{S} \sf{T}^2)^{\dagger}$, can be extended to the framing QCA on the full algebra $\mathcal{A}$. Equation~\eqref{eq: DUDUDUD} is also equivalent to
\begin{equation}\label{eq: DU4 lattice}
 \paren{\aKWW \circ \aTW^2}^4 = \alpha_t^2 \circ \alpha_\text{framing} \circ \aTW^2
\end{equation}
where we used $\aKWW^{-1} = \aKWW \circ \alpha_t^{-1}$ and the translation invariance of $\aKWW$ and $\aTW^2$.
The relation to the corresponding fusion rules in the continuum will be discussed in Section~\ref{sec: comparison Z2}.

To show \eqref{eq: DUDUDUD}, we first rewrite the left-hand side of \eqref{eq: DUDUDUD} as
\be
\paren{\aKWW \aTW^2 \aKWW^{-1}} \aTW^2 \paren{\aKWW \aTW^2 \aKWW^{-1}}.
\ee
We can also reformulate it in terms of the action of $(\sf{S} \sf{T}^2 \sf{S}^\dag) \sf{T}^2 (\sf{S} \sf{T}^2 \sf{S}^\dag)$.
We note that $\sf{S} \sf{T}^2 \sf{S}^\dag$ leaves the Pauli $X$ operator invariant, i.e.,
\begin{equation}
X_f \xmapsto{\sf{S} \sf{T}^2 \sf{S}^\dag} X_f,
\label{eq: DUD X}
\end{equation}
while it exchanges the bosonic and fermionic flux operators:
\begin{equation}
Z_{\delta \bm{e}} \xmapsto{\sf{S} \sf{T}^2 \sf{S}^\dag} \tilde{u}_{\delta \bm{e}}, \quad
\tilde{u}_{\delta \bm{e}} \xmapsto{\sf{S} \sf{T}^2 \sf{S}^\dag} Z_{\delta \bm{e}}.
\label{eq: DUD Z}
\end{equation}
In particular, $\sf{S} \sf{T}^2 \sf{S}^\dag$ exchanges fermionic flux loops and bosonic ones.
On the other hand, the Tsui-Wen entangler $\sf{T}^2$ defined by \eqref{eq: UTW def} measures the framing of the bosonic flux loop.
Thus, the composite operator $(\sf{S} \sf{T}^2 \sf{S}^\dag) \sf{T}^2 (\sf{S} \sf{T}^2 \sf{S}^\dag)$ measures the framing of the fermionic flux loop, which is exactly what $\sf{U}_{\text{framing}}$ does.
This suggests that equation~\eqref{eq: DUDUDUD} holds.
Indeed, by a direct computation, we can show that $(\sf{S} \sf{T}^2 \sf{S}^\dag) \sf{T}^2 (\sf{S} \sf{T}^2 \sf{S}^\dag)$ acts on $X_f$ and $\tilde{u}_{\delta \bm{e}}$ as
\begin{multline}
 X_f \xmapsto{\sf{S} \sf{T}^2 \sf{S}^\dag} X_f \xmapsto{\sf{T}^2} X_f Z_{\delta t^{\frac{1}{2}}(\bm{f})} Z_{\delta t^{-\frac{1}{2}}(\bm{f})} \\
 \xmapsto{\sf{S} \sf{T}^2 \sf{S}^\dag} X_f \tilde{u}_{\delta t^{\frac{1}{2}}(\bm{f})} \tilde{u}_{\delta t^{-\frac{1}{2}}(\bm{f})} \,,
\end{multline}
and
\begin{equation}\label{eq: DUDUDUD X u}
 \tilde{u}_{\delta \bm{e}} \xmapsto{\sf{S} \sf{T}^2 \sf{S}^\dag} Z_{\delta \bm{e}} \xmapsto{\sf{T}^2} Z_{\delta \bm{e}} \xmapsto{\sf{S} \sf{T}^2 \sf{S}^\dag} \tilde{u}_{\delta \bm{e}}
\end{equation}
Here, we used \eqref{eq: DUD X}, \eqref{eq: DUD Z}, \eqref{eq: UZ}, and \eqref{eq: UX}.
The above equation agrees with the action of $\sf{U}_{\text{framing}}$ discussed in Section~\ref{sec: 3-fermion QCA 1-qubit version}.
Thus, we find that \eqref{eq: DUDUDUD} holds.

\subsection{Group of QCAs generated by $\mathsf{S}$ and $\sf{T}^2$}
\label{sec: The group of QCAs generated by D and U}
In this subsection, we discuss the group of QCAs generated by $\mathsf{S}$ and $\sf{T}^2$ modulo lattice translations and finite-depth circuits built of symmetric local gates.
%We refer the reader to Section~\ref{sec: continuum fusion rules Z2} for the relation to the corresponding group in continuum field theories.

We first clarify the definition of the group of QCAs we will discuss below. Recall that we already defined $\operatorname{QCA}(\mc{A}_{\Z_2^{(1)}} / \mathfrak{I})$. However, this time we are going to restrict to the translation invariant subgroup, denoted by $\mathsf{QCA}_{\mathbb{Z}_2^{(1)}}$.
The group $\mathsf{QCA}_{\mathbb{Z}_2^{(1)}}$ contains a subgroup consisting of all translation invariant finite-depth circuits with $\mathbb{Z}_2$ 1-form symmetric local gates. Each layer is to be understood as conjugation by some collection of unitaries $u_i + \mathfrak{I}$ (in the quotient, it is a unitary if $u_i^{*} u_i + \mathfrak{I} = 1 + \mathfrak{I}$), supported on disjoint homologically trivial regions of the lattice\footnote{Essentially, solid balls, or deformed balls, but not solid tori, for example.}.
We denote this subgroup by $\mathsf{FDC}_{\mathbb{Z}_2^{(1)}}$.
We note that $\mathsf{FDC}_{\mathbb{Z}_2^{(1)}}$ is a normal subgroup of $\mathsf{QCA}_{\mathbb{Z}_2^{(1)}}$ because any element of $\mathsf{QCA}_{\mathbb{Z}_2^{(1)}}$ maps a symmetric local gate to another symmetric local gate.
Futhermore, $\mathsf{QCA}_{\mathbb{Z}_2^{(1)}}$ also contains a subgroup generated by the lattice translations.
We denote this subgroup by $\mathsf{Shift}$.
By definition, $\mathsf{Shift}$ is a central subgroup of $\mathsf{QCA}_{\mathbb{Z}_2^{(1)}}$.
Since $\mathsf{FDC}_{\mathbb{Z}_2^{(1)}} \times \mathsf{Shift}$ is a normal subgroup of $\mathsf{QCA}_{\mathbb{Z}_2^{(1)}}$, the quotient
\begin{equation}
\faktor{\mathsf{QCA}_{\mathbb{Z}_2^{(1)}}}{(\mathsf{FDC}_{\mathbb{Z}_2^{(1)}} \times \mathsf{Shift})}
\end{equation}
is also a group.
An element of this group will be denoted by $[\alpha]$, where $\alpha$ is a translation invariant QCA on $\mathcal{A}_{\mathbb{Z}_2^{(1)}} / \mathfrak{I}$.
This group contains a subgroup generated by $[\aKWW]$ and $[\aTW^2]$, which we denote by
\begin{equation}
\langle [\aKWW], [\aTW^2] \rangle \,.
\end{equation}
This subgroup is our focus in this subsection.

Due to the results in the previous subsections, the generators $[\aKWW]$ and $[\aTW^2]$ obey the following relations:
\begin{equation}
\begin{aligned}
[\aKWW]^2 = 1 \,, \qquad 
[\aTW^2]^2 = 1 \,, \\
[\aKWW \circ \aTW^2]^4 = [\alpha_{\text{framing}} \circ \aTW^2] \,.
\end{aligned}
\label{eq: Z2 ext D8}
\end{equation}
In what follows, we will argue that $[\alpha_{\text{framing}} \circ \aTW^2]$ is a central element of order $2$ in  $\langle [\aKWW], [\aTW^2] \rangle$.
This result shows that the group generated by $\aKWW$ and $\aTW^2$ is consistent with the group generated by $S$ and $T^2$ in the continuum (cf. Section~\ref{sec: continuum fusion rules Z2}).

In essence, our argument will rely on one assumption which we believe to be physically reasonable: that the classification of $\Z_2^{(1)}$-SPT phases is complete, and given by the entanglers $\sf{T}^k$ for $k \in \Z_4$.\footnote{In fact, this assumption will also be crucial in Section~\ref{sec:derivation_of_STSTST_Y}.}

%Thus, the relations in Equation~\eqref{eq: Z2 ext D8} imply that the group $\langle \aKWW, \aTW^2 \rangle$ is isomorphic to a central extension of $D_8$ by $\mathbb{Z}_2$, where $D_8$ is the dihedral group of order 8.\footnote{Recall that $D_8$ has presentation $\langle r, s \mid s^2=r^4=1, srs=r^{-1} \rangle$. By defining $x \coloneq s$ and $y \coloneq sr$, we find a different presentation with relations $x^2=y^2=(xy)^4=1$, which makes it much clearer that $x \sim \aKWW$ and $y \sim \aTW^2$, and the central extension occurs in the relation $(xy)^4 = c$ with $c$ central and $c^2 = 1$.}
%This group is isomorphic to the group generated by the operations $S$ and $T^2$ in the continuum, which we discussed in Section~\ref{sec: continuum fusion rules Z2}.

\subsubsection{Order of $[\alpha_{\text{framing}} \circ \aTW^2]$}

We first argue that $[\alpha_{\text{framing}} \circ \aTW^2]$ has order 2, namely, $(\alpha_{\text{framing}} \circ \aTW)^2$ is an element of $\mathsf{FDC}_{\mathbb{Z}_2^{(1)}} \times \mathsf{Shift}$. We begin by noting that $\sf{T}^2$ is actually a finite depth circuit, though its gates are \emph{not} $\Z_2^{(1)}$-symmetric.
Then
\begin{equation}
\sf{U}_{\text{framing}} \sf{T}^2 \sf{U}_{\text{framing}} = \sf{U}_{\text{framing}} \sf{T}^2 \sf{U}_{\text{framing}}^{-1} \,,
\end{equation}
is also a finite-depth circuit because it is the conjugate of a finite-depth circuit by a locality-preserving unitary. Again, the local gates of the above finite-depth circuit \emph{cannot} be $\mathbb{Z}_2^{(1)}$-symmetric: if they were, we could undo the previous mapping and find symmetric gates for $\sf{T}^2$.
Furthermore, the above FDC has order 2, i.e.,
\begin{equation}
(\sf{U}_{\text{framing}} \sf{T}^2 \sf{U}_{\text{framing}})^2 = 1 \,.
\end{equation}
This implies that $\sf{U}_{\text{framing}} \sf{T}^2 \sf{U}_{\text{framing}}$ is an entangler for a non-trivial SPT phase of order $2$.
Thus, under the assumption that the classification of SPT phases is complete, its action by conjugation must be equivalent up to translations and an FDC (with \emph{symmetric} gates) to the action by conjugation of $\sf{T}^2$, or in QCA language,
\begin{equation}
\alpha_\text{framing} \circ \aTW^2 \circ \alpha_\text{framing} \circ \aTW^{-2} \in \mathsf{FDC}_{\mathbb{Z}_2^{(1)}} \times \mathsf{Shift} \,.
\label{eq: order of Uframing U}
\end{equation}
The above equation implies
\begin{equation}
[\alpha_{\text{framing}} \circ \aTW^2]^2 = 1.
\end{equation}

We emphasize that \eqref{eq: order of Uframing U} relies on the conjectural classification of SPT phases.

\subsubsection{Centrality of $[\alpha_{\text{framing}} \circ \aTW^2]$}
Next, we argue that $[\beta] \coloneqq [\alpha_{\text{framing}} \circ \aTW^2]$ is in the center of $\langle [\aKWW], [\aTW^2] \rangle$.
It suffices to show that $[\beta]$ commutes with the generators, $[\aKWW]$ and $[\aTW^2]$.
Based on \eqref{eq: Z2 ext D8}, one can compute the commutation relations as follows:
\begin{equation}\label{eq: Uframing U centrality}
 \begin{aligned}
  [\aKWW] [\beta] [\aKWW]^{-1} &= [\aTW^2 \circ \aKWW]^4 \\
  &= [\aTW^2] [\beta] [\aTW^2] \\
  &= [\aTW^2 \circ \alpha_\text{framing}] \\
  [\aTW^2] [\beta] [\aTW^2]^{-1} &= [\aTW^2 \circ \alpha_\text{framing}] \,.
 \end{aligned}
\end{equation}
On the other hand, equation~\eqref{eq: order of Uframing U} implies
\be
[\aTW^2 \circ \alpha_\text{framing}] = [\alpha_{\text{framing}} \circ \aTW^2] = [\beta]
\ee
By plugging this into \eqref{eq: Uframing U centrality}, we find
\begin{equation}
\begin{aligned}
[\aKWW] [\beta] [\aKWW]^{-1} &= [\beta], \\
[\aTW^2] [\beta] [\aTW^2]^{-1} &= [\beta].
\end{aligned}
\end{equation}
This shows that $[\beta] = [\alpha_{\text{framing}} \circ \aTW^2]$ is central in $\langle [\aKWW], [\aTW^2] \rangle$.

We remark again that this argument is based on the previous result \eqref{eq: order of Uframing U}, which itself is based on the conjectural complete classification of $\Z_2^{(1)}$ SPT phases.

\subsection{Comparison between lattice and continuum}
\label{sec: comparison Z2}
Having computed the fusion rules both on the lattice and in the continuum, we can now compare them and discuss their relation.

As discussed in Section~\ref{sec: Symmetry generated by D and U}, the KWW QCA $\alpha_{\mathrm{KWW}}$ and the Tsui-Wen entangler $\alpha_{\mathrm{TW}}^2$ obey the following fusion rule:
\begin{equation}
(\alpha_{\mathrm{KWW}} \circ \alpha_{\mathrm{TW}}^2)^4 = \alpha_t^2 \circ \alpha_{\mathrm{framing}} \circ \alpha_{\mathrm{TW}}^2.
\label{eq: comparison lat Z2}
\end{equation}
In the continuum, the QCA $\alpha_{\mathrm{KWW}} \circ \alpha_{\mathrm{TW}}^2$ corresponds to $\opaction{\mc{ST}^2}$, which is the (normalized) action of the quaternity defect $\mathcal{S}\mathcal{T}^2$ on local operators.
As computed in Section~\ref{sec: continuum fusion rules Z2}, it obeys the $\mathbb{Z}_4$ fusion rule:
\be\label{eq: comparison cont Z2}
\opaction{(\mc{ST}^2)^4} = 1
\ee
% \begin{equation}
% (\alpha_{\mathcal{S}\mathcal{T}^2})^4 = 1.
% \label{eq: comparison cont Z2}
% \end{equation}
By comparing \eqref{eq: comparison lat Z2} and \eqref{eq: comparison cont Z2}, we find that the fusion rule on the lattice differs from its counterpart in the continuum.
Specifically, the lattice fusion rule involves a non-trivial QCA $\alpha_t^2 \circ \alpha_{\mathrm{framing}} \circ \alpha_{\mathrm{TW}}^2$, while the continuum fusion rule does not.
Thus, equation~\eqref{eq: comparison lat Z2} is an example of a lattice fusion rule mixing with a non-trivial QCA.
This result shows that the non-invertible symmetry generated by the corresponding operator $\mathsf{S}\mathsf{T}^2$ mixes with a non-trivial QCA.

We emphasize that the appearance of the 3-fermion framing QCA $\alpha_{\mathrm{framing}}$ on the lattice is not a coincidence.
It can be traced back to the relation
\begin{equation}
(ST^2)^4 = Y^4,
\end{equation}
where $Y^4$ is the stacking of the Crane-Yetter-Walker-Wang TQFT based on the 3-fermion MTC.
On the lattice, the above relation implies that $(\alpha_{\mathrm{KWW}} \circ \alpha_{\mathrm{TW}}^2)^4$ is a non-trivial QCA in the 3-fermion equivalence class,\footnote{More precisely, the QCA $(\alpha_{\mathrm{KWW}} \circ \alpha_{\mathrm{TW}}^2)^4$ on the $\mathbb{Z}_2$ 1-form symmetric algebra can be extended to a non-trivial QCA on the full algebra, which is in the 3-fermion equivalence class.} which entangles the ground state of the above TQFT from the trivial product state.
On the other hand, in the continuum, the above relation gives rise to the fusion rule~\eqref{eq: DU4 continuum defect}, which involves the 3-fermion TQFT as a fusion coefficient.
This TQFT coefficient disappears in~\eqref{eq: comparison cont Z2} because the decoupled 2+1d TQFT act trivially on any local operator.

\section{QCA for $\Z_2^{(1)}$ SPT Entangler}\label{sec:qca_for_root_entangler}

In this section we are going to write down an SPT entangler for the $\Z_2^{(1)}$ symmetry in 3+1d, in its unitary form we call it $\sf{T}$, precisely because the order 2 entangler introduced in the previous section was $\sf{T}^2$.

\subsection{Preliminaries}

Let us introduce some notation and definitions.

We define $C^n_k \coloneqq C^n(M_3; \Z_k)$ the space of cochains of a triangulated/cellulated 3-manifold $M$, and similarly for $B^\bullet, Z^\bullet$. We omit the space manifold, because it will always be the same one. Let us also write $C^n \coloneqq C^n(M_3; \Z)$.

From the reduction modulo $k, \Z \rightarrow \Z_k$, we have the homomorphism
\be
\lbb - \rbb_k : C^n \rightarrow C^n_k \,.
\ee

% In order to be as explicit as possible at every step, we are also going to introduce the notation $\delta_k : C^n_k \rightarrow C^{n+1}_k$, so that it is always clear which coefficients we are working with. Similarly, $\delta : C^n \rightarrow C^{n+1}$, so no subscript means the coefficients are integers. It can be checked that the following is true:
We write $\delta$ for the coboundary. One can check that the following is true:
\be
\delta \lbb \bm{c} \rbb_k = \lbb \delta \bm{c} \rbb_k \,, \qquad \forall \bm{c} \in C^n_k \,.
\ee

Next, we need to introduce a partial ``inverse'' to the reduction mod $k$, the integer lift. We define
\be
\lbbb - \rbbb_k : C^n_k \rightarrow C^n \,,
\ee
with the aid of the following identification: $\Z_k \equiv \lbbb 0, 1, ..., k - 1 \rbbb \subset \Z$. Let $\bm{c} \in C^n_k$ be a cochain. Firstly, note that $C^n \coloneqq \operatorname{Hom}(C_n, \Z)$, where $C_n$ is a free abelian group (here $C_n$ denotes the chain complex, not coefficients mod $n$. The notation is ambiguous, but fortunately we will not need to reference $C_n$ anymore afterwards, so this remark applies only to this small section; this and the next few paragraphs). Therefore, it is enough to specify the image $\lbbb \bm{c} \rbbb$ on the generators, and furthermore, by freeness, any such assignment is valid and gives a homomorphism.

Hence, let $\sigma^n$ be an $n$-cell or $n$-simplex (such as what appears in the definition $C_n \coloneqq \bigoplus_{\sigma^n} \Z$). Define
\be
\lbbb \bm{c} \rbbb_k (\sigma^n) \coloneqq \lbbb \bm{c}(\sigma^n) \rbbb_k \,,
\ee
where, on the right hand side, we use the identification $\Z_k \subset \Z$, as explained earlier.

This is a lifting map, in the sense that
\be
\lbb \lbbb \bm{c} \rbbb_k \rbb_k = \bm{c} \,.
\ee

But unfortunately it is much less well behaved than the reduction mod $k$. It is not a homomorphism, and it does not play well with coboundaries. In fact, let us codify the way in which those properties fail. Let $\bm{a}, \bm{b} \in C^n_k$ be cochains. $\lbbb - \rbbb_k$ is not a homomorphism, and thus
\be
\lbbb \bm{a} + \bm{b} \rbbb_k - \lbbb \bm{a} \rbbb_k - \lbbb \bm{b} \rbbb_k \,,
\ee

is not 0. Reduce the previous expression mod $k$:
\begin{multline}
 \lbb \lbbb \bm{a} + \bm{b} \rbbb_k - \lbbb \bm{a} \rbbb_k - \lbbb \bm{b} \rbbb_k \rbb_k \\
 = \lbb \lbbb \bm{a} + \bm{b} \rbbb_k \rbb_k - \lbb \lbbb \bm{a} \rbbb_k \rbb_k - \lbb \lbbb \bm{b} \rbbb_k \rbb_k \\
 = \bm{a} + \bm{b} - \bm{a} - \bm{b} = 0 \,.
\end{multline}

Therefore, the following is well defined:
\be
\Sigma(\bm{a}, \bm{b}) \coloneqq \frac{\lbbb \bm{a} + \bm{b} \rbbb_k - \lbbb \bm{a} \rbbb_k - \lbbb \bm{b} \rbbb_k}{k} \,,
\ee

as a map $C^\bullet_k \times C^\bullet_k \rightarrow C^\bullet$. It is the \emph{sum fixing map}.

Similarly, define
\be
\Delta(\bm{a}) \coloneqq \frac{\lbbb \delta \bm{a} \rbbb_k - \delta \lbbb \bm{a} \rbbb_k}{k} \,,
\ee

as a map $C^\bullet_k \rightarrow C^\bullet$, \emph{the coboundary fixing map}.

\subsection{Defining the SPT Entangler}

In this section we are going to define the SPT entangler as a unitary. Its action on operators will be computed later.

Like before, suppose we have $M_3$, a cellulated 3-manifold without boundary, and for now we assume the cellulation is finite. This will not work in the case of $\mathbb{R}^3$, which is what we are interested in, but we will see that nevertheless we can define an action on the local operators, which will give us a QCA on $\mc{A}$. We will also see that since the QCA is $\Z_2^{(1)}$-symmetric, this QCA induces a QCA on $\mc{A}_{\Z_2^{(1)}} / \mathfrak{I}$.

By analogy with the constructions from the previous section, we place a qubit on each face of the cellulation. The Hilbert space has basis
\be
\ket{\bm{b}} \coloneqq \bigotimes_{f \text{ face}} \ket{\bm{b}(f)} \,, \quad \text{for } \bm{b} \in C^2_2
\ee

The general form of the entangler is the following unitary
\be
\sf{T}^m \coloneqq \sum_{\bm{b} \in C^2_2} e^{\frac{2 \pi i m}{4} \int_{M_3} \phi_3[\bm{b}]} \ket{\bm{b}} \bra{\bm{b}} \,,
\ee

where $\phi_3$ is some map $C^2_2 \rightarrow C^3_4$. It is also immediately clear that
\be\label{eq:T4action}
\sf{T}^4 = 1 \,.
\ee

As it turns out, it is much easier to write this map when working on the dual cellulation; that which has qubits on the edges instead of on the faces.

Based on the results in \cite{Tsui:2019ykk}, we propose
\be\label{eq:phi3}
\phi_3[\bm{\hat{a}}] = \lbbb \bm{\hat{a}} \rbbb_2 \smile \delta \lbbb \bm{\hat{a}} \rbbb_2 \pm \delta \lbbb \bm{\hat{a}} \rbbb_2 \smile_1 \lbbb \delta \bm{\hat{a}} \rbbb_2 \,,
\ee
where $\bm{\hat{a}}$ is a 1-cochain on the dual cellulation. Trivially due to the duality, 2-cochain $\bm{b}$ is in bijection with dual 1-cochains $\bm{\hat{a}}$, which is why we can work on the dual cellulation. Unfortunately it is unclear whether this construction could be translated to the direct lattice; in particular, the cup products are not easy to translate because all the degrees would end up mismatched.

In order to ensure that this unitary is $\Z_2^{(1)}$-symmetric, we need to check that $\phi_3$ is gauge invariant (unchanged under $\bm{\hat{a}} \rightarrow \bm{\hat{a}} + \delta \bm{\hat{s}}$), or rather, that the resulting phase is invariant. That means that $\phi_3[\bm{\hat{a}} + \delta \bm{\hat{s}}] - \phi_3 [\bm{\hat{a}}]$ should be a coboundary, mod 4, or $\phi_3[\bm{\hat{a}} + \delta \bm{\hat{s}}] - \phi_3 [\bm{\hat{a}}] \in B^3_4$.

Let us remark that the sign in the definition does not actually matter. Indeed, the difference between the two signs is:
\begin{align}
 2 \delta \lrb{\bm{\hat{a}}} \smile_1 \lrb{\delta \bm{\hat{a}}} &= 2 \delta \lrb{\bm{\hat{a}}} \smile_1 (\delta \lrb{\bm{\hat{a}}} + 2 \Delta(\bm{\hat{a}})) \nonumber \\
 &= 2 \delta \lrb{\bm{\hat{a}}} \smile_1 \delta \lrb{\bm{\hat{a}}} + 4 (\dots) \,.
\end{align}

We remark that from now on we omit the subscript $\{\}_2$ when there is no ambiguity. Now, note that by the Leibniz rule we have
\begin{align}
 \delta (\delta \lrb{\bm{\hat{a}}} \smile_2 \delta \lrb{\bm{\hat{a}}}) &= (-1)^{2 + 2 - 2} \delta \lrb{\bm{\hat{a}}} \smile_1 \delta \lrb{\bm{\hat{a}}} \nonumber \\
 &\hspace{2em} + (-1)^{2 \cdot 2 + 2 + 2} \delta \lrb{\bm{\hat{a}}} \smile_1 \delta \lrb{\bm{\hat{a}}} \nonumber\\
 &= 2 \delta \lrb{\bm{\hat{a}}} \smile_1 \delta \lrb{\bm{\hat{a}}} \,,
\end{align}
from which we conclude
\be
2 \delta \lrb{\bm{\hat{a}}} \smile_1 \lrb{\delta \bm{\hat{a}}} = \delta (\dots) + 4(\dots) \,.
\ee

\subsection{Gauge invariance of \texorpdfstring{$\phi_3[\bm{\hat{a}}]$}{$\phi_3[\hat{a}]$}}

In this section we are going to verify that Eq.~\eqref{eq:phi3} is gauge invariant (still working on the dual lattice). That is, that $\phi_3$ is unchanged under $\bm{\hat{a}} \to \bm{\hat{a}} + \delta \bm{\hat{s}}$, at least up to coboundaries and multiples of 4. This is what ensures that the phase defined by $\frac{\pi i}{2} \int \phi_3[\bm{\hat{a}}]$ is $\Z_2^{(1)}$-symmetric.

Let $\bm{\hat{a}} \in C^1_2$, $\bm{\hat{s}} \in C^0_2$ be cochains. Define $A \coloneqq \lbbb \hat{a} \rbbb_2, S \coloneqq \lbbb \bm{\hat{s}} \rbbb_2$. We use uppercase letters to distinguish integer cochains from the ones with $\Z_2$ coefficients. However, despite the fact that they are not bold, and do not have a hat $\hat{}$, they are still cochains on the dual lattice. Then, we note that
\be
\lrb{\bm{\hat{a}} + \delta \bm{\hat{s}}} = A + \delta S + 2 (\Delta(\bm{\hat{s}}) + \Sigma(\bm{\hat{a}}, \delta \bm{\hat{s}})) \,,
\ee
and for convenience we define $P \coloneqq \Delta(\bm{\hat{s}}) + \Sigma(\bm{\hat{a}}, \delta \bm{\hat{s}})$.

For the first term:
\begin{multline}
 \lbbb \bm{\hat{a}} + \delta \bm{\hat{s}} \rbbb \smile \delta \lbbb \bm{\hat{a}} + \delta \bm{\hat{s}} \rbbb \\
 = \lb A + \delta S + 2 P \rb \smile \lb \delta A + 2 \delta P \rb
\end{multline}

For the second term:
\be
\delta \lrb{\bm{\hat{a}} + \delta \bm{\hat{s}}} \smile_1 \lrb{\delta \bm{\hat{a}}} \\
= \lb \delta A + 2 \delta P \rb \smile_1 \lrb{\delta \bm{\hat{a}}}
\ee

By expanding everything in $\phi_3[\bm{\hat{a}} + \delta \bm{\hat{s}}]$ (and taking the positive sign), we get
\begin{multline}
 \phi_3[\bm{\hat{a}} + \delta \bm{\hat{s}}] = A \smile \delta A + \delta S \smile \delta A \\
 + 2 (A \smile \delta P + \delta S \smile \delta P + P \smile \delta A) \\
 + \delta A \smile_1 \lrb{\delta \bm{\hat{a}}} + 2 \delta P \smile_1 \lrb{\delta \bm{\hat{a}}}
\end{multline}

We observe that
\be
\delta (S \smile \delta A) = \delta S \smile \delta A \,,
\ee
and similarly $\delta S \smile \delta P$ is also a coboundary. If we subtract $\phi_3[\bm{\hat{a}}]$, we get
\begin{multline}
 \phi_3[\bm{\hat{a}} + \delta \bm{\hat{s}}] - \phi_3[\bm{\hat{a}}] = 2 (A \smile \delta P + P \smile \delta A \\
 + \delta P \smile_1 \lrb{\delta \bm{\hat{a}}}) + \delta (\dots)
\end{multline}

Since everything that is left is multiplied by 2, and the calculations are being done mod 4, we no longer need to choose lifts. We also use the Leibniz rule again, this time mod 2:
\be
\delta (P \smile_1 \delta A) \equiv \delta P \smile_1 \delta A + P \smile \delta A + \delta A \smile P
\ee

Therefore
\begin{multline}
 \phi_3[\bm{\hat{a}} + \delta \bm{\hat{s}}] - \phi_3[\bm{\hat{a}}] = 2 (A \smile \delta P + \delta A \smile P) + \delta,4 (\dots)
\end{multline}

Finally, we apply the Leibniz rule one last time:
\be
\delta (A \smile P) = \delta A \smile P - A \smile \delta P \,,
\ee
to show that what remains is a coboundary, up to multiples of 4. That is, $\phi_3$ is gauge invariant, up to coboundaries and multiples of 4.

\subsection{Action on Operators}

For now, we keep the simplified notation of uppercase letters being the integer lifts of cochains. We also still drop the subscript in $\lrb{}_2$, because the lift will always be mod 2, at least until Section \ref{sec: QCAs on Zp 1-form symmetric algebra}. All the calculations will be implicitly done mod 4, and up to coboundaries, so we omit those terms as well. Lastly, we will often omit $\smile$, especially in the middle of long calculations, and simply write $\alpha \beta$ for $\alpha \smile \beta$.

As shown earlier, the unitary which describes the generating SPT entangler is
\be
\sf{T} \coloneqq \sum_{\bm{\hat{a}}} e^{\frac{\pi i}{2} \int \phi_3[\bm{\hat{a}}]} \ket{\bm{\hat{a}}} \bra{\bm{\hat{a}}}
\ee
It is generating, in the sense that $\sf{T}^k (-) (\sf{T}^\dag)^k$ for $k = 0, \cdots, 3$ are the SPT entanglers in the $\Z_4$ classification, $\sf{T}^4 = 1$, and $\sf{T}^2$ is the one constructed explicitly in \cite{Tsui:2019ykk}.

It is clear that, since $\sf{T}$ is diagonal in the computational ($Z$) basis,
\be
\sf{T} Z_f \sf{T}^\dag = Z_f
\ee
for any face $f$. Therefore, on the $\Z_2^{(1)}$-symmetric subalgebra, it leaves the $Z_{\delta \bm{e}} = \prod_{f \, \mid \, e \in \partial f } Z_f$ generators invariant.

On the other hand, upon conjugating $X_f$, we see that
\begin{align}
\sf{T} X_f \sf{T}^\dag &= \sum_{\bm{\hat{a}},\bm{\hat{b}}} e^{\frac{\pi i}{2} \int \phi_3[\bm{\hat{b}}] - \phi_3[\bm{\hat{a}}]} \ket{\bm{\hat{b}}} \bra{\bm{\hat{b}}} X_f \ket{\bm{\hat{a}}} \bra{\bm{\hat{a}}} \\
&= \sum_{\bm{\hat{a}}} e^{\frac{\pi i}{2} \int \phi_3[\bm{\hat{a}} + \bm{\hat{f}}] - \phi_3[\bm{\hat{a}}]} \ket{\bm{\hat{a}} + \bm{\hat{f}}} \bra{\bm{\hat{a}}} \\
&= X_f \sum_{\bm{\hat{a}}} e^{\frac{\pi i}{2} \int \phi_3[\bm{\hat{a}} + \bm{\hat{f}}] - \phi_3[\bm{\hat{a}}]} \ket{\bm{\hat{a}}} \bra{\bm{\hat{a}}}
\end{align}

Here, $\bm{\hat{f}}$ is to be understood as a dual 1-cochain ($\Z_2$-valued) which takes the value 1 on the dual edge $\bm{\hat{f}}$, which is dual to the face $f$, and 0 elsewhere. In the second line, we used
\be
\bra{\bm{\hat{b}}} X_f \ket{\bm{\hat{a}}} = \langle \bm{\hat{b}} \mid \bm{\hat{a}} + \bm{\hat{f}} \rangle = \delta_{\bm{\hat{b}}, \bm{\hat{a}}+\bm{\hat{f}}} \,,
\ee
which after summing over $\bm{\hat{b}}$, simply sets $\bm{\hat{b}} = \bm{\hat{a}} + \bm{\hat{f}}$. In the third line, we again used $\ket{\bm{\hat{a}} + \bm{\hat{f}}} = X_{\bm{\hat{f}}} \ket{\bm{\hat{a}}}$, and then pulled out that operator in front of the summation.

Now, the next step is to rewrite $\phi_3[\bm{\hat{a}} + \bm{\hat{f}}] - \phi_3[\bm{\hat{a}}]$, up to multiples of 4, and up to coboundaries, as some local expression. This is necessary for $\sf{T} (-) \sf{T}^\dag$ to be a QCA, because otherwise it would not have bounded spread.

\begin{claim}
The phase acquired by $X_f$ under conjugation by $T$ can be written as
\begin{multline}
 \phi_3[\bm{\hat{a}}+\bm{\hat{f}}]-\phi_3[\bm{\hat{a}}] 
 \equiv \delta{\{\bm{\hat{f}}}\}\smile_1 \lrb{\delta \bm{\hat{a}}} + 2 \bm{\hat{f}} \smile \delta \bm{\hat{a}} \\
 + 2 \left( \delta{\bm{\hat{a}}}+\delta{\bm{\hat{f}}} \right) \smile_1 \left( \Delta(\bm{\hat{f}})+\Sigma(\delta{\bm{\hat{a}}},\delta{\bm{\hat{f}}}) \right)
\end{multline}

This is not strictly an equality of cochains, but they are equal up to multiples of 4, and up to coboundaries, so when integrated over a manifold without boundaries as $\frac{\pi i}{2} \int \dots$, they give the same phase (same integer multiple of $\frac{\pi i}{2}$).
\end{claim}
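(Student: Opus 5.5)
We directly expand $\phi_3[\bm{\hat{a}}+\bm{\hat{f}}]$ in terms of integer lifts, in the same way as in the gauge-invariance computation above, and simplify modulo $4$ and modulo coboundaries.

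\textbf{Lifts.} Set $A=\lrb{\bm{\hat{a}}}$ and $F=\lrb{\bm{\hat{f}}}$. Since $\bm{\hat{f}}$ takes values in $\{0,1\}$, we have $F=\bm{\hat{f}}$ as an integer cochain. By the definition of the sum fixing map,
\be
\lrb{\bm{\hat{a}}+\bm{\hat{f}}}=A+F+2\Sigma(\bm{\hat{a}},\bm{\hat{f}}).
\ee
For the second factor of the $\smile_1$ term we likewise use
\be
\lrb{\delta\bm{\hat{a}}+\delta\bm{\hat{f}}}=\lrb{\delta\bm{\hat{a}}}+\lrb{\delta\bm{\hat{f}}}+2\Sigma(\delta\bm{\hat{a}},\delta\bm{\hat{f}}),
\qquad
\lrb{\delta\bm{\hat{f}}}=\delta F+2\Delta(\bm{\hat{f}}).
\ee

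\textbf{Cup term.} Substituting into the first term gives
\be
(A+F+2\Sigma)\smile\delta(A+F+2\Sigma).
\ee
We expand and drop every contribution carrying a factor $4$. The pieces $2\Sigma\smile\delta A$, $2A\smile\delta\Sigma$ and their $F$ analogues are handled by the mod-$2$ Leibniz rule together with $\delta(A\smile\Sigma)$, as in the gauge-invariance section. The cross terms $A\smile\delta F+F\smile\delta A$ are rewritten using $\delta(A\smile F)=\delta A\smile F-A\smile\delta F$. Up to a coboundary this leaves
\be
F\smile\delta A+\delta A\smile F+F\smile\delta F.
\ee

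\textbf{$\smile_1$ term.} Expand $\delta(A+F+2\Sigma)\smile_1(\lrb{\delta\bm{\hat{a}}}+\delta F+2\Delta(\bm{\hat{f}})+2\Sigma(\delta\bm{\hat{a}},\delta\bm{\hat{f}}))$ and subtract $\delta A\smile_1\lrb{\delta\bm{\hat{a}}}$. The remaining terms are treated as follows.
\begin{itemize}
\item $\delta F\smile_1\lrb{\delta\bm{\hat{a}}}$ is kept as the leading term of the claim.
\item $\delta A\smile_1\delta F+\delta F\smile_1\delta F$ is converted using $\delta(X\smile_2 Y)$ identities, exactly as was done when showing that the sign in \eqref{eq:phi3} is immaterial. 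This turns $2\,\delta X\smile_1\delta X$ into a coboundary plus a multiple of $4$.
\item The terms with a factor $2$ are grouped into $2(\delta\bm{\hat{a}}+\delta\bm{\hat{f}})\smile_1(\Delta(\bm{\hat{f}})+\Sigma(\delta\bm{\hat{a}},\delta\bm{\hat{f}}))$ plus mod-$2$ leftovers.
\end{itemize}

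\textbf{Combining.} For the surviving bilinear pieces $F\smile\delta A$, $\delta A\smile F$ and $\delta A\smile_1\delta F$, apply the mod-$2$ Leibniz rule \eqref{eq: cup-1 Leibniz} in the form
\be
\delta(F\smile_1\delta A)\equiv\delta F\smile_1\delta A+F\smile\delta A+\delta A\smile F.
\ee
Together with the mod-$4$ relation $\lrb{\delta\bm{\hat{a}}}=\delta A+2\Delta(\bm{\hat{a}})$, this trades the asymmetric pair for $2\bm{\hat{f}}\smile\delta\bm{\hat{a}}$. The terms $F\smile\delta F$ and the $\Sigma(\bm{\hat{a}},\bm{\hat{f}})$ contributions should then cancel modulo $4$ and coboundaries.

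\textbf{Main obstacle.} The hard part is the bookkeeping of the factor-$2$ terms involving $\Sigma(\bm{\hat{a}},\bm{\hat{f}})$ and $\Delta(\bm{\hat{a}})$. These are nonlocal-looking lift corrections. One must check that they either assemble into the stated $2(\cdots)\smile_1(\cdots)$ term or reduce to $2\times$(mod-$2$ coboundary). I would first try to eliminate $\Sigma(\bm{\hat{a}},\bm{\hat{f}})$ using the identity $2\Sigma(\bm{\hat{a}},\bm{\hat{f}})\equiv 2AF$ modulo $4$, valid pointwise for $\{0,1\}$ lifts. I would then verify whether the leftover degree-$3$ expressions are mod-$2$ exact.
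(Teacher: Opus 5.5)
Your overall strategy (expand in integer lifts, then discard multiples of $4$ and coboundaries) is the paper's. The gap is in the coefficient-one terms, which is where the real content lies. Write $A=\lrb{\bm{\hat a}}$ and $F=\lrb{\bm{\hat f}}$. After subtracting $\phi_3[\bm{\hat a}]$, the odd part still contains $F\smile\delta F+\delta F\smile_1\delta F$, and neither of your mechanisms removes it.

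Your $\smile_2$ identity only shows that $2\,\delta F\smile_1\delta F$ is exact. Applied to $\delta A\smile_1\delta F$, it merely reverses the order, with a sign. Likewise $F\smile\delta F$ cannot be absorbed by the $\Sigma(\bm{\hat a},\bm{\hat f})$ terms, because those all carry a factor $2$.

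The missing ingredient is not formal. By the integral Leibniz rule $\delta(F\smile_1\delta F)=\delta F\smile_1\delta F+F\smile\delta F-\delta F\smile F$, the combination equals $\delta F\smile F$ up to a coboundary. You must then check, from the explicit cubic-lattice formula for $\bm{a}_2\smile\bm{b}_1$, that $\delta F\smile F=0$ when $F$ is the indicator of a single dual edge. In each of the three terms of that formula the face is transverse to the edge, whereas $\delta F$ is supported on the faces containing that edge. This is exactly how the paper finishes. For a general $1$-cochain the term is not exact (it has the form of the SPT density $A\smile\delta A$), so no formal manipulation can replace this check.

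Second, you reduce the cross terms $F\smile\delta A+\delta A\smile F+\delta A\smile_1\delta F$ with the \emph{mod-$2$} Leibniz rule. These terms have coefficient one, so that fixes them only modulo $2$. It leaves an uncontrolled $2\times(\cdots)$, which is exactly the size of the term $2\bm{\hat f}\smile\delta\bm{\hat a}$ you are trying to produce.

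You need the signed integral identities instead. The pair $\delta(F\smile_1\delta A)=\delta F\smile_1\delta A+F\smile\delta A-\delta A\smile F$ and $\delta(\delta F\smile_2\delta A)=\delta F\smile_1\delta A+\delta A\smile_1\delta F$ does yield $2F\smile\delta A$ up to a coboundary. The paper gets the same result more directly. It pairs $A\smile\delta F$ with $\delta A\smile_1\delta F$ via $\delta(A\smile_1\delta F)=\delta A\smile_1\delta F+A\smile\delta F-\delta F\smile A$, and then uses $\delta(F\smile A)=\delta F\smile A-F\smile\delta A$.

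What you call the main obstacle is, by contrast, routine. Write $\Sigma=\Sigma(\bm{\hat a},\bm{\hat f})$. The factor-$2$ terms in $\Sigma$ do not vanish within the cup term alone. Together with $2\,\delta\Sigma\smile_1(\lrb{\delta\bm{\hat a}}+\delta F)$ from the $\smile_1$ term, however, they form $2(X\smile\delta\Sigma+\Sigma\smile\delta X+\delta\Sigma\smile_1\delta X)$ for $X=A,F$. By the same identity read mod $2$, this is a coboundary plus multiples of $4$. So no pointwise formula for $\Sigma$ is needed, and $\Delta(\bm{\hat a})$ only enters through the kept term $\delta F\smile_1\lrb{\delta\bm{\hat a}}$ or multiplied by $4$.
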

This is a localised phase in the sense that it only depends on the values of $\bm{\hat{a}}$ near $f$. To verify this, note that in each cup product, at least one of the cochains is non-zero only nearby $f$. Hence, the cup product itself is also non-zero only near $f$. For example, $\delta \{ \bm{\hat{f}} \}$ is straightforward because it is supported only on the four dual faces around a dual edge\footnote{which actually correspond to the dual of $\partial f$. Indeed, one can check that $\widehat{\partial a} = \delta \hat{\bm{a}}$, and $\widehat{\delta \bm{a}} = \partial \hat{a}$, abusing the usual identification between chains and cochains.}, and the corresponding cup-1 product is non-zero only on the 4 dual cubes surrounding the dual edge. Similarly, one can use the definition of $\Sigma$ to verify that $\Sigma(\delta \bm{\hat{a}}, \delta \bm{\hat{f}})$ is 0 everywhere outside those same 4 dual faces, and so on.

This is why we can conclude that the image of $X_f$ under this QCA is again $X_f$, times some localized, diagonal operator.

{\bf Proof:}
Recall that
\begin{align}
\phi_3[\bm{\hat{a}}] &= \lrb{\bm{\hat{a}}} \smile \delta \lrb{\bm{\hat{a}}} + \delta \lrb{\bm{\hat{a}}} \smile_1 \lrb{\delta \bm{\hat{a}}} \,, \\
\lrb{\delta \bm{\hat{a}}} &= \delta \lrb{\bm{\hat{a}}} + 2 \Delta(\bm{\hat{a}}) \,.
\end{align}

Like before, we are still working with cup products on the dual lattice. Let us define
\begin{align}
 A &\coloneqq \lrb{\bm{\hat{a}}} \,, & F &\coloneqq \lrb{\bm{\hat{f}}} \,. 
\end{align}

We have the following:
\begin{align}
 \lrb{\bm{\hat{a}} + \bm{\hat{f}}} &= A + F + 2 \Sigma(\bm{\hat{a}}, \bm{\hat{f}}) \,, \\
 \lrb{\delta \bm{\hat{a}} + \delta \bm{\hat{f}}} &= \lrb{\delta \bm{\hat{a}}} + F + 2\Delta(\bm{\hat{f}}) + 2 \Sigma(\delta \bm{\hat{a}}, \delta \bm{\hat{f}}) \,.
\end{align}

For brevity, we also define $d A \coloneqq \lrb{\delta \bm{\hat{a}}}$, and $P \coloneqq \Delta(\bm{\hat{f}}) + \Sigma(\delta \bm{\hat{a}}, \delta \bm{\hat{f}})$. Then we may define unambiguously $\Sigma \coloneqq \Sigma(\bm{\hat{a}}, \bm{\hat{f}})$, because the other appearance of $\Sigma(\delta \bm{\hat{a}}, \delta \bm{\hat{f}})$ has been absorbed into $P$.
\begin{align}
&\phi_3[\bm{\hat{a}} + \bm{\hat{f}}]  \nonumber \\
&= \lbbb \bm{\hat{a}} + \bm{\hat{f}}\rbbb \delta \lbbb \bm{\hat{a}} +\bm{\hat{f}} \rbbb + \delta \lbbb \bm{\hat{a}}+\bm{\hat{f}} \rbbb \smile_1 \lbbb \delta \bm{\hat{a}} + \delta \bm{\hat{f}} \rbbb \nonumber \\
&= (A + F + 2 \Sigma)(\delta A + \delta F + 2 \delta \Sigma) \nonumber \\
&\hspace{2em}+ (\delta A + \delta F + 2 \delta \Sigma) \smile_1 (dA + \delta F + 2P) \,. % \nonumber \\
%&= (\lbbb a \rbbb + \lbbb l \rbbb + 2\Sigma(a,l))\cup(\delta \lrb{a} + \delta \lrb{l} + 2\delta\Sigma(a,l)) \nonumber \\
%&+ (\delta \lrb{a} + \delta \lrb{l} + 2\delta\Sigma(a,l)) \cup_1 \nonumber \\
%&\,\,\,\,\,\,\,\,(\delta\lrb{a} + \delta{\lrb{l}} +2\Delta(a) + 2\Delta(l)+ 2\Sigma(\delta a,\delta l)) 
\end{align}

We now expand out the above. Let us first focus on the odd terms, i.e. those without a coefficient of $2$ in front of them:
\begin{multline}
A \delta A + A \delta F + F \delta A + F \delta F + \delta A \smile_1 dA \\
+ \delta A \smile_1 \delta F + \delta F \smile_1 dA + \delta F \smile_1 \delta F \,.
\end{multline}

We now use the Leibniz rules for $\smile_1$ and $\smile$, as follows
\begin{align}
\delta (A \smile_1 \delta F) &= \delta A \smile_1 \delta F + A \delta F - (\delta F) A \,, \\
\delta (F A) &= (\delta F) A - F \delta A \,.
\end{align}

We use them to show that
\begin{multline}\label{eq:intermediate_cochain_result}
 A \delta F + F \delta A + \delta A \smile_1 \delta F \\
 = F \delta A + (\delta F) A + \delta (A \smile_1 \delta F) \\
 = 2 F \delta A + \delta (A \smile_1 \delta F + FA) \,.
\end{multline}

We use this result to rewrite the odd terms:
\begin{multline}
 \phi_3[\bm{\hat{a}}] + \delta F \smile_1 dA + 2 F \delta A \\
 + \paren{F \delta F + \delta F \smile_1 \delta F} + \delta (\dots) \,.
\end{multline}

Now, we move on to the even terms, where we drop the factor of 2 and simply do all calculations modulo 2:
\begin{multline}
 A \delta \Sigma + F \delta \Sigma + \Sigma \delta A + \Sigma \delta F + \delta A \smile_1 P \\
 + \delta F \smile_1 P + \delta \Sigma \smile_1 dA + \delta \Sigma \smile_1 \delta F
\end{multline}

Firstly, let us remark that $dA = \delta A + 2 (\dots)$, so in this calculation we can simply replace one with the other. Secondly, notice how we already proved the following equalities previously (see Eq.~\eqref{eq:intermediate_cochain_result}):
\begin{align}
 A \delta \Sigma + \Sigma \delta A + \delta \Sigma \smile_1 \delta A &= 2 \Sigma \delta A + \delta (\dots) \\
 F \delta \Sigma + \Sigma \delta F + \delta \Sigma \smile_1 \delta F &= 2 \Sigma \delta F + \delta (\dots)
\end{align}

Therefore, all these terms cancel out, up to multiples of 2 and up to coboundaries. The even terms become:
\be
(\delta A + \delta F) \smile_1 P + 2(\dots) + \delta(\dots) \,.
\ee

Finally, we can conclude that
\begin{align}
 \phi_3&[\bm{\hat{a}} + \bm{\hat{f}}] - \phi_3[\bm{\hat{a}}] \nonumber \\
 &= \delta F \smile_1 dA + \paren{F \delta F + \delta F \smile_1 \delta F} \nonumber \\
 &\; + 2 \paren{F \delta A + \paren{\delta A + \delta F} \smile_1 P} + 4 (\dots) + \delta (\dots) \,,
\end{align}
which is essentially the result we wanted to prove. The missing step is to show that $F \smile \delta F + \delta F \smile_1 \delta F$ vanishes identically. Like before, using the Leibniz rule, one can reduce this to showing that $\delta F \smile F = 0$. This can be seen by direct computation (either on the cubic lattice, or on any triangulation), by using the fact that $F$ is (the integer lift of) a single dual edge.

Any of the explicit formulae will give $\delta F \smile F = 0$ identically for any 3-cell (cube, or tetrahedron).

We have thus shown that, after undoing all the definitions that were made for the sake of brevity and simplicity of the notation:
\begin{multline}
 \phi_3[\bm{\hat{a}}+\bm{\hat{f}}]-\phi_3[\bm{\hat{a}}] 
 \equiv \delta{\{\bm{\hat{f}}}\}\smile_1 \lrb{\delta \bm{\hat{a}}} + 2 \bm{\hat{f}} \smile \delta \bm{\hat{a}} \\
 + 2 \left( \delta{\bm{\hat{a}}}+\delta{\bm{\hat{f}}} \right) \smile_1 \left( \Delta(\bm{\hat{f}})+\Sigma(\delta{\bm{\hat{a}}},\delta{\bm{\hat{f}}}) \right)
\end{multline}

Let us add some additional remarks:
\begin{itemize}
 \item in any term a multiple of 2, the integer lift $\lrb{}$ can be neglected, because effectively everything ends up getting reduced mod 2 anyway,

 \item as noted earlier, this expression is clearly local, in that it depends on the values of $\bm{\hat{a}}$ only near $f$,

 \item the action of the SPT entangler is trivial on the Pauli $Z$ operators, and maps
 \be
 X_f \mapsto X_f \cdot (\text{local diagonal operator})
 \ee
\end{itemize}

\subsection{Group Structure: KWW and SPT Entangler}

In this section, we want to determine some of the group relations that the QCA constructed previously (Kramers-Wannier-Wegner, and the Tsui-Wen SPT entangler) satisfy.
In particular, mirroring the relations in the $\Witt^\pt (A,s)$ derived in Section \ref{sec:Cats}, we expect
\be
(ST)^3 = Y \,,
\ee
with $S$ symbolizing the KWW duality, $T$ the SPT entangler, and where $Y$ is the semion QCA.

Due to this analogy, we will write $S \coloneqq \alpha_\KWW$ and $T \coloneqq \alpha_\TW$, even though they are QCA on the quotient algebra.

Using some physically motivated assumptions, we will show that $(S \circ T)^3$ is equivalent to the one in  \cite{Shirley_2022}, which entangles the commuting projector for the semion Walker-Wang model, and has order 8: 
\be
Y^8=1 \,.
\ee
For lack of exact equality, we will have to argue it in a roundabout way.

\subsubsection{Separators and Flippers}

The arguments in this section will rely on the concepts of separators and flippers. Let us briefly introduce those notions, where for simplicity we restrict to separators in the full tensor product algebra $\mathcal{A}$.

\begin{definition}
A \emph{set of separators} is a set $\mathcal{S}$ of operators such that:
 \begin{itemize}
  \item For any $A \in \mathcal{S}$, $A^2 = 1$

  \item Any pair $A_1, A_2 \in \mathcal{S}$ commute: $[A_1, A_2] = 0$

  \item For any collection of $\{ \epsilon_A \}_{A \in \mathcal{S}}$, each sign $\epsilon_A \in \{ \pm 1 \}$, the space of states $\ket{\psi}$ satisfying $A \ket{\psi} = \epsilon_A \ket{\psi}$ is one-dimensional.\footnote{This last condition is not quite well defined on an infinite lattice, due to the lack of a Hilbert space. It is possible that it could be made precise by working with states as positive linear functionals, but we will ignore this subtlety, especially since it would also require working in the $C^*$-algebra $\overline{\mc{A}}$.}
 \end{itemize}

A set of separators is called \emph{local} if all the separators have uniformly bounded support. That is, there is some constant, such that the supports of all the operators in $\mathcal{S}$ are smaller than that constant.
\end{definition}

In a sense, separators parametrize the set of states of a system. In particular, the corresponding commuting projector Hamiltonian
\be
H \coloneqq - \sum_{A \in \mathcal{S}} \frac{1 - A}{2} \,,
\ee
has a unique ground state. But furthermore, each state can be uniquely specified according to its sequence of eigenvalues $\epsilon_A$ (uniquely, up to complex phases).

\begin{definition}
 A set of local separators $\mc{S}$ is called \emph{flippable} if there exists a set of flippers $\{F_A\}_{A \in \mc{S}}$, parametrized by $\mc{S}$, which satisfy:
 \begin{itemize}
  \item $F_A A = - A F_A$ for any $A \in \mc{S}$

  \item For any $A_1 \neq A_2 \in \mc{S}$, $[F_{A_1}, A_2] = 0$
 \end{itemize}

 It is called \emph{locally flippable} if there exists a set of flippers whose supports are uniformly bounded.
\end{definition}

Flippers are called that because they flip the sign of some eigenvalue. That is, suppose $\ket{\{\epsilon_A\}}$ is the (unique up to a phase) state such that
\be
A' \ket{\{\epsilon_A\}} = \epsilon_{A'} \ket{\{\epsilon_A\}} \,, \quad \forall A' \text{ separator.}
\ee

Then we also have
\be
A' \paren{F_{A'} \ket{\{\epsilon_A\}}} = - F_{A'} A' \ket{\{\epsilon_A\}} = - \epsilon_{A'} \paren{F_{A'} \ket{\{\epsilon_A\}}}
\ee

A prototypical example of a locally flippable set of local separators is the set of $X$ operators, one for each qubit (assuming the system only has qubits). Trivially, the unique state with some set of eigenvalues is a tensor product of $\ket{\pm}$. The local flippers are the $Z$ operators. For a given vertex $v$, clearly $X_v Z_v = - Z_v X_v$. Furthermore, for any two different vertices, they commute.

This standard example also satisfies an additional property: the flippers commute among themselves, though this is not necessary from the definition alone.

However, it has been shown that given a locally flippable set of local separators, one can always find a set of local flippers which also commute between each other \cite{HaahFidkowskiHastings2023}.

This is a fairly abstract way of constructing and thinking about QCA. For simplicity, assume we have a system of qubits on some vertices $v$. Suppose we are given some locally flippable set of local separators $\mc{S} = \{A_v\}_v$, where the separator labeled by a vertex $v$ is supported nearby $v$. By the theorem, we know there exists a set of mutually commuting local flippers $\{\tilde{F}_v \}_v$, each supported near its own vertex. Then the following mapping defines a QCA:
\begin{align}
 X_v &\longmapsto A_v \\
 Z_v &\longmapsto \tilde{F}_v
\end{align}

Intuitively speaking, having a separator ensures that there are enough operators on the right-hand-side for this kind of mapping to be bijective. Meanwhile, the existence of flippers ensures that this mapping can be extended to the whole algebra. The locality of both the separators and the flippers ensures that this is an actual QCA.

Crucially for the discussion later on, one does not actually need the mutually commuting flippers to be able to say things about the resulting QCA. Finding any (local) flippers can be sufficient, and those can potentially be much simpler than their commuting counterparts.

\subsubsection{Separators for $STSTS$}

One part of this computation actually simplifies, because
\be
(S \circ T \circ S) (X_f + \mathfrak{I}) = X_{t(f)} + \mathfrak{I} \,.
\ee
This is because $S$ maps $X$ operators to $Z$ operators, $T$ leaves $Z$ invariant, and $S$ maps it again to $X$ operators, now translated by one full site along the diagonal. Therefore:
\be
(S \circ T \circ S \circ T \circ S) (X_f) = S \big(T (X_{t(f)}) \big) \,,
\ee
which, up to translations, amounts to applying $S$ to
\be
X_f \sum_{\bm{\hat{a}}} e^{\frac{\pi i}{2} \int \phi_3[\bm{\hat{a}} + \bm{\hat{f}}] - \phi_3[\bm{\hat{a}}]} \ket{\bm{\hat{a}}} \bra{\bm{\hat{a}}} \,.
\ee

Using the terminology introduced earlier, we can see that finding the separators is going to be much easier than finding the flippers.

% Note that, using Eq.~\eqref{eq:phase_in_TXTdag} this expression is written entirely in terms of $\delta{a}$ (importantly, there are no more terms containing $\delta{\lrb{a}}$).  This fact is important for applying the $S$-modular transformation (i.e. KWW), after applying the $T$ SPT-entangler,  because the $S$ transformation tells you how $Z Z Z^\dagger Z^\dagger$ around a plaquette transforms, not individual $Z$'s, which is precisely $\delta{a}$.  Note that everything really just depends on $\delta{a}$ mod $2$; i.e. $\delta$ is taken with mod $2$ coefficients and then lifts are taken, not the other way around.

We will now show the following: applying $S$ results in replacing $\delta \bm{\hat{a}}$ by a dual 2-cochain $\bm{\hat{b}}$ as follows
\begin{claim}
\begin{align}\label{STXST}
\overline{X}'_f& \coloneqq S \big(T (X_f) \big)  \nonumber \\
% =&\left(\prod_{l'\in \partial \dot{l}} Z_{l'}\right) \exp\left(\frac{\pi i}{2}\int \delta \lrb{l} \cup_1 \lrb{b}\right) \cdot  \\
=&Z_{\delta t^{\frac{1}{2}}(\bm{f})} \sum_{\bm{\hat{b}}} e^{\frac{\pi i}{2}\int \delta \{ \bm{\hat{f}} \} \smile_1 \{\bm{\hat{b}}\} } e^{\pi i \int \bm{\hat{f}} \smile \bm{\hat{b}}} \cdot  \nonumber \\
%&\cdot \exp \left(\pi i \left(l\cup b + (b+\delta l)\cup_1 (\Delta(l)+\Sigma(b,\delta l))\right)\right)\,,
&\cdot e^{\pi i \int (\bm{\hat{b}} + \delta \bm{\hat{f}}) \smile_1 \paren{\Delta(\bm{\hat{f}}) +\Sigma(\bm{\hat{b}}, \delta \bm{\hat{f}})} } \ket{(-1)^{\bm{\hat{b}}}} \bra{(-1)^{\bm{\hat{b}}}} \,,
\end{align}
where $\bm{\hat{b}}$ is a dual $2$-cochain. The notation $\ket{(-1)^{\bm{\hat{b}}}}$ is meant to indicate that the state is written in the $\ket{\pm}$ basis:
\be
\ket{(-1)^{\bm{\hat{b}}}} \coloneqq \bigotimes_{f \in F} \ket{(-1)^{\bm{\hat{b}}(t^{-\frac{1}{2}}(\hat{f}))}} \,.
\ee

This is in contrast with
\be
\ket{\bm{\hat{a}}} \coloneqq \bigotimes_{f \in F} \ket{\bm{\hat{a}}(\hat{f})} \,.
\ee
\end{claim}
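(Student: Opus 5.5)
The plan is to start from the previous Claim, which gives
\[
T(X_f) = X_f \sum_{\bm{\hat a}} e^{\frac{\pi i}{2}\int \Phi_f[\delta\bm{\hat a}]}\ket{\bm{\hat a}}\bra{\bm{\hat a}} ,
\]
with
\[
\Phi_f[\delta\bm{\hat a}] = \delta\{\bm{\hat f}\}\smile_1\{\delta\bm{\hat a}\} + 2\bm{\hat f}\smile\delta\bm{\hat a} + 2(\delta\bm{\hat a}+\delta\bm{\hat f})\smile_1\bigl(\Delta(\bm{\hat f})+\Sigma(\delta\bm{\hat a},\delta\bm{\hat f})\bigr).
\]
The first observation is that, after the simplifications of that Claim, the phase depends on $\bm{\hat a}$ only through the mod-2 cochain $\delta\bm{\hat a}$. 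This holds for the integer lift $\{\delta\bm{\hat a}\}$, which is a lift of $\delta\bm{\hat a}$ and not $\delta\{\bm{\hat a}\}$, and for $\Sigma(\delta\bm{\hat a},\cdot)$. I will therefore rewrite the diagonal operator as a sum over dual 2-cochains $\bm{\hat b}$ of the form $e^{\frac{\pi i}{2}\int\Phi_f[\bm{\hat b}]}\Pi_{\bm{\hat b}}$. Here $\Pi_{\bm{\hat b}}$ is the spectral projector onto $\delta\bm{\hat a}=\bm{\hat b}$, namely $\Pi_{\bm{\hat b}}=\prod_e \frac{1}{2}\bigl(1+(-1)^{\bm{\hat b}(\cdot)}Z_{\delta\bm e}\bigr)$. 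The key point is that $\Pi_{\bm{\hat b}}$ is a function of the symmetric generators $Z_{\delta\bm e}$ alone, so the whole operator lies in $\mathcal{A}_{\mathbb{Z}_2^{(1)}}$. Only finitely many factors near $f$ matter, so this is a finite, local expression.

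Next I apply $S=\aKWW$ factor by factor, since it is an algebra homomorphism. By \eqref{eq: Z2 KWW}, $X_f\mapsto Z_{\delta t^{\frac12}(\bm f)}$, which produces the prefactor. Each $Z_{\delta\bm e}\mapsto X_{t^{\frac12}(e)}$, so every projector onto a $Z_{\delta\bm e}$ eigenvalue becomes a projector onto an $X$ eigenvalue of the half-translated face. Hence $\Pi_{\bm{\hat b}}\mapsto \ket{(-1)^{\bm{\hat b}}}\bra{(-1)^{\bm{\hat b}}}$ in the $\ket{\pm}$ basis. The index shift $t^{-\frac12}(\hat f)$ in the definition of $\ket{(-1)^{\bm{\hat b}}}$ is exactly the half translation appearing in $S$, and I will check it by matching the dual edge of $e$ with the dual face of $t^{\frac12}(e)$. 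The phases are c-numbers indexed by $\bm{\hat b}$ and pass through unchanged. Expanding $e^{\frac{\pi i}{2}\cdot 2(\cdots)}=e^{\pi i(\cdots)}$ then yields the three exponentials in \eqref{STXST}.

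Some bookkeeping is needed at this step. Not every $\bm{\hat b}$ is a coboundary. In the original sum only closed $\bm{\hat b}$ (locally, exact) occur, while after $S$ the sum runs over all $\bm{\hat b}$. The non-closed ones correspond to $X_{\partial c}$ eigenvalues $-1$ and are annihilated in the quotient, because $S$ maps the relation $\prod_{e\ni v} Z_{\delta\bm e}=1$ to $X_{\partial c}=1$. I therefore expect to state the identity modulo $\mathfrak I$, with the projectors onto non-closed $\bm{\hat b}$ vanishing there.

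The main obstacle I anticipate is verifying that the phase depends only on $\delta\bm{\hat a}$ mod 2, with no residual dependence on $\delta\{\bm{\hat a}\}$. A second, related concern is that the product over edges of projectors is consistent, since the $Z_{\delta\bm e}$ satisfy local relations at vertices. I would handle the latter by restricting to a finite neighbourhood of $f$ and summing only over closed $\bm{\hat b}$ there.
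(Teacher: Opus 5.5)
Your proposal is correct and follows essentially the same route as the paper. The paper identifies $\{\delta\bm{\hat a}(\hat e)\}$ with the operator $\frac12(1-Z_{\delta\bm e})$, pushes it through $S$ to $\frac12(1-X_{t^{\frac12}(e)})$, and reads the result as a new cochain $\bm{\hat b}$ that is diagonal in the $\ket{\pm}$ basis; this is exactly your spectral-projector argument with $\Pi_{\bm{\hat b}}$, and your handling of non-closed $\bm{\hat b}$ as vanishing modulo $\mathfrak{I}$ matches the paper's remark that any dependence on $\delta\bm{\hat b}$ is invisible in the quotient. One minor slip: the dual face $\hat e$ of the edge $e$ should be matched with the dual edge of the face $t^{\frac12}(e)$, not the other way round.
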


\noindent
{\bf Proof.} This is actually a more general result, in the sense that applying the Kramers-Wannier-Wegner duality to anything written purely in terms of $\delta \bm{\hat{a}}$ (without any dependence on $\bm{\hat{a}}$ itself) is straightforward.

The trick is to notice that $\bm{\hat{a}}$ can be identified with an operator as follows:
\be
\bm{\hat{a}}(\hat{f}) \longleftrightarrow \frac{1}{2}(1 - Z_f) \,,
\ee
for any face $f \in F$. Actually, in the most literal way, we have
\be\label{eq:equivalence_of_1cochain_and_op}
\lrb{\bm{\hat{a}}(\hat{f})} \ket{\bm{\hat{a}}} = \frac{1}{2} (1 - Z_f) \ket{\bm{\hat{a}}} \,,
\ee

The integer lift is only needed to ensure we identify $1 \in \Z_2$ with $1 \in \Z$. This kind of equality also extends to any expression involving $\bm{\hat{a}}$. Cup products become products of operators, integrals over cochains become sums of operators, etc.

In particular, we have
\be
\lrb{ \delta \bm{\hat{a}} (\hat{e}) } = \frac{1}{2} \paren{1 - \prod_{\hat{f} \in \partial \hat{e}} Z_f} \,.
\ee

Again, this equality can be checked by comparing the action of the operator on a state $\ket{\bm{\hat{a}}}$. Briefly, the LHS calculates the sum mod 2 of 4 dual edges ($\hat{e}$ is a square face on the dual lattice, which has 4 dual edges as its boundary). Meanwhile, on the RHS, the product of $Z$ operators takes the product of 4 $(\pm 1)$ terms, which ends up being effectively a mod 2 calculation.

The point now is that we know where KWW maps such a product of $Z$ operators:
\be
\prod_{\hat{f} \in \partial \hat{e}} Z_f = \prod_{f \in \delta \bm{e}} Z_f \rightsquigarrow X_{t^\frac{1}{2} (e)}
\ee

Now we must find an interpretation for an operator of the form
\be
\frac{1}{2} \paren{1 - X_{t^\frac{1}{2}(e)}} \,.
\ee

By analogy with Eq.~\eqref{eq:equivalence_of_1cochain_and_op}, we define a 2-cochain $\bm{\hat{b}}$ which will be in correspondence with the above operator:
\be
\bm{\hat{b}}(\hat{e}) \longleftrightarrow \frac{1}{2} \paren{1 - X_{t^\frac{1}{2}(e)}} \,.
\ee

This analogy can only work if we now work in the basis where $X$ is diagonal, the $\ket{\pm}$ basis. Define
\be
\ket{(-1)^{\bm{\hat{b}}}} \coloneqq \bigotimes_{f \in F} \ket{(-1)^{\bm{\hat{b}}(t^{-\frac{1}{2}}(\hat{f}))}} \,.
\ee

Observe that this implies that
\begin{align}
\frac{1}{2} \paren{1 - X_{t^\frac{1}{2}(e)}} \ket{(-1)^{\bm{\hat{b}}}}
&= \frac{1}{2} \paren{1 - (-1)^{\bm{\hat{b}}(\hat{e})}} \ket{(-1)^{\bm{\hat{b}}}} \nonumber \\
&= \lrb{\bm{\hat{b}}(\hat{e})} \ket{(-1)^{\bm{\hat{b}}}} \,,
\end{align}
which is what we wanted to show.

The end result of this, is that we can say:
\be
\sum_{\bm{\hat{a}}} K(\lrb{\delta \bm{\hat{a}}}) \ket{\bm{\hat{a}}} \bra{\bm{\hat{a}}} \overset{\text{KWW}}{\rightsquigarrow}
\sum_{\bm{\hat{b}}} K(\{\bm{\hat{b}}\}) \ket{(-1)^{\bm{\hat{b}}}} \bra{(-1)^{\bm{\hat{b}}}} \,,
\ee
for any cochain function $K$.

This is exactly what we see in the claim, and furthermore we have the simplification that in the terms which are being calculated mod 2 (those with a prefactor $\pi i$, not $\frac{\pi i}{2}$), the integer lifts end up being unnecessary.

Note that we have a choice for how to lift this to an operator on the full tensor product operator algebra: we can multiply by any function of the symmetry operators $\eta(\Sigma)$, and get the same operator in the quotient.

Another way of looking at it is that, by the previous rule, we should be allowed to perform the substitution
\be
\delta \delta \bm{\hat{a}} \rightsquigarrow \delta \bm{\hat{b}} \,.
\ee

But the LHS is identically 0, so we conclude that we can add any function of $\delta b$ inside the exponentials above. Actually, it is fairly easy to see that
\be
\lrb{\delta \bm{\hat{b}}(\hat{s})} \ket{(-1)^{\bm{\hat{b}}}} = \frac{1}{2} \paren{1 - X_{\partial t^{\frac{1}{2}}(s)}} \ket{(-1)^{\bm{\hat{b}}}} \,.
\ee

Clearly, the operator on the RHS is in the ideal $\mathfrak{I}$, so any product is still in the ideal, and its exponential is in $1 + \mathfrak{I}$. This is why the operator in the quotient remains unchanged, even though in the full algebra it is different.

With this in mind, we define:
% \begin{align}
% \overline{X}_l &\coloneqq ST X_l(ST)^\dagger = \left(\prod_{l'\in \partial \dot{l}} Z_{l'}\right)\cdot \\ % &\exp\left(\frac{\pi i}{2}\int \left(\delta \lrb{l} \cup_1 \lrb{b} + \delta \lrb{l} \cup_2 \lrb{\delta b}\right)\right) \\ &\exp \left(\pi i \left(l\cup b + (b+\delta l)\cup_1 (\Delta(l)+\Sigma(b,\delta l))\right)\right)
% \end{align}
\begin{align}
\overline{X}_f \coloneqq& S \big( T (X_f) \big) \nonumber \\ 
=& Z_{\delta t^\frac{1}{2}(f)} \sum_{\bm{\hat{b}}} e^{\frac{\pi i}{2} \int \delta \{\bm{\hat{f}}\} \smile_1 \bm{\hat{b}} + \delta \{\bm{\hat{f}}\} \smile_2 \{\delta \bm{\hat{b}}\}} \cdot \nonumber \\
&\cdot e^{\pi i \int \bm{\hat{f}} \smile \bm{\hat{b}} + (\bm{\hat{b}}+ \delta \bm{\hat{f}}) \smile_1 (\Delta(\bm{\hat{f}})+\Sigma(\bm{\hat{b}},\delta \bm{\hat{f}}))} \ket{\bm{\hat{b}}} \bra{\bm{\hat{b}}} \,.
\end{align}

We will show that this modification allows us to find a set of local flippers for the local separators $\overline{X}_f$.

\subsubsection{Flippers for $STSTS$}

In this section, we are going to derive a set of separators and associated flippers.

The separators and flippers can be written as
\begin{align} \label{eq:Xbar}
\overline{X}_f &\coloneqq
Z_{\delta t^\frac{1}{2}(\bm{f})} X_f \sum_{\bm{\hat{b}}} e^{\frac{\pi i}{2} \int g(f, \bm{\hat{b}})} \ket{(-1)^{\bm{\hat{b}}}} \bra{(-1)^{\bm{\hat{b}}}} \,, \\ 
\overline{Z}_p &\coloneqq Z_p \sum_{\bm{\hat{c}}} e^{\frac{\pi i}{2} \int h(p, \bm{\hat{c}})} \ket{(-1)^{\bm{\hat{c}}}} \bra{(-1)^{\bm{\hat{c}}}} \,.
\end{align}

Note that, while $g$ comes from the expression we derived in the Claim \ref{STXST}, there are some differences. Firstly, one can check explicitly that
\be
\int \bm{\hat{f}} \smile \bm{\hat{b}} \,,
\ee
is the term which contributes a single $X_f$. Secondly, as argued earlier, we can modify it by adding terms which depend on $\delta \bm{\hat{b}}$. To be precise:
\begin{multline}
 g(f, \bm{\hat{b}}) \coloneqq \delta \{ \bm{\hat{f}} \} \smile_1 \{ \bm{\hat{b}} \} + \delta \{\bm{\hat{f}} \} \smile_2 \{ \delta \bm{\hat{b}} \} \\
 + 2 (\bm{\hat{b}} + \delta \bm{\hat{f}}) \smile_1 \paren{\Delta(\bm{\hat{f}}) + \Sigma(\bm{\hat{b}}, \delta \bm{\hat{f}})}
\end{multline}

Note that only the $\smile_2$ term is new, but we have also removed $\bm{\hat{f}} \smile \bm{\hat{b}}$ because a single $X_f$ is already included in the front of the definition earlier.

\begin{claim}
 We claim that, if we denote $\hat{e} \coloneqq t^{-\frac{1}{2}}(\hat{p})$ (recall that the dual of an edge $e$ is a face on the dual lattice), then
 \begin{multline}
  h(p, \bm{\hat{c}}) \coloneqq - \{ \bm{\hat{e}} \} \smile_1 \{ \bm{\hat{c}} + \bm{\hat{e}} \} \\
  + 2 \sqpar{\bm{\hat{c}} \smile_1 \Sigma(\bm{\hat{c}}, \bm{\hat{e}}) + \bm{\hat{c}} \smile_2 \paren{\Delta(\bm{\hat{e}}) + \Sigma(\delta \bm{\hat{c}}, \delta \bm{\hat{e}})}}
 \end{multline}
 gives a flipper for the separators defined earlier.

 Since it only depends on $e$, for simplicity we will write $h(e, \bm{\hat{c}})$.
\end{claim}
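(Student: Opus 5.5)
We need to verify the two flipper conditions for the pair $(\overline{X}_f,\overline{Z}_p)$ with $e=t^{\frac12}(p)$: (i) $\overline{Z}_p\overline{X}_f=-\overline{X}_f\overline{Z}_p$ when $f$ corresponds to $p$, i.e.\ $t^{\frac12}(\bm f)=\bm e$; (ii) $[\overline{Z}_p,\overline{X}_{f'}]=0$ for every other face $f'$. The plan is to do everything in the $\ket{(-1)^{\bm{\hat b}}}$ basis, where both operators have the form (Pauli string) $\times$ (diagonal phase).

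\textbf{Reduction to a cocycle identity.} In that basis the factor $Z_p$ in $\overline{Z}_p$ flips $\bm{\hat b}\mapsto\bm{\hat b}+\bm{\hat e}$. The factor $Z_{\delta t^{\frac12}(\bm f)}$ in $\overline{X}_f$ flips $\bm{\hat b}\mapsto \bm{\hat b}+\delta\bm{\hat f}$ (the dual statement of $\widehat{\delta\bm e}=\partial\hat e$). The extra $X_f$ is diagonal and contributes $(-1)^{\int \bm{\hat f}\smile\bm{\hat b}}$. Commuting $\overline{Z}_p$ past $\overline{X}_f$ therefore produces a phase $\exp\frac{\pi i}{2}\Phi$, where $\Phi$ is the mixed second difference
\begin{equation*}
\Phi = g(f,\bm{\hat b}+\bm{\hat e})-g(f,\bm{\hat b}) - h(e,\bm{\hat b}+\delta\bm{\hat f})+h(e,\bm{\hat b}) + 2\,\bm{\hat f}\smile\bm{\hat e} + 2\,(\text{sign from } Z_p \text{ vs } Z_{\delta t^{\frac12}\bm f}\ \text{overlap}),
\end{equation*}
with orderings chosen so that the Paulis act first. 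Condition (i) becomes $\int\Phi\equiv 2 \pmod 4$, and condition (ii) becomes $\int\Phi\equiv 0\pmod 4$, both to hold for all $\bm{\hat b}$, modulo coboundaries.

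\textbf{Evaluating $\Phi$.} I would expand exactly as in the proof of the previous Claim. Write $B=\{\bm{\hat b}\}$, $E=\{\bm{\hat e}\}$, $F=\{\bm{\hat f}\}$, and use
\begin{equation*}
\{\bm{\hat b}+\bm{\hat e}\}=B+E+2\Sigma(\bm{\hat b},\bm{\hat e}),\qquad \{\bm{\hat b}+\delta\bm{\hat f}\}=B+\delta F+2\Delta(\bm{\hat f})+2\Sigma(\bm{\hat b},\delta\bm{\hat f}).
\end{equation*}
The odd part of $\Phi$ is then
\begin{equation*}
\delta F\smile_1 E \;-\; E\smile_1(\delta F+2P) \;+\; \delta F\smile_2\delta E \quad\text{plus terms linear in } B,
\end{equation*}
where $P=\Delta(\bm{\hat f})+\Sigma(\bm{\hat b},\delta\bm{\hat f})$. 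The $B$-linear terms should cancel pairwise between $g$ and $h$. The key input is the cup-2 Leibniz rule
\begin{equation*}
\delta(x\smile_2 y)=\delta x\smile_2 y\pm x\smile_2\delta y\pm x\smile_1 y\pm y\smile_1 x ,
\end{equation*}
applied to $\delta F\smile_2 \{\delta\bm{\hat b}\}$ and to $\bm{\hat b}\smile_2(\cdots)$. This is precisely why the $\smile_2$ terms were inserted into $g$ and $h$. What should remain at odd level is $\delta F\smile_1 E - E\smile_1\delta F = \delta(\ldots)\pm 2(\ldots)$. The mod-2 terms then reduce to $F\smile E$ plus $B$-dependent pieces, which cancel using the $\Sigma$ terms in $h$. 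This should leave $\int\Phi\equiv 2\int F\smile E+2\,(\text{overlap})$.

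\textbf{Local evaluation and the main obstacle.} The final step is a local computation on the few dual cubes near $\hat f$. I expect the leftover to equal $2\,\mathrm{lk}$-type counts, which equal $1$ exactly when $\bm{\hat e}=t^{\frac12}\bm{\hat f}$, and $0$ otherwise, using the explicit cubic cup formulas of Appendix~\ref{sec: higher cup products} (as in \eqref{eq: UXU2}). The main obstacle is the mod-4 bookkeeping of the terms involving $\Sigma(\bm{\hat b},\cdot)$ and $\Delta$. These are not linear, so the $B$-dependence must cancel without using linearity. I would first try replacing $\Sigma$ by its explicit form $\Sigma(x,y)=\{x\}\{y\}$ cellwise for $\Z_2$, reducing everything to polynomial identities in $0/1$ variables. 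If a residual $B$-dependent coboundary fails to vanish, the fallback is to check it on $\ker$ of the ideal, i.e.\ modulo $\delta\bm{\hat b}$ terms, which are trivial in the quotient.
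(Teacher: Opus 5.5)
Your reduction is the same as the paper's. In the $\ket{(-1)^{\bm{\hat b}}}$ basis, $Z_p$ shifts $\bm{\hat b}\mapsto\bm{\hat b}+\bm{\hat e}$ (note $e=t^{-\frac12}(p)$, not $t^{\frac12}(p)$), $Z_{\delta t^{\frac12}(\bm f)}$ shifts by $\delta\bm{\hat f}$, and everything reduces to $g(f,\bm{\hat b}+\bm{\hat e})-g(f,\bm{\hat b})\equiv h(e,\bm{\hat b}+\delta\bm{\hat f})-h(e,\bm{\hat b})$ for all $f,e$, modulo $4$ and coboundaries.

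However, your evaluation of this difference, which is the whole content of the claim, goes wrong at the decisive step.
\begin{itemize}
\item \textbf{Sign error.} The term $-\{\bm{\hat e}\}\smile_1\{\bm{\hat b}+\bm{\hat e}\}$ of $h$ enters $\Phi$ through $-[h(e,\bm{\hat b}+\delta\bm{\hat f})-h(e,\bm{\hat b})]$. It therefore contributes $+E\smile_1\paren{\delta F+2\Delta(\bm{\hat f})+2\Sigma(\bm{\hat b}+\bm{\hat e},\delta\bm{\hat f})}$, not $-E\smile_1(\delta F+2P)$.
\item \textbf{Correct odd part.} With the right sign, the entire odd part is $\delta F\smile_1E+E\smile_1\delta F+\delta F\smile_2\delta E=\delta(\delta F\smile_2E)$. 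This is an exact identity over $\Z$ (the cup-$2$ Leibniz rule, with all signs $+1$).
\item \textbf{What your sign gives.} With your sign the odd part is $\delta(\delta F\smile_2E)-2E\smile_1\delta F$. Your assertion ``$\delta F\smile_1E-E\smile_1\delta F=\delta(\dots)\pm2(\dots)$'' is also false: that combination equals $\delta(\delta F\smile_2E)-\delta F\smile_2\delta E-2E\smile_1\delta F$.
\item \textbf{The residual is a real error.} The spurious term $2E\smile_1\delta F$ is independent of $\bm{\hat b}$. It integrates to $2$ mod $4$ for suitable pairs with $f\neq p$; for example, take $\hat e$ the $(0,\bullet,\bullet)$ face of a dual cube and $\hat f$ its edge $(\bullet,0,0)$. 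Your bookkeeping would then conclude that $\overline{Z}_p$ anticommutes with the wrong $\overline{X}_f$. The ``lk-type'' local evaluation you anticipate is an artifact of this error.
\end{itemize}
In the correct bookkeeping nothing is left over:
\begin{itemize}
\item The two copies of $2E\smile_1\paren{\Delta(\bm{\hat f})+\Sigma(\bm{\hat b}+\bm{\hat e},\delta\bm{\hat f})}$, one from $g$ and one from $h$, add to $0$ mod $4$.
\item The term $2\delta F\smile_2\sqpar{\Delta(\bm{\hat e})+\Sigma(\delta\bm{\hat b},\delta\bm{\hat e})}$ produced by $g$'s $\smile_2$ term is cancelled exactly by $h$'s $\smile_2$ term, since $\delta^2\bm{\hat f}=0$.
\item The $\bm{\hat b}$-dependent $\Sigma$ terms cancel against $2\bm{\hat b}\smile_1\Sigma(\bm{\hat b},\bm{\hat e})$ via the mod-$2$ cocycle identity $\Sigma(\bm{\hat b}+\bm{\hat e},\delta\bm{\hat f})+\Sigma(\bm{\hat b},\delta\bm{\hat f})+\Sigma(\bm{\hat b},\bm{\hat e})\equiv\Sigma(\bm{\hat b}+\delta\bm{\hat f},\bm{\hat e})$. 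Your cellwise $0/1$-polynomial idea would verify this correctly, so that part of your plan is sound.
\end{itemize}

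Two further corrections to your endgame. First, no local evaluation is needed for the flip itself. The only non-commuting constituents are the bare $X_f$ and $Z_p$; the two $Z$-strings commute, so your ``overlap'' term is zero. These contribute exactly $2\int\bm{\hat f}\smile\bm{\hat e}\equiv2\delta_{f,p}$ mod $4$. This is how the paper separates the two flipper conditions, leaving only the $\bm{\hat b}$-difference identity to prove.

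Second, your fallback is not admissible. The flipper relations must hold as exact operator identities in the full algebra $\mc{A}$, because they are what make $X_f\mapsto\overline{X}_f$, $Z_p\mapsto\overline{Z}'_p$ a QCA on $\mc{A}$. A residual therefore cannot be discarded ``modulo $\delta\bm{\hat b}$''. The $\delta\bm{\hat b}$-freedom may only be used to redefine the separators. That is exactly the role of the term $\delta\{\bm{\hat f}\}\smile_2\{\delta\bm{\hat b}\}$ built into $g$: it cancels the $-\delta F\smile_2\delta E$ left after the $\smile_1$ cancellation.
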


\noindent
{\bf Proof.} Due to their definitions, $\overline{X}_f$ and $\overline{Z}_p$ already have a single $X_f$, $Z_p$ each, which provide the correct (anti)commutation relations: they commute for $f \neq p$, and anticommute for $f = p$. Therefore, what we must check is that everything else commutes.

If we write $\overline{X}_f \overline{Z}_p$ explicitly, we can see that in order to bring out the $Z_p$ to the front, we must modify $g$ to $g(f, \bm{\hat{b}} + \bm{\hat{e}})$. Similarly, if we write $\overline{Z}_p \overline{X}_f$, we need to pull out $Z_{\delta t^\frac{1}{2} (\bm{f})}$ to the front, which makes us modify $h$ to $h(p, \bm{\hat{b}} + \delta \bm{\hat{f}})$.

What remains is establishing the equivalence
\be
g(f, \bm{\hat{b}} + \bm{\hat{e}}) + h(e, \bm{\hat{b}}) \equiv g(f, \bm{\hat{b}}) + h(e, \bm{\hat{b}} + \delta \bm{\hat{f}})
\ee

The equivalence is to be understood, as always, up to coboundaries and up to multiples of 4. For this reason, at every step of the computation we are allowed to eliminate multiples of 4, and we often do so implicitly, without mentioning it.

For didactic purposes, let us suppose that we do not know the expressions for $g$ and $h$, and derive them from scratch.

Firstly, before the modification by $\delta \bm{\hat{b}}$, we have:
\begin{multline}
g'(f, \bm{\hat{b}}) \coloneqq \delta \{\bm{\hat{f}}\} \smile_1 \{ \bm{\hat{b}} \} + 2 \Big[ \bm{\hat{b}} \smile_1 \Delta(\bm{\hat{f}}) + \delta \bm{\hat{f}} \smile_1 \Delta(\bm{\hat{f}}) \\
+ \bm{\hat{b}} \smile_1 \Sigma(\bm{\hat{b}}, \delta \bm{\hat{f}}) + \delta \bm{\hat{f}} \smile_1 \Sigma(\bm{\hat{b}}, \delta \bm{\hat{f}}) \Big] \,.
\end{multline}

Let us define $B \coloneqq \{\bm{\hat{b}}\}$, $F \coloneqq \{\bm{\hat{f}}\}$, $E \coloneqq \{\bm{\hat{e}}\}$. Then
\be
\{\bm{\hat{b}} + \bm{\hat{e}}\} = B + E + 2 \Sigma(\bm{\hat{b}}, \bm{\hat{e}}) \,.
\ee

Substituting this change of $\bm{\hat{b}} + \bm{\hat{e}}$ into $g'$
\begin{multline}
 g'(f, \bm{\hat{b}} + \bm{\hat{e}}) = \delta F \smile_1 B + \delta F \smile_1 E \\
 + 2 \Big[ \delta F \smile_1 \Sigma(\bm{\hat{b}}, \bm{\hat{e}})
 + B \smile_1 \Delta(\bm{\hat{f}}) + E \smile_1 \Delta(\bm{\hat{f}}) \\
 + \delta F \smile_1 \Delta(\bm{\hat{f}}) + B \smile_1 \Sigma(\bm{\hat{b}} + \bm{\hat{e}}, \delta \bm{\hat{f}}) \\
 + E \smile_1 \Sigma(\bm{\hat{b}} + \bm{\hat{e}}, \delta \bm{\hat{f}}) + \delta F \smile_1 \Sigma(\bm{\hat{b}} + \bm{\hat{e}}, \delta \bm{\hat{f}}) \Big] \,.
\end{multline}

The difference between those two is
\begin{multline}
 \Delta g' \coloneqq g'(f, \bm{\hat{b}} + \bm{\hat{e}}) - g'(f, \bm{\hat{b}}) \\
 = \delta F \smile_1 E + 2 \Big[ B \smile_1 \paren{\Sigma(\bm{\hat{b}}, \delta \bm{\hat{f}}) + \Sigma(\bm{\hat{b}} + \bm{\hat{e}}, \delta \bm{\hat{f}})} \\
 + \delta F \smile_1 \paren{\Sigma(\bm{\hat{b}} + \bm{\hat{e}}, \delta \bm{\hat{f}}) + \Sigma(\bm{\hat{b}}, \delta \bm{\hat{f}}) + \Sigma(\bm{\hat{b}}, \bm{\hat{e}})} \\
 + E \smile_1 \paren{\Delta(\bm{\hat{f}}) + \Sigma(\bm{\hat{b}} + \bm{\hat{e}}, \delta \bm{\hat{f}})} \Big] \,.
\end{multline}

Now, let us suppose that $h(e, \bm{\hat{b}})$ has a term of the form $2 B \smile_1 \Sigma(\bm{\hat{b}}, \bm{\hat{e}})$. Under the modification $\bm{\hat{b}} \to \bm{\hat{b}} + \delta \bm{\hat{f}}$, it would contribute
\begin{multline}
 2(B + \delta F) \smile_1 \Sigma(\bm{\hat{b}} + \delta \bm{\hat{f}}, \bm{\hat{e}}) - 2 B \smile_1 \Sigma(\bm{\hat{b}}, \bm{\hat{e}}) \\
 = 2 \bigg[ \delta F \smile_1 \Sigma(\bm{\hat{b}} + \delta \bm{\hat{f}}, \bm{\hat{e}}) \\
 + B \smile_1 \paren{\Sigma(\bm{\hat{b}}, \bm{\hat{e}}) + \Sigma(\bm{\hat{b}} + \delta \bm{\hat{f}}, \bm{\hat{e}})} \bigg]
\end{multline}

Those terms happen to be the exact same as the ones which appear in $\Delta g'$. Indeed, the term with $\delta F \smile_1 ...$:
\begin{align}
 2 &\sqpar{ \Sigma(\bm{\hat{b}} + \bm{\hat{e}}, \delta \bm{\hat{f}}) - \Sigma(\bm{\hat{b}}, \delta \bm{\hat{f}}) + \Sigma(\bm{\hat{b}}, \bm{\hat{e}})} \nonumber \\
 &= \{\bm{\hat{b}} + \bm{\hat{e}} + \delta \bm{\hat{f}} \} - \{ \bm{\hat{b}} + \bm{\hat{e}} \} - \{ \delta \bm{\hat{f}} \} - \{ \bm{\hat{b}} + \delta \bm{\hat{f}} \} \nonumber \\
 &\hspace{2em} + \{ \bm{\hat{b}} \} + \{ \delta \bm{\hat{f}} \} + \{ \bm{\hat{b}} + \bm{\hat{e}} \} - \{ \bm{\hat{b}} \} - \{ \bm{\hat{e}} \} \nonumber \\
 &= \{\bm{\hat{b}} + \bm{\hat{e}} + \delta \bm{\hat{f}} \} - \{ \bm{\hat{b}} + \delta \bm{\hat{f}} \} - \{ \bm{\hat{e}} \} \nonumber \\
 &= 2 \Sigma(\bm{\hat{b}} + \delta \bm{\hat{f}}, \bm{\hat{e}})
\end{align}

We have used the definition of $\Sigma$, as well as the fact that, mod 4, $2 = -2$, to change a sign in the first line.

This same equation, rearranged, gives us equality also of the terms $B \smile_1 \dots$. Therefore
\be
\Delta g' = \delta F \smile_1 E + 2 E \smile_1 \paren{\Delta(\bm{\hat{f}}) + \Sigma(\bm{\hat{b}} + \bm{\hat{e}}, \delta \bm{\hat{f}})} + (\dots) \,,
\ee
where $(\dots)$ includes multiples of 4, coboundaries, and all the terms which we know how to cancel via $h$.

Next, consider the following Leibniz rule:
\be
\delta (\delta F \smile_2 E) = \delta F \smile_2 \delta E + \delta F \smile_1 E + E \smile_1 \delta F \,,
\ee
which is exact also in integer coefficients (all the signs just happen to be $+1$). Therefore:
\begin{multline}
 \Delta g' = - E \smile_1 \delta F - \delta F \smile_2 \delta E \\
 + 2 E \smile_1 \paren{\Delta(\bm{\hat{f}}) + \Sigma(\bm{\hat{b}} + \bm{\hat{e}}, \delta \bm{\hat{f}})} + (\dots) \,.
\end{multline}

Now, let us suppose that $h$ has a term of the form $\{ \bm{\hat{e}} \} \smile_1 \{ \bm{\hat{b}} + \bm{\hat{e}} \}$. Its contribution to $\Delta h$ is
\begin{multline}
 E \smile_1 \paren{\{ \bm{\hat{b}} + \bm{\hat{e}} \} + \delta F + 2 \Delta(\bm{\hat{f}}) + 2 \Sigma(\bm{\hat{b}} + \bm{\hat{e}}, \delta \bm{\hat{f}}) } \\
 - E \smile_1 \{\bm{\hat{b}} + \bm{\hat{e}}\} \\
 = E \smile_1 \delta F + 2 E \smile_1 \sqpar{\Delta(\bm{\hat{f}}) + \Sigma(\bm{\hat{b}} + \bm{\hat{e}}, \delta \bm{\hat{f}})} \,.
\end{multline}

We conclude that this term must appear with a $-1$ sign in $h$. At this point, we have
\be
\Delta g' = - \delta F \smile_2 \delta E + (\dots) \,,
\ee
where, once again, $(\dots)$ contains multiples of 4, coboundaries, and terms which we know how to absorb via $h$.

Next, we add a term to $g'$, which we can do as long as it only depends on $\delta \bm{\hat{b}}$: $\delta F \smile_2 \{\delta \bm{\hat{b}}\}$. Its contribution is
\begin{multline}
 \delta F \smile_2 \{ \delta \bm{\hat{b}} + \delta \bm{\hat{e}} \} - \delta F \smile_2 \{ \delta \bm{\hat{b}} \} \\
 = \delta F \smile_2 \paren{ \{ \delta \bm{\hat{e}} \} + 2 \Sigma(\delta \bm{\hat{b}}, \delta \bm{\hat{e}}) } \\
 = \delta F \smile_2 \delta E + 2 \delta F \smile_2 \sqpar{\Delta(\bm{\hat{e}}) + \Sigma(\delta \bm{\hat{b}}, \delta \bm{\hat{e}})} \,.
\end{multline}

If we let $g \coloneqq g' + \delta F \smile_2 \{\delta \bm{\hat{b}}\}$, then
\be
\Delta g = 2 \delta F \smile_2 \sqpar{\Delta(\bm{\hat{e}}) + \Sigma(\delta \bm{\hat{b}}, \delta \bm{\hat{e}})} \,.
\ee

Finally, if $h$ had a term $2 B \smile_2 \sqpar{\Delta(\bm{\hat{e}}) + \Sigma(\delta \bm{\hat{b}}, \delta \bm{\hat{e}})}$, then its contribution would be
\begin{multline}
 2 (B + \delta F) \smile_2 \sqpar{\Delta(\bm{\hat{e}}) + \Sigma(\delta \bm{\hat{b}} + \delta^2 \bm{\hat{f}}, \delta \bm{\hat{e}}) } \\
 - 2 B \smile_2 \sqpar{\Delta(\bm{\hat{e}}) + \Sigma(\delta \bm{\hat{b}}, \delta \bm{\hat{e}}) } \\
 = 2 \delta F \smile_2 \sqpar{\Delta(\bm{\hat{e}}) + \Sigma(\delta \bm{\hat{b}}, \delta \bm{\hat{e}}) } \,.
\end{multline}

Putting everything together, and undoing some of the definitions, we find that if we define
\begin{align}
 g(f, \bm{\hat{b}}) &\coloneqq \delta \{\bm{\hat{f}}\} \smile_1 \{ \bm{\hat{b}} \} + 2 \Big[ \bm{\hat{b}} \smile_1 \Delta(\bm{\hat{f}}) + \delta \bm{\hat{f}} \smile_1 \Delta(\bm{\hat{f}}) \nonumber \\
+ \bm{\hat{b}} &\smile_1 \Sigma(\bm{\hat{b}}, \delta \bm{\hat{f}}) + \delta \bm{\hat{f}} \smile_1 \Sigma(\bm{\hat{b}}, \delta \bm{\hat{f}}) \Big] + \delta \{\bm{\hat{f}}\} \smile_2 \{\delta \bm{\hat{b}}\} \,, \\
h(e, \bm{\hat{b}}) &\coloneqq 2 \bm{\hat{b}} \smile_1 \Sigma(\bm{\hat{b}}, \bm{\hat{e}}) - \{\bm{\hat{e}}\} \smile_1 \{ \bm{\hat{b}} + \bm{\hat{e}} \} \nonumber \\
&+ 2 \bm{\hat{b}} \smile_2 \sqpar{\Delta(\bm{\hat{e}}) + \Sigma(\delta \bm{\hat{b}}, \delta \bm{\hat{e}}) } \,.
\end{align}

Then the following is true
\be
g(f, \bm{\hat{b}} + \bm{\hat{e}}) - g(f, \bm{\hat{b}}) \equiv h(e, \bm{\hat{b}} + \delta \bm{\hat{f}}) - h(e, \bm{\hat{b}}) \,,
\ee
with equivalence up to multiples of 4, and up to coboundaries.

\subsubsection{Derivation of $(ST)^3=Y$}
\label{sec:derivation_of_STSTST_Y}

Our goal now is to show that $(\alpha_\KWW \circ \alpha_\TW)^3$ extends to a QCA on the full tensor product algebra, the Witt class of this extension is well-defined, and this Witt class is that of the semion QCA \cite{Shirley_2022}. It is important to note that we make two crucial assumptions: one is physically motivated, and we believe it to be very reasonable, namely, that the classification of $\Z_2^{(1)}$ SPT phases on the lattice is complete, and is given by the entanglers $\sf{T}^j$ for $j = 0, \cdots, 3$. The other is perhaps a little stronger: that any two entanglers of the same phase are related by a finite depth circuit (with each gate being symmetric). 

More precisely, our assumption can be spelled out as follows: 
%\begin{itemize}
%\item[] {\bf Assumption:}
%If $\alpha$ is a QCA which is extendable to a QCA $\tilde{\alpha}$ on the full algebra, such that $\tilde{\alpha}$ is a finite depth circuit (composed of not necessarily symmetric gates), then $\alpha$ is equivalent, up to a symmetric finite depth circuit (with symmetric gates) to $\aTW^k$, for some $k = 0, \cdots, 3$.
%\end{itemize}

{\noindent{\bf Assumption.} \textit{If $\alpha$ is a QCA which is extendable to a QCA $\tilde{\alpha}$ on the full algebra, such that $\tilde{\alpha}$ is a finite depth circuit (composed of not necessarily symmetric gates), then $\alpha$ is equivalent, up to a symmetric finite depth circuit (with symmetric gates) to $\aTW^k$, for some $k = 0, \cdots, 3$.}}

Note that, for brevity of the notation, and for ease of comparison with the continuum and categorical results, we are going to write $S \coloneqq \alpha_\KWW$ and $T \coloneqq \alpha_\TW$ in this section.

Since $T$ is a Witt trivial QCA that is well defined on the full tensor product algebra, it suffices to demonstrate this for $S \circ T \circ S \circ T \circ S$. Recall from the previous subsection that 
\begin{align}
 (S \circ T \circ S \circ T \circ S)(X_f + \mathfrak{I}) = \overline{X}_f + \mathfrak{I} \,,
\end{align}
with ${\overline{X}}_f$ given in \eqref{eq:Xbar} (the equality is technically up to some translation).  Now \eqref{eq:Xbar} also defines a set of flippers ${\overline{Z}}_p$ for ${\overline{X}}_f$.  These are not commuting, but by a general result \cite{HaahFidkowskiHastings2023} this implies the existence of commuting flippers ${\overline{Z}}'_p$.  The assignment
\begin{align}
X_f &\rightarrow {\overline{X}}_f \\
Z_p &\rightarrow {\overline{Z}}'_p
\end{align}
then defines a QCA $\tilde{Q} : \mc{A} \rightarrow \mc{A}$ on the full tensor product algebra.  $\tilde{Q}$ commutes with the symmetry operators, since these are just determined by the images of the $X_l$, and so there is an induced QCA $Q : \mc{A}_{\Z_2^{(1)}} / \mathfrak{I} \rightarrow \mc{A}_{\Z_2^{(1)}} / \mathfrak{I}$ on the $\Z_2$ 1-form symmetric algebra.  Then, since
\be
S \circ T \circ S \circ T \circ S = Q \circ \paren{Q^{-1} \circ S \circ T \circ S \circ T \circ S }
\ee
the claim about $STSTS$ above follows if we can prove 
%\begin{itemize}
%\item[]{\bf Claim 1.} $\tilde{Q}$ is a non-trivial QCA in the equivalence class of the semion QCA 
%\item[]{\bf Claim 2.} $R:=Q^{-1} \circ STSTS$ extends to a Witt trivial QCA on the full tensor product algebra.
%\end{itemize}

\noindent{\bf Claim 1.} \textit{$\tilde{Q}$ is a non-trivial QCA in the equivalence class of the semion QCA.}

\noindent{\bf Claim 2.} \textit{$R:=Q^{-1} \circ STSTS$ extends to a Witt trivial QCA on the full tensor product algebra.}

Claim 1 would follow if we could show that the stabilizers ${\overline{X}}_f$ match those of \cite{Shirley_2022} exactly, because then it would follow that $\tilde{Q}$ maps the trivial Hamiltonian $-\sum_f X_f$ to precisely the semion Walker-Wang Hamiltonian, which has a commuting projector semion model boundary.  Alternatively, it would follow if we could explicitly find a commuting projector boundary Hamiltonian with semion topological order which commutes with the bulk Hamiltonian $-\sum_f {\overline{X}}_f$.  We do not currently have a proof of either of these statements.  However, we do know that $-\sum_f {\overline{X}}_f$ is a strictly local commuting projector parent Hamiltonian for the semion Walker-Wang ground state, because the semion Walker-Wang ground state is the $1$-form gauging of the Tsui-Wen $\Z_2$ one-form SPT.  This just follows from the fact that, by construction of the ${\overline{X}}_f$, the ground state of $-\sum_f {\overline{X}}_f$ is $ST$ applied to the ground state of $-\sum_f X_f$: $T$ creates the Tsui-Wen $1$-form SPT, and $S$ gauges the $1$-form symmetry.  It would be very surprising if it were not in the same equivalence class of local Hamiltonians as the Hamiltonian of \cite{Shirley_2022}.

For the rest of this section, we will focus on proving Claim $2$: that $R \coloneqq Q^{-1} \circ STSTS$ extends to the full tensor product algebra, and its extension is uniquely defined and trivial, i.e. it is a finite depth circuit on the full tensor product algebra. We note that $R$, by construction, maps $X_l$ to $X_l$ for each $l$ (or rather, their equivalence classes map like this).

Theorem~\ref{thm:QCA_extendability} implies that $R' \coloneqq S \circ R \circ S$ is extendable to the full algebra. That is, there is some $\tilde{R}' : \mc{A} \rightarrow \mc{A}$ such that
\be
R'(a + \mathfrak{I}) = \tilde{R}'(a) + \mathfrak{I} \,, \quad \forall a \in \mc{A}_{\Z_2^{(1)}} \,.
\ee
Furthermore, $\tilde{R}'(Z_f) = Z_f$, and $\tilde{R}'(X_{\partial c}) = X_{\partial c}$.

We would now like to argue -- under the Assumption stated above -- that $\tilde{R}' \sim \tilde{T}^{-j}$, for some $j$, with the equivalence being up to a symmetric finite depth circuit (whose gates are $\Z_2^{(1)}$-symmetric).  

Let us now imagine a finite but large system, say on a torus. We can then talk about the unitary operator corresponding to $\tilde{R}'$, $\sf{U}_{\tilde{R}'}$ (in a finite dimensional Hilbert space, every automorphism of its operators algebra is inner, that is, is given by conjugation by a unitary). Since $\sf{U}_{\tilde{R}'}$ preserves all of the $Z_f$, this is just multiplication by some phase $\phi(\{Z\}_f)$ in the computational basis ($\ket{0, 1}$). Since $\tilde{R}'$ is a QCA, $-\sum_f \tilde{R}'(X_f)$ is a parent Hamiltonian for the state $\sf{U}_{\tilde{R}'}\ket{+}$, which in the computational basis just has amplitudes $\phi(\{Z\}_f)$. This must be some $\Z_2^{(1)}$ one-form SPT ground state.

In fact, we claim that, by virtue of being diagonal in the computational basis, $\tilde{R}'$ is a finite depth circuit, with gates that may not necessarily be symmetric.  We now provide an argument for this, which is a special case of Lemma II.3 in  \cite{Fidkowski:2019nju}.  First, let us take a cellulation of space, with cells of size larger than the Lieb-Robinson length of $\tilde{R}'$.  The dual graph of this cellulation is $n$-colorable for sufficiently large $n$ ($n$ is some finite constant).  Let us take such a coloring, and let $R_k$ ($k \in 0, \ldots, n$), be the union of all cells whose dual vertices have colors $\leq k$.  Thus $R_0$ is the empty set, and $R_n$ is the whole space.  Now, given any spin configuration $\{Z\}_f$ in the computational basis, let us define, for each $k \in 0, \ldots, n$, a corresponding configuration $\{Z^{(k)}\}_f$ by setting $Z^{(k)}_f = Z_f$ if $f \in R_k$, and $Z^{(k)}_f = 1$ otherwise.  Thus $\{Z^{(0)}\}_f$ is always the trivial all spin up configuration, and $\{Z^{(n)}\}_f = \{Z\}_f$.  Finally, let us define, for each $k \in 1, \ldots n$, the unitary operator $V_k$ as follows.  $V_k$ is diagonal in the computational basis, and multiplies by the phase $\phi(Z^{(k)}_f)\cdot \phi^{-1}(Z^{(k-1)}_f)$.

Clearly we have that $\tilde{R}' = V_1 \ldots V_n$.  Thus to show that $\tilde{R}'$ is an FDC, it suffices to show that each $V_k$ is a disjoint product of local gates.  It is illustrative to first consider $k=1$.  $V_1$ is a product of local gates, each of which act within a  $\tilde{R}'$-Lieb-Robinson length thickening of a cell $C$ which has color $1$.  By virtue of the fact that we have an $n$-coloring, and the fact that the cells are all larger than the Lieb-Robinson length of $\tilde{R}'$, none of these thickened neighborhoods overlap.  In each neighborhood of such a cell $C$, $V_1$ simply acts by multiplying, in the computational basis, by the phase $\phi(\{Z^{(1,C)}\}_f)$, where $\{Z^{(1,C)}\}_f$ is the configuration that is equal to $Z_f$ within $C$ and $1$ elsewhere.  By virtue $\tilde{R}'$ being a QCA, this kind of multiplication is conditioned only on the spin configuration within the thickened neighborhood of $C$.
A similar argument shows that each $V_k$ is a product of disjoint local gates, and hence $\tilde{R}'$ is a finite depth circuit.  We therefore have, by our Assumption, the existence of a $j=0, \cdots, 3$ such that 
\be
R = S \circ R' \circ S = Q^{-1} \circ STSTS \sim (STS)^{-j} \,,
\ee
as QCA on $\mc{A}_{\Z_2^{(1)}} / \mathfrak{I}$.

We also find that $Q^{-1} \circ STSTS \circ (STS)^{j}$ is an FDC, and is therefore extendable to the full algebra. Now note that, $Q$ being extendable to the full algebra means that there is some $\tilde{W} \in \Witt^\pt$ such that $\tilde{Q} \circ \tilde{W}^{-1}$ is an FDC, {with possibly non-symmetric gates}. Such $W$ could be written as $Y^k Y'^l$ for $Y, Y'$ the generators of $W_2 = \Z_8 \oplus \Z_2$. {In fact, from the construction of $\tilde{Q}$ we know that it entangles the semion Walker-Wang model.  Making the standard assumption that the bulk Hamiltonian it entangles allows for a commuting projector semion boundary, the only possibility is $k=1, l'=0$.  We note in passing that the other possibility, namely $(ST)^3 = Y Y'$, would be realized in the case of $\Z_4^{(1)}$ symmetry, since in that case the relevant anyon theory is precisely $U(1)_4$, which corresponds to the $Y Y'$ generator of $\Witt^\pt$.} - see e.g. \cite{Shirley_2022}, Sec. IV A. 

Now we use our Assumption, namely the classification of SPT states again, which tells us that $\tilde{Q} \circ \tilde{W}^{-1}$ is equivalent to $\tilde{T}^m$, up to \emph{symmetric} FDCs.

Now we look at the equivalence $Q^{-1} \circ STSTS \circ (STS)^{j} \sim 1$ but in the quotient (see Section \ref{sec:BrAutZ2})
\be
\frac{\Witt^\pt(\Z_2^2, s)}{\Witt^\pt} \cong \Perm_4 \cong \langle \tau, \sigma \; \mid \; \sigma^2 = \tau^4 = (\sigma \tau)^3 = 1 \rangle \,.
\ee
From the previous discussion, we have
\be
\tau^{-m} \tau (\sigma \tau \sigma)^j = 1
\ee
It can be checked, by brute force (using for example $\sigma = (1, 2), \tau = (1, 2, 3, 4)$ in cycle notation for symmetric groups) that the only way for this equality to be satisfied is if $m = 1$ and $j = 0$.

Therefore, we find that $Q^{-1} \circ STSTS$ is an FDC (because $j = 0$). This also means $STSTS$ can be extended to the full algebra, because any FDC can be extended, and by definition of $\tilde{Q}$ and $Q$, $Q$ can be extended as well. This concludes the reasoning in support of Claim 2.

The claim about the uniqueness of the Witt class of an extension comes down to showing that any extension of the identity automorphism of the $\Z_2$ 1-form symmetric algebra has trivial Witt class.  This is trivial to show, since any such extension maps the $X_f$ to themselves up to symmetry operators, and hence the image of the Hamiltonian $-\sum X_f$ has a trivial gapped boundary.

\noindent{\bf Comments.}
{Let us conclude with some remarks about the group structure of these QCA (up to translations and finite-depth circuits with symmetric gates). Our results strongly suggest a match between the Witt group calculations and the group generated by $\aKWW$ and $\aTW$. Indeed, $\aKWW$ is order 2, and $\aTW$ is order 4.}

{We also find, under certain physically motivated assumptions, that $(\aKWW \circ \aTW)^3$ is liftable to the full algebra and equivalent to the semion QCA \cite{Shirley_2022}. Finally, since the semion QCA is central, we recover all the same relations which indicate the group of QCA is exactly $2O \circ_{\Z_2} \Z_8$ (compare with Theorem~\ref{thm:modular}, and Equation~\eqref{eq:STcat} there). Note, however, we do not generate in this way the additional $\Z_2$ factor in grade $p=2$ from $\Witt^\pt(\Z_2 \oplus \widehat{\Z_2}, s)$. This together with the remaining $W_p$ classes for $p\not=2$ factorize,  consistently with the general structure in Theorem \ref{thm:modular}. }

\section{QCAs on $\mathbb{Z}_p^{(1)}$-symmetric Algebra for odd prime $p$}
\label{sec: QCAs on Zp 1-form symmetric algebra}

In this section, we discuss QCAs on the algebra of $\mathbb{Z}_p$ 1-form symmetric local operators for odd prime $p$.
We will consider several examples of such QCAs and study their fusion rules.
As the first example, we will consider the $\mathbb{Z}_p$ Kramers-Wannier-Wegner operator in Section~\ref{sec: Zp Kramers-Wannier-Wegner operator}, which generalizes the KWW operator for the $\mathbb{Z}_2$ case.
We will then review some other examples of $\mathbb{Z}_p$ 1-form symmetric QCAs, such as the SPT entanglers of Tsui and Wen \cite{Tsui:2019ykk} in Section~\ref{sec: Zp Tsui-Wen entanglers} and the non-trivial Clifford QCAs of Sun et al.~\cite{MengSun2026} in Section~\ref{sec: Zpk QCA}.
We then compute particular fusion rules of these QCAs in Section~\ref{sec: Zpk QCA generated by D and T} and discuss the full group structure in Section~\ref{sec:ZpQCAgroup}.
Finally, we will compare the group of QCAs with the Witt group calculation (Section~\ref{sec: Comparison with the Graded Pointed Witt Group}) and the continuum calculation (Section~\ref{sec: comparison Zp}).
As in the $\mathbb{Z}_2$ case, it turns out that the fusion rules on the lattice and those in the continuum differ by non-trivial QCAs and lattice translations.

%\vspace*{\baselineskip}
%\noindent{\bf Setup.}
\subsection{Setup}
Throughout this section, we will consider a cubic lattice with a single $\mathbb{Z}_p$ qudit on each face.
The state space is thus given by
\begin{equation}
\mathcal{H}_F = \bigotimes_{f \in F} \mathbb{C}^p.
\label{eq: Zp Hilb}
\end{equation}
We recall that $F$ denotes the set of all faces; see Section~\ref{sec: Notations and conventions} for the notations and conventions.
The Pauli operators $X_f$ and $Z_f$ on each face $f$ are defined by
\begin{equation}
X_f \ket{a}_f = \ket{a+1 \bmod p}_f, \quad
Z_f \ket{a}_f = e^{\frac{2\pi i a}{p}} \ket{a}_f,
\end{equation}
where $a \in \{0, 1, \cdots, p-1\}$.

The $\mathbb{Z}_p$ 1-form symmetry operators are defined by
\begin{equation}
\eta(\Sigma) \coloneq \prod_{f \in F} X_f^{\bm{f}(\Sigma)},
\label{eq: Zp sym op}
\end{equation}
where $\Sigma$ is a 2-cycle with $\mathbb{Z}_p$ coefficients.
For example, the symmetry operator on the boundary of cube $c$ is given by
\begin{equation}
X_{\partial c} \coloneq \prod_{f \in F} X_f^{\bm{f}(\partial c)}.
\label{eq: symmetry bubble Zp}
\end{equation}
We emphasize that one needs to be careful about the signs in the boundary $\partial c$ (defined by \eqref{eq: partial c def}) because $X_f$ is not of order 2.
See Figure~\ref{fig: symmetry bubble Zp} for an illustration of $X_{\partial c}$.
\begin{figure}[t]
\centering
$X_{\partial c} =$
\adjincludegraphics[valign=c, trim={10, 0, 10, 0}]{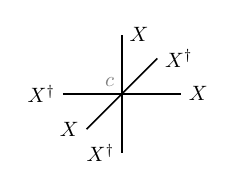}
\caption{The $\mathbb{Z}_p$ 1-form symmetry operator $X_{\partial c}$ supported on the boundary of a cube $c$. Here, the middle vertex is the dual of $c$, and the edges attached to it represent the dual of the boundary of $c$.}
\label{fig: symmetry bubble Zp}
\end{figure}

The local operators invariant under the above $\mathbb{Z}_p$ 1-form symmetry are generated by
\begin{equation}
X_f, \quad
Z_{\delta \bm{e}} = \prod_{f \in F} Z_f^{\delta \bm{e}(f)}
\label{eq: XZ Zp}
\end{equation}
for all faces $f \in F$ and all edges $e \in E$.
We again emphasize that one needs to be careful about the signs in the coboundary $\delta \bm{e}$ (defined by \eqref{eq: delta e def}) because $Z_f$ is not of order 2.
The operator $Z_{\delta \bm{e}}$ is illustrated in Figure~\ref{fig: Zp flux}.
\begin{figure}
\centering
$Z_{\delta \bm{e}} =$
\adjincludegraphics[valign=c, trim={10, 0, 0, 0}]{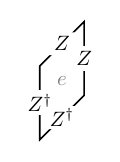},
\adjincludegraphics[valign=c]{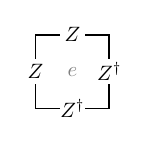},
\adjincludegraphics[valign=c]{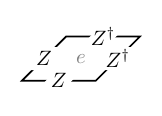}
\caption{The flux operator $Z_{\delta \bm{e}}$ for each edge $e$. Here, the plaquette in each diagram represents the dual of $e$.}
\label{fig: Zp flux}
\end{figure}
The algebra of local operators generated by $X_f$'s and $Z_{\delta \bm{e}}$'s is denoted by
\begin{equation}
\mathcal{A}_{\mathbb{Z}_p^{(1)}} \coloneq \langle X_f, Z_{\delta \bm{e}} \mid f \in F, e \in E \rangle.
\end{equation}

In what follows, we will often impose the condition that the symmetry operator~\eqref{eq: symmetry bubble Zp} around each cube acts as the identity operator.
In other words, we consider the subspace of $\mathcal{H}_F$ such that
\begin{equation}
X_{\partial c} = 1
\end{equation}
for all cubes $c \in C$.
We note that the symmetry operator~\eqref{eq: Zp sym op} is topological on this subspace.
The algebra of local operators on this subspace is given by the quotient $\mathcal{A}_{\mathbb{Z}_p^{(1)}} / \mathfrak{I}_p$, where $\mathfrak{I}_p$ is the ideal generated by $X_{\partial c} -1$ for all cubes $c$.

\subsection{$\mathbb{Z}_p$ Kramers-Wannier-Wegner operator}
\label{sec: Zp Kramers-Wannier-Wegner operator}
In this subsection, we define the $\mathbb{Z}_p$ Kramers-Wannier-Wegner operator $\mathsf{S}_p$ and compute its action on $\mathbb{Z}_p$ 1-form symmetric local operators~\eqref{eq: XZ Zp}.
Specifically, we will show that the $\mathbb{Z}_p$ KWW operator $\mathsf{S}_p$ acts on local operators $X_f$ and $Z_{\delta \bm{e}}$ as follows:
\begin{equation}
X_f \xmapsto{\sf{S}_p} Z_{\delta t^{\frac{1}{2}}(\bm{f})}^{-s_f}\,, \quad
Z_{\delta \bm{e}} \xmapsto{\sf{S_p}} X_{t^{\frac{1}{2}}(e)}^{s_e} \,.
\label{eq: Dp action}
\end{equation}
Here, $s_f$ and $s_e$ are the signs defined by
\begin{align}
s_f &=
\begin{cases}
+1 \quad &\text{if $f$ is a $yz$- or an $xy$-plaquette},\\
-1 \quad &\text{if $f$ is an $xz$-plaquette},
\end{cases}
\label{eq: sf}
\\
s_e &=
\begin{cases}
+1 \quad &\text{if $e$ is an $x$- or a $z$-link}, \\
-1 \quad &\text{if $e$ is a $y$-link}.
\end{cases}
\label{eq: se}
\end{align}
We note that $s_f$ and $s_e$ are invariant under the half translation, that is,
\begin{equation}
s_f = s_{t^{\frac{1}{2}}(f)}, \quad
s_e = s_{t^{\frac{1}{2}}(e)}.
\label{eq: s inv}
\end{equation}
From \eqref{eq: Dp action} and \eqref{eq: s inv}, it immediately follows that the action of $\mathsf{S}_p$ squares to the lattice translation followed by the charge conjugation:
\begin{equation}
X_f \xmapsto{\sf{S}_p^2} X_{t(f)}^{\dagger}\,, \quad
Z_{\delta \bm{e}} \xmapsto{\sf{S}_p^2} Z_{\delta t(\bm{e})}^{\dagger} \,.
\end{equation}
The action~\eqref{eq: Dp action} can also be expressed in terms of the cup product as
\begin{equation}
X_f \xmapsto{\sf{S}_p} \prod_{e \in E} Z_{\delta \bm{e}}^{-\int \bm{f} \smile \bm{e}} \,, \quad
Z_{\delta \bm{e}} \xmapsto{\sf{S}_p} \prod_{f \in F} X_f^{\int \bm{e} \smile \bm{f}} \,.
\end{equation}

As in the $\mathbb{Z}_2$ case, $\mathsf{S}_p$ maps the symmetry operator $X_{\partial c}$ to the identity operator.
Thus, $\mathsf{S}_p$ is not technically a QCA on $\mathcal{A}_{\mathbb{Z}_p^{(1)}}$, but defines a QCA on the quotient algebra $\mathcal{A}_{\mathbb{Z}_p^{(1)}} / \mathfrak{I}_p$.
This QCA is denoted by $\aKWWp$.

We remark that like in the $\Z_2$ case, we use the notation $\sf{S}_p$ instead of the more conventional $\sf{D}_p$ (for ``duality'') for consistency with the continuum notation.

In what follows, we will define the $\mathbb{Z}_p$ KWW operator $\mathsf{S}_p$ acting on the tensor product Hilbert space~\eqref{eq: Zp Hilb} and derive its action~\eqref{eq: Dp action} on local symmetric operators.

\subsubsection{Gauging operator for $\mathbb{Z}_p$ 1-form symmetry}
We first define the gauging operator for $\mathbb{Z}_p$ 1-form symmetry, which we will use later to define the $\mathbb{Z}_p$ KWW operator.
To this end, we begin by recalling the gauging procedure of $\mathbb{Z}_p$ 1-form symmetry on the lattice.

To gauge the $\mathbb{Z}_p$ 1-form symmetry, we first add a $\mathbb{Z}_p$ qudit on each edge.
The qudits on the edges are called the gauge field.
On the other hand, the original qudits on the faces are called the matter field.
The total Hilbert space is given by 
\begin{equation}
\mathcal{H} = \mathcal{H}_F \otimes \mathcal{H}_E,
\end{equation}
where $\mathcal{H}_F = \bigotimes_{f \in F} \mathbb{C}^{p}$ and $\mathcal{H}_E = \bigotimes_{e \in E} \mathbb{C}^{p}$ are the state spaces of the matter field and gauge field, respectively.
The gauge transformation around face $f$ is implemented by applying the Gauss law operator
\begin{equation}
G_f \coloneq X_f X_{\partial f},
\label{eq: Gauss law op Zp}
\end{equation}
where $X_{\partial f} \coloneq \prod_{e \in E} X_e^{\bm{e}(\partial f)}$.
See Figure~\ref{fig: Gauss Zp} for an illustration of the Gauss law operator $G_f$.
\begin{figure}[t]
\centering
$G_f =$
\adjincludegraphics[valign=c, trim={10, 0, 10, 0}, scale=1]{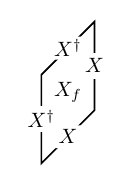},
\adjincludegraphics[valign=c, trim={0, 0, 10, 0}, scale=1]{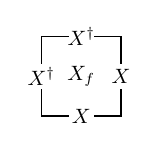},
\adjincludegraphics[valign=c, trim={0, 0, 10, 0},  scale=1]{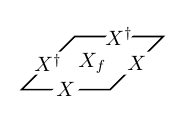}
\caption{The Gauss law operator $G_f$ illustrated on the direct lattice. Here, the middle plaquette in each diagram represents face $f$.}
\label{fig: Gauss Zp}
\end{figure}
The state space of the gauged model is then given by the subspace of $\mathcal{H}$ on which the Gauss law operator $G_f$ acts trivially for all faces $f \in F$.
Namely, the state space of the gauged model is
\begin{equation}
\mathcal{H}_{\mathrm{gauged}} = \{\ket{\psi} \in \mathcal{H} \mid G_f \ket{\psi} = \ket{\psi}, ~ \forall f \in F\}.
\end{equation}
This state space is not a tensor product Hilbert space.
Nevertheless, we can make it into a tensor product Hilbert space by fixing the gauge so that the matter degrees of freedom are frozen.
The state space after the gauge fixing is
\begin{equation}
\mathcal{H}_{\mathrm{gauged}}^{\prime} = \bigotimes_{e \in E} \mathbb{C}^2 = \mathcal{H}_E.
\end{equation}
Namely, the state space after the gauge fixing consists only of the gauge field on the edges.

The gauging operator $\mathsf{D}_{\mathrm{gauge}}: \mathcal{H}_F \to \mathcal{H}_E$ is defined as a linear map that maps the matter field to the gauge field in the gauge-equivalent configuration.
More specifically, the explicit action of $\mathsf{D}_{\mathrm{gauge}}$ in the computational basis is given by
\begin{equation}
\mathsf{D}_{\mathrm{gauge}} \ket{\bm{a}}_F = \frac{1}{p^{N/2}} \left( \prod_{f \in F} X_{\partial f}^{- \bm{a}(f)} \right) \ket{\bm{0}}_E.
\label{eq: D gauge Zp}
\end{equation}
Here, $\bm{a}$ is a $\mathbb{Z}_p$-valued 2-cochain, which represents a configuration of the matter field.
The corresponding state $\ket{\bm{a}}_F$ is defined by
\begin{equation}
\ket{\bm{a}}_F \coloneq \bigotimes_{f \in F} \ket{\bm{a}(f)}_f.
\end{equation}
Similarly, $\bm{0}$ on the right-hand side of \eqref{eq: D gauge Zp} is the trivial $\mathbb{Z}_p$-valued 1-cochain, which represents the trivial configuration of the gauge field.
The corresponding state $\ket{\bm{0}}_E$ is the tensor product of the $\ket{0}$ states on all edges.
The integer $N$ in the overall factor of \eqref{eq: D gauge Zp} is the number of cubes.
We note that $\mathsf{D}_{\mathrm{gauge}}$ is normalized so that it satisfies
\begin{equation}
\mathsf{D}_{\mathrm{gauge}}^{\dagger} \mathsf{D}_{\mathrm{gauge}} = \frac{1}{p^N} \sum_{\Sigma: \text{2-cycles}} \eta(\Sigma),
\end{equation}
where the right-hand side is the condensation operator on the lattice.
The normalization of $\mathsf{D}_{\mathrm{gauge}}$ is not important in later discussions.

Equation~\eqref{eq: D gauge Zp} can also be written as
\begin{equation}
\mathsf{D}_{\mathrm{gauge}} \ket{\bm{a}}_F = \frac{1}{p^{N/2}} \ket{\bm{a}^{\prime}}_E,
\label{eq: D gauge Zp2}
\end{equation}
where $\bm{a}^{\prime}$ is a $\mathbb{Z}_p$-valued 1-cochain defined by
\begin{equation}
\begin{aligned}
\bm{a}^{\prime}(e) &= -\sum_{f \in F} \bm{a}(f) \bm{e}(\partial f) \quad \bmod p \\
&= -\sum_{f \in F} \bm{a}(f) \delta \bm{e}(f) \quad \bmod p.
\end{aligned}
\label{eq: a prime Zp}
\end{equation}
We will use this expression later when we compute the action of $\mathsf{D}_{\mathrm{gauge}}$ on local operators.

%\vspace*{\baselineskip}
%\noindent{\bf Dual symmetry.}
For later use, we also briefly discuss the symmetry of the gauged model.
The gauged model has a dual $\mathbb{Z}_p$ 1-form symmetry generated by the Wilson surface operators~\cite{Gaiotto:2014kfa}.
The symmetry operators on the lattice are given by
\begin{equation}
Z_{\bm{b}} = \prod_{e \in E} Z_e^{\bm{b}(e)},
\end{equation}
where $\bm{b}$ is an arbitrary 1-cocycle on the direct lattice.
We can think of $Z_{\bm{b}}$ as an operator supported on a closed surface on the dual lattice.
For instance, the symmetry operator supported on a small sphere surrounding a vertex $v$ is given by
\begin{equation}
Z_{\delta \bm{v}} = \prod_{e \in E} Z_e^{\delta \bm{v}(e)}.
\end{equation}
The operator $Z_{\delta \bm{v}}$ is the product of Pauli $Z$'s (and their Hermitian conjugate) on the edges connected to vertex $v$.
See Figure~\ref{fig: Z delta v} for an illustration of this symmetry operator.
\begin{figure}
\centering
$Z_{\delta \bm{v}} =$
\adjincludegraphics[valign=c, trim={10, 0, 10, 0}]{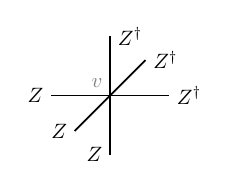}
\caption{The symmetry operator $Z_{\delta \bm{v}}$ of the gauged model. The figure is illustrated on the direct lattice, where the middle vertex is $v$.}
\label{fig: Z delta v}
\end{figure}

%\vspace*{\baselineskip}
%\noindent{\bf $\mathbb{Z}_p$ Kramers-Wannier-Wegner operator.}
\subsubsection{Definition of the $\mathbb{Z}_p$ KWW operator}
We now define the $\mathbb{Z}_p$ KWW operator $\mathsf{S}_p$ acting on the tensor product Hilbert space $\mathcal{H}_F$.
The $\mathbb{Z}_p$ KWW operator $\mathsf{S}_p$ is defined by the gauging operator $\mathsf{D}_{\mathrm{gauge}}$ followed by a basis transformation that maps the symmetry operators of the gauged model to those of the original model.
More concretely, we define
\begin{equation}
\mathsf{S}_p \coloneq t^{\frac{1}{2}} \mathsf{C}_y \mathsf{H} \mathsf{D}_{\mathrm{gauge}}.
\end{equation}
Here, $t^{\frac{1}{2}}$ is the half translation from edges to faces, $\mathsf{H}$ is the tensor product of the Hadamard gates on all edges, and $\mathsf{C}_y$ is the tensor product of the charge conjugation operators on the $y$-links:
\begin{equation}
\mathsf{H} \coloneq \bigotimes_{e \in E} H_e, \qquad
\mathsf{C}_y \coloneq \bigotimes_{e: \text{$y$-links}} C_e.
\end{equation}
We recall that the Hadamard gate $H$ and the charge conjugation operator $C$ for a $\mathbb{Z}_p$ qudit are defined by\footnote{The charge conjugation $C$ should not be confused with the set of cubes also denoted by $C$.}
\begin{align}
H \ket{a} &= \frac{1}{\sqrt{p}} \sum_{b = 0}^{p-1} e^{\frac{2\pi i ab}{p}} \ket{b}, \\
C \ket{a} &= \ket{-a \bmod p}.
\end{align}
The Hadamard gate $H$ and the charge conjugation $C$ obey the following commutation relations with $X$ and $Z$:
\begin{align}
HX = ZH, \quad
HZ = X^{\dagger} H,
\label{eq: H Zp}
\\
CX = X^{\dagger} C, \quad
CZ = Z^{\dagger} C.
\label{eq: C Zp}
\end{align}
We note that $H$ squares to the charge conjugation, i.e.,
\begin{equation}
H^2 = C.
\end{equation}
Due to \eqref{eq: H Zp} and \eqref{eq: C Zp}, it follows that $t^{\frac{1}{2}}\mathsf{C}_y\mathsf{H}$ maps the symmetry operators of the gauged model to those of the original model.
In particular, the symmetry operator $Z_{\delta \bm{v}}$ around vertex $v$ is mapped to
\begin{equation}
(t^{\frac{1}{2}} \mathsf{C}_y \mathsf{H}) Z_{\delta \bm{v}} (t^{\frac{1}{2}} \mathsf{C}_y \mathsf{H})^{-1} = X_{\partial t^{\frac{1}{2}}(v)},
\end{equation}
which is the symmetry operator on the boundary of cube $t^{\frac{1}{2}}(v)$.

%\vspace*{\baselineskip}
%\noindent{\bf Action on local symmetric operators.}
\subsubsection{Action on local symmetric operators}
We now compute the action of the $\mathbb{Z}_p$ KWW operator $\mathsf{S}_p$ on local symmetric operators.
To this end, we first compute the action of the gauging operator $\mathsf{D}_{\mathrm{gauge}}$.
Using the definition~\eqref{eq: D gauge Zp}, one can compute the action of $\mathsf{D}_{\mathrm{gauge}}$ on $X_f$ as follows:
\begin{equation}
\begin{aligned}
\mathsf{D}_{\mathrm{gauge}} X_f \ket{\bm{a}}_F
&= \frac{1}{p^{N/2}} \left( \prod_{f^{\prime} \in F} X_{\partial f^{\prime}}^{-\bm{a}(f^{\prime}) - \delta_{f, f^{\prime}}} \right) \ket{\bm{0}}_E \\
&= X_{\partial f}^{\dagger} \mathsf{D}_{\mathrm{gauge}} \ket{\bm{a}}_F.
\end{aligned}
\label{eq: D gauge Zp X}
\end{equation}
Similarly, one can also compute the action of $\mathsf{D}_{\mathrm{gauge}}$ on $Z_{\delta \bm{e}}$ as
\begin{equation}
\begin{aligned}
\mathsf{D}_{\mathrm{gauge}} Z_{\delta \bm{e}} \ket{\bm{a}}_F
&= \left( \prod_{f \in F} e^{\frac{2\pi i}{p} \bm{a}(f) \delta \bm{e}(f)} \right) \mathsf{D}_{\mathrm{gauge}} \ket{\bm{a}}_F \\
&= Z_e^{\dagger} \mathsf{D}_{\mathrm{gauge}} \ket{\bm{a}}_F.
\end{aligned}
\label{eq: D gauge Zp Z}
\end{equation}
Here, the second equality follows from~\eqref{eq: D gauge Zp2} and \eqref{eq: a prime Zp}.
Equations~\eqref{eq: D gauge Zp X} and \eqref{eq: D gauge Zp Z} show that the action of $\mathsf{D}_{\mathrm{gauge}}$ on $X_f$ and $Z_{\delta \bm{e}}$ is given by
\begin{equation}
X_f \xmapsto{\sf{D}_{\mathrm{gauge}}} X_{\partial f}^{\dagger} \,, \quad
Z_{\delta \bm{e}} \xmapsto{\sf{D}_{\mathrm{gauge}}} Z_e^{\dagger} \,.
\label{eq: D gauge Zp action}
\end{equation}
On the other hand, using \eqref{eq: H Zp} and \eqref{eq: C Zp}, one can also compute the action of $t^{\frac{1}{2}}\mathsf{C}_y\mathsf{H}$ on $X_{\partial f}$ and $Z_e$ as
\begin{equation}
X_{\partial f} \xmapsto{t^{\frac{1}{2}} \mathsf{C}_y \mathsf{H}} Z_{\delta t^{\frac{1}{2}}(\bm{f})}^{s_f} \,, \quad
Z_e \xmapsto{t^{\frac{1}{2}} \mathsf{C}_y \mathsf{H}} X_{t^{\frac{1}{2}}(e)}^{-s_e} \,,
\label{eq: H Zp action}
\end{equation}
where $s_f$ and $s_e$ are the signs defined by \eqref{eq: sf} and \eqref{eq: se}, respectively.
By combining \eqref{eq: D gauge Zp action} and \eqref{eq: H Zp action}, we find that the $\mathbb{Z}_p$ KWW operator $\mathsf{S}_p$ acts on local symmetric operators $X_f$ and $Z_{\delta \bm{e}}$ as
\begin{equation}
X_f \xmapsto{\sf{S}_p} Z_{\delta t^{\frac{1}{2}}(\bm{f})}^{-s_f} \,, \quad
Z_{\delta \bm{e}} \xmapsto{\sf{S}_p} X_{t^{\frac{1}{2}}(e)}^{s_e} \,.
\end{equation}
This shows \eqref{eq: Dp action}.

\subsection{Tsui-Wen entanglers of order $p$}
\label{sec: Zp Tsui-Wen entanglers}

In this subsection, we review $\mathbb{Z}_p$ 1-form SPT entanglers of Tsui and Wen \cite{Tsui:2019ykk} and compute their actions on local operators.
In 3+1d, bosonic SPT phases with $\mathbb{Z}_p$ 1-form symmetry for odd prime $p$ are classified by~\cite{Kapustin:2013uxa}
\begin{equation}
H^4(B^2\mathbb{Z}_p, \mathrm{U}(1)) \cong \mathbb{Z}_p.
\end{equation}
Thus, these SPT phases are labeled by $k \in \{0, 1, \cdots, p-1\}$.
The generator for the $\mathbb{Z}_p$ 1-form SPT phase entanglers will be denoted by $\sf{T}_p$, and we will call $\sf{T}_p^k$ for $k \in \Z_p$ the Tsui-Wen entanglers.

In what follows, we will write down the Tsui-Wen entangler $\sf{T}_p$ on a cubic lattice and show that it acts on local operators $X_f$ and $Z_f$ as
\begin{align}\label{eq: Tpk action}
 X_f &\xmapsto{\sf{T}_p} X_f Z_{\delta t^{\frac{1}{2}}(\bm{f})}^{\frac{p - 1}{2} s_f} Z_{\delta t^{-\frac{1}{2}}(\bm{f})}^{\frac{p - 1}{2} s_f} \,, \nonumber \\
 Z_f &\xmapsto{\sf{T}_p} Z_f
\end{align}
where $s_f$ is the sign defined by \eqref{eq: sf}. We note that this definition of $\sf{T}_p$, with $Z$ to the power of $\frac{p - 1}{2}$, has been chosen in order to match the continuum definition: the continuum operation $T_p$, and the continuum defect $\mc{T}_p$, correspond to the lattice unitary $\sf{T}_p$. One can check that $\sf{T}_p^{-2}$ corresponds to the more intuitive generator, which maps $X_f$ to $X_f Z_{\delta t^{\frac{1}{2}}(\bm{f})}^{1 \cdot s_f} Z_{\delta t^{-\frac{1}{2}}(\bm{f})}^{1 \cdot s_f}$.

The action on $X_f$ can also be written in terms of the cup product as
\begin{equation}
X_f \xmapsto{\sf{T}_p} X_f \prod_{e^{\prime} \in E} Z_{\delta \bm{e}^{\prime}}^{\frac{p-1}{2} \int \bm{f} \smile \bm{e}^{\prime} + \bm{e}^{\prime} \smile \bm{f}}.
\end{equation}
See Figure~\ref{fig: Tpk} for an illustration of this action.
\begin{figure*}
\centering
$X_f \xmapsto{\sf{T}_p^{-2k}}$
\adjincludegraphics[valign=c, scale=0.95]{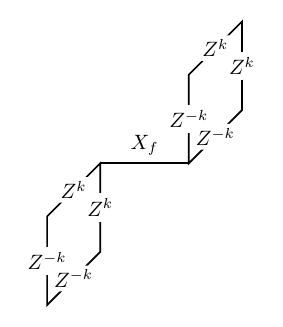},
\adjincludegraphics[valign=c, scale=0.95]{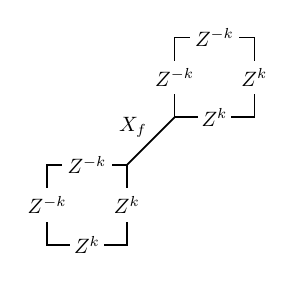},
\adjincludegraphics[valign=c, scale=0.95]{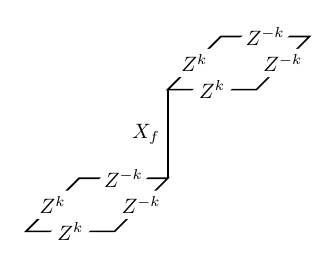}
\caption{The action of the powers of the Tsui-Wen entangler generator $\sf{T}_p$ on the Pauli $X$ operator illustrated on the dual lattice. To recover $\sf{T}_p$ itself, set $k = \frac{p - 1}{2}$ which is the inverse of $-2$ in $\Z_p$, for $p$ an odd prime.}
\label{fig: Tpk}
\end{figure*}

The unitary $\mathsf{T}_p$ preserves the $\mathbb{Z}_p$ 1-form symmetry.
Therefore, it defines a QCA on the $\mathbb{Z}_p$ 1-form symmetric algebra $\mathcal{A}_{\mathbb{Z}_p^{(1)}}/\mathfrak{I}_p$.
This QCA will be denoted by $\aTWp$.

\vspace*{\baselineskip}
\noindent{\bf Tsui-Wen entangler $\sf{T}_p$.}
We first define the Tsui-Wen entangler $\sf{T}_p$ as a unitary operator acting on the state space $\mathcal{H}_F$.
The action of $\sf{T}_p$ on a state in the computational basis is defined by \cite{Tsui:2019ykk}
\begin{equation}
\sf{T}_p \ket{\bm{a}}_F = e^{\frac{\pi i (p - 1)}{p} \int \tilde{\bm{a}} \smile \delta \tilde{\bm{a}}} \ket{\bm{a}}_F.
\label{eq: Tpk def}
\end{equation}
Here, $\tilde{\bm{a}}$ is a $\mathbb{Z}_p$-valued 1-cochain defined by
\begin{equation}
\tilde{\bm{a}}(e) \coloneq \sum_{f \in F} \bm{a}(f) \int \bm{f} \smile \bm{e}.
\end{equation}
The 1-cochain $\tilde{\bm{a}}$ can also be written more explicitly as
\begin{equation}
\tilde{\bm{a}}(e) = \bm{a}(t^{-\frac{1}{2}}(e)) s_e.
\label{eq: a tilde}
\end{equation}
Namely, $\tilde{\bm{a}}$ is the half-translation of $\bm{a}$ multiplied by the sign depending on the links.\footnote{The sign is multiplied so that the unitary defined by~\eqref{eq: Tpk def} commutes with the $\mathbb{Z}_p$ 1-form symmetry operators; cf.~\eqref{eq: T pk sym}.}
We note that both $\bm{a}$ and $\tilde{\bm{a}}$ are cochains on the direct lattice

Clearly, the Tsui-Wen entangler $\sf{T}_p$ has order $p$, i.e.,
\begin{equation}
\left( \sf{T}_p \right)^p = 1 \,,
\end{equation}
because $p - 1$ is an even integer.

Furthermore, $\sf{T}_p$ commutes with the $\mathbb{Z}_p$ 1-form symmetry operators~\eqref{eq: symmetry bubble Zp}.
Indeed, the action of the symmetry operator $X_{\partial c}$ on $\ket{\bm{a}}_F$ changes the 1-cochain $\tilde{\bm{a}}$ by a coboundary
\begin{equation}
\tilde{\bm{a}} \mapsto \tilde{\bm{a}} -\delta t^{\frac{1}{2}}(\bm{c}),
\end{equation}
which does not change the phase $e^{\frac{\pi i (p - 1)}{p} \int \tilde{\bm{a}} \smile \delta \tilde{\bm{a}}}$ in \eqref{eq: Tpk def}.
Thus, we have
\begin{equation}
\sf{T}_p X_{\partial c} = X_{\partial c} \sf{T}_p
\label{eq: T pk sym}
\end{equation}
for all cubes $c$.
More generally, $\sf{T}_p$ also commutes with the symmetry operator $\eta(\Sigma)$ supported on any 2-cycle $\Sigma$.

\vspace*{\baselineskip}
\noindent{\bf Action on local operators.}
Now, we compute the action of $\sf{T}_p$ on local operators.
Since $\sf{T}_p$ is diagonal in the computational basis, it acts trivially on the Pauli $Z$ operators, i.e., 
\begin{equation}
\sf{T}_p Z_f \sf{T}_p^{-1} = Z_f.
\label{eq: Tpk Z}
\end{equation}
On the other hand, the action of $\sf{T}_p$ on $X_f$ can be computed as
\begin{equation}
\sf{T}_p X_f \sf{T}_p^{-1} \ket{\bm{a}}_F = e^{\frac{\pi i (p - 1)}{p} \int \tilde{\bm{a}} \smile \delta \tilde{\bm{f}} + \tilde{\bm{f}} \smile \delta \tilde{\bm{a}} + \tilde{\bm{f}} \smile \delta \tilde{\bm{f}}} X_f \ket{\bm{a}}_F.
\label{eq: Tpk X1}
\end{equation}
Here, $\tilde{\bm{f}}$ is defined in the same way as we defined $\tilde{\bm{a}}$ from $\bm{a}$.
Namely, $\tilde{\bm{f}}$ is a 1-cochain defined by
\begin{equation}
\tilde{\bm{f}}(e) = \bm{f}(t^{-\frac{1}{2}}(e)) s_e = \delta_{e, t^{\frac{1}{2}}(f)} s_e.
\end{equation}
Equivalently, we have
\begin{equation}
\tilde{\bm{f}} = t^{\frac{1}{2}}(\bm{f}) s_f.
\end{equation}
Based on the above definition, one can compute the integral on the right-hand side of \eqref{eq: Tpk X1} as
\begin{equation}
\begin{aligned}
\int \tilde{\bm{a}} \smile \delta \tilde{\bm{f}} &= \tilde{\bm{a}}(\partial f) \quad \bmod p, \\
\int \tilde{\bm{f}} \smile \delta \tilde{\bm{a}} &= \tilde{\bm{a}}(\partial t(f)) \quad \bmod p, \\
\int \tilde{\bm{f}} \smile \delta \tilde{\bm{f}} &= 0 \quad \bmod p.
\end{aligned}
\label{eq: Tpk X2}
\end{equation}
Furthermore, by a direct computation using \eqref{eq: a tilde}, one can show that
\begin{equation}
e^{\frac{\pi i (p-1)}{p} \tilde{\bm{a}}(\partial f)} \ket{\bm{a}}_F = Z_{\delta t^{-\frac{1}{2}}(\bm{f})}^{ \frac{p - 1}{2} s_f} \ket{\bm{a}}_F.
\label{eq: Tpk X3}
\end{equation}
By plugging \eqref{eq: Tpk X2} and \eqref{eq: Tpk X3} into \eqref{eq: Tpk X1}, we find
\begin{equation}
\sf{T}_p X_f \sf{T}_p^{-1} = X_f Z_{\delta t^{\frac{1}{2}}(\bm{f})}^{ \frac{p - 1}{2} s_f} Z_{\delta t^{-\frac{1}{2}}(\bm{f})}^{ \frac{p - 1}{2} s_f} \,.
\label{eq: Tpk X}
\end{equation}
Equations~\eqref{eq: Tpk X} and \eqref{eq: Tpk Z} show that the action of $\sf{T}_p$ on $X_f$ and $Z_f$ is given by \eqref{eq: Tpk action}.

\subsection{$\mathbb{Z}_p$-QCA for odd prime $p$}
\label{sec: Zpk QCA}
In this subsection, we review the non-trivial Clifford QCAs constructed in \cite{MengSun2026} for odd prime $p$.
These QCAs are defined as entanglers of the Walker-Wang ground states based on chiral $\mathbb{Z}_p$ MTCs.
See also \cite{Haah_2021} for an earlier construction of non-trivial QCAs for odd prime $p$.
These QCAs together with the 3-fermion QCA exhaust all equivalence classes of Clifford QCAs \cite{Haah:2022yyo} in 3+1d.

We consider a 3d cubic lattice with a single $\mathbb{Z}_p$ qudit on each face.
For each $k \in \{1, \cdots, p-1\}$, one can define a non-trivial Clifford QCA $\beta_p^{(k)}$ by the following action on local operators \cite{MengSun2026}:
\begin{align}
\beta_p^{(k)}(Z_f) &= Z_f \prod_{e^{\prime} \in E} \left. G_{e^{\prime}}^{(-k)} \right.^{\frac{1}{2k} \int \bm{f} \smile \bm{e}^{\prime}},
\label{eq: beta pk Z}
\\
\beta_p^{(k)}(X_f) &= U_f^{(-k)} \prod_{f^{\prime} \in F} \left[\beta_p^{(k)}(Z_{f^{\prime}})\right]^{k \int \bm{f}^{\prime} \smile_1 \bm{f}}.
\label{eq: beta pk X}
\end{align}
Here, $\frac{1}{2k}$ is the multiplicative inverse of $2k$ in $\mathbb{Z}_p$, which exists for any $k \in \{ 1, 2, \ldots, p - 1 \}$, and $G_e^{(-k)}$ and $U_f^{(-k)}$ are defined by \cite{MengSun2026}
\begin{align}
G_e^{(-k)} &\coloneq X_{\delta \bm{e}} \prod_{f^{\prime} \in F} Z_{f^{\prime}}^{-k \int \delta \bm{e} \smile_1 \bm{f}^{\prime}}, 
\label{eq: Gek}
\\
U_f^{(-k)} &\coloneq X_f \prod_{f^{\prime} \in F} Z_{f^{\prime}}^{-k \int \bm{f}^{\prime} \smile_1 \bm{f}}.
\label{eq: Ufk}
\end{align}
The QCA $\beta_p^{(k)}$ is called a $\mathbb{Z}_p$ QCA in \cite{MengSun2026}.

The QCA defined above leaves $Z_{\partial c}$ invariant for all cubes $c$ \cite{MengSun2026}, that is,
\begin{equation}
\beta_{p}^{(k)}(Z_{\partial c}) = Z_{\partial c}.
\label{eq: beta pk Z partial c}
\end{equation}
This implies that the following QCA preserves the $\mathbb{Z}_p$ 1-form symmetry operators $X_{\partial c}$ for any $c$:
\begin{equation}
\tilde{\beta}_p^{(k)} \coloneq h^{-1} \circ \beta_p^{(k)} \circ h.
\label{eq: tilde beta pk def}
\end{equation}
Here, $h$ is the QCA defined by the conjugation action of the Hadamard gates on all faces.
Specifically, the action of $h$ on $X_f$ and $Z_f$ are given by
\begin{equation}
h(X_f) = Z_f, \quad
h(Z_f) = X_f^{\dagger}.
\label{eq: h QCA}
\end{equation}
It immediately follows from \eqref{eq: beta pk Z partial c} and \eqref{eq: h QCA} that
\begin{equation}
\tilde{\beta}_p^{(k)}(X_{\partial c}) = X_{\partial c}.
\label{eq: beta X partial c}
\end{equation}
Thus, $\tilde{\beta}_p^{(k)}$ is a $\mathbb{Z}_p$ 1-form symmetric QCA.
In what follows, we will compute the action of $\tilde{\beta}_p^{(k)}$ on local symmetric operators $X_f$ and $Z_{\delta \bm{e}}$.

\subsubsection{Action on $X_f$}
We first compute the action of $\tilde{\beta}_p^{(k)}$ on $X_f$.
Due to~\eqref{eq: beta pk Z} and \eqref{eq: h QCA}, we have
\begin{equation}
\tilde{\beta}_p^{(k)}(X_f) = X_f \prod_{e^{\prime} \in E} \left. \widetilde{G}_{e^{\prime}}^{(-k)} \right.^{\frac{1}{2k} \int \bm{f} \smile \bm{e}^{\prime}},
\label{eq: beta tilde}
\end{equation}
where $\widetilde{G}_e^{(-k)}$ is defined by
\begin{equation}
\begin{aligned}
\widetilde{G}_e^{(-k)}
&\coloneq h^{-1}(G_e^{(-k)}) \\
&= Z_{\delta \bm{e}}^{\dagger} \prod_{f^{\prime} \in F} X_{f^{\prime}}^{-k \int \delta \bm{e} \smile_1 \bm{f}^{\prime}}.
\end{aligned}
\end{equation}
Since we have
\begin{equation}
\int \bm{f} \smile \bm{e}^{\prime} = \delta_{e^{\prime}, t^{\frac{1}{2}}(f)} s_f,
\end{equation}
equation~\eqref{eq: beta tilde} can also be written as
\begin{equation}
\tilde{\beta}_p^{(k)}(X_f) = X_f \left.\widetilde{G}_{t^{\frac{1}{2}}(f)}^{(-k)}\right.^{\frac{1}{2k}s_f}.
\label{eq: beta tilde2}
\end{equation}
Furthermore, by using the identity
\begin{equation}
\delta(\bm{e} \smile_1 \bm{f}^{\prime}) = \delta \bm{e} \smile_1 \bm{f}^{\prime} - \bm{e} \smile_1 \delta \bm{f}^{\prime} + \bm{e} \smile \bm{f}^{\prime} - \bm{f}^{\prime} \smile \bm{e} \,,
\end{equation}
one can rewrite $\widetilde{G}_e^{(-k)}$ as
\begin{equation}
\begin{aligned}
\widetilde{G}_e^{(-k)} &= Z_{\delta \bm{e}}^{\dagger} \prod_{f^{\prime} \in F} X_{f^{\prime}}^{-k \int \bm{e} \smile_1 \delta \bm{f}^{\prime} - \bm{e} \smile \bm{f}^{\prime} + \bm{f}^{\prime} \smile \bm{e}} \\
&= Z_{\delta \bm{e}}^{\dagger} X_{t^{\frac{1}{2}}(e)}^{ks_e} X_{t^{-\frac{1}{2}}(e)}^{-ks_e} \prod_{c^{\prime} \in C} X_{\partial c^{\prime}}^{-k \int \bm{e} \smile_1 \bm{c}^{\prime}}.
\end{aligned}
\end{equation}
By plugging this into \eqref{eq: beta tilde2}, we find
\begin{equation}
\tilde{\beta}_{p}^{(k)}(X_f) = Z_{\delta t^{\frac{1}{2}}(\bm{f})}^{-\frac{1}{2k}s_f} X_{t(f)}^{\frac{1}{2}} X_f^{\frac{1}{2}} \prod_{c^{\prime} \in C} X_{\partial c^{\prime}}^{-\frac{1}{2}s_f \int t^{\frac{1}{2}}(\bm{f}) \smile_1 \bm{c}^{\prime}}.
\label{eq: beta tilde X}
\end{equation}
This action can be illustrated as shown in Figure~\ref{fig: beta tilde X}.
\begin{figure*}[t]
\centering
$\tilde{\beta}_p^{(k)}(X_f) =$
\adjincludegraphics[valign=c, trim={0, 10, 0, 10}]{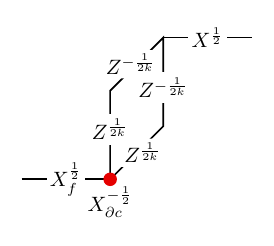},
\adjincludegraphics[valign=c, trim={0, 10, 0, 10}]{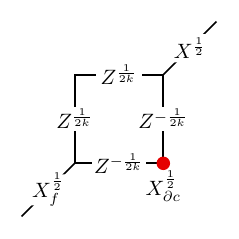},
\adjincludegraphics[valign=c, trim={0, 10, 0, 10}]{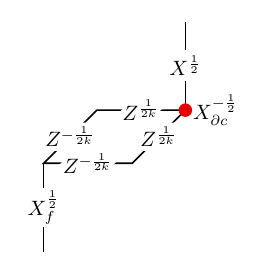}
\caption{The action of $\tilde{\beta}_p^{(k)}$ on the Pauli $X$ operator. The figure is drawn on the dual lattice.}
\label{fig: beta tilde X}
\end{figure*}

\subsubsection{\texorpdfstring{Action on $Z_{\delta \bm{e}}$}{Action on Z delta e}}
Next, we compute the action of $\tilde{\beta}_p^{(k)}$ on $Z_{\delta \bm{e}}$.
To this end, we first compute the action of $\beta_p^{(k)}$ on $X_{\delta \bm{e}}$.
Equation~\eqref{eq: beta pk X} implies that $\beta_p^{(k)}$ acts on $X_{\delta \bm{e}}$ as
\begin{equation}
\beta_p^{(k)}(X_{\delta \bm{e}}) = U_{\delta \bm{e}}^{(-k)} \prod_{f^{\prime} \in F} \left[ \beta_p^{(k)}(Z_{f^{\prime}}) \right]^{k \int \bm{f}^{\prime} \smile_1 \delta \bm{e}},
\label{eq: beta pk X delta e}
\end{equation}
where $U_{\delta \bm{e}}^{(-k)}$ is defined by
\begin{equation}
U_{\delta \bm{e}}^{(-k)} = X_{\delta \bm{e}} \prod_{f^{\prime} \in F} Z_{f^{\prime}}^{-k \int \bm{f}^{\prime} \smile_1 \delta \bm{e}}.
\label{eq: U delta e Zp}
\end{equation}
Here, we have a technical comment on the derivation of~\eqref{eq: beta pk X delta e}.
In general, the action of $\beta_p^{(k)}$ on $X_{\delta \bm{e}}$ can be written as
\begin{equation}
\beta_p^{(k)}(X_{\delta \bm{e}}) = \prod_{f^{\prime} \in F} \left[\beta_p^{(k)}(X_{f^{\prime}})\right]^{\delta \bm{e}(f^{\prime})},
\label{eq: beta pk X delta e2}
\end{equation}
where the order of the product is arbitrary.
The right-hand side of the above equation is a product of $U_f^{(-k)}$'s and $\beta_p^{(k)}(Z_f)$'s due to \eqref{eq: beta pk X}.
Equation~\eqref{eq: beta pk X delta e} is obtained by arranging the order of the product in~\eqref{eq: beta pk X delta e2} so that all $\beta_p^{(k)}(Z_f)$'s act before $U_f^{(-k)}$'s.
We note that $U_f^{(-k)}$'s in the product are automatically arranged in a way that all Pauli $Z$ operators act before any Pauli $X$ operator as in \eqref{eq: U delta e Zp}.

Due to~\eqref{eq: beta pk X delta e} and \eqref{eq: h QCA}, we can compute the action of $\tilde{\beta}_p^{(k)}$ on $Z_{\delta \bm{e}}$ as
\begin{equation}
\tilde{\beta}_p^{(k)}(Z_{\delta \bm{e}}) = \left.\widetilde{U}_{\delta \bm{e}}^{(-k)}\right.^{\dagger} \prod_{f^{\prime} \in F} \left[\tilde{\beta}_p^{(k)}(X_{f^{\prime}})\right]^{-k \int \bm{f}^{\prime} \smile_1 \delta \bm{e}},
\label{eq: tilde beta pk Z}
\end{equation}
where $\widetilde{U}_{\delta \bm{e}}^{(-k)}$ is defined by
\begin{equation}
\begin{aligned}
\widetilde{U}_{\delta \bm{e}}^{(-k)}
&\coloneq h^{-1}(U_{\delta \bm{e}}^{(-k)}) \\
&= Z_{\delta \bm{e}}^{\dagger} \prod_{f^{\prime} \in F} X_{f^{\prime}}^{-k \int \bm{f}^{\prime} \smile_1 \delta \bm{e}}.
\end{aligned}
\end{equation}
By using the identity
\begin{equation}
\delta (\bm{f}^{\prime} \smile_1 \bm{e}) = \delta \bm{f}^{\prime} \smile_1 \bm{e} + \bm{f}^{\prime} \smile_1 \delta \bm{e} + \bm{f}^{\prime} \smile \bm{e} - \bm{e} \smile \bm{f}^{\prime} \,,
\label{eq: delta f cup1 e}
\end{equation}
one can rewrite $\widetilde{U}_{\delta \bm{e}}^{(-k)}$ as
\begin{equation}
\begin{aligned}
\widetilde{U}_{\delta \bm{e}}^{(-k)}
&= Z_{\delta \bm{e}}^{\dagger} \prod_{f^{\prime} \in F} X_{f^{\prime}}^{k \int \delta \bm{f}^{\prime} \smile_1 \bm{e} + \bm{f}^{\prime} \smile \bm{e} - \bm{e} \smile \bm{f}^{\prime}} \\
&= Z_{\delta \bm{e}}^{\dagger} X_{t^{\frac{1}{2}}(e)}^{-k s_e} X_{t^{-\frac{1}{2}}(e)}^{k s_e} \prod_{c^{\prime} \in C} X_{\partial c^{\prime}}^{k \int \bm{c}^{\prime} \smile_1 \bm{e}}.
\end{aligned}
\label{eq: U tilde delta e}
\end{equation}
Furthermore, by using the same identity~\eqref{eq: delta f cup1 e}, one can also rewrite $\tilde{\beta}_p^{(k)}(Z_{\delta \bm{e}})$ in \eqref{eq: tilde beta pk Z} as
\begin{widetext}
\begin{equation}
\begin{aligned}
\tilde{\beta}_p^{(k)}(Z_{\delta \bm{e}})
&= \left.\widetilde{U}_{\delta \bm{e}}^{(-k)}\right.^{\dagger} \left[\tilde{\beta}_p^{(k)}(X_{t^{\frac{1}{2}}(e)})\right]^{-ks_e} \left[\tilde{\beta}_p^{(k)}(X_{t^{-\frac{1}{2}}(e)})\right]^{ks_e} \prod_{c^{\prime} \in C} \left[\tilde{\beta}_p^{(k)}(X_{\partial c^{\prime}})\right]^{k \int \bm{c}^{\prime} \smile_1 \bm{e}} \\
&= \left.\widetilde{U}_{\delta \bm{e}}^{(-k)}\right.^{\dagger} \left[\tilde{\beta}_p^{(k)}(X_{t^{\frac{1}{2}}(e)})\right]^{-ks_e} \left[\tilde{\beta}_p^{(k)}(X_{t^{-\frac{1}{2}}(e)})\right]^{ks_e} \prod_{c^{\prime} \in C} X_{\partial c^{\prime}}^{k \int \bm{c}^{\prime} \smile_1 \bm{e}}.
\end{aligned}
\end{equation}
Here, the second equality follows from \eqref{eq: beta X partial c}.
By plugging \eqref{eq: beta tilde X} and \eqref{eq: U tilde delta e} into the above equation, we find
\begin{equation}
\tilde{\beta}_p^{(k)}(Z_{\delta \bm{e}}) = Z_{\delta t(\bm{e})}^{\frac{1}{2}} Z_{\delta \bm{e}}^{\frac{1}{2}} X_{t^{\frac{3}{2}}(e)}^{-\frac{1}{2}ks_e} X_{t^{\frac{1}{2}}(e)}^{ks_e} X_{t^{-\frac{1}{2}}(e)}^{-\frac{1}{2}ks_e} \prod_{c^{\prime} \in C} X_{\partial c^{\prime}}^{\frac{1}{2}k \int t(\bm{e}) \smile_1 \bm{c}^{\prime} - \bm{e} \smile_1 \bm{c}^{\prime}}.
\label{eq: beta tilde Z}
\end{equation}
\end{widetext}
See Figure~\ref{fig: beta tilde Z} for an illustration of this action.
\begin{figure}
\centering
$\tilde{\beta}_p^{(k)}(Z_{\delta \bm{e}}) =$
\adjincludegraphics[valign=c, trim={10, 10, 10, 10}]{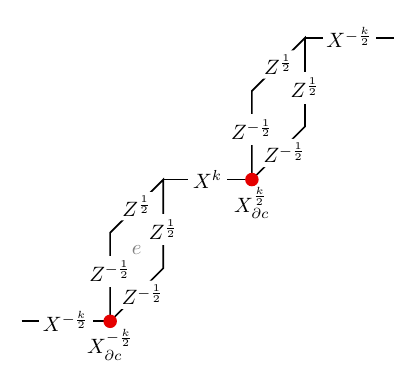}
\caption{The action of $\tilde{\beta}_p^{(k)}$ on the flux operator $Z_{\delta \bm{e}}$ for an $x$-link $e$. Here, the bottom-left plaquette is the dual of $e$. The action on $Z_{\delta \bm{e}}$ for $y$-links and $z$-links can also be illustrated similarly.}
\label{fig: beta tilde Z}
\end{figure}

\subsection{$\mathbb{Z}_p$ QCA is generated by $\mathsf{S}_p$ and $\sf{T}_p$}
\label{sec: Zpk QCA generated by D and T}

In this subsection, we show that the non-trivial Clifford QCA $\tilde{\beta}_p^{(k)}$ for any $k$ is generated by the $\mathbb{Z}_p$ KWW operator $\mathsf{S}_p$ and the Tsui-Wen entangler $\sf{T}_p$.
More specifically, we will show the following equality of QCAs on the $\mathbb{Z}_p$ 1-form symmetric algebra $\mathcal{A}_{\mathbb{Z}_p^{(1)}} / \mathfrak{I}_p$:
\begin{equation}
\ba
&~\aKWWp \circ \aTWp^{2k} \circ \aKWWp \circ \aTWp^{\frac{1}{2k}} \circ \aKWWp \cr 
=&~ \alpha_t \circ \tilde{\beta}_p^{(-k)} \circ \alpha_{\text{C}}^{(2k)}.
\label{eq: beta mixing}
\ea \end{equation}
Here, $\alpha_{\text{C}}^{(2k)}$ is a generalized charge conjugation QCA, which is the restriction of the following QCA on the full tensor-product algebra:
\begin{equation}
\check{\alpha}_{\text{C}}^{(2k)}(X_f) = X_f^{2k}, \qquad
\check{\alpha}_{\text{C}}^{(2k)}(Z_f) = Z_f^{\frac{1}{2k}}.
\end{equation}
We note that $\check{\alpha}_{\text{C}}^{(2k)}$ does not preserve the symmetry operator $X_{\partial c}$ because $\check{\alpha}_{\text{C}}^{(2k)}(X_{\partial c}) = X_{\partial c}^{2k}$.
Nevertheless, it does preserve the ideal $\mathfrak{I}_p$ generated by $1-X_{\partial c}$.
Hence, $\alpha_{\text{C}}^{(2k)}$ is well-defined as a QCA on the quotient algebra $\mathcal{A}_{\mathbb{Z}_p^{(1)}}/\mathfrak{I}_p$.

Equation~\eqref{eq: beta mixing} is our second example of lattice fusion rules that mix with non-trivial QCAs.
This result shows that $\mathsf{S}_p$ and $\mathsf{T}_p$ generate all Witt non-trivial Clifford QCAs; $\alpha_\text{C}^{(2k)}$ is just an on-site unitary (and furthermore it is Clifford) so it is Witt trivial.

When $2k=1$ mod $p$ (i.e., when $k = \frac{p+1}{2}$), $\alpha_{\text{C}}^{(2k)}$ becomes the identity QCA.
Thus, the fusion rule~\eqref{eq: beta mixing} in this case reduces to
\begin{equation}
(\aKWWp \circ \aTWp)^2 \circ \aKWWp = \alpha_t \circ \tilde{\beta}_p^{\left(\frac{p-1}{2}\right)} \,.
\label{eq: DTDTD minus}
\end{equation}
On the other hand, when $2k=-1$ mod $p$ (i.e., when $k = \frac{p-1}{2}$), $\alpha_{\text{C}}^{(2k)}$ becomes the ordinary charge conjugation QCA.
Thus, the fusion rule~\eqref{eq: beta mixing} reduces to
\begin{equation}
(\aKWWp \circ \aTWp^{-1})^2 \circ \aKWWp = \alpha_t \circ \alpha_\text{C} \circ \tilde{\beta}_p^{\left(\frac{p+1}{2}\right)} \,.
\label{eq: DTDTD plus}
\end{equation}
Here, $\alpha_{\text{C}} \coloneq \alpha_{\text{C}}^{(-1)}$ denotes the ordinary charge conjugation.
The above two equations can be written more concisely as
\begin{align}
 (\aKWWp \circ \aTWp)^3 &= \alpha_t \circ \tilde{\beta}_p^{\left(\frac{p-1}{2}\right)} \circ \aTWp \,, \label{eq: DTDTDT minus} \\
 (\aKWWp \circ \aTWp^{-1})^3 &= \alpha_t \circ \alpha_\text{C} \circ \tilde{\beta}_p^{\left(\frac{p+1}{2}\right)} \circ \aTWp^{-1} \label{eq: DTDTDT plus} \,, 
\end{align}
The relation to the fusion rules in the continuum will be discussed in Section~\ref{sec: comparison Zp}.

In what follows, we will show \eqref{eq: beta mixing}.
More specifically, we will compute the action of $\mathsf{S}_p\mathsf{T}_p^{2k}\mathsf{S}_p\mathsf{T}_p^{\frac{1}{2k}}\mathsf{S}_p$ and compare it with $\alpha_t \circ \tilde{\beta}_p^{(-k)} \circ \alpha_{\text{C}}^{(2k)}$.

Let us first compute the action on $X_f$.
Since $\mathsf{S}_p$ maps $X_f$ to $Z_{\delta e}$ (or its Hermitian conjugate) and $\mathsf{T}_p$ leaves $Z_{\delta \bm{e}}$ invariant, it follows that
\begin{equation}
X_f \xmapsto{\sf{S}_p \sf{T}_p^{\frac{1}{2k}} \sf{S}_p} X_{t(f)}^{\dagger} \,.
\end{equation}
Thus, the calculation reduces to finding the action of $\sf{S}_p \sf{T}_p^{2k}$ on $X_{t(f)}^\dag$.

Recalling that $\mathsf{S}_p$ and $\sf{T}_p$ act on local operators as in~\eqref{eq: Dp action} and \eqref{eq: Tpk action}, one can check that
\begin{equation}
\begin{aligned}
X_{t(f)}^\dag
&\xmapsto{\sf{T}_p^{2k}} X_{t(f)}^\dag Z_{\delta t^{\frac{3}{2}}(\bm{f})}^{ k s_f} Z_{\delta t^{\frac{1}{2}}(\bm{f})}^{ k s_f} \\
&\xmapsto{\sf{S}_p} Z_{\delta t^{\frac{3}{2}}(\bm{f})}^{s_f} X_{t^2(f)}^{k} X_{t(f)}^{k} \,.
\end{aligned}
\label{eq: DTDTD X}
\end{equation}
This shows that
\begin{equation}
 X_f \xmapsto{\mathsf{S}_p \sf{T}_p^{2k} \mathsf{S}_p \sf{T}_p^{\frac{1}{2k}} \mathsf{S}_p} Z_{\delta t^{\frac{3}{2}}(\bm{f})}^{s_f} X_{t^2(f)}^{k} X_{t(f)}^{k}
\end{equation}
for any $k = 1, 2, \cdots, p-1$.

Similarly, one can perform the same kind of calculation for $Z_{\delta \bm{e}}$.
A direct computation shows that $\mathsf{S}_p\mathsf{T}_p^{2k}\mathsf{S}_p\mathsf{T}_p^{\frac{1}{2k}}\mathsf{S}_p$ acts on $Z_{\delta e}$ as
\be
Z_{\delta \bm{e}} \mapsto Z_{\delta t^2(e)}^{\frac{1}{4k}} Z_{\delta t(e)}^{\frac{1}{4k}}X_{t^{\frac{5}{2}}(e)}^{\frac{1}{4}s_e} X_{t^{\frac{3}{2}}(e)}^{-\frac{1}{2}s_e} X_{t^{\frac{1}{2}}(e)}^{\frac{1}{4}s_e}
\ee
for any $k = 1, 2, \cdots, p-1$

When comparing these expressions to the action of $\tilde{\beta}_p^{(-k)}$, we are allowed to ignore any products of $X_{\partial c}$, because the comparison is being done in $\mc{A}_{\Z_p^{(1)}} / \mathfrak{I}_p$. With this caveat, we can see that the action looks structurally very similar.
More specifically, on the quotient algebra $\mc{A}_{\Z_p^{(1)}} / \mathfrak{I}_p$, the action of the composite QCA $\tilde{\beta}_p^{(-k)} \circ \alpha_{\text{C}}^{(2k)}$ can be written as
\begin{equation}
\begin{aligned}
&X_f \xmapsto{\alpha_{\text{C}}^{(2k)}} X_f^{2k} \xmapsto{\tilde{\beta}_p^{(-k)}} Z_{\delta t^{\frac{1}{2}}(\bm{f})}^{s_f} X_{t(f)}^{k} X_{f}^{k}, \\
&Z_{\delta e} \xmapsto{\alpha_{\text{C}}^{(2k)}} Z_{\delta e}^{\frac{1}{2k}} \xmapsto{\tilde{\beta}_p^{(-k)}} Z_{\delta t(e)}^{\frac{1}{4k}} Z_{\delta e}^{\frac{1}{4k}}X_{t^{\frac{3}{2}}(e)}^{\frac{1}{4}s_e} X_{t^{\frac{1}{2}}(e)}^{-\frac{1}{2}s_e} X_{t^{-\frac{1}{2}}(e)}^{\frac{1}{4}s_e}.
\end{aligned}
\label{eq: beta alpha}
\end{equation}
This computation shows that $\tilde{\beta}_p^{(-k)} \circ \alpha_{\text{C}}^{(2k)}$ agrees with the action of $\mathsf{S}_p \sf{T}_p^{2k} \mathsf{S}_p \sf{T}_p^{\frac{1}{2k}} \mathsf{S}_p$ up to lattice translation $\alpha_t$.
Thus, we find that \eqref{eq: beta mixing} holds as an equality of QCAs on $\mathcal{A}_{\mathbb{Z}_p^{(1)}}/\mathfrak{I}_p$.

\subsection{Group of QCAs Generated by $\mathsf{S}_p$, $\mathsf{T}_p$, and Generalized Charge Conjugation}
\label{sec:ZpQCAgroup}

In this section we are going to outline three assumptions under which the group structure of the QCA generated by $\aKWWp, \aTWp, \alpha_C^{(k)}, \tilde{\beta}_p^{(k)}$ can be fully determined. We will find that this group is isomorphic to $SL(2, \Z_p) \times W_p$, where $W_p$ is either $\Z_2 \times \Z_2$ or $\Z_4$, depending on whether $p \equiv 1$ or $p \equiv 3 \mod 4$.\footnote{This group contains a subgroup generated by $\aKWWp$ and $\aTWp$, which is briefly discussed in Section~\ref{sec: comparison Zp}. This subgroup is isomorphic to $SL(2, \mathbb{Z}_p) \times \mathbb{Z}_{n_p}$ with $n_p = 2$ or $4$ depending on whether $p \equiv 1$ or $3$ mod $4$, so it is a strict subgroup only if $p \equiv 1 \mod 4$.}

We remark that the first two assumptions are physically reasonable, and one has already been used multiple times before. The third assumption is almost obvious, but we mention it nonetheless, to make the remaining arguments formal.

To be more precise, we consider a subgroup of the group
\begin{equation}\label{eq:translation_inv_qca_group}
\faktor{\mathsf{QCA}_{\mathbb{Z}_p^{(1)}}}{(\mathsf{FDC}_{\mathbb{Z}_p^{(1)}} \times \mathsf{Shift})},
\end{equation}
where $\mathsf{QCA}_{\mathbb{Z}_p^{(1)}}$ is the group of all translation invariant QCAs on $\mathcal{A}_{\mathbb{Z}_p^{(1)}} / \mathfrak{I}_p$ and $\mathsf{FDC}_{\mathbb{Z}_p^{(1)}}$ is its subgroup consisting of finite-depth circuits with $\mathbb{Z}_p$ 1-form symmetric local gates.
We recall that $\mathsf{Shift}$ denotes the group of lattice translations.
We denote the equivalence class of $\alpha$ as $[\alpha]$.

\subsubsection{Assumptions}

\noindent{\bf Assumption 1.} \textit{{Motivated by the assumed completeness of the classification of SPT phases, we assume a related but a priori slightly stronger statement about operators, formalized as follows}: let $\alpha$ be a QCA on the quotient, which is extendable to a QCA $\tilde{\alpha}$ on the full algebra. If $\tilde{\alpha}$ is a finite depth circuit (with possibly non-symmetric gates), then
\be
[\alpha] = [\alpha_{\text{TW}, p}^j] \quad \text{for some } j \in \{0, \cdots, p - 1 \}
\ee}

Let us also define the Legendre symbol, which will help us state Assumption 2 more concisely. For any $a \in \Z_p^\times = \{1, \cdots, p - 1 \}$
\be
\paren{\frac{a}{p}} \coloneqq \begin{cases}
 1 & \text{if } \exists x \text{ s.t. } a \equiv x^2 \mod p \,, \\
 -1 & \text{if } \nexists x \text{ s.t. } a \equiv x^2 \mod p \,.
\end{cases}
\ee

In our context we ignore the situation where $p \mid a$, which simplifies the definition a little. The sign of the Legendre symbol determines whether $a$ is a \emph{quadratic residue} mod $p$. {In turn, the Legendre symbol $\paren{\frac{k}{p}}$ determines the Witt class of the Walker-Wang model disentangled by $\tilde{\beta}_p^{(k)}$.} 

\noindent{\bf Assumption 2.} \textit{Any two disentanglers of {two Walker-Wang model Hamiltonians} are related by a finite-depth circuit (with possibly non-symmetric gates) and translations if, and only if, the Modular Tensor Categories the WW-models are based on are Witt-equivalent. To be more specific, we apply this to the classification of $\tilde{\beta}_p^{(k)}$. Let $k,l \in \Z_p^\times$, then:
\be
\paren{\frac{k}{p}} = \paren{\frac{l}{p}} \quad \iff \quad \tilde{\beta}_p^{(k)} \sim \tilde{\beta}_p^{(l)} \,,
\ee
where we use $\sim$ to denote the equivalence up to FDC with possibly non-symmetric gates and translations.}

By Assumption 1, this can be strengthened to
\be\label{eq:beta_equiv_up_to_T}
\paren{\frac{k}{p}} = \paren{\frac{l}{p}} \quad \iff \quad \exists m \text{ s.t. } [\tilde{\beta}_p^{(k)}] = [\tilde{\beta}_p^{(l)}] [\alpha_{\text{TW}, p}^m] \,.
\ee

\noindent{\bf Assumption 3.} \textit{We assume that the relations stated below, in Eqs.~\eqref{eq:sty_relations}, \eqref{eq:s2c_relations}, \eqref{eq:c_k_relations}, \eqref{eq:additional_identities}, \eqref{eq:CS_CT},
\eqref{ApplyPi}, \eqref{eq:betaT_equalities}, \eqref{PecanPi}, \eqref{WalnutPi} and \eqref{YuzuPi}, are all the independent relations that these QCA satisfy (that is to say, any other relations can be deduced from those).}\footnote{To be more precise, we assume that those equations are sufficient to give the correct ``set of relations'' $R$, in the sense of a finitely presented group on a set of letters $X$, $\langle X \mid R \rangle$. In that context, the group of QCA generated by $S, T, \beta^{(k)}$ and $C_g$ is isomorphic to the free group on $X$ quotiented by the normalizer of $R$.} 

{We note that all the relations which are given as lemmas or corollaries (see the following sections) require Assumptions 1 and 2 in their proofs.} 

Let us now recall the QCA we have constructed, and the relations they satisfy. We have the Kramers-Wannier-Wegner duality $\aKWWp$, and the Tsui-Wen SPT entangler $\aTWp$. We define abbreviated notation for their equivalence classes because that will significantly simplify the notation
\begin{align}
 S &\coloneqq [\aKWWp] \,, & T &\coloneqq [\aTWp] \,.
\end{align}

We also define $\beta^{(k)} \coloneqq [\tilde{\beta}_p^{(k)}]$. These satisfy the relations
\begin{align}\label{eq:sty_relations}
 S^4 &= 1 \,, & T^p &= 1 \,, & (ST)^3 = Y \coloneqq \beta^{(\frac{p-1}{2})} T
\end{align}
where we obviously denote by $1$ the equivalence class of the identity QCA, and we define $Y$ for convenience, because it will appear often in later expressions.

Lastly, we recall that there is a ``generalized charge conjugation'', $\alpha_C^{(k)}$. It is well known that the multiplicative group $\Z_p^\times$ is cyclic of order $p - 1$. We pick a generator $g \in \Z_p^\times$ for it, i.e. some $g$ such that $g^{p-1} = 1$, but $g^n \neq 1$ for any $1 \leq n < p - 1$. Then $C_g \coloneqq [\alpha_C^{(g)}]$ generates all the charge conjugations. In particular
\be\label{eq:s2c_relations}
S^2 = C_g^{\frac{p-1}{2}} = [\alpha_C^{(-1)}]
\ee
is the usual charge conjugation ($X \mapsto X^\dag, Z \mapsto Z^\dag$). This is independent of $g$, because $\frac{p - 1}{2}$ is the only order 2 element in $\Z_{p-1}$, just like $-1$ is the only order 2 element in $\Z_p^\times$.

We may also write $C_{k} \coloneqq [\alpha_C^{(k)}]$ for any other $k$, with the remark that
\begin{align}\label{eq:c_k_relations}
C_k C_l &= C_{k l} \,, & C_g^m C_g^n &= C_{g^m} C_{g^n} = C_{g^{m+n}} = C_g^{m + n} \,.
\end{align}
We use both notations because there is no general formula to express $k$ as a power of $g$.

This generalized charge conjugation allows us to write another family of relations satisfied by those QCA
\be\label{eq:additional_identities}
S T^{-2k} S T^{-\frac{1}{2k}} S = \beta^{(k)} C_{-2k} \quad \text{for } k \in \Z_p^\times \,,
\ee
mirroring \eqref{eq: beta mixing} (but without the translation), as well as
\begin{align}\label{eq:CS_CT}
 C_g S &= S C_g^{-1} & C_g T &= T^{g^{-2}} C_g \,.
\end{align}

These last equalities can be checked explicitly: they are exact at the level of the QCA themselves.

\subsubsection{General properties}

In this section we derive some general identities, some exact at the level of QCA, some utilizing the assumptions, that work for any odd prime $p$. Later we will specialize to the two cases $p \equiv 1,3 \mod 4$,

By direct computation we derive the identities
\begin{align}\label{eq:Cbeta}
 C_g \beta^{(k)} &= C_g S T^{-2k} S T^{-\frac{1}{2k}} S C_{- \frac{1}{2k}} \nonumber \\
 &= S C_g^{-1} T^{-2k} S T^{-\frac{1}{2k}} S C_{- \frac{1}{2k}} \nonumber \\
 &= ST^{-2g^2 k} C_g^{-1} S T^{-\frac{1}{2k}} S C_{- \frac{1}{2k}} = \cdots \nonumber \\
 &= S T^{-2 g^2 k} S T^{-\frac{1}{2 g^2 k}} S C_{-\frac{1}{2gk}} = \beta^{(g^2 k)} C_g \,,
\end{align}
and
\begin{align}
 \paren{\beta^{(k)}}^{-1} &= C_{-2k} S^{-1} T^\frac{1}{2k} S^{-1} T^{2k} S^{-1} \nonumber \\
 &= C_{-2k} S T^\frac{1}{2k} S T^{2k} S C_{-1} \nonumber \\
 &= \cdots = ST^{2k} S T^{\frac{1}{2k}} S C_{\frac{1}{2k}} = \beta^{(-k)}.
\end{align}
Here we use $S^{-1} = S C_{-1}$ with $C_{-1} = C_g^{\frac{p-1}{2}}$ central.

We also note that we may rewrite the additional conditions \eqref{eq:additional_identities} as
\be\label{eq:additional_identities_rewrite}
S T^{-2k} S T^{-\frac{1}{2k}} S T^{-2k} C_{-\frac{1}{2k}} = \beta^{(k)} T^{-\frac{1}{2k}}.
\ee

\begin{lemma}\label{lem:beta_T_commutation}
 For any $k \in \Z_p^\times$, $\beta^{(k)}$ commutes with $T$:
 \be\label{ApplyPi}
\beta^{(k)} T = T \beta^{(k)} \,.
 \ee
\end{lemma}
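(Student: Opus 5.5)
The plan is to show that the commutator $T\beta^{(k)}T^{-1}(\beta^{(k)})^{-1}$ is trivial in the quotient group \eqref{eq:translation_inv_qca_group}. We do this by proving that it extends to a finite depth circuit on the full algebra and then invoking Assumption 1. The argument works at the level of representative QCAs: $\aTWp$ and $\tilde\beta_p^{(k)}$ are both genuine QCAs on the full tensor product algebra $\mc{A}$, and both preserve $\mc{A}_{\Z_p^{(1)}}$ and $\mathfrak{I}_p$. We use \eqref{eq: beta X partial c} for $\tilde\beta_p^{(k)}$ and \eqref{eq: T pk sym} for $\sf{T}_p$.

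First, consider $\gamma \coloneqq \tilde\beta_p^{(k)}\circ\aTWp\circ(\tilde\beta_p^{(k)})^{-1}$, computed on the full algebra. The unitary $\sf{T}_p$ is diagonal in the computational basis, so exactly as in the argument for $\tilde R'$ in Section~\ref{sec:derivation_of_STSTST_Y}, it is a finite depth circuit with possibly non-symmetric gates. Conjugating a finite depth circuit by a QCA gives a finite depth circuit again: each gate is mapped to a local gate, and disjointness within a layer is kept after coarse-graining by the spread of $\tilde\beta_p^{(k)}$. Hence $\gamma$ is an FDC on $\mc{A}$. It commutes with the $X_{\partial c}$, because both factors preserve them, so it restricts to a QCA on the quotient $\mc{A}_{\Z_p^{(1)}}/\mathfrak{I}_p$. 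This is the same mechanism used in \eqref{eq: order of Uframing U} for the framing QCA.

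Second, apply Assumption 1 to $\gamma$. This gives $\beta^{(k)}T(\beta^{(k)})^{-1}=T^{j}$ for some $j\in\{0,\dots,p-1\}$. To fix $j$, project to the quotient $\Witt^\pt(A\oplus\widehat A,s)/\Witt^\pt\cong SL(2,\Z_p)$ from Theorem~\ref{thm:WittZpMain}. Equivalently, look at the induced action on the surface charges, i.e.\ on the symmetric operators modulo local dressing. The class $\beta^{(k)}$ is extendable to the full algebra and is Witt-valued, so its image in $SL(2,\Z_p)$ is trivial. Then $\pi(T)^j=\pi(T)$, where $\pi(T)$ is a transvection of order $p$. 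This forces $j=1$, which is the claim.

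To make the $j=1$ step independent of the categorical conjecture, we would instead read off $j$ directly on the lattice. The SPT level of $\gamma$ is detected by its action on $X_f$ modulo $X$-only factors and symmetric FDC dressing, namely by the exponent of $Z_{\delta t^{\pm1/2}(\bm f)}$ in \eqref{eq: Tpk action}. The plan is to check from \eqref{eq: beta tilde X} and \eqref{eq: beta tilde Z} that conjugation by $\tilde\beta_p^{(k)}$ preserves this exponent, because $\tilde\beta_p^{(k)}$ maps $X_f$ to $X$-type operators and symmetry bubbles at leading order. We expect this to be the main obstacle. The $Z_{\delta t^{1/2}(\bm f)}^{-\frac{1}{2k}s_f}$ factor in \eqref{eq: beta tilde X} mixes sectors, so an honest invariant would be needed. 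The natural candidate is the response $\frac{p-1}{2}B\smile B$ pairing, for example the commutator phase of two dressed membrane operators.
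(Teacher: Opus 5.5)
Your first step matches the paper's: conjugating the FDC $\aTWp$ (with non-symmetric gates) by the extendable QCA $\tilde\beta_p^{(k)}$ gives an FDC, and Assumption~1 then yields $\beta^{(k)}T(\beta^{(k)})^{-1}=T^{j}$ for some $j$. The gap is in fixing $j$.

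First, no homomorphism from $\mathsf{QCA}_{\Z_p^{(1)}}/(\mathsf{FDC}_{\Z_p^{(1)}}\times\mathsf{Shift})$ to $SL(2,\Z_p)$ is available, since the identification with $\Witt^\pt(\Z_p\oplus\widehat{\Z_p},s)$ is only conjectural. Second, even granting such a map, the step ``$\beta^{(k)}$ is extendable, hence $\pi(\beta^{(k)})=1$'' is false, because extendability does not mean grade zero:
\begin{itemize}
\item $T$ is itself extendable, yet your argument needs $\pi(T)$ to be a nontrivial transvection.
\item $\tilde\beta_p^{(\frac{p-1}{2})}$ realizes $Y_pT_p^{-1}$, not $Y_p$. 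Under \eqref{eq:witt_correct_iso}, $\beta^{(k)}$ maps to the lower unipotent matrix with entry $-\frac{1}{2k}$, not to the identity.
\end{itemize}
At best, extendability places the image in the Borel subgroup, and Borel elements need not centralize $\pi(T)$. The extendable $C_g$ shows this concretely: $C_gTC_g^{-1}=T^{g^{-2}}$ holds exactly by \eqref{eq:CS_CT}. So any argument that uses only extendability plus Assumption~1 would wrongly prove that $C_g$ commutes with $T$. You need some input specific to $\beta^{(k)}$.

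The paper supplies that input through the relation with $S$, \eqref{eq:additional_identities}, which gives $S^{-1}\beta^{(k)}T^{-\frac{1}{2k}}S=T^{-2k}\beta^{(\frac{1}{4k})}$. The paper runs your first step on the commutator, obtaining $\beta^{(k)}T^{-\frac{1}{2k}}\beta^{(-k)}T^{\frac{1}{2k}}=T^{j}$. Conjugating this by $S$ gives $S^{-1}T^{j}S=T^{-2k}\beta^{(\frac{1}{4k})}T^{2k}\beta^{(-\frac{1}{4k})}$. The right-hand side is extendable. For $j\neq0$, however, $S^{-1}T^{j}S$ maps the $\Z_p$ toric code ground state to an invertible Walker--Wang state, so it is not extendable; hence $j=0$. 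Your fallback of reading off the SPT level of $\gamma$ directly on the lattice is, as you acknowledge, not carried out, so the step fixing $j$ remains unproven as written.
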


\begin{proof}
 We are going to check that $\beta^{(k)} T^{-\frac{1}{2k}} = T^{-\frac{1}{2k}} \beta^{(k)}$. If so, $-\frac{1}{2k}$ is obviously non-zero, and so $\beta^{(k)}$ also commutes with $T$ itself. First, we see that
 \be
 \beta^{(k)} T^{-\frac{1}{2k}} \paren{\beta^{(k)}}^{-1} T^{\frac{1}{2k}} = \beta^{(k)} T^{-\frac{1}{2k}} \beta^{(-k)} T^{\frac{1}{2k}} \,,
 \ee
 where the left hand side is extendable to an FDC (with possibly non-symmetric gates) on the full algebra. This is because $T$ is extendable to an FDC, and $\beta^{(k)} T^{-\frac{1}{2k}} \paren{\beta^{(k)}}^{-1}$ is the conjugate of an FDC by a QCA, and hence an FDC. Therefore, by Assumption 1, there is some $j$ such that
 \be\label{eq:betaTbetaT}
 \beta^{(k)} T^{-\frac{1}{2k}} \beta^{(-k)} T^{\frac{1}{2k}} = T^j.
 \ee

 Now conjugate $\beta^{(k)} T^{-\frac{1}{2k}}$ by $S$
 \begin{align}
  S^{-1} \beta^{(k)} T^{-\frac{1}{2k}} S &= T^{-2k} S T^{-\frac{1}{2k}} S C_{-\frac{1}{2k}} T^{-\frac{1}{2k}} S \nonumber \\
  &= T^{-2k} S T^{-\frac{1}{2k}} S T^{-2k} S C_{-2k} \nonumber \\
  &= T^{-2k} \beta^{(\frac{1}{4k})} \,. \label{eq: betaTconj}
 \end{align}

 Applying this to both sides of Eq.~\eqref{eq:betaTbetaT} yields
 \begin{align}
  S^{-1} T^j S &= \paren{ S^{-1} \beta^{(k)} T^{-\frac{1}{2k}} S} \paren{S^{-1} \beta^{(-k)} T^{\frac{1}{2k}} S} \nonumber \\
  &= T^{-2k} \beta^{(\frac{1}{4k})} T^{2k} \beta^{(-\frac{1}{4k})} \,.
 \end{align}

 By the same argument as before, the right hand side is extendable to the full algebra, and therefore $j = 0$,\footnote{When $j \neq 0$, the operator $\mathsf{S}_p^{\dagger} \mathsf{T}_p^j \mathsf{S}_p$ maps the $\mathbb{Z}_p$ toric code ground state to an invertible Walker-Wang ground state. Thus, the corresponding QCA cannot be extended to the full algebra. Alternatively, one may use an argument similar to the one in the proof of Proposition~\ref{prop:qca_nonextendability}. \label{fn: Tp centrality}} which concludes the proof.
\end{proof}

\begin{lemma}\label{lem:betaT_from_legendre_symb}
 For any $k, l \in \Z_p^\times$, if $\paren{\frac{k}{p}} = \paren{\frac{l}{p}}$, then
 \be\label{eq:betaT_equalities}
 \beta^{(k)} T^{-\frac{1}{2k}} = \beta^{(l)} T^{-\frac{1}{2l}} \,.
 \ee
\end{lemma}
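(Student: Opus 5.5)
The plan is to combine Assumption~2 (in its strengthened form \eqref{eq:beta_equiv_up_to_T}) with a rigidity argument that pins down the power of $T$, in the same spirit as the proof of Lemma~\ref{lem:beta_T_commutation}. Suppose $\paren{\frac{k}{p}} = \paren{\frac{l}{p}}$. By \eqref{eq:beta_equiv_up_to_T} there is some $m \in \Z_p$ with $\beta^{(k)} = \beta^{(l)} T^m$. Using Lemma~\ref{lem:beta_T_commutation}, $T$ commutes with every $\beta^{(\cdot)}$. The claim \eqref{eq:betaT_equalities} is then equivalent to
\be
m = \frac{1}{2k} - \frac{1}{2l} \,.
\ee
So the whole task is to determine $m$ exactly rather than merely show that it exists.

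To determine $m$, I will conjugate by $S$, as in \eqref{eq: betaTconj}. That computation showed
\be
S^{-1} \beta^{(k)} T^{-\frac{1}{2k}} S = T^{-2k} \beta^{(\frac{1}{4k})} \,,
\ee
and the same identity holds with $l$ in place of $k$. Write $\gamma_k \coloneqq \beta^{(k)} T^{-\frac{1}{2k}}$. The relation $\gamma_k = \gamma_l T^{n}$, with $n \coloneqq m + \frac{1}{2l} - \frac{1}{2k}$, becomes after conjugation
\be
T^{-2k}\beta^{(\frac{1}{4k})} = T^{-2l}\beta^{(\frac{1}{4l})}\, S^{-1} T^{n} S \,.
\ee
Since $\frac{1}{4k}$ and $\frac{1}{4l}$ differ from $k$ and $l$ by the square factors $(2k)^{-2}$ and $(2l)^{-2}$, they still have equal Legendre symbols. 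Assumption~2 therefore gives $\beta^{(\frac{1}{4k})} = \beta^{(\frac{1}{4l})} T^{m'}$ for some $m'$. Substituting this and using that $T$ commutes with the $\beta$'s, we get
\be
S^{-1} T^{n} S = T^{\,2l-2k+m'} \,.
\ee
The right-hand side extends to an FDC on the full algebra. By the argument of footnote~\ref{fn: Tp centrality}, $S^{-1}T^n S$ is extendable only if $n = 0$. This yields $n=0$, which is exactly \eqref{eq:betaT_equalities}. As a byproduct it also forces $m' = 2k-2l$.

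The main obstacle is the extendability step. One must check carefully that $S^{-1}T^{n}S$, for $n\neq 0$, is genuinely not equal in the quotient group to any power of $T$. I plan to rely on the footnote's physical argument: $\mathsf{S}_p^\dagger \mathsf{T}_p^n \mathsf{S}_p$ maps the toric code ground state to an invertible Walker--Wang state. This is incompatible with equality to $T^{j}$, since $T^{j}$ preserves the toric code stabilizers $Z_{\delta\bm e}$, $X_{\partial c}$ up to FDC. Alternatively one could invoke Proposition~\ref{prop:qca_nonextendability}, since the image of $Z_{\delta\bm e}$ involves $X$'s.

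A secondary point is verifying that \eqref{eq: betaTconj} holds for all $k\in\Z_p^\times$. It uses only \eqref{eq:additional_identities}, \eqref{eq:additional_identities_rewrite} and \eqref{eq:CS_CT}, so I expect it to be routine.
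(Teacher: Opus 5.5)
Your proposal is correct and follows essentially the same route as the paper: use Assumption~2 together with Lemma~\ref{lem:beta_T_commutation} to write $\beta^{(k)}T^{-\frac{1}{2k}} = \beta^{(l)}T^{-\frac{1}{2l}}\,T^{n}$, conjugate by $S$ via \eqref{eq: betaTconj}, and invoke the non-extendability of $S^{-1}T^{n}S$ for $n\neq 0$ (footnote~\ref{fn: Tp centrality}) to force $n=0$. The one difference is that your second application of Assumption~2 (comparing $\beta^{(\frac{1}{4k})}$ with $\beta^{(\frac{1}{4l})}$) is not needed. The paper only observes that $T^{-2k}\beta^{(\frac{1}{4k})}T^{2l}\beta^{(-\frac{1}{4l})}$ is a product of extendable QCAs and hence extendable, which is all the final step requires.
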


\begin{proof}
 We begin by conjugating $\beta^{(k)} T^{-\frac{1}{2k}}$ by $S$:
 \be
 S^{-1} \beta^{(k)} T^{-\frac{1}{2k}} S = T^{-2k} \beta^{(\frac{1}{4k})} \,.
 \ee

 Now, if $\paren{\frac{k}{p}} = \paren{\frac{l}{p}}$, by Assumption 2, there is some $j$ such that
 \be
 \beta^{(k)} T^{-\frac{1}{2k}} \beta^{(-l)} T^{\frac{1}{2l}} = T^j \,.
 \ee
 This is because $\beta^{(k)}$ is $\beta^{(l)}$ up to a power of $T$ (Eq.~\eqref{eq:beta_equiv_up_to_T}), and $\beta^{(l)}, \beta^{(-l)}$ cancel out. Conjugate the whole expression by $S$ to find:
 \begin{align}
  S^{-1} T^j S &= S^{-1} \beta^{(k)} T^{-\frac{1}{2k}} S S^{-1} \beta^{(-l)} T^{\frac{1}{2l}} S \nonumber \\
  &= T^{-2k} \beta^{(\frac{1}{4k})} T^{2l} \beta^{(-\frac{1}{4l})} \,. \label{eq: STjS}
 \end{align}
 The right-hand side is extendable to the full algebra, hence the left hand side has to be, too. But this forces $j = 0$, which gives us the result we wanted.
\end{proof}

\begin{corollary}\label{cor:betaT_commutes_with_S}
 $\beta^{(k)} T^{-\frac{1}{2k}}$ commutes with $S$
 \be\label{PecanPi}
 S \beta^{(k)} T^{-\frac{1}{2k}}  = \beta^{(k)} T^{-\frac{1}{2k}} S \,.
 \ee
\end{corollary}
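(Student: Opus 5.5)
The plan is to combine the conjugation identity from the proof of Lemma~\ref{lem:beta_T_commutation} with Lemma~\ref{lem:betaT_from_legendre_symb}. In Eq.~\eqref{eq: betaTconj} we showed
\be
S^{-1} \beta^{(k)} T^{-\frac{1}{2k}} S = T^{-2k} \beta^{(\frac{1}{4k})} \,.
\ee
By Lemma~\ref{lem:beta_T_commutation}, $\beta^{(\frac{1}{4k})}$ commutes with $T$. Hence the right-hand side equals $\beta^{(\frac{1}{4k})} T^{-2k}$.

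Next we rewrite this in the form $\beta^{(l)} T^{-\frac{1}{2l}}$ with $l = \frac{1}{4k}$. Since $-\frac{1}{2l} = -\frac{4k}{2} = -2k$ in $\Z_p$, we get
\be
S^{-1} \beta^{(k)} T^{-\frac{1}{2k}} S = \beta^{(l)} T^{-\frac{1}{2l}} \,, \qquad l = \tfrac{1}{4k} \,.
\ee

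Now compare Legendre symbols. We have $l = k^{-1}\cdot 4^{-1}$, and $4^{-1} = (2^{-1})^2$ is a square, so $\paren{\frac{l}{p}} = \paren{\frac{k^{-1}}{p}}$. Moreover $k^{-1} = k\cdot (k^{-1})^2$, so $\paren{\frac{k^{-1}}{p}} = \paren{\frac{k}{p}}$. Therefore $\paren{\frac{l}{p}} = \paren{\frac{k}{p}}$.

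Lemma~\ref{lem:betaT_from_legendre_symb} then gives $\beta^{(l)} T^{-\frac{1}{2l}} = \beta^{(k)} T^{-\frac{1}{2k}}$. Thus $S^{-1} \beta^{(k)} T^{-\frac{1}{2k}} S = \beta^{(k)} T^{-\frac{1}{2k}}$, and multiplying on the left by $S$ gives Eq.~\eqref{PecanPi}.

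There is no real obstacle in this argument. The only steps needing care are the modular arithmetic checks that $-\frac{1}{2l} = -2k$ and that $l$ and $k$ have the same Legendre symbol. The substantive input, Assumption~2, is already packaged inside Lemma~\ref{lem:betaT_from_legendre_symb}.
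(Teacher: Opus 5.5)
Your proof is correct and follows the paper's proof: conjugate by $S$ via Eq.~\eqref{eq: betaTconj}, then apply Lemma~\ref{lem:betaT_from_legendre_symb} with $l=\frac{1}{4k}$, whose Legendre symbol equals that of $k$, and reorder using Lemma~\ref{lem:beta_T_commutation}. The only cosmetic difference is that you commute $\beta^{(l)}$ past $T^{-2k}$ before invoking the Legendre lemma, whereas the paper substitutes first, cancels the $T$ powers, and reorders at the end.
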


\begin{proof}
 We reuse the first part of the previous proof
 \be
 S^{-1} \beta^{(k)} T^{-\frac{1}{2k}} S = T^{-2k} \beta^{(\frac{1}{4k})} \,,
 \ee
 and use the previous lemma
 \be
 T^{-2k} \beta^{(\frac{1}{4k})} = T^{-2k} T^{\frac{1}{2 \frac{1}{4k}}} T^{-\frac{1}{2k}} \beta^{(k)} \,,
 \ee
 to conclude
 \be
 S^{-1} \beta^{(k)} T^{-\frac{1}{2k}} S = T^{-\frac{1}{2k}} \beta^{(k)} = \beta^{(k)} T^{-\frac{1}{2k}} \,.
 \ee

 We can use the previous lemma because
 \be
 \paren{\frac{\frac{1}{4k}}{p}} = \paren{\frac{\paren{\frac{1}{2k}}^2 k}{p}} = \paren{\frac{k}{p}} \,,
 \ee
 since $k$ and $\frac{1}{4k}$ are related by a perfect square.
\end{proof}

\begin{lemma}\label{lem:betaT_commutes_with_Cg}
 $\beta^{(k)} T^{-\frac{1}{2k}}$ commutes with $C_g$.
 \be\label{WalnutPi}
 C_g \beta^{(k)} T^{-\frac{1}{2k}}  = \beta^{(k)} T^{-\frac{1}{2k}} C_g \,.
 \ee
\end{lemma}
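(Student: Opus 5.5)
The plan is to conjugate $\beta^{(k)} T^{-\frac{1}{2k}}$ by $C_g$ and use the exact relations \eqref{eq:CS_CT} together with \eqref{eq:Cbeta}. This should yield another element of the same form with $k$ replaced by $g^2 k$. Since $g^2 k$ and $k$ differ by a perfect square, Lemma~\ref{lem:betaT_from_legendre_symb} then identifies the two elements.

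First, I compute the two conjugations separately. By \eqref{eq:Cbeta} we have $C_g \beta^{(k)} = \beta^{(g^2 k)} C_g$. By \eqref{eq:CS_CT} we have $C_g T = T^{g^{-2}} C_g$, hence $C_g T^{m} = T^{g^{-2} m} C_g$ for any exponent $m \in \Z_p$. Taking $m=-\frac{1}{2k}$ gives
\be
C_g T^{-\frac{1}{2k}} = T^{-\frac{1}{2 g^2 k}} C_g \,.
\ee
Combining the two,
\be
C_g \beta^{(k)} T^{-\frac{1}{2k}} = \beta^{(g^2 k)} T^{-\frac{1}{2 (g^2 k)}} C_g \,.
\ee

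Second, note that $\paren{\frac{g^2 k}{p}} = \paren{\frac{k}{p}}$, since $g^2$ is a nonzero square. Lemma~\ref{lem:betaT_from_legendre_symb} then gives $\beta^{(g^2 k)} T^{-\frac{1}{2 g^2 k}} = \beta^{(k)} T^{-\frac{1}{2k}}$. Substituting this into the display above yields \eqref{WalnutPi}.

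The main obstacle is bookkeeping rather than any new idea. The step I would check most carefully is the derivation of \eqref{eq:Cbeta}, specifically how the charge conjugation $C_{-\frac{1}{2k}}$ commutes past $S$ and $T$. That check confirms the exponent becomes exactly $g^2 k$ (not $g^{-2}k$), and that the power of $T$ transforms consistently with it, so that the combination $\beta^{(k')} T^{-\frac{1}{2k'}}$ is preserved with $k' = g^2 k$. Even if the sign of the exponent of $g$ turned out to be flipped, the argument would survive: $g^{-2}$ is also a square, so Lemma~\ref{lem:betaT_from_legendre_symb} applies equally.
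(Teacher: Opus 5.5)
Your proof is correct and is essentially the paper's argument: you conjugate by $C_g$ using \eqref{eq:CS_CT} and \eqref{eq:Cbeta} to obtain $\beta^{(g^2k)}T^{-\frac{1}{2g^2k}}$, then invoke Lemma~\ref{lem:betaT_from_legendre_symb}, since $g^2k$ and $k$ have the same Legendre symbol. Your closing hedge is unnecessary and slightly off. \eqref{eq:Cbeta} does give $g^2k$, matching the $T$-exponent. It is this matching, not merely the fact that $g^{\pm 2}$ is a square, that puts the result in the form $\beta^{(k')}T^{-\frac{1}{2k'}}$ to which the lemma applies.
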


\begin{proof}
 Using the properties derived earlier (Eqs.~\eqref{eq:CS_CT}, \eqref{eq:Cbeta})
 \be
 C_g \beta^{(k)} T^{-\frac{1}{2k}} C_g^{-1} = \beta^{(g^2 k)} T^{-\frac{1}{2 g^2 k}} = \beta^{(k)} T^{-\frac{1}{2k}}.
 \ee
 For the last equality we apply Eq.~\eqref{eq:betaT_equalities}, because $g^2 k$ has the same Legendre symbol as $k$.
\end{proof}

\begin{lemma}\label{lem:order_of_betaT}
 $\beta^{(k)} T^{-\frac{1}{2k}}$ is of order $n_p$, where $n_p$ is 2 if $p \equiv 1 \mod 4$, and 4 if $p \equiv 3 \mod 4$
 \be\label{YuzuPi}
 \ord \paren{\beta^{(k)} T^{-\frac{1}{2k}}} = n_p \,.
 \ee
\end{lemma}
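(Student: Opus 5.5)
Write $Y_k \coloneqq \beta^{(k)} T^{-\frac{1}{2k}}$. By Lemma~\ref{lem:betaT_from_legendre_symb} its class depends only on the Legendre symbol $\paren{\frac{k}{p}}$. By Lemma~\ref{lem:beta_T_commutation}, $\beta^{(k)}$ commutes with $T$, so
\be
Y_k^n = \paren{\beta^{(k)}}^n T^{-\frac{n}{2k}} \,.
\ee
The plan is to compute the order of $\beta^{(k)}$ modulo powers of $T$, and then pin down the remaining power of $T$ using the $S$-conjugation trick from the preceding lemmas.

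\textbf{Step 1: Witt class and Assumption 2.} The extension of $\tilde\beta_p^{(k)}$ to the full algebra disentangles the Walker-Wang model for the chiral $\Z_p$ MTC with spin determined by $k$. That Witt class has order $n_p$ in $W_p$ (order 2 for $p\equiv1$, order 4 for $p\equiv3 \bmod 4$). Hence $(\tilde\beta_p^{(k)})^{n_p}$ is Witt trivial. By Assumption 2 (comparing with the identity, or equivalently with $\tilde\beta_p^{(k)}$ composed with its inverse $\tilde\beta_p^{(-k)}$), it is an FDC with possibly non-symmetric gates, up to translations. By Assumption 1 we therefore get
\be
\paren{\beta^{(k)}}^{n_p} = T^j \quad \text{for some } j \,.
\ee
It follows that $Y_k^{n_p} = T^{m}$ with $m = j - \frac{n_p}{2k}$.

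\textbf{Step 2: Kill the $T$-power.} By Corollary~\ref{cor:betaT_commutes_with_S}, $Y_k$ commutes with $S$, so $T^m = Y_k^{n_p}$ commutes with $S$. Now $S^{-1}T^mS$ corresponds to $\mathsf{S}_p^\dagger \mathsf{T}_p^m \mathsf{S}_p$. As in footnote~\ref{fn: Tp centrality}, this is not extendable to the full algebra unless $m=0$, whereas $T^m$ is extendable. This forces $m=0$, i.e.\ $Y_k^{n_p}=1$.

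\textbf{Step 3: Exactness of the order.} We must show $Y_k^{d} \neq 1$ for proper divisors $d$ of $n_p$. Since $T$ is Witt trivial, the extension of $Y_k^d$ has Witt class $[\beta^{(k)}]^d$, which is nontrivial in $W_p$ for $d<n_p$. This uses the uniqueness of the Witt class of extensions established at the end of Section~\ref{sec:derivation_of_STSTST_Y}, which carries over to $\Z_p$: any extension of the identity is Witt trivial. So $Y_k^d$ cannot equal the class of the identity, because that would force its extension to be Witt trivial.

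\textbf{Main obstacle.} The delicate step is Step 1: justifying that $(\tilde\beta_p^{(k)})^{n_p}$ is equivalent to a (non-symmetric) FDC. This requires identifying the Witt class of the $n_p$-fold power with the $n_p$-fold Deligne power of the chiral $\Z_p$ MTC, which is multiplicativity of Witt classes under composition of QCAs. I would cite the Clifford QCA classification \cite{Haah:2022yyo} and use Assumption 2 in its "if" direction. A secondary point is confirming that the non-extendability argument of footnote~\ref{fn: Tp centrality} applies for arbitrary nonzero $m$; it does, since any nonzero $m$ gives a generating SPT level.
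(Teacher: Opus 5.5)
Your proposal is correct and follows essentially the same route as the paper. Both arguments commute $\beta^{(k)}$ past $T$, use the fact that $\tilde\beta_p^{(k)}$ has order $n_p$ modulo translations and non-symmetric FDCs together with Assumption 1 to get $Y_k^{n_p}=T^j$, force $j=0$ from commutation with $S$ and the non-extendability of $S^{-1}T^jS$, and obtain exactness from the non-triviality of lower powers of $\tilde\beta_p^{(k)}$. The main difference is in how that order-$n_p$ input is obtained: the paper simply cites \cite{MengSun2026} (plus ancilla removal for $p\equiv 3 \bmod 4$), which is exactly the fact your Step~1 ``main obstacle'' needs, rather than deriving it from Witt-class multiplicativity and Assumption~2 as you do.
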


\begin{proof}
 By the properties deduced so far,
 \be
 \paren{\beta^{(k)} T^{-\frac{1}{2k}}}^n = \paren{\beta^{(k)}}^n T^{-\frac{n}{2k}} \,.
 \ee

 It has been shown \cite{MengSun2026} that $\tilde{\beta}_p^{(k)}$ has order $n_p$ (for now, the exact value of $n_p$ is irrelevant), up to translations \emph{and circuits which might have non-symmetric gates}. We emphasize this because the quotient we are taking here is with respect to circuits whose gates are all $\Z_p^{(1)}$-symmetric. But in any case,
 \be
 \paren{\beta^{(k)} T^{-\frac{1}{2k}}}^{n_p} = \paren{\beta^{(k)}}^{n_p} T^{-\frac{n_p}{2k}} \,,
 \ee
 where on the right hand side everything is a finite depth circuit \emph{with possibly non-symmetric gates}. Indeed, $\aTWp^m$ is an FDC for any $m$. By Assumption 1, this means that for some $j \in \Z_p$,
 \be
 \paren{\beta^{(k)} T^{-\frac{1}{2k}}}^{n_p} = T^j \,.
 \ee

 But we have shown that the left hand side is central in the group generated by $S, T$ and $C_g$, so the right hand side must be central too. This only happens for $j = 0$, cf.~footnote~\ref{fn: Tp centrality}. In other words, we find that the order of $\beta^{(k)} T^{-\frac{1}{2k}}$ divides $n_p$.

 By a similar argument, if we had $\paren{\beta^{(k)} T^{-\frac{1}{2k}}}^n = 1$ for some $0 < n < n_p$, then it would follow that
 \be
 1 = \paren{\beta^{(k)}}^n T^{-\frac{n}{2k}} \,.
 \ee
 This would mean that $(\tilde{\beta}_p^{(k)})^n$ (the QCA itself, not just its equivalence class) could be written as a finite depth circuit with possibly non-symmetric gates, modulo lattice translations. %in contradiction with what Ref.~\cite{MengSun2026} found.
 This is in contradiction with the fact that $\tilde{\beta}_p^{(k)}$ has order $n_p$ (modulo lattice translations and finite-depth circuits with possibly non-symmetric gates).

 Hence, the order of $\beta^{(k)} T^{-\frac{1}{2k}}$ is exactly $n_p$, which is given by \cite{MengSun2026}\footnote{When $p=3 \bmod 4$, Ref.~\cite{MengSun2026} showed that $\tilde{\beta}_p^{(\frac{p-1}{2})}$ has order 4 under the stable equivalence. That is, $\tilde{\beta}_p^{(\frac{p-1}{2})} \otimes \mathrm{id}_{\mathrm{ancilla}}$ has order 4 modulo finite-depth circuits and lattice translations. Here, $\mathrm{id}_{\mathrm{ancilla}}$ is the identity QCA on an ancillary system, which in this case is another copy of the physical system~\cite{MengSun2026}. This result, combined with the ancilla removal in~\cite{Freedman:2019ucy}, implies that $\tilde{\beta}_p^{(\frac{p-1}{2})}$ has order 4 modulo finite-depth circuits and lattice translations even without adding ancillas.}
 \be\label{npDef}
 n_p = \begin{cases}
  2 & \text{when $p=1$ mod 4} \\
  4 & \text{when $p=3$ mod 4}
 \end{cases}
 \,.
 \ee
\end{proof}

\subsubsection{$\mathbf{p \equiv 3 \text{ mod } 4}$}

We begin the analysis with the case where $p$ is an odd prime with $p \equiv 3 \mod 4$. Let us focus on the family of relations \eqref{eq:additional_identities_rewrite}. By Eq.~\eqref{eq:betaT_equalities}, the right hand side is equal to either $Y$ or $Y^{-1}$, depending on the Legendre symbols.\footnote{We recall that $Y = \beta^{(\frac{p-1}{2})}T$ and $Y^{-1} = \beta^{(\frac{p+1}{2})}T^{-1}$. When $p=3$ mod 4, $\frac{p-1}{2}$ and $\frac{p+1}{2}$ have opposite Legendre symbols. Hence, any $k$ has the same Legendre symbol as one or the other.} Using the generator $g$, we have
\be
\paren{\frac{g^m}{p}} = (-1)^m \,.
\ee

This is obvious if $m$ is even, since $g^m = \paren{g^{\frac{m}{2}}}^2$ is clearly a quadratic residue. The odd ones have to be quadratic non-residues because there are $\frac{p-1}{2}$ of each in $\Z_p^\times$. Then, we can rewrite the additional relations \eqref{eq:additional_identities_rewrite} as
\be\label{eq:additional_identities_rewrite2}
S T^{g^m} S T^{g^{-m}} S T^{g^m} C_{g^{-m}} = Y^{(-1)^m} \,,
\ee
with $m = 0$ being the base case $(ST)^3 = Y$. However, those relations are not independent.

\begin{lemma}
 All the relations in Eq.~\eqref{eq:additional_identities_rewrite2} follow from $(ST)^3 = Y$ and the other identities.
\end{lemma}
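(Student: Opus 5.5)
The plan is to conjugate the base relation $(ST)^3=Y$ by powers of the generalized charge conjugation $C_g$, using the exact relations \eqref{eq:CS_CT}, and to track how $Y$ transforms. This handles even and odd $m$ uniformly. From $C_g S C_g^{-1}=S C_g^{-2}$ (equivalently $C_g S = S C_g^{-1}$) and $C_g T C_g^{-1}=T^{g^{-2}}$, we get
\[
C_g^{n}\, T\, C_g^{-n}=T^{g^{-2n}}.
\]
We also get that $C_g^n S$ equals $S C_g^{-n}$. So conjugating $(ST)^3=Y$ by $C_g^n$ will produce a word in $S$, $T^{g^{\pm 2n}}$ and $C$'s. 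After pushing the $C$'s to the right, this word should reduce to exactly the left-hand side of \eqref{eq:additional_identities_rewrite2} for some $m$ related to $n$.

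The bookkeeping runs as follows. Write $(ST)^3 C_g^{-n}$ and move $C_g^{\pm n}$ through each factor. Passing $C_g^{n}$ to the right past $S$ flips it to $C_g^{-n}$. Passing it past $T$ rescales the exponent of $T$ by $g^{\mp 2n}$. The alternating signs from passing three copies of $S$ leave exponents $g^{a}$, $g^{-a}$, $g^{a}$ on the three $T$'s and a residual charge conjugation $C_{g^{-a}}$ at the end. This is the pattern $S T^{g^m} S T^{g^{-m}} S T^{g^m} C_{g^{-m}}$.

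The same manipulation, run with a general exponent $g^m$, converts the left-hand side of \eqref{eq:additional_identities_rewrite2} into a conjugate of $(ST)^3$ times a power of $C$. We then need to check that $m$ ranges over all residues mod $p-1$.

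The parity issue is the crux and the main obstacle. Conjugation by $C_g$ rescales $T$-exponents only by squares $g^{\pm2}$, so conjugation alone reaches only even $m$ directly. For even $m$, the right-hand side is then $C_g^n Y C_g^{-n}$. By Lemma~\ref{lem:betaT_commutes_with_Cg}, $Y=\beta^{(\frac{p-1}{2})}T$ commutes with $C_g$, so this equals $Y$, consistent with $(-1)^m=1$.

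For odd $m$, I would instead start from the inverse relation $(ST^{-1})^3=Y^{-1}$. This is the content of \eqref{eq: DTDTDT plus} at the level of classes, or it can be derived from $(ST)^3=Y$ together with $S^{-1}=SC_{-1}$ and the identity $\big(\beta^{(k)}\big)^{-1}=\beta^{(-k)}$. Since $p\equiv 3 \bmod 4$, $-1$ is a non-residue, so $-1=g^{\text{odd}}$, and conjugating $(ST^{-1})^3=Y^{-1}$ by powers of $C_g$ covers all odd $m$.

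I expect the delicate points to be two. First, one must check that $(ST^{-1})^3=Y^{-1}$ genuinely follows from $(ST)^3=Y$ and the other identities, and does not need to be imposed separately; this uses the inverse computation of $\beta^{(k)}$ and the centrality of $C_{-1}$. Second, one must confirm that the trailing $C_{g^{-m}}$ appears with exactly the stated exponent, rather than its inverse, after the three sign flips.
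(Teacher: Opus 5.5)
Your even-$m$ step is the paper's: conjugating $(ST)^3=Y$ by $C_g^n$ gives $ST^{g^{2n}}ST^{g^{-2n}}ST^{g^{2n}}C_{g^{-2n}}=C_g^nYC_g^{-n}=Y$, by Lemma~\ref{lem:betaT_commutes_with_Cg}. The odd case, however, rests on a false seed relation.

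Invert $(ST)^3=Y$ and conjugate by $T$. This step needs that $Y$ commutes with $T$ (Lemma~\ref{lem:beta_T_commutation}), not the identity $(\beta^{(k)})^{-1}=\beta^{(-k)}$. It gives $Y^{-1}=S^{-1}T^{-1}S^{-1}T^{-1}S^{-1}T^{-1}$. With $S^{-1}=S^3=SC_{-1}$ and $C_{-1}$ central, this becomes
\[
Y^{-1}=(ST^{-1})^3\,C_{-1},
\]
and not $(ST^{-1})^3=Y^{-1}$. Your version would force $C_{-1}=S^2=1$, which is false: under the homomorphism of Theorem~\ref{thm:SL2pZ4}, $S^2\mapsto(-\mathbf 1,2)$. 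It is also not what \eqref{eq: DTDTDT plus} says. At the level of classes that equation reads $(ST^{-1})^3=C_{-1}\beta^{(\frac{p+1}{2})}T^{-1}=C_{-1}Y^{-1}$, and the charge conjugation is exactly what you dropped. In any case you may not invoke \eqref{eq: DTDTDT plus} or $(\beta^{(k)})^{-1}=\beta^{(-k)}$ here. The former is the $m=\tfrac{p-1}{2}$ member of the very family the lemma must generate, and the latter is derived from that same member.

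The missing $C_{-1}$ is not cosmetic; it is what makes the odd case work. Conjugating your seed by $C_g^n$ gives $ST^{g^m}ST^{g^{-m}}ST^{g^m}C_{g^{-2n}}$ with $m=2n+\tfrac{p-1}{2}$. Its trailing charge conjugation has the even exponent $-2n$ instead of $-m$, so it is not a member of \eqref{eq:additional_identities_rewrite2}.

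With the correct seed, write $T^{-1}=T^{g^{(p-1)/2}}$ and $C_{-1}=C_{g^{-(p-1)/2}}$. Then $Y^{-1}=(ST^{-1})^3C_{-1}$ is itself the $m=\tfrac{p-1}{2}$ relation. Conjugating by $C_g^n$ turns the trailing factor into $C_{g^{-2n}}C_{-1}=C_{g^{-m}}$ with $m=2n+\tfrac{p-1}{2}$. These values of $m$ exhaust the odd residues, because $\tfrac{p-1}{2}$ is odd for $p\equiv 3\bmod 4$. The right-hand side stays $Y^{-1}$ since $Y$ commutes with $C_g$. With this correction your argument coincides with the paper's.
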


\begin{proof}
 Using the fact that $Y$ commutes with $C_g$ (Lemma~\ref{lem:betaT_commutes_with_Cg})
 \begin{align}
  Y &= C_g Y C_g^{-1} = C_g STSTST C_g^{-1} \nonumber \\
  &= S C_g^{-1} T STST C_g^{-1} = \cdots \nonumber \\
  &= ST^{g^2} S T^{g^{-2}} S T^{g^2} C_{g^{-2}} \,.
 \end{align}

 This allows us to generate all the identities in Eq.~\eqref{eq:additional_identities_rewrite2} for $m$ even. For $m$ odd, we instead begin with
 \begin{align}
 Y^{-1} &= T (Y^{-1}) T^{-1} = S^{-1} T^{-1} S^{-1} T^{-1} S^{-1} T^{-1} \nonumber \\
 &= (ST^{-1})^3 C_{-1} \,.
 \end{align}

 Conjugating by $C_g$, we find all the identities with $m$ odd.
\end{proof}

Let us summarize the results so far. In the group generated by $S, T, Y, C_g$, the following hold
\begin{align}\label{eq:STYCg_relations}
 S^2 &= C_g^{\frac{p-1}{2}}, & T^p &= 1, & (ST)^3 &= Y, & C_g^{p-1} &= 1, \nonumber \\
 C_gS &= SC_g^{-1}, & Y^4 &= 1 & C_gT &= T^{g^{-2}} C_g ,
\end{align}
and $Y, C_g^{\frac{p-1}{2}}$ are central. We know that $Y$ has order $4$ because of Lemma~\ref{lem:order_of_betaT}.

\begin{theorem}\label{thm:SL2pZ4}
 Let $G$ be the finitely presented group with generators $s, t, y, c_g$ and relations given by the analogues of the relations above (and $y, c_g^{\frac{p-1}{2}}$ central as well).
 
 Then $G \cong SL(2, \Z_p) \times \Z_4$.
\end{theorem}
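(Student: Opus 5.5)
The plan is to exhibit an explicit isomorphism $G \to SL(2,\Z_p)\times\Z_4$ by first splitting off the central generator and then identifying the remaining group by counting orders.

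\emph{Splitting off the centre.} Put $t' := t\,y^{a}$ for a suitable exponent $a$. Since $p \equiv 3 \bmod 4$, $p$ is invertible mod $4$, so we can pick $a$ with $pa \equiv 0 \bmod 4$ and $3a \equiv -1 \bmod 4$; explicitly $a \equiv 1 \bmod 4$. Then:
\begin{itemize}
\item $t'^{\,p} = t^p y^{pa} = y^{p}$. This is not trivial, so $a$ must be tuned further. The two conditions are incompatible, because the first forces $a \equiv 0 \bmod 4$.
\item The fix is to move $y$ into a relation that does not involve $t$'s order. Using $y^4=1$ and $\gcd(p,4)=1$, replace $t$ by $t'' = t\,y^{b}$ with $b$ chosen so that $y^{pb}$ and $y^{3b}$ satisfy $(st'')^3 = y^{1+3b}$, $t''^{\,p} = y^{pb}$.
\item Requiring both to vanish needs $pb \equiv 0$ and $3b \equiv -1 \bmod 4$, which are again incompatible.
\end{itemize}
So instead we keep $t$ and twist $s$: set $s' := s\,y^{c}$. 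Then $(s't)^3 = y^{1+3c}$ and $s'^2 = s^2 y^{2c}$. Choosing $c \equiv 1 \bmod 4$ gives $(s't)^3 = 1$ and $s'^2 = c_g^{(p-1)/2}\,y^{2}$. Because $y^2$ is central of order $2$, we then define the new charge-conjugation element as $c_g' := c_g\,y^{e}$, where $e$ is chosen so that $c_g'^{\,(p-1)/2} = c_g^{(p-1)/2}y^2$. This is possible because $(p-1)/2$ is odd when $p \equiv 3 \bmod 4$: take $e = 2$. The relations $c_g' s' = s' c_g'^{-1}$ and $c_g' t = t^{g^{-2}} c_g'$ persist, since the $y$-factors are central and on the first relation contribute $y^{2}\cdot y^{-2}$ against $y^{c}$ on both sides. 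This check needs care but is routine.

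\emph{The subgroup $H = \langle s', t, c_g'\rangle$.} It satisfies exactly the $y$-free relations, and $G = H \cdot \langle y\rangle$. The next step is to show $H \cong SL(2,\Z_p)$. For the lower bound, the assignment
\[
s' \mapsto \begin{pmatrix}0&-1\\1&0\end{pmatrix},\qquad
t \mapsto \begin{pmatrix}1&0\\1&1\end{pmatrix},\qquad
c_g' \mapsto \operatorname{diag}(g,g^{-1})
\]
satisfies the relations (the conjugation relation $c_gT=T^{g^{-2}}C_g$ fixes which diagonal form to use). It is surjective, so $|H| \ge p(p^2-1)$. For the upper bound:
\begin{itemize}
\item $c_g'$ lies in $\langle s',t\rangle$. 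This uses the standard identity expressing $\operatorname{diag}(g,g^{-1})$ as a word $s't^{g}s't^{g^{-1}}s't^{g}$ times a power of $s'^2$, which is exactly the $m=1$ version of \eqref{eq:additional_identities_rewrite2}, now with trivial right-hand side. The derivation mirrors the preceding lemma, run backwards.
\item $\langle s',t \mid s'^4=1,\ s'^2 \text{ central},\ t^p = 1,\ (s't)^3 = s'^2,\ \ldots\rangle$ together with the conjugation relations is a known presentation of $SL(2,\Z_p)$: Behr--Mennicke, or Sunday's presentation with the diagonal torus relation.
\end{itemize}
Together these give $|H| \le |SL(2,\Z_p)|$, hence $H \cong SL(2,\Z_p)$.

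\emph{Direct product.} Finally we show $H\cap\langle y\rangle = 1$. Use the map $G \to \Z_4$ sending $s \mapsto 1$, $y \mapsto 1$, $c_g \mapsto 2$, $t \mapsto 0$. This is consistent: $(st)^3 \mapsto 3 \ne y$, so adjust to $s\mapsto 3$, which gives $(st)^3 \mapsto 9 = 1 = y$, $s^2 \mapsto 2 = c_g^{(p-1)/2}$ (with $(p-1)/2$ odd), and $c_gs=sc_g^{-1}$ holds. This map is trivial on $H$: $s' = s y \mapsto 0$, and $c_g' \mapsto 2+2 = 0$. It sends $y$ to a generator, so $y$ has order exactly $4$ and $G = H\times\langle y\rangle \cong SL(2,\Z_p)\times\Z_4$.

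The main obstacle is the presentation step: verifying that the stated relations, with $c_g$ included, really cut $H$ down to order $p(p^2-1)$. For this I would lean on a known presentation of $SL(2,\Z_p)$ and check each of its defining relations from ours.
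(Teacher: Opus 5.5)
\textbf{Genuine gap: the upper bound on $|H|$.} Most of your skeleton is correct, and it is the paper's homomorphism in another guise. Your character $\psi$ ($s\mapsto 3$, $t\mapsto 0$, $y\mapsto 1$, $c_g\mapsto 2$) is exactly the $\Z_4$-component of the paper's map, and your matrices are its $SL(2,\Z_p)$-component conjugated by $\mathrm{diag}(1,-1)$.

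The twisted elements $s'=sy$, $c_g'=c_gy^2$ do satisfy the $y$-free relations. This includes $(c_g')^{p-1}=1$, which you did not check but which holds since $2(p-1)\equiv 0 \bmod 4$. The identity $c_g'=s't^{g}s't^{g^{-1}}s't^{g}$ does hold in $G$, and $\psi$ correctly gives $G=H\times\langle y\rangle$ with $y$ of order $4$.

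One repair is needed for the lower bound: define the matrix map on all of $G$, sending $y\mapsto\mathbf 1$. This respects every relation, since for your matrices $(ST)^3=\mathbf 1$ and $S^2=-\mathbf 1=\mathrm{diag}(g,g^{-1})^{(p-1)/2}$. Checking relations only on the generators of $H$ yields a map out of the abstract group they present, not out of the subgroup $H\le G$.

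The gap is the bound $|H|\le p(p^2-1)$. It is the only non-formal content of the theorem, and you defer it to an unspecified ``known presentation''. As displayed, $\langle s',t\mid s'^4,\ s'^2 \text{ central},\ t^p,\ (s't)^3=s'^2\rangle$ is not a presentation of $SL(2,\Z_p)$. It surjects onto the $(2,3,p)$ triangle group, which is infinite for $p\ge 7$. Also, you had arranged $(s't)^3=1$, not $s'^2$.

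So everything rests on the ``$\ldots$''. You neither name a specific presentation nor derive its extra relator from your relations; for Behr--Mennicke-type presentations that relator is something like $(t^4s't^{(p+1)/2}s')^2=s'^2$.

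The missing ingredient is the torus identity $s't^{k}s't^{k^{-1}}s't^{k}\in\langle c_g'\rangle$ for \emph{every} $k\in\Z_p^\times$. It follows in two steps:
\begin{itemize}
\item $(s't)^3=1$ implies $(s't^{-1})^3=s'^2$.
\item Conjugating these two relations by powers of $c_g'$ gives the identity for all $k=\pm g^{2j}$. This exhausts $\Z_p^\times$ precisely because $-1$ is a non-residue for $p\equiv 3 \bmod 4$.
\end{itemize}

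With this identity, no presentation needs to be cited. Consider the at most $p(p-1)+p^2(p-1)=p(p^2-1)$ words $t^{a}(c_g')^{d}$ and $t^{a}s't^{b}(c_g')^{d}$. This set contains $1$ and is closed under right multiplication by $s'$, $t$, $c_g'$, because the identity rewrites $s't^{b}s'$ as $t^{\ast}s't^{\ast}(c_g')^{\ast}$.

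This is exactly the paper's argument. The paper runs the same count directly in $G$ with an extra factor $y^{a_1}$, using $sts=y\,t^{-1}st^{-1}c_g^{(p-1)/2}$. Your explicit splitting then only makes the direct-product structure visible, which the paper leaves encoded in the target of its map.
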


Then, under Assumption 3, the group of QCA (modulo FDC and translations), contains a subgroup isomorphic to $SL(2, \Z_p) \times \Z_4$.

\begin{proof}
 Let us begin by defining a map from $G$ to $SL(2, \Z_p) \times \Z_4$. We will first show that it is a group homomorphism, and then that it is an isomorphism.

 Define
 \begin{align}
  s &\longmapsto \paren{ \begin{pmatrix} 0 & 1 \\ -1 & 0 \end{pmatrix}, 3 } \nonumber \\
  t &\longmapsto \paren{ \begin{pmatrix} 1 & 0 \\ -1 & 1 \end{pmatrix}, 0 } \nonumber \\
  y &\longmapsto \paren{ \begin{pmatrix} 1 & 0 \\ 0 & 1 \end{pmatrix}, 1 } \nonumber \\
  c_g &\longmapsto \paren{ \begin{pmatrix} g & 0 \\ 0 & g^{-1} \end{pmatrix}, 2 }\,.
 \end{align}

 Showing that it is a group homomorphism requires verifying that all the relations defining $G$ are satisfied. This is a straightforward, if tedious, calculation.

 The much more interesting question is if this is an isomorphism. It is clear that it is surjective. $y$ generates the $\Z_4$ subgroup, and $sy, t$ generate $SL(2, \Z_p)$. Hence $\abs{G} \geq 4 p (p^2 - 1)$ because the cardinality of $SL(2, \Z_p)$ is known.
 
 Consider the following set of words on the generators $s, t, y, c_g$:
 \be
 \Sigma \coloneqq \curpar{y^{a_1} t^{a_2} c_g^{a_3}, y^{b_1} t^{b_2} s t^{b_3} c_g^{b_4}}
 \ee
 for any $0 \leq a_1, b_1 \leq 3$, $0 \leq a_2, b_2, b_3 \leq p - 1$, and $0 \leq a_3, b_4 \leq p - 2$. Clearly, this set contains at most
 \be
 4 \cdot p \cdot (p - 1) + 4 \cdot p \cdot p \cdot (p-1) = 4p (p - 1)(p + 1) = 4 p (p^2 - 1) \,,
 \ee
 elements of $G$ (at most, because some of them may appear multiple times).

 Note that for any $w \in \Sigma$, $wy \in \Sigma$ because $y$ commutes with everything and increases the exponents $a_1, b_1$, and we use $y^4 = 1$ to bring them down to the range $0, \cdots, 3$. Similarly, $w c_g \in \Sigma$, because it simply increases $a_3, b_4$. For $wt$, we use the rule $c_g t = t^{g^{-2}} c_g$ to bring $t$ to the left, and thus $wt \in \Sigma$. Finally, $ws$ is similar, but first we use $c_g s = s c_g^{-1}$, and then additionally we use
 \be
 sts = y t^{-1} s t^{-1} c_g^{\frac{p-1}{2}}
 \ee
 to convert the double $s$ in a word of the second type ($y^{b_1} t^{b_2} s t^{b_3} c_g^{b_4} \cdot s$) back to a form in $\Sigma$. Thus, we conclude $\Sigma \cdot G \subseteq \Sigma$.

 But in particular $1 \in \Sigma$, so $G \subseteq \Sigma$. Therefore $\abs{G} \leq 4 p (p^2 - 1)$.

 From the two inequalities, we find $\abs{G} = 4p (p^2 - 1)$, and therefore the homomorphism constructed earlier, which is surjective, must be an isomorphism.
\end{proof}

\subsubsection{$\mathbf{ p \equiv 1 \text{ mod } 4}$}

Now we proceed with the other case. Many things will be similar, and the type of arguments we use will be almost identical.

The main difference is the set of conditions
\be
S T^{-2k} S T^{-\frac{1}{2k}} S T^{-2k} C_{-\frac{1}{2k}} = \beta^{(k)} T^{-\frac{1}{2k}} \,,
\ee
which as we will see now separate into two, depending on the Legendre symbol of $k$.

Lemma~\ref{lem:beta_T_commutation} and Corollary~\ref{cor:betaT_commutes_with_S} still hold, so the right hand side commutes with $S$ and $T$.

We can again parametrize those conditions using a generator $g \in \Z_p^\times$
\be\label{eq:additional_identities_rewrite3}
S T^{g^m} S T^{g^{-m}} S T^{g^m} C_{g^{-m}} = \beta^{(- \frac{1}{2} g^m)} T^{g^{-m}}
\ee

By Eq.~\eqref{eq:betaT_equalities}, the equivalence class on the right hand side only depends on the Legendre symbol of $-\frac{1}{2} g^m$, and that in turn depends on the parity of $m$. All $m$ even have the same Legendre symbol, and all $m$ odd have the opposite. We cannot say which is which, because that depends on whether $-2^{-1}$ is a quadratic residue or not, and that in turn depends on the prime\footnote{One can check that for $p = 5$, $-2^{-1} = \frac{5 - 1}{2} = 2$ is a non-residue, while for $p = 17, \ -2^{-1} = \frac{17 - 1}{2} = 8$ is $5^2 = 8 + 17$, hence a quadratic residue.}, but the distinction between $m$ odd and even remains.

\begin{lemma}
 All relations in Eq.~\eqref{eq:additional_identities_rewrite3} follow from any one with odd $m$, and any one with even $m$, and from the other identities.
\end{lemma}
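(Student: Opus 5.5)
I would mirror the proof of the analogous lemma for $p \equiv 3 \bmod 4$, i.e.\ generate new relations by conjugating with $C_g$. Write $R_m$ for the relation in Eq.~\eqref{eq:additional_identities_rewrite3} with exponent $m$, and $W_m := \beta^{(-\frac12 g^m)} T^{g^{-m}}$ for its right-hand side. By Lemma~\ref{lem:betaT_from_legendre_symb}, $W_m$ depends only on the parity of $m$. By Lemma~\ref{lem:betaT_commutes_with_Cg}, each $W_m$ commutes with $C_g$. The goal is to show that $R_{m}$ implies $R_{m+2}$. Then any one even-$m$ relation and any one odd-$m$ relation generate the whole family, since $m$ is only defined mod $p-1$, which is even, so the two parity classes are preserved.

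The key computation is conjugating the left-hand side of $R_m$ by $C_g$ using Eq.~\eqref{eq:CS_CT}. Inductively, $C_g T^{a} = T^{a g^{-2}} C_g$ and $C_g S = S C_g^{-1}$, so $C_g^{-1} T^a = T^{a g^{2}} C_g^{-1}$. The exponents of $T$ in the word $S T^{g^m} S T^{g^{-m}} S T^{g^m}$ are therefore rescaled by $g^{\mp 2}$, alternating each time an $S$ flips $C_g$ to $C_g^{-1}$. Track the first $T$ after the leading $S$: it meets $C_g^{-1}$, so $g^m \mapsto g^{m+2}$. The next segment meets $C_g$, so $g^{-m}\mapsto g^{-m-2}$. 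The third again meets $C_g^{-1}$, giving $g^{m+2}$. Finally $C_g^{-1}\cdot C_{g^{-m}}\cdot C_g^{-1}$ is left over. Collecting the trailing charge conjugations via Eq.~\eqref{eq:c_k_relations}, this gives $C_{g^{-m-2}}$.

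The main thing to check carefully is this bookkeeping: which conjugation acts on each segment, and the fact that the leftover charge conjugation is exactly $C_{g^{-(m+2)}}$ rather than that times some central $C_{-1}=S^2$. If a stray $C_{-1}$ appeared, I would absorb it using $S^2=C_{-1}$ central, as was done in the $p\equiv 3$ case. Granting this, we get $C_g (\text{LHS of } R_m) C_g^{-1} = \text{LHS of } R_{m+2}$. On the other side, $C_g W_m C_g^{-1} = W_m = W_{m+2}$, using Lemma~\ref{lem:betaT_commutes_with_Cg} and the parity dependence. So $R_m \Rightarrow R_{m+2}$, and the same conjugation by $C_g^{-1}$ gives $R_m\Rightarrow R_{m-2}$.

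Unlike the $p\equiv3$ case, I do not attempt to relate the odd and even classes to each other. There, $T$-conjugation combined with $W^{-1}$ linked them through $(ST^{-1})^3$. Here $-\frac12 g^m$ for $m$ even and $m$ odd have opposite Legendre symbols. The corresponding $W$'s are then genuinely different central elements, since their $\beta$ parts lie in distinct Witt classes by Assumption 2. So one representative of each parity is needed, which matches the statement.
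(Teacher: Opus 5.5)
Your proposal is correct and is essentially the paper's proof. You conjugate the relation by $C_g$ using Eq.~\eqref{eq:CS_CT}, note that the right-hand side is fixed (it commutes with $C_g$ by Lemma~\ref{lem:betaT_commutes_with_Cg} and depends only on the parity of $m$ by Lemma~\ref{lem:betaT_from_legendre_symb}), and conclude $m\to m\pm 2$; your explicit bookkeeping, which indeed yields exactly $C_{g^{-(m+2)}}$ with no stray $C_{-1}$, fills in the step the paper only asserts. Your closing paragraph on independence is not needed for the statement, and distinctness of the two right-hand sides would not by itself show that the parity classes are unlinked: for $p\equiv 3 \bmod 4$ one has $Y\neq Y^{-1}$, yet the two classes are linked.
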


\begin{proof}
 The idea is the same as before. Choose $m_1$ odd $m_2$ even, and define
 \begin{align}
  Y_1 &\coloneqq \beta^{(- \frac{1}{2} g^{m_1})} T^{g^{-m_1}} = S T^{g^{m_1}} S T^{g^{-m_1}} S T^{g^{m_1}} C_{g^{-m_1}} \,, \label{eq:Y1} \\
  Y_2 &\coloneqq \beta^{(- \frac{1}{2} g^{m_2})} T^{g^{-m_2}} = S T^{g^{m_2}} S T^{g^{-m_2}} S T^{g^{m_2}} C_{g^{-m_2}} \,. \label{eq:Y2}
 \end{align}

 By Lemma~\ref{lem:betaT_commutes_with_Cg}, we can conjugate both sides by $C_g$ to obtain $Y_i$ on the left hand side, and $S T^{g^{m_i + 2}} S T^{g^{-m_i - 2}} S T^{g^{m_i + 2}} C_{g^{-m_i - 2}}$ on the right hand side. This is exactly the same identity, with an exponent of the same parity. Using this method we can only increase or decrease the exponents by 2.
\end{proof}

At this point it is worth remarking what the crucial difference is between this case ($p \equiv 1 \mod 4$) and the previous case. Indeed, previously we argued that starting with $ST^{-1} ST^{-1} ST^{-1} C_{-1} = Y^{-1}$ we can generate all the identities with odd powers of $g$, while this seemingly cannot happen in the case of $p \equiv 1 \mod 4$. Let us see what the difference is:
\begin{align}
 C_g (S T^{-1})^3 C_{-1} C_g^{-1} &= S C_g^{-1} T^{-1} S T^{-1} S T^{-1} C_{-g^{-1}} = \cdots \nonumber \\
 &= S T^{-g^2} S T^{-g^{-2}} S T^{-g^2} C_{- g^{-2}} \,.
\end{align}

Now we rewrite $-g^2$ as a power of $g$ by recalling that $(-1) = g^{\frac{p-1}{2}}$, so
\be
-g^2 = g^{\frac{p-1}{2} + 2} = g^{\frac{p+3}{2}} \,.
\ee

Finally, we see that if $p \equiv 3 \mod 4$, then $p + 3 \equiv 2 \mod 4$ and so $\frac{p + 3}{2}$ is odd. But if $p \equiv 1 \mod 4$, then $\frac{p + 3}{2}$ is even, and we cannot use this trick to switch parity.

In this case, Eqs.~\eqref{eq:Y1} and \eqref{eq:Y2} really are two independent conditions.

The last step is very similar to the previous case. The relations satisfied by $S, T, Y_1, Y_2, C_g$ are similar to Eq.~\eqref{eq:STYCg_relations}, except the ones involving $Y$, which must be changed to
\begin{align}
 Y_1 &= S T^{g} S T^{g^{-1}} S T^{g} C_{g^{-1}}, & Y_1^2 &= 1,  \\
  Y_2 &= (ST)^3, & Y_2^2 &= 1,
\end{align}
where we have chosen $m_1 = 1$ odd and $m_2 = 0$ even for simplicity. $Y_1, Y_2$ and $C_g^{\frac{p-1}{2}}$ are all central.

\begin{theorem}\label{thm:SL2pZ2Z2}
 Let $G$ be the finitely presented group with generators $s, t, y_1, y_2, c_g$ and relations given by the analogues of the relations above (and $y_1, y_2, c_g^{\frac{p-1}{2}}$ central as well).
 
 Then $G \cong SL(2, \Z_p) \times \Z_2 \times \Z_2$.
\end{theorem}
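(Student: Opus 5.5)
I will mirror the proof of Theorem~\ref{thm:SL2pZ4}: first build an explicit surjective homomorphism $G\to SL(2,\Z_p)\times\Z_2\times\Z_2$, then bound $|G|$ from above by a normal-form argument, and conclude by comparing cardinalities.

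\textbf{Step 1: the homomorphism.} I define it on generators by
\begin{align}
 t &\longmapsto \paren{\begin{pmatrix}1&0\\-1&1\end{pmatrix},0,0}, &
 c_g &\longmapsto \paren{\begin{pmatrix}g&0\\0&g^{-1}\end{pmatrix},\epsilon_1,\epsilon_2}, \nonumber\\
 s &\longmapsto \paren{\begin{pmatrix}0&1\\-1&0\end{pmatrix},\sigma_1,\sigma_2}, &
 y_i &\longmapsto (\mathbf 1, e_i),
\end{align}
where $e_1=(1,0)$ and $e_2=(0,1)$. The labels $\epsilon_i,\sigma_i\in\Z_2$ are to be fixed so that the relations hold.

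The $\Z_2\times\Z_2$ component of the matrix relations imposes linear conditions on these labels.
\begin{itemize}
\item The relation $S^2=C_g^{\frac{p-1}{2}}$ requires $2\sigma=\frac{p-1}{2}\epsilon$. Since $\frac{p-1}{2}$ is even for $p\equiv1\bmod 4$, this reads $0=0$.
\item The relation $(ST)^3=Y_2$ requires $3\sigma=e_2$, i.e.\ $\sigma=e_2$.
\item The defining relation for $Y_1$ requires $3\sigma-\epsilon=e_1$, so $\epsilon=e_1+e_2$.
\end{itemize}
The remaining relations ($C_g^{p-1}=1$, $C_gS=SC_g^{-1}$, $C_gT=T^{g^{-2}}C_g$, $Y_i^2=1$, centrality) are then automatic in the abelian factor.

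On the matrix side, I need the identities
\begin{align}
 (st)^3 &= \mathbf 1, & st^{g}st^{g^{-1}}st^{g}c_{g^{-1}} &= \mathbf 1,
\end{align}
together with $s^2=-\mathbf 1=c_g^{\frac{p-1}{2}}$ and the two conjugation rules. These are $2\times2$ computations of the same kind as Theorem~\ref{thm:oddprels}, and I will just check them.

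Surjectivity is then immediate. The $y_i$ generate $\Z_2\times\Z_2$. The elements $t$ and $s y_2$ (which has trivial abelian part, since $\sigma=e_2$) map onto the standard generators of $SL(2,\Z_p)$. Hence $|G|\ge 4p(p^2-1)$.

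\textbf{Step 2: the normal form.} I take
\be
\Sigma=\{\,y_1^{a}y_2^{a'}t^{a_2}c_g^{a_3},\ \ y_1^{b}y_2^{b'}t^{b_2}st^{b_3}c_g^{b_4}\,\},
\ee
with exponents of the $y_i$ in $\{0,1\}$. This set has at most $4p(p-1)+4p^2(p-1)=4p(p^2-1)$ elements. I then show $\Sigma\cdot x\subseteq\Sigma$ for each generator $x$, which gives $G=\Sigma$ because $1\in\Sigma$.
\begin{itemize}
\item Right multiplication by $y_i$ or $c_g$ preserves $\Sigma$, by centrality and $y_i^2=1=c_g^{p-1}$.
\item For $t$, I move it left through $c_g$ using $c_gt=t^{g^{-2}}c_g$.
\item For $s$, I use $c_gs=sc_g^{-1}$. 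The only nontrivial case is a word of the second type ending in $\cdots st^{b_3}s$. For $b_3=0$, I use $s^2=c_g^{\frac{p-1}{2}}$. For $b_3\ne0$, I need a rewriting rule expressing $st^{n}s$ in the form $y\,t^{*}st^{*}c_g^{*}$.
\end{itemize}

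\textbf{Step 3: the rewriting rule for $st^{n}s$ (main obstacle).} In Theorem~\ref{thm:SL2pZ4}, only $sts$ was needed, since all $t^n$ could be reached via $c_g$-conjugation. Here the parity of $m$ in $n=g^m$ matters, and this is exactly where $Y_1$ and $Y_2$ separate.

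I will write $n=\pm g^{m}$ and treat the two parities separately.
\begin{itemize}
\item For even $m$, I conjugate $sts=y_2t^{-1}st^{-1}c_g^{\frac{p-1}{2}}$ by powers of $c_g$.
\item For odd $m$, I derive the analogous rule from the $Y_1$ relation,
\be
st^{g}s=y_1c_g\,t^{-g}s^{-1}t^{-g^{-1}},
\ee
and then conjugate by $c_g$.
\item The sign $-1=g^{\frac{p-1}{2}}$ has even exponent for $p\equiv1\bmod4$, so negation does not mix parities. I will account for it via $s^{-1}=sc_g^{\frac{p-1}{2}}$.
\end{itemize}

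Once every $t^n$ with $n\ne0$ is covered, $|G|\le4p(p^2-1)$. Combined with Step~1, the surjection is an isomorphism.
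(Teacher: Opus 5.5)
Your proposal is correct and follows the paper's proof: the same homomorphism (your labels agree with the paper's $s\mapsto(1,0)$, $y_2\mapsto(1,0)$, $y_1\mapsto(0,1)$, $c_g\mapsto(1,1)$ up to swapping the two $\Z_2$ factors), the same normal-form set $\Sigma$ of size at most $4p(p^2-1)$, and the same cardinality comparison. Your Step 3 makes explicit a detail the paper only asserts when it says $\Sigma$ is closed under right multiplication: $c_g$-conjugates of $sts$ rewrite $st^{n}s$ for quadratic residues $n$, and $c_g$-conjugates of the $Y_1$-derived rule for $st^{g}s$ handle the non-residues.
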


The conclusion is that, under Assumption 3, the group of QCA contains a subgroup isomorphic to $SL(2, \Z_p) \times \Z_2 \times \Z_2$.

\begin{proof}
 The proof is essentially identical. First we construct a homomorphism, and then we show that it is an isomorphism. Define
 \begin{align}
  s &\longmapsto \paren{ \begin{pmatrix} 0 & 1 \\ -1 & 0 \end{pmatrix}, \begin{pmatrix} 1 \\ 0 \end{pmatrix} } \nonumber \\
  t &\longmapsto \paren{ \begin{pmatrix} 1 & 0 \\ -1 & 1 \end{pmatrix}, \begin{pmatrix} 0 \\ 0 \end{pmatrix} } \nonumber \\
  y_1 &\longmapsto \paren{ \begin{pmatrix} 1 & 0 \\ 0 & 1 \end{pmatrix}, \begin{pmatrix} 0 \\ 1 \end{pmatrix} } \nonumber \\
  y_2 &\longmapsto \paren{ \begin{pmatrix} 1 & 0 \\ 0 & 1 \end{pmatrix}, \begin{pmatrix} 1 \\ 0 \end{pmatrix} } \nonumber \\
  c_g &\longmapsto \paren{ \begin{pmatrix} g & 0 \\ 0 & g^{-1} \end{pmatrix}, \begin{pmatrix} 1 \\ 1 \end{pmatrix} }\,.
 \end{align}
 Once again, showing that it is a homomorphism is a simple, if tedious, calculation.

 It is clearly surjective, because $y_1, y_2$ generate the $\Z_2 \times \Z_2$ subgroup, while $s y_2, t$ generate $SL(2, \Z_p)$. This gives us $\abs{G} \geq 4 p (p^2 - 1)$.

 The proof that it is an isomorphism is also very similar. Consider the set $\Sigma$ of words of the form $y_1 y_2 t c_g$ and $y_1 y_2 t s t c_g$ (with all allowed exponents for each letter). There are at most
 \be
 2 \cdot 2 \cdot p \cdot (p-1) + 2 \cdot 2 \cdot p \cdot p \cdot (p-1) = 4p (p^2 - 1)
 \ee
 distinct elements of $G$ in it. This set is closed under multiplication by $s, t, y_1, y_2, c_g$ from the right, because we can use the relations in $G$ to manipulate any such word into a form which is in $\Sigma$, and so $\Sigma \cdot G \subseteq \Sigma$.

 But $\Sigma$ contains the trivial word $1$, so in particular $G \subseteq \Sigma$. Finally, we conclude $\abs{G} = 4 p (p^2 - 1)$, and the surjective homomorphism from before is an isomorphism.
\end{proof}

\subsection{Comparison with the Graded Pointed Witt Group}
\label{sec: Comparison with the Graded Pointed Witt Group}

In Section \ref{sec:Cats} we derived the structure of the graded pointed Witt groups for $\Z_p^{(1)}$. There we showed that there is a -- non-canonical -- isomorphism in terms of groups for $p$ an odd prime 
\be\label{WittpIso}
\Witt^\pt (\Z_p \oplus \widehat{\Z_p}, s) \cong SL(2,\Z_p) \times  W_p \times \bigoplus_{l (\neq p)} W_l
\ee
where $W_p =\Z_2\times \Z_2$ for $p=1\mod 4$ and $W_p = \Z_4$ for $p=3\mod 4$.

Theorems \ref{thm:SL2pZ4} and \ref{thm:SL2pZ2Z2} show that, under some assumptions, the group of translation invariant QCA contains a subgroup isomorphic to $SL(2,\Z_p) \times  W_p$. However, we remark that the image under this isomorphism in $W_p$ cannot be interpreted as the  Witt class {of QCA in the standard sense}.

To see this, let us restrict to a subgroup of those QCA which are extendable to the full algebra: that is, the subgroup generated by $T, \beta^{(k)}, C_g$. Using the isomorphisms in the proofs of the Theorems, one can check that this subgroup is isomorphic to $\text{Borel}(SL(2, \Z_p)) \times W_p$ where the Borel subgroup is
\be\label{eq:borel_subgroup}
\text{Borel}(SL(2, \Z_p)) = \curpar{ \begin{pmatrix} a &  0 \\ b & a^{-1}  \end{pmatrix}, a \in \Z_p^\times, b \in \Z_p } \,.
\ee

However, this same isomorphism assigns a non-trivial Witt class to $C_g$ ($2 \in \Z_4$, or $(1, 1) \in \Z_2 \times \Z_2$, depending on the prime $p$), which is actually just an on-site unitary (though it is not $\Z_p^{(1)}$-symmetric), so it should be Witt trivial {in the usual sense}.

Let us see what happens more generally, when we try to assign a  ``Witt class'' to any QCA, that agrees with the Witt class when restricted to QCAs that are extendable to the full algebra, including $S$ (it is sufficient to add $S$ as a generator to all the ones which are extendable, to recover all of $SL(2, \Z_p) \times W_p$).

If $p \equiv 3 \mod 4$, then we may focus on $Y \coloneqq \beta^{(\frac{p-1}{2})} T$ which is obviously extendable. $T$ must have trivial Witt class because it is a finite depth circuit (with some non-symmetric gates). By construction, then, $Y$ must have a non-trivial Witt class of order $4$, and without loss of generality, we choose $1 \in \Z_4$. If we now try to assign a Witt class to everything (including $S$), by using the relation $(ST)^3 = Y$ we find that $S$ has ``Witt class'' $3 \in \Z_4$. Finally, using one of the other relations
\be\label{eq:g_odd_relation_3mod4}
S T^g S T^{g^{-1}} S T^g C_g^{-1} = Y^{-1} \,,
\ee
we conclude that $C_g$ must have ``Witt class'' $2$, which is the contradiction mentioned earlier.

Something similar happens if $p \equiv 1 \mod 4$. By construction, $Y_1, Y_2$ must have two different non-trivial Witt classes. Without loss of generality, we choose $(0, 1), (1, 0) \in \Z_2 \times \Z_2$. From $(ST)^3 = Y_2$, we find that $S$ has ``Witt class'' $(1, 0)$. But from the other independent relation
\be\label{eq:g_odd_relation_1mod4}
S T^g S T^{g^{-1}} S T^g C_g^{-1} = Y_2 \,,
\ee
we conclude that $C_g$ must have ``Witt class'' $(1, 1)$.

This is a general feature: the group structure itself (once we include $S$) automatically forces either $C_g$, or $Y$ ($Y_1, Y_2$), to have the  Witt class: if we assume $Y$ ($Y_1, Y_2$) have the correct Witt class, $1 \in \Z_4$ ($(0, 1), (1, 0) \in \Z_2 \times \Z_2$), then we show that $C_g$ must have the non-trivial Witt class $2 \in \Z_4$ ($(1, 1) \in \Z_2 \times \Z_2$), which is in contradiction with $C_g$ being an on-site unitary.

Without including $S$, we can actually construct a different isomorphism from $\langle T, \beta^{(k)}, C_g \rangle$ to $\text{Borel}(SL(2, \Z_p)) \times W_p$, via the map
\begin{align}\label{eq:witt_correct_iso}
 T &\longmapsto \paren{\begin{pmatrix} 1 & 0 \\ -1 & 1 \end{pmatrix}, \begin{matrix} 0 & \text{if } p \equiv 3 \mod 4 \\ (0, 0) & \text{if } p \equiv 1 \mod 4 \end{matrix}} \,, \nonumber \\
 \beta^{(k)} &\longmapsto \paren{\begin{pmatrix} 1 & 0 \\ -\frac{1}{2k} & 1 \end{pmatrix}, \begin{matrix} \pm 1 & \text{if } p \equiv 3 \mod 4 \\ (0, 1), (1, 0) & \text{if } p \equiv 1 \mod 4 \end{matrix}} \,, \nonumber \\
 C_g &\longmapsto \paren{\begin{pmatrix} g & 0 \\ 0 & g^{-1} \end{pmatrix}, \begin{matrix} 0 & \text{if } p \equiv 3 \mod 4 \\ (0, 0) & \text{if } p \equiv 1 \mod 4 \end{matrix}} \,,
\end{align}
where the choice of $\pm 1$ or $(0, 1), (1, 0)$ depends on the Lagrange symbol of $k$. This map correctly identifies the Witt class, since $T$ is a finite depth circuit (with non-symmetric gates), and $C_g$ is an on-site unitary, while $\beta^{(k)}$ is a disentangler for a Walker-Wang model, based on an MTC with a non-trivial Witt class.

In summary, we have two isomorphisms
\begin{align}
 \langle S, T, \beta^{(k)}, C_g \rangle &\xrightarrow{\cong} SL(2, \Z_p) \times W_p \label{eq:STCiso} \\
 \langle T, \beta^{(k)}, C_g \rangle &\xrightarrow{\cong} \text{Borel}\paren{SL(2, \Z_p)} \times W_p \label{eq:TCbetaiso}
\end{align}
the first constructed in Theorems \ref{thm:SL2pZ4} and \ref{thm:SL2pZ2Z2}, and the second one in Eq.~\eqref{eq:witt_correct_iso}.

Although both map into $W_p$, only the second one, \eqref{eq:witt_correct_iso} correctly identifies the physical Witt class.
On the other hand, while we know that the isomorphism \eqref{eq:STCiso} exists, there is no choice for this isomorphism which would agree with \eqref{eq:TCbetaiso} on the entire subgroup. By the previous arguments, if it agrees on $Y$ ($Y_1, Y_2$), then it must disagree on $C_g$.

\subsection{Comparison between Lattice and Continuum}
\label{sec: comparison Zp}
We now compare the fusion rules of QCAs on the lattice and their counterparts in the continuum. Here we focus on the subgroup generated by $\aKWWp$ and $\aTWp$ only, because we have not discussed the generalized charge conjugation in the continuum.

By using the isomorphisms in the proofs of Theorems \ref{thm:SL2pZ4} and \ref{thm:SL2pZ2Z2}, one can check that this subgroup is isomorphic to $SL(2, \Z_p) \times \Z_{n_p}$, with $n_p$ as in Eq.~\eqref{npDef}. This is the entire group if $p \equiv 3 \mod 4$, but it misses one $\Z_2$ subgroup if $p \equiv 1 \mod 4$.

As discussed in Section~\ref{sec: Zpk QCA generated by D and T}, the $\mathbb{Z}_p$ KWW QCA and a Tsui-Wen entangler obey the following fusion rule:
\begin{equation}
\left(\aKWWp \circ \aTWp\right)^3 = \alpha_t \circ \tilde{\beta}_p^{\left( \frac{p-1}{2} \right)} \circ \aTWp \,.
\label{eq: comparison lat Zp}
\end{equation}
The continuum counterpart of $\aKWWp \circ \aTWp$ is $\opaction{\mathcal{ST}}$, which is the (normalized) action of the triality defect $\mathcal{ST}$ on local operators.
As reviewed in Section~\ref{sec: continuum fusion rules Zp}, the action $\opaction{\mathcal{ST}}$ obeys the $\mathbb{Z}_3$ fusion rule:
\be\label{eq: comparison cont Zp}
\opaction{(\mc{ST})^3} = 1
\ee
% \begin{equation}
% \alpha_{\mathcal{ST}}^3 = 1.
% \label{eq: comparison cont Zp}
% \end{equation}
From the above equations, we find that the fusion rule on the lattice differs from its continuum counterpart.
Specifically, the lattice fusion rule~\eqref{eq: comparison lat Zp} involves a non-trivial QCA $\alpha_t \circ \tilde{\beta}_p^{\left( \frac{p-1}{2} \right)} \circ \aTWp$, whereas the continuum fusion rule~\eqref{eq: comparison cont Zp} does not.
Thus, equation~\eqref{eq: comparison lat Zp} is an example of a lattice fusion rule that mixes with a non-trivial QCA.
Correspondingly, the non-invertible symmetry generated by $\mathsf{ST}$ on the lattice mixes with a non-trivial QCA.

We emphasize that the appearance of the QCA $\tilde{\beta}_p^{\left( \frac{p-1}{2} \right)}$ in the lattice fusion rule is not a coincidence.
It can be traced back to the relation
\begin{equation}
(S_pT_p)^3 = Y_p,
\end{equation}
where $Y_p$ is the stacking of the Crane-Yetter-Walker-Wang TQFT based on the MTC $\mathrm{SU}(p)_1$.
On the lattice, the above relation implies that $\left(\aKWWp \circ \aTWp\right)^3$ is a non-trivial QCA in the same equivalence class as $\tilde{\beta}_p^{\left(\frac{p-1}{2}\right)}$, which entangles the ground state of the $\mathrm{SU}(p)_1$ Crane-Yetter-Walker-Wang TQFT from the trivial product state.
On the other hand, in the continuum, the above relation gives rise to the fusion rule~\eqref{eq: DTDTDT continuum2}, which involves $\mathrm{SU}(p)_1$ Chern-Simons theory as a fusion coefficient.
This TQFT coefficient disappears in~\eqref{eq: comparison cont Zp} because the decoupled 2+1d TQFT acts trivially on any local operator.

\subsection*{Acknowledgments}
We are grateful to Andrea Antinucci, Thibault D\'{e}coppet, 
Yuhan Gai, Po-Shen Hsin, Marvin Qi, Nat Tantivasadakarn,  Rui Wen,
 Carolyn Zhang, in particular  Alison Warman and Matt Yu  for discussions. 
Some of these computations have been done with help of  ChatGPT/Codex 5.5 and Claude Code Opus 4.8 and Fable. 
The work of SSN is supported by the UKRI Frontier Research Grant, underwriting the ERC Advanced Grant ``Generalized Symmetries in Quantum Field Theory and Quantum Gravity''.
KI and SSN are supported in part by the EPSRC Open Fellowship EP/X01276X/1 (Schafer-Nameki).
KI is also supported by the Leverhulme-Peierls Fellowship funded by the Leverhulme Trust.
We thank the authors of \cite{Zhang:2026kjf} and \cite{Sun:2026toj} for coordinating submission of related works.

\appendix

\section{Pointed Graded Witt Groups $\Witt^\pt (\Z_p^2, s)$}
\label{app:Witty}

We discuss various properties of the pointed graded Witt group in this appendix. 
The two cases that are separately discussed are $p=2$ (Appendix~\ref{app:WittZ2}) and odd $p$
(Appendix~\ref{app:WittZp}). We first set out some general principles underlying these Witt groups. We will abbreviate sometimes $\Z_p\oplus \widehat{\Z_p}$ with $\Z_p^2$.

\subsection{Graded Metric Groups and  $s$-twisted Product}
\label{app:framework}

Let $A\oplus\widehat A$ be the group of surface charges with syllepsis $s$.  A
{\bf graded metric group} is a pair $(f:G\to A\oplus\widehat A,\,q)$ with $G$ a
finite abelian group and $q=e^{2\pi i\,Q}$ a non-degenerate quadratic form.  The
spin exponent $Q$ and the syllepsis exponent $\sigma$, defined by
$s(x,y)=e^{2\pi i\,\sigma(x,y)}$, take values in a common cyclic group: $\Z_4$
for $p=2$ (so that $q$ may equal the semion spin $i$) and $\Z_p$ for odd
$p$.  The polarization
\be\label{eq:polgen}
  B(x,y)=Q(x+y)-Q(x)-Q(y)
\ee
is a symmetric bicharacter, and when $2$ is invertible (odd $p$) it determines
$Q$ by $Q(x)=2^{-1}B(x,x)$.  Fixing a basis $e_1,\dots,e_n$ of $G$, an element
$g=\sum_{i\in I}e_i$ has $Q(g)=\sum_{i\in I}Q(e_i)+\sum_{\{i<j\}\subseteq I}B(e_i,e_j)$.
The $s$-twisted product and $s$-opposite are
\be\label{eq:arith-gen}
\ba
  Q_{X\otimes Y}(g,h)&=Q_X(g)+Q_Y(h)\\
  &\quad+\sigma(fg,kh)+2\,\sigma(kh,fg)\,,\\
  \widetilde Q(g)&=-Q(g)+3\,\sigma(fg,fg)\,,
\ea
\ee
with $f,k$ the gradings of $X,Y$ and cross-term
$\tau(x,y):=\sigma(x,y)+2\,\sigma(y,x)$.  Two classes agree, $X=Y$, iff
$X\otimes Y^{\rm op}$ is trivial, i.e.\ admits a Lagrangian $L\subseteq G_0=\ker f$
($Q|_L=0$ and $L^\perp=L$ with respect to $B$).  The coefficient $3$ in
$\widetilde Q$ is fixed by requiring the anti-diagonal $(g,-g)\in X\otimes X^{\rm op}$
to be isotropic: with the ansatz $\widetilde Q=-Q+c\,\sigma\circ f$ and $x=f(g)$,
its value is $Q(g)+\widetilde Q(g)+\tau(x,-x)=(c-3)\,\sigma(x,x)$, which vanishes
for all $g$ iff $c=3$.  At $p=2$ the exponent $\sigma$ takes values in
$\{0,2\}\subset\Z_4$, so $2\sigma\equiv0$ and $3\sigma\equiv\sigma$, and
\eqref{eq:arith-gen} collapses to the single-factor forms $Q_X+Q_Y+\sigma(fg,kh)$
and $-Q+\sigma\circ f$.

\begin{lemma}[Condensation]\label{lem:cond}
If $L\subseteq G_0$ is isotropic, then $(f:G\to A\oplus\widehat A,q)$ and
$(\bar f:L^\perp/L\to A\oplus\widehat A,\bar q)$, with $\bar q([g])=q(g)$ and
$\bar f([g])=f(g)$, define the same class in $\Witt^\pt(A\oplus\widehat A,s)$.
\end{lemma}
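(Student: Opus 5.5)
The plan is to exhibit an explicit witness of triviality for $X\otimes \bar X^{\rm op}$, where $X=(f:G\to A\oplus\widehat A,q)$ and $\bar X=(\bar f:L^\perp/L\to A\oplus\widehat A,\bar q)$. By the equivalence criterion recalled above, it suffices to produce a Lagrangian $\mathcal L\subseteq \ker(f+\bar f)$ inside $G\oplus L^\perp/L$. The candidate is the graph-type subgroup
\be
\mathcal L=\{(g,-[g])\ :\ g\in L^\perp\}\subseteq G\oplus L^\perp/L\,.
\ee
Before using it, I will check that $\bar X$ is well defined. Since $L\subseteq G_0$, the grading $\bar f$ descends to $L^\perp/L$. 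Since $L$ is isotropic, $B(g,\ell)=0$ for $g\in L^\perp$, and therefore $Q(g+\ell)=Q(g)+Q(\ell)+B(g,\ell)=Q(g)$, so $\bar q$ descends as well. Nondegeneracy of $\bar B$ on $L^\perp/L$ is the standard fact $(L^\perp)^\perp=L$ for nondegenerate $B$.

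The first step is that $\mathcal L$ lies in grade zero. This holds because $f(g)+\bar f(-[g])=f(g)-f(g)=0$.

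The second step is isotropy. This is where the twisted formula \eqref{eq:arith-gen} enters, and I would reuse the computation that fixed the coefficient $3$. The value of $Q$ on $(g,-[g])$ is
\be
Q(g)+\widetilde{\bar Q}(-[g])+\tau(x,-x),\qquad x=f(g).
\ee
Using $\widetilde{\bar Q}(-[g])=-Q(g)+3\sigma(x,x)$, this equals $(3-3)\sigma(x,x)=0$. This is exactly the anti-diagonal isotropy argument already given, now applied with $\bar Q([g])=Q(g)$. The point to be careful about is that $\sigma$ is bilinear, so that $\tau(x,-x)=-3\sigma(x,x)$; at $p=2$ one should instead use the collapsed single-factor forms. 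Isotropy of the whole subgroup then follows from vanishing on every element of $\mathcal L$.

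The third step is maximality, $\mathcal L^\perp=\mathcal L$, which I expect to be the main obstacle. The difficulty is that the bilinear form of the twisted product includes the cross term $\tau(x,y')+\tau(y',x)$ arising from the gradings. I would first compute this form explicitly and then check whether $\mathcal L\subseteq\mathcal L^\perp$ already follows from isotropy, which it should, since $B$ is the polarization of $Q$ on $G\oplus L^\perp/L$. For equality, I would count orders instead of solving for $\mathcal L^\perp$ directly. We have $|\mathcal L|=|L^\perp|$ and $|G\oplus L^\perp/L|=|G|\,|L^\perp|/|L|$. Nondegeneracy gives $|L||L^\perp|=|G|$, so $|\mathcal L|^2=|G|\,|L^\perp|^2/|G|\cdot|G|/|L|\cdot\ldots$; put more cleanly, $|L^\perp|^2=|G|\,|L^\perp|/|L|$, which is precisely the Lagrangian order. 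Nondegeneracy of the total twisted form, which I would check or cite from \cite{Bhardwaj:2024xcx}, then gives $|\mathcal L^\perp|=|\text{total}|/|\mathcal L|=|\mathcal L|$, and together with $\mathcal L\subseteq\mathcal L^\perp$ this yields equality.
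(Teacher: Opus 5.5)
Your proposal is correct and is essentially the paper's own proof: the anti-diagonal $\{(g,-[g]) : g\in L^\perp\}$ is grade zero, isotropic by the same coefficient-$3$ computation, and Lagrangian by the order count $|L^\perp|^2=|G|\,|L^\perp/L|$. The paper also uses tacitly the nondegeneracy of the total form that you leave to be checked. You can close it directly: $(g,[k])\in\mathcal L^\perp$ if and only if $B(g+k,g')=\tau\bigl(f(g+k),f(g')\bigr)$ for all $g'\in L^\perp$, and this forces $g+k\in L$ once the twisted form $B(g,h)-\tau\bigl(f(g),f(h)\bigr)$ on $G$ is nondegenerate. That twisted form, rather than $B$ itself, is the condition to assume, since for instance $\mathfrak{c}$ has $B\equiv 0$.
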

\begin{proof}
Since $Q|_L=0$ we have $L\subseteq L^\perp$ and $\bar q$ is well-defined.  In
$(G,q)\otimes(L^\perp/L,\bar q)^{\rm op}$ the anti-diagonal
$\Delta=\{(g,-[g]):g\in L^\perp\}$ is grade zero, and by the $c=3$ computation
above $Q|_\Delta=0$; as $|\Delta|^2=|L^\perp|^2=|G|\,|L^\perp/L|$, it is a
Lagrangian, so the two classes agree.
\end{proof}

\subsection{Pointed Graded Witt Group for $\Z_2^{(1)}$}
\label{app:WittZ2}

In this appendix we prove Theorem~\ref{thm:modular} for $\Z_2$, specialising the
framework of Appendix~\ref{app:framework}.  Here $A=\Z_2$, the charges are
$A\oplus\widehat A=\Z_2^2$ with $a=(1,0)$, $b=(0,1)$, and the syllepsis is
$s(x,y)=(-1)^{x_1y_2}$ with additive exponent\footnote{The exponent $\sigma$ defined by \eqref{eq:sigmaZ2} is four times $\sigma$ defined in Appendix~\ref{app:framework}. Similarly, $Q$ defined below is four times $Q$ defined in Appendix~\ref{app:framework}.}
\be
  \sigma(x,y):=2\,x_1y_2\in\{0,2\}\subset\Z_4\,,\qquad s(x,y)=i^{\,\sigma(x,y)}\,.
\label{eq:sigmaZ2}
\ee
A $\Z_2$ anyon may carry the semion spin $\pm i$, so here $q$ takes fourth roots,
$q=i^{\,Q}$ with $Q:G\to\Z_4$; this is why $\ga,\gb,\gu$ below have order divisible
by $4$, and the polarization takes values $B\in\{0,2\}\subset\Z_4$.  With
$\sigma\in\{0,2\}$ the arithmetic \eqref{eq:arith-gen} reduces to
$Q_{X\otimes Y}=Q_X+Q_Y+\sigma(f g,k h)$ and $\widetilde Q=-Q+\sigma\circ f$.  In
this notation the classes we use, with $Q(1)=1$ (i.e.\ $q=i$), are
\be\label{gens}
\ba
\ga&=(\Z_2\xrightarrow{a}A\oplus\widehat A;\,1)\cr
\gb&=(\Z_2\xrightarrow{b}A\oplus\widehat A;\,1)\cr
\gc&=(\Z_2\xrightarrow{a+b}A\oplus\widehat A;\,0) \cr
\gu&=(\Z_2\xrightarrow{0}A\oplus\widehat A;\,1)\cr
\cC&=\big((\Z_2)^2\xrightarrow{0}A\oplus\widehat A;\,Q\equiv2\text{ on }G\setminus0\big)\,.
\ea\ee

We now show the relations that result in the proof of Theorem \ref{thm:modular}: 
\begin{lemma}\label{lem:basic}
$\gu$ is central, and $\cc$ is central with $\cC\neq 1$ and $\cC^2=1$.
\end{lemma}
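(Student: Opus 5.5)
Centrality of both classes follows from the structure of the $s$-twisted product, and the two facts about $\cC$ follow from the triviality criterion. Throughout we work with the arithmetic \eqref{eq:arith-gen} specialised to $p=2$. There the product of $X=(f:G\to A\oplus\widehat A,Q_X)$ and $Y=(k:H\to A\oplus\widehat A,Q_Y)$ is $(G\oplus H,\,f+k,\,Q_X(g)+Q_Y(h)+\sigma(fg,kh))$.

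\emph{Centrality.} Both $\gu$ and $\cC$ have zero grading. If $k\equiv0$, the cross term $\sigma(fg,kh)=\sigma(fg,0)=0$ vanishes. Likewise in $Y\otimes X$ the cross term is $\sigma(0,fg)=0$, since $\sigma$ is bilinear. Hence $X\otimes\gu$ and $\gu\otimes X$ are literally the same graded premetric group $(G\oplus\Z_2,\,f+0,\,Q_X+Q_{\gu})$, up to the swap isomorphism $G\oplus\Z_2\cong\Z_2\oplus G$, which preserves grading and $Q$. The same argument works for $\cC$. So $X\gu=\gu X$ and $X\cC=\cC X$ for every class $X$. This is exactly the remark in Section~\ref{sec:Cats} that $\Witt^\pt$ sits centrally, applied to representatives.

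\emph{$\cC^2=1$.} For grade-zero objects the $s$-opposite is $\widetilde Q=-Q$, so $\cC^{\rm op}$ has $Q\equiv-2\equiv2$ on nonzero elements, i.e.\ $\cC^{\rm op}=\cC$ as a premetric group. Then $\cC\otimes\cC=\cC\otimes\cC^{\rm op}$ contains the anti-diagonal $\Delta=\{(g,g)\}\cong\Z_2^2$. This subgroup is grade zero and isotropic, since $Q(g,g)=2+2=0$ for $g\neq0$, and it has order $4=\sqrt{16}$. It is therefore Lagrangian, so by Lemma~\ref{lem:cond}, or directly by the definition of the identity class, $\cC^2=1$.

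\emph{$\cC\neq1$.} This is the main step. It suffices to show that $\cC$ itself is not trivial, i.e.\ that it admits no Lagrangian $L\subseteq G_0=G=\Z_2^2$. Such an $L$ would have order $2$ and be generated by some $g\neq0$ with $Q(g)=0$. But $Q\equiv2$ on every nonzero element, so no such $g$ exists. This is the three-fermion theory having no bosons.

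One subtlety remains: triviality in $\Witt^\pt(A\oplus\widehat A,s)$ is tested on $\cC\otimes 1^{\rm op}=\cC$ with the graded Lagrangian condition $L\subseteq G_0$. Since the grading is zero, this coincides with the ordinary Lagrangian condition, so the argument is complete. If one prefers an invariant, the Gauss sum $\sum_g q(g)=1-3=-2$ gives central charge $4\bmod8$. This invariant is nonzero and multiplicative under $\otimes$ for grade-zero objects, which confirms that $\cC$ is not Witt-trivial.
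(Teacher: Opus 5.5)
Your proof is correct and follows the paper's own argument. Centrality holds because every $\sigma$-cross-term against a zero grading vanishes; $\cC\neq1$ because $Q\equiv2$ on all nonzero elements leaves no nonzero isotropic subgroup and hence no Lagrangian; and $\cC^2=1$ follows from the diagonal Lagrangian $\{(g,g)\}=\{0,e_1{+}e_3,e_2{+}e_4,e_1{+}e_2{+}e_3{+}e_4\}$, which is exactly the subgroup the paper exhibits. Your Gauss-sum remark is an optional extra check that the paper does not use.
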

\begin{proof}
Any grade-$0$ class meets every $\sigma$-factor in \eqref{eq:arith-gen} as $\sigma(0,\cdot)=\sigma(\cdot,0)=0$,
so it commutes with everything. And $\gu,\cC$ are grade $0$ and thus central. For $\cC$: all nonzero $Q=2$, so the only
isotropic subgroup is $0$, with $0^\perp=G\neq0$. It is not  Lagrangian, hence $\cC\neq 1$. On the other hand, 
$\cC\otimes\cC$ (grade $0$ on $(\Z_2)^4$, $Q(e_i)=2$, $B(e_1,e_2)=B(e_3,e_4)=2$, else $0$) has the
Lagrangian $L=\{0,\,e_1{+}e_3,\,e_2{+}e_4,\,e_1{+}e_2{+}e_3{+}e_4\}$ with $Q|_L\equiv0$, $|L|=4$,
$L^\perp=L$, so $\cC^2=1$.
\end{proof}

\begin{proposition}\label{prop:ord8}
$\ga^4=\gb^4=\gu^4=\cC$, and $\ord\ga=\ord\gb=\ord\gu=8$.
\end{proposition}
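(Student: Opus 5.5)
The plan is to compute the fourth powers explicitly as graded metric groups on $(\Z_2)^4$, using the product rule $Q_{X\otimes Y}=Q_X+Q_Y+\sigma(fg,kh)$ with $\sigma(x,y)=2x_1y_2$. After that, condense to reduce each fourth power to a grade-zero class, identify that class with $\cC$, and read off the orders from Lemma~\ref{lem:basic}.

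\textbf{The semion.} $\gu^4$ is the grade-zero metric group on $(\Z_2)^4$ with $Q(e_i)=1$ and $B=0$ on distinct basis vectors, i.e. four decoupled semions. The vector $v=e_1+e_2+e_3+e_4$ has $Q(v)=4\equiv 0$, so $L=\langle v\rangle$ is isotropic. Moreover $L^\perp=\{x:\sum x_i\ \text{even}\}$, since $B(e_i,e_i)=2Q(e_i)=2$ gives $B(x,v)=2\sum x_i$. By Lemma~\ref{lem:cond}, $\gu^4$ equals the class of $L^\perp/L\cong(\Z_2)^2$. This quotient is spanned by $[e_1+e_2]$ and $[e_1+e_3]$, each with $Q=2$, and their sum $[e_2+e_3]$ also has $Q=2$. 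Every nonzero element thus has $Q=2$, which is exactly $\cC$. This is the standard identification of four semions with $SO(8)_1$.

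\textbf{The generators $\ga$ and $\gb$.} For $\ga^4$, all gradings are multiples of $a=(1,0)$, so every $\sigma(a,a)=2\cdot1\cdot0=0$. Hence $\ga^4$ is $(\Z_2)^4$ with $Q(e_i)=1$, no cross terms, and grading $x\mapsto(\sum x_i)a$. The same holds for $\gb$, since $\sigma(b,b)=0$. Now $G_0$ is the even-weight subgroup and $v\in G_0$ is isotropic. Here $G_0^\perp=\langle v\rangle\subseteq G_0$ and $\langle v\rangle^\perp=G_0$, so condensing $L=\langle v\rangle$ leaves $G_0/L$. Its quadratic form is the same as above, so it is the grade-zero class $\cC$. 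Thus $\ga^4=\gb^4=\cC$.

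\textbf{The orders.} Lemma~\ref{lem:basic} gives $\cC\neq1$ and $\cC^2=1$. So each of $\ga,\gb,\gu$ satisfies $x^8=1$ and $x^4\neq1$, and its order is exactly $8$.

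\textbf{Main obstacle.} The only delicate point is in the $\ga,\gb$ case. One must confirm that passing from $G$ to $L^\perp/L$ via Lemma~\ref{lem:cond} produces a genuinely grade-zero object, equal as a graded class to the ungraded $\cC$. This holds because $\bar f$ vanishes on $L^\perp=G_0$. One must also check that the syllepsis cross terms really vanish when taking powers of a single grade. For $\gc$, whose grade is $a+b$, this check would fail, which is why $\gc$ is excluded from the statement.
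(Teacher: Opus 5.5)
Your proposal is correct and follows the paper's proof essentially step for step. Both form the fourth power on $(\Z_2)^4$ with vanishing cross terms, since $\sigma(a,a)=\sigma(b,b)=0$, condense the isotropic line $\langle(1,1,1,1)\rangle$ to reach the even-weight quotient, identify that quotient with $\cC$, and read off the orders from Lemma~\ref{lem:basic}. The only difference is that you work out the $\gu$ case and $G_0^\perp$ explicitly, whereas the paper treats $\ga$ and declares the $\gb$ and $\gu$ cases identical.
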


\begin{proof}
$\ga^4$ is the form on $(\Z_2)^4$ with all grades $a$, $Q(e_i)=1$, and $B(e_i,e_j)={\sigma(a,a)=0}$
for $i\neq j$. The line $L=\langle(1,1,1,1)\rangle\subseteq G_0$ is isotropic
($Q=4\cdot1\equiv0$), $L^\perp=\{(n_1,\dots,n_4)\in(\Z_2)^4:\textstyle\sum_i n_i=0\}$ (the even-weight vectors) and $L^\perp/L$ has two grade-$0$
generators with $Q=2$ and mutual $B=2$: this is $\cC$. Lemma~\ref{lem:cond} gives $\ga^4=\cC$.
The computation is identical for $\gb$ ($\sigma(b,b)=0$) and $\gu$ (grade $0$).
Finally, for $x\in\{\ga,\gb,\gu\}$ we have $x^4=\cC\neq1$ and $x^8=\cC^2=1$
(Lemma~\ref{lem:basic}), so $\ord x$ divides $8$ but not $4$, i.e.\ $\ord x=8$.
\end{proof}

\begin{proposition}\label{prop:c}
$\gc=\ga\gb\ga^{-1}$, $\gc^4=\cC$ and $\ord\gc=8$.
\end{proposition}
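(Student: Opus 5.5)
Throughout I work in the $\Z_4$-valued additive notation of this appendix: $Q$ is the spin exponent, $\sigma(x,y)=2x_1y_2$, the product is $Q_{X\otimes Y}=Q_X+Q_Y+\sigma(fg,kh)$, and the $s$-opposite is $\widetilde Q=-Q+\sigma\circ f$. Equality $X=Y$ means that $X\otimes Y^{\rm op}$ has a grade-zero Lagrangian. I would prove $\gc=\ga\gb\ga^{-1}$ by exhibiting such a Lagrangian, and then read off the remaining two claims.

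\emph{Step 1: inverse and triple product.} The inverse is the $s$-opposite: $\ga^{-1}=(\Z_2\xrightarrow{a};\,-1+\sigma(a,a))=(\Z_2\xrightarrow{a};\,3)$. Form $\ga\otimes\gb\otimes\ga^{-1}$ on $(\Z_2)^3$ with basis $e_1,e_2,e_3$ of grades $a,b,a$ and $Q(e_i)=1,1,3$. The cross terms come from the ordered $\sigma$'s. The only nonzero one is $\sigma(a,b)=2$ for the pair $(e_1,e_2)$; the pair $(e_2,e_3)$ gives $\sigma(b,a)=0$, and $(e_1,e_3)$ gives $\sigma(a,a)=0$. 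Hence
\[
B(e_1,e_2)=2,\qquad B(e_1,e_3)=B(e_2,e_3)=0 .
\]
The grade-zero subgroup is $G_0=\langle e_1+e_3\rangle$. Its generator has $Q(e_1+e_3)=1+3+0\equiv0$, so it is isotropic.

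\emph{Step 2: condense.} Apply Lemma~\ref{lem:cond} with $L=\langle e_1+e_3\rangle$. The orthogonal complement is $L^\perp=\{x:B(x,e_1)+B(x,e_3)=0\}$, which forces the $e_2$-coefficient to vanish. So $L^\perp=\langle e_1,e_3\rangle$ and $L^\perp/L\cong\Z_2$, generated by $[e_1]$. This class has grade $a$ and $Q=1$, which gives $\ga\gb\ga^{-1}=\ga$.

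That is not $\gc$, and I do not believe it: the condensation is not legitimate here. The form on $(\Z_2)^3$ is degenerate, since $e_3$ pairs trivially with everything. The ordered product is not symmetric, and the polarization of the $\sigma$-cross term must be symmetrized as $\sigma(x,y)+\sigma(y,x)$ for the condensation to be valid. With that correction $B(e_2,e_3)=\sigma(b,a)+\sigma(a,b)=2$ as well. Redoing the computation:
\[
B(e_1,e_2)=B(e_2,e_3)=2,\qquad B(e_1,e_3)=0 .
\]
The subgroup $L=\langle e_1+e_3\rangle$ is still isotropic. Now $L^\perp=\{x:\ x_2\cdot 2+x_2\cdot2=0\}$, which is all of $G$ modulo $4$. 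That is again not a proper complement.

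This bilinear bookkeeping is the main obstacle. The fallback is to compute $\gc\otimes(\ga\gb\ga^{-1})^{\rm op}$ directly on $(\Z_2)^4$ and search for a grade-zero Lagrangian of order $4$. Its grade-zero subgroup contains the elements whose grade sum $(a+b)+a+b+a$-type combination vanishes. The candidate is spanned by $e_1+e_2+e_3$ combined with the $\gc$ generator, together with $e_1+e_3$, and I would check that $Q$ vanishes on it by explicit enumeration of the $16$ elements. This is the step I expect to require care. The signs from $\sigma(b,a)$ versus $\sigma(a,b)$ under the opposite operation, $\widetilde Q=-Q+\sigma\circ f$ applied to grade $a+b$, are exactly what turn $q=1$ on $\gc$ into a semionic contribution.

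\emph{Step 3: the remaining claims.} Once $\gc=\ga\gb\ga^{-1}$ is established, conjugation gives $\gc^4=\ga\gb^4\ga^{-1}$. By Proposition~\ref{prop:ord8} this equals $\ga\cC\ga^{-1}$, and since $\cC$ is central by Lemma~\ref{lem:basic}, it equals $\cC$. Finally, $\cC\neq1$ and $\cC^2=1$, so the order of $\gc$ divides $8$ but not $4$, and hence $\ord\gc=8$.
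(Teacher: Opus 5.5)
\textbf{There is a genuine gap in Step~2, which carries all the content.} Your Step~3 is fine and is exactly the paper's argument. The data you write down in Step~1 are also correct, and they coincide with the paper's. Step~2, however, is never completed, and your diagnosis of what went wrong is incorrect.

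\emph{The actual error.} When computing $L^\perp$ you dropped the diagonal of the polarization. For a $\Z_4$-valued $Q$ on $\Z_2$-generators,
\[
B(e_i,e_i)=Q(2e_i)-2Q(e_i)=-2Q(e_i)\equiv 2
\]
whenever $Q(e_i)$ is odd. Here $Q(e_i)\in\{1,1,3\}$, so all three diagonal entries equal $2$. So it is false that $e_3$ pairs trivially with everything, and the computation is not ``illegitimate.''

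\emph{Symmetrizing is not the fix.} $B$ is by definition the polarization of $Q$, so it is automatically symmetric. Directly,
\[
B(e_2,e_3)=Q(e_2+e_3)-Q(e_2)-Q(e_3)=(1+3+\sigma(b,a))-1-3=0 .
\]
Your corrected value $B(e_2,e_3)=2$ silently changes $Q$ and is wrong.

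\emph{The correct condensation.} Once the diagonal is included, $B(e_i,e_1+e_3)=2$ for each $i=1,2,3$. Hence $L^\perp$ is the even-weight subgroup $\langle e_1+e_2,\,e_1+e_3\rangle$. The quotient $L^\perp/L$ is generated by $[e_1+e_2]$, which has grade $a+b$ and $Q=1+1+\sigma(a,b)\equiv 0$. This is $\gc$ by Lemma~\ref{lem:cond}, which is precisely the paper's proof.

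\textbf{The fallback does not rescue the argument.} It is only a plan, and the candidate you name is not admissible. Writing $e_0$ for the generator of $\gc$, the element $e_0+e_1+e_2+e_3$ has grade $(a+b)+a+b+a=a\neq 0$, so it does not lie in $G_0$.

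If you want the direct check, it works as follows. The grade-zero subgroup of $\gc\otimes(\ga\gb\ga^{-1})^{\rm op}$ has order exactly $4$, namely $\langle e_0+e_1+e_2,\ e_1+e_3\rangle$. On the second factor $\widetilde Q=3x_1+3x_2+x_3+2x_2x_3$, and the cross term is $2yx_2$. With these, $Q\equiv 0$ on this subgroup. The polarization on $(\Z_2)^4$ is non-degenerate, so this subgroup is a Lagrangian, confirming $\gc=\ga\gb\ga^{-1}$.
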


\begin{proof}
$\ga\gb\ga^{-1}$ is the form on $(\Z_2)^3$ with grades $(a,b,a)$, $Q=(1,1,3)$ (the
third generator is that of $\ga^{-1}=\ga^{\rm op}$, so
$Q(e_3)=\widetilde Q_{\ga}(1)=-Q_{\ga}(1)+\sigma(a,a)=-1+0=3$), and
$B(e_1,e_2)=2$, 
$B(e_2,e_3)=0$, 
$B(e_1,e_3)=0$.
Condensing $L=\langle e_1{+}e_3\rangle$ (grade $0$, $Q=1+3\equiv0$) leaves the rank-$1$ class
$\langle[e_1{+}e_2]\rangle$, where $[\,\cdot\,]$ denotes the class in $L^\perp/L$, of grade $a+b$ with $Q=1+1+2\equiv0$, i.e.\ $\gc$. Then
$\gc^4=\ga\gb^4\ga^{-1}=\ga\,\cC\,\ga^{-1}=\cC$ ($\cC$ central), and $\ord\gc=\ord\ga=8$ by
conjugacy.
\end{proof}

\begin{proposition}\label{prop:Q8}
$\ga^2\gb^2=\gc^2$,\quad $\gb^2\ga^2=\gc^{-2}$,\quad $[\ga^2,\gb^2]=\cC$. Hence
$\{e,\cC,\ga^{\pm2},\gb^{\pm2},\gc^{\pm2}\}\cong Q_8$, the quaternion group, via $\{\ga^2,\gb^2,\gc^2\}\mapsto \{i,j,k\}$.
\end{proposition}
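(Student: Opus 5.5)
We prove the three identities by explicit computation of the relevant products in the $s$-twisted graded product, using the $p=2$ arithmetic of Appendix~\ref{app:framework}. There $Q_{X\otimes Y}=Q_X+Q_Y+\sigma(fg,kh)$ with $\sigma(x,y)=2x_1y_2\in\Z_4$, and the polarization between generators of different factors is $B=\sigma(x,y)+\sigma(y,x)$. We then reduce each product by Lemma~\ref{lem:cond}.

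\textbf{The product $\ga^2\gb^2$.} We first describe $\ga^2$: it lives on $(\Z_2)^2$ with both grades $a$, $Q(e_i)=1$, and mutual $B=0$, since $\sigma(a,a)=0$. Similarly $\gb^2$ has both grades $b$, $Q=1$, and $B=0$.

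The product $\ga^2\gb^2$ then lives on $(\Z_2)^4$ with grades $(a,a,b,b)$. All $Q(e_i)=1$. The cross pairings are $B(e_i,e_j)=\sigma(a,b)+\sigma(b,a)=2$ for $i\in\{1,2\}$, $j\in\{3,4\}$, and $B=0$ within each pair.

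Next we compute $\gc^2$ directly. It is rank $2$ with both grades $a+b$, $Q(e_i)=0$, and $B=\sigma(a{+}b,a{+}b)+\sigma(a{+}b,a{+}b)=0$. To compare the two, we form $\ga^2\gb^2\otimes(\gc^2)^{\rm op}$, or equivalently condense $\ga^2\gb^2$ directly.

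The grade-zero subgroup of $\ga^2\gb^2$ contains $e_1+e_2$ and $e_3+e_4$. We check each:
\begin{itemize}
\item $Q(e_1+e_2)=2$, so it is not isotropic.
\item $Q(e_1+e_2+e_3+e_4)=4+4\cdot 2\equiv0$, so it is isotropic.
\end{itemize}
So we condense $L=\langle e_1{+}e_2{+}e_3{+}e_4\rangle$. The quotient $L^\perp/L$ is rank $2$. We represent it by $[e_1+e_3]$ and $[e_2+e_3]$, both of grade $a+b$, and check that $Q=1+1+2\equiv 0$ and that their mutual $B$ vanishes. This identifies the quotient with $\gc^2$ on the nose.

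\textbf{The product $\gb^2\ga^2$.} The computation is identical except for the order of the factors. Because $\sigma$ is asymmetric, the cross term $\sigma(b,a)=0$ enters $Q$ of mixed sums where $\sigma(a,b)=2$ did before. The $Q$-values of the representatives in $L^\perp/L$ therefore shift by $2$, giving $Q\equiv2$ on the grade-$(a+b)$ generators.

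To identify the result, we compare with $\gc^{-2}=(\gc^{\rm op})^2$. Using $\widetilde Q=-Q+\sigma\circ f$, we get $\widetilde Q_{\gc}=0+\sigma(a{+}b,a{+}b)=2$, and the mutual $B$ again vanishes. So $\gb^2\ga^2=\gc^{-2}$. Alternatively, we can verify this by checking that $\gb^2\ga^2\gc^2$ admits a grade-$0$ Lagrangian.

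\textbf{The commutator.} From the two products we get $[\ga^2,\gb^2]=\ga^2\gb^2(\gb^2\ga^2)^{-1}=\gc^2\gc^{2}=\gc^4=\cC$, using Proposition~\ref{prop:c}.

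For the $Q_8$ identification, we set $i=\ga^2$, $j=\gb^2$, $k=\gc^2$ and $-1=\cC$. The defining relations follow from what we already have:
\begin{itemize}
\item $i^2=j^2=k^2=\cC$, by Propositions~\ref{prop:ord8} and~\ref{prop:c}.
\item $\cC$ is central of order $2$, by Lemma~\ref{lem:basic}.
\item $ij=k$, from $\ga^2\gb^2=\gc^2$.
\end{itemize}
The eight listed elements are distinct, since their images in $\Perm_4$ give $V_4$ and $\cC\ne1$.

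\textbf{Main obstacle.} The delicate step is the order-sensitive bookkeeping of $\sigma$ in the second product. It is exactly the asymmetry $\sigma(a,b)\neq\sigma(b,a)$ that distinguishes $\gc^2$ from $\gc^{-2}$. We must fix the representatives of $L^\perp/L$ consistently and recompute their spins carefully, rather than assume symmetry between the two cases.
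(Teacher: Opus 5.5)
Your route to $\ga^2\gb^2=\gc^2$ is the paper's: condense $L=\langle e_1{+}e_2{+}e_3{+}e_4\rangle$ in the rank-$4$ form with grades $(a,a,b,b)$. The identification step as written, however, rests on two false computations that cancel each other.

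\emph{First error: the cross-factor polarization.} The rule $Q_{X\otimes Y}=Q_X+Q_Y+\sigma(fg,kh)$ gives $B\big((g,0),(0,h)\big)=\sigma(fg,kh)$, with the left factor's grade first. It does not give the symmetrized $\sigma(x,y)+\sigma(y,x)$. The two agree for the pair $(a,b)$. For $\gc\otimes\gc$, however, the correct value is $\sigma(a{+}b,a{+}b)=2$. So $\gc^2$ has $Q(1,0)=Q(0,1)=0$ and $Q(1,1)=2$, which is the paper's $q(1,1)=-1$, not $Q\equiv0$. A symmetric cross term would also make $\ga^2\gb^2$ and $\gb^2\ga^2$ isomorphic forms, forcing $\gc^2=\gc^{-2}$ and contradicting $\gc^4=\cC\neq1$. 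That asymmetry is exactly what you rely on later.

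\emph{Second error: the mutual pairing in $L^\perp/L$.} The classes $[e_1+e_3]$ and $[e_2+e_3]$ are not $B$-orthogonal. One has $B(e_1{+}e_3,e_2{+}e_3)=0+2+2+B(e_3,e_3)$, and $B(e_3,e_3)=Q(2e_3)-2Q(e_3)\equiv2$ for a spin-$i$ generator, so the pairing is $2$. Equivalently, their sum is $[e_1+e_2]$, whose spin $Q(e_1+e_2)=2$ you computed one line earlier.

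With both corrections the two sides still agree: two grade-$(a+b)$ classes of spin $0$ whose grade-zero sum has spin $2$. So the identity holds, but the check you wrote compares two incorrectly computed objects. The same slip recurs for $\gb^2\ga^2$. The mutual $B$ in $(\gc^{\rm op})^2$ is $\sigma(a{+}b,a{+}b)=2$, not $0$, so all three nonzero spins of $\gc^{-2}$ equal $2$. The condensed $\gb^2\ga^2$ matches this: there all off-diagonal pairings vanish, so $Q(e_1+e_3)=Q(e_2+e_3)=Q(e_1+e_2)=2$.

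For the second identity the paper avoids any further premetric bookkeeping. Since $\ga^4=\gb^4=\cC$ is central with $\cC^2=1$, one has $(\ga^2\gb^2)(\gb^2\ga^2)=\ga^2\gb^4\ga^2=\cC\ga^4=\cC^2=1$. Hence $\gb^2\ga^2=(\gc^2)^{-1}$ follows from the first identity and Proposition~\ref{prop:ord8} alone. Your direct computation is a valid alternative once the polarizations are fixed, but the algebraic route removes exactly the step you flag as delicate. Your commutator computation and $Q_8$ relations match the paper. Your distinctness argument (the images of $\ga^2,\gb^2,\gc^2$ exhaust $V_4\setminus\{1\}$, and $\cC\neq1$) supplies a point the paper leaves implicit.
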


\begin{proof}
$\ga^2\gb^2$ is the form on $(\Z_2)^4$ with grades $(a,a,b,b)$, $Q(e_i)=1$, $B(e_i,e_j)=2$ for
$i\in\{1,2\}$, $j\in\{3,4\}$ (one $a$- and one $b$-generator) and $0$ within each block.
Condensing $\langle(1,1,1,1)\rangle$ (grade $0$, $Q=4+4\cdot2\equiv0$) yields the rank-$2$ class of
grade $a+b$ with $Q(1,0)=Q(0,1)=0$, $Q(1,1)=2$, which is $\gc^2$. Thus $\ga^2\gb^2=\gc^2$.
Using $\gb^4=\ga^4=\cC$ and $\cC^2=e$,
\be
  (\ga^2\gb^2)(\gb^2\ga^2)=\ga^2\gb^4\ga^2=\ga^2\cC\,\ga^2=\cC\,\ga^4=\cC^2=e,
\ee
so $\gb^2\ga^2=(\ga^2\gb^2)^{-1}=\gc^{-2}$; and, since $\ga^{-2}\gb^{-2}=(\gb^2\ga^2)^{-1}=\gc^2$,
\be
  [\ga^2,\gb^2]=\ga^2\gb^2\,\ga^{-2}\gb^{-2}=\gc^2\cdot\gc^2=\gc^4=\cC .
\ee
These are exactly the relations $ij=k$, $ji=-k$, $i^2=j^2=k^2=-1$ of $Q_8$.
\end{proof}

\begin{proposition}\label{prop:ord3}
$(\ga\gb)^3=\cC$, and more generally any product of two non-commuting graded semions
\be 
(\ga^{\pm1}\gb^{\pm1}\,,\ \gb^{\pm1}\ga^{\pm1})
\ee
cubes to $\cC$,  so $\ord(\ga\gb)=6$.
\end{proposition}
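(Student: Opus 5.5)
The plan is to compute $(\ga\gb)^3$ directly as a graded metric group, using the $p=2$ arithmetic of Appendix~\ref{app:framework} ($Q_{X\otimes Y}=Q_X+Q_Y+\sigma(fg,kh)$, $\widetilde Q=-Q+\sigma\circ f$). I would then identify the result with $\cC$ by condensing an isotropic grade-zero subgroup via Lemma~\ref{lem:cond}, as in Propositions~\ref{prop:ord8}--\ref{prop:Q8}.

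\emph{Step 1: the product $\ga\gb$.} It lives on $(\Z_2)^2$ with grades $(a,b)$ and $Q(e_1)=Q(e_2)=1$. The cross term is $B(e_1,e_2)=\sigma(a,b)=2$, so $Q(e_1+e_2)=0$.

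\emph{Step 2: the cube.} $(\ga\gb)^3$ lives on $(\Z_2)^6$ with ordered generators $e_1,\dots,e_6$ of grades $a,b,a,b,a,b$ and all $Q(e_i)=1$. For $i<j$ the pairing is $B(e_i,e_j)=\sigma(\gr e_i,\gr e_j)$. This equals $2$ exactly when $e_i$ has grade $a$ and $e_j$ has grade $b$, and vanishes otherwise, since $\sigma(b,a)=\sigma(a,a)=\sigma(b,b)=0$.

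The grade-zero subgroup $G_0$ has rank $4$: it consists of vectors with an even number of $a$-type and an even number of $b$-type entries. I will list $Q$ on a basis of $G_0$, for instance $e_1+e_3$, $e_3+e_5$, $e_2+e_4$, $e_4+e_6$ and their sums. From this I will search for a rank-$2$ isotropic subgroup $L\subseteq G_0$ (so $Q|_L=0$ and $B|_L=0$). The first candidates I would try are combinations mixing $a$- and $b$-entries across different copies, such as $e_1+e_3+e_4+e_6$ and $e_2+e_4+e_3+e_5$; the simplest guesses like $e_1+e_3$ have $Q=2$ and fail.

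Once $L$ is found, I compute $L^\perp$, which has rank $4$ since $B$ is non-degenerate on $(\Z_2)^6$. I then check that the quotient $L^\perp/L$ has rank $2$, lies entirely in grade $0$, and has $Q\equiv 2$ on its nonzero elements, with mutual $B=2$. This is exactly $\cC$, so Lemma~\ref{lem:cond} gives $(\ga\gb)^3=\cC$. Finding $L$ is the main obstacle. It is a finite search, but the bookkeeping of ordered cross terms in $\tau$ is error-prone; I would organise it as a $6\times 6$ matrix of $B$ values and solve the linear conditions over $\Z_2$ before imposing $Q=0$.

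\emph{Step 3: the other sign choices.} For $\ga^{\pm1}\gb^{\pm1}$ and $\gb^{\pm1}\ga^{\pm1}$, replacing a factor by its $s$-opposite only changes $Q(e_i)$ from $1$ to $\widetilde Q=-1+\sigma(x,x)=3$, because $\sigma(a,a)=\sigma(b,b)=0$. Reversing the order only changes which cross terms carry the $2$. So the same computation goes through with the modified matrices. Alternatively, one can reduce some cases to the first by conjugation, using that $\cC$ is central (Lemma~\ref{lem:basic}): for example, $\gb\ga=\ga^{-1}(\ga\gb)\ga$, so $(\gb\ga)^3=\ga^{-1}\cC\ga=\cC$.

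\emph{Step 4: the order.} Finally, $(\ga\gb)^6=\cC^2=1$ and $(\ga\gb)^3=\cC\neq1$ by Lemma~\ref{lem:basic}. Moreover, the image of $\ga\gb$ in $\Witt^\pt/\Witt^\pt\cong\Perm_4$ is a $3$-cycle, as recalled in Section~\ref{sec:MetricGroup}, so $(\ga\gb)^2\neq1$. Hence $\ord(\ga\gb)=6$.
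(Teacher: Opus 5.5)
Your plan is the paper's proof. You expand $(\ga\gb)^3$ on $(\Z_2)^6$ with grades $(a,b,a,b,a,b)$, condense a rank-$2$ isotropic grade-zero $L$ via Lemma~\ref{lem:cond} to reach $\cC$, and read off the order from the $3$-cycle image in $\Perm_4$. But it stops short of the one computation that carries the proof. The step you leave as a search is in fact forced.

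\textbf{The subgroup $L$ is determined.} You need $L^\perp/L$ to lie in grade zero, i.e.\ $L^\perp\subseteq G_0$. Hence $G_0^\perp\subseteq L$, and counting gives $L=G_0^\perp$. This uses your assertion that $B$ is non-degenerate on $(\Z_2)^6$. That is true here, since the mod-$2$ Gram matrix is invertible, but it is not automatic for graded premetric groups: already on $\ga\gb$ itself $B$ has radical $\langle e_1+e_2\rangle$.

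\textbf{Computing it.} Solving the linear conditions gives $G_0^\perp=\langle e_2+e_3+e_5+e_6,\ e_1+e_3+e_4+e_6\rangle$. This is exactly the subgroup $\langle(0,1,1,0,1,1),(1,0,1,1,0,1)\rangle$ that the paper condenses. The first generator pairs to $2$ exactly with the $b$-graded $e_i$, and the second pairs to $2$ with every $e_i$, so both are orthogonal to $G_0$. Both generators have $Q\equiv0$ and are $B$-orthogonal. The quotient $G_0/L$ is spanned by $[e_1+e_3]$ and $[e_2+e_4]$, with $Q\equiv 2$ on every nonzero class, which is $\cC$.

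\textbf{Your candidates.} Your first candidate $e_1+e_3+e_4+e_6$ is one of these generators. Your second, $e_2+e_3+e_4+e_5$, has $Q=2$ (its only cross term is $B(e_3,e_4)=2$) and must be discarded.

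\textbf{Sign variants.} Here your argument is sharper than the paper's ``same computation.'' Passing to $s$-opposites changes $Q$ from $1$ to $3$ on all $a$- or all $b$-generators and leaves $B$ unchanged. It therefore shifts $Q$ by a multiple of $4$ on $G_0$, so the same $L$ works verbatim for every $\ga^{\pm1}\gb^{\pm1}$. The products $\gb^{\pm1}\ga^{\pm1}$ are conjugates of these by $\ga^{\pm1}$, and $\cC$ is central. Your order argument in Step~4 is correct as written.
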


\begin{proof}
$(\ga\gb)^3$ is the form on $(\Z_2)^6$ with grades $(a,b,a,b,a,b)$ and $Q(e_i)=1$; here
$B(e_i,e_j)=2$ when the earlier index is $a$-graded and the later one $b$-graded
($B(e_i,e_j)=\sigma(a,b)=2$ for $i<j$ with $e_i\!\sim\!a$, $e_j\!\sim\!b$), and
$0$ otherwise. Condensing the $2$-dimensional isotropic subgroup
$L=\langle(0,1,1,0,1,1),\,(1,0,1,1,0,1)\rangle\subseteq G_0$ --- both generators
have grade $0$ and $Q\equiv0$ and are mutually $B$-orthogonal, so $Q|_L\equiv0$
--- reduces $(\ga\gb)^3$ directly to the rank-$2$ class $\cC$
(Lemma~\ref{lem:cond}).
The grade-$0$ part of $\Witt(A\oplus\widehat A,s)$ is central with quotient $\Perm_4$, in which
$\ga\gb$ maps to a $3$-cycle; thus $3\mid\ord(\ga\gb)$, while $(\ga\gb)^3=\cC\neq e$ and
$(\ga\gb)^6=\cC^2=e$, giving $\ord(\ga\gb)=6$. The other sign choices are the same computation.
\end{proof}

\begin{proposition}\label{prop:transp}
$(\ga\gb\ga)^2=\cC$, and $\ord(\ga\gb\ga)=4$.
\end{proposition}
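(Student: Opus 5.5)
We will prove Proposition~\ref{prop:transp} by the same mechanism as Propositions~\ref{prop:ord8}--\ref{prop:ord3}. We write $(\ga\gb\ga)^2$ as a graded premetric group, exhibit an isotropic grade-zero subgroup, and condense it via Lemma~\ref{lem:cond} down to $\cC$.

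\textbf{Step 1: the explicit form.} The element $(\ga\gb\ga)^2=\ga\gb\ga\ga\gb\ga$ is the form on $(\Z_2)^6$ with
\begin{itemize}
\item grades $(a,b,a,a,b,a)$;
\item $Q(e_i)=1$ for all $i$;
\item $B(e_i,e_j)=\sigma(\gr e_i,\gr e_j)=2$ exactly when $i<j$, $e_i$ is $a$-graded and $e_j$ is $b$-graded, and $B=0$ otherwise.
\end{itemize}
Concretely, the nonzero pairings are $B(e_1,e_2)=B(e_1,e_5)=B(e_3,e_5)=B(e_4,e_5)=2$. The grade map is $(n_i)\mapsto (n_1+n_3+n_4+n_6)\,a+(n_2+n_5)\,b$, so $G_0$ has rank $4$.

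\textbf{Step 2: condensation.} We look for a rank-$2$ isotropic $L\subseteq G_0$. The natural candidates are
\[
\ell_1=e_2+e_5, \qquad \ell_2=e_3+e_4 .
\]
Both have grade $0$. For $\ell_1$ we get $Q(\ell_1)=1+1+B(e_2,e_5)=2$, which is not isotropic. We therefore adjust, trying $\ell_1=e_1+e_2+e_5+e_6$ or similar combinations; $\ell_2=e_3+e_4$ gives $Q=1+1+0=2$, which also fails.

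Since everything is $\Z_4$-valued with $B\in\{0,2\}$, a vector supported on $m$ generators has $Q\equiv m+\sum_{i<j}B(e_i,e_j)$. We need $m$ even, and the number of cross pairings must be $\equiv m/2 \pmod 2$. A good choice is $\ell_2=e_1+e_3+e_4+e_6$: it has grade $0$, $m=4$, and no $a$–$b$ pairs, so $Q=4\equiv0$. For the second generator we take a weight-4 grade-0 vector containing $e_2,e_5$ with an even number of cross pairings, for instance $e_2+e_5+e_1+e_4$. Its pairings are $(e_1,e_2)$, $(e_1,e_5)$ and $(e_4,e_5)$, giving $Q=4+6\equiv 2$, which fails. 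Trying $e_2+e_5+e_3+e_6$ instead, the only pairing is $(e_3,e_5)$, giving $Q=4+2=2$; this also fails. A systematic search over the 15 nonzero elements of $G_0$, checking $Q$ and mutual $B$, will fix a rank-$2$ isotropic $L$.

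We then compute $L^\perp$ (rank $4$) and check three things about the rank-$2$ quotient $L^\perp/L$:
\begin{itemize}
\item it is grade $0$;
\item all its nonzero elements have $Q=2$;
\item its two generators have $B=2$, so that it is $\cC$.
\end{itemize}
We expect this search to be the main obstacle, since it is purely a bookkeeping problem.

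\textbf{Fallback.} If no such $L$ is found, we argue algebraically instead. Proposition~\ref{prop:c} gives $\ga\gb\ga^{-1}=\gc$, so
\[
(\ga\gb\ga)^2=(\gc\,\ga^2)^2 .
\]
Proposition~\ref{prop:Q8} supplies the quaternion relations among $\ga^2,\gb^2,\gc^2$, and $\ga^4=\cC$. To reduce $(\ga\gb\ga)^2$ to $\cC$ we need how $\gc$ and $\ga^2$ interact. Conjugation by $\ga$ swaps $\gb^2$ and $\gc^2$, which we can reconfirm by one condensation.

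\textbf{Step 3: the order.} Given $(\ga\gb\ga)^2=\cC$, Lemma~\ref{lem:basic} gives $\cC\neq 1$ and $\cC^2=1$. Hence
\[
(\ga\gb\ga)^4=\cC^2=1 \quad\text{and}\quad (\ga\gb\ga)^2\neq 1,
\]
so $\ord(\ga\gb\ga)=4$.
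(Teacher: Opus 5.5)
There is a genuine gap. Your Step 2 never produces the isotropic subgroup, and that is the entire content of the proposition. Both second generators you test fail, so ``a systematic search \dots\ will fix a rank-$2$ isotropic $L$'' is a promissory note rather than an argument. The outcome of that search (whether $(\ga\gb\ga)^2$ is $\cC$, trivial, or something else) is exactly what must be shown.

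The fallback only relocates the problem. From $\ga\gb\ga=\gc\ga^2$ and $\gc^2\ga^2=\gb^2$ (Proposition~\ref{prop:Q8}) you get $(\ga\gb\ga)^2=(\gc\ga^2\gc^{-1})\,\gb^2$. So everything hinges on the exact value of $\gc\ga^2\gc^{-1}$. The $\Perm_4$-level statement that conjugation permutes the squares fixes it only up to an element of $\Witt^\pt$. The choice between $\gb^{2}$ and $\gb^{-2}=\cC\gb^2$ is precisely the choice between $(\ga\gb\ga)^2=\cC$ and $(\ga\gb\ga)^2=1$. Your own remark that conjugation by $\ga$ ``swaps'' $\gb^2$ and $\gc^2$ holds only in the quotient: in the full group $\ga\gc^2\ga^{-1}=\ga^2\gb^2\ga^{-2}=\gb^{-2}$. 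Fixing the sign needs a further condensation you have not done.

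There is also a pitfall in Step 1. The polarization has nonzero diagonal, $B(e_i,e_i)=Q(2e_i)-2Q(e_i)=-2\equiv 2$. If ``$B=0$ otherwise'' includes $i=j$, then $e_6$ (grade $a$) lies in the radical of $B$. Every $L^\perp$ you compute would then be wrong and would not lie in $G_0$.

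The missing vector is one your own parity criterion allows but you did not try: $\ell_1=e_2+e_3+e_4+e_5$. It has grade $b+a+a+b=0$ and cross pairings $(e_3,e_5),(e_4,e_5)$, so $Q(\ell_1)=4+4\equiv 0$. Take it together with your $\ell_2=e_1+e_3+e_4+e_6$. Their sum $e_1+e_2+e_5+e_6$ has pairings $(e_1,e_2),(e_1,e_5)$, so $Q=0$, and $L=\langle \ell_1,\ell_2\rangle\subseteq G_0$ is isotropic.

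Including the diagonal of $B$, one finds $B(x,\ell_1)=2(x_2+x_5)$ and $B(x,\ell_2)=2\sum_i x_i$. Hence $L^\perp=\{x:\ x_2=x_5,\ \sum_i x_i=0\}\subseteq G_0$. The quotient $L^\perp/L$ is generated by $[e_1+e_3]$ and $[e_3+e_4]$, with $Q=2$ on both and on their sum $[e_1+e_4]$. This is $\cC$, so Lemma~\ref{lem:cond} gives $(\ga\gb\ga)^2=\cC$, and your Step 3 then correctly yields order $4$.

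This is the paper's proof collapsed into one step. The paper condenses $\langle \ell_1\rangle$ first and observes that $\ell_1^\perp/\langle \ell_1\rangle$, spanned by $[e_1],[e_3],[e_4],[e_6]$, is the all-$a$-graded rank-$4$ form of Proposition~\ref{prop:ord8}. It then reuses that proposition's condensation of $\langle(1,1,1,1)\rangle$, whose preimage is your $\ell_2$.
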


\begin{proof}
$(\ga\gb\ga)^2$ is the form on $(\Z_2)^6$ with grades $(a,b,a,a,b,a)$ and $Q(e_i)=1$. Condensing
$\langle(0,1,1,1,1,0)\rangle$ (grade $0$, $Q\equiv0$) reduces it to a rank-$4$ form with all grades
$a$, $Q(e_i)=1$ and diagonal $B$ --- i.e.\ the same shape as $\ga^4$ in
Proposition~\ref{prop:ord8} --- and condensing $\langle(1,1,1,1)\rangle$ there gives $\cC$. Since
$\cC\neq e$, $\ord(\ga\gb\ga)=4$.
\end{proof}

\begin{theorem}[Pointed Graded Witt for $A=\Z_2^{(1)}$]\label{thm:2O}
The metric groups $\langle\ga,\gb\rangle$ generate the binary octahedral group 
\be 
2O=\langle r,s\mid r^3=s^4=(rs)^2\rangle
\ee 
of order $48$, via 
\be 
s=\ga\,,\ r=\ga\gb\,,\ 
rs=\ga\gb\ga\,,
\ee
with central involution
\be z=r^3=s^4=(rs)^2=\cC\,.
\ee
The grade-$0$ classes form the central, ungraded pointed Witt group
$\Witt^\pt$ (metric groups with $f=0$), and
\be\label{full}
  \Witt^\pt(A\oplus\widehat A,s)\;\cong\;2O\circ_{\Z_2}\Witt^\pt,\qquad z=\cC=\gu^4,
\ee
with the identification 
\be 
z\in2O \ \longleftrightarrow \ \cC\in\Witt^\pt \,.
\ee
As $\cC=\gu^4$ lies in that $\Z_8$, this extends as
\be\label{fullsplitapp}
\Witt^\pt(A\oplus\widehat A,s) \cong
\big(2O\circ_{\Z_2}\Z_8\big)\,\times\,\Z_2\,\times\!\!\bigoplus_{p\ \mathrm{odd}} W_p\,.
\ee
The five classes $\ga,\gb,\gc,\gu,\cC$ generate exactly the factor $2O\circ_{\Z_2}\Z_8$
(order $192$).  The remaining $\Z_2$ and odd-$p$ summands come from other metric groups in $\Witt^\pt$.
\end{theorem}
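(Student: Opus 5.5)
The plan is to build the group from the quotient and then identify it by an order count. The quotient $\Witt^\pt(A\oplus\widehat A,s)/\Witt^\pt\cong\Perm_4$ is already known from \eqref{S4Witt}, and the grade-zero classes form the central subgroup $\Witt^\pt$ (Lemma~\ref{lem:basic}). So the work is to determine the extension class restricted to the subgroup $\langle\ga,\gb\rangle$, and then to see how it meets $\Witt^\pt$.

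\textbf{Step 1: the relations of $2O$.} Set $s=\ga$ and $r=\ga\gb$. The needed relations are already computed:
\begin{itemize}
\item Proposition~\ref{prop:ord8} gives $s^4=\cC$.
\item Proposition~\ref{prop:ord3} gives $r^3=\cC$.
\item Proposition~\ref{prop:transp} gives $(rs)^2=(\ga\gb\ga)^2=\cC$.
\end{itemize}
Hence $r,s$ satisfy $r^3=s^4=(rs)^2=z$ with $z=\cC$ central of order $2$ (Lemma~\ref{lem:basic}). This yields a surjection $2O\to\langle\ga,\gb\rangle$.

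\textbf{Step 2: the map is injective.} Compose with the projection to $\Perm_4$. The images of $\ga,\gb$ generate all of $\Perm_4$, since $\ga$ has order $4$ and $\ga\gb$ has order $3$ there. So $\langle\ga,\gb\rangle$ has order at least $24$. It also contains $\cC\neq1$, which lies in the kernel of the projection, so the order is at least $48$. The group $2O$ has order exactly $48$, and therefore the surjection is an isomorphism. Under it, $\langle\ga,\gb\rangle\cap\Witt^\pt=\{1,\cC\}$.

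\textbf{Step 3: assembling the full group.} Every class is congruent modulo $\Witt^\pt$ to an element of $\langle\ga,\gb\rangle$, because the latter surjects onto the quotient. Since $\Witt^\pt$ is central, multiplication gives a surjection $2O\times\Witt^\pt\to\Witt^\pt(A\oplus\widehat A,s)$. Its kernel is $\{(x,x^{-1}):x\in 2O\cap\Witt^\pt\}=\langle(z,\cC)\rangle$, which is the central product~\eqref{full}.

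For \eqref{fullsplitapp}, use the decomposition $\Witt^\pt=\Z_8\oplus\Z_2\oplus\bigoplus_{p\,\mathrm{odd}}W_p$ from \eqref{Wittpt}--\eqref{Wittp}. Here $\Z_8=\langle\gu\rangle$, with $\cC=\gu^4$ by Proposition~\ref{prop:ord8}. Since $\cC$ lies in the $\Z_8$ factor, the central product only involves that factor, and the other summands split off as direct factors. The factor $2O\circ_{\Z_2}\Z_8$ then has order $48\cdot8/2=192$. It is generated by $\ga,\gb,\gu$, and it contains $\gc=\ga\gb\ga^{-1}$ by Proposition~\ref{prop:c}, as well as $\cC$.

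\textbf{Main obstacle.} The step needing the most care is choosing the complement $\Z_2$ in $W_2=\Z_8\oplus\Z_2$. Its generator must be chosen so that it does not contain $\cC$. I would take a grade-zero metric group of order $2$ in $W_2$ independent of $\gu$, for example one represented by a metric group on $\Z_4$ (such as $U(1)_4$) multiplied by a power of $\gu$ so that it has order $2$.

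The other non-routine input is the identification of $W_2$ itself, which I would simply cite from \cite{DGNO}. I would also double-check the sign and ordering conventions in Proposition~\ref{prop:transp}, since an error there would turn $(rs)^2=z$ into $(rs)^2=1$ and produce a different extension of $\Perm_4$.
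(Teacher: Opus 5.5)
Your proposal is correct and follows the paper's proof essentially step for step. The relations $r^3=s^4=(rs)^2=\cC$ from Propositions~\ref{prop:ord8}, \ref{prop:ord3} and \ref{prop:transp} give a surjection $2O\to\langle\ga,\gb\rangle$. The count $|\langle\ga,\gb\rangle|\ge 2\cdot 24$ via the quotient $\Perm_4$ makes it an isomorphism with $\langle\ga,\gb\rangle\cap\Witt^\pt=\{1,\cC\}$, and the multiplication map $2O\times\Witt^\pt\to\Witt^\pt(A\oplus\widehat A,s)$ with kernel $\langle(z,\cC)\rangle$ gives the central product.

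You also note that $\langle\gu\rangle$ must be a direct summand of $W_2\cong\Z_8\oplus\Z_2$. This is a detail the paper leaves implicit, and it holds automatically: any order-$8$ element there generates a summand, and any involution other than $\cC$ generates a complement.
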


\begin{proof}
By Propositions~\ref{prop:ord8}, \ref{prop:ord3}, \ref{prop:transp},
$s^4=\ga^4=\cC$, $r^3=(\ga\gb)^3=\cC$ and $(rs)^2=(\ga\gb\ga)^2=\cC$ coincide in the central
involution $z=\cC$ (Lemma~\ref{lem:basic}); these are the defining relations of $2O$.
The abstract group $\langle r,s\mid r^3=s^4=(rs)^2\rangle$ is the binary octahedral group of order $48$
(in it, $z=r^3=s^4=(rs)^2$ is automatically central with $z^2=1$) and we have a surjection $2O\twoheadrightarrow\langle\ga,\gb\rangle$, so
$|\langle\ga,\gb\rangle|\le 48$.  On the other hand the image of
$\langle\ga,\gb\rangle$ in the quotient $\Witt^\pt(A\oplus\widehat A,s)/\Witt^\pt\cong \Perm_4$ is all of
$\Perm_4$, and the central class $\cC\neq1$ lies over the identity; hence
$|\langle\ga,\gb\rangle|\ge 2\cdot 24=48$.  Therefore the surjection is an isomorphism,
$\langle\ga,\gb\rangle\cong 2O$, and in particular the fiber over $1\in \Perm_4$ inside
$\langle\ga,\gb\rangle$ is exactly $\{1,\cC\}$, i.e.\
$\langle\ga,\gb\rangle\cap\Witt^\pt=\{1,\cC\}$.
(As a consistency remark, one can see the double cover must be $2O$ rather than
the other Schur cover $GL(2,\mathbb F_3)$ of $\Perm_4$ directly from the computed lifts: all
lifts of transpositions have order $4$ by Proposition~\ref{prop:transp}, so $\cC$ is the
unique involution, forcing the Sylow $2$-subgroup to be generalized quaternion as in $2O$;
in $GL(2,\mathbb F_3)$ transpositions admit order-$2$ lifts.)
For the full group: the grade-$0$ classes are central (Lemma~\ref{lem:basic}) and form $\Witt^\pt$
(with $f=0$ the $s$-twist is trivial), and the quotient $\Witt^\pt(A\oplus\widehat A,s)/\Witt^\pt\cong \Perm_4$. Since $\gu$ is central of order $8$
(Proposition~\ref{prop:ord8}) with $\gu^4=\cC=z$ and $\langle\ga,\gb\rangle\cap\Witt^\pt=\{1,\cC\}$, the
multiplication map $2O\times\Witt^\pt\to\Witt^\pt(A\oplus\widehat A,s)$ has kernel $\langle(z,\cC)\rangle$ and is onto,
giving \eqref{full}. Then \eqref{fullsplitapp} follows
because $\cC=\gu^4$ lies in the $\Z_8$ summand of $\Witt^\pt$.
\end{proof}

We can now prove Theorem \ref{thm:modular}:
\begin{proof}
We set 
\be 
T=\gu\,\ga,\qquad S=\gu^{2}\,\ga\gb\ga,\qquad Y=\gu \,.
\ee
We want to show: $T^4=S^2=1$, $(ST)^3=Y$, $Y^8=1$. 
As $\gu$ is central with $\gu^4=\cC$, so $Y$ is
central with $Y^8=\cC^2=1$. Furthermore by Propositions~\ref{prop:ord8} and \ref{prop:transp}
\be
  T^4=\gu^4\ga^4=\cC^2=1,\qquad
  S^2=\gu^4(\ga\gb\ga)^2=\cC^2=1 \,.
\ee
Finally, 
\be
  ST=\gu^{3}\,\ga\gb\ga^2,\qquad (ST)^3=\gu^{9}(\ga\gb\ga^2)^3=\gu\,(\ga\gb\ga^2)^3 .
\ee
Conjugating by $\ga^{-1}$ and using $\ga^3=\ga^{-1}\ga^4=\ga^{-1} \cC$,
\be\ba 
  &\ga^{-1}(\ga\gb\ga^2)\ga=\gb\ga^{3}=\cC\,\gb\ga^{-1}
 \cr  
 \Rightarrow\ &
{\ga^{-1}} (\ga\gb\ga^2)^3 \ga
 =(\cC\,\gb\ga^{-1})^3 \cr 
  = &\cC^{3}(\gb\ga^{-1})^3
 =\cC\cdot\cC=1 \,,\cr 
\ea\ee
where $(\gb\ga^{-1})^3=\cC$ by Proposition~\ref{prop:ord3}. Hence $(ST)^3=\gu=Y$, and
$\ord(ST)=24$: since $Y=(ST)^3$ has order $8$, $\ord(ST)\in\{8,24\}$; if it were $8$
then $\langle ST\rangle=\langle Y\rangle$ (as $\gcd(3,8)=1$) would consist of central grade-$0$ classes, contradicting the nontrivial image of $ST$ in $\Perm_4$. Finally $(ST)^3=\gu$ puts $\gu\in\langle S,T\rangle$, so $\ga=\gu^{-1}T$ and
$\gb=\gu^{-2}\ga^{-1}S\ga^{-1}$ lie in $\langle S,T\rangle$. Thus
\be 
\langle S,T\rangle=\langle\ga,\gb,\gu\rangle=2O\circ_{\Z_2}\Z_8\,.
\ee
\end{proof}

\subsection{Pointed Graded Witt Group for $\Z_p^{(1)}$, odd prime $p$}
\label{app:WittZp}

We write the odd-prime calculation in a form parallel to Appendix~\ref{app:WittZ2}.
Let $p$ be odd and $A=\Z_p$, so the surface charges form the self-dual group
$A\oplus\widehat A=\Z_p^2$ as in Section~\ref{sec:BrAutZp}, with basis $a=(1,0)$,
$b=(0,1)$ and
\be
  s(x,y)=\zeta_p^{x_1y_2}\,,\qquad
  \sigma(x,y)=x_1y_2\in\Z_p \,.
\ee
Then $\mathsf{Alt}(s)(x,y)=\zeta_p^{x_1y_2-x_2y_1}$ is the standard symplectic
form on $A\oplus\widehat A\cong\Z_p^2$, so
\be
  \Aut^{\rm syp}(A\oplus\widehat A,\mathsf{Alt}(s))
  =Sp(2,\Z_p)=SL(2,\Z_p) \,.
\ee
The result to prove is
\be\label{eq:WittZp-goal}
  \Witt^\pt(A\oplus\widehat A,s)\cong SL(2,\Z_p)\times\Witt^\pt \,.
\ee

\bigskip\noindent{\bf Odd-prime specialisation.}
Here $q$ takes $p$-th roots, $q=\zeta_p^Q$ with $Q\in\Z_p$, and since $2$
is invertible the form is $Q(x)=2^{-1}B(x,x)$; it suffices to take
$G=(\Z_p)^n$. The $s$-twisted arithmetic \eqref{eq:arith-gen} and the
Condensation Lemma~\ref{lem:cond} of Appendix~\ref{app:framework} apply verbatim,
now with $\sigma\in\Z_p$ and both cross-terms present.

\medskip\noindent{\bf The pointed Witt group.}
The grade-zero metric groups form the usual pointed Witt group $\Witt^\pt$ (\ref{Wittpt}) and (\ref{Wittp}), and
they are central because all $s$-factors in \eqref{eq:arith-gen} are trivial when
one grading is zero.  Hence there is a central extension
\be\label{eq:oddp-exact}
  1\longrightarrow \Witt^\pt
  \longrightarrow \Witt^\pt(A\oplus\widehat A,s)
  \longrightarrow SL(2,\Z_p)
  \longrightarrow 1 \,.
\ee
Here, we used $\Witt^{\pt}(A\oplus\widehat A, s)/\Witt^{\pt} \cong SL(2, \Z_p)$, which was shown in~\eqref{eq:ZpWittQuotient} based on the result in Appendix~\ref{app:H5B3}.
In $\Witt^\pt$ only the $p$-primary summand can mix with the non-zero grading, since any
homomorphism from an $\ell$-group to $A\oplus\widehat A$ is zero for $\ell\neq p$.

\medskip\noindent{\bf Generators and graded anyons.}
Write the {\it graded $\Z_p$ anyon} of axis $v\in A\oplus\widehat A$ ($v\neq0$)
and level $k\in\Z_p^\times$ as
\be\label{eq:semionvk}
  \mathfrak s_{v,k}=\big(\Z_p\xrightarrow{\ v\ }A\oplus\widehat A;\ q(1)=\zeta_p^{\,k}\big)\,,
\ee
the graded chiral $\Z_p$ anyon theory of spin $\zeta_p^{\,k}$ (Section~\ref{sec:qudit}).
Its grade line $\langle v\rangle$ is the {\it axis}: since
$\mathfrak s_{\lambda v,k}\cong\mathfrak s_{v,k/\lambda^2}$, the axis depends only
on the line $\langle v\rangle\in\mathbb P^1(\Z_p)$, and two anyons share
an axis when their grades are proportional, and lie on {\it independent axes}
when their grades span $A\oplus\widehat A$.  The two generators, together with the
ungraded class $\kappa$, are
\be\label{eq:gensp}
\ba
  \ga&=\mathfrak s_{a,1}\,,\qquad \gb=\mathfrak s_{b,1}\,,\\
  \kappa&=\big(\Z_p\xrightarrow{0}A\oplus\widehat A;\ q(1)=\zeta_p\big)\,,
\ea
\ee
where $\kappa$ is the ordinary ungraded generator $\langle1\rangle$ in $W_p$, of
order $2$ for $p\equiv1\bmod4$ and $4$ for $p\equiv3\bmod4$.
Note that for $p\equiv1\bmod4$ there is a second class generating a $\Z_2$ class in $W_p=\Z_2 \oplus\Z_2$ defined through 
\be 
\kappa_a=\big(\Z_p\xrightarrow{0}A\oplus\widehat A;\ q(1)=\zeta_p^{a}\big)\,,
\ee
for $a$ quandratic non-residue mod $p$, and the $\kappa =\kappa_1$ in that case. 

\begin{proposition}\label{prop:apowerp}
In the full group
\be
  \ga^p=\gb^p=\kappa^p\,.
\ee
Consequently the normalized lifts
\be
  \hat\ga=\kappa^{-1}\ga,\qquad
  \hat\gb=\kappa^{-1}\gb
\ee
have exact order $p$.
\end{proposition}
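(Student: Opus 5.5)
We compute $\ga^p$ directly as a graded premetric group and then reduce it by a grade-zero condensation, using Lemma~\ref{lem:cond}. By the odd-prime arithmetic \eqref{eq:arith-gen}, the $p$-fold $s$-twisted product $\ga^{\otimes p}$ lives on $G=(\Z_p)^p$ with basis $e_1,\dots,e_p$. The grading is $f(\vec x)=(\sum_i x_i)\,a$. The cross-terms are $\tau(a,a)=\sigma(a,a)+2\sigma(a,a)=0$, since $\sigma(a,a)=a_1a_2=0$. Hence the form is diagonal, $Q(\vec x)=\sum_i x_i^2$, with polarization $B(\vec x,\vec y)=2\sum_i x_iy_i$. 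The identical computation applies to $\gb$, since $\sigma(b,b)=0$, and to $\kappa^p$, whose grading is zero.

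Next we condense. The grade-zero subgroup is $G_0=\{\sum_i x_i=0\}$, and its $B$-orthogonal complement is the diagonal line $L=\langle(1,\dots,1)\rangle$. This line lies in $G_0$ because $p\equiv0$, and it is isotropic because $Q(1,\dots,1)=p\equiv0$. Lemma~\ref{lem:cond} therefore replaces $\ga^p$ by $L^\perp/L=G_0/L$, which carries the induced form and has trivial grading. This is a rank-$(p-2)$ grade-zero metric group, so it lies in $\Witt^\pt$.

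We now apply the same condensation to $\kappa^p$. That object is ungraded on $(\Z_p)^p$ with the same $Q$, and the same line $L$ is isotropic, so $\kappa^p$ reduces to $L^\perp/L$ with exactly the same induced form. Here we must also use that $L^\perp$ computed with respect to $B$ is $G_0$ in both cases, since $B$ is identical. Hence $\ga^p=\kappa^p=\gb^p$ in the full group. Equivalently, we can exhibit the Lagrangian anti-diagonal in $\ga^p\otimes(\kappa^{p})^{\rm op}$ directly; both arguments rest on the $c=3$ computation.

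For the normalized lifts, $\kappa$ is grade zero and hence central. Therefore $\hat\ga^p=\kappa^{-p}\ga^p=1$. Its order is exactly $p$ because the image of $\hat\ga$ in the quotient $SL(2,\Z_p)$ is nontrivial: $\ga$ has nonzero grade and so is not in $\Witt^\pt$. The same argument gives the result for $\hat\gb$.

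The main obstacle is checking carefully that the $s$-twist contributes no cross-terms among equal-axis factors in the odd-prime convention; this rests on $\sigma(a,a)=\sigma(b,b)=0$. A secondary point is ensuring that the identification of $L^\perp/L$ with the reduction of $\kappa^p$ respects the grading.
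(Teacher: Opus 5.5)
Your route is the paper's proof essentially step for step. You use the same diagonal form $Q(\vec x)=\sum_i x_i^2$ on $(\Z_p)^p$, with no cross-terms because $\sigma(a,a)=\sigma(b,b)=0$. You condense the same isotropic line $\langle(1,\dots,1)\rangle=G_0^\perp$ via Lemma~\ref{lem:cond}, note that $\kappa^p$ reduces to the identical grade-zero group, and give the same centrality argument for $\hat\ga^{\,p}=1$. Two points need repair; neither affects the main line.

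\emph{The exact-order step.} You conclude $\hat\ga\neq 1$ from ``$\ga$ has nonzero grade and so is not in $\Witt^\pt$''. A nonzero grading on a representative does not by itself exclude membership in $\Witt^\pt$. Your own $\ga^p$ is a counterexample: its grading $\vec x\mapsto(\sum_i x_i)a$ is nonzero, yet it equals $\kappa^p\in\Witt^\pt$. What makes the step work for $\ga$ is that its grading $\Z_p\to A\oplus\widehat A$ is \emph{injective}, so its grade-zero subgroup is trivial. Suppose $\ga$ equalled some grade-zero $w=(0:H\to A\oplus\widehat A,q_H)$. Then a Lagrangian $\Lambda$ of $\ga\otimes w^{\rm op}$ must lie in the grade-zero part $0\oplus H$. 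But the polarization is block-diagonal, so $\Z_p\oplus 0\subseteq\Lambda^\perp=\Lambda$, a contradiction. This is Lemma~\ref{lem:trivp}(iii). At this point the paper itself only asserts that the image of $\hat\ga$ in $SL(2,\Z_p)$ is nontrivial, and that lemma is what backs the assertion.

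\emph{The ``equivalently'' aside.} The anti-diagonal of $\ga^p\otimes(\kappa^p)^{\rm op}$ is not Lagrangian as written. The two factors carry different gradings, so $(g,-g)$ has grade $(\sum_i g_i)a$ and is not grade zero. An admissible Lagrangian is $\{(g+t(1,\dots,1),\,-g):\ \sum_i g_i=0,\ t\in\Z_p\}$. It is grade zero, isotropic since $\sum_i(g_i+t)^2-\sum_i g_i^2=2t\sum_i g_i+pt^2=0$, and of order $p^p$.
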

\begin{proof}
For $\ga^p$ the underlying group is $(\Z_p)^p$, every basis vector has
grade $a$, and the cross terms vanish because $\sigma(a,a)=0$.  Thus
\be
  Q(x_1,\ldots,x_p)=\sum_i x_i^2,\qquad
  f(x_1,\ldots,x_p)=\Big(\sum_i x_i\Big)a \,.
\ee
The grade-zero subgroup is $G_0=\{\sum_i x_i=0\}$, and
$G_0^\perp=\langle(1,\ldots,1)\rangle$.  This line is contained in $G_0$ and is
isotropic, since $Q(1,\ldots,1)=p=0$.  By Lemma~\ref{lem:cond}, condensing it
leaves the grade-zero metric group
$G_0/\langle(1,\ldots,1)\rangle$ with quadratic form $\sum_i x_i^2$.  The same
condensation applied to the ordinary grade-zero form
$\kappa^p=((\Z_p)^p\xrightarrow{0}A\oplus\widehat A;\sum_i x_i^2)$ gives
exactly this reduced metric group, so $\ga^p=\kappa^p$.  The proof for $\gb$ is
identical.

Since $\ord(\kappa)\in\{2,4\}$ is coprime to $p$, $\kappa^p$ generates
$\langle\kappa\rangle$, {a central subgroup}.  Therefore $(\kappa^{-1}\ga)^p=1$, and the image of
$\hat{\ga}$ in the quotient $SL(2,\Z_p)$ is a nontrivial element of order
$p$.  Hence $\hat\ga$ has exact order $p$, and similarly for $\hat\gb$.
\end{proof}
Explicitly, since $\ord(\kappa)\in\{2,4\}$,
\be
  \kappa^p=
  \begin{cases}
    \kappa,&p\equiv1\pmod4,\\
    \kappa^{-1},&p\equiv3\pmod4\,,
  \end{cases}
\ee
matching \eqref{eq:gapowercentral}.

\begin{theorem}[Full odd-prime pointed graded Witt group]\label{thm:WittZp}
For every odd prime $p$,
\be
  \Witt^\pt(\Z_p\oplus\Z_p,s)
  \cong SL(2,\Z_p)\times\Witt^\pt .
\ee
Equivalently, on the $p$-primary part,
\be
  \Witt^\pt({\Z_p \oplus \Z_p},s)_{p{\rm -primary}}
  \cong SL(2,\Z_p)\times W_p\,,
  \ee
  where $W_p$ is defined in (\ref{Wittp}).
\end{theorem}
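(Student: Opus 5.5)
The plan is to split the central extension \eqref{eq:oddp-exact} by exhibiting an explicit section $SL(2,\Z_p)\to\Witt^\pt(\Z_p\oplus\Z_p,s)$, using the normalized lifts $\hat\ga=\kappa^{-1}\ga$ and $\hat\gb=\kappa^{-1}\gb$ of Proposition~\ref{prop:apowerp}. Since $\Witt^\pt$ is central, it suffices to show that the subgroup $H:=\langle\hat\ga,\hat\gb\rangle$ meets $\Witt^\pt$ trivially and surjects onto $SL(2,\Z_p)$. Once that holds, the multiplication map $H\times\Witt^\pt\to\Witt^\pt(\Z_p\oplus\Z_p,s)$ is an isomorphism. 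It is onto because $H$ surjects onto the quotient, and it is injective because $H\cap\Witt^\pt=1$ and $\Witt^\pt$ is central. The $p$-primary statement follows by restricting to the $W_p$ summand, since the $\ell\neq p$ summands are grade zero and decouple.

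\textbf{Step 1: the image of $H$.} First compute the images $\pi(\hat\ga),\pi(\hat\gb)\in SL(2,\Z_p)$ (these are the transvections \eqref{eq:gagbImages}). To do this, read off how the interface $\mathfrak s_{v,k}$ acts on a bulk surface charge $x$: fusing with the generator of grade $v$ shifts $x$ by a multiple of $v$, and the multiple is fixed by the spin $k$ and by $\mathsf{Alt}(s)(v,x)$. This gives $\pi(\mathfrak s_{v,k})\colon x\mapsto x+c\,k^{-1}\mathsf{Alt}(v,x)\,v$ for a universal constant $c$, which I expect to be $\pm2^{-1}$. Two transvections along the independent axes $a$ and $b$ generate $SL(2,\Z_p)$, so $\pi(H)=SL(2,\Z_p)$.

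\textbf{Step 2: $H\cap\Witt^\pt=1$, the main obstacle.} The quickest route is cohomological. Because $|SL(2,\Z_p)|=p(p^2-1)$, only the $p$-primary and $2$-primary parts of $\Witt^\pt$ can carry an extension class. For $\ell\nmid p(p^2-1)$ the $\ell$-parts vanish automatically. The $p$-parts also vanish: the only $p$-torsion in $\Witt^\pt$ is trivial, since each $W_\ell$ is a $2$-group, so in fact every odd part dies. This leaves $H^2(SL(2,\Z_p);W)$ for the $2$-groups $W=W_\ell$ with trivial action. By universal coefficients it reduces to $\mathrm{Ext}(H_1,W)\oplus\mathrm{Hom}(H_2,W)$. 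For $p\ge5$, $SL(2,\Z_p)$ is perfect with trivial Schur multiplier, so both terms vanish. For $p=3$ one has $H_1=\Z_3$ and $H_2=0$, so again nothing $2$-primary survives. Hence the extension splits abstractly. This alone does not identify $H$ as the splitting subgroup, however. To get the explicit statement, verify directly that $\hat\ga,\hat\gb$ satisfy the defining relations of $SL(2,\Z_p)$ without central corrections, namely Theorem~\ref{thm:SToddMain}: $T^p=1$, $S^4=1$, $(ST)^3=S^2$ with $T=\hat\ga$, $S=\hat\ga^{-1}\hat\gb^{-4}\hat\ga^{-1}$. Then $H$ is a quotient of $SL(2,\Z_p)$ surjecting onto $SL(2,\Z_p)$, so $H\cong SL(2,\Z_p)$ and $H\cap\Witt^\pt=1$.

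\textbf{Step 3: checking the relations.} $T^p=1$ is Proposition~\ref{prop:apowerp}. For the remaining relations, each word in $\hat\ga,\hat\gb$ is a central element $w\in\Witt^\pt$ whose image in the quotient is trivial. Evaluate it by the route of Propositions~\ref{prop:ord3} and \ref{prop:transp}. Form the graded metric group on $(\Z_p)^n$ using \eqref{eq:arith-gen} with both cross terms, condense a grade-zero isotropic subgroup via Lemma~\ref{lem:cond}, and identify the residual grade-zero form with a power of $\kappa$ using the Gauss-sum (Arf-type) invariant of the quadratic form. Then check that the $\kappa^{-1}$ normalizations cancel that power exactly. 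The hard part is bookkeeping the cross term $2\sigma(kh,fg)$ and the coefficient $3$ in $\widetilde Q$. If the direct condensation becomes messy, I would fall back on the abstract splitting from Step 2 together with a Gauss-sum comparison: the Gauss-sum invariant of $w$ determines its class in $W_p$, so it is enough to check that this invariant equals~$1$.
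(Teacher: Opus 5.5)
Your cohomological argument in Step~2 is exactly the paper's proof, and together with centrality of $\Witt^\pt$ it already proves the theorem as stated. The ingredients are the universal coefficient reduction, the trivial Schur multiplier of $SL(2,\Z_p)$, perfectness for $p\ge5$, and $\mathrm{Ext}(\Z_3,\Witt^\pt)=0$ at $p=3$ because $\Witt^\pt$ is a $2$-group. Steps~1 and~3 are not needed for this statement. One small correction: your opening claim that only the $p$- and $2$-primary parts can carry a class is imprecise, since odd primes dividing $p^2-1$ could contribute for general coefficients. This does not matter here, because $\Witt^\pt$ is a $2$-group.

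The extra goal of showing that $\langle\hat\ga,\hat\gb\rangle$ is the splitting subgroup is the content of the paper's Lemma~\ref{lem:nocentralp} and Corollary~\ref{cor:exactlifts}, not of this theorem. There you do not need the word-by-word condensation of Step~3. Fix any splitting $\Witt^\pt\times\mathcal{S}$ obtained from the abstract result and project onto $\Witt^\pt$. By Proposition~\ref{prop:apowerp}, $\hat\ga^p=\hat\gb^p=1$, and $\Witt^\pt$ has exponent $8$, which is coprime to $p$. So both projections are trivial, which gives $\langle\hat\ga,\hat\gb\rangle\subseteq\mathcal{S}$ and hence $\langle\hat\ga,\hat\gb\rangle\cap\Witt^\pt=1$.

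Your stated fallback would not work as written. For $p\equiv1\bmod4$ the normalized Gauss sum of $\kappa$ equals $1$, the same as for the trivial class. More generally it takes only the values $\pm1$ on the four-element group $W_p\cong\Z_2\oplus\Z_2$, so it is not a complete invariant of $W_p$ and cannot certify that a central word is trivial.

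Similarly, your formula for $\pi(\mathfrak s_{v,k})$ in Step~1 is heuristic. For surjectivity of $\langle\hat\ga,\hat\gb\rangle$ you need that $\pi(\hat\ga)$ and $\pi(\hat\gb)$ are nontrivial transvections with distinct centers. The paper derives this from the order-$p$ property together with the equivariance of $\pi$.
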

\begin{proof}
The exact sequence (\ref{eq:oddp-exact}) identifies that the quotient is correct,  and the computation in Appendix~\ref{app:H5B3} shows that there is no
additional 
{$H^5(B^3(\Z_p \oplus \Z_p), \mathbb{C}^\times)$} contribution for odd $p$.

It remains to identify the extension in (\ref{eq:oddp-exact}).
Since \eqref{eq:oddp-exact} is a central extension, its class lives in
$H^2(SL(2,\Z_p);\Witt^\pt)$ with trivial action, and by the universal
coefficient theorem (integral homology throughout)
\be
\ba
H^2\big(SL(2,&\Z_p);\Witt^\pt\big)\\
\cong\ & \mathrm{Ext}^1\big(H_1(SL(2,\Z_p);\Z),\Witt^\pt\big)\\
&\oplus \mathrm{Hom}\big(H_2(SL(2,\Z_p);\Z),\Witt^\pt\big) \,.
\ea
\label{eq:H2UCT}
\ee
Both summands vanish.  The Schur multiplier $H_2(SL(2,\Z_p);\Z)$ is trivial
for every prime $p\ge3$.  For
$p\ge5$ the group $SL(2,\Z_p)$ is perfect, so $H_1(SL(2,\Z_p);\Z)=0$.
For $p=3$ one has $H_1(SL(2,3);\Z)=\Z_3$, and $\mathrm{Ext}^1(\Z_3,\Witt^\pt)=0$
because $\Witt^\pt$ is an abelian $2$-group (of exponent $8$, see \eqref{Wittp}).
Hence $H^2(SL(2,\Z_p);\Witt^\pt)=0$: the central extension splits, and
since the kernel is central the split extension is the direct product
\eqref{eq:WittZp-goal}.
The $p$-primary part follows by inserting the known summand $W_p$ of
the ordinary pointed Witt group.  All $\ell\neq p$ summands are purely
grade-zero and remain decoupled.
\end{proof}

\bigskip\noindent{\bf Proof of Theorem \ref{thm:SToddMain} (Modular relation)}
By \eqref{eq:oddp-exact} the quotient is $SL(2,\Z_p)$, realized
concretely as the action on the Witt clases by 
\be
\ba
  \pi:\ \Witt^\pt(A\oplus\widehat A,s)/\Witt^\pt
  \ \xrightarrow \ & \Aut^{\rm syp}(A\oplus\widehat A,\mathsf{Alt}(s))\\
  &=\ SL(2,\Z_p)\,,
\ea
\label{eq:pimapp}
\ee
sending a class to the symplectic automorphism it induces on the charges.  We
write $X\equiv Y$ for $\pi(X)=\pi(Y)$.  Since $\kappa$ is grade zero,
$\hat\ga\equiv\ga$ and $\hat\gb\equiv\gb$, so quotient computations may use the
undressed generators \eqref{eq:gensp}.  With
$\langle x,y\rangle:=\mathsf{Alt}(s)(x,y)=x_1y_2-x_2y_1$, the transvection of axis
$v\neq0$ and parameter $c\in\Z_p$,
\be
  t_{v,c}(x)=x+c\,\langle x,v\rangle\,v\,,
\label{eq:transvection}
\ee
fixes $\langle v\rangle$, lies in $SL(2,\Z_p)$, and satisfies
$t_{v,c}\,t_{v,c'}=t_{v,c+c'}$, $t_{\mu v,c}=t_{v,\mu^2c}$, and
$M\,t_{v,c}\,M^{-1}=t_{Mv,c}$ for $M\in SL(2,\Z_p)$.  As
$SL(2,\Z_p)=\Aut^{\rm syp}$ acts on the classes by relabeling the grading,
$f\mapsto M\circ f$, and $\pi$ records the induced charge-action, $\pi$ is
equivariant,
\be
  M_*\mathfrak s_{v,k}=\mathfrak s_{Mv,k}\,,\qquad \pi(M_*X)=M\,\pi(X)\,M^{-1}\,.
\label{eq:equivariance}
\ee
We abbreviate a rank-$2$ representative (a metric group with $G=\Z_p^2$) by
\be
  [\,g_1,g_2;\ d_1,d_2;\ B_{12}\,]:=\big(f:\Z_p^2\to A\oplus\widehat A;\ q=\zeta_p^{Q}\big)\,,
\label{eq:rank2abbrev}
\ee
with $f(x,y)=x\,g_1+y\,g_2$ and $Q(x,y)=d_1x^2+d_2y^2+B_{12}\,xy$, so the basis
grades are $g_1,g_2$, the diagonal values $d_i=Q(e_i)$ and the polarization
$B_{12}=B(e_1,e_2)$.  Furthermore note that the values of $\tau$ in \eqref{eq:arith-gen} are 
\be\label{eq:tauvalues}
  \tau(a,b)=1\,,\qquad \tau(b,a)=2\,,\qquad \tau(a,a)=\tau(b,b)=0\,.
\ee
Every step below divides only by $2$ and $4$, so it holds for {\it every} odd
prime $p$.  The sole $p$-dependent instance is $\tfrac32\equiv0$ at $p=3$, which
we flag where it arises.

\begin{lemma}[Reduction and triviality]\label{lem:trivp}
Let $X=(f:G\to A\oplus\widehat A,\,q)$ be $s$-invertible and $G_0=\ker f$.
\begin{itemize}
\item[(i)] If $N\subseteq G_0$ has $B|_N$ non-degenerate, then
$X\equiv(f|:N^{\perp}\to A\oplus\widehat A,\,q|)$ with $q|$ the restriction: a
non-degenerate grade-zero subgroup splits off.
\item[(ii)] If $L\subseteq G_0$ is isotropic with $L^{\perp}\subseteq G_0$,
then $X\equiv 1$.
\item[(iii)] If $f$ is injective and $G\neq 0$, then $X\not\equiv 1$.
\end{itemize}
\end{lemma}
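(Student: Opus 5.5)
We prove the three parts in order, using only the $s$-twisted arithmetic \eqref{eq:arith-gen}, the Condensation Lemma~\ref{lem:cond}, and the definition of $\equiv$ as equality of images under $\pi$ in \eqref{eq:pimapp}. Equivalently, $X\equiv Y$ means that $X$ and $Y$ agree modulo the central grade-zero subgroup $\Witt^\pt$.

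\emph{Part (i).} Because $B|_N$ is non-degenerate and $N\subseteq G_0$, we have an orthogonal splitting $G=N\oplus N^\perp$. On $N$ the grading vanishes, so every $\sigma$-cross term in \eqref{eq:arith-gen} between $N$ and $N^\perp$ is zero. Hence $X$ is literally the $s$-twisted product of the grade-zero metric group $(N\xrightarrow{0},q|_N)$ with $(f|:N^\perp\to A\oplus\widehat A,q|)$. The first factor is an element of $\Witt^\pt$, so it maps to $1$ under $\pi$, and $X\equiv(f|_{N^\perp},q|)$ follows. To confirm that the splitting is a genuine isomorphism of graded premetric groups, we check that $Q(n+m)=Q(n)+Q(m)$ for $n\in N$ and $m\in N^\perp$. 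This holds because $B(n,m)=0$, and no syllepsis term enters since the decomposition lives inside a single object.

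\emph{Part (ii).} Let $L\subseteq G_0$ be isotropic with $L^\perp\subseteq G_0$. By Lemma~\ref{lem:cond}, $X$ equals $(\bar f:L^\perp/L,\bar q)$ already in the full group. Since $L^\perp\subseteq G_0$, the grading $\bar f$ is identically zero. The condensed object is therefore a grade-zero metric group, i.e.\ an element of $\Witt^\pt$, and so $X\equiv1$. This is exactly the criterion of Lemma~\ref{lem:criterion}, now derived from the framework.

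\emph{Part (iii).} This is the main obstacle, since it needs an invariant that distinguishes classes and does not merely exhibit equalities. Our plan is to compute $\pi(X)$ directly. For injective $f$ we have $G_0=0$, so $G$ embeds in $\Z_p^2$ and is either a line $\langle v\rangle$ or all of $\Z_p^2$.
\begin{itemize}
\item[] If $G=\langle v\rangle$, then $X=\mathfrak s_{v,k}$ for some $k$. We will show that $\pi(\mathfrak s_{v,k})$ is a transvection $t_{v,c}$ with $c\neq0$; the expected value is $c=-(2k)^{-1}$, consistent with \eqref{eq:gagbImages}. By equivariance \eqref{eq:equivariance} it suffices to treat $v=a$. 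There we determine the induced charge action by locating, in $\mathfrak s_{a,k}\otimes\mathfrak s_{a,k}^{\rm op}$-type products with test objects, which charges are transported. Concretely, we compute $\mathfrak s_{a,k}\,\gb\,\mathfrak s_{a,k}^{-1}$ with rank-$2$ representatives \eqref{eq:rank2abbrev}, condense the grade-zero isotropic line, and read off the new axis. If this axis differs from $\langle b\rangle$, then $\pi(\mathfrak s_{a,k})\neq1$.
\item[] If $G=\Z_p^2$, we reduce to a product of two rank-one anyons on independent axes. We then check nontriviality by showing that the resulting element of $SL(2,\Z_p)$ is not the identity, namely that the product of transvections along the two independent axes is nontrivial.
\end{itemize}
The alternative route is to argue by contradiction: $X\equiv1$ would force $X$ to equal a grade-zero class, so $X\otimes W^{\rm op}$ would admit a Lagrangian inside the grade-zero part for some $W\in\Witt^\pt$. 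That grade-zero part is $0\oplus W$, and a Lagrangian supported there cannot have order $\sqrt{|G||W|}>\sqrt{|W|}$ while remaining orthogonal to the graded summand. This counting argument avoids computing $\pi$ and is the version we would try first.
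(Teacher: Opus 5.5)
Your parts (i) and (ii) are the paper's arguments. The ``alternative route'' you sketch for (iii) is exactly the paper's proof: $X\equiv 1$ forces $X\otimes W^{\rm op}$ to have a Lagrangian $\Lambda$ inside its grade-zero part $0\oplus H$. Commit to it.

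One tightening is needed. Do not assert $|\Lambda|=\sqrt{|G|\,|H|}$, because $B$ on $G$ may be degenerate. The paper needs (iii) for $\Gamma=[\,a,b;\ 0,0;\ \tfrac32\,]$, and at $p=3$ this has $B\equiv 0$. The paper instead combines two inequalities:
\begin{itemize}
\item $|\Lambda|^2\le|H|$, from isotropy inside the metric group $H$;
\item $|\Lambda|^2\ge |G|\,|H|$, from $\Lambda^\perp=\Lambda$.
\end{itemize}
A more direct version is also available. Since $W$ has zero grade, every $\tau$-term vanishes and the polarization of $X\otimes W^{\rm op}$ is block-diagonal. Hence $G\oplus 0\subseteq\Lambda^\perp=\Lambda\subseteq 0\oplus H$, which forces $G=0$.

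Drop the route you label as your main plan, computing $\pi(X)$ directly; within the paper's logic it is circular.
\begin{itemize}
\item The identification of $\pi(\mathfrak s_{v,k})$ as a nontrivial transvection (Proposition~\ref{prop:intrinsicmap}) rests on two facts that are themselves instances of (iii). The first is that $\mathfrak s_{v,k}$ has order exactly $p$, not $1$, in the quotient; this is (iii) for an injectively graded rank-one object. The second is that $\Gamma\not\equiv 1$, which is what fixes $\gamma$.
\item Computing $\mathfrak s_{a,k}\,\gb\,\mathfrak s_{a,k}^{-1}$ does give a graded anyon on the axis $\langle a+2kb\rangle$. But concluding $\pi(\mathfrak s_{a,k})\neq 1$ from this requires knowing that graded anyons on different axes are inequivalent. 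That is (iii) again, applied to an injectively graded rank-two object.
\end{itemize}
Your rank-two step also fails concretely. An injectively graded $G=\Z_p^2$ cannot in general be written as $\mathfrak s_{u,j}\otimes\mathfrak s_{w,k}$ with $j,k\neq0$. For $\Gamma$ at $p=3$, $Q$ vanishes identically on $G$, so no basis vector has a nonzero level, and this is exactly the case in which the lemma is invoked.
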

\begin{proof}
(i) $B|_N$ non-degenerate gives the orthogonal splitting $G=N\perp N^{\perp}$.
This is a splitting of \emph{graded} premetric groups because every $\tau$-term
against the zero grade of $N$ vanishes, and the ungraded factor $N$ lies in
$\Witt^\pt$.
(ii) Condensing $L$ (Lemma~\ref{lem:cond}) represents $X$ by
$(\bar f:L^{\perp}/L\to A\oplus\widehat A,\,\bar q)$, and $L^{\perp}\subseteq G_0$
forces $\bar f=0$, so the class is grade zero.
(iii) If $X\equiv1$, then $X\otimes w^{\rm op}$ is $A \oplus \widehat{A}$-trivial for a grade-zero
$w=(0:H\to A\oplus\widehat A,\,q_H)$, with a Lagrangian
$\Lambda\subseteq(G\oplus H)_0=0\oplus H$.  Block-diagonality of the
polarization gives $|\Lambda|^2\le|H|$, while $\Lambda^{\perp}=\Lambda$ gives
$|\Lambda|^2\ge|G|\,|H|$, so $|G|=1$.  (No non-degeneracy of $B$ on $G$ is
used, which matters at $p=3$ below.)
\end{proof}

\begin{lemma}[No central corrections]\label{lem:nocentralp}
$\langle \hat\ga,\hat\gb\rangle\cap \Witt^\pt=1$, hence $\pi$ is injective on
$\langle \hat\ga,\hat\gb\rangle$, and any word $\omega$ in
$\hat\ga^{\pm1},\hat\gb^{\pm1}$ with $\omega\equiv 1$ satisfies $\omega=1$.
\end{lemma}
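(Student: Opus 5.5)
The plan is to show that $H:=\langle\hat\ga,\hat\gb\rangle$ is a central extension of its image $\pi(H)\subseteq SL(2,\Z_p)$ by $K:=H\cap\Witt^\pt$. I will then prove that this extension splits, and finally use the fact that $H$ is generated by elements of order $p$ to force the $2$-group $K$ to vanish. The injectivity of $\pi$ on $H$, and the statement about words $\omega\equiv1$, are then immediate: $\omega\equiv1$ means $\omega\in\ker\pi|_H=K=1$.

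\emph{Step 1 (the image).} By \eqref{eq:equivariance} and the definition of $\pi$ as the induced action on charges, $\pi(\hat\ga)=\pi(\ga)$ fixes the grade line $\langle a\rangle$ pointwise. Since $\hat\ga$ has exact order $p$ (Proposition~\ref{prop:apowerp}) and $\pi(\hat\ga)\neq1$ (because $\ga\notin\Witt^\pt$, being nonzero-graded with injective grading, cf.\ Lemma~\ref{lem:trivp}(iii)), $\pi(\hat\ga)$ is a nontrivial transvection $t_{a,c}$. Likewise $\pi(\hat\gb)=t_{b,c'}$ with $c'\neq0$. I do not need the precise parameters here. Two nontrivial transvections with independent axes generate $SL(2,\Z_p)$, which is the classical fact that $\begin{pmatrix}1&*\\0&1\end{pmatrix}$ and $\begin{pmatrix}1&0\\ *&1\end{pmatrix}$ generate $SL(2,\mathbb F_p)$. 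Hence $\pi(H)=SL(2,\Z_p)$.

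\emph{Step 2 (splitting).} We have $K\subseteq\Witt^\pt$, which is central, so
\[
1\to K\to H\to SL(2,\Z_p)\to1
\]
is a central extension. Here $K$ is abelian of exponent dividing $8$ by \eqref{Wittp}. The universal coefficient computation \eqref{eq:H2UCT} in the proof of Theorem~\ref{thm:WittZp} applies verbatim with $\Witt^\pt$ replaced by $K$: the Schur multiplier vanishes, and $\mathrm{Ext}^1(H_1(SL(2,\Z_p)),K)=0$ since $H_1$ is $0$ or $\Z_3$. Hence $H\cong SL(2,\Z_p)\times K$.

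\emph{Step 3 (killing $K$).} Abelianizing gives $H^{\rm ab}\cong H_1(SL(2,\Z_p))\times K$, which is $K$ for $p\ge5$ and $\Z_3\times K$ for $p=3$. Projecting $H^{\rm ab}$ onto the factor $K$ is a surjection. Since $H^{\rm ab}$ is abelian and generated by the images of $\hat\ga,\hat\gb$, each of order dividing $p$, $K$ is generated by elements whose order divides $p$. But $K$ is a $2$-group and $p$ is odd, so $K=1$.

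The only step needing real care is Step 1, namely justifying that $\pi(\hat\ga),\pi(\hat\gb)$ are nontrivial transvections on the two distinct axes without circularly invoking the later explicit formula \eqref{eq:gagbImages}. I would argue this directly from the fact that $\pi$ records the charge action of the graded anyon, whose fixed line is its grade line. If that turns out delicate, it suffices to know only that $\pi(\hat\ga),\pi(\hat\gb)$ are nontrivial elements of order $p$ fixing $\langle a\rangle$ and $\langle b\rangle$ respectively, which again forces them to be transvections on those axes.
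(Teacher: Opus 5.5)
Your argument is correct. It uses the same two ingredients as the paper's proof: the vanishing of $H^2(SL(2,\Z_p);M)$ for $M$ of exponent dividing $8$, and $\gcd(p,8)=1$ together with $\hat\ga^p=\hat\gb^p=1$. The difference is that you apply them to a different extension.

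The paper uses the global splitting $\Witt^\pt(A\oplus\widehat A,s)=\Witt^\pt\times\mathcal S$ already established in Theorem~\ref{thm:WittZp}. It projects onto $\Witt^\pt$ and notes that $\hat\ga,\hat\gb$ map to elements of order dividing $\gcd(p,8)=1$, so $\langle\hat\ga,\hat\gb\rangle\subseteq\mathcal S$. This needs no information about $\pi(\hat\ga),\pi(\hat\gb)$.

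You instead split the sub-extension $1\to K\to H\to\pi(H)\to1$, which forces you to prove $\pi(H)=SL(2,\Z_p)$ first (your Step~1). As you suspected, the paper's definition of $\pi$ does not directly give that $\pi(\hat\ga)$ fixes its grade line. The rigorous version is the equivariance argument in the first half of Proposition~\ref{prop:intrinsicmap}:
\begin{itemize}
\item order $p$ in the quotient makes $\pi(\mathfrak s_{a,k})$ a transvection;
\item the commuting family $\{\pi(\mathfrak s_{a,k})\}_k$ has a common center, which is fixed by the stabiliser $B_a$ of $\langle a\rangle$, hence equals $\langle a\rangle$;
\item conjugating by the swap gives $\langle b\rangle$ for $\hat\gb$.
\end{itemize}
That proposition does not use the present lemma, so citing it is not circular.

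The trade-off is as follows. Your route is longer and front-loads the axis identification. In exchange, it does not need the global splitting or surjectivity of the full quotient map onto $SL(2,\Z_p)$, and it yields Corollary~\ref{cor:exactlifts} ($\langle\hat\ga,\hat\gb\rangle\cong SL(2,\Z_p)$) at the same time. The paper's route is a two-line consequence of Theorem~\ref{thm:WittZp}.
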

\begin{proof}
The extension \eqref{eq:oddp-exact} splits (Theorem~\ref{thm:WittZp}).  Fix
$\Witt^\pt(A\oplus\widehat A,s)=\Witt^\pt\times \mathcal{S}$ and let $w$ be the
projection to $\Witt^\pt$.  By Proposition~\ref{prop:apowerp}, $\hat\ga^{\,p}=1$,
so $w(\hat\ga)^p=1$.  As $\Witt^\pt$ has exponent $8$ (see \eqref{Wittp}) and
$\gcd(p,8)=1$, this gives $w(\hat\ga)=1$, and likewise $w(\hat\gb)=1$.  Thus
$\langle\hat\ga,\hat\gb\rangle\subseteq\mathcal{S}=\ker w$, which meets
$\Witt^\pt$ only in $1$.
\end{proof}

\begin{lemma}[Products of graded anyons]\label{lem:axissemion}
For $v\in\{a,b\}$ and $k,l\neq 0$, along a fixed axis $v$
\be
  \mathfrak{s}_{v,k}^{\,\rm op}=\mathfrak{s}_{v,-k}\,,\qquad
  \mathfrak{s}_{v,k}\otimes \mathfrak{s}_{v,l}\equiv
  \begin{cases}
    \mathfrak{s}_{v,\,kl/(k+l)}\,, & k+l\neq 0\,,\\[1mm]
    1\,, & k+l=0\,,
  \end{cases}
\label{eq:semionprod}
\ee
so $\hat\ga^{\,n}\equiv \mathfrak{s}_{a,1/n}$, $\hat\gb^{\,n}\equiv
\mathfrak{s}_{b,1/n}$ for $p\nmid n$.  Along independent axes,
$\mathfrak{s}_{u,j}\otimes\mathfrak{s}_{w,k}=[\,u,w;\ j,k;\ \tau(u,w)\,]$, with
no reduction.
\end{lemma}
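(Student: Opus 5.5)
The plan is to compute each product directly from the $s$-twisted arithmetic \eqref{eq:arith-gen}, and then reduce it in the quotient using Lemma~\ref{lem:trivp}. The relation $\equiv$ only requires equality of the images under $\pi$. Grade-zero metric summands may therefore be discarded freely, which is exactly what Lemma~\ref{lem:trivp}(i) allows.

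\emph{Opposite.} By \eqref{eq:arith-gen}, $\widetilde Q(1)=-Q(1)+3\,\sigma(v,v)$. For $v\in\{a,b\}$ we have $\sigma(a,a)=\sigma(b,b)=0$, so $\widetilde Q(1)=-k$. This gives $\mathfrak s_{v,k}^{\rm op}=\mathfrak s_{v,-k}$ on the nose, not merely modulo $\Witt^\pt$.

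\emph{Product along a fixed axis.} The product $\mathfrak s_{v,k}\otimes\mathfrak s_{v,l}$ is the rank-$2$ form $[\,v,v;\ k,l;\ \tau(v,v)\,]=[\,v,v;\ k,l;\ 0\,]$ by \eqref{eq:tauvalues}. Its quadratic form is $Q(x,y)=kx^2+ly^2$, its polarization is $B((x,y),(x',y'))=2kxx'+2lyy'$, and its grading is $(x+y)v$. The grade-zero subgroup is therefore the line $G_0=\langle(1,-1)\rangle$, on which $Q(1,-1)=k+l$.

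\emph{Case $k+l\neq0$.} Here $B|_{G_0}$ is non-degenerate, since $B((1,-1),(1,-1))=2(k+l)$ and $2$ is invertible. Lemma~\ref{lem:trivp}(i) then reduces the product to $G_0^\perp$. This is the line defined by $kx=ly$, generated by $(l,k)$. Its grade is $(k+l)v$ and its spin exponent is $kl^2+lk^2=kl(k+l)$. Using the relabelling $\mathfrak s_{\lambda v,m}\cong\mathfrak s_{v,m/\lambda^2}$, which is the automorphism of $\Z_p$ replacing the generator $g$ by $\lambda^{-1}g$, with $\lambda=k+l$, we obtain $\mathfrak s_{v,kl/(k+l)}$.

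\emph{Case $k+l=0$.} The line $G_0$ is isotropic. Moreover $B((x,y),(1,-1))=2k(x+y)$, so $G_0^\perp=G_0$. Lemma~\ref{lem:trivp}(ii) then gives $\equiv1$.

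\emph{Powers.} Since $\kappa$ is grade zero, $\hat\ga\equiv\ga=\mathfrak s_{a,1}$. I will induct on $n=1,\dots,p-1$ using the formula just proved: $\mathfrak s_{a,1/n}\otimes\mathfrak s_{a,1}\equiv\mathfrak s_{a,(1/n)/(1/n+1)}=\mathfrak s_{a,1/(n+1)}$. This step is valid because $1/n+1\neq0$ exactly when $n+1\not\equiv0\pmod p$. For $n=p$ the result is trivial, consistent with Proposition~\ref{prop:apowerp}. Because $\hat\ga$ has exact order $p$, the statement for a general $p\nmid n$ follows by reducing $n$ mod $p$, and $1/n$ depends only on $n$ mod $p$. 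The argument for $\hat\gb$ is the same.

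\emph{Independent axes.} Writing out \eqref{eq:arith-gen} gives $G=\Z_p^2$ with grades $u,w$, diagonal values $j,k$, and cross term $\tau(u,w)$. Since $u$ and $w$ span $A\oplus\widehat A$, the grading $f$ is injective, so $G_0=0$. No condensation or splitting is available, hence "no reduction". By Lemma~\ref{lem:trivp}(iii) this class is also non-trivial.

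The main obstacle is bookkeeping rather than any idea. One must keep the conventions straight: $Q$ is the spin exponent with $Q(x)=2^{-1}B(x,x)$, so the factor $2$ appears in $B$ but not in $Q$. The rescaling $\lambda v\mapsto v$ must divide the level by $\lambda^2$, not by $\lambda$. Finally, the proof must use only divisions by $2$ and by $k+l$, so that it holds uniformly for every odd prime, including $p=3$.
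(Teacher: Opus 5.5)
Your proposal is correct and follows essentially the same route as the paper. You read off the opposite from $\widetilde Q=-Q+3\sigma\circ f$ with $\sigma(v,v)=0$. You then split off the non-degenerate grade-zero line $\langle(1,-1)\rangle$, or condense it when $k+l=0$, via Lemma~\ref{lem:trivp}, rescale the surviving generator $(l,k)$, induct for the powers, and observe that independent axes give an injective grading with $G_0=0$. You just make explicit computations the paper leaves implicit, namely $B$, $G_0^\perp$ in the isotropic case, and the inductive step $\mathfrak s_{a,1/n}\otimes\mathfrak s_{a,1}\equiv\mathfrak s_{a,1/(n+1)}$.
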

\begin{proof}
Since $\sigma(v,v)=0$, the opposite in \eqref{eq:arith-gen} is
$\widetilde Q(x)=-kx^2=\mathfrak{s}_{v,-k}$.  The same-axis product is
$Q(x,y)=kx^2+ly^2$, with grade-zero line $L=\langle(1,-1)\rangle$ on which
$Q=(k+l)t^2$.  If $k+l=0$ then $L$ is isotropic with $L^\perp=L$, so
$\equiv1$ by Lemma~\ref{lem:trivp}(ii).  If $k+l\neq0$ then splitting off this
non-degenerate grade-zero line (Lemma~\ref{lem:trivp}(i)) leaves
$\langle(l,k)\rangle$, of grade $(k+l)v$ and $Q=kl(k+l)$, i.e.\
$\mathfrak{s}_{v,\,kl/(k+l)}$ after rescaling, and induction gives the power
formula.  An independent-axis product is injectively graded, hence already
reduced.
\end{proof}

\begin{proposition}\label{prop:intrinsicmap}
For every $k\neq0$,
\be
\ba
  \pi(\mathfrak{s}_{a,k})&=t_{a,\,-1/2k}=\left(\begin{smallmatrix}1&\tfrac1{2k}\\[1mm]0&1\end{smallmatrix}\right),\\
  \pi(\mathfrak{s}_{b,k})&=t_{b,\,-1/2k}=\left(\begin{smallmatrix}1&0\\[1mm]-\tfrac1{2k}&1\end{smallmatrix}\right),
\ea
\label{eq:semionmatrices}
\ee
in particular $\pi(\hat\ga)=\left(\begin{smallmatrix}1&\tfrac12\\0&1\end{smallmatrix}\right)$ and $\pi(\hat\gb)=\left(\begin{smallmatrix}1&0\\-\tfrac12&1\end{smallmatrix}\right)$, the images \eqref{eq:gagbImages} identified with qudit Clifford gates in Section~\ref{sec:qudit}.
\end{proposition}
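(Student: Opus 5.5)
The plan is to compute $\pi$ on a single graded anyon directly, and to cross-check it against the structural constraints already established: Lemma~\ref{lem:axissemion} and the equivariance \eqref{eq:equivariance}. First I reduce the $b$-axis case to the $a$-axis case. The matrix $M=\left(\begin{smallmatrix}0&-1\\1&0\end{smallmatrix}\right)\in SL(2,\Z_p)$ sends $a\mapsto b$, so by \eqref{eq:equivariance} we get $\pi(\mathfrak s_{b,k})=\pi(M_*\mathfrak s_{a,k})=M\,t_{a,c}\,M^{-1}=t_{Ma,c}=t_{b,c}$ with the same parameter $c$. With $\langle x,a\rangle=-x_2$ and $\langle x,b\rangle=x_1$, the transvections take the forms $t_{a,c}=\left(\begin{smallmatrix}1&-c\\0&1\end{smallmatrix}\right)$ and $t_{b,c}=\left(\begin{smallmatrix}1&0\\c&1\end{smallmatrix}\right)$. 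So both matrices in \eqref{eq:semionmatrices} follow once $c=-1/2k$ is established on the $a$-axis.

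Second, I show that $\pi(\mathfrak s_{a,k})$ is a transvection along $\langle a\rangle$, and I extract its $k$-dependence. The interface defined by $\mathfrak s_{a,k}$ acts on a bulk charge $x$ as follows: $x$ passes through, and must be dressed by an anyon $n\cdot 1\in\Z_p$ whose braiding $B(n,1)=2kn$ with the generator cancels the monodromy of $x$ with the grade $a$ of that generator. That monodromy is $\mathsf{Alt}(s)(x,a)=\langle x,a\rangle$, read off from the cross term $\tau$ in \eqref{eq:arith-gen}. The outgoing charge is then $x+f(n\cdot1)=x+n\,a$ with $n=\epsilon\langle x,a\rangle/(2k)$ for a universal sign $\epsilon$. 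Hence $\pi(\mathfrak s_{a,k})=t_{a,\epsilon/2k}$.

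As an independent check on the functional form, Lemma~\ref{lem:axissemion} gives $\mathfrak s_{a,k}\otimes\mathfrak s_{a,l}\equiv\mathfrak s_{a,kl/(k+l)}$, and transvections along a fixed axis compose additively, $t_{a,c}t_{a,c'}=t_{a,c+c'}$. Writing $\pi(\mathfrak s_{a,k})=t_{a,c(k)}$, this forces $c(k)+c(l)=c\big(kl/(k+l)\big)$, i.e.\ $1/c$ is additive in $1/k$, so $c(k)=\lambda/k$. The case $k+l=0$, where the product is trivial, gives consistency $c(-k)=-c(k)$. This confirms the $1/k$ scaling independently of the microscopic argument, leaving only the constant $\lambda=\epsilon/2$.

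The main obstacle is pinning down $\lambda=-\tfrac12$ exactly, including its sign, which depends on the ordering convention in $s(x,y)=\zeta_p^{x_1y_2}$ and on $\tau(a,b)=1$, $\tau(b,a)=2$. My first attempt would be a probe computation: fuse $\mathfrak s_{a,k}$ with a test graded anyon $\mathfrak s_{b,l}$ on the independent axis, $[\,a,b;\,k,l;\,1\,]$, and compare with conjugating $\mathfrak s_{b,l}$ by $\mathfrak s_{a,k}$. By equivariance, $\pi(\mathfrak s_{a,k})$ then moves the axis of $\mathfrak s_{b,l}$ to $\langle t_{a,c}b\rangle=\langle b-c\,a\rangle$. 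I would compute $\mathfrak s_{a,k}\,\mathfrak s_{b,l}\,\mathfrak s_{a,k}^{\rm op}$ explicitly as a rank-$3$ form, using $\mathfrak s_{a,k}^{\rm op}=\mathfrak s_{a,-k}$. Then I condense the grade-zero isotropic line $\langle e_1+\mu e_3\rangle$ via Lemma~\ref{lem:cond}, exactly as in Proposition~\ref{prop:c}, and read off the grade of the surviving rank-one class. The coefficient of $a$ in that grade is $-c$, which should come out as $1/2k$, giving $c=-1/2k$. If the bookkeeping of the $\tau$ asymmetry becomes ambiguous, a fallback is to fix the sign by requiring $S^4=1$ and $(ST)^3=S^2$ for the normalized lifts of Theorem~\ref{thm:SToddMain}, together with $\pi(C)=-\mathbf 1$. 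Those relations hold only for one sign choice in the matrices.
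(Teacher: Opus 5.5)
Your route differs from the paper's in how $c(k)$ is pinned down.

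\textbf{What the paper does.} It never computes $\pi$ directly. It gets a transvection because $\mathfrak s_{a,k}$ has order $p$ in the quotient. It gets the center $\langle a\rangle$ from commutativity along the axis plus equivariance under the Borel stabilizer of $\langle a\rangle$. It gets $c(k)=\gamma/k$ from Lemma~\ref{lem:axissemion} and $\gamma_b=\gamma_a$ from the swap. It then obtains only $\gamma^2=\tfrac14$, from $\mathrm{tr}\,\pi(S)=0$ (since $S^2\equiv\Gamma$ is a nontrivial involution). The sign $\gamma=-\tfrac12$ is then \emph{chosen} to match Section~\ref{sec:qudit}.

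\textbf{Your probe does better.} Consider $\mathfrak s_{a,k}\mathfrak s_{b,l}\mathfrak s_{a,-k}$, with grades $(a,b,a)$, $Q(e_i)=k,l,-k$, $B_{12}=\tau(a,b)=1$, $B_{13}=0$, $B_{23}=\tau(b,a)=2$. The line $L=\langle e_1-e_3\rangle$ is grade zero and isotropic, and $L^\perp=\{y_2=2k(y_1+y_3)\}$. So $L^\perp/L$ is generated by $[e_1+2ke_2]$, of grade $a+2kb$. Matching $\langle b-ca\rangle=\langle b+\tfrac1{2k}a\rangle$ gives $c=-\tfrac1{2k}$ for each $k$ separately, sign included. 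Once $\pi(\mathfrak s_{a,k})=t_{a,c}$ is known, the $1/k$ law, the $\gamma^2$ argument and the sign convention all become unnecessary.

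Only the grade line of the surviving class enters. Its level is $3k+4k^2l$, so choose $l$ making this nonzero if you want it to be an $\mathfrak s_{w,m}$ with $m\neq0$. The price of your route is that identifying the center of $\pi$ of a mixed-axis anyon with its grade line uses \eqref{eq:equivariance} for a general $M\in SL(2,\Z_p)$. The paper's argument needs it only for the Borel subgroup and the swap.

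\textbf{Two things need fixing.} First, your step~2 is not a proof. The dressing rule for how a bulk charge crosses the interface is a physical picture of $\pi$, not something the paper defines or derives. Yet it is your only justification that $\pi(\mathfrak s_{a,k})$ is a transvection centered on $\langle a\rangle$, which is exactly the input the probe needs. Replace it with the paper's structural argument. Order $p$ forces a nontrivial unipotent, hence a transvection. The commuting family along the $a$-axis has a common center. Every element of the Borel stabilizer of $\langle a\rangle$ must fix that center, and the only such line is $\langle a\rangle$.

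Second, your fallback is wrong. The flip $\gamma\mapsto-\gamma$ is conjugation by $\mathrm{diag}(1,-1)$, an automorphism of $SL(2,\Z_p)$ that fixes $-\mathbf 1$. So $S^4=1$, $(ST)^3=S^2$ and $\pi(C)=-\mathbf 1$ hold for both signs, as the paper's own footnote notes. No group relation among $\pi(\hat{\mathfrak a}),\pi(\hat{\mathfrak b})$ can detect the sign. It is visible only in data such as the grade of a mixed-axis class, which is exactly what your probe uses.
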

\begin{proof}
\emph{Transvection.}  Each $\mathfrak{s}_{v,k}$ has order $p$ in the quotient
(the computation of Proposition~\ref{prop:apowerp}, which uses only
$\sigma(v,v)=0$), so $\pi(\mathfrak{s}_{v,k})$ is a non-trivial unipotent, i.e.\
a transvection.

\emph{Center $=$ grade.}  Along a fixed axis the anyons commute (\eqref{eq:semionprod} is symmetric in $k,l$), so $\{\pi(\mathfrak{s}_{a,k})\}_k$
is a commuting family of transvections and shares a single center $v_a$,
independent of $k$.  For $M$ in the stabiliser $B_a$ of $\langle a\rangle$ with
$Ma=\lambda a$, rescaling the  generator gives
$M_*\mathfrak{s}_{a,k}=\mathfrak{s}_{a,\,k/\lambda^2}$, so by
\eqref{eq:equivariance} $M\pi(\mathfrak{s}_{a,k})M^{-1}=\pi(\mathfrak{s}_{a,k/\lambda^2})$
again has center $v_a$, hence $Mv_a=v_a$ for every $M\in B_a$.  The only
$B_a$-fixed line of $\mathbb P^1(\Z_p)$ is $\langle a\rangle$, so
$v_a=\langle a\rangle$, and likewise $v_b=\langle b\rangle$.

Writing $\pi(\mathfrak{s}_{a,k})=t_{a,c(k)}$, the map
$k\mapsto c(k)$ is a homomorphism to $(\Z_p,+)$ and \eqref{eq:semionprod}
gives $c(kl/(k+l))=c(k)+c(l)$.  As $1/(kl/(k+l))=1/k+1/l$, this forces
$c(k)=\gamma_a/k$ for a constant $\gamma_a$, and similarly $c(k)=\gamma_b/k$ on
the $b$-axis.

The swap $S_0=\left(\begin{smallmatrix}0&-1\\1&0\end{smallmatrix}\right)$ has $S_0a=b$, so
\eqref{eq:equivariance} gives
$\pi(\mathfrak{s}_{b,k})=S_0\,\pi(\mathfrak{s}_{a,k})\,S_0^{-1}=t_{b,\gamma_a/k}$,
i.e.\ $\gamma_b=\gamma_a=:\gamma$.  Thus $\pi(\hat\ga)=t_{a,\gamma}$,
$\pi(\hat\gb)=t_{b,\gamma}$, and
\be
\ba
  \pi(S)&=\left(\begin{smallmatrix}1-4\gamma^2&\gamma(2-4\gamma^2)\\[1mm]-4\gamma&1-4\gamma^2\end{smallmatrix}\right),\\
  \mathrm{tr}\,\pi(S)&=2(1-4\gamma^2)\,.
\ea
\label{eq:piS}
\ee
It remains to fix $\gamma$, and for this we use that $S^2$ is a non-trivial
involution.  By Lemma~\ref{lem:axissemion},
\be
\ba
  S\ &\equiv\ \Sigma:=[\,a-2b,\ a+2b;\ -4,2;\ 0\,]\,,\\
  S^2\ &\equiv\ \Gamma:=[\,a,b;\ 0,0;\ \tfrac32\,]\,,
\ea
\label{eq:SGamma}
\ee
where $\Sigma$ is the rank-$3$ product
$\mathfrak{s}_{a,-1}\mathfrak{s}_{b,-1/4}\mathfrak{s}_{a,-1}$ reduced by splitting
off its grade-zero line $\langle(1,0,-1)\rangle$ ($Q=-2\neq0$,
Lemma~\ref{lem:trivp}(i)), and $\Gamma$ is $\Sigma^2$ reduced by splitting off
its non-degenerate grade-zero plane $\mathrm{diag}(-4,-4)$.  From
$\widetilde Q=-Q+3\,\sigma\circ f$ one checks $\Gamma^{\rm op}=\Gamma$, so
$\Gamma^2\equiv1$, while $\Gamma\not\equiv1$ by Lemma~\ref{lem:trivp}(iii)
(valid at $p=3$ too, where $\tfrac32\equiv0$).  Hence $\pi(S)^2=-\mathbf1$ and
$\mathrm{tr}\,\pi(S)=0$, so \eqref{eq:piS} gives $4\gamma^2=1$.  We fix the sign
$\gamma=-\tfrac12$ to match the qudit Clifford identification \eqref{eq:gagbImages}
of Section~\ref{sec:qudit}, giving
\eqref{eq:semionmatrices}.\footnote{The other choice $\gamma\mapsto-\gamma$ is
conjugation by $\mathrm{diag}(1,-1)$ and leaves \eqref{eq:oddrelsexact}
unchanged.}
\end{proof}

\begin{theorem}[Modular relations]\label{thm:oddprels}
In $\Witt^\pt(A\oplus\widehat A,s)$ the words $T=\hat\ga$, $S=\hat\ga^{-1}\hat\gb^{-4}\hat\ga^{-1}$
of \eqref{eq:STodd} satisfy \eqref{eq:SToddrels} exactly,
\be
  T^p=1\,,\qquad S^4=1\,,\qquad (ST)^3=S^2\,,
\label{eq:oddrelsexact}
\ee
with no central corrections. Moreover $S^2=C$ is central and acts on the
surface charges by charge conjugation $x\mapsto-x$.
\end{theorem}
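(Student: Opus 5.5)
The plan is to reduce every relation in \eqref{eq:oddrelsexact} to a $2\times2$ matrix identity in $SL(2,\Z_p)$ and then lift it back using Lemma~\ref{lem:nocentralp}. Since $T$, $S$, $S^2$ and $(ST)^3S^{-2}$ are all words in $\hat\ga^{\pm1},\hat\gb^{\pm1}$, that lemma says any such word $\omega$ with $\pi(\omega)=\mathbf 1$ is already trivial in the full group $\Witt^\pt(A\oplus\widehat A,s)$. So no central correction can appear. The relation $T^p=1$ is exactly Proposition~\ref{prop:apowerp}, so only the $S$-relations need work.

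For those I will use Proposition~\ref{prop:intrinsicmap}:
\be
\pi(\hat\ga)=\left(\begin{smallmatrix}1&\tfrac12\\0&1\end{smallmatrix}\right),\qquad \pi(\hat\gb)=\left(\begin{smallmatrix}1&0\\-\tfrac12&1\end{smallmatrix}\right).
\ee
It follows that $\pi(\hat\gb^{-4})=\left(\begin{smallmatrix}1&0\\2&1\end{smallmatrix}\right)$ and $\pi(\hat\ga^{-1})=\left(\begin{smallmatrix}1&-\frac12\\0&1\end{smallmatrix}\right)$. Multiplying these out, I expect
\be
\pi(S)=\left(\begin{smallmatrix}0&-\frac12\\2&0\end{smallmatrix}\right).
\ee
This is consistent with \eqref{eq:piS} at $\gamma=-\tfrac12$, where $1-4\gamma^2=0$ and $\gamma(2-4\gamma^2)=-\tfrac12$. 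It gives $\pi(S)^2=-\mathbf 1$, hence $\pi(S)^4=\mathbf 1$.

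Next I compute
\be
\pi(ST)=\left(\begin{smallmatrix}0&-\frac12\\2&1\end{smallmatrix}\right).
\ee
It has trace $1$ and determinant $1$. By Cayley--Hamilton it satisfies $M^2=M-\mathbf 1$, so $M^3=-\mathbf 1=\pi(S)^2$. Applying Lemma~\ref{lem:nocentralp} to the words $S^4$ and $(ST)^3S^{-2}$ then gives $S^4=1$ and $(ST)^3=S^2$ exactly.

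For the last claim, put $C=S^2$. Its image is $\pi(C)=-\mathbf 1$, which is central in $SL(2,\Z_p)$ and acts on the charges as $x\mapsto -x$. To see that $C$ itself is central in the full group, take any element $X$. Then $XCX^{-1}C^{-1}$ lies in $\ker\pi=\Witt^\pt$. But $C$ has order $2$ and $\Witt^\pt$ is abelian, so this is subtle. The cleaner route is to use the splitting from Theorem~\ref{thm:WittZp}: there $C\in\mathcal S\cong SL(2,\Z_p)$, where $\pi$ restricted to $\mathcal S$ is an isomorphism. Hence $C$ maps to the central element $-\mathbf 1$ of $\mathcal S$, and it commutes with the central factor $\Witt^\pt$ trivially.

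The main obstacle is not the matrix arithmetic. It is making sure the lift step is legitimate, i.e.\ that Lemma~\ref{lem:nocentralp} applies to words containing the normalized generators. This holds because $S$ and $T$ are defined directly in terms of $\hat\ga,\hat\gb$ and not the undressed $\ga,\gb$. A second point to watch is the sign convention $\gamma=-\tfrac12$: flipping it conjugates everything by $\mathrm{diag}(1,-1)$ and does not affect the relations.
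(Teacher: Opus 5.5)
Your proposal is correct and is essentially the paper's proof. Both reduce each relation to $2\times2$ matrix identities via $\pi$ and Proposition~\ref{prop:intrinsicmap}, lift them with Lemma~\ref{lem:nocentralp}, and get centrality of $C=S^2$ from $\langle\hat\ga,\hat\gb\rangle\subseteq\mathcal{S}$ in the splitting. The only cosmetic difference is that you take $T^p=1$ directly from Proposition~\ref{prop:apowerp}, whereas the paper checks $\pi(T)^p=\mathbf 1$ and lifts; the two are equivalent.
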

\begin{proof}
By Lemma~\ref{lem:nocentralp} it suffices to verify the relations after $\pi$,
where by Proposition~\ref{prop:intrinsicmap} they are matrix identities over
$\mathbb Z[\tfrac12]$, hence hold for every odd prime $p$:
\be
\ba
  &\pi(T)^p=\left(\begin{smallmatrix}1&p/2\\0&1\end{smallmatrix}\right)=\mathbf1\,,\quad
   \pi(S)=\left(\begin{smallmatrix}0&-\tfrac12\\2&0\end{smallmatrix}\right),\\[1mm]
  &\pi(S)^2=-\mathbf1\,,\qquad \pi(S)^4=\mathbf1\,,\\[1mm]
  &\pi(ST)=\left(\begin{smallmatrix}0&-\tfrac12\\2&1\end{smallmatrix}\right),\qquad
   \pi\big((ST)^3\big)=-\mathbf1=\pi(S^2)\,.
\ea
\label{eq:matrixrels}
\ee
Lemma~\ref{lem:nocentralp} lifts each $\equiv$ to an equality in
$\Witt^\pt(A\oplus\widehat A,s)$.  Finally $\pi(S^2)=-\mathbf1$ is central in
$SL(2,\Z_p)$ and acts by $x\mapsto-x$.  With
$\langle\hat\ga,\hat\gb\rangle\cap\Witt^\pt=1$ and the splitting of
Lemma~\ref{lem:nocentralp}, $S^2=C$ is central in $\Witt^\pt(A\oplus\widehat A,s)$.
\end{proof}

The normalized generators realize the split form of the group: 
\begin{corollary}\label{cor:exactlifts}
$\langle\hat\ga,\hat\gb\rangle\cong SL(2,\Z_p)$, mapping isomorphically onto the
quotient in \eqref{eq:oddp-exact}: every relation of $SL(2,\Z_p)$ holds for
$\hat\ga,\hat\gb$ with no central corrections.
\end{corollary}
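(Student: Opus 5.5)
The plan is to combine two facts already established: the injectivity statement of Lemma~\ref{lem:nocentralp} and the explicit transvection images of Proposition~\ref{prop:intrinsicmap}. The restriction of $\pi$ to $\langle\hat\ga,\hat\gb\rangle$ is a group homomorphism into the quotient $\Witt^\pt(A\oplus\widehat A,s)/\Witt^\pt\cong SL(2,\Z_p)$. Lemma~\ref{lem:nocentralp} shows that $\langle\hat\ga,\hat\gb\rangle\cap\Witt^\pt=1$, so this restriction is injective. The remaining content of the corollary is surjectivity. Once we have an isomorphism onto $SL(2,\Z_p)$, the statement ``every relation holds with no central corrections'' is just a restatement of injectivity: a word in $\hat\ga^{\pm1},\hat\gb^{\pm1}$ that is trivial in $SL(2,\Z_p)$ is trivial in $\Witt^\pt(A\oplus\widehat A,s)$.

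For surjectivity, Proposition~\ref{prop:intrinsicmap} gives
\be
\pi(\hat\ga)=\left(\begin{smallmatrix}1&\tfrac12\\0&1\end{smallmatrix}\right),\qquad
\pi(\hat\gb)=\left(\begin{smallmatrix}1&0\\-\tfrac12&1\end{smallmatrix}\right).
\ee
Because $2$ is invertible mod $p$, the powers $\pi(\hat\ga)^n$ run through all upper unitriangular matrices $\left(\begin{smallmatrix}1&x\\0&1\end{smallmatrix}\right)$ with $x\in\Z_p$. In the same way, the powers $\pi(\hat\gb)^n$ run through all lower unitriangular matrices. It is a standard fact that the elementary matrices $E_{12}(x)$ and $E_{21}(y)$ generate $SL(2,\mathbb{F})$ for any field $\mathbb{F}$. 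To see this, take $M=\left(\begin{smallmatrix}a&b\\c&d\end{smallmatrix}\right)\in SL(2,\Z_p)$ and first arrange $c\neq0$, multiplying by an elementary matrix if necessary. Then
\be
M=E_{12}\big((a-1)/c\big)\,E_{21}(c)\,E_{12}\big((d-1)/c\big).
\ee
This decomposition holds because $ad-bc=1$. Hence $\pi\big(\langle\hat\ga,\hat\gb\rangle\big)=SL(2,\Z_p)$.

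Combining surjectivity with injectivity, the map $\langle\hat\ga,\hat\gb\rangle\to SL(2,\Z_p)$ is an isomorphism. Composing the quotient map with $\pi$ then shows that $\langle\hat\ga,\hat\gb\rangle$ maps isomorphically onto the quotient in \eqref{eq:oddp-exact}. In other words, $\langle\hat\ga,\hat\gb\rangle$ is a splitting subgroup for the extension, consistent with Theorem~\ref{thm:WittZp}.

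There is no real obstacle in this argument. The only point needing care is that the proof of Lemma~\ref{lem:nocentralp} already used the splitting from Theorem~\ref{thm:WittZp}, so nothing here is circular. As an alternative route, one can count: $|\langle\hat\ga,\hat\gb\rangle|\le|SL(2,\Z_p)|$ by injectivity, and then the generation argument gives equality.
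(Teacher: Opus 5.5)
Your proof is correct and follows the paper's own argument. Surjectivity comes from $\pi(\hat\ga)$ and $\pi(\hat\gb)$ being transvections along the distinct axes $\langle a\rangle,\langle b\rangle$, whose powers exhaust the elementary unitriangular matrices that generate $SL(2,\Z_p)$, and injectivity comes from Lemma~\ref{lem:nocentralp}. Your explicit factorization $M=E_{12}\big((a-1)/c\big)E_{21}(c)E_{12}\big((d-1)/c\big)$ simply spells out the standard generation fact that the paper cites without proof.
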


\begin{proof}[Proof]
By Proposition~\ref{prop:intrinsicmap}, $\pi(\hat\ga)=t_{a,1/2}$ and
$\pi(\hat\gb)=t_{b,1/2}$ are transvections with distinct centers
$\langle a\rangle\neq\langle b\rangle$.  Their non-zero powers are the standard
elementary transvections, which generate $SL(2,\Z_p)$.  Thus
$\pi(\langle\hat\ga,\hat\gb\rangle)=SL(2,\Z_p)$, and by
Lemma~\ref{lem:nocentralp} $\pi$ restricts to an isomorphism
$\langle\hat\ga,\hat\gb\rangle\cong SL(2,\Z_p)$: every relation of
$SL(2,\Z_p)$ holds for $\hat\ga,\hat\gb$.
\end{proof}

Thus the entire content of \eqref{eq:SToddrels} sits in the single scalar
$\gamma^2=\tfrac14$ of Proposition~\ref{prop:intrinsicmap}: the transvection
form, the grade as center, and the level law
$\mathfrak{s}_{v,k}\mapsto t_{v,-1/2k}$ are forced by equivariance, after which
the relations are $2\times2$ matrix algebra.

\subsection{Vanishing of $H^5(B^3 (\Z_p \oplus \Z_p), \mathbb{C}^\times)$}
\label{app:H5B3}

We prove that for $p$ odd (prime)
\be
 H^5( (\Z_p \oplus \Z_p)[3], \mathbb{C}^\times) =0 \,.
\ee
First note that the LHS is defined to be 
\be\label{H5B3}
 H^5(B^3 (\Z_p \oplus \Z_p), \mathbb{C}^\times) = H^6 (B^3(\Z_p \oplus\Z_p), \Z) \,, 
\ee
where $B^3 (\Z_p \oplus \Z_p)$ is the Eilenberg-MacLane space 
\be\label{KZp2}
K((\Z_p)^2 ,3)   = K(\Z_p,3) \times K(\Z_p,3)\,. 
\ee
We can compute the cohomology (\ref{H5B3}) by first computing the cohomology of $K(\Z_p, 3)$ and then applying K\"unneth.
Mod $p$, the cohomology of a single factor is the free graded-commutative
algebra (exterior on the odd-degree generators, polynomial on the even ones)
\be\ba \label{eq:KZp3ring}
&H^\bullet (K(\Z_p, 3), \Z_p) \cr 
=& \Lambda\big[x_3,\, P^1 x_3,\ldots\big]\otimes
\Z_p\big[\beta x_3,\, \beta P^1 x_3,\ldots\big]\,,
\ea
\ee
where $x_3$ is the fundamental degree-$3$ class, $\beta$ is the $\Z_p$ Bockstein
($\deg\beta x_3=4$), and the Steenrod power $P^1$ raises the degree by $2(p-1)$, so
$\deg P^1x_3=2p+1\ge7$ for every odd prime.  Note $x_3^2=0$ by graded commutativity
($x_3$ has odd degree and $p$ is odd).  Hence in low degrees the only nonzero groups
are
\be\label{eq:KZp3low}
H^k(K(\Z_p,3);\Z_p)=
\left\{ \ba 
\Z_p  & \quad 
\text{for } k=0,3,4, 7, 8, \cdots \cr 
0 & \quad  \text{for }k=1,2,5,6, \cdots \,.
\ea\right.
\ee
It is convenient to pass to integral cohomology before applying K\"unneth: since all
positive-degree groups are finite $p$-torsion, the universal coefficient theorem
$H^k(X;\Z_p)\cong H^k(X;\Z)\otimes\Z_p\,\oplus\,\mathrm{Tor}(H^{k+1}(X;\Z),\Z_p)$
applied to \eqref{eq:KZp3low} gives
\be
H^k(K(\Z_p,3);\Z)= \Z,\,0,\,0,\,0,\,\Z_p,\,0,\,0 \quad\text{for }k=0,\ldots,6\,,
\label{eq:KZp3int}
\ee
the $\Z_p$ in degree $4$ being the integral Bockstein lift of $x_3$.  The integral
K\"unneth formula for \eqref{KZp2} in degree $6$ then reads
\be
\ba
H^6\big(K(\Z_p,3)^{\times2};\Z\big)
=&\bigoplus_{i+j=6} H^i\otimes H^j \\
&\oplus \bigoplus_{i+j=7}\mathrm{Tor}\big(H^i,H^j\big)\,,
\ea
\label{eq:KunnethDeg6}
\ee
and every term vanishes by \eqref{eq:KZp3int}: for $i+j=6$ each product involves a
vanishing factor ($H^{2}=H^{3}=H^{6}=0$ integrally), and for $i+j=7$ the only
candidates $\mathrm{Tor}(H^4,H^3)$ and $\mathrm{Tor}(H^3,H^4)$ vanish since
$H^3(K(\Z_p,3);\Z)=0$.  Hence $H^6(B^3(\Z_p\oplus\Z_p);\Z)=0$, proving the claim.
Note that working integrally is essential: the \emph{mod-$p$} K\"unneth formula does
have a nonzero degree-$6$ class, $x_3^{(1)}\otimes x_3^{(2)}$; it corresponds to no
integral torsion, as \eqref{eq:KZp3int} makes manifest (equivalently, it is not in the
image of the Bockstein, since $H^5$ of the product vanishes mod $p$).

\section{Higher Cup Products on a Cubic Lattice}
\label{sec: higher cup products}
In this appendix, we review the definitions of the (co)boundary operation, the cup product, and the higher cup products on a 3d cubic lattice.
We will follow the convention in \cite[Appendix B]{MengSun2026}.
See also~\cite{Chen_2023} for more details of higher cup products on hypercubic lattices.

\vspace*{\baselineskip}
\noindent{\bf Notations.}
Let us first fix the notations.
We denote a three-dimensional cube on a cubic lattice in $\mathbb{R}^3$ by
\begin{equation}
(\bullet, \bullet, \bullet) \coloneq \{(x, y, z) \mid x, y, z \in [0, 1] \}.
\end{equation}
Here, $\bullet$ in the $i$th entry indicates that the $i$th coordinate varies from $0$ to $1$.
The boundary of cube $(\bullet, \bullet, \bullet)$ consists of six faces of the form
\begin{equation}
(x, \bullet, \bullet), \quad (\bullet, y, \bullet), \quad (\bullet, \bullet, z),
\label{eq: faces}
\end{equation}
where $x, y, z \in \{0, 1\}$.
These faces are illustrated in Figure~\ref{fig: faces}.
Similarly, the boundaries of the above faces consist of the edges of the form
\begin{equation}
(x, y, \bullet), \quad (x, \bullet, z), \quad (\bullet, y, z),
\label{eq: edges}
\end{equation}
where $x, y, z \in \{0, 1\}$.
These edges are illustrated in Figure~\ref{fig: edges}.
\begin{figure*}[t]
\centering
\adjincludegraphics[valign=c]{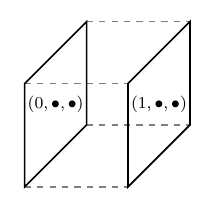} $\quad$
\adjincludegraphics[valign=c]{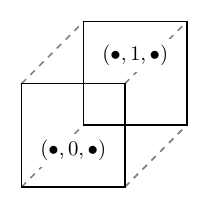} $\quad$
\adjincludegraphics[valign=c]{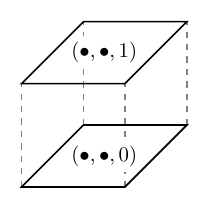}
\caption{The faces on the boundary of cube $(\bullet, \bullet, \bullet)$.}
\label{fig: faces}
\end{figure*}
\begin{figure*}[t]
\centering
\adjincludegraphics[valign=c]{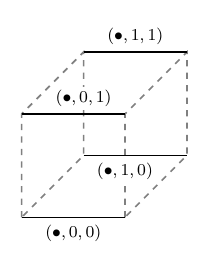} $\quad$
\adjincludegraphics[valign=c]{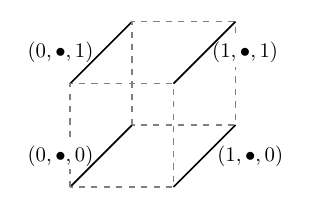}
\adjincludegraphics[valign=c]{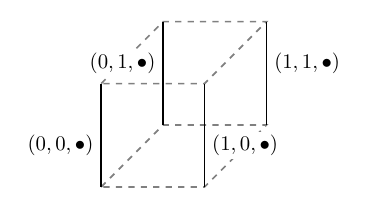}
\caption{The edges on the boundary of cube $(\bullet, \bullet, \bullet)$.}
\label{fig: edges}
\end{figure*}

\vspace{\baselineskip}
\noindent{\bf Boundary and coboundary.}
Now, we define the boundary operation on $\mathbb{Z}$-valued chains on a cubic lattice.
First of all, the boundary of a 0-chain $v$ is defined to be zero.
The boundary of a 1-chain $e \coloneq (\bullet)$ is defined by
\begin{equation}
\partial (\bullet) = (1) - (0).
\label{eq: partial e def}
\end{equation}
Here, $(\bullet)$ denotes any one of the edges of the form \eqref{eq: edges}, where we omitted the fixed coordinates.
Similarly, the boundary of a 2-chain $f \coloneq (\bullet, \bullet)$ is defined by
\begin{align}
\partial (\bullet, \bullet) = (1, \bullet) - (0, \bullet) - (\bullet, 1) + (\bullet, 0).
\label{eq: partial f def}
\end{align}
Here, $(\bullet, \bullet)$ denotes any one of the faces of the form~\eqref{eq: faces}, where we omitted the fixed coordinate.
Finally, the boundary of a 3-chain $c \coloneq (\bullet, \bullet, \bullet)$ is defined by
\begin{equation}
\begin{aligned}
\partial (\bullet, \bullet, \bullet) &= (1, \bullet, \bullet) - (0, \bullet, \bullet) - (\bullet, 1, \bullet) \\
& \quad + (\bullet, 0, \bullet) + (\bullet, \bullet, 1) - (\bullet, \bullet, 0).
\end{aligned}
\label{eq: partial c def}
\end{equation}
The boundary of a general chain is then defined by linearly extending the above definition.

Based on the above definition of the boundary, one can define the coboundary of any $\mathbb{Z}$-valued cochain $\bm{a}$ by
\begin{equation}
\delta \bm{a}(b) = \bm{a}(\partial b),
\end{equation}
where $b$ is an arbitrary $\mathbb{Z}$-valued chain.
In particular, the coboundary of any 3-cochain is zero.
For each face $f$, edge $e$, and vertex $v$, the coboundaries of the corresponding cochains can be written explicitly as
\begin{align}
\delta \bm{f} &= \sum_{c^{\prime}} \bm{f}(\partial c^{\prime}) \bm{c}^{\prime},
\label{eq: delta f def}
\\
\delta \bm{e} &= \sum_{f^{\prime}} \bm{e}(\partial f^{\prime}) \bm{f}^{\prime},
\label{eq: delta e def}
\\
\delta \bm{v} &= \sum_{e^{\prime}} \bm{v}(\partial e^{\prime}) \bm{e^{\prime}},
\label{eq: delta v def}
\end{align}
where the summations are taken over all cubes, faces, and edges, respectively.
We note that the coefficients on the right-hand side are either $0$ or $\pm 1$.
Specifically, the coefficient in \eqref{eq: delta f def} is non-zero only when $f$ is on the boundary of $c^{\prime}$.
Similarly, the coefficient in \eqref{eq: delta e def} is non-zero only when $e$ is on the boundary of $f^{\prime}$, and the coefficient in \eqref{eq: delta v def} is non-zero only when $v$ is on the boundary of $e^{\prime}$.

\vspace{\baselineskip}
\noindent{\bf Higher cup products.}
Using the above notation, the cup product and the higher cup products of $\mathbb{Z}$-valued cochains on a cubic lattice can be written as follows \cite[Appendix B]{MengSun2026}:
\begin{widetext}
\begin{align}
\bm{a}_2 \cup \bm{b}_1 (\bullet, \bullet, \bullet) &= \bm{a}_2(0, \bullet, \bullet) \bm{b}_1(\bullet, 1, 1) - \bm{a}_2(\bullet, 0, \bullet) \bm{b}_1(1, \bullet, 1) + \bm{a}_2(\bullet, \bullet, 0) \bm{b}_1(1, 1, \bullet), \\
\bm{a}_1 \cup \bm{b}_2 (\bullet, \bullet, \bullet) &= \bm{a}_1(0, 0, \bullet) \bm{b}_2(\bullet, \bullet, 1) - \bm{a}_1(0, \bullet, 0) \bm{b}_2(\bullet, 1, \bullet) + \bm{a}_1(\bullet, 0, 0) \bm{b}_2(1, \bullet, \bullet), \\
\bm{a}_2 \cup_1 \bm{b}_2 (\bullet, \bullet, \bullet) &= \bm{a}_2(\bullet, 1, \bullet) \bm{b}_2(\bullet, \bullet, 0) - \bm{a}_2(\bullet, \bullet, 1) \bm{b}_2(\bullet, 0, \bullet) +\bm{a}_2(0, \bullet, \bullet) \bm{b}_2(\bullet, \bullet, 0) \\
&\quad + \bm{a}_2(0, \bullet, \bullet) \bm{b}_2(\bullet, 1, \bullet) - \bm{a}_2(\bullet, \bullet, 1) \bm{b}_2(1, \bullet, \bullet) - \bm{a}_2(\bullet, 0, \bullet) \bm{b}_2(1, \bullet, \bullet), \\
\bm{a}_1 \cup_1 \bm{b}_3 (\bullet, \bullet, \bullet) &= \bm{a}_1(\bullet, 1, 1) \bm{b}_3(\bullet, \bullet, \bullet) + \bm{a}_1(0, \bullet, 1) \bm{b}_3(\bullet, \bullet, \bullet) + \bm{a}_1(0, 0, \bullet) \bm{b}_3(\bullet, \bullet, \bullet), \\
\bm{a}_3 \cup_1 \bm{b}_1 (\bullet, \bullet, \bullet) &= \bm{a}_1(\bullet, \bullet, \bullet) \bm{b}_1(\bullet, 0, 0) + \bm{a}_3(\bullet, \bullet, \bullet) \bm{b}_1(1, \bullet, 0) + \bm{a}_3(\bullet, \bullet, \bullet) \bm{b}_1(1, 1, \bullet), \\
\bm{a}_2 \cup_2 \bm{b}_3 (\bullet, \bullet, \bullet) &= \bm{a}_2(\bullet, \bullet, 0) \bm{b}_3(\bullet, \bullet, \bullet) + \bm{a}_2(\bullet, 1, \bullet) \bm{b}_3(\bullet, \bullet, \bullet) + \bm{a}_2(0, \bullet, \bullet) \bm{b}_3(\bullet, \bullet, \bullet), \\
\bm{a}_3 \cup_2 \bm{b}_2 (\bullet, \bullet, \bullet) &= \bm{a}_3(\bullet, \bullet, \bullet) \bm{b}_2(1, \bullet, \bullet) + \bm{a}_3(\bullet, \bullet, \bullet) \bm{b}_2(\bullet, 0, \bullet) + \bm{a}_3(\bullet, \bullet, \bullet) \bm{b}_2(\bullet, \bullet, 1), \\
\bm{a}_3 \cup_3 \bm{b}_3 (\bullet, \bullet, \bullet) &= \bm{a}_3(\bullet, \bullet, \bullet) \bm{b}_3(\bullet, \bullet, \bullet).
\end{align}
\end{widetext}
Here, the subscripts of $\bm{a}$ and $\bm{b}$ represent the degrees of these cochains.

\section{Equivalent definition of 3-fermion Kramers-Wannier-Wegner operator}
\label{sec: Equivalent definition of 3-fermion KWW operator}
In Section~\ref{sec: 3-fermion KWW}, we defined the 3-fermion Kramers-Wannier-Wegner operator $\mathsf{D}_\text{3F}$ by using the KWW operator $\mathsf{S}$ and the Tsui-Wen entangler $\mathsf{T}^2$.
In this appendix, we will provide another equivalent definition of $\mathsf{D}_\text{3F}$ by using the 3-fermion QCA in \cite[Appendix E]{Shirley_2022}.

\subsection{Kramers-Wannier-Wegner operator revisited}
\label{sec: KWW revisited}
As we will see later, our definition of the 3-fermion KWW operator is motivated by a specific representation of the ordinary KWW operator.
As such, we start by revisiting the KWW operator in this subsection.
More specifically, the goal of this subsection is to represent the KWW operator as a quantum circuit consisting of unitary operators and projections, using ancillary qubits.

%\vspace*{\baselineskip}
%\noindent{\bf Definition.}
\subsubsection{Definition of the KWW operator}
Let us first recall the definition of the KWW operator.
The KWW operator $\mathsf{S}$ acting on the tensor product Hilbert space~\eqref{eq: qubit Hilb} is defined by the composition
\begin{equation}
\mathsf{S} \coloneq t^{\frac{1}{2}} \mathsf{H} \mathsf{D}_{\mathrm{gauge}}.
\label{eq: KWW op}
\end{equation}
Here, $t^{\frac{1}{2}}$ is the half translation in the $(\frac{1}{2}, \frac{1}{2}, \frac{1}{2})$ direction, $\mathsf{H}$ is the tensor product of the on-site Haramard gates, and $\mathsf{D}_{\mathrm{gauge}}$ is the gauging operator for $\mathbb{Z}_2$ 1-form symmetry.
The gauging operator maps qubits on the faces to those on the edges, i.e.,
\begin{equation}
\mathsf{D}_{\mathrm{gauge}}: \mathcal{H}_F = \bigotimes_{f \in F} \mathbb{C}^2 \to \mathcal{H}_E = \bigotimes_{e \in E} \mathbb{C}^2.
\label{eq: gauging op}
\end{equation}
Here, $F$ and $E$ denote the sets of faces and edges, respectively.
The action of $\mathsf{D}_{\mathrm{gauge}}$ on any state in the computational basis is given by \cite{Yoshida:2015cia, Gorantla:2024ocs}
\begin{equation}
\mathsf{D}_{\mathrm{gauge}} \left( \bigotimes_{f \in F} \ket{a_f}_f \right) = \frac{1}{2^{N/2}} \bigotimes_{e \in E} \ket{a^{\prime}_e}_e, 
\end{equation}
where $N$ is the number of cubes and $a^{\prime}_e$ is the sum of the $a_f$'s on the faces around $e$, that is,
\begin{equation}
a^{\prime}_e = \sum_{f \in F \text{ s.t. } e \in \partial f} a_f \quad \mod 2.
\end{equation}
We note that $\mathsf{D}_{\mathrm{gauge}}$ is non-invertible because it annihilates states charged under (non-topological) $\mathbb{Z}_2$ 1-form symmetry.
The normalization factor in \eqref{eq: gauging op} is chosen so that $\mathsf{D}_{\mathrm{gauge}}$ satisfies
\begin{equation}
\mathsf{D}_{\mathrm{gauge}}^{\dagger} \mathsf{D}_{\mathrm{gauge}} = \frac{1}{2^N} \sum_{\Sigma: \text{2-cycles}} \eta(\Sigma),
\end{equation}
where the right-hand side is the condensation operator on the lattice~\cite{Gorantla:2024ocs}.
The normalization of $\mathsf{D}_{\mathrm{gauge}}$ is not important in later discussions.

%\vspace*{\baselineskip}
%\noindent{\bf Tensor network representation.}
\subsubsection{Tensor network representation}
The KWW operator $\mathsf{S}$ defined above can be represented by the following tensor network:
\begin{equation}
\mathsf{S} = \frac{1}{2^{N/2}}t^{\frac{1}{2}} \mathsf{H} \adjincludegraphics[valign=c, trim={10, 10, 10, 10}]{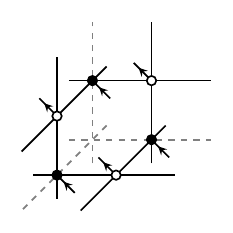}.
\label{eq: KWW TN1}
\end{equation}
Here, the gray dashed lines represent the edges of the dual lattice (i.e., the faces of the direct lattice), the black edges with arrows represent physical legs,\footnote{The incoming edges correspond to the initial state, whereas the outgoing edges correspond to the final state.} and those without arrows represent the virtual legs.
The bond Hilbert space associated with each virtual leg is $\mathbb{C}^2$, whose basis states are labeled by $0$ and $1$.
The black and white dots in the above diagram represent the copy and addition tensors defined by
\begin{align}
\adjincludegraphics[valign=c, trim={10, 10, 10, 10}]{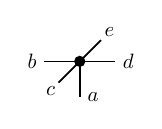} &= \delta_{a, b} \delta_{a, c} \delta_{a, d} \delta_{a, e}, \\
\adjincludegraphics[valign=c, trim={10, 10, 10, 10}]{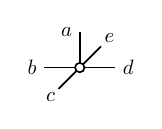} &= \delta_{a, b+c+d+e \bmod 2}.
\end{align}
We note that the copy and addition tensors satisfy the following commutation relation with the Hadamard gate $H$:
\begin{equation}
\adjincludegraphics[valign=c, trim={10, 10, 10, 10}]{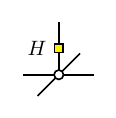} = 2^{\frac{3}{2}} \adjincludegraphics[valign=c, trim={10, 10, 10, 10}]{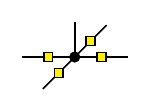}.
\end{equation}
Using this relation, we can write the KWW operator in \eqref{eq: KWW TN1} as
\begin{equation}
\mathsf{S} = 2^{4N} t^{\frac{1}{2}} \adjincludegraphics[valign=c, trim={10, 10, 10, 10}]{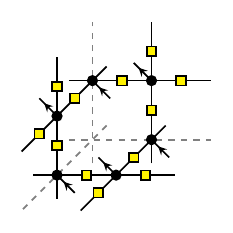}.
\label{eq: KWW TN2}
\end{equation}
Here, the yellow squares represent the Hadamard gate.
This tensor network representation was first obtained in~\cite{Gorantla:2024ocs}.

%\vspace*{\baselineskip}
%\noindent{\bf KWW in terms of unitary and projection.}
\subsubsection{Quantum circuit representation}
%To write $\mathsf{D}$ as the composition of a unitary and a projection, we further deform the tensor network representation in \eqref{eq: KWW TN2} as follows:
To write $\mathsf{S}$ as a quantum circuit, we further deform the tensor network representation in \eqref{eq: KWW TN2} as follows:
\begin{equation}
\mathsf{S} = 2^{4N} t^{\frac{1}{2}} \adjincludegraphics[valign=c, trim={10, 10, 10, 10}]{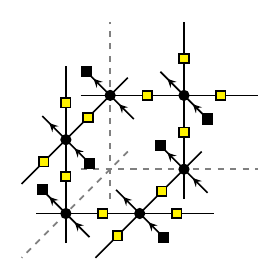}.
\label{eq: KWW TN3}
\end{equation}
Here, the black dots at the junctions are again the copy tensors, while the black squares at the endpoints of the physical legs are the local tensors defined by
\begin{equation}
\adjincludegraphics[valign=c, trim={10, 10, 10, 10}]{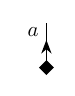} = \adjincludegraphics[valign=c, trim={10, 10, 10, 10}]{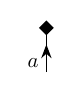} = 1, \quad \forall a \in \{0, 1\}.
\end{equation}
We note that the above tensors represent the unnormalized $\ket{+}$ state and its conjugate $\bra{+}$, i.e.,
\begin{equation}
\adjincludegraphics[valign=c, trim={10, 10, 10, 10}]{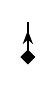} = \ket{0} + \ket{1} = \sqrt{2} \ket{+}, \quad
\adjincludegraphics[valign=c, trim={10, 10, 10, 10}]{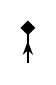} = \bra{0} + \bra{1} = \sqrt{2} \bra{+}.
\end{equation}
The tensor network representation in \eqref{eq: KWW TN3} shows that the action of $\mathsf{S}$ on any state $\ket{\psi} \in \mathcal{H}_F$ can be written as
\begin{equation}
\mathsf{S} \ket{\psi} = 2^{7N} t^{\frac{1}{2}} \bra{\{+\}_F} \sf{U}_{\mathrm{cluster}} \left(\ket{\psi} \otimes \ket{\{+\}_E}\right),
\label{eq: D unitary projection}
\end{equation}
where $\bra{\{+\}_F}$ and $\ket{\{+\}_E}$ are the trivial product states on the faces and edges, defined by
\begin{equation}
\bra{\{+\}_F} \coloneq \bigotimes_{f \in F} \bra{+}_f, \quad 
\ket{\{+\}_E} \coloneq \bigotimes_{e \in E} \ket{+}_e,
\end{equation}
and $\sf{U}_{\mathrm{cluster}}$ is a tensor network operator defined by
\begin{equation}
\sf{U}_{\mathrm{cluster}} \coloneq \adjincludegraphics[valign=c, trim={10, 10, 10, 10}]{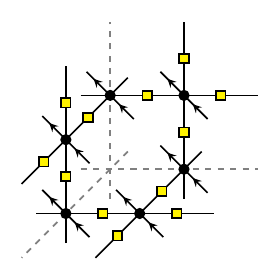}.
\label{eq: U cluster TN}
\end{equation}
As we will see below, $\sf{U}_{\mathrm{cluster}}$ is a finite-depth unitary circuit that entangles a cluster state.

To show that $\sf{U}_{\mathrm{cluster}}$ is a finite-depth unitary circuit, we resolve the copy tensor on each physical leg in \eqref{eq: U cluster TN} as follows:
\begin{equation}
\adjincludegraphics[valign=c, trim={10, 10, 10, 10}]{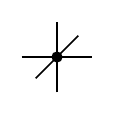} = \adjincludegraphics[valign=c, trim={10, 10, 10, 10}]{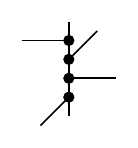}.
\label{eq: copy resolution}
\end{equation}
Here, the order of the vertices on the right-hand side does not matter because the copy tensors commute with each other.
Once we resolve the copy tensors as above, the adjacent physical legs in \eqref{eq: U cluster TN} are connected by a tensor network of the form
\begin{equation}
\adjincludegraphics[valign=c, trim={10, 10, 10, 10}]{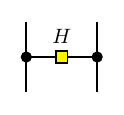} = CZ,
\end{equation}
which represents the controlled-$Z$ gate.
Thus, $\sf{U}_{\mathrm{cluster}}$ can be written as the product of local unitaries
\begin{equation}
\sf{U}_{\mathrm{cluster}} = \prod_{f \in F} \prod_{e \in \partial f} CZ_{e, f},
\label{eq: CZ product}
\end{equation}
where $CZ_{e, f}$ is the controlled-$Z$ gate acting on the qubits on a face $f$ and an edge $e$.
One can rearrange the product of $CZ$'s in~\eqref{eq: CZ product} so that $\sf{U}_{\mathrm{cluster}}$ becomes a unitary circuit of depth 4, as indicated by the decomposition~\eqref{eq: copy resolution}.
Equation~\eqref{eq: CZ product} shows that $\sf{U}_{\mathrm{cluster}}$ is the entangler of a cluster state in 3+1d, which realizes a non-trivial SPT phase with $\mathbb{Z}_2 \times \mathbb{Z}_2$ 1-form symmetry~\cite{Yoshida:2015cia}.

In summary, we found that $\mathsf{S}$ can be implemented by the following protocol: (1) we first add an ancillary qubit in the $\ket{+}$ state on each edge, (2) we then apply a finite-depth circuit $\sf{U}_{\mathrm{cluster}}$ to the entire state, (3) we then project the original qubits on the faces into the $\ket{+}$ states, and (4) we finally move the ancillas on the edges to the faces.
We note that the projection in the third step can be carried out adaptively by using measurements and feedforward if the initial state $\ket{\psi}$ is $\mathbb{Z}_2$ 1-form symmetric.
Namely, the KWW operator $\mathsf{S}$ admits an adaptive circuit representation.
We will not go into details of the adaptive circuit representation because it is analogous to the case of the Kramers-Wannier operator in 1+1d \cite{Tantivasadakarn:2021vel}.
See also \cite{Lootens:2023wnl} for adaptive circuit representations of more general duality operators in 1+1d.

\vspace*{\baselineskip}
\noindent{\bf Relation to Walker-Wang entangler.}
We note that $\sf{U}_{\mathrm{cluster}}$ can also be regarded as an entangler of the Walker-Wang ground state based on the toric code MTC \cite{Roberts:2020zmk}.
As we will see later, replacing the toric code MTC in the above construction with the 3-fermion MTC yields the 3-fermion KWW operator $\mathsf{D}_\text{3F}$.
In other words, the 3-fermion KWW operator is obtained by replacing $\sf{U}_{\mathrm{cluster}}$ in \eqref{eq: D unitary projection} with a 3-fermion QCA.
We will discuss this in more detail in Appendix~\ref{sec: 3F KWW revisited}.

\subsection{3-fermion QCA}
\label{sec: 3-fermion QCA 2-qubit version}
In this subsection, we recall the definition of the 3-fermion QCA constructed in \cite{Shirley_2022, MengSun2026}, which is equivalent to the original 3-fermion QCA defined in \cite{HaahFidkowskiHastings2023} up to finite-depth circuits.
This 3-fermion QCA will be used later to define the 3-fermion KWW operator.

To define the 3-fermion QCA, we consider a cubic lattice with two qubits on each face.
These qubits will be called an $A$-qubit and a $ B$-qubit.
The Pauli operators acting on the $A$-qubit and $B$-qubit on face $f$ are denoted by $\{X_f^{(A)}, Z_f^{(A)}\}$ and $\{X_f^{(B)}, Z_f^{(B)}\}$.
The 3-fermion QCA $\alpha_\text{3F}$ is defined by its action on these Pauli operators.
The action of $\alpha_\text{3F}$ on the Pauli $X$ operators is given by \cite{Shirley_2022, MengSun2026}
\begin{equation}
\begin{aligned}
\alpha_\text{3F}(X_f^{(A)}) &= X_f^{(A)} \prod_{e^{\prime}} \left. \widetilde{G}_{e^{\prime}}^{(B)} \right.^{\int \bm{f} \smile \bm{e}^{\prime}}
= X_f^{(A)} \widetilde{G}_{t^{\frac{1}{2}}(f)}^{(B)}, \\
\alpha_\text{3F}(X_f^{(B)}) &= X_f^{(B)} \prod_{e^{\prime}} \left. \widetilde{G}_{e^{\prime}}^{(A)} \right.^{\int \bm{e}^{\prime} \smile \bm{f}}
= X_f^{(B)} \widetilde{G}_{t^{-\frac{1}{2}}(f)}^{(A)},
\label{eq: alpha 3F X}
\end{aligned}
\end{equation}
where $\widetilde{G}_e^{(A)}$ and $\widetilde{G}_e^{(B)}$ are defined by
\begin{equation}
\widetilde{G}_e^{(\mu)} \coloneq Z_{\delta \bm{e}} \prod_{f^{\prime}} \left. X_{f^{\prime}}^{(\mu)} \right.^{\int \delta \bm{e} \smile_1 \bm{f}^{\prime}}, \quad
\mu = A, B.
\label{eq: G 3F}
\end{equation}
On the other hand, the action of $\alpha_\text{3F}$ on the Pauli $Z$ operators is given by \cite{Shirley_2022, MengSun2026}
\begin{equation}
\alpha_\text{3F}(Z_f^{(\mu)}) = \tilde{u}_f^{(\mu)} \prod_{f^{\prime}} \alpha_\text{3F}(X_{f^{\prime}}^{(\mu)})^{\int \bm{f}^{\prime} \smile_1 \bm{f}}
\end{equation}
for $\mu = A, B$.
Here, $\tilde{u}_f^{(\mu)}$ is the fermion hopping operator defined as in \eqref{eq: fermion hopping}, i.e.,
\begin{equation}
\tilde{u}_f^{(\mu)} \coloneq Z_f^{(\mu)} \prod_{f^{\prime}} \left. X_{f^{\prime}}^{(\mu)} \right.^{\int \bm{f}^{\prime} \smile_1 \bm{f}}.
\end{equation}

To write the action of $\alpha_\text{3F}$ more concisely, we now introduce the fermionic flux operator $\tilde{u}_{\delta \bm{e}}^{(\mu)}$, which is defined by the product of the hopping operators around edge $e$:
\begin{equation}
\tilde{u}_{\delta \bm{e}}^{(\mu)} \coloneq \prod_{f^{\prime}} \left. \tilde{u}_{f^{\prime}}^{(\mu)} \right.^{\delta \bm{e}(f^{\prime})} = Z_{\delta \bm{e}}^{(\mu)} \prod_{f^{\prime}} \left. X_{f^{\prime}}^{(\mu)} \right.^{\int \bm{f}^{\prime} \smile_1 \delta \bm{e}}.
\label{eq: fermion loop 3F}
\end{equation}
Here, we recall our convention that all Pauli $X$ operators act before Pauli $Z$ operators.
By comparing \eqref{eq: fermion loop 3F} with \eqref{eq: G 3F}, we find that $\tilde{u}_{\delta \bm{e}}^{(\mu)}$ is related to $\widetilde{G}_e^{(\mu)}$ as
\begin{equation}
\begin{aligned}
\tilde{u}_{\delta \bm{e}}^{(\mu)} &= \widetilde{G}_e^{(\mu)} \prod_{f^{\prime}} \left. X_{f^{\prime}}^{(\mu)} \right.^{\int \bm{f}^{\prime} \smile_1 \delta \bm{e} + \delta \bm{e} \smile_1 \bm{f}^{\prime}} \\
&= \widetilde{G}_e^{(\mu)} \prod_{f^{\prime}} \left. X_{f^{\prime}}^{(\mu)} \right.^{\int \delta \bm{f}^{\prime} \smile_1 \bm{e} + \bm{e} \smile_1 \delta \bm{f}^{\prime}} \\
&= \widetilde{G}_e^{(\mu)} \prod_{c^{\prime}} \left. X_{\partial c^{\prime}}^{(\mu)} \right.^{\int \bm{c}^{\prime} \smile_1 \bm{e} + \bm{e} \smile_1 \bm{c}^{\prime}}.
\end{aligned}
\label{eq: uG}
\end{equation}
Namely, the fermionic flux operator $\tilde{u}_{\delta \bm{e}}^{(\mu)}$ is equal to $\widetilde{G}_e^{(\mu)}$ modulo $X_{\partial c}^{(\mu)}$.

Using the relation~\eqref{eq: uG}, we can rewrite the action of $\alpha_\text{3F}$ on the Pauli $X$ operators as follows:
\begin{equation}
\begin{aligned}
\alpha_\text{3F}(X_f^{(A)}) &= X_f^{(A)} \tilde{u}_{\delta t^{\frac{1}{2}}(\bm{f})}^{(B)} \left. \prod_{c^{\prime}} X_{\partial c^{\prime}}^{(B)} \right.^{\int \bm{c}^{\prime} \cup_1 t^{\frac{1}{2}}(\bm{f}) + t^{\frac{1}{2}}(\bm{f}) \cup_1 \bm{c}^{\prime}}, \\
\alpha_\text{3F}(X_f^{(B)}) &= X_f^{(B)} \tilde{u}_{\delta t^{-\frac{1}{2}}(\bm{f})}^{(A)} \left. \prod_{c^{\prime}} X_{\partial c^{\prime}}^{(A)} \right.^{\int \bm{c}^{\prime} \cup_1 t^{-\frac{1}{2}}(\bm{f}) + t^{-\frac{1}{2}}(\bm{f}) \cup_1 \bm{c}^{\prime}}.
\end{aligned}
\label{eq: alpha 3F X2}
\end{equation}
This action can be illustrated as shown in Figure~\ref{fig: alpha 3F XA} and Figure~\ref{fig: alpha 3F XB}.
\begin{figure*}
\centering
$\alpha_\text{3F}(X_f^{(A)}) =$
\adjincludegraphics[valign=c]{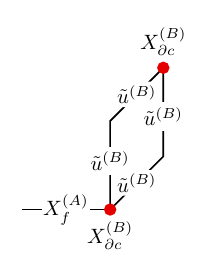}, 
\adjincludegraphics[valign=c]{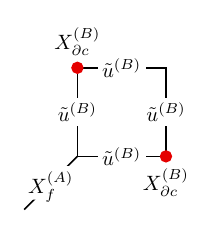},
\adjincludegraphics[valign=c]{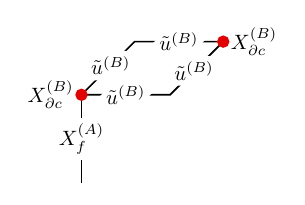}
\caption{The action of the 3-fermion QCA $\alpha_\text{3F}$ on $X_f^{(A)}$ illustrated on the dual lattice.}
\label{fig: alpha 3F XA}
\end{figure*}
\begin{figure*}
\centering
$\alpha_\text{3F}(X_f^{(B)}) =$
\adjincludegraphics[valign=c]{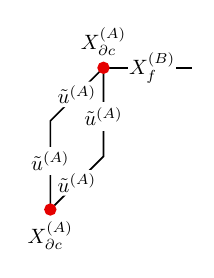}, 
\adjincludegraphics[valign=c]{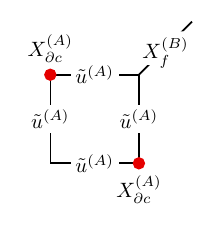}, 
\adjincludegraphics[valign=c]{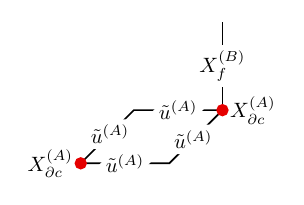}
\caption{The action of the 3-fermion QCA $\alpha_\text{3F}$ on $X_f^{(B)}$ illustrated on the dual lattice.}
\label{fig: alpha 3F XB}
\end{figure*}
On the other hand, $\alpha_\text{3F}$ leaves the fermion hopping operarors invariant \cite{MengSun2026}, i.e.,
\begin{equation}
\alpha_\text{3F}(\tilde{u}_{\delta \bm{e}}^{(\mu)}) = \tilde{u}_{\delta \bm{e}}^{(\mu)}.
\label{eq: alpha 3F u}
\end{equation}
We will use \eqref{eq: alpha 3F X2} and \eqref{eq: alpha 3F u} to compute the action of the 3-fermion KWW operator defined in the next subsection.

\subsection{3-fermion Kramers-Wannier-Wegner operator from 3-fermion QCA}
\label{sec: 3F KWW revisited}
In this subsection, we define the 3-fermion KWW operator using the 3-fermion QCA and compute its action on $\mathbb{Z}_2$ 1-form symmetric local operators.
As we will see, the 3-fermion KWW operator defined below agrees with the one defined in Section~\ref{sec: 3-fermion KWW}.
As such, the 3-fermion KWW operator defined below will be denoted by $\mathsf{D}_\text{3F}$, following the notation in Section~\ref{sec: 3-fermion KWW}.

\subsubsection{Definition of the 3-fermion KWW operator}
We first give a definition of $\mathsf{D}_\text{3F}$ as an operator on a tensor product Hilbert space.
To this end, we consider a 3d cubic lattice with a single qubit on each face.
We call this qubit an $A$-qubit.
The state space on the lattice is given by
\begin{equation}
\mathcal{H}_F^{(A)} = \bigotimes_{f \in F} \mathbb{C}^2.
\end{equation}
The 3-fermion KWW operator $\mathsf{D}_\text{3F}$ acting on this state space is defined in a similar way to the ordinary KWW operator~\eqref{eq: D unitary projection}.
More specifically, $\mathsf{D}_\text{3F}$ is defined by replacing the cluster entangler $\sf{U}_{\mathrm{cluster}}$ in \eqref{eq: D unitary projection} with the 3-fermion QCA.
In what follows, we will describe this definition in more detail.

To define $\mathsf{D}_\text{3F}$, we first add an ancillary qubit on each edge.
This qubit will be called a $B$-qubit.
The $B$-qubit on each edge is initially set to the $\ket{+}$ state as in the case of the KWW operator~\eqref{eq: D unitary projection}.
We then apply the 3-fermion QCA to the entire system consisting of the $A$- and $B$-qubits.
More precisely, we apply the unitary
\begin{equation}
\sf{U}_\text{3F}^{\prime} \coloneq t^{-\frac{1}{2}}_B \sf{U}_\text{3F} t^{\frac{1}{2}}_B,
\end{equation}
where $t_B^{\frac{1}{2}}$ is the half translation of the $B$-qubits, and $\sf{U}_\text{3F}$ is a unitary operator whose conjugation action is the 3-fermion QCA $\alpha_\text{3F}$ defined in Appendix~\ref{sec: 3-fermion QCA 2-qubit version}.
We can think of $\sf{U}_\text{3F}^{\prime}$ as a 3-fermion QCA on a cubic lattice with a single qubit on each face and a single qubit on each edge.
We then project the $A$-qubits on the faces to the $\ket{+}$ states and finally apply the half-translation to the remaining $B$-qubits on the edges.
This procedure leads to the following definition of the 3-fermion KWW operator:
\begin{equation}
\mathsf{D}_\text{3F} \ket{\psi} = \mathcal{N} t_B^{\frac{1}{2}} \bra{\{+\}_F} \sf{U}_\text{3F}^{\prime} \left( \ket{\psi} \otimes \ket{\{+\}_E} \right).
\label{eq: D3F op0}
\end{equation}
Here, $\ket{\psi} \in \mathcal{H}_F^{(A)}$ is an arbitrary state of the $A$-qubits, and $\mathcal{N}$ is a normalization factor, which is not relevant to the following discussion.
As we can see, the above definition is obtained by replacing the cluster entangler $\sf{U}_{\mathrm{cluster}}$ in~\eqref{eq: D unitary projection} with the 3-fermion QCA $\sf{U}_\text{3F}^{\prime}$.

For later convenience, we now rewrite \eqref{eq: D3F op0}.
Specifically, we first rewrite the right-hand side of \eqref{eq: D3F op0} as
\begin{equation}
\mathsf{D}_\text{3F}\ket{\psi} = \mathcal{N} \bra{\{+\}_E} t^{\frac{1}{2}} \sf{U}_\text{3F}^{\prime} \left( \ket{\psi} \otimes \ket{\{+\}_E} \right),
\label{eq: D3F op}
\end{equation}
where $t^{\frac{1}{2}}$ is the half translation acting on both the $A$-qubits and $B$-qubits.
The unitary $t^{\frac{1}{2}} \sf{U}_\text{3F}^{\prime}$ in \eqref{eq: D3F op} can also be written as
\begin{equation}
t^{\frac{1}{2}} \sf{U}_\text{3F}^{\prime} = t_A^{\frac{1}{2}} \sf{U}_\text{3F} t_B^{\frac{1}{2}},
\end{equation}
where $t_A^{\frac{1}{2}}$ is the half translation of the $A$-qubits.
By plugging this into \eqref{eq: D3F op}, we find
\begin{equation}
\begin{aligned}
\mathsf{D}_\text{3F} \ket{\psi} &= \mathcal{N} \bra{\{+\}_E} t_A^{\frac{1}{2}} \sf{U}_\text{3F} t_B^{\frac{1}{2}} \left( \ket{\psi} \otimes \ket{\{+\}_E} \right) \\
&= \mathcal{N} \bra{\{+\}_F^{(A)}} \sf{U}_\text{3F} \left( \ket{\psi} \otimes \ket{\{+\}_F^{(B)}} \right).
\end{aligned}
\label{eq: D3F op2}
\end{equation}
Here, $\ket{\{+\}_F^{(A)}}$ and $\ket{\{+\}_F^{(B)}}$ are the trivial product states of the $A$-qubits and $B$-qubits, which are both on the faces.
The above expression makes it manifest that $\mathsf{D}_\text{3F}$ maps an $A$-qubit state into a $B$-qubit state.
In what follows, we will use the following shorthand notation of \eqref{eq: D3F op2}:
\begin{equation}
\mathsf{D}_\text{3F} = \mathcal{N} \bra{\{+\}_F^{(A)}} \sf{U}_\text{3F} \ket{\{+\}_F^{(B)}}.
\label{eq: D3F op3}
\end{equation}

We note that the 3-fermion KWW operator~\eqref{eq: D3F op3} is non-invertible because it acts as zero on any states charged under the $\mathbb{Z}_2$ 1-form symmetry.
Indeed, since the 3-fermion QCA $\sf{U}_\text{3F}$ commutes with the symmetry operator $\eta(\Sigma) = \prod_{f \in \Sigma} X_f^{(A)}$ for any closed surface $\Sigma$,\footnote{This immediately follows from \eqref{eq: alpha 3F X2}.} we have
\begin{equation}
\begin{aligned}
\mathsf{D}_\text{3F} \eta(\Sigma) &= \mathcal{N} \bra{\{+\}_F^{(A)}} \sf{U}_\text{3F} \eta(\Sigma) \ket{\{+\}_F^{(B)}} \\
&= \mathcal{N} \bra{\{+\}_F^{(A)}} \eta(\Sigma) \sf{U}_\text{3F} \ket{\{+\}_F^{(B)}} \\
&= \mathsf{D}_\text{3F},
\end{aligned}
\end{equation}
where the last equality follows from $\bra{\{+\}_F^{(A)}} \eta(\Sigma) = \bra{\{+\}_F^{(A)}}$.
The above equation implies that $\mathsf{D}_\text{3F} \ket{\psi} = 0$ if $\eta(\Sigma) \ket{\psi} = -\ket{\psi}$ for some $\Sigma$, i.e., if $\ket{\psi}$ is charged under the $\mathbb{Z}_2$ 1-form symmetry.

\vspace*{\baselineskip}
\noindent{\bf Adaptive circuit representation.}
As in the case of the KWW operator, the projection to the $\ket{+}$ state can be done adaptively by using measurements and feedforward, as long as the initial state $\ket{\psi}$ is symmetric under the $\mathbb{Z}_2$ 1-form symmetry.
However, this does not mean that $\mathsf{D}_\text{3F}$ admits an adaptive circuit representation with a single round of measurement, because $\sf{U}_\text{3F}$ is not a finite-depth circuit \cite{HaahFidkowskiHastings2023, Shirley_2022}.
Nevertheless, $\mathsf{D}_\text{3F}$ does admit an adaptive circuit representation with multiple measurement rounds.
Indeed, as we will see later, $\mathsf{D}_\text{3F}$ is equal to $\mathsf{S}\mathsf{T}^2\mathsf{S}\mathsf{T}^2\mathsf{S}^{\dagger}$ as an operator, and the latter admits an adaptive circuit representation with three measurement rounds because $\mathsf{S}$ and $\mathsf{S}^{\dagger}$ can be implemented by adaptive circuits with a single round of measurement.
Therefore, $\mathsf{D}_\text{3F}$ admits an adaptive circuit representation with three measurement rounds.

%\vspace*{\baselineskip}
%\noindent{\bf Action on local operators.}
\subsubsection{Action on local operators}
Now, we compute the action of $\mathsf{D}_\text{3F}$ on $\mathbb{Z}_2$ 1-form symmetric local operators $X_f$ and $\tilde{u}_{\delta \bm{e}}$.
Based on the definition~\eqref{eq: D3F op3}, one can compute the action of $\mathsf{D}_\text{3F}$ on $X_f$ as
\begin{equation}
\begin{aligned}
\mathsf{D}_\text{3F} X_f^{(A)} &= \mathcal{N} \bra{\{+\}_F^{(A)}} \sf{U}_\text{3F} X_f^{(A)} \ket{\{+\}_F^{(B)}} \\
&= \mathcal{N} \bra{\{+\}_F^{(A)}} \alpha_\text{3F}(X_f^{(A)}) \sf{U}_\text{3F} \ket{\{+\}_F^{(B)}} \\
&= \tilde{u}_{\delta t^{\frac{1}{2}}(\bm{f})}^{(B)} \mathsf{D}_\text{3F} \,.
\end{aligned}
\label{eq: D3F X}
\end{equation}
Here, in the last equality, we used \eqref{eq: alpha 3F X2} and $\bra{\{+\}_F^{(A)}} X_{f^{\prime}}^{(A)} = \bra{\{+\}_F^{(A)}}$ for any face $f^{\prime} \in F$.
Similarly, one can also compute the action of $\mathsf{D}_\text{3F}$ on $\tilde{u}_{\delta \bm{e}}$ as
\begin{equation}
\begin{aligned}
\mathsf{D}_\text{3F} \tilde{u}_{\delta \bm{e}}^{(A)}
&= \mathcal{N} \bra{\{+\}_F^{(A)}} \sf{U}_\text{3F} \tilde{u}_{\delta \bm{e}}^{(A)} \ket{\{+\}_F^{(B)}} \\
&= \mathcal{N} \bra{\{+\}_F^{(A)}} \sf{U}_\text{3F} \tilde{u}_{\delta \bm{e}}^{(A)} X_{t^{\frac{1}{2}}(e)}^{(B)} \ket{\{+\}_F^{(B)}} \\
&= \mathcal{N} \bra{\{+\}_F^{(A)}} \alpha_\text{3F}(\tilde{u}_{\delta \bm{e}}^{(A)} X_{t^{\frac{1}{2}}(e)}^{(B)}) \sf{U}_\text{3F} \ket{\{+\}_F^{(B)}} \\
&= X_{t^{\frac{1}{2}}(e)}^{(B)} \mathsf{D}_\text{3F}.
\end{aligned}
\label{eq: D3F u}
\end{equation}
Here, in the last equality, we used
\begin{equation}
\alpha_\text{3F}(\tilde{u}_{\delta \bm{e}}^{(A)} X_{t^{\frac{1}{2}}(e)}^{(B)})
= X_{t^{\frac{1}{2}}(e)}^{(B)} \left. \prod_{c^{\prime}} X_{\partial c^{\prime}}^{(A)} \right.^{\int \bm{c}^{\prime} \smile_1 \bm{e} + \bm{e} \smile_1 \bm{c}^{\prime}},
\end{equation}
which follows from \eqref{eq: alpha 3F X2} and \eqref{eq: alpha 3F u}.
Equations~\eqref{eq: D3F X} and \eqref{eq: D3F u} show that $\mathsf{D}_\text{3F}$ acts on $\mathbb{Z}_2$ 1-form symmetric local operators as
\begin{equation}
X_f \xmapsto{\mathsf{D}_\text{3F}} \tilde{u}_{\delta t^{\frac{1}{2}}(\bm{f})}, \quad
\tilde{u}_{\delta \bm{e}} \xmapsto{\mathsf{D}_\text{3F}} X_{t^{\frac{1}{2}}(e)}.
\label{eq: D3F action}
\end{equation}
This agrees with the action of the 3-fermion KWW operator $\mathsf{S}\mathsf{T}^2\mathsf{S}\mathsf{T}^2\mathsf{S}^{\dagger}$ defined in Section~\ref{sec: 3-fermion KWW}.

\subsubsection{Operator equality}
%We have shown that
%\begin{equation}
%\mathsf{D}_\text{3F} = \mathsf{D}U\mathsf{D}U\mathsf{D}^{\dagger}
%\label{eq: operator equality}
%\end{equation}
%as an equality of QCAs on the $\mathbb{Z}_2$ 1-form symmetric algebra $\mathcal{A}_{\mathbb{Z}_2^{(1)}} / \mathfrak{I}$.
We have shown that $\mathsf{D}_\text{3F}$ and $\mathsf{S}\mathsf{T}^2\mathsf{S}\mathsf{T}^2\mathsf{S}^\dag$ act in the same way on the algebra of $\mathbb{Z}_2$ 1-form symmetric local operators.
In what follows, we will show that they also act in the same way on the entire tensor-product Hilbert space $\mathcal{H}_F$.
Namely, we will show that 
\begin{equation}
\mathsf{D}_\text{3F} \ket{\psi} = \mathsf{S}\mathsf{T}^2\mathsf{S}\mathsf{T}^2\mathsf{S}^{\dagger} \ket{\psi}
\label{eq: D3F psi}
\end{equation}
for any state $\ket{\psi} \in \mathcal{H}_F$.

To show \eqref{eq: D3F psi}, we first consider the case when $\ket{\psi}$ is charged under the $\mathbb{Z}_2$ 1-form symmetry.
In this case, both $\mathsf{D}_\text{3F}$ and $\mathsf{S}\mathsf{T}^2\mathsf{S}\mathsf{T}^2\mathsf{S}^{\dagger}$ act as zero on $\ket{\psi}$.
Thus, equation~\eqref{eq: D3F psi} holds for any $\ket{\psi}$ charged under the $\mathbb{Z}_2$ 1-form symmetry.

Next, we consider the case where $\ket{\psi}$ is symmetric under the $\mathbb{Z}_2$ 1-form symmetry.
In this case, if equation~\eqref{eq: D3F psi} holds for some symmetric state $\ket{\psi_0}$, then it also holds for a state $\ket{\psi}$ obtained by acting on $\ket{\psi_0}$ with an arbitrary number of local symmetric operators generated by $X_f$ and $Z_{\delta \bm{e}}$.
This is because $\mathsf{D}_\text{3F}$ and $\mathsf{S}\mathsf{T}^2\mathsf{S}\mathsf{T}^2\mathsf{S}^{\dagger}$ act in the same way on any local symmetric operators.
Furthermore, any symmetric state $\ket{\psi}$ can be obtained from $\ket{\psi_0}$ in that way.\footnote{We can show this as follows. First, we notice that any $\mathbb{Z}_2$ 1-form symmetric state can be written as a superposition of closed loop configurations of bosonic flux $Z_{\delta \bm{e}}$. For any such (non-zero) state $\ket{\psi_0}$, we can apply the product of local symmetric operators of the form $\frac{1}{2}(1 \pm Z_{\delta \bm{e}})$ so that $\ket{\psi_0}$ is projected into a state with a fixed configuration of bosonic flux loops. We can then create any loop configuration from that state by applying Pauli $X$ operators.}
Therefore, equation~\eqref{eq: D3F psi} holds for any state $\ket{\psi}$ if there exists a non-zero symmetric state $\ket{\psi_0}$ that satisfies
\begin{equation}
\mathsf{D}_\text{3F}\ket{\psi_0} = \mathsf{S}\mathsf{T}^2\mathsf{S}\mathsf{T}^2\mathsf{S}^{\dagger} \ket{\psi_0}.
\label{eq: D3F psi0}
\end{equation}

Now, to find a non-zero symmetric state satisfying~\eqref{eq: D3F psi0}, we choose $\ket{\psi_0}$ to be a state stabilized by the fermionic flux operators $\tilde{u}_{\delta \bm{e}}$ for all edges $e$:
\begin{equation}
\ket{\psi_0} \coloneq \left( \prod_{e \in E} \frac{1+\tilde{u}_{\delta \bm{e}}}{2} \right) \ket{\{+\}_F}.
\label{eq: psi0}
\end{equation}
We note that this state is $\mathbb{Z}_2$ 1-form symmetric because both $\tilde{u}_{\delta \bm{e}}$ and $\ket{\{+\}_F}$ are symmetric.
Since $\mathsf{D}_\text{3F}$ maps $\tilde{u}_{\delta e}$ to $X_f$ as shown in \eqref{eq: D3F action}, we have
\begin{equation}
X_f \mathsf{D}_\text{3F} \ket{\psi_0} = \mathsf{D}_\text{3F} \tilde{u}_{\delta t^{-\frac{1}{2}}(\bm{f})} \ket{\psi_0} = \mathsf{D}_\text{3F} \ket{\psi_0}.
\end{equation}
This equation shows that $\mathsf{D}_\text{3F} \ket{\psi_0}$ is stabilized by $X_f$ for all faces $f$.
Namely, we have
\begin{equation}
\mathsf{D}_\text{3F} \ket{\psi_0} \propto \ket{\{+\}_F}.
\end{equation}
For the same reason, we also have
\begin{equation}
\mathsf{S}\mathsf{T}^2\mathsf{S}\mathsf{T}^2\mathsf{S}^{\dagger} \ket{\psi_0} \propto \ket{\{+\}_F}.
\end{equation}
Thus, we find that $\ket{\psi_0}$ in \eqref{eq: psi0} satisfies \eqref{eq: D3F psi0}, provided that $\mathsf{D}_\text{3F}$ is normalized appropriately.
This in turn implies that \eqref{eq: D3F psi} holds for any state $\ket{\psi} \in \mathcal{H}_F$.

\section{Properties of the quotient algebra $\mc{A}_{\Z_2^{(1)}} / \mathfrak{I}$}\label{app:quotient_algebra}

In Section \ref{sec:qca_on_quotient_algebra} we gave a quick overview of the mathematical framework needed to define QCA on the quotient algebra $\mc{A}_{\Z_2^{(1)}} / \mathfrak{I}$. In this appendix, we give the same results (without all the surrounding explanations and context) but with all the proofs.

We begin by recalling what the setup is. We have an infinite cubical lattice (in this appendix, we will work exclusively in the thermodynamic limit), with qubits on the faces. The full algebra of operators is
\be
\mc{A} \coloneqq \langle X_f, Z_f \; \mid \; f \in F \rangle \,,
\ee
where $F$ is the set of all faces (2-cells).

That is, it is the (complex) algebra generated by the Pauli X and Z operators at all the faces. A generic element of $\mc{A}$ is a (finite, complex) linear combination of (finite) products of these Pauli operators over various faces. Without loss of generality, we may assume that within a product, all the X operators appear to the left of the Z operators. Since $X^2 = Z^2 = 1$, each appears at most once.

Similarly, the $\Z_2^{(1)}$-symmetric subalgebra is
\be
\mc{A}_{\Z_2^{(1)}} \coloneqq \left\langle X_f, Z_{\delta \bm{e}} = \prod_{f' \; \mid \; e \in \partial f'} Z_{f'} \; \mid \; e \in E, f \in F \right\rangle \,.
\ee

$F$ is again the set of all faces, and $E$ is the set of all edges (1-cells). Together with the Hermitian adjoint, $\mc{A}_{\Z_2^{(1)}}$ forms a $*$-algebra.

We define the ideal $\mathfrak{I} \subset \mc{A}_{\Z_2^{(1)}}$
\be
\mathfrak{I} \coloneqq \curpar{ \sum_{i=1}^n (1 - X_{\partial c_i}) a_i \; \mid \; n \in \mathbb{N}, c_i \in C, a_i \in \mc{A}_{\Z_2^{(1)}}} \,,
\ee
where $C$ is the set of all cubes (3-cells).

\begin{lemma}
 $\mathfrak{I} \subset \mc{A}_{\Z_2^{(1)}}$ as defined above is a two-sided $*$-ideal.
\end{lemma}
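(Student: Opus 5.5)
The plan is to check the three defining properties directly from the explicit description of $\mathfrak{I}$ as finite sums $\sum_i (1 - X_{\partial c_i}) a_i$ with $a_i \in \mc{A}_{\Z_2^{(1)}}$. The single structural input is that each $X_{\partial c}$ is central in $\mc{A}_{\Z_2^{(1)}}$. It commutes with every $X_f$ trivially. It commutes with every generator $Z_{\delta \bm{e}}$ because $\partial c$ and $\delta \bm{e}$ overlap on an even number of faces: if $e \in \partial c$ they share exactly two faces, and otherwise none. Equivalently, this is the defining property of $\mc{A}_{\Z_2^{(1)}}$ as the commutant of the $X_{\partial c}$. Since the generators of $\mc{A}_{\Z_2^{(1)}}$ all commute with $X_{\partial c}$, so does every element, and hence $1 - X_{\partial c}$ is central as well.

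First, $\mathfrak{I}$ is a linear subspace. Sums of elements of the given form are again of that form, since we just concatenate the finite lists, and scalar multiples are absorbed into the $a_i$.

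Second, $\mathfrak{I}$ is a two-sided ideal. For $b \in \mc{A}_{\Z_2^{(1)}}$, right multiplication gives
\be
\Big(\sum_i (1 - X_{\partial c_i}) a_i\Big) b = \sum_i (1 - X_{\partial c_i})(a_i b),
\ee
which lies in $\mathfrak{I}$ since $a_i b \in \mc{A}_{\Z_2^{(1)}}$. For left multiplication we use centrality to write $b(1 - X_{\partial c_i}) a_i = (1 - X_{\partial c_i})(b a_i)$, which again lies in $\mathfrak{I}$.

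Third, $\mathfrak{I}$ is closed under the involution. Because $X_{\partial c}$ is Hermitian, we have $\big((1 - X_{\partial c_i}) a_i\big)^* = a_i^* (1 - X_{\partial c_i})$. Centrality turns this into $(1 - X_{\partial c_i}) a_i^*$, and this lies in $\mathfrak{I}$ because $\mc{A}_{\Z_2^{(1)}}$ is a $*$-subalgebra: its generators $X_f$ and $Z_{\delta \bm{e}}$ are self-adjoint.

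There is no real obstacle in this argument. The only step needing care is the parity count showing that $X_{\partial c}$ commutes with $Z_{\delta \bm{e}}$, which is the lattice identity $\bm{e}(\partial\partial c) = 0$ mod $2$.
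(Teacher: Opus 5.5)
Your proposal is correct and follows essentially the same route as the paper's proof: closure under addition is immediate, right-ideal closure is trivial, and both left-ideal closure and $*$-closure follow because every element of $\mc{A}_{\Z_2^{(1)}}$ commutes with each self-adjoint $X_{\partial c}$. Your explicit parity check ($\delta\bm{e}(\partial c) = \bm{e}(\partial\partial c) = 0$ mod $2$) for why $Z_{\delta \bm{e}}$ commutes with $X_{\partial c}$ is a small detail the paper simply asserts.
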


\begin{proof}
 Clearly $\mathfrak{I} + \mathfrak{I} = \mathfrak{I}$ and $-\mathfrak{I} = \mathfrak{I}$: it is closed under addition and additive inverse. It is trivially closed under right multiplication by any $a \in \mc{A}_{\Z_2^{(1)}}$. This makes it a right ideal.

 To prove that it is a left ideal, and that $\mathfrak{I}^\dag = \mathfrak{I}$, note that every element of $\mc{A}_{\Z_2^{(1)}}$ commutes with $X_{\partial c}$, for any cube $c \in C$. This is because all the generators of $\mc{A}_{\Z_2^{(1)}}$ commute with $X_{\partial c}$. Thus:
 \be
 a \sum_{i=1}^n (1 - X_{\partial c_i}) a_i = \sum_{i=1}^n (1 - X_{\partial c_i}) (a \cdot a_i) \in \mathfrak{I} \,,
 \ee
 and using the fact that $X$ is self-adjoint
 \begin{align}
  \paren{\sum_{i=1}^n (1 - X_{\partial c_i}) a_i}^\dag &= \sum_{i=1}^n a_i^\dag (1 - X_{\partial c_i}) \nonumber \\
  &= \sum_{i=1}^n (1 - X_{\partial c_i}) a_i^\dag \in \mathfrak{I} \,.
 \end{align}
\end{proof}

\begin{corollary}
 $\mc{A}_{\Z_2^{(1)}} / \mathfrak{I}$ is a $*$-algebra.
\end{corollary}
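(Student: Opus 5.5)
The plan is to invoke the standard fact that the quotient of a $*$-algebra by a two-sided $*$-ideal inherits a $*$-algebra structure, using the preceding lemma that $\mathfrak{I}$ is such an ideal in $\mc{A}_{\Z_2^{(1)}}$. The proof reduces to checking that each operation on cosets is well defined; the $*$-algebra axioms then descend automatically from $\mc{A}_{\Z_2^{(1)}}$.

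\textbf{Vector space and product.} First, $\mathfrak{I}$ is a complex linear subspace, since it is closed under addition and under multiplication by scalars $\lambda\in\mathbb{C}$ (absorb $\lambda$ into the $a_i$). Hence $\mc{A}_{\Z_2^{(1)}}/\mathfrak{I}$ is a complex vector space. Next, define the product by $(a+\mathfrak{I})(b+\mathfrak{I}) \coloneqq ab+\mathfrak{I}$. Suppose $a'=a+I$ and $b'=b+J$ with $I,J\in\mathfrak{I}$. Then
\be
a'b'-ab = aJ + Ib + IJ .
\ee
The right-hand side lies in $\mathfrak{I}$ because $\mathfrak{I}$ is two-sided. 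Associativity, bilinearity and the unit $1+\mathfrak{I}$ are inherited from $\mc{A}_{\Z_2^{(1)}}$, since the quotient map is by construction a surjective algebra homomorphism.

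\textbf{Involution.} Define $(a+\mathfrak{I})^* \coloneqq a^\dag+\mathfrak{I}$. This is well defined because $(a+I)^\dag = a^\dag + I^\dag$ and $I^\dag\in\mathfrak{I}$ by the lemma. The $*$-axioms $(x^*)^*=x$, $(xy)^*=y^*x^*$ and $(\lambda x)^* = \bar\lambda x^*$ hold in $\mc{A}_{\Z_2^{(1)}}$ and pass to cosets directly.

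There is no real obstacle here. The only point needing care is that well-definedness of the product needs both the left- and right-ideal properties, and well-definedness of the involution needs $\mathfrak{I}^\dag=\mathfrak{I}$; both are supplied by the lemma. One could also note that $\mathfrak{I}$ is proper, so the quotient is nonzero, but this is not needed for the $*$-algebra structure itself.
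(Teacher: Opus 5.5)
Your proposal is correct and follows the paper's route: the paper states the corollary as an immediate consequence of the preceding lemma that $\mathfrak{I}$ is a two-sided $*$-ideal of $\mc{A}_{\Z_2^{(1)}}$. Your well-definedness checks for the product and the involution simply spell out the standard quotient argument that the paper leaves implicit.
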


\subsection{The twisted group algebra $\mathbb{C}^\text{lk}[C_2 \oplus C_1]$}

Let us fix, as before, the infinite cubical lattice in 3d. As a CW-complex, it has an associated chain complex, and we denote
\be
C_k \coloneqq C_k(\text{cubical lattice}, \Z_2) \coloneqq \bigoplus_{k-\text{cell } \sigma} \Z_2 \,,
\ee
the space of k-chains with $\Z_2$-coefficients.

\begin{lemma}
 Let $V \coloneqq \mathbb{C}[C_2 \oplus C_1]$ be the complex vector space of (finite) linear combinations of elements in the abelian group $C_2 \oplus C_1$. Define the map $\mu : V \times V \rightarrow V$ by
 \be\label{eq:Clk_multiplication}
 \mu\paren{ (c_2, c_1), (c_2', c_1') } \coloneqq (-1)^{\delta \bm{c_1} (c_2')} (c_2 + c_2', c_1 + c_1') \,,
 \ee
 on the basis elements, and extended by bilinearity.

 Then $(V, \mu)$ is an (associative, complex, unital) algebra with multiplication $\mu$.
\end{lemma}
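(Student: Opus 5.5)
The plan is to recognise $\mu$ as the multiplication of a twisted group algebra $\mathbb{C}^{\epsilon}[G]$ for the abelian group $G = C_2 \oplus C_1$, with twisting function
\be
\epsilon\big((c_2,c_1),(c_2',c_1')\big) \coloneqq (-1)^{\delta \bm{c_1}(c_2')} \,.
\ee
It is standard that such an algebra is associative precisely when $\epsilon$ is a $2$-cocycle. Bilinearity of $\mu$ holds by construction, since $\mu$ is defined on a basis and extended bilinearly. So the work is to check associativity on basis elements and to exhibit a unit.

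For associativity I will compute both bracketings on three basis elements $g=(c_2,c_1)$, $g'=(c_2',c_1')$ and $g''=(c_2'',c_1'')$. Both $(g g') g''$ and $g(g' g'')$ are proportional to the same basis element $(c_2+c_2'+c_2'', c_1+c_1'+c_1'')$, so only the signs need comparing.

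\begin{itemize}
\item The first bracketing gives the exponent $\delta\bm{c_1}(c_2') + \delta(\bm{c_1}+\bm{c_1'})(c_2'')$.
\item The second bracketing gives $\delta\bm{c_1'}(c_2'') + \delta\bm{c_1}(c_2'+c_2'')$.
\end{itemize}

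These agree mod $2$ because the pairing $(\bm{c_1}, c_2) \mapsto \delta\bm{c_1}(c_2) = \bm{c_1}(\partial c_2)$ is bilinear over $\Z_2$. Its linearity in each slot follows from linearity of $\delta$ and of evaluation, and everything is well-defined because all chains are finite sums. In other words, $\epsilon$ is a bicharacter, and bicharacters are automatically $2$-cocycles.

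For the unit I take $(0,0)$. Since $\delta\bm{0} = 0$ and any cochain evaluated on the zero chain vanishes, the sign is $+1$ on both sides, so $(0,0)$ is a two-sided identity. Complex linearity is again immediate from the bilinear extension.

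I do not expect a real obstacle. The only point needing care is confirming bilinearity of $\delta\bm{c_1}(c_2)$ mod $2$ under the chain–cochain identification. The involution and $*$-compatibility are not part of this statement and are left to the next results.
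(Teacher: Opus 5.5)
Your proposal is correct and takes the same approach as the paper, which says only that associativity (and the $*$-properties) are a straightforward check on basis elements. Your comparison of the two sign exponents, using $\Z_2$-bilinearity of $(\bm{c_1},c_2)\mapsto\delta\bm{c_1}(c_2)$, together with the unit $(0,0)$, is exactly that check written out.
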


\begin{lemma}
 Let $(V, \mu)$ be the algebra as before. Define $(-)^{*} : V \rightarrow V$ by
 \be
 (c_2, c_1)^{*} \coloneqq (-1)^{\delta \bm{c_1} (c_2)} (c_2, c_1)
 \ee
 on the basis elements, and extended by \emph{conjugate}-linearity to the rest of $V$.

 Then $(V, \mu, {}^{*})$ is a $*$-algebra.
\end{lemma}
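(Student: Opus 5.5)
The statement to prove is that $(V,\mu,{}^{*})$ is a $*$-algebra. The previous lemma already supplies the associative unital algebra structure on $(V,\mu)$. So it remains to check three things for the conjugate-linear map ${}^{*}$:
\begin{itemize}
\item it is an involution, $x^{**}=x$;
\item it is anti-multiplicative, $(xy)^{*}=y^{*}x^{*}$;
\item it fixes the unit.
\end{itemize}
Since ${}^{*}$ is conjugate-linear and $\mu$ is bilinear, each identity is additive (or conjugate-additive) in every argument. It therefore suffices to verify them on basis elements $(c_2,c_1)\in C_2\oplus C_1$, which is where all the work lies.

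\textbf{Involution and unit.} For a basis element we have $(c_2,c_1)^{**}=\big((-1)^{\delta\bm{c_1}(c_2)}\big)^2(c_2,c_1)=(c_2,c_1)$. Here we use that the sign is real, so conjugation does not affect it, and that the support of the element is unchanged by ${}^{*}$. The unit is $(0,0)$, since $\mu$ with $(0,0)$ on either side produces the sign $(-1)^{0}$. Moreover $(0,0)^{*}=(0,0)$ because $\delta\bm{0}=0$.

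\textbf{Anti-multiplicativity.} Write $x=(c_2,c_1)$ and $y=(c_2',c_1')$, and set $\langle c_1,c_2\rangle := \delta\bm{c_1}(c_2)=\bm{c_1}(\partial c_2)\in\Z_2$, which is bilinear over $\Z_2$.

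The left-hand side is
\begin{equation}
(xy)^{*}=(-1)^{\langle c_1,c_2'\rangle+\langle c_1+c_1',\,c_2+c_2'\rangle}(c_2+c_2',c_1+c_1').
\end{equation}
For the right-hand side, $y^{*}x^{*}=(-1)^{\langle c_1',c_2'\rangle+\langle c_1,c_2\rangle}\,\mu(y,x)$, which gives
\begin{equation}
y^{*}x^{*}=(-1)^{\langle c_1',c_2'\rangle+\langle c_1,c_2\rangle+\langle c_1',c_2\rangle}(c_2+c_2',c_1+c_1').
\end{equation}
Expanding $\langle c_1+c_1',c_2+c_2'\rangle$ by bilinearity, the left exponent becomes $\langle c_1,c_2'\rangle+\langle c_1,c_2\rangle+\langle c_1,c_2'\rangle+\langle c_1',c_2\rangle+\langle c_1',c_2'\rangle$. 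The two copies of $\langle c_1,c_2'\rangle$ cancel mod $2$, and what remains is exactly the right exponent.

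This sign bookkeeping is the only step with any content, and it is routine. The point to handle carefully is that the twisting cocycle $\langle c_1,c_2'\rangle$ is \emph{not} symmetric. The cancellation works because the star sign is the diagonal value of that same bilinear form.

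**Extension from the basis.** Finally, extend from basis elements to all of $V$:
\begin{itemize}
\item For the involution, conjugate-linearity applied twice gives a linear map, which agrees with the identity on the basis.
\item For anti-multiplicativity, both sides are conjugate-bilinear in $(x,y)$, so agreement on basis pairs gives equality everywhere.
\end{itemize}
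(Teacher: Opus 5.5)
Your proof is correct and follows the same route as the paper, which simply lists the involution and anti-multiplicativity identities on basis elements as a ``straightforward verification.'' Your sign computation supplies exactly the check the paper leaves implicit: the two copies of $\delta\bm{c_1}(c_2')$ cancel mod $2$, and the remaining terms match on both sides.
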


\begin{proof}
 Both lemmas are simply straightforward verification of the properties. One must check the following equations, where for simplicity we just write $\mu$ as multiplication
 \begin{align}
  \big[(c_2, c_1) (c_2', c_1')\big] (c_2'', c_1'') &= (c_2, c_1) \big[(c_2', c_1') (c_2'', c_1'')\big] \\
  \paren{(c_2, c_1)^{*}}^{*} &= (c_2, c_1) \\
  \paren{(c_2, c_1) (c_2', c_1')}^{*} &= (c_2', c_1')^{*} (c_2, c_1)^{*}
 \end{align}
\end{proof}

\begin{definition}
 We define $\mathbb{C}^\text{lk}[C_2 \oplus C_1]$ as the $*$-algebra described above.
\end{definition}

\subsection{Isomorphisms of algebras}

\begin{lemma}
 The map $\phi : \mathbb{C}^\text{lk}[C_2 \oplus C_1] \rightarrow \mc{A}_{\Z_2^{(1)}}$ defined by
 \be
 \phi(c_2, c_1) \coloneqq \prod_{f \in c_2} X_f \prod_{e \in c_1} Z_{\delta \bm{e}} \,,
 \ee
 and extended by linearity, is a $*$-homomorphism.
\end{lemma}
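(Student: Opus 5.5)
Since $\phi$ is defined on the basis $C_2 \oplus C_1$ and extended linearly, and the multiplication $\mu$ and the involution ${}^*$ are likewise defined on basis elements and extended (conjugate-)bilinearly, it suffices to check three things on basis elements. First, $\phi$ is well defined and unital. Second, $\phi$ is multiplicative. Third, $\phi$ intertwines the involutions. Well-definedness is immediate: $c_2$ and $c_1$ are finite sets of faces and edges, and the Pauli operators involved all square to $1$. Also $\phi(0,0)=1$. One should fix an ordering convention for the products, but since the $X_f$ commute among themselves, and likewise the $Z_{\delta\bm{e}}$, only the "$X$'s to the left of $Z$'s" convention matters.

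For multiplicativity, I compute
\[
\phi(c_2,c_1)\,\phi(c_2',c_1') = X_{c_2}\, Z_{\delta \bm{c_1}}\, X_{c_2'}\, Z_{\delta\bm{c_1'}} ,
\]
where $X_{c_2}:=\prod_{f\in c_2}X_f$ and $Z_{\delta\bm{c_1}}:=\prod_{e\in c_1}Z_{\delta\bm{e}}=\prod_{f}Z_f^{\delta\bm{c_1}(f)}$. The latter identity holds because $Z_f^2=1$, so the exponent only matters mod $2$. The only step with content is commuting $Z_{\delta\bm{c_1}}$ past $X_{c_2'}$. Each pair $(Z_f, X_f)$ anticommutes and distinct faces commute, so
\[
Z_{\delta\bm{c_1}} X_{c_2'} = (-1)^{\sum_f \delta\bm{c_1}(f)\,\bm{c_2'}(f)} X_{c_2'} Z_{\delta\bm{c_1}} = (-1)^{\delta\bm{c_1}(c_2')} X_{c_2'} Z_{\delta\bm{c_1}} .
\]
Then $X_{c_2}X_{c_2'}=X_{c_2+c_2'}$ and $Z_{\delta\bm{c_1}}Z_{\delta\bm{c_1'}}=Z_{\delta(\bm{c_1}+\bm{c_1'})}$, using $\mathbb{Z}_2$ coefficients and $X^2=Z^2=1$. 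This reproduces exactly the twisted product~\eqref{eq:Clk_multiplication}, and bilinearity extends the identity to all of $V$.

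For the involution, I compute
\[
\phi(c_2,c_1)^\dagger = Z_{\delta\bm{c_1}}^\dagger X_{c_2}^\dagger = Z_{\delta\bm{c_1}} X_{c_2} ,
\]
since the Paulis are Hermitian. By the same commutation as above, this equals $(-1)^{\delta\bm{c_1}(c_2)} X_{c_2} Z_{\delta\bm{c_1}} = \phi\big((c_2,c_1)^*\big)$. Conjugate-linearity of both ${}^*$ and $\dagger$ then extends the identity to all of $V$.

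The only potential obstacle is the bookkeeping of the sign $(-1)^{\delta\bm{c_1}(c_2')}$, together with the observation that $\prod_{e\in c_1}Z_{\delta\bm{e}}$ depends only on $\delta\bm{c_1}$ mod $2$; everything else is routine. This also explains why the twist in $\mathbb{C}^{\text{lk}}[C_2\oplus C_1]$ was chosen as it was.
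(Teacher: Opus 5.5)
Your proposal is correct and follows the paper's proof. Both check the product and the involution on basis elements, using $\prod_{e\in c_1}Z_{\delta\bm{e}}=\prod_{f\in\delta\bm{c_1}}Z_f$ and the sign $(-1)^{\delta\bm{c_1}(c_2')}$ picked up when commuting $Z_{\delta\bm{c_1}}$ past $X_{c_2'}$. You write out the multiplicativity computation explicitly, which the paper only describes as ``very similar'' to the involution check, and you also make unitality and the (conjugate-)linear extension explicit.
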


\begin{proof}
 Note that
 \be
 \prod_{e \in c_1} Z_{\delta \bm{e}} = \prod_{f \in \delta \bm{c_1}} Z_f \,.
 \ee

 This means that:
 \be
 \phi(c_2, c_1)^\dag = \prod_{f \in \delta \bm{c_1}} Z_f \prod_{f' \in c_2} X_{f'} \,,
 \ee
 and to switch the order of the operators, we must pick up as many $-1$ phases as there are overlaps between $\delta \bm{c_1}$ and $c_2$. $\delta \bm{c_1}(c_2)$ only counts the overlaps mod 2, but that is all we need, so:
 \begin{align}
  \phi(c_2, c_1)^\dag &= (-1)^{\delta \bm{c_1}(c_2)} \prod_{f' \in c_2} X_{f'} \prod_{f \in \delta \bm{c_1}} Z_f \nonumber \\
  &= \phi \paren{ (-1)^{\delta \bm{c_1}(c_2)} (c_2, c_1) } = \phi\paren{ (c_2, c_1)^{*} } \,.
 \end{align}

 The proof that $\phi$ distributes over the product is very similar. One must swap the order of two products of Pauli Z and Pauli X, and the phase in Eq.~\eqref{eq:Clk_multiplication} is precisely the one that is need to match that.
\end{proof}

\begin{proposition}\label{prop:isomorphisms1}
 We have isomorphisms
 \be
 \mc{A}_{\Z_2^{(1)}} \cong \faktor{\mathbb{C}^\text{lk}[C_2 \oplus C_1]}{\sim_\delta} \cong \mathbb{C}^\text{lk}[C_2 \oplus B^2_\text{fin.}] \,,
 \ee
 where $\sim_\delta$ is the equivalence
 \be
 (c_2, c_1) \sim_\delta (c_2', c_1') \iff c_2 = c_2' \text{ and } \delta \bm{c_1} = \delta \bm{c_1'} \,.
 \ee

 It can be extended to a vector subspace (and a two-sided $*$-ideal) by being identified with the subspace
 \be
 \mathbb{C}\big\langle (c_2, c_1) - (c_2, c_1') \; \mid \; c_2 \in C_2, c_1, c_1' \in C_1,  \delta \bm{c_1} = \delta \bm{c_1'} \big\rangle \,.
 \ee
\end{proposition}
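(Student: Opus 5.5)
The plan is to show that the $*$-homomorphism $\phi$ of the previous lemma is surjective, and that its kernel is exactly the subspace $K$ spanned by differences $(c_2,c_1)-(c_2,c_1')$ with $\delta\bm{c_1}=\delta\bm{c_1'}$. Surjectivity is immediate, because the generators $X_f=\phi(f,0)$ and $Z_{\delta\bm{e}}=\phi(0,e)$ lie in the image and $\phi$ is an algebra map.

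Before computing the kernel, I will check that $K$ is a two-sided $*$-ideal, so that the quotient makes sense as a $*$-algebra. The point is that the twist $(-1)^{\delta\bm{c_1}(c_2')}$ in \eqref{eq:Clk_multiplication}, and the sign in the involution, depend on $c_1$ only through $\delta\bm{c_1}$. Hence multiplying a generator of $K$ on the left or right by a basis element gives another difference of the same form, up to a common sign. The same holds for applying ${}^{*}$. In particular $\sim_\delta$ is compatible with the product and the involution.

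Next I verify $K\subseteq\ker\phi$. This follows from the identity $\prod_{e\in c_1}Z_{\delta\bm{e}}=\prod_{f\in\delta\bm{c_1}}Z_f$, already used in the proof of the previous lemma: $\phi(c_2,c_1)$ depends on $c_1$ only through $\delta\bm{c_1}$.

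The main step is the reverse inclusion $\ker\phi\subseteq K$. Modulo $K$, every element of $\mathbb{C}^{\rm lk}[C_2\oplus C_1]$ can be written as $\sum_{(c_2,b)}\lambda_{c_2,b}\,(c_2,c_1(b))$. Here the sum runs over finitely many distinct pairs $(c_2,b)\in C_2\oplus B^2_{\rm fin.}$, and for each finitely supported coboundary $b$ we fix one chosen preimage $c_1(b)$. Applying $\phi$ gives $\sum\lambda_{c_2,b}\,X_{c_2}Z_b$, where $X_{c_2}$ and $Z_b$ denote the corresponding Pauli strings on faces. The Pauli monomials $X_{S}Z_{S'}$, for distinct pairs of finite face sets $(S,S')$, are linearly independent in $\mc{A}$; this is a standard fact, checked in a finite region via the trace pairing $\mathrm{tr}(P^\dagger Q)\propto\delta_{P,Q}$. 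The map $(c_2,b)\mapsto(c_2,b)$, read as a pair of face sets, is injective. So the image vanishes only if all $\lambda_{c_2,b}=0$, and therefore $\ker\phi=K$.

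Finally, the same normal form gives the second isomorphism. The linear map $\mathbb{C}^{\rm lk}[C_2\oplus C_1]/{\sim_\delta}\to\mathbb{C}^{\rm lk}[C_2\oplus B^2_{\rm fin.}]$, $(c_2,c_1)\mapsto(c_2,\delta\bm{c_1})$, is a bijection of bases. It intertwines the products and involutions, where $\mathbb{C}^{\rm lk}[C_2\oplus B^2_{\rm fin.}]$ is given the multiplication $(c_2,b)(c_2',b')=(-1)^{b(c_2')}(c_2+c_2',b+b')$ and the analogous involution; this works because the twist only involves $\delta\bm{c_1}$. I expect the only real obstacle to be bookkeeping in the linear-independence step on the infinite lattice. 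This is handled by noting that any finite linear combination lives in the algebra of a finite region, where the trace argument applies.
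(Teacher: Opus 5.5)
Your proposal is correct and follows essentially the same route as the paper. Both arguments apply the first isomorphism theorem to $\phi$, identify $\ker\phi$ with the span of differences via linear independence of the Pauli monomials in $\mc{A}$, and finish with the map $(c_2,c_1)\mapsto(c_2,\delta\bm{c_1})$. Your normal-form reduction modulo $K$ is equivalent to the paper's fiber-wise rewriting $\sum_i A_{\gamma_i}\gamma_i=\sum_{i<n}A_{\gamma_i}(\gamma_i-\gamma_n)$. Your separate check that $K$ is a two-sided $*$-ideal is harmless but redundant: once $K=\ker\phi$, it is automatically one.
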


\begin{proof}
 The isomorphism on the left is direct application of the first isomorphism theorem to $\phi$. The fact that the image is all of $\mc{A}_{\Z_2^{(1)}}$ is clear, just from the definition of that algebra.

 For the kernel, observe that the subspace defined previously is contained in the kernel. Indeed
 \be
 \phi((c_2, c_1) - (c_2, c_1')) = \prod_{c_2} X \prod_{\delta \bm{c_1}} Z - \prod_{c_2} X \prod_{\delta \bm{c_1'}} Z = 0 \,,
 \ee
 follows immediately from $\delta \bm{c_1} = \delta \bm{c_1'}$.

 Now, suppose
 \be
 \phi\paren{ \sum_{c_2, c_1} A_{c_2, c_1} (c_2, c_1) } = \sum_{c_2, c_1} A_{c_2, c_1} \prod_{c_2} X \prod_{\delta \bm{c_1}} Z = 0 \,,
 \ee
 where $A_{c_2, c_1} \in \mathbb{C}$, and only finitely many of them are non-zero. That last equality can be checked in $\mc{A}$, which we know has basis
 \be
 \curpar{ \prod_{c_2} X \prod_{c_2'} Z, \text{ for } c_2, c_2' \in C_2} \,.
 \ee

 We must have that, for any fixed $c_2, \delta \bm{c_1}$:
 \be
 \sum_{c_1' \; \mid \; \delta \bm{c_1'} = \delta \bm{c_1}} A_{c_2, c_1'} = 0 \,.
 \ee

 Only finitely many $A_{c_2, c_1'}$ are non-zero. Put them in some order, and label them by $\gamma_i = (c_2, c_{1, i})$ for $i = 1, ..., n$. Then $A_{\gamma_n} = - A_{\gamma_1} - \dots - A_{\gamma_{n-1}}$, and
 \be
 \sum_{i = 1}^n A_{\gamma_i} \gamma_i = \sum_{i = 1}^{n-1} A_{\gamma_i} \paren{\gamma_i - \gamma_n}
 \ee

 This same rearrangement can be done for every $c_2$, and every $\delta \bm{c_1}$. This shows that
 \be
 \sum_{c_2, c_1} A_{c_2, c_1} (c_2, c_1) \,,
 \ee
 is in the span of elements of the form $\gamma_i - \gamma_n = (c_2, c_{1, i}) - (c_2, c_{1, n})$, where $\delta \bm{c_{1, i}} = \delta\bm{c_{1, n}}$, which is what we wanted to show.

 The isomorphism on the right is just a direct consequence of how we interpret the isomorphism on the left. The map which gives the isomorphism is just
 \be
 (c_2, c_1) \longmapsto (c_2, \delta \bm{c_1}) \,.
 \ee
 
 Instead of labeling the basis elements by $(c_2, c_1)$, and then collapsing all the ones which have the same coboundary, we can just directly label the basis elements by $(c_2, \bm{b^2}) \in C_2 \oplus B^2_\text{fin.}$. The only caveat is that $\bm{b^2}$ is not quite an arbitrary coboundary, because it has to be finitely supported.
\end{proof}

\begin{proposition}\label{prop:isomorphisms2}
 We have isomorphisms
 \be
 \faktor{\mc{A}_{\Z_2^{(1)}}}{\mathfrak{I}} \cong \faktor{\mathbb{C}^\text{lk}[C_2 \oplus C_1]}{\sim_{\delta, \partial}} \cong \mathbb{C}^\text{lk}[B_1 \oplus B^2_\text{fin.}] \,,
 \ee
 where $\sim_{\delta, \partial}$ is the equivalence
 \be
 (c_2, c_1) \sim_\delta (c_2', c_1') \iff \partial c_2 = \partial c_2' \text{ and } \delta \bm{c_1} = \delta \bm{c_1'} \,.
 \ee

 Like before, it can be reinterpreted as a two-sided $*$-ideal, and that is how we interpret the quotient.
\end{proposition}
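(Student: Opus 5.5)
The plan is to compose the isomorphism of Proposition~\ref{prop:isomorphisms1} with the quotient map, and to identify the preimage of $\mathfrak{I}$ in $\mathbb{C}^\text{lk}[C_2 \oplus C_1]$. Let $\pi:\mc{A}_{\Z_2^{(1)}}\to \mc{A}_{\Z_2^{(1)}}/\mathfrak{I}$ be the quotient map and set $\psi \coloneqq \pi\circ\phi$. This map is a surjective $*$-homomorphism. By the first isomorphism theorem it therefore suffices to show that $\ker\psi$ equals the span
\be
K \coloneqq \mathbb{C}\big\langle (c_2, c_1) - (c_2', c_1') \;\mid\; \partial c_2 = \partial c_2',\ \delta\bm{c_1}=\delta\bm{c_1'} \big\rangle \,.
\ee
Once that is done, $K$ is automatically a two-sided $*$-ideal, being a kernel.

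For the inclusion $K\subseteq\ker\psi$, I would use the fact that over $\Z_2$ on the infinite cubic lattice, $\partial c_2=\partial c_2'$ implies $c_2-c_2'\in Z_2=B_2$. Here I invoke acyclicity of the finite-chain complex of $\mathbb{R}^3$, which is contractible. Hence $c_2' = c_2+\partial c_3$ for a finite $3$-chain $c_3=\sum_i c_i$. Then $\phi(c_2',c_1)$ equals $\pm\,\phi(c_2,c_1)\prod_i X_{\partial c_i}$. The sign is in fact $+1$, since the $X$'s commute among themselves and are moved past $Z$'s at the same positions; it can also be read off from the twist in \eqref{eq:Clk_multiplication}, because $\delta\bm{c_1}(\partial c_3)=\delta\delta\bm{c_1}(c_3)=0$. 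Each factor satisfies $X_{\partial c}\equiv 1 \bmod \mathfrak{I}$, so the two images agree in the quotient. Combining this with the $\delta$-part already handled in Proposition~\ref{prop:isomorphisms1} gives $K\subseteq\ker\psi$.

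For the reverse inclusion, I would pull $\mathfrak{I}$ back. A general element of $\mathfrak{I}$ has the form $\sum_i (1-X_{\partial c_i})a_i$. Writing $a_i=\phi(w_i)$, we get $(1-X_{\partial c_i})\phi(\gamma)=\phi\big(\gamma-(\partial c_i,0)\cdot\gamma\big)$. This lies in $\phi(K)$, because $(\partial c_i,0)\cdot(c_2,c_1)=(c_2+\partial c_i,c_1)$ with no sign: the twist involves $\delta\bm{0}$. Hence $\phi^{-1}(\mathfrak{I})\subseteq K+\ker\phi\subseteq K$, using $\ker\phi\subseteq K$ from Proposition~\ref{prop:isomorphisms1}. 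Therefore $\ker\psi=\phi^{-1}(\mathfrak{I})=K$.

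The second isomorphism is obtained by relabelling basis classes via $(c_2,c_1)\mapsto(\partial c_2,\delta\bm{c_1})\in B_1\oplus B^2_\text{fin.}$. The twisted product and involution descend to these labels, because the sign $(-1)^{\delta\bm{c_1}(c_2')}=(-1)^{\bm{c_1}(\partial c_2')}$ depends only on $\partial c_2'$ and $\delta\bm{c_1}$. The classes $[(c_2,c_1)]$ must also be linearly independent in the quotient, which amounts to $K$ being spanned by differences within classes, and that holds by construction. The main obstacle I expect is the homological step $Z_2=B_2$ for finite chains, together with making sure no stray signs appear. That step is standard, but it is where the infinite-lattice, finite-support setting matters; on a torus it would fail, which is why contractible regions are emphasized.
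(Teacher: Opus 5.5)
Your proposal is correct and follows essentially the same route as the paper: apply the first isomorphism theorem to $\pi\circ\phi$, and show that $\phi^{-1}(\mathfrak{I})$ is spanned by the differences $(c_2,c_1)-(c_2+\partial c,c_1)$ together with $\ker\phi$. For the converse, use that finite $2$-cycles on the infinite cubic lattice are boundaries, $c_2+c_2'=\partial c_3$, so the two images differ by a product of $X_{\partial c}$'s, which is $1$ modulo $\mathfrak{I}$ (the paper phrases this last step as an induction on the number of cubes in $c_3$). The second isomorphism is likewise the relabelling $(c_2,c_1)\mapsto(\partial c_2,\delta\bm{c_1})$; your check that the twisting sign depends only on $(\partial c_2',\delta\bm{c_1})$ makes explicit what the paper leaves implicit.
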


\begin{proof}
 This is once again an application of the first isomorphism theorem. Let $\pi$ denote the projection to the quotient
 \be
 \pi : \mc{A}_{\Z_2^{(1)}} \longrightarrow \faktor{\mc{A}_{\Z_2^{(1)}}}{\mathfrak{I}} \,,
 \ee
 which is surjective by construction. Consider $\pi \circ \phi$. It is surjective, because it is the composition of two surjective maps. Now we must show that the kernel is the one we expect. Note that $\ker (\pi \circ \phi) = \phi^{-1}(\mathfrak{I})$, which is the ideal generated by all the preimages under $\phi$ of $1 - X_{\partial c}$.

 Suppose $x \in \phi^{-1}(\mathfrak{I})$. Then
 \be
 \phi(x) = \sum_{i=1}^n (1 - X_{\partial c^{(i)}}) a_i \,,
 \ee
 for some cubes $c^{(i)} \in C$, and some $a_i \in \mc{A}_{\Z_2^{(1)}}$. Then:
 \be
 \phi(x) = \phi \paren{ \sum_{i} ((0,0) - (\partial c^{(i)}, 0)) \phi^{-1}(a_i) } \,,
 \ee
 where $\phi^{-1}(a_i)$ are some arbitrary choices of preimages of $a_i$. Therefore, the two sides differ by an element of the kernel of $\phi$:
 \be
 x = \sum_{i} ((0,0) - (\partial c^{(i)}, 0)) \phi^{-1}(a_i) + y \,,
 \ee
 with $\phi(y) = 0$. $(0, 0) - (\partial c^{(i)}, 0)$ is just one choice of preimage of $1 - X_{\partial c^{(i)}}$. Any ambiguity in this choice gets absorbed into $y$.

 The choice of $\phi^{-1}(a_i)$ will be of the form
 \be
 \phi^{-1}(a_i) = \sum_{c_2, c_1} A^{(i)}_{c_2, c_1} (c_2, c_1) \,.
 \ee

 For each term in the sum, we have
 \be
 \paren{(0, 0) - (\partial c^{(i)}, 0)} (c_2, c_1) = (c_2, c_1) - (c_2 + \partial c^{(i)}, c_1) \,.
 \ee

 Therefore, $x$ is in the span of
 \be
 (c_2, c_1) - (c_2', c_1') \text{ such that } \partial c_2 = \partial c_2', \delta \bm{c_1} = \delta \bm{c_1'} \,.
 \ee

 Indeed, any element of $\ker \phi$, such as $y$ above, is a linear combination of terms which also satisfy $c_2 = c_2'$ (so in particular $\partial c_2 = \partial c_2'$ is trivially true). Meanwhile, the terms in $\phi^{-1}(a_i)$ have $c_1 = c_1'$, and $\partial (c_2 + \partial c^{(i)}) = \partial c_2$.

 Conversely, suppose $\partial c_2 = \partial c_2'$ and $\delta \bm{c_1} = \delta \bm{c_1'}$. Then
 \begin{align}
  \phi \paren{(c_2, c_1) - (c_2', c_1')} &= \prod_{c_2} X \prod_{\delta \bm{c_1}} Z - \prod_{c_2'} X \prod_{\delta \bm{c_1'}} Z \nonumber \\
  &= \paren{1 - \prod_{c_2 + c_2'} X } \prod_{c_2} X \prod_{\delta \bm{c_1}} Z \,.
 \end{align}

 Clearly, $\partial (c_2 + c_2') = 0$. In the case of an infinite cubical lattice, the homology is trivial, so every cycle is a boundary. That is, there is some $c_3 \in C_3$ such that $\partial c_3 = c_2 + c_2'$. Then:
 \be
 1 - \prod_{c_2 + c_2'} X = 1 - \prod_{f \in \partial c_3} X_f = 1 - \prod_{c \in c_3} X_{\partial c} \,.
 \ee

 Now we wish to show that the above is in $\mathfrak{I}$, because then we will be able to conclude that $\phi \paren{(c_2, c_1) - (c_2', c_1')} \in \mathfrak{I}$.

 This is done by induction on the number of cubes in $c_3$ (which is always finite):
 \be
 1 - \prod_{i=1}^n X_{\partial c^{(i)}} = \paren{1 - X_{\partial c^{(1)}}} + \paren{1 - \prod_{i=2}^n X_{\partial c^{(i)}}} X_{\partial c^{(1)}} \,.
 \ee

 The first term is clearly in $\mathfrak{I}$, and the second one is in $\mathfrak{I}$ by induction.

 Finally, we can also conclude that the map
 \be
 (c_2, c_1) \longmapsto (\partial c_2, \delta \bm{c_1}) \,,
 \ee
 gives the second isomorphism
 \be
 \faktor{\mathbb{C}^\text{lk}[C_2 \oplus C_1]}{\sim_{\delta, \partial}} \cong \mathbb{C}^\text{lk}[B_1 \oplus B^2_\text{fin.}] \,.
 \ee
\end{proof}

\subsection{Locality}

Let us recall the definitions of support in $\mc{A}_{\Z_2^{(1)}}$ and $\mc{A}_{\Z_2^{(1)}} / \mathfrak{I}$:

\begin{definition}
 For an operator $a \in \mc{A}_{\Z_2^{(1)}}$, we define its support as
 \be
 \operatorname{Supp}(a) \coloneqq \curpar{e \; \mid \; \sqpar{Z_{\delta \bm{e}}, a} \neq 0} \coprod \curpar{f \; \mid \; \sqpar{X_f, a} \neq 0} \,,
 \ee
 which is a subset of $E \coprod F$.
\end{definition}

\begin{definition}
 For an equivalence class $a + \mathfrak{I} \in \mathcal{A}_{\Z_2^{(1)}} / \mathfrak{I}$, we define its support as
 \be
 \operatorname{Supp}(a + \mathfrak{I}) \coloneqq \bigcap_{I \in \mathfrak{I}} \operatorname{Supp}(a + I) \,.
 \ee
\end{definition}

\begin{lemma}
 Let $(c_2, c_1) \in C_2 \oplus C_1$. The support of $\phi(c_2, c_1) \in \mc{A}_{\Z_2^{(1)}}$ is
 \be
 \operatorname{Supp}(\phi(c_2, c_1)) = \partial c_2 \coprod \delta \bm{c_1} \,.
 \ee
\end{lemma}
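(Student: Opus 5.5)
We compute the support directly from the definition, by testing $\phi(c_2,c_1)$ against each generator. Write $\phi(c_2,c_1) = X_{c_2} Z_{\delta \bm{c_1}}$, where $X_{c_2} = \prod_{f\in c_2} X_f$ and $Z_{\delta\bm{c_1}} = \prod_{f\in\delta\bm{c_1}} Z_f$. Every generator is a Pauli monomial, so it either commutes or anticommutes with $\phi(c_2,c_1)$. Hence $[g,\phi(c_2,c_1)]\neq 0$ exactly when $g$ and $\phi(c_2,c_1)$ anticommute, since $\phi(c_2,c_1)$ is invertible and nonzero. The whole computation therefore reduces to counting overlaps mod 2.

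For the face generators $X_f$: $X_f$ commutes with $X_{c_2}$. It anticommutes with $Z_{\delta\bm{c_1}}$ exactly when $\delta\bm{c_1}(f)=1$. So the set of faces in the support is the set of faces where $\delta\bm{c_1}$ is nonzero. Identifying the finitely supported cochain $\delta\bm{c_1}$ with a set of faces, as the paper does throughout, this gives the face component $\delta\bm{c_1}$.

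For the edge generators $Z_{\delta\bm{e}}$: $Z_{\delta\bm{e}}$ commutes with $Z_{\delta\bm{c_1}}$. It anticommutes with $X_{c_2}$ exactly when the number of faces common to $\delta\bm{e}$ and $c_2$ is odd, that is, when $\delta\bm{e}(c_2)=1$. Since $\delta\bm{e}(c_2)=\bm{e}(\partial c_2)$, this holds exactly when $e\in\partial c_2$. So the edge component of the support is $\partial c_2$.

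Combining the two parts gives $\operatorname{Supp}(\phi(c_2,c_1)) = \partial c_2 \coprod \delta\bm{c_1}$.

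There is no real obstacle here. The only points needing care are the mod-2 identity $\delta\bm{e}(c_2)=\bm{e}(\partial c_2)$ and the observation that a nonzero commutator between Pauli monomials is the same as anticommutation.
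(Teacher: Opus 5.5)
Your proposal is correct and is essentially the paper's proof. Like the paper, you test $\phi(c_2,c_1)$ against $X_f$, where only the $Z$-part $\prod_{f'\in\delta\bm{c_1}}Z_{f'}$ matters, and against $Z_{\delta\bm{e}}$, where the parity of faces of $c_2$ around $e$ is exactly $\bm{e}(\partial c_2)$; your explicit remark that a nonzero commutator of two Pauli monomials means anticommutation is a small clarification the paper leaves implicit.
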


\begin{proof}
 We must check where $\prod_{c_2} X \prod_{\delta \bm{c_1}} Z$ does not commute with $Z_{\delta \bm{e}}$ and $X_f$.

 For the check against $X_f$, what contributes is 
 \be
 \prod_{f' \in \delta \bm{c_1}} Z_{f'} \,.
 \ee

 Simply put, they do not commute if $\delta \bm{c_1}(f) = 1$. For $\Z_2$ coefficients, we identify $\delta \bm{c_1}$ with the set of faces on which it is non-zero.

 Commutation with $Z_{\delta \bm{e}}$ is a little more complicated. But note that this $Z_{\delta \bm{e}}$ is a product of 4 $Z$ operators around an edge. It will not commute with $\prod_{c_2} X$ if and only if $c_2$ has an odd number of faces around the edge $e$. In other words, if $\bm{e}(\partial c_2) = 1$, because $\partial c_2$ precisely counts, for each edge, whether the number of faces around it is even or odd. Finally, the set of edges for which $Z_{\delta \bm{e}}$ does not commute can literally be identified with $\partial c_2$.
\end{proof}

\begin{lemma}\label{lemma:quotient_support}
 Let $(c_2, c_1) \in C_2 \oplus C_1$. The support of $\phi(c_2, c_1) + \mathfrak{I} \in \mc{A}_{\Z_2^{(1)}} / \mathfrak{I}$ is
 \be
 \operatorname{Supp}(\phi(c_2, c_1) + \mathfrak{I}) = \partial c_2 \coprod \delta \bm{c_1} \,.
 \ee
\end{lemma}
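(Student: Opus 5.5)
The plan is to prove two inclusions: $\partial c_2 \coprod \delta\bm{c_1}$ is contained in the support of $\phi(c_2,c_1)+I$ for every $I\in\mathfrak{I}$, and the bound is attained for $I=0$. The second inclusion is immediate. Taking $I=0$, the previous lemma gives $\operatorname{Supp}(\phi(c_2,c_1)) = \partial c_2 \coprod \delta\bm{c_1}$, so the intersection over $I\in\mathfrak{I}$ is contained in this set.

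For the first inclusion I will use that commutators with the generators descend to the quotient. Every element of $\mc{A}_{\Z_2^{(1)}}$ commutes with $X_{\partial c}$. Hence for $I=\sum_i(1-X_{\partial c_i})a_i$ and a generator $g\in\{X_f, Z_{\delta\bm{e}}\}$,
\be
[g, I] = \sum_i (1-X_{\partial c_i})[g,a_i] \in \mathfrak{I}.
\ee
Therefore $[g,\phi(c_2,c_1)+I] = [g,\phi(c_2,c_1)] + I'$ with $I'\in\mathfrak{I}$. If $\mu\in\partial c_2\coprod\delta\bm{c_1}$, the previous lemma shows $[g_\mu,\phi(c_2,c_1)]\neq 0$. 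Explicitly, this commutator equals $\pm 2\,\phi(c_2,c_1)$, since the generators either commute or anticommute with the basis element. It then suffices to show that $2\phi(c_2,c_1)+I''\neq 0$ for every $I''\in\mathfrak{I}$, i.e.\ that $\phi(c_2,c_1)\notin\mathfrak{I}$.

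The main step is this non-membership. I plan to get it from Proposition~\ref{prop:isomorphisms2}, under which the class of $\phi(c_2,c_1)$ maps to the basis element $(\partial c_2,\delta\bm{c_1})$ of $\mathbb{C}^{\mathrm{lk}}[B_1\oplus B^2_{\mathrm{fin.}}]$. Basis elements of this twisted group algebra are nonzero; they are even invertible, since $(b,\bm{b})\cdot(b,\bm{b})=\pm(0,0)$. So $\phi(c_2,c_1)+\mathfrak{I}\neq 0$, and multiplying by $\pm2$ preserves this. Consequently $[g_\mu,\phi(c_2,c_1)+I]\neq 0$ for all $I\in\mathfrak{I}$, which places $\mu$ in $\operatorname{Supp}(\phi(c_2,c_1)+I)$ for every $I$. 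This gives the reverse inclusion.

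A side point to check is that the commutator computation is consistent with the quotient: it should depend only on $\partial c_2$ and $\delta\bm{c_1}$, which matches the $\sim_{\delta,\partial}$ identification. This also makes the result independent of the chosen representative $c_2$, as noted after the lemma in the main text.
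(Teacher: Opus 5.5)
Your proof is correct and follows essentially the same route as the paper. The paper also assumes $\mu\notin\operatorname{Supp}(\phi(c_2,c_1)+J)$ for some $J\in\mathfrak{I}$, uses two-sidedness of $\mathfrak{I}$ to conclude $Y_\mu\phi(c_2,c_1)\in\mathfrak{I}$, and contradicts this using invertibility together with the nonzero quotient from Proposition~\ref{prop:isomorphisms2}.

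One small slip: when $g_\mu$ anticommutes with $\phi(c_2,c_1)$, the commutator is $2g_\mu\phi(c_2,c_1)$, not $\pm2\phi(c_2,c_1)$. This does not affect the argument. Since $g_\mu$ is invertible and $\mathfrak{I}$ is two-sided, $2g_\mu\phi(c_2,c_1)+I''=0$ still forces $\phi(c_2,c_1)=-\tfrac12 g_\mu I''\in\mathfrak{I}$, so your reduction to $\phi(c_2,c_1)\notin\mathfrak{I}$ stands.
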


\begin{proof}
 The inclusion $\subseteq$ is trivial by definition. Let us suppose it is a strict inclusion. That is, there is some $J \in \mathfrak{I}$, and some $\mu \in \partial c_2 \coprod \delta \bm{c_1}$ (we use $\mu$ to denote a face or an edge, without distinguishing which it is), such that $\mu \notin \operatorname{Supp}(\phi(c_2, c_1) + J)$.

 Let $Y_\mu$ be the operator associated to $\mu$: $X_f$ if $\mu = f$ is a face, $Z_{\delta \bm{e}}$ if $\mu = e$ is an edge.

 Then
 \be
 Y_\mu \paren{\phi(c_2, c_1) + J} = \paren{\phi(c_2, c_1) + J} Y_\mu \,.
 \ee

 But we also know that
 \be
 \phi(c_2, c_1) Y_\mu = - Y_\mu \phi(c_2, c_1) \,.
 \ee

 Therefore, we have
 \be
 Y_\mu \phi(c_2, c_1) + Y_\mu J = -Y_\mu \phi(c_2, c_1) + J Y_\mu \,,
 \ee
 and rearranging
 \be
 Y_\mu \phi(c_2, c_1) = \frac{J Y_\mu - Y_\mu J}{2} \in \mathfrak{I} \,.
 \ee

 But the operator on the left is invertible, so $\mathfrak{I} = \mc{A}_{\Z_2^{(1)}}$, which is a contradiction because
 \be
 0 \cong \faktor{\mc{A}_{\Z_2^{(1)}}}{\mathfrak{I}} \cong \mathbb{C}^\text{lk}[B_1 \oplus B^2_\text{fin.}] \,,
 \ee
 and the algebra on the right hand side clearly is not the 0 algebra.
\end{proof}

\begin{lemma}\label{lemma:support_of_sum_and_prod}
 For any $a, b \in \mc{A}_{\Z_2^{(1)}}$, the support satisfies
 \begin{align}
  \operatorname{Supp}(a + b + \mathfrak{I}) &\subseteq \operatorname{Supp}(a + \mathfrak{I}) \cup \operatorname{Supp}(b + \mathfrak{I}) \,, \\
  \operatorname{Supp}(ab + \mathfrak{I}) &\subseteq \operatorname{Supp}(a + \mathfrak{I}) \cup \operatorname{Supp}(b + \mathfrak{I}) \,.
 \end{align}
\end{lemma}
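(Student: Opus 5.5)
The idea is to pass the support of an equivalence class to a support computation for a single well-chosen representative. Lemma~\ref{lemma:quotient_support} identifies the support of any basis element of $\mathbb{C}^{\text{lk}}[B_1 \oplus B^2_\text{fin.}]$, and Proposition~\ref{prop:isomorphisms2} lets every class be written in that basis. Concretely, for $a + \mathfrak{I}$ I will write
\be
a + \mathfrak{I} = \sum_{(b_1, \bm{b^2})} A_{(b_1,\bm{b^2})}\, \phi(c_2(b_1), c_1(\bm{b^2})) + \mathfrak{I} \,,
\ee
with a fixed choice of preimages $c_2(b_1)$, $c_1(\bm{b^2})$ and only finitely many nonzero coefficients. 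The first step is to prove that the support of such a class is exactly the union of the supports $b_1 \coprod b^2$ over the basis elements with nonzero coefficient. Call this $\Sigma(a)$; the main text asserts this fact.

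The inclusion $\operatorname{Supp}(a+\mathfrak{I}) \subseteq \Sigma(a)$ comes from taking the representative $\sum A\,\phi(\cdots)$ itself. For this representative the commutator with $Y_\mu$ is a sum of the commutators of its terms, and each term commutes with $Y_\mu$ unless $\mu$ lies in that term's support.

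For the reverse inclusion I will adapt the argument of Lemma~\ref{lemma:quotient_support}. Suppose $\mu \in \Sigma(a)$ but some representative $a + J$ commutes with $Y_\mu$. Conjugation by the unitary $Y_\mu$ multiplies each basis element by $\pm 1$. The sign is $-1$ exactly when $\mu$ lies in the element's support, since $X_f$ anticommutes with $Z$-products through $f$ and $Z_{\delta\bm{e}}$ anticommutes with $X$-products whose boundary contains $e$. Conjugation also preserves $\mathfrak{I}$, because $Y_\mu$ commutes with every $X_{\partial c}$. Subtracting the conjugate then gives
\be
2 \sum_{\mu \in \operatorname{supp}} A_{(b_1,\bm{b^2})}\, \phi(\cdots) \in \mathfrak{I} \,,
\ee
where the sum runs over the terms whose support contains $\mu$. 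In the basis of the quotient this forces all of those coefficients to vanish. That contradicts $\mu \in \Sigma(a)$, which needed at least one of them to be nonzero.

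The step I expect to be the main obstacle is the linear independence used at the end: the images of distinct $(b_1,\bm{b^2})$ must be linearly independent modulo $\mathfrak{I}$. This is exactly the content of the isomorphism in Proposition~\ref{prop:isomorphisms2}, so I will invoke that result rather than reprove it.

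With $\operatorname{Supp}(a+\mathfrak{I}) = \Sigma(a)$ in hand, the sum inclusion is immediate. Expand $a$ and $b$ in the basis and add coefficients. Any basis element with nonzero coefficient in $a+b$ has a nonzero coefficient in $a$ or in $b$, so its support lies in $\Sigma(a) \cup \Sigma(b)$.

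For the product, the twisted multiplication \eqref{eq:Clk_multiplication} gives
\be
(b_1, \bm{b^2}) \cdot (b_1', \bm{b^{2\prime}}) = \pm (b_1 + b_1', \bm{b^2} + \bm{b^{2\prime}}) \,.
\ee
The right-hand side has support $(b_1 + b_1') \coprod (b^2 + b^{2\prime})$. As subsets this is contained in $(b_1 \cup b_1') \coprod (b^2 \cup b^{2\prime})$, since a $\Z_2$ sum of sets is contained in their union. Expanding $ab$ bilinearly, every basis element appearing with nonzero coefficient arises from some pair of basis elements of $a$ and $b$ with nonzero coefficients. Its support therefore lies in $\Sigma(a) \cup \Sigma(b)$, and hence $\operatorname{Supp}(ab + \mathfrak{I}) \subseteq \operatorname{Supp}(a+\mathfrak{I}) \cup \operatorname{Supp}(b+\mathfrak{I})$.
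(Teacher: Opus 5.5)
Your proof is correct, but it takes a longer route than the paper's.

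The paper proves the complementary statement directly. If $\mu\notin\operatorname{Supp}(a+\mathfrak{I})\cup\operatorname{Supp}(b+\mathfrak{I})$, then the definition of the class support as an intersection gives $I,J\in\mathfrak{I}$ such that $a+I$ and $b+J$ both commute with $Y_\mu$. Hence $a+b+I+J\in a+b+\mathfrak{I}$ and $(a+I)(b+J)\in ab+\mathfrak{I}$ also commute with $Y_\mu$. That argument uses only that $\mathfrak{I}$ is a two-sided ideal; it needs no basis, no linear independence and no appeal to Proposition~\ref{prop:isomorphisms2}.

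You instead first prove a structural fact: the support of a class equals the union $\Sigma(a)$ of the sets $b_1\coprod b^2$ over basis elements with nonzero coefficient. Your inputs for this are the $\pm1$ sign computation for conjugation by $Y_\mu$ and linear independence of the basis modulo $\mathfrak{I}$. Both inclusions of the lemma then follow from the basis expansion.

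This costs more, but it also gives you more:
\begin{itemize}
\item It proves the support characterization that the main text only asserts.
\item It gives a sharper product statement: every term of $ab$ has support equal to the symmetric difference of the supports of some pair of terms of $a$ and $b$.
\item Your canonical representative $s$ satisfies $\operatorname{Supp}(s)\subseteq\Sigma(a)=\operatorname{Supp}(a+\mathfrak{I})\subseteq\operatorname{Supp}(s)$. So it immediately yields Lemma~\ref{lemma:exact_support}, which the paper proves separately by repeatedly removing the part of a representative that anticommutes with some $Y_\mu$.
\end{itemize}
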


\begin{proof}
 Showing the complementary statements is easier. Suppose $\mu \notin \operatorname{Supp}(a + \mathfrak{I}) \cup \operatorname{Supp}(b + \mathfrak{I})$. Then, there are some $I, J \in \mathfrak{I}$ such that
 \begin{align}
  Y_\mu (a + I) &= (a + I) Y_\mu & Y_\mu (b + J) &= (b + J) Y_\mu \,.
 \end{align}

 Then
 \be
 Y_\mu (a + I + b + J) = (a + I + b + J) Y_\mu \,,
 \ee
 with $a + b + I + J \in a + b + \mathfrak{I}$. And similarly
 \be
 Y_\mu (a + I) (b + J) = (a + I) (b + J) Y_\mu \,,
 \ee
 with $(a + I)(b + J) \in ab + \mathfrak{I}$.
\end{proof}

\begin{lemma}\label{lemma:exact_support}
 For any $a + \mathfrak{I} \in \mc{A}_{\Z_2^{(1)}} / \mathfrak{I}$, $\exists I \in \mathfrak{I}$ such that
 \be
 \operatorname{Supp}(a + I) = \operatorname{Supp}(a + \mathfrak{I})
 \ee
\end{lemma}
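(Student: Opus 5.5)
The plan is to reduce to basis elements using the isomorphism of Proposition~\ref{prop:isomorphisms2}. There every class is a finite linear combination of basis classes labelled by distinct pairs $(b_1,\bm{b^2})\in B_1\oplus B^2_\text{fin.}$. Fix $a+\mathfrak{I}$ and write
\be
a+\mathfrak{I}=\sum_{j=1}^m \lambda_j\,\phi(c_2^{(j)},c_1^{(j)})+\mathfrak{I},
\ee
with $\lambda_j\neq 0$, the pairs $(\partial c_2^{(j)},\delta\bm{c_1^{(j)}})$ pairwise distinct, and $c_2^{(j)},c_1^{(j)}$ arbitrary preimages. Choose the representative
\be
a+I:=\sum_j\lambda_j\,\phi(c_2^{(j)},c_1^{(j)}),
\ee
which is possible since the difference lies in $\mathfrak{I}$. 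The claim to prove is that
\be
\operatorname{Supp}(a+I)=\bigcup_j\bigl(\partial c_2^{(j)}\amalg\delta\bm{c_1^{(j)}}\bigr)=\operatorname{Supp}(a+\mathfrak{I}).
\ee

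\textbf{Step 1: the representative has the expected support.} Each $\phi(c_2,c_1)$ either commutes or anticommutes with each generator $Y_\mu$, where $Y_\mu$ is $X_f$ or $Z_{\delta\bm{e}}$. Hence
\be
[Y_\mu,a+I]=2\sum_{j\in J_\mu}\lambda_j\,Y_\mu\,\phi(c_2^{(j)},c_1^{(j)}),
\ee
where $J_\mu$ is the set of $j$ whose term anticommutes with $Y_\mu$. By the previous lemma, $j\in J_\mu$ exactly when $\mu\in\partial c_2^{(j)}\amalg\delta\bm{c_1^{(j)}}$. The element $Y_\mu\phi(c_2^{(j)},c_1^{(j)})$ is, up to sign, again a basis element, with label shifted by the fixed amount determined by $\mu$. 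Distinct labels therefore stay distinct, and since the $\phi$-images of distinct basis labels of $C_2\oplus B^2_\text{fin.}$ are linearly independent in $\mc{A}$ (Proposition~\ref{prop:isomorphisms1}), the commutator is nonzero whenever $J_\mu\neq\emptyset$. So $\operatorname{Supp}(a+I)$ equals the union of the individual supports.

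\textbf{Step 2: no representative has smaller support.} The inclusion $\operatorname{Supp}(a+\mathfrak{I})\subseteq\operatorname{Supp}(a+I)$ holds by definition. For the reverse, suppose $\mu$ lies in the union but some $J\in\mathfrak{I}$ has $[Y_\mu,a+J]=0$. Since $I-J\in\mathfrak{I}$ and $Y_\mu$ normalizes $\mathfrak{I}$ (every $X_{\partial c}$ is central in $\mc{A}_{\Z_2^{(1)}}$), we get
\be
[Y_\mu,a+I]=[Y_\mu,I-J]\in\mathfrak{I}.
\ee
By Step 1 this says that $2\sum_{j\in J_\mu}\lambda_jY_\mu\phi(c_2^{(j)},c_1^{(j)})$, a nonzero combination of basis elements with distinct $(\partial,\delta)$-labels, lies in $\mathfrak{I}$. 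Its image in the quotient would then vanish. This contradicts Proposition~\ref{prop:isomorphisms2}, under which these classes are distinct basis vectors of $\mathbb{C}^\text{lk}[B_1\oplus B^2_\text{fin.}]$, hence linearly independent. This is the same mechanism as Lemma~\ref{lemma:quotient_support}, extended from one basis element to linear combinations.

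The main obstacle is the bookkeeping in Step 2. We must check that multiplying by $Y_\mu$ maps distinct $(\partial c_2,\delta\bm{c_1})$-labels injectively to distinct labels, which is clear because it translates labels by a fixed element. We must also check that the signs from the twisted product cannot cause cancellation; they cannot, since they are nonzero scalars on independent basis vectors. Once both points are in place, the two inclusions give the lemma with this explicit choice of $I$.
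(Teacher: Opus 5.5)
Your argument is correct, but it takes a different route from the paper's. The paper works with an arbitrary representative $a$ and never expands in the basis of $\mathbb{C}^{\text{lk}}[B_1\oplus B^2_\text{fin.}]$. It proceeds by induction on support points:
\begin{itemize}
\item whenever some $\mu\in\operatorname{Supp}(a)\setminus\operatorname{Supp}(a+\mathfrak{I})$ exists, it replaces $a$ by the symmetrization $a_+=\tfrac12(a+Y_\mu aY_\mu)$;
\item taking a $J\in\mathfrak{I}$ for which $a+J$ commutes with $Y_\mu$, it shows $a-a_+=a_-=\tfrac12(Y_\mu JY_\mu-J)\in\mathfrak{I}$;
\item because the $Y_\nu$ pairwise commute or anticommute, it gets $\operatorname{Supp}(a_+)\subseteq\operatorname{Supp}(a)\setminus\{\mu\}$;
\item finiteness of $\operatorname{Supp}(a)$ then ends the induction.
\end{itemize}

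That proof is basis-free. It uses only that the $Y_\mu$ are involutions with $\pm$ commutation relations and that $\mathfrak{I}$ is a two-sided ideal. However, it gives no description of the minimal support.

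Your proof uses more structure, namely the linear-independence content of Propositions~\ref{prop:isomorphisms1} and \ref{prop:isomorphisms2}. In return it gives an explicit formula, $\operatorname{Supp}(a+\mathfrak{I})=\bigcup_j\bigl(\partial c_2^{(j)}\amalg\delta\bm{c_1^{(j)}}\bigr)$, taken over the labels with nonzero coefficient. This is the natural extension of Lemma~\ref{lemma:quotient_support} from basis elements to arbitrary classes, and it comes with a canonical minimal representative.

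Your version also pays off later in the paper. You may pick $c_2^{(j)}=0$ whenever $\partial c_2^{(j)}=0$, so a class with support inside $F$ automatically gets a minimal representative written purely in the $Z_{\delta\bm{e}}$. That is exactly the extra property the proof of Theorem~\ref{thm:QCA_extendability} needs, and there the paper has to argue for it separately.
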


\begin{proof}
 Obviously we must have $\operatorname{Supp}(a) \supseteq \operatorname{Supp}(a + \mathfrak{I})$. If they are equal, we are done.

 If not, we proceed by induction, by showing that we can always find $a' \in a + \mathfrak{I}$ with support strictly contained in $\operatorname{Supp}(a)$ (essentially, by removing at least one of the points of support). Such a procedure would terminate in a finite number of steps with the $a + I$ we wanted.

 Like before, let $\mu \in \operatorname{Supp}(a) \setminus \operatorname{Supp}(a + \mathfrak{I})$ be either a face or an edge, and denote by $Y_\mu$ either $X_f$ if $\mu = f$ is a face, or $Z_{\delta \bm{e}}$ if $\mu = e$ is an edge.

 On the one hand,
 \be
 Y_\mu a Y_\mu \neq a
 \ee
 (note that since $Y_\mu$ squares to 1, testing for commutativity is the same as the previous statement).

 On the other hand, there is some $J \in \mathfrak{I}$ such that
 \be
 Y_\mu (a + J) Y_\mu = a + J
 \ee

 Let us define
 \be
 a_\pm \coloneqq \frac{a \pm Y_\mu a Y_\mu}{2}
 \ee

 By definition, $a_{+}$ commutes with $Y_\mu$, while $a_{-}$ anticommutes. Then $a = a_{+} + a_{-}$, and $Y_\mu a Y_\mu = a_{+} - a_{-}$, and their inequality implies $a_{-} \neq 0$.

 Then, we also have
 \be
 a_{+} + a_{-} + J = Y_\mu (a + J) Y_\mu = a_{+} - a_{-} + Y_\mu J Y_\mu
 \ee
 and so
 \be
 a_{-} = \frac{Y_\mu J Y_\mu - J}{2} \in \mathfrak{I}
 \ee

 This also means that $a_{+} \in a + \mathfrak{I}$, and we shall show that this is what we were looking for. We want to verify that $\operatorname{Supp}(a_{+}) \subsetneq \operatorname{Supp}(a)$ with part of the difference being that $\mu \in \operatorname{Supp}(a)$ but $\mu \notin \operatorname{Supp}(a_{+})$. So now we must check the inclusion (with inequality given by $\mu$).

 Again it is easier to check the converse. Let $\nu \notin \operatorname{Supp}(a)$. Then
 \be
 Y_\nu a_{+} Y_\nu = \frac{Y_\nu a Y_\nu + Y_\nu Y_\mu a Y_\mu Y_\nu}{2} = \frac{a + (\pm 1)^2 Y_\mu a Y_\mu}{2} = a_{+}
 \ee

 The sign $(\pm 1)^2$ always cancels, because $Y_\mu, Y_\nu$ either commute, or anticommute. But they do so on both sides of that equation, so whichever sign it is, it always cancels.
\end{proof}

\begin{corollary}
 Let $\mc{R} \subseteq E \coprod F$ be a region. Denote by $\paren{\mc{A}_{\Z_2^{(1)}}}_{\mc{R}}$ and $\paren{\mc{A}_{\Z_2^{(1)}} / \mathfrak{I}}_{\mc{R}}$ the subalgebras of operators whose support is contained in $\mc{R}$. The quotient map restricted to a region
 \be
 \pi : \paren{\mc{A}_{\Z_2^{(1)}}}_{\mc{R}} \rightarrow \paren{\mc{A}_{\Z_2^{(1)}} / \mathfrak{I}}_{\mc{R}}
 \ee
 is a surjective $*$-homomorphism.
\end{corollary}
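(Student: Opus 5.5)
The approach is to read the corollary off Lemma~\ref{lemma:exact_support}, after checking two easier facts: that the map lands in the right subalgebra, and that it is a $*$-homomorphism.

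First, $\pi$ restricted to $\paren{\mc{A}_{\Z_2^{(1)}}}_{\mc{R}}$ lands in $\paren{\mc{A}_{\Z_2^{(1)}} / \mathfrak{I}}_{\mc{R}}$. By definition $\operatorname{Supp}(a + \mathfrak{I}) = \bigcap_{I \in \mathfrak{I}} \operatorname{Supp}(a + I)$. Taking $I = 0$ in the intersection gives $\operatorname{Supp}(a + \mathfrak{I}) \subseteq \operatorname{Supp}(a) \subseteq \mc{R}$.

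Second, both domain and codomain are subalgebras closed under the adjoint. For $\paren{\mc{A}_{\Z_2^{(1)}} / \mathfrak{I}}_{\mc{R}}$ this follows from Lemma~\ref{lemma:support_of_sum_and_prod}. For $\paren{\mc{A}_{\Z_2^{(1)}}}_{\mc{R}}$ the same argument works at the unquotiented level, since commutation with each $Y_\mu$ is preserved by sums and products. Both are closed under adjoints because each $Y_\mu$ is self-adjoint: $[Y_\mu, a] = 0$ implies $[Y_\mu, a^\dagger] = 0$, so $\operatorname{Supp}(a^\dagger) = \operatorname{Supp}(a)$, and likewise in the quotient. With this, $\pi$ is a $*$-homomorphism on the restricted domain simply because the quotient map on all of $\mc{A}_{\Z_2^{(1)}}$ is one, $\mathfrak{I}$ being a two-sided $*$-ideal.

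Surjectivity is the only substantive point. Take $a + \mathfrak{I}$ with $\operatorname{Supp}(a + \mathfrak{I}) \subseteq \mc{R}$. Lemma~\ref{lemma:exact_support} provides some $I \in \mathfrak{I}$ with $\operatorname{Supp}(a + I) = \operatorname{Supp}(a + \mathfrak{I}) \subseteq \mc{R}$. Then $a + I$ lies in $\paren{\mc{A}_{\Z_2^{(1)}}}_{\mc{R}}$, and $\pi(a + I) = a + \mathfrak{I}$, as required.

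The one step needing care is that Lemma~\ref{lemma:exact_support} applies to arbitrary $a$, including infinite regions $\mc{R}$. Its inductive removal of support points terminates because $a$ is a finite linear combination of finite Pauli products, so $\operatorname{Supp}(a)$ is finite. Each step passes from $a$ to $a_+ = \tfrac{1}{2}(a + Y_\mu a Y_\mu)$, which stays in $a + \mathfrak{I}$ and strictly shrinks the support, so the process ends after finitely many steps regardless of $\mc{R}$.
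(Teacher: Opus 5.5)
Your proposal is correct and follows the paper's proof: the image lies in $\paren{\mc{A}_{\Z_2^{(1)}} / \mathfrak{I}}_{\mc{R}}$ because $\operatorname{Supp}(a+\mathfrak{I})\subseteq\operatorname{Supp}(a)$, the subalgebra and $*$-homomorphism properties come from Lemma~\ref{lemma:support_of_sum_and_prod} and the $*$-ideal property of $\mathfrak{I}$, and surjectivity comes from Lemma~\ref{lemma:exact_support}. Your extra remarks spell out points the paper leaves implicit: adjoint-invariance of supports, closure of the unquotiented $\paren{\mc{A}_{\Z_2^{(1)}}}_{\mc{R}}$, and termination of the support-shrinking induction because $\operatorname{Supp}(a)$ is finite.
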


\begin{proof}
 The fact that they are subalgebras follows from Lemma~\ref{lemma:support_of_sum_and_prod}. $\pi$ is obviously a $*$-homomorphism, and its image is indeed contained in $\paren{\mc{A}_{\Z_2^{(1)}} / \mathfrak{I}}_{\mc{R}}$, because by definition the support of an equivalence class is always contained in the support of any representative.

 It is surjective because of the previous lemma.
\end{proof}

\subsection{Group of QCA}

We recall the definition of QCA in the quotient algebra. 

\begin{definition}
 An algebra homomorphism 
 \be
 \alpha : \faktor{\mc{A}_{\Z_2^{(1)}}}{\mathfrak{I}} \rightarrow \faktor{\mc{A}_{\Z_2^{(1)}}}{\mathfrak{I}} \,,
 \ee
 is called \emph{locality preserving} if there exists some $r \geq 0$ such that the following is true. For any region $\mc{R} \subseteq E \coprod F$ (finite or infinite),
 \be
 \alpha\paren{\paren{\mc{A}_{\Z_2^{(1)}} / \mathfrak{I}}_\mc{R}} \subseteq \paren{\mc{A}_{\Z_2^{(1)}} / \mathfrak{I}}_{\mc{R}^{+r}} \,.
 \ee

 The ``expanded region'' $\mc{R}^{+r}$ is
 \be
 \mc{R}^{+r} \coloneqq \curpar{\mu \in E \coprod F \ \mid \ \exists \nu \in \mc{R} \text{ s.t. } d(\mu, \nu) \leq r} \,,
 \ee
 with $d(\cdot, \cdot)$ the euclidean distance between centers of edges and faces.
\end{definition}

\begin{definition}
 A \emph{Quantum Cellular Automaton (QCA)} on $\mc{A}_{\Z_2^{(1)}} / \mathfrak{I}$ is a $*$-automorphism
 \be
 \alpha : \faktor{\mc{A}_{\Z_2^{(1)}}}{\mathfrak{I}} \rightarrow \faktor{\mc{A}_{\Z_2^{(1)}}}{\mathfrak{I}} \,,
 \ee
 which is locality preserving.

 The set of all QCA will be denoted $\operatorname{QCA}(\mc{A}_{\Z_2^{(1)}} / \mathfrak{I})$. A constant $r$ associated to a particular $\alpha \in \operatorname{QCA}(\mc{A}_{\Z_2^{(1)}} / \mathfrak{I})$ is an upper bound on the ``spread'' of $\alpha$.
\end{definition}

\begin{theorem}\label{thm:group_of_qca}
 The set $\operatorname{QCA}(\mc{A}_{\Z_2^{(1)}} / \mathfrak{I})$ forms a group under composition.
\end{theorem}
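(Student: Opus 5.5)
To prove Theorem~\ref{thm:group_of_qca} we check the three group axioms directly; the only substantive one is closure under inverses. The identity map is clearly a QCA with spread $r=0$. Associativity is inherited from composition of maps.

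\textbf{Closure under composition.} Let $\alpha,\beta$ be QCA with spreads $r_\alpha, r_\beta$. Then $\alpha\circ\beta$ is a $*$-automorphism. For any region $\mc{R}$,
\be
\alpha\beta\big((\mc{A}_{\Z_2^{(1)}}/\mathfrak{I})_{\mc{R}}\big)\subseteq \alpha\big((\mc{A}_{\Z_2^{(1)}}/\mathfrak{I})_{\mc{R}^{+r_\beta}}\big)\subseteq (\mc{A}_{\Z_2^{(1)}}/\mathfrak{I})_{(\mc{R}^{+r_\beta})^{+r_\alpha}}\,.
\ee
By the triangle inequality, $(\mc{R}^{+r_\beta})^{+r_\alpha}\subseteq \mc{R}^{+(r_\alpha+r_\beta)}$. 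The subalgebras $(\mc{A}_{\Z_2^{(1)}}/\mathfrak{I})_{\mc{R}}$ are monotone in $\mc{R}$ by definition, so the spread of $\alpha\circ\beta$ is at most $r_\alpha+r_\beta$.

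\textbf{Inverses.} The inverse $\alpha^{-1}$ exists as a $*$-automorphism; what needs proof is that it is locality preserving. The standard tensor-product argument would use commutants, so the plan is to transport it via the basis $\mathbb{C}^{\text{lk}}[B_1\oplus B^2_\text{fin.}]$ of Proposition~\ref{prop:isomorphisms2} and the support formula of Lemma~\ref{lemma:quotient_support}. The steps are as follows.

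\begin{enumerate}
\item First establish a commutant characterization of support. An element $x$ of the quotient lies in $(\mc{A}_{\Z_2^{(1)}}/\mathfrak{I})_{\mc{R}}$ if and only if it commutes (in the quotient) with every generator $Y_\mu+\mathfrak{I}$ for $\mu\notin\mc{R}$. Here $Y_\mu$ is $X_f$ for a face and $Z_{\delta\bm{e}}$ for an edge.

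\item To prove this, expand $x$ in the basis elements $(b_1,\bm{b^2})$. Conjugation by $Y_\mu$ acts diagonally on this basis by a sign, and the sign is $-1$ exactly when $\mu\in b_1\coprod b^2$. Hence $x$ commutes with $Y_\mu$ if and only if every basis element in its expansion avoids $\mu$. Combined with Lemma~\ref{lemma:quotient_support} and the union property of Lemma~\ref{lemma:support_of_sum_and_prod}, this gives the claimed equivalence.

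\item Fix $\mu\in\mc{R}^c$ with $d(\mu,\mc{R})>r$, where $r$ is the spread of $\alpha$. The single generator $Y_\mu$ is supported on $\{\mu\}$, so $\alpha(Y_\mu)$ is supported within distance $r$ of $\mu$, and this ball is disjoint from $\mc{R}$.

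\item Here is where the main obstacle lies: the preceding step does not yet give a bound on $\alpha^{-1}$, since it controls images under $\alpha$ rather than preimages. The plan is to exploit surjectivity. Consider the subalgebra $\mc{B}_\nu$ generated by $Y_\nu$ and its conjugate partners, meaning the local operators anticommuting with $Y_\nu$ and nothing else nearby. Apply the localization of $\alpha$ on the complement region $\mc{S}=\{\nu: d(\nu,\mc{R})>r\}$. Then $\alpha\big((\mc{A}_{\Z_2^{(1)}}/\mathfrak{I})_{\mc{S}}\big)$ is supported away from $\mc{R}$.

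\item Use the commutant characterization of step 1 in the form
\be
(\mc{A}/\mathfrak{I})_{\mc{R}}=\big((\mc{A}/\mathfrak{I})_{\mc{R}^c}\big)'\,,
\ee
which amounts to a Haag-duality statement that follows from the basis description. Suppose $y\in(\mc{A}/\mathfrak{I})_{\mc{R}}$. Then $\alpha^{-1}(y)$ commutes with $\alpha^{-1}$ of everything supported in $\mc{R}^c$. To conclude that $\alpha^{-1}(y)$ lies in $(\mc{A}/\mathfrak{I})_{\mc{R}^{+2r}}$, one needs $\alpha^{-1}\big((\mc{A}/\mathfrak{I})_{\mc{R}^c}\big)\supseteq(\mc{A}/\mathfrak{I})_{(\mc{R}^{+2r})^c}$.

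\item This last inclusion is the genuinely hard point, and the plan is to get it by a counting argument on finite regions. Restrict to a large finite box in which all the subalgebras are finite-dimensional. There, $\alpha$ restricted to operators supported deep inside a region injects into the algebra of a slightly thickened region. Compare dimensions, which are computable from the basis as counts of boundaries and coboundaries in a box. Together with the commutant relation, this forces equality up to a $2r$ thickening, exactly as in the tensor-product proof that inverses of QCA are QCA.
\end{enumerate}

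This yields spread at most $2r$ for $\alpha^{-1}$, which completes the proof.
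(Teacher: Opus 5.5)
Your composition argument and steps 1--2 are correct. The diagonal-sign computation in the basis $\mathbb{C}^{\text{lk}}[B_1\oplus B^2_{\text{fin.}}]$ is in fact a cleaner proof of the key fact the paper also relies on: $\mu\in\operatorname{Supp}(a+\mathfrak{I})$ forces $Y_\mu+\mathfrak{I}$ not to commute with $a+\mathfrak{I}$. (Small slip: $Y_\mu$ is not supported on $\{\mu\}$. By Lemma~\ref{lemma:quotient_support}, $X_f$ has support $\partial f$ and $Z_{\delta\bm{e}}$ has support $\delta\bm{e}$; this only shifts constants.)

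The gap is in step 5. Write $\mathcal{Q}=\mc{A}_{\Z_2^{(1)}}/\mathfrak{I}$. You assert $\mathcal{Q}_{\mc{R}}=(\mathcal{Q}_{\mc{R}^c})'$ for arbitrary $\mc{R}$, and you use the inclusion $\mathcal{Q}_{\mc{R}}\subseteq(\mathcal{Q}_{\mc{R}^c})'$ to conclude that $\alpha^{-1}(y)$ commutes with $\alpha^{-1}(\mathcal{Q}_{\mc{R}^c})$. That inclusion is false here, because in this non-tensor-product algebra disjoint supports do not imply commutation. For instance, $X_{f_0}+\mathfrak{I}$ has support $\partial f_0$, and $Z_{\delta\bm{e}}+\mathfrak{I}$ with $e\in\partial f_0$ has support $\delta\bm{e}\subseteq F$, which is disjoint from $\partial f_0$; yet the two anticommute. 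More robustly, $(b_1,0)$ and $(0,\bm{b^2})$ anticommute whenever the two loops link, however far apart they are, so no thickening rescues the statement. (The paper only claims Haag duality for contractible regions.)

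Step 6 does not repair this, and in fact it targets a non-issue. The inclusion $\alpha^{-1}(\mathcal{Q}_{\mc{R}^c})\supseteq\mathcal{Q}_{(\mc{R}^{+2r})^c}$ is equivalent to $\alpha(\mathcal{Q}_{(\mc{R}^{+2r})^c})\subseteq\mathcal{Q}_{\mc{R}^c}$, which follows immediately from the spread of $\alpha$. No dimension counting is needed.

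What is missing is contractibility. With it, your step 1 finishes the proof directly, and this is exactly the paper's argument (phrased there by contradiction).
\begin{itemize}
\item Take $y\in\mathcal{Q}_{\mc{R}}$ and $\mu$ with $d(\mu,\mc{R})>r+O(1)$. You only need $y$ to commute with the single element $\alpha(Y_\mu+\mathfrak{I})$. This element lies in $\mathcal{Q}_{B}$, where $B$ is the set of cells within distance $r+\tfrac12$ of $\mu$.
\item Since a ball is contractible, every basis term $(b_1',\bm{b}'^2)$ of $\alpha(Y_\mu+\mathfrak{I})$ can be written as $(\partial c_2',\delta\bm{c_1'})$, with $c_2',c_1'$ inside a slightly larger ball $B'$.
\item If $B'\cap\mc{R}=\emptyset$, the commutation sign of such a term with any basis term $(b_1,\bm{b^2})$ of $y$ is $(-1)^{\bm{b^2}(c_2')+\bm{c_1'}(b_1)}=1$.
\item Hence $y$ commutes with $\alpha(Y_\mu+\mathfrak{I})$, so $\alpha^{-1}(y)$ commutes with $Y_\mu+\mathfrak{I}$. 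Your step 1 then gives $\mu\notin\operatorname{Supp}(\alpha^{-1}(y))$, i.e.\ $\alpha^{-1}$ has spread $r+O(1)$.
\end{itemize}
So the obstacle you flag in step 4 is not real, since step 1 already converts the question about $\alpha^{-1}$ into one about $\alpha$. The real subtlety is whether operators with disjoint supports commute, and that is exactly what your step 5 assumes away.
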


\begin{proof}
 We need to show that the composition of two QCA is locality preserving, and that the inverse is locality preserving. For convenience, we work in $\mathbb{C}^\text{lk}[B_1 \oplus B^2_\text{fin.}]$ which we have already shown is isomorphic to the quotient. For simplicity of the notation, define
 \be
 \mc{B} \coloneqq \mathbb{C}^\text{lk}[B_1 \oplus B^2_\text{fin.}] \,.
 \ee

 The composition is straightforward. If $\alpha, \beta$ are QCA, of range $r, s$, then $\beta \circ \alpha$ is a QCA of range $r + s$. One must simply check
 \be
 \paren{\mc{R}^{+r}}^{+s} \subseteq \mc{R}^{+(r + s)} \,,
 \ee
 which follows from the triangle inequality.

 For contradiction purposes, suppose that $\alpha$ is a QCA, of range $r$, but $\alpha^{-1}$ is not locality preserving. For any $r' \gg r$, there is some region $\mc{R}$ for which
 \be
 \alpha^{-1}\paren{\mc{B}_\mc{R}} \not\subseteq \mc{B}_{\mc{R}^{+r'}} \,.
 \ee

 From now on, fix some sufficiently big $r' \gg r$.

 Firstly, note that we can always choose some basis element $(b_1, \bm{b^2}) \in B_1 \oplus B^2_\text{fin.}$ for which
 \be
 \alpha^{-1}(b_1, \bm{b^2}) \notin \mc{B}_{\paren{b_1 \coprod b^2}^{+r'}} \,.
 \ee

 If not, it would mean all basis elements have their spread bounded by $r'$. But then that extends to a bound on the spread of $\alpha^{-1}$, which we are assuming does not exist.

 \textbf{Claim.} Let $\mu \in \operatorname{Supp}(a + \mathfrak{I})$. Then
 \be
 (Y_\mu + \mathfrak{I}) (a + \mathfrak{I}) \neq (a + \mathfrak{I}) (Y_\mu + \mathfrak{I})
 \ee

 This would be trivial in $\mc{A}_{\Z_2^{(1)}}$ (by definition $Y_\mu a \neq a Y_\mu$ for $\mu \in \operatorname{Supp}(a)$), but we claim that this also applies in the quotient.

 We pick
 \be
 \mu \in \operatorname{Supp}(\alpha^{-1}(b_1, \bm{b^2})) \setminus \paren{b_1 \coprod b^2}^{+r'} \,,
 \ee
 and using the claim we find
 \be
 (Y_\mu + \mathfrak{I}) \alpha^{-1}(b_1, \bm{b^2}) \neq \alpha^{-1}(b_1, \bm{b^2}) (Y_\mu + \mathfrak{I}) \,.
 \ee

 Now apply $\alpha$ to both sides
 \be
 \alpha(Y_\mu + \mathfrak{I}) (b_1, \bm{b^2}) \neq (b_1, \bm{b^2}) \alpha(Y_\mu + \mathfrak{I}) \,.
 \ee

 Since $\alpha$ has spread $r$, $\alpha(Y_\mu + \mathfrak{I})$ is contained in a ball of radius $r \ll r'$ around $\mu$. The original $(b_1, \bm{b^2})$ is entirely outside of this ball.

 Writing $\alpha(Y_\mu + \mathfrak{I})$ in the basis $B_1 \oplus B^2_\text{fin.}$, we find that any terms $(b_1', \bm{b}'^2)$ appearing in it can be rewritten as $(\partial c_2, \delta \bm{c^1})$, with $c_2, c_1$ entirely contained inside the ball. Therefore, when multiplying, we find that all signs are $+1$:
 \begin{align}
  (\partial c_2, \delta \bm{c^1}) (b_1, \bm{b^2}) &= (\partial c_2 + b_1, \delta \bm{c^1} + \bm{b^2}) \nonumber \\
  &= (b_1, \bm{b^2}) (\partial c_2, \delta \bm{c^1}) \,,
 \end{align}
 simply because $\bm{c^1} (b_1) = \bm{b^2}(c_2) = 0$, one is inside the ball, one is entirely outside.

 But from this we conclude that
 \be
 \alpha(Y_\mu + \mathfrak{I}) (b_1, \bm{b^2}) = (b_1, \bm{b^2}) \alpha(Y_\mu + \mathfrak{I}) \,,
 \ee
 which is a contradiction.

 \textbf{Proof.} (of the claim) Assume for contradiction that $Y_\mu a + \mathfrak{I} = a Y_\mu + \mathfrak{I}$.

 This is equivalent to $Y_\mu a - a Y_\mu = J \in \mathfrak{I}$. Now, observe that
 \be
 Y_\mu J Y_\mu = a Y_\mu - Y_\mu a = - J \implies J = \frac{1}{2} \paren{J - Y_\mu J Y_\mu}
 \ee

 Rearranging these equations a little:
 \begin{align}
  Y_\mu a + \frac{1}{2} Y_\mu J Y_\mu &= a Y_\mu + \frac{1}{2} J \,, \nonumber \\
  Y_\mu \paren{a + \frac{1}{2} J Y_\mu} &= \paren{a + \frac{1}{2} J Y_\mu} Y_\mu \,.
 \end{align}

 We find that $Y_\mu$ commutes with $a + \frac{1}{2} J Y_\mu \in a + \mathfrak{I}$, and hence
 \be
 \mu \notin \operatorname{Supp}\paren{a + \frac{1}{2} J Y_\mu} \implies \mu \notin \operatorname{Supp}(a + \mathfrak{I}) \,,
 \ee
 which is a contradiction.
\end{proof}

\subsection{Extensions of QCA}\label{app:extensions_of_qca}

Let us recall the definition of an extension of QCA from the quotient algebra to the full tensor product algebra.

\begin{definition}\label{def:QCA_extension}
 Let $\alpha : \mc{A}_{\Z_2^{(1)}} / \mathfrak{I} \rightarrow \mc{A}_{\Z_2^{(1)}} / \mathfrak{I}$ be a QCA. We say that $\tilde{\alpha} : \mc{A} \rightarrow \mc{A}$ is an extension of $\alpha$ to the full algebra if $\tilde{\alpha}$ is a QCA, $\tilde{\alpha}(\mc{A}_{\Z_2^{(1)}}) = \mc{A}_{\Z_2^{(1)}}$, and the following diagram
 \be
 \begin{tikzcd}[nodes={inner sep=5pt}]
  \mc{A}_{\Z_2^{(1)}} \arrow[r, "\tilde{\alpha}"] \arrow[d, "\pi", two heads] & \mc{A}_{\Z_2^{(1)}} \arrow[d, "\pi", two heads] \\
  \mc{A}_{\Z_2^{(1)}} / \mathfrak{I} \arrow[r, "\alpha"]      & \mc{A}_{\Z_2^{(1)}} / \mathfrak{I}
 \end{tikzcd}
 \ee
 commutes.
\end{definition}

\begin{lemma}
 Let $\tilde{\alpha} : \mc{A} \rightarrow \mc{A}$ be an extension of $\alpha : \mc{A}_{\Z_2^{(1)}} / \mathfrak{I} \rightarrow \mc{A}_{\Z_2^{(1)}} / \mathfrak{I}$. Then, the following are equivalent
 \begin{enumerate}
  \item $\tilde{\alpha}(\mc{A}_{\Z_2^{(1)}}) = \mc{A}_{\Z_2^{(1)}}$

  \item $\tilde{\alpha}(\mathfrak{I}) = \mathfrak{I}$

  \item $\mathfrak{I} \subset \tilde{\alpha}(\mc{A}_{\Z_2^{(1)}})$
 \end{enumerate}
\end{lemma}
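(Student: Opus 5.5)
The plan is to prove the cycle $1 \Rightarrow 2 \Rightarrow 3 \Rightarrow 1$. Throughout I use that $\tilde{\alpha}$ is a $*$-automorphism of $\mc{A}$, so it is bijective, and that the diagram of Definition~\ref{def:QCA_extension} commutes.

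For $1 \Rightarrow 2$, assume $\tilde{\alpha}(\mc{A}_{\Z_2^{(1)}}) = \mc{A}_{\Z_2^{(1)}}$. Commutativity of the diagram gives $\pi \circ \tilde{\alpha} = \alpha \circ \pi$ on $\mc{A}_{\Z_2^{(1)}}$. Taking $I \in \mathfrak{I} = \ker \pi$ gives $\pi(\tilde{\alpha}(I)) = \alpha(0) = 0$, so $\tilde{\alpha}(\mathfrak{I}) \subseteq \mathfrak{I}$. For the reverse inclusion, let $J \in \mathfrak{I}$. By surjectivity of $\tilde{\alpha}$ onto $\mc{A}_{\Z_2^{(1)}}$ we may write $J = \tilde{\alpha}(a)$ with $a \in \mc{A}_{\Z_2^{(1)}}$. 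Then $\alpha(\pi(a)) = \pi(J) = 0$, and injectivity of the automorphism $\alpha$ gives $\pi(a) = 0$, so $a \in \mathfrak{I}$. Hence $\tilde{\alpha}(\mathfrak{I}) = \mathfrak{I}$.

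For $2 \Rightarrow 3$: $\mathfrak{I} = \tilde{\alpha}(\mathfrak{I}) \subseteq \tilde{\alpha}(\mc{A}_{\Z_2^{(1)}})$, since $\mathfrak{I} \subseteq \mc{A}_{\Z_2^{(1)}}$.

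For $3 \Rightarrow 1$, which I expect to be the main obstacle, the commuting diagram already presupposes $\tilde{\alpha}(\mc{A}_{\Z_2^{(1)}}) \subseteq \mc{A}_{\Z_2^{(1)}}$. I read the lemma as assuming this inclusion together with the diagram, so only the reverse inclusion needs proof. Take $b \in \mc{A}_{\Z_2^{(1)}}$. Since $\alpha$ is surjective, there is $a \in \mc{A}_{\Z_2^{(1)}}$ with $\alpha(\pi(a)) = \pi(b)$. Then $\pi(\tilde{\alpha}(a)) = \pi(b)$, so $b - \tilde{\alpha}(a) \in \mathfrak{I}$. By assumption 3, $b - \tilde{\alpha}(a) = \tilde{\alpha}(a')$ for some $a' \in \mc{A}_{\Z_2^{(1)}}$, so $b = \tilde{\alpha}(a + a') \in \tilde{\alpha}(\mc{A}_{\Z_2^{(1)}})$.

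I would flag explicitly that the inclusion $\tilde{\alpha}(\mc{A}_{\Z_2^{(1)}}) \subseteq \mc{A}_{\Z_2^{(1)}}$ is needed for the diagram to make sense at all. Without it, statements 2 and 3 would not even guarantee that $\tilde{\alpha}$ maps symmetric operators to symmetric ones. If the intended hypotheses are weaker, one could instead characterize $\mc{A}_{\Z_2^{(1)}}$ as the commutant of the $X_{\partial c}$ and try to deduce preservation from $\tilde{\alpha}(\mathfrak{I}) = \mathfrak{I}$. However, that is not clearly valid, so I would adopt the reading above.
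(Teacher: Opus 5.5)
Your proof is correct and complete. The paper states this lemma without a proof, so there is no argument of the paper's to compare yours with.

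\textbf{The literal statement.} The paper's definition of an extension already lists $\tilde{\alpha}(\mathcal{A}_{\mathbb{Z}_2^{(1)}})=\mathcal{A}_{\mathbb{Z}_2^{(1)}}$ among its hypotheses. Read literally, item 1 therefore holds automatically, and your chain $1\Rightarrow 2\Rightarrow 3$ is all that is needed.

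\textbf{Your reading.} You keep only the inclusion $\tilde{\alpha}(\mathcal{A}_{\mathbb{Z}_2^{(1)}})\subseteq\mathcal{A}_{\mathbb{Z}_2^{(1)}}$ that the commuting square requires. This is the reading under which the lemma has content. Your $3\Rightarrow 1$ handles it correctly: surjectivity of $\alpha$ matches $b$ up to an element of $\mathfrak{I}$, and item 3 absorbs that error term.

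\textbf{Your closing remark about item 2 is too pessimistic.} $2\Rightarrow 1$ holds for an arbitrary automorphism $\tilde{\alpha}$ of $\mathcal{A}$, with no diagram and no a priori inclusion. The reason is that $\mathcal{A}_{\mathbb{Z}_2^{(1)}}=\{x\in\mathcal{A} : x\,\mathfrak{I}\subseteq\mathfrak{I}\}$.
\begin{itemize}
\item Suppose $x$ is supported on a finite region $R$ and $x\,\mathfrak{I}\subseteq\mathfrak{I}$. Decompose $x=\sum_S x_S$ into components of definite charge under the finitely many $X_{\partial c}$ that fail to commute with $x$; each $x_S$ is still supported on $R$.
\item Pick a cube $c_0$ with $\partial c_0$ disjoint from $R$. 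Then $x(1-X_{\partial c_0})$ lies in $\mathfrak{I}$, so it is symmetric.
\item Hence every charged component $x_S(1-X_{\partial c_0})$ vanishes. This is a product of nonzero-possible operators with disjoint supports, so it forces $x_S=0$ for $S\neq\emptyset$, and $x$ is symmetric.
\item Consequently $\tilde{\alpha}(\mathfrak{I})=\mathfrak{I}$ gives $\tilde{\alpha}(x)\,\mathfrak{I}=\tilde{\alpha}(x\,\mathfrak{I})\subseteq\mathfrak{I}$ for every symmetric $x$. The same holds for $\tilde{\alpha}^{-1}$, so $\tilde{\alpha}$ preserves $\mathcal{A}_{\mathbb{Z}_2^{(1)}}$.
\end{itemize}
You did not rely on this remark, so it does not affect your proof.
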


\begin{lemma}
 Let $\tilde{\alpha} : \mc{A} \rightarrow \mc{A}$ be a QCA on the full algebra. If $\tilde{\alpha}(\mc{A}_{\Z_2^{(1)}}) = \mc{A}_{\Z_2^{(1)}}$ and $\tilde{\alpha}(\mathfrak{I}) = \mathfrak{I}$, then $\tilde{\alpha}$ defines a restricted QCA on $\mc{A}_{\Z_2^{(1)}} / \mathfrak{I}$ via
 \be
 \alpha(a + \mathfrak{I}) \coloneqq \tilde{\alpha}(a) + \mathfrak{I} \quad \text{for any } a \in \mc{A}_{\Z_2^{(1)}} \,.
 \ee
\end{lemma}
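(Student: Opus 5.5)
The plan is to check, in order, that the formula $\alpha(a+\mathfrak{I})\coloneqq\tilde{\alpha}(a)+\mathfrak{I}$ is well defined, that it is a $*$-automorphism, and that it is locality preserving. Only the last step requires real work.

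\emph{Well-definedness and algebraic properties.} If $a-a'\in\mathfrak{I}$, then $\tilde{\alpha}(a)-\tilde{\alpha}(a')=\tilde{\alpha}(a-a')\in\tilde{\alpha}(\mathfrak{I})=\mathfrak{I}$. Since $\tilde{\alpha}(\mc{A}_{\Z_2^{(1)}})=\mc{A}_{\Z_2^{(1)}}$, the value $\tilde{\alpha}(a)+\mathfrak{I}$ is a genuine element of the quotient. Linearity, multiplicativity and compatibility with ${}^*$ are inherited directly from $\tilde{\alpha}$, because $\pi$ is a $*$-homomorphism.

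For bijectivity I will use $\tilde{\alpha}^{-1}$, which is itself a QCA on $\mc{A}$. Applying $\tilde{\alpha}^{-1}$ to the two set equalities gives $\tilde{\alpha}^{-1}(\mc{A}_{\Z_2^{(1)}})=\mc{A}_{\Z_2^{(1)}}$ and $\tilde{\alpha}^{-1}(\mathfrak{I})=\mathfrak{I}$. Hence $\tilde{\alpha}^{-1}$ descends to the quotient by the same formula. The induced map is a two-sided inverse of $\alpha$, so $\alpha$ is a $*$-automorphism. (The equivalent conditions in the preceding lemma could be used instead, but they are not needed.)

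\emph{Locality.} Let $r$ be a common range for $\tilde{\alpha}$ and $\tilde{\alpha}^{-1}$ on the tensor product algebra. Take $a+\mathfrak{I}$ with $\operatorname{Supp}(a+\mathfrak{I})\subseteq\mc{R}$. By Lemma~\ref{lemma:exact_support}, I may replace $a$ by a representative with $\operatorname{Supp}(a)=\operatorname{Supp}(a+\mathfrak{I})\subseteq\mc{R}$; this changes neither $a+\mathfrak{I}$ nor its image. Now fix $\mu\notin\mc{R}^{+R}$ for a constant $R$ to be chosen. Writing $Y_\mu$ for $X_f$ or $Z_{\delta\bm{e}}$, it suffices to show $[Y_\mu,\tilde{\alpha}(a)]=0$. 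Indeed, then $\mu\notin\operatorname{Supp}(\tilde{\alpha}(a))$, and a fortiori $\mu\notin\operatorname{Supp}(\tilde{\alpha}(a)+\mathfrak{I})$, since the quotient support is an intersection over representatives.

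This commutator condition is equivalent to $[\tilde{\alpha}^{-1}(Y_\mu),a]=0$. Here $b\coloneqq\tilde{\alpha}^{-1}(Y_\mu)$ lies in $\mc{A}_{\Z_2^{(1)}}$, and in the ordinary tensor-product sense it is supported in a ball of radius $r+O(1)$ around $\mu$.

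\emph{Main obstacle: a local generation statement.} The key step is to show that a symmetric operator $b$ supported in a ball $D$ is a polynomial in the generators $X_f$ and $Z_{\delta\bm{e}}$ with $f,e$ in a slightly enlarged ball $D'$. Granting this, each such generator commutes with $a$, because its label lies outside $\operatorname{Supp}(a)\subseteq\mc{R}$ once $R\geq r+O(1)$. Therefore $b$ commutes with $a$.

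I would prove the local generation statement by expanding $b$ in the Pauli basis $\prod_{c_2}X\prod_{c_2'}Z$ with $c_2,c_2'\subseteq D$. Commutation with every $X_{\partial c}$ forces each $Z$-string to be closed on the dual lattice, in the sense that $c_2'$ pairs trivially with every $\partial c$. Since a ball has trivial cohomology, every such $c_2'$ is $\delta\bm{c_1}$ for some $c_1$ supported in $D'$. This is the one place where the contractibility of the region enters, and it is exactly where non-contractible regions would fail.

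With this in hand, the range of $\alpha$ is bounded by $R=r+O(1)$, which completes the argument.
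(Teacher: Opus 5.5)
The paper does not prove this lemma: it is stated without proof both in Section~\ref{sec:results_on_QCA_extendability} and in Appendix~\ref{app:extensions_of_qca}, evidently regarded as immediate. There is therefore no argument of the paper's to compare with. Your proposal is correct and supplies what is missing.

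The algebraic half is the routine part: well-definedness from $\tilde\alpha(\mathfrak I)=\mathfrak I$, the $*$-homomorphism property via $\pi$, and bijectivity by letting $\tilde\alpha^{-1}$ descend.

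You are right that locality is the only step with content, and you handle it the right way. On $\mc{A}_{\Z_2^{(1)}}$ and on the quotient, $\operatorname{Supp}$ is defined through commutators with $X_f$ and $Z_{\delta\bm e}$. It therefore does not control the tensor-product support of representatives. Closed $X$-membranes such as $X_{\partial c}$ have empty support. For a non-contractible $\mc R$, a basis element $(b_1,\bm{b^2})$ with $b_1\subseteq\mc R$ need not admit any $c_2$ with $\partial c_2=b_1$ near $\mc R$. So the naive argument that ``$\tilde\alpha$ has range $r$, hence moves supports by at most $r$'' does not go through directly.

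Your approach avoids this and works uniformly for every region $\mc R$:
\begin{enumerate}
\item pull $Y_\mu$ back by $\tilde\alpha^{-1}$;
\item take an exact-support representative via Lemma~\ref{lemma:exact_support};
\item decompose the ball-supported symmetric operator $\tilde\alpha^{-1}(Y_\mu)$ into generators labelled near $\mu$.
\end{enumerate}
This is in the same commutator-based spirit as the paper's proof of Theorem~\ref{thm:group_of_qca}. It adds a localized version of the paper's unproved remark that the commutant of the $X_{\partial c}$ is generated by the $X_f$ and $Z_{\delta\bm e}$.

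One justification in your local-generation step needs tightening. $H^2(D';\Z_2)=0$ only gives a cochain $\bm{c_1}$ on $D'$ with $\delta\bm{c_1}=\bm{c_2'}$ on the faces of $D'$. Its coboundary in the full lattice can be nonzero on faces just outside $D'$. What you actually need is the vanishing of the relative group $H^2(D',\partial D';\Z_2)\cong H_1(D';\Z_2)=0$. Equivalently, use $H^2_c$ of an open ball, or $H^1=0$ for the shell between $D$ and $\partial D'$; the latter lets you correct $\bm{c_1}$ by a coboundary so that it vanishes near $\partial D'$. With this, every closed $Z$-pattern supported in $D$ is $\prod_{e\in c_1}Z_{\delta\bm e}$ with $c_1\subseteq D'$, and the rest of your argument gives a spread $R=r+O(1)$.
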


\begin{theorem}\label{thm:QCA_extendability}
 Let $\alpha : \mc{A}_{\Z_2^{(1)}} / \mathfrak{I} \rightarrow \mc{A}_{\Z_2^{(1)}} / \mathfrak{I}$ be a QCA such that $\alpha(Z_{\delta \bm{e}} + \mathfrak{I}) = Z_{\delta \bm{e}} + \mathfrak{I}$. Then, there exists an extension $\tilde{\alpha}$ to the full algebra. Furthermore, we may choose it such that $\tilde{\alpha}(Z_f) = Z_f \; \forall f \in F$.
\end{theorem}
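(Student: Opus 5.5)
The strategy is to define $\tilde{\alpha}$ on generators by $\tilde{\alpha}(Z_f) \coloneqq Z_f$ for every face $f$, and by choosing suitable local lifts of $\alpha(X_f + \mathfrak{I})$. We then check three things: the lifts satisfy the Pauli relations against the $Z_f$ and among themselves, the resulting map is a locality-preserving automorphism of $\mc{A}$, and it preserves $\mc{A}_{\Z_2^{(1)}}$ compatibly with $\pi$.

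\textbf{Step 1: the image of $X_f$ commutes with the local fluxes.} The relation $X_f Z_{\delta\bm{e}} = \pm Z_{\delta\bm{e}} X_f$ is preserved by $\alpha$. Since $\alpha$ fixes $Z_{\delta\bm{e}}+\mathfrak{I}$, the class $\alpha(X_f+\mathfrak{I})$ commutes or anticommutes with every $Z_{\delta\bm{e}}+\mathfrak{I}$ exactly as $X_f$ does. We would like the same statement for a representative in $\mc{A}_{\Z_2^{(1)}}$. Take any local representative $x_f$, using Lemma~\ref{lemma:exact_support} to make its support equal to that of the class. Decompose $x_f$ into eigencomponents under conjugation by the finitely many relevant $Z_{\delta\bm{e}}$. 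The argument in the proof of Lemma~\ref{lemma:exact_support} shows that the components with the wrong sign lie in $\mathfrak{I}$. Discarding them gives a representative $x_f$ with the correct exact (anti)commutation with all $Z_{\delta\bm{e}}$, still localized near $f$.

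\textbf{Step 2: extract the $X$-part.} Expand $x_f$ in the basis $\phi(c_2,c_1)$. The $Z_{\delta\bm{e}}$ only see $\partial c_2$. So, modulo products of $X_{\partial c}$ (which is an ambiguity in $\mathfrak{I}$), the correct commutation means that every term has $\partial c_2 = \partial f$. Hence we can write $x_f = X_f\, D_f$ with $D_f$ a local polynomial in the $Z_{\delta\bm{e}}$ and the $X_{\partial c}$. We may need to absorb the $X_{\partial c}$ factors, or replace them by $1$ using the ideal. $D_f$ is then diagonal in the $Z$-basis and commutes with every $Z_{f'}$. We set $\tilde{\alpha}(X_f)\coloneqq X_f D_f$.

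The unitary and involution relations $\tilde{\alpha}(X_f)^2=1$ and the mutual commutation of the $\tilde{\alpha}(X_f)$ hold in the quotient, but not automatically in $\mc{A}$. We plan to fix this by working on the full lattice in the ``$Z$-basis'' picture. Here $D_f$ is a phase function $e^{i\theta_f(\{\delta \bm{a}\})}$ of the local flux configuration. The relations in $\mc{A}$ become cocycle-type conditions on the $\theta_f$: for example, $\theta_f(\bm{a}) + \theta_f(\bm{a}+\hat{\bm{f}}) = 0$, together with a symmetric condition for pairs. In the quotient these conditions hold up to functions that are trivial when all $X_{\partial c}=1$. Because the $D_f$ depend only on $\delta\bm{a}$, and $X_{\partial c}$ does not change $\delta\bm{a}$, these conditions in fact hold on the nose. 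Alternatively, if we keep $X_{\partial c}$ factors, we choose the representative without them.

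\textbf{Step 3: automorphism, locality, and compatibility.} The map $\tilde{\alpha}$ is a $*$-homomorphism $\mc{A}\to\mc{A}$ of spread $\approx r$. It sends $X_{\partial c}$ to $X_{\partial c}\prod D_f$. In the quotient this product equals $\alpha(X_{\partial c}+\mathfrak{I}) = 1$. Since it is a diagonal function of $\delta\bm{a}$ that must be identically $1$, we get $\tilde{\alpha}(X_{\partial c}) = X_{\partial c}$. Hence $\tilde{\alpha}(\mathfrak{I})\subseteq\mathfrak{I}$ and $\tilde{\alpha}(\mc{A}_{\Z_2^{(1)}})\subseteq\mc{A}_{\Z_2^{(1)}}$, and the diagram in Definition~\ref{def:QCA_extension} commutes by construction. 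For surjectivity, apply the same construction to $\alpha^{-1}$, which is a QCA by Theorem~\ref{thm:group_of_qca} and also fixes the $Z_{\delta\bm{e}}$. Composing the two extensions gives a map fixing all $Z_f$ and sending $X_f \mapsto X_f E_f$, where $E_f$ is trivial in the quotient and therefore trivial on the nose by the same diagonal-function argument. So the composite is the identity and $\tilde{\alpha}$ is a QCA.

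The main obstacle is Step 2: showing that the quotient-level relations lift to exact relations in $\mc{A}$. This is the claim that a diagonal function of the fluxes which equals $1$ modulo $\mathfrak{I}$ is literally $1$. We expect to prove it using Proposition~\ref{prop:isomorphisms2}: distinct $\bm{b^2}\in B^2_{\rm fin.}$ give linearly independent classes, so an element supported only on $Z$-type basis vectors that vanishes in the quotient must vanish in $\mc{A}_{\Z_2^{(1)}}$.
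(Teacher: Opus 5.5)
Your proposal is correct and follows essentially the same route as the paper. You set $\tilde{\alpha}(Z_f)=Z_f$ and $\tilde{\alpha}(X_f)=X_f D_f$, with $D_f$ a local representative of $X_f\,\alpha(X_f+\mathfrak{I})$ written purely in the $Z_{\delta\bm{e}}$. You then lift each quotient relation (self-adjointness, squaring to $1$, mutual commutation, invertibility, $X_{\partial c}\mapsto X_{\partial c}$) to an exact one via the fact that a pure-$Z$ element of $\mathfrak{I}$ vanishes. This is precisely the paper's Claim $\langle Z_{\delta\bm{e}}\rangle\cap\mathfrak{I}=\{0\}$, and your proof of it via Proposition~\ref{prop:isomorphisms2} is more transparent than the paper's. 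Your other variations do not change the structure of the argument: you use eigen-projection instead of the support argument for $X_f x_f$, and you build the inverse by extending $\alpha^{-1}$ rather than writing $X_f D_f^{\dagger}$.
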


\begin{proof}
 The proof is constructive. Let us define $\tilde{\alpha}(Z_f) = Z_f$.

 Now choose $x_f \in \alpha(X_f + \mathfrak{I})$ arbitrarily. We will later see how to make the choice correctly. Note that we have:
 \begin{align}
 (x_f + \mathfrak{I}) (Z_{\delta \bm{e}} + \mathfrak{I}) &= \alpha(X_f Z_{\delta \bm{e}} + \mathfrak{I}) \nonumber \\
 (Z_{\delta \bm{e}} + \mathfrak{I}) (x_f + \mathfrak{I}) &= \alpha(Z_{\delta \bm{e}} X_f + \mathfrak{I})
 \end{align}
 so the commutation relations of $x_f$ and $Z_{\delta \bm{e}}$ are the same as those of $X_f$ and $Z_{\delta \bm{e}}$ (or rather, their equivalence classes). This means that $X_f x_f$ commutes with all $Z_{\delta \bm{e}}$, or in other words
 \be
 X_f x_f Z_{\delta \bm{e}} - Z_{\delta \bm{e}} X_f x_f = J_{e, f} \in \mathfrak{I} \qquad \forall e, f \,.
 \ee

 From this we can also see that $Z_{\delta \bm{e}} J_{e, f} Z_{\delta \bm{e}} = - J_{e, f}$, and therefore
 \be
 \paren{X_f x_f + \frac{1}{2} Z_{\delta \bm{e}} J_{e, f}} Z_{\delta \bm{e}} = Z_{\delta \bm{e}} \paren{X_f x_f + \frac{1}{2} Z_{\delta \bm{e}} J_{e, f}} \,.
 \ee

 That is, for every edge $e \in E$, there is some $I \in \mathfrak{I}$, such that $X_f x_f + I$ commutes with $Z_{\delta \bm{e}}$. This means that $e \notin \operatorname{Supp}(X_f x_f + \mathfrak{I})$, and so
 \be
 \operatorname{Supp}(X_f x_f + \mathfrak{I}) \subseteq F \,.
 \ee

 It can be shown (Lemma \ref{lemma:exact_support}) that there is always some $a_f \in X_f x_f + \mathfrak{I}$ whose support is exactly the same as that of its equivalence class:
 \be
 \operatorname{Supp}(a_f) = \operatorname{Supp}(X_f x_f + \mathfrak{I}) \subseteq F \,.
 \ee
 In fact, we may choose a representative $a_f$ which is written entirely in terms of $Z_{\delta \bm{e}}$. Any $X_f$, would have to appear in configurations without boundary, but those are just symmetry operators, which can all be eliminated by choosing a different representative:
 \be
 \prod_{f \in c_2} X_f \prod_{e \in c_1} Z_{\delta \bm{e}} = \prod_{e \in c_1} Z_{\delta \bm{e}} - \paren{1 - \prod_{f \in c_2} X_f} \prod_{e \in c_1} Z_{\delta \bm{e}} \,,
 \ee
 where due to $\partial c_2 = 0$, the second term is in $\mathfrak{I}$.

 Finally, we define $\tilde{\alpha}(Z_f) = Z_f$ and $\tilde{\alpha}(X_f) = X_f a_f$.

 Let us check that such $\tilde{\alpha}$ really is a $*$-homomorphism. Clearly $\tilde{\alpha}(Z_f) = Z_f$ all commute between each other. Furthermore, $\tilde{\alpha}(Z_f), \tilde{\alpha}(X_f)$ commute or anticommute as they should: the sign comes from the $X_f$ in $\tilde{\alpha}(X_f) = X_f a_f$, and $[a_f, Z_p] = 0$ because $a_f$ is written entirely in terms of $Z$.

 Now, we must verify that $\tilde{\alpha}(X_f)$ commute between each other. Write
 \be
 a_f = \sum_{b^2} A^f_{b^2} \prod_{f \in b^2} Z_f \,.
 \ee

 Then
 \be
 X_f a_f X_p a_p = X_f X_p \paren{\sum_{b^2} (-1)^{\bm{b^2}(p)} A^f_{b^2} \prod_{f \in b^2} Z_f } a_p \,.
 \ee

 Similarly
 \be
 X_p a_p X_f a_f = X_p X_f a_f \paren{ \sum_{b'^2} (-1)^{\bm{b}'^2(f)} A^p_{b'^2} \prod_{f \in b'^2} Z_f } \,.
 \ee

 Due to the fact that $\alpha(X_f + \mathfrak{I})$ and $\alpha(X_p + \mathfrak{I})$ commute, we also must have $X_f a_f X_p a_p - X_p a_p X_f a_f \in \mathfrak{I}$, or equivalently if we multiply by $X_f X_p$ on the left,
 \be
 (X_p a_f X_p) a_p - a_f (X_f a_p X_f) \in \mathfrak{I} \,,
 \ee
 where $X a X$ are just the expressions in parentheses earlier.
 
 \textbf{Claim.} $\langle Z_{\delta \bm{e}} \rangle \cap \mathfrak{I} = \{ 0 \}$.

 We see that that difference, which is in $\mathfrak{I}$, can also be written entirely in terms of $Z_{\delta \bm{e}}$ (conjugating by $X_f$ does not change this, it only adds some signs). By the claim, the difference is $0$, so $\tilde{\alpha}(X_f)$ commute between each other.

 \textbf{Proof.} Clearly, every element of $\langle Z_{\delta \bm{e}} \rangle \subset \mc{A}$ commutes with every $Z_f$.

 We are actually going to show something slightly stronger. That if some $I \in \mathfrak{I}$ also commutes with all $Z_f, \forall f \in F$, then $I = 0$.

 We argue by contradiction: suppose there is some $I \in \mathfrak{I} \setminus \{0\}$ which commutes with all $Z_f$. Without loss of generality, we may assume that $I$ is written entirely in terms of $X$ operators. Indeed, expanding $I$ in the basis of $\mc{A}$, $\prod X \prod Z$ with the products running over arbitrary 2-chains, we can group the terms according to their configuration of $Z$ operators. In such an expression, each ``coefficient'' (which is actually an operator written entirely in terms of $X$ operators) must commute with all $Z_f$. At least one of those coefficients is non-zero, and we may pick that one.

 But then, if we have
 \be
 I = \sum_{c_2} I_{c_2} \prod_{f \in c_2} X_f \,,
 \ee
 this is still written as a linear combination of linearily independent terms. For any $c'_2 \neq 0$, choose a plaquette $p \in c'_2$:
 \be
 I = Z_p I Z_p = - I_{c'_2} \prod_{f \in c'_2} X_f + \sum_{c_2 \neq c'_2} (-1)^{\bm{p}(c_2)} I_{c_2} \prod_{f \in c_2} X_f \,,
 \ee
 so $I_{c'_2} = 0$. The only term which remains is $I = I_0 \id{}$. But this is invertible for any $I_0 \neq 0$, so we conclude $I = 0$. With this, the proof of the claim is complete.
 
 We have shown that
 \begin{align}
  \tilde{\alpha}(Z_f) &= Z_f \,, & \tilde{\alpha}(X_f) &= X_f a_f \,,
 \end{align}
 preserves all the commutation relations of the generators. $Z_f$ is also self-adjoint, and unitary (i.e. squares to 1).

 To show that $X_f a_f$ is self-adjoint, note that
 \be
 (X_f a_f)^\dag + \mathfrak{I} = \alpha\paren{(X_f + \mathfrak{I})^\dag} = \alpha\paren{X_f + \mathfrak{I}} = X_f a_f + \mathfrak{I} \,.
 \ee

 We manipulate the condition by multiplying by a unitary:
 \be
 (X_f a_f)^\dag - X_f a_f \in \mathfrak{I} \iff a_f^\dag - X_f a_f X_f \in \mathfrak{I} \,.
 \ee

 The operator on the right can clearly be written using only $Z_{\delta \bm{e}}$, so by using the claim it must be $0$. The self-adjointness of $X_f a_f$ holds exactly, not only at the level of equivalence classes.

 Finally, to show that $X_f a_f$ is unitary:
 \be
 (X_f a_f)^2 + \mathfrak{I} = \alpha\paren{(X_f + \mathfrak{I})^2} = 1 + \mathfrak{I} \,,
 \ee
 we manipulate the condition
 \be
 X_f a_f X_f a_f - 1 \in \mathfrak{I} \iff a_f^\dag a_f - 1 \in \mathfrak{I} \,.
 \ee

 By using the claim again, we conclude that this equality must hold exactly.

 Hence, $\tilde{\alpha}$ defines a $*$-homomorphism.

 We can construct its inverse explicitly,
 \begin{align}
  \tilde{\alpha}^{-1}(Z_f) &= Z_f \,, & \tilde{\alpha}^{-1}(X_f) &= X_f a_f^\dag \,,
 \end{align}
 to show that it is a $*$-automorphism.

 The commutative diagram in the statement of the theorem is equivalent to $\pi \circ \tilde{\alpha} = \alpha \circ \pi$, when restricted to $\mc{A}_{\Z_2^{(1)}}$. As a preliminary step, note that
 \be
 X_f \in \langle X_f a_f, Z_{\delta \bm{e}} \; \mid \; f \in F, e \in E \rangle
 \ee
 because $a_f \in \langle Z_{\delta \bm{e}} \rangle$. Hence
 \be
 \tilde{\alpha} \paren{\mc{A}_{\Z_2^{(1)}}} = \mc{A}_{\Z_2^{(1)}} \,,
 \ee
 and so the equality $\pi \circ \tilde{\alpha} = \alpha \circ \pi$ makes sense.
 
 As always we can check equality on the generators:
 \begin{align}
  \alpha(Z_{\delta \bm{e}} + \mathfrak{I}) &= Z_{\delta \bm{e}} + \mathfrak{I} \,, \\
  \tilde{\alpha}(Z_{\delta \bm{e}}) + \mathfrak{I} &= Z_{\delta \bm{e}} + \mathfrak{I} \,,
 \end{align}
 and similarly
 \begin{align}
  \alpha(X_f + \mathfrak{I}) &= x_f + \mathfrak{I} = X_f a_f + \mathfrak{I} \,, \\
  \tilde{\alpha}(X_f) + \mathfrak{I} &= X_f a_f + \mathfrak{I} \,.
 \end{align}

 Lastly, we must show that $\tilde{\alpha}$ is also locality preserving, assuming $\alpha$ has a radius of spread $r$. Obviously we may focus on $X_f a_f$, and all we really need to check is that $a_f$ has bounded support (uniformly bounded for all $f$).

 In the quotient:
 \begin{multline}
  \operatorname{Supp}(a_f + \mathfrak{I}) = \operatorname{Supp}\paren{(X_f + \mathfrak{I}) \alpha(X_f + \mathfrak{I})} \\
  \subseteq f \cup \operatorname{Supp}(\alpha(X_f + \mathfrak{I})) \subseteq B_r(f) \,,
 \end{multline}
 where $B_r(f)$ is the ball of radius $r$ with center $f$. That last inclusion follows from the assumption that $\alpha$ has bounded spread (is locality preserving).

 Now recall that we chose the representative $a_f$ exactly as that one which has the same support as its equivalence class. Hence:
 \be
 \operatorname{Supp}(a_f) = \operatorname{Supp}(a_f + \mathfrak{I}) \subseteq B_r(f) \,,
 \ee
 which also shows that $\tilde{\alpha}$ is locality preserving, and hence an ordinary QCA on a tensor product algebra.

 We remark here that this last step relies on the fact that for $a_f$ in particular, its support in $\mc{A}_{\Z_2^{(1)}}$ as defined in this paper matches the usual notion of support. This is because $a_f$ is written entirely in terms of $Z$ operators, so when checking for where it does not commute with $X_f, Z_{\delta \bm{e}}$, only $X_f$ may fail to commute. This ends up being the same as checking where $a_f$ is not proportional to the identity.
\end{proof}

\begin{proposition}\label{prop:qca_nonextendability}
 Let $\alpha : \mc{A}_{\Z_2^{(1)}} / \mathfrak{I} \rightarrow \mc{A}_{\Z_2^{(1)}} / \mathfrak{I}$ be a QCA such that $\alpha(Z_{\delta \bm{e}} + \mathfrak{I}) = X_{t^\frac{1}{2} e} + \mathfrak{I}$.
 Then, there does not exist an extension of $\alpha$ to the full algebra.
\end{proposition}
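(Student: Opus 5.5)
The plan is to argue by contradiction and use the fact that an extension must carry the commutative subalgebra $\langle Z_f\rangle$ somewhere. Suppose $\tilde\alpha:\mc{A}\to\mc{A}$ is an extension of $\alpha$. By Definition~\ref{def:QCA_extension}, $\tilde\alpha(\mc{A}_{\Z_2^{(1)}})=\mc{A}_{\Z_2^{(1)}}$ and $\tilde\alpha(\mathfrak{I})=\mathfrak{I}$ by the lemma following it. In particular, $\tilde\alpha$ maps the abelian group generated by the $X_{\partial c}$ into $\mathfrak{I}$-related elements: $1-\tilde\alpha(X_{\partial c})\in\mathfrak{I}$. Moreover $\tilde\alpha(Z_{\delta \bm e}) = X_{t^{1/2}e}\,(1+J_e)$ for some $J_e\in\mathfrak{I}$, because $\pi\circ\tilde\alpha=\alpha\circ\pi$.

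The first key step is to use the relation $\prod_{e\ni v} Z_{\delta\bm e}=1$ for every vertex $v$, which holds exactly in $\mc{A}$. Applying $\tilde\alpha$ gives $1=\tilde\alpha\bigl(\prod_{e\ni v}Z_{\delta\bm e}\bigr)$. Projecting to the quotient, this equals $X_{\partial t^{1/2}(v)}+\mathfrak{I}=1+\mathfrak{I}$, which is consistent, so a finer invariant is needed. The second step therefore looks at the individual $Z_f$ rather than only symmetric operators.

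Set $W_f:=\tilde\alpha(Z_f)\in\mc{A}$. These are commuting, self-adjoint, square-one, local operators, with $\prod_{f\in\delta\bm e}W_f=\tilde\alpha(Z_{\delta\bm e})$. Each $W_f$ anticommutes with $\tilde\alpha(X_f)\in\mc{A}_{\Z_2^{(1)}}$ and commutes with $\tilde\alpha(X_{f'})$ for $f'\neq f$.

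Now consider the $X$-type image: $\tilde\alpha(Z_{\delta\bm e})$ is congruent to $X_{t^{1/2}e}$ modulo $\mathfrak I$. Using the support machinery (Lemmas~\ref{lemma:quotient_support} and \ref{lemma:exact_support}), I would show that $\tilde\alpha(\mc{A}_{\Z_2^{(1)}})$, hence $\mc{A}_{\Z_2^{(1)}}$ itself, is generated by $\tilde\alpha(Z_{\delta\bm e})$ together with $\tilde\alpha(X_f)$. The centre of $\mc{A}_{\Z_2^{(1)}}$, generated by the $X_{\partial c}$, is preserved by $\tilde\alpha$, so $\tilde\alpha$ restricts to an automorphism of this centre. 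The natural topological invariant is then the following. The $W_f$ generate a maximal abelian subalgebra of $\mc{A}$ containing the image of the "flux" algebra $\langle Z_{\delta\bm e}\rangle$. Under $\tilde\alpha^{-1}$, the operators $X_{\partial c}$ correspond to elements of the centre, while the $X_{t^{1/2}e}$ are products of $W$'s. Since each $X_f$ is a product of $W$'s up to the ideal, the $X_{\partial c}$ would also be expressible as products of $W$'s, and hence as $\tilde\alpha$ of products of $Z$'s.

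Pulling back, $\tilde\alpha^{-1}(X_{\partial c})$ would then lie in $\langle Z_f\rangle$ while also lying in the centre of $\mc{A}_{\Z_2^{(1)}}$ and satisfying $1-\tilde\alpha^{-1}(X_{\partial c})\in\mathfrak I$. By the claim in the proof of Theorem~\ref{thm:QCA_extendability} ($\langle Z\rangle$ meets $\mathfrak I$ trivially), this forces $\tilde\alpha^{-1}(X_{\partial c})=1$. That contradicts injectivity, because $X_{\partial c}\neq1$.

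The main obstacle is making rigorous the step that $X_{\partial c}$ lies in the $W$-algebra exactly, not merely modulo $\mathfrak I$. The correction terms $J_e$ could in principle involve $X_{\partial c}$ nontrivially. To control them, I would first try to choose representatives via Lemma~\ref{lemma:exact_support}, so that $\tilde\alpha(Z_{\delta\bm e})$ commutes with all $\tilde\alpha(Z_{\delta\bm e'})$ exactly, and then run the argument. If that fails, the fallback is the physical argument the paper uses: the Hamiltonian $-\sum X_{\partial c}-\sum Z_{\delta\bm e}$ would be mapped to a trivial paramagnet by a QCA on $\mc{A}$. This is impossible because toric code order cannot be removed by a locality-preserving unitary.
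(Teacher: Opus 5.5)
Your main line has a genuine gap. It sits where you flag it, but it is more than a missing technicality. The step ``$X_{\partial c}$ lies in $\tilde\alpha(\langle Z_f\rangle)$ exactly'' fails for \emph{every} automorphism with $\tilde\alpha(\mc{A}_{\Z_2^{(1)}})=\mc{A}_{\Z_2^{(1)}}$, including $\tilde\alpha=\mathrm{id}$. The reason is that $\tilde\alpha^{-1}(X_{\partial c})$ is central in $\mc{A}_{\Z_2^{(1)}}$. That centre is spanned by the $X$-type products $X_{\partial c_3}$, which meet $\langle Z_f\rangle$ only in $\mathbb{C}1$.

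So establishing that step \emph{is} the contradiction, and your closing appeal to $\langle Z\rangle\cap\mathfrak I=0$ adds nothing. Your hypothesis also gives no handle on it. The only input is $\tilde\alpha(Z_{\delta\bm e})\equiv X_{t^{1/2}e}$ mod $\mathfrak I$. Multiplied over the six edges at a vertex $v$, it yields $X_{\partial t^{1/2}(v)}\equiv\tilde\alpha\bigl(\prod_{e\ni v}Z_{\delta\bm e}\bigr)=1$, which is the same vacuous relation as your first step.

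The repair via Lemma~\ref{lemma:exact_support} does not apply either. The $\tilde\alpha(Z_{\delta\bm e})$ are fixed elements of $\mc A$, not choices of representatives, and they already commute exactly.

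The missing idea is locality. Nowhere in your main line do you use that $\tilde\alpha$ has bounded spread $r$, and that is what the paper's proof runs on. Take two dual-lattice loops $\bm b=\delta\bm c$ and $\bm b'$ in $B^2_\text{fin.}$ that link once and are separated by more than $r$. Then $\tilde\alpha(Z_b)\in\prod_{e\in c}X_{t^{1/2}e}+\mathfrak I$. This membrane class anticommutes with $Z_{b'}+\mathfrak I$ because of the odd linking, so no representative commutes with $Z_{b'}$, and in particular $\tilde\alpha(Z_b)$ does not. But $\tilde\alpha(Z_b)=\prod_{f\in b}\tilde\alpha(Z_f)$ is supported within distance $r$ of $b$, a contradiction. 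The hypothesis on $\alpha$ only bites through this global membrane-versus-linked-loop consequence, which is invisible to any computation around a single vertex or cube.

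Your fallback is a genuinely different route and can be completed. Since $\tilde\alpha^{-1}(\mathfrak I)=\mathfrak I$ and the toric-code state $\omega_{\mathrm{TC}}$ vanishes on $\mathfrak I$, the state $\omega_{\mathrm{TC}}\circ\tilde\alpha^{-1}$ also vanishes on $\mathfrak I$. It therefore assigns each $X_{t^{1/2}e}$ the value $\omega_{\mathrm{TC}}(Z_{\delta\bm e})=1$, so it is the $X_f=+1$ product state. Hence $\tilde\alpha$ would relate a product state to the 3d toric code. The whole weight then rests on the external theorem that the 3d toric code is not QCA-equivalent to a product state. That theorem is itself proved by a superselection or linked-loop argument of the same kind, whereas the paper's argument is self-contained.
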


\begin{proof}
 Suppose for contradiction purposes that there does exist $\tilde{\alpha} : \mc{A} \rightarrow \mc{A}$ a QCA which is the lift of $\alpha$.

 Consider a large loop $\bm{b} \in B^2_\text{fin.}$ on the direct lattice, and another $\bm{b}' \in B^2_\text{fin.}$, such that they link exactly once, and such that every face of $b$ is at least at a distance $r$ from every other face in $b'$. This can always be done by taking for example large enough square loops in a Hopf link configuration, where one loop passes through the center of the other square.

 Consider the operator $\prod_{f \in b} Z_f$, and note that if $\bm{c}$ is such that $\delta \bm{c} = \bm{b}$, then
 \be
 \tilde{\alpha}\paren{\prod_{f \in b} Z_f} + \mathfrak{I} = \alpha \paren{\prod_{f \in b} Z_f + \mathfrak{I}} = \prod_{e \in c} X_{t^\frac{1}{2} e} + \mathfrak{I} \,.
 \ee

 In other words, $\tilde{\alpha}(\prod_{f \in b} Z_f) \in \prod_{e \in c} X_{t^\frac{1}{2} e} + \mathfrak{I}$.

 Now consider the other loop $b'$, and note that
 \begin{multline}
 \paren{\prod_{e \in c} X_{t^\frac{1}{2} e} + \mathfrak{I}} \paren{\prod_{f \in b'} Z_f + \mathfrak{I}} \\
 = - \paren{\prod_{f \in b'} Z_f + \mathfrak{I}} \paren{\prod_{e \in c} X_{t^\frac{1}{2} e} + \mathfrak{I}} \,.
 \end{multline}

 This is due to the fact that $\bm{b}'(t^\frac{1}{2} c) = 1$, or alternatively, that $b$ and $b'$ link an odd number of times.

 In particular, we see that these equivalence classes do not commute, and hence any pair of representatives also do not commute. Indeed, the contrapositive is easier to see. Suppose $ab = ba$. Then
 \be
 ab + \mathfrak{I} = ba + \mathfrak{I} \implies (a + \mathfrak{I}) (b + \mathfrak{I}) = (b + \mathfrak{I}) (a + \mathfrak{I}) \,.
 \ee

 From this we see that, since $\tilde{\alpha}(\prod_{f \in b} Z_f) \in \prod_{e \in c} X_{t^\frac{1}{2} e} + \mathfrak{I}$, $\tilde{\alpha}(\prod_{f \in b} Z_f)$ does not commute with $\prod_{f \in b'} Z_f$. In other words, as an operator in $\mc{A}$, it is supported (not proportional to the identity) on at least one of the faces in $b'$. But
 \be
 \tilde{\alpha}\paren{\prod_{f \in b} Z_f} = \prod_{f \in b} \tilde{\alpha}(Z_f) \,,
 \ee
 hence at least one of $\tilde{\alpha}(Z_f)$ is supported somewhere on $b'$. Recall that $b, b'$ were chosen such that the distance between them is at least $r$. Hence, the hypothetical QCA $\tilde{\alpha}$ has spread at least $r$.

 But $r$ can be chosen arbitrarily large, and the previous construction will always work. Therefore, $\tilde{\alpha}$ has unbounded spread, which is a contradiction.
\end{proof}

%%%%%%%%%%%%%%%%%%%%%%%%%%%%%%%%%%%%
%%%%%%%%%%%%%%%%%%%%%%%%%%%%%%%%%%%%
%%%%%%%%%%%%%%%%%%%%%%%%%%%%%%%%%%%%diagram

\twocolumngrid
\bibliographystyle{ytphys}
\small 
\baselineskip=.94\baselineskip
\let\bbb\bibitem\def\bibitem{\itemsep4pt\bbb}
\bibliography{GenSym}

%%%%%%%%%%%%%%%%%%%%%%%%%%%%%%%%%%%%
%%%%%%%%%%%%%%%%%%%%%%%%%%%%%%%%%%%%
%%%%%%%%%%%%%%%%%%%%%%%%%%%%%%%%%%%%

\end{document}